\newtheorem{theorem}{Theorem}[subsection]
\newtheorem{lemma}[theorem]{Lemma}
\newtheorem{proposition}[theorem]{Proposition}
\newtheorem{corollary}[theorem]{Corollary}
\newtheorem*{theorem*}{Theorem}
\theoremstyle{definition}
\newtheorem{definition}[theorem]{Definition}
\newtheorem{example}[theorem]{Example}
\newcommand{\D}{\mathcal D}
\newcommand{\E}{\mathcal E}
\newcommand{\F}{\mathcal F}
\newcommand{\G}{\mathcal G}
\renewcommand{\H}{\mathcal H}
\newcommand{\I}{\mathcal I}
\newcommand{\M}{\mathcal M}
\renewcommand{\P}{\mathcal P}
\newcommand{\Q}{\mathcal Q}
\renewcommand{\S}{\mathcal S}
\newcommand{\V}{\mathcal V}
\newcommand{\W}{\mathcal W}
\newcommand{\X}{\mathcal X}
\newcommand{\Y}{\mathcal Y}
\newcommand{\Z}{\mathcal Z}
\newcommand{\CC}{\mathbb C}
\newcommand{\NN}{\mathbb N}
\newcommand{\RR}{\mathbb R}
\newcommand{\CCC}{\mathbf C}
\newcommand{\aA}{\mathfrak A}
\newcommand{\bB}{\mathfrak B}
\newcommand{\fF}{\mathfrak F}
\newcommand{\gG}{\mathfrak G}
\renewcommand{\:}{\colon}
\newcommand{\To}{\rightarrow}
\newcommand{\inv}{^{-1}}
\newcommand{\subsetof}{\subseteq}
\newcommand{\suchthat}{\,|\,}
\newcommand{\tensor}{\otimes}
\newcommand{\iso}{\cong}
\newcommand{\Union}{\bigcup}
\newcommand{\union}{\cup}
\newcommand{\up}{\mathop{\uparrow}}
\newcommand{\ran}{\mathop{\mathrm{ran}}}
\renewcommand{\above}{\sqsupseteq}
\newcommand{\below}{\sqsubseteq}
\newcommand{\cat}{\mathbf}
\newcommand{\atomof}{\text{\raisebox{1.2pt}{\smaller\,$\propto$\,}}}
\newcommand{\Tr}{\mathrm{Tr}}
\newcommand{\At}{\mathrm{At}}
\newcommand{\Eval}{\mathrm{Eval}}
\newcommand{\proj}{\mathrm{proj}}
\newcommand{\op}{\mathrm{op}}
\newcommand{\Set}{\mathbf{Set}}
\newcommand{\POS}{\mathbf{POS}}
\newcommand{\CPO}{\mathbf{CPO}}
\newcommand{\qRel}{\mathbf{qRel}}
\newcommand{\qSet}{\mathbf{qSet}}
\newcommand{\qPOS}{\mathbf{qPOS}}
\newcommand{\qCPO}{\mathbf{qCPO}}
\newcommand{\circlearrow}{}% just in case
\DeclareRobustCommand{\circlearrow}{%
  \mathrel{\vphantom{\rightarrow}\mathpalette\circle@arrow\relax}%
}
\newcommand{\circle@arrow}[2]{%
  \m@th
  \ooalign{%
    \hidewidth$#1\circ\mkern1mu$\hidewidth\cr
    $#1\longrightarrow$\cr}%
}
\begin{document}

\title{Categories of quantum cpos}
\author{Andre Kornell, Bert Lindenhovius, Michael Mislove}

\begin{abstract}
    This paper unites two research lines. The first involves finding categorical models of quantum programming languages with recursion and their type systems. The second line concerns the program of \emph{quantization} of mathematical structures, which amounts to finding noncommutative generalizations (also called \emph{quantum} generalizations) of these structures. Using a quantization method called \emph{discrete quantization}, which essentially amounts to the internalization of structures in a category of von Neumann algebras and quantum relations, we find a noncommutative generalization of $\omega$-complete partial orders (cpos), called \emph{quantum cpos}. Cpos are central in domain theory, and are widely used to construct categorical models of programming languages with recursion. We show that quantum cpos have similar categorical properties to cpos and are therefore suitable for the construction of categorical models for quantum programming languages, which is illustrated with some examples. Because of their noncommutative character, quantum cpos may form the backbone of a future quantum domain theory that provides structural methods for the denotational semantics of recursive quantum programming languages. 
\end{abstract}

\maketitle

\section{Introduction}
\subsection{Motivation}
\emph{Quantum computing} is an approach to computation that utilizes quantum-mechanical principles such as superposition and entanglement to enhance certain calculations compared to classical computers. Because understanding all computational implications of these principles is challenging, designing quantum programming languages and their type systems forms a very active research field, in which new language paradigms and features are regularly proposed. Here, a place to begin is a \emph{type system} - a formal system that acts as a blueprint for programming languages, similar to how the lambda calculus serves as a blueprint for functional programming languages. The formation and derivation rules of the type system provide an \emph{operational model} that can be used to reason about the evolution of a quantum program, albeit at an abstract level.

\emph{Recursive} methods are crucial for various aspects of quantum computing as seen in algorithms like Grover's quantum search algorithm \cite{grover}, where a unitary gate is applied multiple times. Also the construction of quantum circuits sometimes depends on recursive methods. For instance, the Quantum Fourier Transform in Shor's algorithm for prime factorization \cite{shor} involves iteratively building on lower qubit counts. The design of quantum programming languages and their type systems is challenging, in particular when recursion is added. 

In particular, an open problem is how to abstract away from the quantum circuit paradigm, just as classical computing advanced by shifting from assembly languages to high-level programming. A key  component of this paradigm shift would be the implementation of a \emph{quantum control flow}. Currently, in most existing quantum programming languages the control flow is classical: conditions in \textbf{if} statements and for the termination of recursive loops are deterministically true or false. Even when a program branches based on the state of a qubit, that state must first be measured, collapsing it into a classical value before the control decision is made. In contrast, quantum control flow allows branching and looping decisions to depend on quantum states \emph{without} prior measurement. This enables control structures that operate coherently on superpositions, allowing quantum operations to proceed along multiple computational paths in parallel, without collapsing the underlying quantum information. As an example, the \emph{quantum switch} \cite{quantum-switch} is a higher-order process that applies two operations $A$ and $B$ to a qubit, but the order of application depends on the state of a control qubit. For instance, if the state is $|0\rangle$, we perform $A;B$ (i.e., $A$ followed by $B$), and if the state is $|1\rangle$, we perform $B;A$. We explicitly do not require a measurement of the control qubit, and as a consequence, the state of the control qubit can be in a superposition of $|0\rangle$ and $|1\rangle$, whence so can be the order of application of $A$ and $B$. For this reason, we say that the quantum switch is an \emph{indefinite causal} process.

Designing type systems that can express such inherently quantum control mechanisms remains a central challenge, particularly when it comes to achieving clean, higher-order and widely adopted solutions. A deep understanding of how higher-order functions behave under quantum interactions—such as superposition and entanglement—is crucial for designing type systems that can reliably express these control mechanisms.

 One approach to gaining such an understanding is through \emph{denotational semantics}, which provides a rigorous mathematical framework for interpreting programming languages and type systems.
 By constructing a \emph{denotational} or \emph{categorical model} of a type system, one can precisely describe the behavior of types and terms under complex computational phenomena—including quantum effects. Formally, a model consists of a category $\mathbf C$, along with a \emph{interpretation function} that assigns objects and morphisms of $\mathbf C$ to types and terms of the system, respectively, and type constructors are interpreted by functors on $\mathbf C$. 
 A crucial requirement is that the interpretation function is \emph{compositional}, meaning that the interpretation of a compound term is determined solely by the interpretations of its immediate subterms. In this way, the semantics can be built ``bottom up": the meaning of complex terms can be expressed as the composition of the meanings of their constituent components.

An example of how denotational semantics influences the design of programming languages  is provided by Moggi’s seminal work modeling computational effects using monads \cites{moggi,moggi1991}, which inspired new computational paradigms such as Levy’s \emph{Call-by-push-value} paradigm~\cite{levy-cbpv} that combines \emph{Call-by-value} and \emph{Call-by-name}. Another example of a new paradigm is provided by coherence spaces, from which Girard derived linear logic \cite{girard-linear}, eventually leading to substructural type systems. These type systems are actually used for quantum programming languages, because they reflect that quantum information is  subject to principles such as the No-Cloning Theorem. Hence, we expect that new, more refined categorical models for higher-order quantum programming languages may provide insights into the possibilities for implementing recursion in quantum computing, and in particular may help clarify whether recursion with quantum control is feasible.

Currently, most models of quantum programming languages consist of a category $\mathbf M$ whose objects represent finite collections of qubits, and whose morphisms represent first-order quantum processes, such as quantum gates in the case of pure quantum computations, or quantum channels in the case of impure computations. To support higher-order functions, one often applies categorical constructions such as taking presheaves over $\mathbf M$. However, the free higher-order structure of such a model does not capture the quantum interactions between higher-order functions such as the quantum switch, which are essential for modeling quantum control.

We propose constructing denotational models of quantum computing using the principles of  \emph{noncommutative geometry}  \cite{connes:ncg}, which is a mathematical framework in which quantum phenomena are described by noncommutative structures, typically in terms of bounded (=continuous linear) operators on a Hilbert space. Examples of these structures are \emph{operator algebras} \cite{Blackadar}, i.e., algebras of bounded operators on a Hilbert space, and \emph{operator spaces} \cite{EffrosRuan}, which are subspaces of the space of all bounded operators on a Hilbert space. A core ingredient of noncommutative geometry is \emph{quantization}, the process of extending classical mathematical structures into the noncommutative realm. A canonical example is provided by \emph{C*-algebras}, which form a class of operator algebras that can be regarded as noncommutative generalizations of locally compact Hausdorff spaces. This connection is formalized by \emph{Gelfand duality}, which states that any commutative $C^*$-algebra $A$ is isomorphic to the algebra $C_0(X)$ of continuous complex-valued functions vanishing at infinity on some locally compact Hausdorff space $X$.

The guiding insight of noncommutative geometry is that many quantum phenomena have classical counterparts and can be understood as noncommutative generalizations of familiar structures. In classical physics, for instance, the observables of a system are continuous real-valued functions on its phase space $X$, which is usually a locally compact Hausdorff spaces. In the language of operator algebras, these observables correspond to the selfadjoint elements of the C*-algebra $C_0(X)$. Quantum systems follow the same pattern: they are described by (typically noncommutative) C*-algebras, with observables identified as selfadjoint elements. In a similar way, we expect that quantizing the mathematical structures underlying the denotational semantics of ordinary higher-order programming languages will yield models of higher-order quantum programming languages that provide the clearest and most faithful representation of higher-order quantum functions and their interactions.

In the denotational semantics of  programming languages and their type systems, the support of recursion requires the existence of canonical fixpoints of morphisms in a categorical model. As a consequence, categorical models for programming languages with recursion often consist of structures for which there exists a fixpoint theorem. For example, $\omega$-complete partial ordered sets (cpos) are widely used in the denotational semantics of ordinary higher-order programming languages with recursion, because Kleene's fixpoint theorem ensures the existence of fixpoints on pointed cpos, and because the category $\mathbf{CPO}$ of cpos and Scott continuous maps is cartesian closed, hence supports higher-order functions. \emph{Domain theory} is the branch of mathematics that studies cpos and their more refined variants, such as directed-complete partial orders (dcpos), and develops systematic methods for constructing denotational models that support higher-order functions and recursion on the basis of these structures.

We expect that the quantization of domain-theoretic structures will yield a \emph{quantum domain theory} with the following properties: 
\begin{itemize}
	\item[(1)] Quantum domain theory should provide \emph{systematic methods} for the construction of categorical models of higher-order quantum programming languages with recursion, just as ordinary domain theory does for ordinary programming languages.
     \item[(2)] Because of their noncommutative character, the resulting models describe higher-order quantum processes and their interactions  \emph{more accurately}.
	\item[(3)] Since its structures are noncommutative analogues of classical domain-theoretic ones, many of the \underline{techniques and intuitions} of ordinary domain theory should carry over.
    \item[(4)] As a generalization of domain theory, its techniques should also apply to classical programming languages, thereby unifying classical and quantum computation in the \underline{same denotational framework}.
	\item[(5)] Alternative semantic structures for ordinary languages—metric, topological, or coherence spaces—are deeply connected to domains and may be quantized in the same way. The resulting quantum structures would inherit \underline{similar connections} to quantum domain theory, enabling a richer analysis of quantum programming models.
\end{itemize}

%It is worth emphasizing that the notion of order in our framework differs fundamentally from the familiar Löwner order on operators, which is an ordinary partial order. The Löwner order plays a role at the level of circuit execution, where it can describe how the action of a quantum operation is completed. By contrast, the quantized orders of quantum cpos describe the completion of computations already at the stage of circuit generation. This distinction is natural: recursion with classical control unfolds during circuit generation, not execution, so the corresponding semantic structures must capture order-theoretic phenomena at that level.

\subsection{Contributions, methodology, and outline of the article}
In this article, as a first step toward developing a quantum domain theory, we introduce \emph{quantum cpos}, which are  noncommutative generalizations of ordinary cpos, obtained via a quantization process that we call \emph{discrete quantization}. This process, which is described in greater detail in Section \ref{sec:preliminaries}, is based on noncommutative generalizations of sets and binary relations between sets, called \emph{quantum sets} and \emph{binary relations between quantum sets}\footnote{To avoid an overload of the modifier `quantum', we follow \cite{Kornell18} and only use the  modifier to noncommutative generalizations of structures such as quantum sets, quantum graphs or quantum posets. We do not use the modifier for noncommutative generalizations of notions on structures that coincide with the original notion when restricting to the classical case. For this reason, we speak about `binary relations between quantum sets' instead of `quantum binary relations between quantum sets'.}, respectively. The idea behind discrete quantization is that most mathematical structures can be defined in terms of sets and binary relations subject to constraints, hence one can quantize such a structure by replacing each instance of a set in its definition by a quantum set, and any instance of a binary relation between ordinary sets by a binary relation between quantum sets, while requiring that the same constraints hold. Categorically, discrete quantization amounts to the \emph{internalization} of structures in the category $\mathbf{qRel}$ of quantum sets and binary relations. Using discrete quantization, one can easily obtain the notions of \emph{functions between quantum sets} and \emph{partial orders on quantum sets} as noncommutative generalizations of functions between ordinary sets and partial order relations on ordinary sets, respectively. 

Quantum sets and functions between quantum sets form a category $\qSet$ that can be regarded as a noncommutative generalization of the category $\Set$. This category $\qSet$ shares many categorical properties with $\Set$, for instance it is complete and cocomplete, but instead of cartesian closed, it is symmetric monoidal closed, reflecting the linear character of quantum data. This pattern seems to be typical, also the cartesian closed category $\POS$ of posets and monotone maps has a noncommutative counterpart $\qPOS$ that is symmetric monoidal closed. The objects of $\qPOS$ are called \emph{quantum posets}, and consist of a quantum set equipped with a partial order relation. 

In Section \ref{sec:modeling physical systems}, we explore what computations can be modelled with $\qSet$ in the absence of recursion. Here, models based on quantum sets follow the same pattern as set-theoretic models of ordinary programming languages without recursion. $\qSet$ models pure quantum computations just as $\Set$ models pure classical computations. As in the classical case, where impure computations (i.e., side effects) are described by monads on $\Set$, impure quantum computations, typically represented by quantum channels, can be modelled by a monad $\D$ on $\qSet$ that is the quantum analogue of the (countable) distributions monad.  The \emph{CP-Löwner order}\footnote{The CP-Löwner order is an adaptation of the usual Löwner order to the setting of completely positive maps, and is used in quantum information theory as the appropriate order of quantum channels. On states, the CP-Löwner order coincides with the Löwner order.} emerges as a the counterpart of the pointwise order of subdistributions. Since the subdistributions monad is defined on thee category $\Set$,  subdistributions do not suffice for the support of recursion. In the same spirit, the CP-Löwner order does not support full recursion in quantum computing. Another perspective on why support for recursion fails is that classically-controlled recursion typically takes place during the phase of circuit generation, whereas the CP-Löwner order describes the stage of completion of a computation during the phase of circuit execution. 

In Section \ref{Quantum cpos}, we introduce quantum cpos, defined as quantum posets whose partial order relation generalizes $\omega$-complete partial orders in the noncommutative setting. We also introduce noncommutative generalizations of basic domain-theoretic concepts, such as Scott continuity, and we show that the category $\qCPO$ of quantum cpos and Scott continuous maps is enriched over $\CPO$, complete, and has coproducts. The existence of arbitrary colimits is more complicated, and proven in Section \ref{sec:qCPO is cocomplete}.

In Section \ref{sec:monoidal}, we show that $\qCPO$ is symmetric monoidal closed, hence it supports higher-order functions. Similar to $\qSet$ and $\qPOS$, the monoidal product of $\qCPO$ is not the categorical product, even though it generalizes the cartesian product of ordinary cpos.

In Section \ref{sec:cpos are qcpos}, we investigate the relation between cpos and quantum cpos: We show every cpo is a quantum cpo, and that there is a fully faithful strong monoidal functor $`(-):\CPO\to\qCPO$, which has a right adjoint, hence forming a linear/nonlinear model (see also Definition \ref{def:lnlmodel}).  

The effect of nontermination, necessary for support of recursion, is supported by the construction of the lift monad on $\qCPO$ in Section \ref{sec:lifting}, which we show to be commutative. We introduce pointed quantum cpos and strict Scott continuous maps, and show that they form a category $\qCPO_{\perp!}$ that is equivalent to the Kleisli category of the lift monad on $\qCPO$. Moreover, we show that $\qCPO_{\perp!}$ it is \emph{$\CPO$-algebraically compact}, a strong categorical property characteristic of models of type systems with \emph{recursive types}, i.e., type systems in which types can be defined recursively. This form of recursion is particularly powerful and typically implies recursion at the term level as well. Moreover, we show that there is a linear/nonlinear adjunction between $\CPO$ and $\qCPO_{\perp!}$ and we conclude the article by proving that this adjunction forms model that is both sounds for \emph{ECLNL}, a quantum circuit description language with recursive terms, and sound and computationally adequate for \emph{LNL-FPC}, a type system that combines linear and recursive types, see also the Related work section. It follows that quantum cpos form an appropriate structure to model recursion at the stage of circuit generation. 

Finally, in the Conclusions section, we discuss the possible existence of a probabilistic power domain monad on $\qCPO$, the challenges in finding such a monad, and its connection with support for quantum-controlled recursion.

\subsection{Related work}\label{sec:related work}

Our work is based on two lines of research. The first concerns type systems with recursion for quantum programming languages and their denotational semantics. The starting point for this work is \emph{Proto-Quipper-M} \cite{pqm}, a quantum circuit description language. In collaboration with Vladimir Zamdzhiev, the last two authors introduced \emph{ECLNL} \cite{stringdiagramssemantics}, which is essentially an extension of Proto-Quipper-M with recursive terms, and \emph{LNL-FPC} \cite{lnl-fpc-lmcs}, which can be viewed both as an extension of Plotkin's Fixpoint Calculus (a blueprint for a recursive type system) with linear types, and as an extension of the circuit-free fragment of Proto-Quipper-M with recursive types. In this article, we construct a model in terms of quantum cpos that is sound for ECLNL and both sound and computationally adequate for LNL-FPC.  

As mentioned above, models for quantum computing are often obtained by applying categorical constructions to a category $\mathbf M$ that models first-order quantum processes. For instance, in \cite{stringdiagramssemantics}, where ECLNL is defined, a presheaf model on $\mathbf M$ was proposed, where  $\mathbf M$ was taken to be the category of finite-dimensional matrix algebras and completely positive maps. Moreover, in \cite{quant-semantics}, a model for Selinger and Valiron's \emph{quantum lambda calculus} \cite{selingervaliron:quantumlambda} was constructed, where $\mathbf M$ was taken to be the category of finite-dimensional matrix algebras and completely positive maps. The eventual model is obtained by first taking a suitable full subcategory of the Karoubi envelope of $\mathbf M$, then enriching it over the category $\mathbf{DCPO}$ of dcpos via a change-of-basis operation, and finally taking the free finite biproduct completion. The resulting categorical model might be difficult to analyse because of this use of several categorical constructions on top of each other. The quantum lambda calculus and FPC were combined in the language \emph{Quantum FPC}, which was also modelled by a presheaf category \cite{QuantumFPC}. We expect that the existence of a probabilistic power domain monad on $\qCPO$ will allow us to construct an alternative, more transparent models for the quantum lambda calculus and Quantum FPC in terms of quantum cpos.

The other line of research, mathematical quantization, is a rich subject with many applications in quantum information theory. Our preferred quantization method, discrete quantization, is based on the notion of a quantum relation between \emph{von Neumann algebras}\footnote{A class of operator algebras that can be regarded as noncommutative generalizations of measure spaces}, which was distilled by Weaver \cite{Weaver10} from his work with Kuperberg on the quantization of metric spaces \cite{kuperbergweaver:quantummetrics}. 
Their \emph{quantum metric spaces} can be regarded as an internal version of metric spaces in the category $\mathbf{WRel}$ of von Neumann algebras and quantum relations. For example, Kuperberg and Weaver showed that the quantum Hamming distance used in quantum error correction can be understood as a quantum metric \cite{kuperbergweaver:quantummetrics}. 

We will refer to the internalization of structures in $\mathbf{WRel}$ as \emph{W*-quantization}. This process is closely related to discrete quantization, as follows from the work of the first author, who  showed that the category $\qSet$ is dually equivalent to the category $\mathbf{WStar}_\mathrm{HA}$ of \emph{hereditarily atomic von Neumann algebras}\footnote{Von Neumann algebras that are isomorphic to $\ell^\infty$-sums of matrix algebras.} and normal $*$-homomorphisms \cite{Kornell18}, and that there is an inclusion of the category $\qRel$ of quantum sets and binary relations into the category $\mathbf{WRel}$ of von Neumann algebras and quantum relations with hereditarily atomic von Neumann algebras as essential image \cite[Section A.2]{Kornell-Discrete-I}. 
In this work, we employ discrete quantization, because $\qRel$ has better categorical properties than $\mathbf{WRel}$\footnote{For instance , as shown by the first author $\qRel$ is compact closed \cite{Kornell18}, which is a fundamental property in the program of Categorical Quantum Mechanics \cites{AbramskyCoecke08,coeckekissinger,heunenvicary}, whereas $\mathbf{WRel}$, is not.}, and because most quantum computations only involve qubits, for which $\qRel$ suffices.  

\emph{Quantum posets} and \emph{quantum graphs} are other examples of quantum structures that can be obtained via both quantization processes \cite{Weaver10}. Quantum graphs were introduced as noncommutative confusability graphs for quantum error correction \cite{duanseveriniwinter}, and Weaver generalized this notion to arbitrary von Neumann algebras \cites{Weaver10, Weaver21}. The first author showed that these structures, as well as discrete quantum groups can be characterized in terms of the internal logic of this category \cites{Kornell-Discrete-I,Kornell-Discrete-II}. The categorical properties of quantum posets were explored by the authors in \cite{KLM20}. Building on this work and in collaboration with Gejza Jen\v{c}a, the second author introduced a noncommutative generalization of complete lattices called \emph{quantum suplattices} \cite{qSup}. This work heavily relies on the compact-closed structure of $\qRel$.

Von Neumann algebras provide prototypical noncommutative models for quantum computation. A first step in this direction was given by the first author, who established that the category of von Neumann algebras and normal $*$-homomorphisms is monoidal closed, thereby showing that higher-order pure computations can already be captured in this operator-algebraic setting \cite{Kornell17}. In \cite{cho:semantics}, it was shown that the category $\mathbf{WCPSU}$ of von Neumann algebras and completely positive subunital maps is $\mathbf{DCPO}$-enriched with respect to the CP-Löwner order, can be used to model the language \emph{QPL}. It was also shown that the category models an extension of QPL with inductive types \cite{PPRZ19}. In \cite{ChoWesterbaan16}, it was shown that the inclusion of $\mathbf{WStar}$ into $\mathbf{WCPSU}$ has an adjoint, and that the resulting adjunction can be used to model the recursion-free fragment of the quantum lambda calculus. In Section \ref{sec:modeling physical systems}, we will see that when restricted to hereditarily atomic von Neumann algebras, this adjunction yields a monad on $\qSet$ that is the quantum version of the subdistributions monad.

\section{Preliminaries}\label{sec:preliminaries}
\subsection{Quantum sets and binary relations}
In this section, we introduce quantum sets, binary relations and functions between quantum sets, and order relations on quantum sets. We recall the convention that we only use the modifier `quantum' to noncommutative generalizations of structures such as quantum sets, quantum graphs or quantum posets. We do not use the modifier for noncommutative generalizations of notions on structures that coincide with the original notion when restricting to the classical case, such as binary relations and orders.  We refer to \cite{Kornell18} and \cite{KLM20} for a detailed discussion including proofs of the statements in these subsections. 

The most transparent formulation of the definition of $\qRel$, and of quantum sets in particular, is  in terms of a known categorical construction. For this, we must first define the category $\mathbf{FdOS}$

\begin{definition}[$\mathbf{FdOS}$]
The category $\mathbf{FdOS}$ has objects that are nonzero finite-dimen\-sion\-al Hilbert spaces. A morphism $A:X\to Y$ in $\mathbf{FdOS}$ is a \emph{concrete operator space}, i.e., a subspace of the vector space $L(X,Y)$ of linear maps $X\to Y$. The composition of $A$ with a morphism $B:Y\to Z$ is the operator space $B\cdot A:=\mathrm{span}\{ba:a\in A,b\in B\}$. The identity morphism on $X$ is the operator space $\CC 1_X$. 
\end{definition}

The space $L(X,Y)$ is actually a finite-dimensional Hilbert space  via the inner product $(a,b)\mapsto\mathrm{Tr}(a^\dag b)$, where $a^\dag:Y\to X$ denotes the hermitian adjoint of $a\in L(X,Y)$. Hence, the homset $\mathbf{FdOS}(X,Y)$ is a complete modular ortholattice, where the order on the homset is explicitly given by $A\leq B$ if $A$ is a subspace of $B$. Since composition in $\mathbf{FdOS}(X,Y)$ preserves suprema, $\mathbf{FdOS}$ is enriched over the category $\mathbf{Sup}$ of complete lattices and suprema-preserving functions;\footnote{The base of an enriched category should be a monoidal category; for a discussion of the monoidal structure of $\mathbf{Sup}$, see \cite{eklund2018semigroups}} any such category is also called a \emph{quantaloid}. Products and coproducts in a quantaloid $\mathbf Q$ coincide and are also called \emph{sums}. The free sum-completion of $\mathbf Q$ can be described by the quantaloid $\mathrm{Matr}(\mathbf Q)$, whose objects are $\mathbf{Set}$-indexed families of objects $(X_i)_{i\in I}$ of $\mathbf Q$, and whose morphisms $R:(X_i)_{i\in I}\to(Y_j)_{j\in J}$ are `matrices' whose $(i,j)$-component $R(i,j)$ is a $\mathbf Q$-morphism $X_i\to Y_j$.\footnote{The matrix-construction as the free sum-completion of quantaloids was introduced in \cite{Heymans-Stubbe} and is a special case of a matrix-construction for more general bicategories as described in \cite{BCSW}.} 
Composition is defined by `matrix multiplication': we define $S\circ R$ for $S:(Y_j)_{j\in J}\to (Z_k)_{k\in K}$ by $(S\circ R)(i,k)=\bigvee_{j\in J}S(j,k)\cdot R(i,j)$ for each $i\in I$ and $k\in K$, where $\cdot$ denotes the composition of morphisms in $\mathbf Q$. The $(i,i')$-component of the identity morphism on an object $(X_i)_{i\in I}$ is the identity morphism of $X_i$ if $i=i'$ and is $0$ otherwise. The order on the homsets of $\mathrm{Matr}(\mathbf Q)$ is defined componentwise.

\begin{definition}[$\qRel$]\label{def:qRel}
The category $\mathbf{qRel}$ is the quantaloid $\mathrm{Matr}(\mathbf{FdOS})$. An object $\X=(X_i)_{i\in I}$ of $\qRel$ is called a \emph{quantum set}. For convenience and without loss of generality, we will assume that $X_i\neq X_j$ for each $i\neq j$ (but we allow that possibly $X_i\cong X_j)$, so there is a 1-1 correspondence between $I$ and the set $\At(\X):=\{X_i:i\in I\}$, which allows us to write $\X=(X)_{X\in\At(\X)}$. We will refer to the elements of $\At(\X)$ as \emph{atoms}, and write $X\atomof\X$ if $X\in\At(\X)$. 
It is important to maintain a formal distinction between $\X$ and $\At(\X)$: $\At(\X)$ is a set, while $\X$ should be regarded as a quantum generalization of a set; we do not interpret a $d$-dimensional atom of $\X$ as an element, but as an indecomposable subset of $\X$ consisting of $d^2$ elements that are inextricably clumped together. As a consequence, two quantum sets can have equinumerous sets of atoms, but they are isomorphic as quantum sets only if there is a bijection between their sets of atoms that respects the dimensions of the atoms. 
If a quantum set consists of a single atom, it is said to be \emph{atomic}. We often denote atomic quantum sets by $\H$. If the dimension $d$ of the atom of the atomic quantum set is relevant, we write $\H_d$ instead of $\H$.
\end{definition}

%\begin{example}
%Let $\X = \H_1 \simeq \CC$, and let $\Y  =  {\mathbb C}^2\oplus {\mathbb C}^2$. Then 
%$$\qRel(\X,\Y) = \{ [ A, B]\mid A, B\leq L({\mathbb C},{\mathbb C}^2)\}.$$ 
%On the other hand, if $\Z = {\mathbb C}\otimes{\mathbb C}\simeq \H_2$, then
%$$\qRel(\X,\Z) = \{ [ A]\mid A\leq L({\mathbb C},{\mathbb C}\otimes{\mathbb C})\}.$$ 
%\end{example}
The atomic quantum set $\H_d$ can be used to represent a \emph{qudit}, i.e., a $d$-level quantum system. In particular, a qubit can be represented by $\H_2$. Given an arbitrary quantum set $\X$, its atoms  intuitively represent the superselection sectors\footnote{The superselection sectors of a quantum system represented by a C*-algebra $A$ correspond to the (equivalence classes of the) irreducible representations of $A$ \cite[Section III.C]{haag-kastler}} of the discrete quantum system represented by $\X$. For any quantum set $\X$, we can consider the C*-algebra $A=\bigoplus_{X\in\At(\X)}L(X)$, in which case these representations are precisely the representations $A\to L(X)$ for $X\in\At(\X)$.
Formally, $\At(\X)$ is an index set for the quantum set $\X$,
For this reason, we use the notation $X\atomof\X$ to express that $X$ is an atom of $\X$.  

To summarize, to any quantum set $\X$, we associate an ordinary set $\At(\X)$, whose elements are the atoms of $\X$, so $X\atomof\X$ if and only if $X\in\At(\X)$. Conversely, to any ordinary set $M$ consisting of finite-dimensional Hilbert spaces, we associate the unique quantum set $\Q M$ whose set of atoms $\At(\Q M) = \{H\in M\mid \dim(H) > 0\}$.

Several set-theoretic concepts can also be generalized to the quantum setting. We call a quantum set $\X$ \emph{empty} if $\At(\X)=\emptyset$; we call $\X$ \emph{finite} if $\At(\X)$ is finite; we write $\X\subseteq\Y$ if $\At(\X)\subseteq\At(\Y)$, in which case $\X$ is called a \emph{subset} of $\Y$. The union $\X\cup\Y$ is defined to be the quantum set $\Q(\At(\X)\cup\At(\Y))$, and the \emph{Cartesian product} $\X\times\Y$ of quantum sets is defined to be the quantum set $\Q\{X\otimes Y:X\atomof\X,Y\atomof\Y\}$, where $\otimes$ denotes the Hilbert space tensor product.

A morphism $R:\X\to\Y$ in $\qRel$ is called a \emph{binary relation} or more simply a \emph{relation}; as we have seen, it assigns to each $X\atomof \X$ and each $Y\atomof\Y$  an operator space $R(X,Y):X\to Y$. Sometimes we will write $R_X^Y$ instead of $R(X,Y)$, a notation that improves the readability of large calculations. Given our conventions, composition with $S:\Y\to\Z$ is then given by $(S\circ R)(X,Z)=\bigvee_{Y\atomof\Y}S(Y,Z)\cdot R(X,Y)$. We denote the identity morphism of $\X$ by $I_\X$, whose only nonvanishing components are $I_\X(X,X)=\CC 1_X$ for $X\atomof\X$.  The order on the homset $\qRel(\X,\Y)$ is given by $R\leq S$ if and only if $R(X,Y)\leq S(X,Y)$ for all $X\atomof\X$ and $Y\atomof\Y$.

The following theorem succinctly summarizes several relevant properties of this category:
\begin{theorem*}\cite{Kornell18}*{Theorem 3.6}
$\qRel$ is dagger compact.
\end{theorem*}
To make the dagger compact structure of $\qRel$ more explicit, we first extend the Cartesian product $\times$ of quantum sets to a monoidal product on $\qRel$ as follows. Given relations $R:\X_1\to\Y_1$ and $S:\X_2\to\Y_2$, we define $R\times S$ by $(R\times S)(X_1\otimes X_2,Y_1\otimes Y_2)=R(X_1,Y_1)\otimes S(Y_1,Y_2)$. The monoidal unit is the quantum set $\mathbf 1:=\Q\{\CC\}$. The dagger $R^\dag:\Y\to\X$ of a relation $R:\X\to\Y$ is given by $R^\dag(Y,X)=\{r^\dag:r\in R(X,Y)\}$, where $r^\dag$ denotes the hermitian adjoint of the linear map $r:X\to Y$. The dual $\X^*$ of a quantum set $\X$ is the quantum set $\Q\{X^*:X\atomof\X\}$, where $X^*$ is the Banach space dual of the Hilbert space $X$.

We emphasize that a relation $R$ in $\qRel$ is not a binary relation in the traditional sense; it is not a set of ordered pairs of atoms. Binary relations between quantum sets can be regarded as a generalization of ordinary binary relations because we have a fully faithful functor $`(-):\mathbf{Rel}\to\mathbf{qRel}$ (cf.~Section \ref{sec:quantum sets generalize ordinary sets}). We also stress the difference in terminology between quantum relations and binary relations. The term `quantum relation' was introduced by Weaver and Kuperberg, who do not follow our convention on the modifier `quantum'. The term refers to a morphism between von Neumann algebras, whereas the terms `relation' and `binary relation' refer to morphisms between sets or quantum sets. These classes of morphisms form categories that are equivalent but not equal: indeed, each quantum set $\X$ corresponds to the hereditarily atomic von Neumann algebra $\ell^\infty(\X):=\bigoplus_{X\atomof\X}L(X)$, where $\bigoplus$ denotes the $\ell^\infty$-sum, and this object-level correspondence extends to an equivalence between $\mathbf{qRel}$, whose morphisms are relations, and $\cat{WRel}_{\mathrm{HA}}$, whose morphisms are normal unital $\ast$-homomorphisms~\cite[Propositions A.2.1 \& A.2.2]{Kornell-Discrete-I}. Since dagger categories are self-dual, we also obtain an equivalence between $\qRel$ and $\cat{WRel}_{\textrm{HA}}^\op$, whose morphisms are quantum relations.

\subsection{Conventions} We will generally use calligraphic letters, e.g., $\X$, to name quantum sets and the corresponding capital letters, e.g., $X$, to name their atoms. In particular, if $\X$ is an atomic quantum set, then we will use $X$ to refer to its unique atom. We adopt the convention that $\H$ always denotes an atomic quantum set. As usual, the Dirac delta symbol $\delta_{a,b}$ names the complex number $1$ if $a = b$ and otherwise names $0$. Similarly, the symbol $\Delta_{a,b}$ names the maximum binary relation on $\mathbf 1$ if $a=b$ and otherwise names the minimum binary relation on $\mathbf 1$. Thus, $\delta_{a,b} \in \Delta_{a,b}(\CC,\CC)$. The binary relation $\Delta_{a,b}$ is a scalar in the symmetric monoidal category $\cat{qRel}$ \cite{AbramskyCoecke08}. Hence, for each binary relation $R$ from a quantum set $\X$ to a quantum set $\Y$, we write $\Delta_{a,b}R$ for the binary relation from $\X$ to $\Y$ obtained by composing $\Delta_{a,b} \times R$ with unitors in the obvious way. Finally, we will reserve the notation $\leq$ for the partial order on the homsets of $\qRel$. For any other partial order, we use the notation $\sqsubseteq$. We use the adjective `ordinary' to emphasize that we are using a term in its standard sense. Thus, an ordinary poset is just a poset.

\subsection{Quantum sets as a generalization of ordinary sets}\label{sec:quantum sets generalize ordinary sets}
To each ordinary set $S$, we associate a quantum set $`S$ as follows. For each $s\in S$, we introduce a one-dimensional Hilbert space $\CC_s$ such that $\CC_s\neq\CC_t$ for $s\neq t$. For example, we might define $\CC_s:=\ell^2(\{s\})$. This allows us to define $`S$ formally as the quantum set $\Q\{\CC_s:s\in S\}$. 

We can now extend the assignment $S\mapsto `S$ to a functor $`(-):\cat{Rel}\to\qRel$ as follows. Given a binary relation $r:S\to T$ between ordinary sets $S$ and $T$, we define $`r:`S\to`T$ to be the relation given by 
\[ `r(\CC_s,\CC_t)=\begin{cases} L(\CC_s,\CC_t), & (s,t)\in r,
\\
0, & \text{otherwise}.
\end{cases}\]
Since $L(\CC_s,\CC_t)$ is one dimensional, it follows easily that $`(-):\cat{Rel}\to\qRel$ is a fully faithful functor, which justifies our view that binary relations between quantum sets generalize the ordinary binary relations. Furthermore, $`(-)$ preserves the other relevant structure: it is a strong monoidal functor that preserves the dagger, and the order structure of the homsets. 

The quote notation may look a bit odd, but there is a reason for this choice of notation: we want to treat the objects and morphisms of $\cat{Rel}$ as special cases of objects and morphisms of $\qRel$, while still keeping a formal distinction. Hence, the notation for the embedding $\cat{Rel}\to\qRel$ should not be too prominent visually, and this is exactly what the quote notation achieves.

\subsection{Discrete quantization via quantum sets}\label{sec:discquant}
The following summarizes results from \cite{Kornell18}, where proofs can be found. Since $\qRel$ is a dagger category that is equivalent to $\cat{WRel}_\mathrm{HA}$, we could equivalently define discrete quantization to be the process of internalizing structures in $\qRel$. The basic principle is the following: given a category $\mathbf C$ whose objects and morphisms can be described in terms of sets, binary relations, and constraints expressed in the language of dagger compact quantaloids, we define a category $\mathbf{qC}$ by replacing any set in the definition of $\mathbf C$ by a quantum set, and any binary relation between sets by a binary relation between quantum sets such that the original constraints still hold. In many cases, the functor $`(-):\cat{Rel}\to\qRel$ induces a fully faithful functor $\mathbf C\to\mathbf{qC}$, which we also notate $`(-)$. If this fully faithful functor exists, then the objects in $\mathbf{C}$ can also be regarded as objects in $\mathbf{qC}$.
Often, $\mathbf{qC}$ has the same categorical properties as $\mathbf C$, except that if $\mathbf C$ has a categorical product, this product translates to a symmetric monoidal product on $\mathbf{qC}$, due to the quantum character of the objects of the latter category.

As an example, a function $f:S\to T$ between ordinary sets can be described in terms of the dagger compact quantaloid $\cat{Rel}$ as a morphism such that $f^\dag\circ f\geq 1_S$ and $f\circ f^\dag\leq 1_T$, where the dagger of a binary relation in $\cat{Rel}$ is just the converse relation. In the same way, we can define a function $F:\X\to\Y$ between quantum sets as a binary relation that satisfies $F^\dag\circ F\geq I_\X$ and $F\circ F^\dag\leq I_\Y$. We call the category of quantum sets and functions $\qSet$. We call a function $F:\X\to\Y$ \emph{injective} if $F^\dag\circ F=I_\X$, \emph{surjective} if $F\circ F^\dag=I_\Y$, and \emph{bijective} if it is both injective and surjective, in which case $F^\dag$ is the \emph{inverse} of $F$. 
The injective functions coincide with the monos in $\qSet$; the surjective functions coincide with the epis in $\qSet$, and the bijections are precisely the isomorphisms of $\qSet$. 

An important example of an injective function occurs in case that a quantum set $\X$ is a subset of a quantum set $\Y$. Then we define the canonical injection $J_\X^\Y:\X\to\Y$ by $J_\X^\Y(X,Y)=\CC1_X$ if $X=Y$, and $J_\X^\Y(X,Y)=0$ otherwise, for $X\atomof \X$ and $Y\atomof \Y$. When the ambient quantum set $\Y$ is clear, we often write $J_\X$ instead of $J_\X^\Y$. Given a relation $R:\X\to\Y$ between quantum sets, we define the \emph{restriction} $R|_\W:=\W\to\X$ of $R$ to a subset $\W\subseteq\X$ as the quantum relation $R\circ J^\X_\W$. Likewise, we define the \emph{corestriction} $R|^\Z:\X\to\Z$ of $R$ to a subset $\Z\subseteq\Y$ to be the quantum relation $(J_\Z^\Y)^\dag\circ R$. We write $R|_\W^\Z$ for relation $(J_\Z^\Y)^\dag\circ R\circ J_\W^\X$.

The symmetric monoidal product $\times$ on $\qRel$ restricts to a symmetric monoidal product on $\qSet$, which we also notate $\times$, even though this is not the categorical product on $\qSet$ either. We use this notation because this monoidal product on $\qSet$ is the correct quantum generalization of the categorical product $\times$ on $\Set$. Moreover, $\qSet$ is closed, and $\Y^\X$ denotes the internal hom between quantum sets $\X$ and $\Y$.

The following theorem, whose content is taken from \cite{Kornell18}, summarizes the categorical properties of $\qSet$:
\begin{theorem}\label{thm:qSet} \cite{Kornell18}
     $\qSet$ is complete, cocomplete, and symmetric monoidal closed. Moreover, $`(-):\cat{Rel}\to\qRel$ restricts and corestricts to a fully faithful functor $`(-):\Set\to\qSet$. Finally, the assignment $\X\mapsto \ell^\infty(\X)$ extends to an equivalence $\ell^\infty:\qSet\to\cat{WStar}_\mathrm{HA}^\op$. Given a quantum function $F:\X\to\Y$, we denote the corresponding normal unital $*$-homomorphism $\ell^\infty(\Y)\to\ell^\infty(\X)$ by $F^\star$.
\end{theorem}
We regard ordinary sets as a special case of quantum sets via the functor $`(-)$.

\subsection{Modeling physical systems and quantum computations via quantum sets}\label{sec:modeling physical systems}
Finitary physical types are naturally modeled by quantum sets. We can use the atomic quantum set $\H_d$, whose only atom is a $d$-dimensional Hilbert space (cf. Definition \ref{def:qRel}) to model a qudit. In particular, the qubit can be modeled by the quantum set $\H_2$. By contrast, the classical bit is modeled by the quantum set $`\{0,1\}$, which has two one-dimensional atoms. The `memory' of an idealized quantum computer might consist of finitely many qubits, and finitely many bits, so we can model such a quantum computer as a composite physical system.

The symmetric monoidal product $\times$ on $\mathbf{qRel}$ and $\mathbf{qSet}$ generalizes the Cartesian product of ordinary sets. Composite systems consisting of two fully quantum systems, that is, of quantum systems modeled by single Hilbert spaces, are obtained by forming the tensor product of the two Hilbert spaces. Similarly, composite systems consisting of two fully classical systems, that is, of quantum systems modeled by ordinary sets, are obtained by forming the ordinary Cartesian product of the two sets. Thus, our generalized product models composite systems consisting entirely of fully quantum systems, or entirely of fully classical systems. In fact, it is appropriate for modeling mixed quantum/classical systems as well. In particular, an idealized quantum computer possessing $n$ qubits and $m$ bits can be modeled by the quantum set
$$ \underbrace{\H_2 \times \cdots \times \H_2}_{n} \times \underbrace{`\{0,1\} \times \cdots \times `\{0,1\}}_{m}.$$
This quantum set has $2^m$ atoms, each of dimension $2^n$.

Other natural type constructors are also easily modeled in quantum sets. Sum types are modeled by disjoint unions of quantum sets, portraying a kind of classical disjunction of potentially quantum systems. The resulting physical system may be in a state of the first type or in a state of the second type, but no superpositions may occur. For example, an idealized quantum computer possessing a single qubit and a single bit is modeled by a quantum set with two atoms, because the computer can be in a configuration where the qubit has value $0$, and it can be in a configuration where the qubit has value $1$, but it cannot be in a superposition between two such states. The exponential modality $!$ that permits duplication, is modeled by an operation on quantum sets that extracts just their one-dimensional atoms, so $!\X=\X_1$, where \[\X_1:=\Q\{X\in\At(\X):\dim(X)=1\}.\] This is essentially an ordinary set, and it is the largest subset that admits a duplication map. Indeed, classical sets all admit duplication maps, but the no-cloning theorem forbids the duplication of higher dimensional atoms. A useful result that characterizes the atoms of $\X_1$ is the following:

\begin{lemma}\label{lem:one-dimensional atoms}
Let $\X$ be a quantum set. Then we have the following canonical bijections:
\begin{itemize}
\item $b_\X\: \At(\X_1) \to \cat{qSet}(\mathbf 1, \X)$, which maps each one-dimensional atom $X \atomof \X$ to the function $b_\X(X)\: \mathbf 1 \to \X$ defined by $b_\X(X)(\CC, X) = L(\CC, X)$, with the other components vanishing;
\item $B_\X:\X_1\to`\qSet(\mathbf 1,\X)$ defined by $B_\X(X,\CC_{b_\X(X)}) = L(X,\CC_{b_\X(X)}) $, for $X \atomof \X_1$, with the other components vanishing. 
\end{itemize}
\end{lemma}
\begin{proof}It is easy to see that $b_\X(X):\mathbf 1\to\X$ is a function for each atom $X$ of $\X$. Using the duality between $\qSet$ and $\mathbf{WStar}_\mathrm{HA}$ (cf. Theorem \ref{thm:qSet}), it easily follows from \cite{Kornell18}*{Proposition 7.5} that all functions $\mathbf 1\to\X$ are of this form. The binary relation $B_\X$ is a bijection by \cite{Kornell18}*{Proposition 4.4}
\end{proof}

We recall that the type systems of many quantum programming languages are based on intuitionistic linear logic. We further note that this logic can be modeled by \emph{linear/nonlinear} models~\cite{benton}:
\begin{definition}[Linear/nonlinear model]\label{def:lnlmodel}
    A \emph{linear/nonlinear model} consists of a cartesian closed category $\mathbf V$, a symmetric monoidal closed category $\mathbf C$ and a strong monoidal functor $\mathbf V\to\mathbf C$ that has a right adjoint. The category $\mathbf V$ is called the \emph{non-linear} category, whereas $\mathbf C$ is called the \emph{linear} category.
\end{definition}
Given such a model, the exponential modality $!\colon \mathbf C\to\mathbf C$ is the comonad induced by the linear/nonlinear adjunction. For instance, the functor $`(-):\Set\to\qSet$ described in Theorem \ref{thm:qSet} has a right adjoint given by $\X\mapsto\qSet(\mathbf 1,-)$, hence $\Set$ and $\qSet$ form a linear/nonlinear model.\footnote{We omit the proof, which can be distilled from the proof of Theorem \ref{thm:CPO-qCPO adjunction} below. In fact, all complicated details in the proof of \ref{thm:CPO-qCPO adjunction} are due to the order relation, which is irrelevant in case of $\qSet$.} By Lemma \ref{lem:one-dimensional atoms}, the comonad induced by the linear/nonlinear adjunction between $\Set$ and $\qSet$ is precisely the operation described above of extracting the one-dimensional atoms from a quantum set. 

Higher types are modeled by quantum function sets, which are more challenging to describe, mainly due to the large automorphism group of the physical qubit. Indeed, the ordinary set $\mathbf{qSet}(\H_2, \H_2)$ is in canonical bijection with the rotation group $\mathrm{SO}(3)$. The quantum function set $\H_2^{\H_2}$ also has atoms of dimension higher than $1$. In general, the quantum function set from a quantum set $\X$ to a quantum set $\Y$ has an atom of dimension $d$ for each unital normal $*$-homomorphism $\rho\: \ell^\infty(\Y) \To \ell^\infty(\X) \overline \otimes L(H_d)$ that is irreducible in the sense that $\rho(\ell^\infty(\Y))' \cap (\CC \overline \otimes L(H_d)) = \CC$, where $\overline \tensor$ denotes the spatial tensor product of von Neumann algebras, and $(\,\cdot \,)'$ denotes the commutant.

\begin{example}
      \emph{Proto-Quipper-M}, introduced by Rios and Selinger \cite{pqm}, is a circuit description language that describes the generation of circuits, represented by a small symmetric monoidal category $\mathbf M$. In case of quantum circuits, one can take $\mathbf M$ to be the opposite of the category whose objects are finite sums of matrix algebras, and whose morphisms are $*$-homomorphisms. Here, a qudit is represented by $M_d(\mathbb C)$. The tensor product of vector spaces is used to represent combined systems of several qudits.

      An abstract axiomatization of models of Proto-Quipper-M was given in \cite{stringdiagramssemantics}: such a model consists of a symmetric monoidal closed category $\mathbf C$ with coproducts enriched over a cartesian closed category $\mathbf V$ with coproducts such that $\mathbf V$ and $\mathbf C$ form a linear/nonlinear model, 
      satisfying $\mathbf M$ embeds monoidally into $\mathbf C$. Since $\mathbf{Set}$ and $\mathbf{qSet}$ form a linear/nonlinear model with coproducts, and $\mathbf M$ can be a regarded as a subcategory of $\mathbf{WStar}_\mathrm{HA}$, $\qSet$ clearly forms a model of Proto-Quipper-M.
\end{example}

\begin{example}\label{ex:pure}
\emph{Pure quantum computation} refers to the model of quantum computation in which the evolution of a system is described entirely by quantum gates, typically formalized by unitary operators \cite{huotstaton:qptheory}. In this pure setting, a quantum circuit is just the sequential composition of such unitaries. There is no measurement, randomness, or environmental interaction: computation proceeds coherently and reversibly, so quantum gates are automorphisms of fully quantum systems, which consist of qubits. In the category of quantum sets and functions,  such an automorphism is formalized by a closely related function between quantum sets. Indeed, each function $F_1\: \H_d \To \H_d$, for any positive integer $d$, is defined by $F_1(H_d, H_d) = \CC\cdot u$, for some unitary operator $u$. For example, the Hadamard gate is formalized by the function $F_1\: \H_2 \To \H_2$ defined by $$F_1(H_2, H_2) = \CC \cdot \left[ \begin{matrix} 1 & 1 \\ 1 & -1 \end{matrix} \right].$$
This function $F_1$ is an automorphism of $\H_2$ in $\qSet$. 
The unital normal $*$-homomorphism $M_2(\CC) \iso \ell^\infty(\H_2) \To \ell^\infty(\H_2) \iso M_2(\CC)$ that corresponds to this function in the sense of Theorem \ref{thm:qSet} is simply conjugation by the Hadamard matrix above, appropriately normalized.

\end{example}

\begin{example}\label{ex:measurement}
Measurement is a channel from a fully quantum system to a fully classical system, and cannot be described by a unitary operator. As a consequence, it is an \emph{impure} quantum operation, which in $\qSet$ is formalized by a function from an atomic quantum set to a classical quantum set. 
It follows from the definition of a function between quantum sets that functions from an atomic quantum set $\H_d$ to a classical quantum set $`S$ are in bijective correspondence with projection-valued measurements on $\H_d$. Explicitly, $F_2(H_d, \CC_s) = L(H_d, \CC_s) \cdot  p_s$ for each $s \in S$, where $(p_s: s \in S)$ is a projective POVM.\footnote{POVM stands for \emph{Positive Operator-Valued Measurement}.} In particular, the standard measurement of a qubit is formalized by a function from $\H_2$ to $`\{1, -1\}$ defined by
$$F_2(H_2, \CC_s) = \begin{cases} \CC \cdot [\begin{matrix} 1 & 0 \end{matrix} ] & {s = 1}; \\ 
\CC \cdot [\begin{matrix} 0 & 1 \end{matrix}]  & s = {-1}.    \end{cases}$$
This function $F_2$ is an epimorphism in $\qSet$. Equivalently, it is surjective in the sense that $F_2 \circ F_2^\dagger \geq I$.
The unital normal $*$-homomorphism $\CC^2 \iso \ell^\infty(\{1, -1\}) \To \ell^\infty(\H_2) \iso M_2(\CC)$ corresponding to this function in the sense of Theorem \ref{thm:qSet} includes $\CC^2$ into $M_2(\CC)$ diagonally.
\end{example}

\begin{example}\label{ex:distributions-monad}
Generally, quantum computations are described by \emph{quantum channels}, i.e., completely positive unital maps\footnote{Here, we assume the Heisenberg picture of quantum physics. In the Schrödinger picture, which is dual to the Heisenberg picture, quantum channels are described by completely positive trace-preserving maps.}. Classically, impure computations correspond to side effects, which are modeled by monads. In a similar way, quantum channels can be described by Kleisli morphisms of a monad $\D$ on $\qSet$. This monad is induced by the adjunction between the category $\mathbf{WStar}_\mathrm{HA}^\op$, which is equivalent to $\qSet$, and  the category $\mathbf{WCPU}_\mathrm{HA}^\op$, the opposite of the category of hereditarily atomic von Neumann algebras and normal completely positive unital maps. Here, the left adjoint is given by the canonical  embedding of the former category into the latter. Thus, for any two quantum sets $\X$ and $\Y$, we have a natural bijection $\qSet(\X,\D(\Y))\cong\mathbf{WCPU}_\mathrm{HA}(\ell^\infty(\Y),\ell^\infty(\X))$, 
arbitrary quantum channels can be modeled in $\qSet$.

If we identify the states on $\X$ with normal completely positive unital functionals on $\ell^\infty(\X)$, then it follows from Lemma \ref{lem:one-dimensional atoms} that \[\mathrm{States}(\X)=\mathbf{WCPU}_\mathrm{HA}(\ell^\infty(\X),\mathbb C)=\qSet(\mathbf 1,\D(\X))\cong\At(\D(\X)_1),\] hence by the same lemma, we have an bijection 
\begin{equation}\label{iso:states}
`\mathrm{States}(\X)\cong \D(\X)_1.
\end{equation}
The monad $\D$ on $\qSet$ is a noncommutative version of the (countable) distributions monad $D$ on $\Set$ in the sense that if $S$ is an ordinary set, then there is a bijection $`D(S)\cong \D(`S)_1$, so essentially, $D(S)$ is the classical part of $\D(`S)$. This follows by taking $\X=`S$ in Equation (\ref{iso:states}), and from a standard result in the theory of von Neumann algebras that the predual of the (hereditarily atomic) von Neumann algebra $\ell^\infty(S)$ is isometrically isomorphic to $\ell^1(S)$, the space of functions $f:S\to\CC$ such that $\sum_{x\in S}|f(x)|<\infty$. This isomorphism restricts and corestricts to a bijection $\mathbf{WCPU}(\ell^\infty(S),\CC)\cong D(S)$, where we recognize the left-hand side as $\mathrm{States}(`S)$. 
\end{example}

\begin{example}
State preparation cannot be formalized directly as a function, in the manner of the Examples \ref{ex:pure} and \ref{ex:measurement} Indeed, from the perspective of quantum mechanics, state preparation is a nondeterministic binary channel, but there are no functions at all from $`\{1, -1\}$ to $\H_2$. However, it is possible to formalize state preparation in $\qSet$ using the monad $\D$ of Example \ref{ex:distributions-monad}. In this example, we saw that $\D$ has the property that for each quantum set $\X$, the one-dimensional atoms of $\D(\X)$ correspond bijectively to the states on $\X$, which we may define to be the normal states on $\ell^\infty(\X)$ in the sense of operator algebras. We thus have a monomorphism $`\mathrm{States}(\X) \rightarrowtail \D(\X)$ in $\qSet$.

A simple experimental procedure, which consists of preparing a qubit, applying the Hadamard gate, and then measuring its value is modeled as the following composition of functions in $\qSet$:

\[
\begin{tikzcd}
`\mathrm{States}(\H_2)
\arrow[tail]{r}
&
\D(\H_2)
\arrow{r}{\D(F_1)}
&
\D(\H_2)
\arrow{r}{\D(F_2)}
&
\D(`\{1,-1\}),
\end{tikzcd}
\]
where the injection is obtained from the bijection $`\mathrm{States}(\H_2)\cong\D(\H_2)_1$ in Example \ref{ex:distributions-monad}, and where $F_1$ and $F_2$ are the functions from Examples \ref{ex:pure} and \ref{ex:measurement}, respectively. This yields a function from $`\mathrm{States}(\H_2)$ to $\D(`\{1,-1\})$, or equivalently a function\\
\noindent $\mathrm{States}(\H_2) \To \mathrm{States}(`\{1,-1\})$ that maps each prepared state to the corresponding probability distribution on outcomes. This equivalence is due to the fact that a function between quantum sets necessarily maps one-dimensional atoms to one-dimensional atoms. Type theoretically, it is related to the fact that any term of type $A$ that has only nonlinear parameters lifts to a term of type $!A$. 
\end{example}

\begin{example}\label{ex:subdistributions-monad}
   Incomplete quantum computations are described by normal completely positive subunital maps, which generalize subdistributions, and which are in a 1-1 correspondence with the Kleisli maps of a monad $\S$ on $\qSet$ that is the quantum version of the subdistributions monad $S$ on $\Set$. This monad $\S$ can be obtained in a similar way as the quantum distributions monad on $\qSet$ in Example \ref{ex:distributions-monad}, namely via the adjunction between $\mathbf{WStar}^\op_\mathrm{HA}$ and the category $\mathbf{WCPSU}^\op_\mathrm{HA}$ of hereditarily atomic von Neumann algebras and normal completely positive subunital maps. The existence of this adjunction is proven in ~\cites{ChoWesterbaan16,Westerbaanthesis}. 
   As a consequence, the Kleisli morphisms $\X\to\S(\Y)$ correspond to normal completely positive subunital maps $\ell^\infty(\Y)\to\ell^\infty(\X)$
   
   The Kleisli morphisms of the ordinary subdistributions monad $S$ are the substochastic maps, which can be ordered pointwise. The analog of this order in the quantum setting is a variation of the Löwner order, the so-called \emph{CP-Löwner order} on normal completely positive subunital maps\footnote{Given two parallel completely positive maps $\varphi$ and $\psi$, we define $\varphi\leq\psi$ if and only if $\psi-\varphi$ is completely positive. For states, the CP-Löwner order coincides with the Löwner order.}  It was shown in \cites{cho:semantics} that the category of von Neumann algebras with completely positive subunital maps is $\mathbf{DCPO}$-enriched with respect to the CP-Löwner order\footnote{A similar result with respect to the usual Löwner order was proven in \cite{rennela:domains}.}, and in \cite{JKLMZ} that the Kleisli category $\qSet_\S$ is likewise enriched over continuous domains with respect to this order. Nevertheless, $\qSet_\S$ does not support recursion, for the same reason that $\Set_S$ (which is also enriched over continuous domains) does not support for recursions: in both cases, the base category ($\Set$ and $\qSet)$ is not $\CPO$-enriched. Another explanation is that in classically-controlled quantum computing, we have two runtimes: circuit generation time and circuit execution time. Typically, recursion in  classically-controlled quantum computing unfolds during the circuit generation time, whereas the CP-Löwner describes the stage of completion of a computation during the circuit execution time. The order of quantum cpos, on the other hand, describe the stage of completion of a computation during circuit generation time, and offer better support for recursion in quantum computing. 
\end{example}

\subsection{Quantum posets}\label{sub:quantum posets}
Quantum cpo are in particular \emph{quantum posets}, which are essentially due to Weaver \cite[Definition 2.4]{Weaver10}, and consist of a pair $(\X,R)$ consisting of quantum set $\X$ and a binary relation $R$ on $\X$ that generalizes orders to the noncommutative setting. We start by reviewing the exact notion of a quantum poset. We note that the notions of partial order, monotone function, etc.~introduced here are all direct generalizations of the standard notions of set theory.  

\begin{definition}[Quantum poset]\label{def:quantpos}
We call a binary relation $R$ on a quantum set $\X$
\begin{itemize}
    \item \emph{reflexive} if $I_\X\leq R$;
    \item \emph{transitive} if $R\circ R\leq R$;
    \item \emph{antisymmetric} if $R\wedge R^\dag\leq I_\X$.
\end{itemize}
A relation satisfying these three conditions is an \emph{order}. 
We define a \emph{quantum poset} to be a pair $(\X,R)$ consisting of a quantum set $\X$ equipped with an order $R$. 
\end{definition}

\begin{example}
    Let $\X$ be a quantum set. Then $I_\X$ is easily seen to be an order on $\X$, the \emph{trivial} or \emph{flat} order. 
\end{example}
\begin{example}\cite[Example 1.4]{KLM20}\label{ex:quantum order on H2}
    Let $\H$ be an atomic quantum set whose single atom is named $H$. Then a binary relation $V$ on $\H$ is an order if and only if its single component $V(H,H)$ is a unital algebra $A$ of operators on $H$ that is \emph{anti-symmetric} in the sense that $A\cap A^\dag=\mathbb C1$. We give a more specific example for  $H=\mathbb C^2$. Let $V(H,H)$ be the span of the nilpotent matrix $a = \left[\begin{smallmatrix} 0 & 1 \\ 0 & 0 \end{smallmatrix}\right]$ and the identity matrix $1 = \left[\begin{smallmatrix} 1 & 0 \\ 0 & 1 \end{smallmatrix}\right]$. This is an antisymmetric unital algebra of operators on $\CC^2$, so $(\H,V)$ is a quantum poset.
\end{example}

A function $F:(\X,R)\to(\Y,S)$ between quantum posets is said to be \emph{monotone} if $F\circ R\leq S\circ F$. If $R=F^\dag\circ S\circ F$, we say that $F$ an \emph{order embedding}. A surjective order embedding is called an \emph{order isomorphism}, and is precisely a bijective monotone function whose inverse is also monotone. The category of quantum posets and monotone maps is denoted by $\qPOS$. 
\noindent The elementary properties of this category are established in \cite{KLM20}:

\begin{theorem*}
The category $\cat{qPOS}$ of quantum posets and monotone functions is complete, cocomplete, symmetric monoidal closed, and $\mathbf{POS}$-enriched, where $\POS$ is the category of ordinary posets and monotone functions.  Moreover, the functor $`(-):\POS\to\qPOS$, given by $(S,{\sqsubseteq})\mapsto (`S,`{\sqsubseteq})$ on posets and by $f\mapsto `f$ on monotone maps, is a fully faithful functor.
\end{theorem*}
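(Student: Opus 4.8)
The plan is to transfer each property across the faithful forgetful functor $U\colon\qPOS\to\qSet$, exploiting the (already recalled) fact that $\qSet$ is complete, cocomplete, and symmetric monoidal closed. For \textbf{completeness} I would show that $U$ creates limits: given a diagram $D$ in $\qPOS$, take the limit $(\L,(p_i)_i)$ of $U\circ D$ in $\qSet$ and equip $\L$ with the initial relation $R_\L:=\bigwedge_i p_i^\dag\circ R_{D_i}\circ p_i$. This relation is reflexive because $p_i^\dag\circ p_i\geq I_\L$ for any function $p_i$; transitive because $p_i\circ p_i^\dag\leq I_{D_i}$ forces $R_\L\circ R_\L\leq p_i^\dag\circ R_{D_i}\circ p_i$ for every $i$; and antisymmetric because conjugation by a function preserves binary meets of relations --- a short lemma about $\qSet$ using $FF^\dag\leq I$ and $F^\dag F\geq I$. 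Each $p_i$ is monotone and $R_\L$ is the largest relation with that property, which gives the universal property; since relation-homsets are small, all small limits exist.

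For \textbf{cocompleteness} it is cleanest to factor $U$ through the category $\cat{qPreord}$ of quantum preorders, i.e.\ quantum sets equipped with a reflexive and transitive relation. The forgetful functor $\cat{qPreord}\to\qSet$ is topological: initial lifts are meets of pulled-back preorders, and the final lift along a sink $(F_i\colon(\X_i,R_i)\to\Y)$ is the reflexive--transitive closure of $\bigvee_i F_i\circ R_i\circ F_i^\dag$, which is again a quantum relation because composition distributes over arbitrary joins in the quantaloid $\qRel$. A topological functor over a (co)complete base is (co)complete, so $\cat{qPreord}$ is both. Now $\qPOS$ is a full reflective subcategory of $\cat{qPreord}$: the reflector sends $(\X,R)$ to the quotient of $\X$ by the quantum equivalence relation $R\wedge R^\dag$ (which exists since $\qSet$ has coequalizers), carrying the induced, now antisymmetric, order. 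Because a full reflective subcategory of a (co)complete category is (co)complete --- limits computed as in $\cat{qPreord}$, colimits obtained by reflecting those of $\cat{qPreord}$ --- we conclude $\qPOS$ is complete and cocomplete.

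For the \textbf{symmetric monoidal closed structure}, set $(\X,R)\otimes(\Y,S):=(\X\times\Y,\,R\times S)$. The relation $R\times S$ is reflexive and transitive by functoriality of $\times$ on $\qRel$ and the interchange law, and antisymmetric because $\wedge$ distributes over $\times$ in $\qRel$, giving $(R\times S)\wedge(R\times S)^\dag=(R\wedge R^\dag)\times(S\wedge S^\dag)\leq I_{\X\times\Y}$; the coherence isomorphisms are those of $\qSet$ and are monotone since $\times$ on relations is itself symmetric, associative, and unital. For the closed structure, one shows $-\otimes(\Y,S)$ has a right adjoint whose value at $(\Z,T)$ is built on the sub-quantum-set of the $\qSet$-internal-hom $[\Y,\Z]$ consisting of the monotone functions, carried with the pointwise order; the tensor--hom adjunction $\qPOS((\X,R)\otimes(\Y,S),(\Z,T))\iso\qPOS((\X,R),[(\Y,S),(\Z,T)])$ then reduces to checking that currying preserves and reflects monotonicity. (Equivalently, $\cat{qPreord}$ is symmetric monoidal closed for the same tensor and the pointwise-preorder hom, and $\qPOS$ is closed in $\cat{qPreord}$ under both, hence inherits the structure.) The \textbf{$\mathbf{POS}$-enrichment} is then the observation that each homset carries the pointwise partial order, composition is monotone in each variable and compatible with $\otimes$, and $U$ and $`(-)$ are $\mathbf{POS}$-functors.

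Finally, $`(-)\colon\POS\to\qPOS$ is fully faithful because $`(-)\colon\cat{Rel}\to\qRel$ is fully faithful and preserves identities, composition, daggers, and meets of homsets: hence $`{\sqsubseteq}$ is reflexive/transitive/antisymmetric exactly when ${\sqsubseteq}$ is, a map $f$ is monotone exactly when $`f$ is, and every monotone relation $`S\to`T$ equals $`g$ for a unique binary relation $g$, which is then forced to be a monotone function. The \textbf{main obstacle} is not this formal transfer, which is routine, but the two supporting facts about the base categories: that $\cat{qPreord}\to\qSet$ is genuinely topological (resting on the lattice of quantum preorders on a fixed quantum set being closed under reflexive--transitive closure, and on quotients of quantum sets by quantum equivalence relations being well behaved) and, above all, the explicit description and universal property of the internal hom of $\qSet$ together with the verification that the tensor--hom adjunction restricts correctly to monotone functions --- the least formal step, and where most of the work lies.
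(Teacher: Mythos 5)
This theorem is not proved in the paper at all: it is imported verbatim from \cite{KLM20}, so the comparison is with the proofs there (whose structure is visible from how the present paper cites them). Your architecture agrees with that source on completeness (limit computed in $\qSet$, equipped with $\bigwedge_i p_i^\dag\circ R_{D_i}\circ p_i$), on the monoidal product $R\times S$, on the $\POS$-enrichment, and on deducing full faithfulness of $`(-)$ from full faithfulness of $`(-)\:\cat{Rel}\to\qRel$. For cocompleteness you take a genuinely different route: \cite{KLM20} constructs coproducts explicitly (disjoint unions with the coproduct order) and then obtains coequalizers from completeness, wellpoweredness and extremal co-wellpoweredness via Nakagawa's theorem --- exactly the strategy this paper replays for $\qCPO$ in Section 4 --- whereas you go through a topological functor $\cat{qPreord}\to\qSet$ and a reflective inclusion $\qPOS\hookrightarrow\cat{qPreord}$. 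Your route is attractive but shifts the burden onto the claim that quantum equivalence relations are effective: ``the quotient of $\X$ by $R\wedge R^\dag$ exists since $\qSet$ has coequalizers'' is not a justification, because what you need is a surjection $q$ with $q^\dag\circ q=R\wedge R^\dag$ (so that the pushed-forward order becomes antisymmetric), and that is a separate, nontrivial fact about $\qSet$ (it holds, via the correspondence between reflexive symmetric transitive quantum relations and von Neumann subalgebras, but it must be proved).

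Two further points are genuine gaps rather than stylistic differences. First, your antisymmetry argument for the limit order does not close: distributing $\wedge$ past conjugation gives $R_\L\wedge R_\L^\dag\leq\bigwedge_i p_i^\dag\circ(R_{D_i}\wedge R_{D_i}^\dag)\circ p_i=\bigwedge_i p_i^\dag\circ p_i$, and $p_i^\dag\circ p_i\geq I_\L$, which points the wrong way. You need the additional fact that a limiting cone in $\qSet$ is jointly injective in the strong sense $\bigwedge_i p_i^\dag\circ p_i=I_\L$; this is true but is precisely the lemma your sketch omits. Second, and most importantly, the internal hom. Describing $[(\Y,S),(\Z,T)]$ as ``the sub-quantum-set of $\Z^\Y$ consisting of the monotone functions with the pointwise order'' is not directly meaningful in the quantum setting, since atoms of $\Z^\Y$ are not individual functions and there is no pointwise selection to perform. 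The actual construction takes the largest subset $\W\subseteq\Z^\Y$ and the largest preorder $Q$ on it for which $\Eval\circ(Q\times S)\leq T\circ\Eval$, and then proves antisymmetry of $Q$ and the restriction of currying --- this is where essentially all of the work in the monoidal-closure part lives (compare the analogous development for $[\X,\Y]_\uparrow$ in Section 5.3 of this paper). You correctly flag this as the least formal step, but the proposal does not supply the idea that actually makes it work.
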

Here, the monoidal product $(\X,R)\times(\Y,S)$ of two quantum posets $(\X,R)$ and $(\Y,S)$ is the quantum poset $(\X\times \Y,R\times S)$. The monoidal product of two monotone functions is the monoidal product of the two functions when regarded as morphisms of $\qSet$.
\begin{example}
    Let $\overline \NN$ be the set of natural numbers with infinity, ordered in the obvious way by $\sqsubseteq$. By the above theorem, $`{\sqsubseteq}$ is an order on $`\overline\NN=\Q\{\CC_1,\CC_2,\ldots,\CC_\infty\}$ whose non-vanishing components are given by $`{\sqsubseteq}(\CC_r,\CC_s)=L(\CC_r,\CC_s)$ if $r\sqsubseteq s$ in $\overline\NN$. Since every atom of $`\overline{\NN}$ is one-dimensional, and $L(\CC_r,\CC_s)$ is also one-dimensional, we can represent $`{\sqsubseteq}$ by the following matrix:
    \[\left(
\begin{matrix}
`{\sqsubseteq}(\CC_\infty, \CC_\infty) & \cdots & `{\sqsubseteq}(\CC_2, \CC_\infty) & `{\sqsubseteq}(\CC_1, \CC_\infty) \\
\vdots & \ddots &  \vdots &\vdots \\
`{\sqsubseteq}(\CC_\infty, \CC_2) & \cdots & `{\sqsubseteq}(\CC_2, \CC_2) &`{\sqsubseteq}(\CC_1, \CC_2) \\
`{\sqsubseteq}(\CC_\infty, \CC_1) & \cdots & `{\sqsubseteq}(\CC_2, \CC_1) & `{\sqsubseteq}(\CC_1, \CC_1)
\end{matrix}
\right)
: = \left(
\begin{matrix}
\CC & \cdots & \CC & \CC \\
\vdots & \ddots &  \vdots &\vdots \\
0 & \cdots & \CC &  \CC \\
0 & \cdots & 0 &\CC
\end{matrix}
\right).\]
The triangular shape of the matrix of $`{\sqsubseteq}$ reflects that $\sqsubseteq$ is an ordinary order.
\end{example}
\begin{example}\label{ex:orderNtimesH}
    Let $\H$ be the atomic quantum set with single atom $H=\CC^2$, and let $V$ be the quantum order from Example \ref{ex:quantum order on H2}, i.e., $V(H,H)=\CC a+\CC1$, where $a = \left[\begin{smallmatrix} 0 & 1 \\ 0 & 0 \end{smallmatrix}\right]$ and $1$ is the identity on $\CC^2$. Let $\X:=`\overline\NN\times\H$. Then all atoms of $\X$ are of the form $X_i : = \CC_i \tensor \CC^2$ for some $i\in\overline\NN$. If $\sqsubseteq$ denotes the usual order on $\overline\NN$ as in the previous example, then the above theorem implies that $S:=`{\sqsubseteq}\times V$ is an order on $`\overline\NN\times\H$. Since each atom $X_i$ of $\X$ is canonically isomorphic to $\CC^2$, we may depict $S$ as a matrix of matrix subspaces as follows:
    $$
\left(
\begin{matrix}
S(X_\infty, X_\infty) & \cdots & S(X_2, X_\infty) & S(X_1, X_\infty) \\
\vdots & \ddots &  \vdots &\vdots \\
S(X_\infty, X_2) & \cdots & S(X_2, X_2) &S(X_1, X_2) \\
S(X_\infty, X_1) & \cdots & S(X_2, X_1) & S(X_1, X_1)
\end{matrix}
\right)
: =
\left(
\begin{matrix}
\CC a+\CC 1  & \cdots & \CC a + \CC 1 & \CC a + \CC 1 \\
\vdots & \ddots & \vdots & \vdots \\
0 & \cdots & \CC a + \CC 1  & \CC a + \CC 1 \\
0 & \cdots & 0  & \CC a + \CC 1 \\
\end{matrix}
\right).
$$
\end{example}

The following result, from \cite{KLM20} shows the category $\mathbf{qPOS}$ is enriched over $\mathbf{POS}$, using a generalization of the pointwise order on functions from an ordinary set into an ordinary poset.

\begin{lemma}[Lemma~4.1~\cite{KLM20}]\label{lem:qPos}
For two functions $F,G:\X\to\Y$ from a quantum set $\X$ to a quantum poset $(\Y,S)$, the following conditions are equivalent:
\begin{itemize}
    \item $G\leq S\circ F$;
    \item $F\leq S^\dag\circ G$;
    \item $S\circ G\leq S\circ F$;
    \item $S^\dag\circ F\leq S^\dag\circ G$;
    \item $G\circ F^\dag\leq S$.
\end{itemize}
We define $F\sqsubseteq G$ if the above conditions hold. Then $\qSet(\X,\Y)$ is an ordinary poset in the order $\sqsubseteq$.
\end{lemma}
We regard ordinary posets as a special case of quantum posets via the functor $`(-):\mathbf{POS}\to\mathbf{qPOS}$.

\section{Quantum cpos}\label{Quantum cpos}

We next define quantum cpos, which generalize complete partial orders to the quantum setting. To begin, recall that a \emph{complete partial order (cpo)} is a poset in which every monotone increasing sequence has a supremum. A poset $(X,\sqsubseteq)$ is a cpo if and only if the homset $\Set(1,X)$ equipped with the pointwise order is a cpo. This suggests that we could define a quantum cpo to be a quantum poset $(\X,R)$ such that $\qSet(\mathbf 1,(\X,R))$ is a cpo when equipped with the pointwise order of Lemma \ref{lem:qPos}. Unfortunately, this definition is too weak, for several reasons. Firstly, $\Set$ has a single generator, namely $1$, but $\qSet$ is generated by the atomic quantum sets of the form $\H_d$ for $d\in\mathbb N$. The functions $\mathbf 1\to\X$ correspond only to the one-dimensional atoms (see Lemma \ref{lem:one-dimensional atoms}), so to the classical part of $\X$. This is related to the fact that higher-dimensional atoms of $\X$ can be interpreted as subsets of $\X$ consisting of elements that cannot be accessed individually. However,  also requiring that $\qSet(\H_d,\X)$ is a cpo for each $d\in\mathbb N$ yields a definition that is too weak, because this amounts to taking `external' suprema of monotonically ascending sequences, i.e., purely defined in terms of homsets, whereas a proper quantization of cpos requires an `internal' notion of suprema of such sequences.

To find the correct definition of a quantum cpos, we examine the classical case more closely. Let $(X,{\below})$ be an ordinary poset, let $W$ be an ordinary set, and let $k_1\sqsubseteq k_2\sqsubseteq \cdots$ be a monotonically increasing sequence of functions from $W$ to $X$ in the pointwise order. For each  $w \in W$, the monotonically increasing sequence $k_1(w) \below k_2(w) \below \cdots$ has supremum $x_\infty \in X$ if and only if the intersection of the upward-closed sets $\up k_n(w):=\{x\in X:k_n(w)\sqsubseteq x\}$ is the upward-closed set $\up x_\infty$. Thus, if each sequence $k_1(w) \below k_2(w) \below \cdots$ has a supremum in $X$, then we obtain a function $k_\infty\: W \To X$ with the property that $\up k_\infty(w) = \bigwedge_{n \in \NN} \up k_n(w)$ for all $w \in W$. In fact, this is an equivalence of conditions, and if we view both the order ${\below}$ and the functions $k_n$ as morphisms in the category $\cat{Rel}$, then we can render the latter condition as $(\sqsubseteq)\circ k_\infty=\bigwedge_{n\in\NN}(\sqsubseteq)\circ k_n$. In short, a poset $(X,{\below})$ is a cpo if and only if, for each set $W$ and each monotonically increasing sequence of functions $k_1 \below k_2 \below \cdots\: W \to X$, there exists a function $k_\infty\:W \to X$ such that $(\sqsubseteq)\circ k_\infty=\bigwedge_{n\in\NN}(\sqsubseteq)\circ k_n$. By analogy, a quantum poset $(\X,R)$ should be a quantum cpo if and only if, for each quantum set $\W$ and each monotonically increasing sequence of functions $K_1 \below K_2 \below \cdots\: \W \to \X$, there exists a function $K_\infty\:\W \to \X$ such that $R\circ K_\infty=\bigwedge_{n\in\NN}R \circ K_n$. Establishing this as the correct condition is our next goal.

\subsection{Convergence} For a monotonically increasing sequence of functions $K_1 \below K_2 \below \cdots \below K_\infty$ from a quantum set $\W$ to a quantum poset $\X$, we next define the notation $K_n \nearrow K_\infty$, whose intuitive meaning is that $K_\infty$ is the  \emph{internal} pointwise supremum of the sequence $K_1, K_2, \ldots$. Example~\ref{ex:sup} then shows $K_n \nearrow K_\infty$ is a strictly stronger relation than $K_\infty$ being the external supremum of the sequence $K_1, K_2, \ldots$ among all the functions from $\W$ to $\X$.

\begin{definition}[Limit of increasing sequence of $\qSet$-morphisms]\label{def:limit}
Let $(\X,R)$ be a quantum poset, and let $\W$ be a quantum set. Equip $\qSet(\W,\X)$ with the pointwise order $\sqsubseteq$ defined in Lemma~\ref{lem:qPos}. 
\begin{enumerate}
\item If the supremum of a  monotonically increasing sequence $K_1\sqsubseteq K_2\sqsubseteq\cdots$ in $\qSet(\W,\X)$ equipped with the pointwise order exists, we denote it by $\bigsqcup_{n\in\NN}K_n$. 
\item If there exists some $K_\infty:\W\to\X$ such that $R\circ K_\infty =\bigwedge_{n\in\NN}R\circ K_n,$
then we write $K_n\nearrow K_\infty$.
\end{enumerate}
Thus, $K_n\nearrow K_\infty$ expresses that $K_\infty$ is the internal supremum of the sequence. In this case, we will call $K_\infty$ the \emph{limit} of the sequence. On the other hand, $K_\infty=\bigsqcup_{n\in\NN}K_n$ expresses that $K_\infty$ is the external supremum of the sequence. In this case, we omit the word `external', and just call $K_\infty$ the \emph{supremum} of the sequence.\end{definition}
In order to show that property (2), $K_n\nearrow K_\infty$ is the appropriate one for the quantum setting, we first prove that the limit of a monotonically ascending sequence is also its supremum, as one would expect from the classical case.

\begin{lemma}\label{lem:lim is sup}
Let $(\X,R)$ be a quantum poset, and let $\W$ be a quantum set. If a monotonically ascending sequence of functions $K_1\sqsubseteq K_2\sqsubseteq\cdots\sqsubseteq K_\infty :\W\to\X$ satisfies $K_n\nearrow K_\infty$, then $K_\infty=\bigsqcup_{n\in\NN}K_n$. In particular, $K_\infty$ is the unique limit of the $K_n$. 
\end{lemma}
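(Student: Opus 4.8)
The plan is to show two things: first, that $K_\infty$ is an upper bound of the sequence $K_1 \sqsubseteq K_2 \sqsubseteq \cdots$ in the poset $(\qSet(\W,\X), \sqsubseteq)$, and second, that it is the least such upper bound. Uniqueness then follows because suprema in a poset are unique, and because the limit condition $R \circ K_\infty = \bigwedge_n R \circ K_n$ determines $K_\infty$ via the supremum it picks out. Throughout I would work with the characterization of $\sqsubseteq$ from Section~\ref{sub:quantum posets}; in particular, the equivalence ``$F \sqsubseteq G$ iff $R \circ G \leq R \circ F$'' (using the third bullet there, with $S = R$) is the one that meshes best with a hypothesis phrased in terms of $R \circ K_n$.

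First I would check that $K_\infty$ is an upper bound. Fix $m \in \NN$. Since $R \circ K_\infty = \bigwedge_{n \in \NN} R \circ K_n$ is a meet in the homset lattice $\qRel(\W,\X)$, we get $R \circ K_\infty \leq R \circ K_m$, which by the cited equivalence says exactly $K_m \sqsubseteq K_\infty$. So $K_\infty$ is an upper bound of the sequence. Next, suppose $L : \W \to \X$ is any function with $K_n \sqsubseteq L$ for all $n$; I must show $K_\infty \sqsubseteq L$, i.e., $R \circ L \leq R \circ K_\infty$. From $K_n \sqsubseteq L$ we have $R \circ L \leq R \circ K_n$ for every $n$, hence $R \circ L$ is a lower bound of the family $\{R \circ K_n : n \in \NN\}$ in $\qRel(\W,\X)$, and therefore $R \circ L \leq \bigwedge_{n \in \NN} R \circ K_n = R \circ K_\infty$. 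This is precisely $K_\infty \sqsubseteq L$. Combining the two parts, $K_\infty = \bigsqcup_{n \in \NN} K_n$.

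For the ``in particular'' clause: if $K_\infty$ and $K_\infty'$ are both limits of the sequence, then each equals $\bigsqcup_{n \in \NN} K_n$ by what was just shown, and since $(\qSet(\W,\X), \sqsubseteq)$ is an ordinary poset, suprema are unique, so $K_\infty = K_\infty'$.

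I do not expect a genuine obstacle here: the entire argument is a translation between the relational meet condition defining $\nearrow$ and the homset order $\sqsubseteq$, using that composition on the left with a fixed relation is monotone and that $\bigwedge$ is a genuine greatest lower bound in the complete lattice $\qRel(\W,\X)$. The only point requiring a little care is to pick the ``right'' one of the five equivalent formulations of $\sqsubseteq$ so that both the upper-bound and least-upper-bound halves fall out of the single identity $R \circ K_\infty = \bigwedge_n R \circ K_n$ without needing to invert $R$ or pass through $R^\dag$; the formulation ``$F \sqsubseteq G \iff R \circ G \le R \circ F$'' does this cleanly.
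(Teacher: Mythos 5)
Your proof is correct and follows essentially the same route as the paper's: show $K_\infty$ is an upper bound from $R\circ K_\infty\leq R\circ K_n$, show it is least by noting any upper bound $G$ satisfies $R\circ G\leq\bigwedge_n R\circ K_n=R\circ K_\infty$, and deduce uniqueness of limits from uniqueness of suprema. The choice of the characterization $F\sqsubseteq G\iff R\circ G\leq R\circ F$ is exactly the one the paper uses.
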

\begin{proof}
Since $R\circ K_\infty=\bigwedge_{n\in\NN}R\circ K_n$, we have $R\circ K_\infty\leq R\circ K_n$ for each $n\in\NN$, i.e., $K_n\sqsubseteq K_\infty$ for each $n\in\NN$; hence, $K_\infty$ is an upper bound of the $K_n$ in $\qSet(\W,\X)$. Let $G$ be any upper bound of the $K_n$. For each $n \in \NN$, $K_n\sqsubseteq G$, i.e., $R\circ G\leq R\circ K_n$, and so, $R\circ G\leq \bigwedge_{n\in\NN}R\circ K_n=R\circ K_\infty$. Thus, $K_\infty\sqsubseteq G$; we conclude that $K_\infty$ is indeed the supremum of the functions $K_n$ in $\qSet(\W,\X)$.
If $G:\W\to\X$ is any function satisfying $K_n\nearrow G$, then the same argument shows $G=\sup_{n\in\NN}K_n=K_\infty$; hence, limits are unique.
\end{proof}

The converse of Lemma \ref{lem:lim is sup} is not true, as the next example shows. That is, there exist a quantum set $\W$, a quantum poset $(\X,R)$ and functions $K_1 \below K_2 \below \cdots \below K_\infty\: W \To \X$ such that $K_\infty=\bigsqcup_{n\in\NN}K_n$, but not $K_n \nearrow K_\infty$. Nevertheless, Lemma \ref{lem:lim is sup} implies that $K_\infty=\bigsqcup_{n\in\NN}K_n$ is equivalent to $K_n \nearrow K_\infty$ when $(\X,R)$ is a quantum cpo.
 
\begin{example}\label{ex:sup}
Let $\X=`\overline\NN\times\H$ ordered by $S$ as in Example \ref{ex:orderNtimesH}. Later we will see that $(\X,S)$ is a quantum cpo according to Proposition~\ref{prop:finite qposet is qcpo}, Corollary~\ref{classical.L}, and Corollary~\ref{cor:monoidal product of qCPOs}. Here, we modify $S$ slightly in order to obtain a quantum poset for which not every monotonically ascending sequence has a limit. So let $R$ be the quantum order that equals $S$ on all atoms, except we equip the atom $X_\infty=\CC_\infty \tensor \CC^2$ the flat order, i.e., $R(X_\infty,X_\infty)=\CC 1_{X_\infty}$. 
Since each atom $X_i=\CC_i\otimes\CC^2$ is canonically isomorphic to $\CC^2$, we can depict $R$ as a matrix of matrix subspaces as follows:
$$
\left(
\begin{matrix}
R(X_\infty, X_\infty) & \cdots & R(X_2, X_\infty) & R(X_1, X_\infty) \\
\vdots & \ddots &  \vdots &\vdots \\
R(X_\infty, X_2) & \cdots & R(X_2, X_2) &R(X_1, X_2) \\
R(X_\infty, X_1) & \cdots & R(X_2, X_1) & R(X_1, X_1)
\end{matrix}
\right)
: =
\left(
\begin{matrix}
\CC 1  & \cdots & \CC a + \CC 1 & \CC a + \CC 1 \\
\vdots & \ddots & \vdots & \vdots \\
0 & \cdots & \CC a + \CC 1  & \CC a + \CC 1 \\
0 & \cdots & 0  & \CC a + \CC 1 \\
\end{matrix}
\right)
$$ So, the only difference between $R$ and $S$ is the first row, first column entry.

We show $R$ is an order and that there is a monotonically increasing sequence of functions $K_1 \below K_2 \below \cdots\colon \H\to (\X,R)$ with a supremum, $K_\infty$, that does not satisfy $R \circ K_\infty  = \bigwedge_{n \in \NN} R \circ K_n$. 
Appealing to the fact that the algebra $\CC1 + \CC a$ is antisymmetric, it is easy to verify that $R$ is an order on $`\NN \times \H$. The ``origin'' of the matrices is in the lower right corner, to reflect the intuition that higher-indexed atoms $X_i$ correspond to atomic subsets that are higher in the order on $\X$.

For each $n \in \NN$ we define the function $K_n\: \H \to \X$ and compute $R \circ K_n$ as follows:
$$
\left(
\begin{matrix}
K_n (H, X_\infty) \\
\vdots \\
K_n (H, X_{n+1}) \\
K_n (H, X_n) \\
K_n (H, X_{n-1}) \\
\vdots \\
K_n(H, X_2) \\
K_n(H, X_1) 
\end{matrix}
\right)
: =
\left(
\begin{matrix}
0  \\
\vdots \\
0 \\
\CC 1 \\
0\\
\vdots \\
0 \\
0
\end{matrix}
\right)
\qquad
\left(
\begin{matrix}
(R \circ K_n) (H, X_\infty) \\
\vdots \\
(R \circ K_n) (H, X_{n+1}) \\
(R \circ K_n) (H, X_n) \\
(R \circ K_n) (H, X_{n-1}) \\
\vdots \\
(R \circ K_n)(H, X_2) \\
(R \circ K_n)(H, X_1) 
\end{matrix}
\right)
=
\left(
\begin{matrix}
\CC a + \CC 1 \\
\vdots \\
\CC a + \CC 1 \\
\CC a + \CC 1 \\
0\\
\vdots \\
0 \\
0
\end{matrix}
\right)
$$
We also define $K_\infty\: \H \To \X$ below (it's straightforward to show $K_\infty = \sup_{n\in \NN} K_n$), and we compute $R \circ K_\infty$ as follows:
$$
\left(
\begin{matrix}
K_\infty (H, X_\infty) \\
\vdots \\
K_\infty(H, X_2) \\
K_\infty(H, X_1) 
\end{matrix}
\right)
: =
\left(
\begin{matrix}
\CC 1 \\
\vdots \\
0 \\
0
\end{matrix}
\right)
\qquad
\left(
\begin{matrix}
(R \circ K_\infty) (H, X_\infty) \\
\vdots \\
(R \circ K_\infty)(H, X_2) \\
(R \circ K_\infty)(H, X_1) 
\end{matrix}
\right)
=
\left(
\begin{matrix}
\CC 1 \\
\vdots \\
0 \\
0
\end{matrix}
\right)
$$
Thus, we find that
$$
\left(
\begin{matrix}
(R \circ K_\infty) (H, X_\infty) \\
\vdots \\
(R \circ K_\infty)(H, X_2) \\
(R \circ K_\infty)(H, X_1) 
\end{matrix}
\right)
=
\left(
\begin{matrix}
\CC 1 \\
\vdots \\
0 \\
0
\end{matrix}
\right)
\neq
\left(
\begin{matrix}
\CC a + \CC 1\\
\vdots \\
0 \\
0
\end{matrix}
\right)
= \bigwedge_{n \in \NN}
\left(
\begin{matrix}
(R \circ K_n) (H, X_\infty) \\
\vdots \\
(R \circ K_n)(H, X_2) \\
(R \circ K_n)(H, X_1) 
\end{matrix}
\right).
$$
In short, $R \circ K_\infty \neq \bigwedge_{n \in \NN} R \circ K_n$, so we do not have that $K_n \nearrow K_\infty$ in $\qPOS(\H,(\X,R))$.

On the other hand, we claim that $K_\infty=\bigsqcup_{n\in\NN}K_n$, i.e., that $K_\infty$ is the supremum of the sequence $K_1 \sqsubseteq K_2 \sqsubseteq \cdots$ in  $\qPOS(\H,(\X,S))$ in the order defined in Lemma~\ref{lem:qPos}. Indeed, let $F$ be a function such that $K_n \below F$ for each $n \in \NN$. It follows immediately that $F \leq \bigwedge_{n \in \NN} R \circ K_n$:
$$
\left(
\begin{matrix}
F (H, X_\infty) \\
\vdots \\
F(H, X_2) \\
F(H, X_1) 
\end{matrix}
\right)
\leq
\left(
\begin{matrix}
\CC a +\CC 1 \\
\vdots \\
0 \\
0
\end{matrix}
\right).
$$
By the definition of a function between quantum sets, $F \circ F^\dagger \leq I_\X$. Thus, $F(H, X_\infty)$ is a subspace of $2\times 2$ complex matrices such that $F(H, X_\infty)\cdot F(H, X_\infty)^\dagger \leq \CC 1$. It is straightforward to show that this implies that either $F(H, X_\infty)= 0$ or $F(H, X_\infty) = \CC u$ for some unitary matrix $u$. By the definition of a function between quantum sets, $F^\dagger \circ F \geq I_\H$, so certainly $F(H, X_\infty) \neq 0$. Hence, $F(H, X_\infty) = \CC u$ for some unitary matrix $u$. Applying \cite{Weaver19}*{Proposition 3.4}, we find that $u$ is a scalar and, therefore, that $F = K_\infty$. Thus, $K_\infty$ is not only the least upper bound of the sequence $K_1 \below K_2 \below \cdots$, it is the \emph{only} upper bound of this sequence.
\end{example}

\subsection{Quantum cpos} We now define quantum cpos and the Scott continuous functions between them. 

\begin{definition}[Quantum cpo]\label{def:quantum cpo}
A quantum poset $(\X,R)$ is called a \emph{quantum cpo} if, for each atomic quantum set $\H$, every monotonically ascending sequence $K_1\sqsubseteq K_2\sqsubseteq\cdots:\H\to\X$ has a limit $K_\infty:\H\to\X$, i.e., $K_n\nearrow K_\infty$. 
\end{definition}

\begin{example}\label{ex:trivially ordered set is qcpo}
Let $\X$ be a quantum set equipped with the trivial order $I_\X$. Then $(\X,I_\X)$ is a quantum cpo. This can be seen as follows: Let $K_1\sqsubseteq K_2\sqsubseteq\cdots:\H\to\X$ be a monotonically ascending sequence of functions for some atomic quantum set $\H$. For $n\leq m$, we have $K_n\sqsubseteq K_m$, hence $K_m=I_\X\circ K_m\leq I_\X\circ K_n=K_n.$ It now follows from \cite[Lemma A.7]{KLM20} that $K_n=K_m$, so the sequence is constant. Taking $K_\infty=K_1$ gives $\bigwedge_{n\in\NN}I_\X\circ K_n=\bigwedge_{n\in\NN}K_n=K_1=K_\infty=I_\X\circ K_\infty,$
which shows that indeed $K_n\nearrow K_\infty$.
\end{example}

The next result shows that we can replace $\H$ by any quantum set $\W$ in Definition \ref{def:quantum cpo}, as one would expect, since the atomic quantum sets generate $\qSet$.

\begin{proposition}\label{prop:suprema of sequences of functions}
Let $(\X,R)$ be a quantum cpo, and let $\W$ be a quantum set. For every monotonically ascending sequence of functions $F_1\sqsubseteq F_2\sqsubseteq\cdots:\W\to\X$ there is a function $F_\infty:\W\to\X$ such that $F_n\nearrow F_\infty$.
\end{proposition}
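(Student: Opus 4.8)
The plan is to reduce to the atomic case, which is exactly what Definition \ref{def:quantum cpo} provides, by working one atom of $\W$ at a time and then reassembling. For each atom $W\atomof\W$, write $\Q\{W\}$ for the atomic quantum set whose unique atom is $W$, and let $J_W\colon\Q\{W\}\to\W$ be the canonical injection. Precomposition with $J_W$ sends a function $F\colon\W\to\X$ to the function $F\circ J_W\colon\Q\{W\}\to\X$, whose single row of components is $(F\circ J_W)(W,X)=F(W,X)$ for $X\atomof\X$. Since composition in the quantaloid $\qRel$ preserves $\leq$ in each argument, and $F\sqsubseteq G$ unpacks (via the first clause of the list in Section \ref{sub:quantum posets}) to $G\leq R\circ F$, precomposition with $J_W$ also preserves $\sqsubseteq$; hence $F_1\circ J_W\sqsubseteq F_2\circ J_W\sqsubseteq\cdots$ is a monotonically ascending sequence in $\qSet(\Q\{W\},\X)$.

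Next I would invoke the hypothesis that $(\X,R)$ is a quantum cpo: for each $W$, the sequence $(F_n\circ J_W)_{n\in\NN}$ has a limit $G_W\colon\Q\{W\}\to\X$, meaning $R\circ G_W=\bigwedge_{n\in\NN}R\circ(F_n\circ J_W)$. I then define $F_\infty\colon\W\to\X$ componentwise by $F_\infty(W,X):=G_W(W,X)$, equivalently by requiring $F_\infty\circ J_W=G_W$ for every atom $W$ (which makes sense since $\W$ is the coproduct of its atoms). That $F_\infty$ is a function of quantum sets is a direct componentwise computation: the component $(F_\infty\circ F_\infty^\dag)(X,X')=\bigvee_{W\atomof\W}G_W(W,X')\cdot G_W(W,X)^\dag$ is a join over $W$ of components of the relations $G_W\circ G_W^\dag\leq I_\X$, so it is $\leq I_\X$; and $(F_\infty^\dag\circ F_\infty)(W,W)=\bigvee_{X\atomof\X}G_W(W,X)^\dag\cdot G_W(W,X)$ is the $(W,W)$ component of $G_W^\dag\circ G_W\geq I_{\Q\{W\}}$, so it is $\geq\CC1_W$.

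Finally, to see $F_n\nearrow F_\infty$, I would compare $R\circ F_\infty$ with $\bigwedge_{n\in\NN}R\circ F_n$ componentwise, using that meets in $\qRel$ are taken componentwise (intersections of operator subspaces) and that $(R\circ H)(W,X)=\bigvee_{X'\atomof\X}R(X',X)\cdot H(W,X')$ depends only on the $W$-row of $H$. This yields $(R\circ F_\infty)(W,X)=(R\circ G_W)(W,X)$ and $\bigl(\bigwedge_n R\circ F_n\bigr)(W,X)=\bigwedge_n(R\circ(F_n\circ J_W))(W,X)=(R\circ G_W)(W,X)$, so the two relations agree on every component, giving $R\circ F_\infty=\bigwedge_{n\in\NN}R\circ F_n$. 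There is no genuine obstacle here beyond the bookkeeping; the only point that needs care is confirming that the operations ``restrict to an atom'', ``take the atomic limit'', and ``reassemble'' are all compatible with the definition of a function and with the order $\sqsubseteq$, which is precisely what the componentwise formulas above verify.
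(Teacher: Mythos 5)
Your proof is correct and follows essentially the same route as the paper's: restrict to each atom $W\atomof\W$ via the canonical injection $J_W$, apply the quantum cpo property to the atomic sequences $F_n\circ J_W$, reassemble the atomic limits into $F_\infty$ over the coproduct $\W=\biguplus_W\Q\{W\}$, and verify $R\circ F_\infty=\bigwedge_n R\circ F_n$ componentwise. The only difference is cosmetic: you verify explicitly that $F_\infty$ is a function, where the paper gets this for free from the coproduct in $\qSet$.
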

\begin{proof}
For each $W\atomof\W$, let $J_W:\Q\{W\}\to\W$ be the canonical inclusion. Let $n,m \in \NN$ such that $n \leq m$. Then $F_n\sqsubseteq F_m$, so $F_n\circ J_W\sqsubseteq F_n\circ J_W$.
Hence $F_1\circ J_W\sqsubseteq F_2\circ J_W\sqsubseteq\cdots:\Q\{W\}\to\X$
is a monotonically ascending sequence of functions, and since $\Q\{W\}$ is atomic, it follows that there is a function $F_W:\Q\{W\}\to\X$ such that $F_n\circ J_W\nearrow F_W$, i.e.,
$R\circ F_W=\bigwedge_{n\in\NN}R\circ F_n\circ J_W.$

We obtain a function $F_\infty=[F_W:W\atomof\W]: \W = \biguplus_{W\atomof\W}\Q\{W\}\to\X$, which satisfies $F_\infty(W,X)=F_W(W,X)$ for each $W\atomof\W$ and each $X\atomof\X$. We then find that
\begin{align*} (R\circ F_\infty)(W,X) &  = \bigvee_{X'\atomof\X}R(X',X)\cdot F_\infty(W,X')=\bigvee_{X'\atomof\X} R(X',X)\cdot F_W(W,X')\\
& =(R\circ F_W)(W,X)=\bigwedge_{n\in\NN}(R\circ F_n\circ J_W)(W,X)=\bigwedge_{n\in\NN} (R\circ F_n)(W,X).
\end{align*}
Therefore, $R\circ F_\infty=\bigwedge_{n\in\NN}R\circ F_n$, that is, $F_n\nearrow F_\infty$.
\end{proof}

\begin{definition}[Scott-continuous function]
Let $(\X,R)$ and $(\Y,S)$ be quantum posets. A monotone function $F:(\X,S)\to(\Y,S)$ is called \emph{Scott continuous} if for each atomic quantum set $\H$ and each monotone sequence of functions
$K_1\sqsubseteq K_2\sqsubseteq\cdots\sqsubseteq K_\infty:\H\to \X$
satisfying $K_n\nearrow K_\infty$, we have $F\circ K_n\nearrow F\circ K_\infty$.
\end{definition}
Since left multiplication by the monotone function $F$ is monotone \cite{KLM20}*{Lemma 4.4}, it follows that the functions $F\circ K_n$ form a monotonically increasing sequence.

\begin{example}\label{ex:function with trivially ordered domain is Scott continuous}
Let $\X$ be a quantum set, equipped with the trivial order $I_\X$, and let $(\Y,S)$ be a quantum poset. Then any function $F:\X\to\Y$ is Scott continuous: Indeed, by Example \ref{ex:trivially ordered set is qcpo}, $(\X,R)$ is a quantum cpo, since any sequence $K_1\sqsubseteq K_2\sqsubseteq\cdots:\H\to\X$ is constant with limit is $K_1$. Hence $\bigwedge_{n\in\NN}S\circ F\circ K_n=S\circ F\circ K_1,$
which expresses that $F\circ K_n\nearrow F\circ K_1$.
\end{example}

\begin{lemma}\label{lem:order iso is Scott continuous}
Let $(\X,R)$ and $(\Y,S)$ be quantum posets, and let $F:\X\to\Y$ be an order isomorphism. Then $F$ is Scott continuous.
\end{lemma}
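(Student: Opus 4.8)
The plan is to transport the defining equation of $K_n\nearrow K_\infty$, namely $R\circ K_\infty=\bigwedge_{n\in\NN}R\circ K_n$, across $F$, exploiting that an order isomorphism is in particular invertible in $\qRel$. First I would record the relevant identities. Since $F$ is an order isomorphism in the sense of Section~\ref{sub:quantum posets}, it is a bijective function, so $F^\dag$ is its two-sided inverse: $F^\dag\circ F=I_\X$ and $F\circ F^\dag=I_\Y$. As $F$ is also an order embedding, $R=F^\dag\circ S\circ F$, and composing on the left with $F$ (using $F\circ F^\dag=I_\Y$) yields the intertwining identity $F\circ R=S\circ F$. Moreover, left composition with the monotone function $F$ is monotone for $\sqsubseteq$ by \cite{KLM20}*{Lemma~4.4}, so $F\circ K_1\sqsubseteq F\circ K_2\sqsubseteq\cdots$ is again a monotonically ascending sequence and the conclusion $F\circ K_n\nearrow F\circ K_\infty$ is meaningful.

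Next I would upgrade this monotonicity to an order isomorphism of homsets. Left composition with $F$ is a monotone map $\qRel(\H,\X)\to\qRel(\H,\Y)$, and left composition with $F^\dag$ is a monotone map in the reverse direction; since $F^\dag\circ F=I_\X$ and $F\circ F^\dag=I_\Y$, these two maps are mutually inverse, hence constitute an order isomorphism between the complete lattices $\qRel(\H,\X)$ and $\qRel(\H,\Y)$. An order isomorphism of posets preserves every infimum that exists, so in particular
\[ F\circ\Bigl(\bigwedge_{n\in\NN}(R\circ K_n)\Bigr)=\bigwedge_{n\in\NN}\bigl(F\circ R\circ K_n\bigr). \]

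Finally I would combine the pieces. Using the hypothesis $R\circ K_\infty=\bigwedge_{n\in\NN}R\circ K_n$, the intertwining identity $F\circ R=S\circ F$, and the displayed equation,
\[ S\circ(F\circ K_\infty)=F\circ R\circ K_\infty=F\circ\Bigl(\bigwedge_{n\in\NN}R\circ K_n\Bigr)=\bigwedge_{n\in\NN}\bigl(F\circ R\circ K_n\bigr)=\bigwedge_{n\in\NN}S\circ(F\circ K_n), \]
which is precisely the statement $F\circ K_n\nearrow F\circ K_\infty$; hence $F$ is Scott continuous.

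I do not expect a genuine obstacle here. The one point that requires care is the middle step: one must not appeal to the general quantaloid fact ``composition preserves suprema'', which does \emph{not} entail preservation of infima; the preservation of infima used above is a consequence specifically of the \emph{invertibility} of $F$. A secondary bookkeeping point is to make sure ``order isomorphism'' is read in the sense of Section~\ref{sub:quantum posets}, so that $F^\dag$ is literally the categorical inverse of $F$ and the identities $F^\dag\circ F=I_\X$, $F\circ F^\dag=I_\Y$ are at hand.
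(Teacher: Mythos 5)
Your proof is correct and follows essentially the same route as the paper: both transport the defining infimum equation across $F$ using $R=F^\dag\circ S\circ F$ together with $F\circ F^\dag=I_\Y$. The only cosmetic difference is that the paper justifies pulling the composition inside the infimum by citing a lemma that composition with the dagger of a function distributes over meets (plus surjectivity), whereas you justify it by observing that left composition with an invertible morphism is an order isomorphism of the homset lattices; both are valid.
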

\begin{proof}
Let $\H$ be an atomic quantum set, and let $K_1\sqsubseteq K_2\sqsubseteq\cdots\subseteq K_\infty:\H\to\X$ satisfy $K_n\nearrow K_\infty$, i.e.,
$\bigwedge_{n\in\NN}R\circ K_n=R\circ K_\infty.$
Since $F$ is a surjective order embedding by \cite[Proposition~2.7]{KLM20}, we have $R=F^\dag\circ S\circ F$. Using the surjectivity of $F$ and \cite[Proposition~A.6]{KLM20}, we then find that
\begin{align*}  \bigwedge_{n\in\NN}S\circ F\circ K_n & =F\circ F^\dag\circ\bigwedge_{n\in\NN}S\circ F\circ K_n=F\circ\bigwedge_{n\in\NN}F^\dag\circ S\circ F\circ K_n\\
& =F\circ \bigwedge_{n\in\NN}R\circ K_n=F\circ R\circ K_\infty=F\circ F^\dag\circ S\circ F\circ K_\infty=S\circ F\circ K_\infty.
\end{align*}
Therefore, $F\circ K_n\nearrow F\circ K_\infty$.
\end{proof}

\begin{proposition}\label{prop:finite qposet is qcpo}
Let $\X$ be a finite quantum set, and let $R$ be an order on $\X$. Then, $(\X,R)$ is a quantum cpo. Moreover, if $(\Y,S)$ is a quantum poset and $F:\X\to\Y$ is monotone, then $F$ is Scott continuous.  
\end{proposition}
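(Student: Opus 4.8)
The plan is to reduce both claims to a single finiteness observation: on a finite quantum poset, every monotonically ascending sequence of functions from an atomic quantum set is eventually constant.

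To establish this, fix an atomic quantum set $\H$ with atom $H$ and a monotonically ascending sequence $K_1 \sqsubseteq K_2 \sqsubseteq \cdots : \H \to \X$. By the list of equivalent reformulations of $\sqsubseteq$ in Section \ref{sub:quantum posets}, the relation $K_n \sqsubseteq K_m$ amounts to $R \circ K_m \leq R \circ K_n$, so $R \circ K_1 \geq R \circ K_2 \geq \cdots$ is a descending chain in $\qRel(\H, \X)$. Because $\H$ is atomic and $\X$ is finite, this homset is, as a lattice, the finite product over the atoms $X \atomof \X$ of the lattices of linear subspaces of the finite-dimensional spaces $L(H, X)$, ordered componentwise; each factor has finite length, hence so does the product, and in particular it satisfies the descending chain condition. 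Thus there is an $N$ with $R \circ K_n = R \circ K_N$ for all $n \geq N$. For such $n$, the equality $R \circ K_n = R \circ K_N$ gives both $K_n \sqsubseteq K_N$ and $K_N \sqsubseteq K_n$ (again via the reformulations of $\sqsubseteq$), and since $(\qSet(\H, \X), \sqsubseteq)$ is an honest poset, antisymmetry forces $K_n = K_N$. So the sequence is constant beyond $N$.

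The first claim is now immediate: $\bigwedge_{n \in \NN} R \circ K_n = R \circ K_N$, so setting $K_\infty := K_N$ gives $R \circ K_\infty = \bigwedge_{n \in \NN} R \circ K_n$, i.e. $K_n \nearrow K_N$; hence $(\X, R)$ is a quantum cpo. For the second claim, let $(\Y, S)$ be a quantum poset and $F : \X \to \Y$ monotone, and suppose $K_n \nearrow K_\infty$ for some $K_\infty : \H \to \X$. Since also $K_n \nearrow K_N$ and limits are unique by Lemma \ref{lem:lim is sup}, we have $K_\infty = K_N$, so $F \circ K_n = F \circ K_\infty$ for all $n \geq N$. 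The sequence $F \circ K_1 \sqsubseteq F \circ K_2 \sqsubseteq \cdots$ is ascending by \cite{KLM20}*{Lemma 4.4}, so $S \circ F \circ K_\infty \leq S \circ F \circ K_n$ for every $n$, with equality once $n \geq N$; therefore $\bigwedge_{n \in \NN} S \circ F \circ K_n = S \circ F \circ K_\infty$, which says exactly that $F \circ K_n \nearrow F \circ K_\infty$, i.e. $F$ is Scott continuous.

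The only substantive ingredient is the descending chain condition argument in the second paragraph, which rests on the finiteness of $\X$ together with the finite-dimensionality of the Hilbert spaces involved; I do not expect it to pose real difficulties. The one point requiring slight care is the inference $R \circ K_n = R \circ K_N \Rightarrow K_n = K_N$, which uses antisymmetry of the pointwise order on $\qSet(\H, \X)$ rather than any antisymmetry property of $R$ itself.
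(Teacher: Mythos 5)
Your proof is correct and follows essentially the same route as the paper: both arguments rest on the observation that the descending chain $R\circ K_1\geq R\circ K_2\geq\cdots$ stabilizes because $\X$ has finitely many atoms and each $L(H,X)$ is finite-dimensional, after which the sequence $K_n$ is eventually constant and both claims follow by inspection. Your explicit appeal to antisymmetry of the pointwise order on $\qSet(\H,\X)$ to pass from $R\circ K_n=R\circ K_N$ to $K_n=K_N$ is a welcome clarification of a step the paper leaves implicit.
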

\begin{proof}
Let $\H$ be an atomic quantum set and let $K_1\sqsubseteq K_2\sqsubseteq K_3\sqsubseteq\cdots$ be a monotonically ascending sequence in $\qSet(\H,\X)$. By definition of the order on $\qSet(\H,\X)$, 
$R\circ K_1\geq R\circ K_2\geq R\circ K_3\geq\cdots$ is a non-ascending sequence in $\qRel(\H,\X)$. Fix $X\in\X$, and let $H$ be the unique atom of $\H$. Then, 
$ (R\circ K_1)(H,X)\geq (R\circ K_2)(H,X)\geq\cdots$ is a non-ascending sequence in $L(H,X)$, which is finite-dimensional. Thus, there is some $n_X\in\NN$ such that $(R\circ K_n)(H,X)=(R\circ K_{n_X})(H,X)$ for all $n\geq n_X$. Since $\X$ is finite, we know that $\{n_X:X\atomof\X\}$ has a maximum element, which we call $m$. Let $K_\infty=K_m$. Then, for each $X\atomof\X$ and each $n\geq m$, we have $(R\circ K_n)(H,X)=(R\circ K_\infty)(H,X)$, and therefore, $R\circ K_n=R\circ K_\infty$. We conclude that $ \bigwedge_{n\in\NN}R\circ K_n=R\circ K_\infty$, i.e., that $K_n\nearrow K_\infty$. Therefore, $(\X,R)$ is a quantum cpo.

Now let $(\Y,S)$ be a quantum poset, and let $F:\X\to\Y$ be monotone. Let $K_1\sqsubseteq K_2\sqsubseteq\cdots\subseteq K_\infty:\H\to\X$ satisfy $K_n\nearrow K_\infty$. We have already shown that there is some $m\in\NN$ such that $K_n=K_m$ for each $n\geq m$ and that $K_\infty=K_m$. Since $F$ is monotone, left multiplication with $F$ is monotone \cite{KLM20}*{Lemma 4.4}, hence $F\circ K_1\sqsubseteq F\circ K_2\sqsubseteq \cdots\sqsubseteq F\circ K_m=F\circ K_{m+1}=\cdots = F\circ K_\infty:\H\to\X$ is a monotonically ascending sequence of functions that stabilizes at $m$, hence
 \[S\circ F\circ K_1\geq S\circ F\circ K_2\geq\cdots\geq  S\circ F\circ K_m=S\circ F\circ K_{m+1}=\cdots=S\circ F\circ K_\infty.\]
Thus $\bigwedge_{n\in\NN} S\circ F\circ K_n=S\circ F\circ K_m=S\circ F\circ K_\infty,$
so $F\circ K_n\nearrow F\circ K_\infty$. Therefore, $F$ is Scott continuous.
\end{proof}

\begin{lemma}\label{lem:composition is Scott continuous}
Let $(\X,R)$, $(\Y,S)$ and $(\Z,T)$ be quantum posets, and let $F:\X\to\Y$ and $G:\Y\to\Z$ be Scott continuous. Then $G\circ F$ is Scott continuous.
\end{lemma}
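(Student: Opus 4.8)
The plan is to unwind the definition of Scott continuity and simply chain the two hypotheses, after first checking that $G\circ F$ is monotone (which is needed even to ask whether it is Scott continuous). Since $F$ and $G$ are monotone, $F\circ R\leq S\circ F$ and $G\circ S\leq T\circ G$; because composition in $\qRel$ is monotone in each argument, we get
\[
(G\circ F)\circ R = G\circ(F\circ R)\leq G\circ(S\circ F) = (G\circ S)\circ F\leq (T\circ G)\circ F = T\circ(G\circ F),
\]
so $G\circ F$ is monotone.

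Next I would fix an atomic quantum set $\H$ and a monotonically ascending sequence $K_1\sqsubseteq K_2\sqsubseteq\cdots:\H\to\X$ with $K_n\nearrow K_\infty$ for some function $K_\infty:\H\to\X$. Since $F$ is Scott continuous, $F\circ K_n\nearrow F\circ K_\infty$; in particular (using that left multiplication by the monotone function $F$ is monotone), $F\circ K_1\sqsubseteq F\circ K_2\sqsubseteq\cdots:\H\to\Y$ is a monotonically ascending sequence of functions whose limit is $F\circ K_\infty$.

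Now I would apply Scott continuity of $G$ to this new sequence and its limit. This is legitimate because the intermediate sequence $F\circ K_n$ is still indexed by the \emph{atomic} quantum set $\H$, so the definition of Scott continuity of $G$ applies verbatim: we obtain $G\circ(F\circ K_n)\nearrow G\circ(F\circ K_\infty)$. By associativity of composition this reads $(G\circ F)\circ K_n\nearrow (G\circ F)\circ K_\infty$, which is exactly the required condition, so $G\circ F$ is Scott continuous. There is no real obstacle in this argument; the only points that require any attention are the verification that $G\circ F$ is monotone and the observation that the intermediate sequence remains indexed by an atomic quantum set.
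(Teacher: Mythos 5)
Your proof is correct and follows essentially the same route as the paper: verify monotonicity of $G\circ F$ from monotonicity of $F$ and $G$, then apply Scott continuity of $F$ and then of $G$ to the sequence $K_n\nearrow K_\infty$, noting that the intermediate sequence $F\circ K_n$ is still indexed by the atomic quantum set $\H$. Nothing further is needed.
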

\begin{proof}
As the composition of monotone functions, $G\circ F$ is monotone. Let $\H$ be an atomic quantum set and let $K_1\sqsubseteq K_2\sqsubseteq\cdots\subseteq K_\infty:\H\to\X$ satisfy $K_n\nearrow K_\infty$. Because $F$ is Scott continuous, $F \circ K_n \nearrow F \circ K_\infty$, and because $G$ is Scott continuous, $G \circ F \circ K_n \nearrow G \circ F \circ K_\infty$. Therefore, $G \circ F$ is Scott continuous.
\end{proof}

Let $\X$ be a quantum poset, let $\H$ be an atomic quantum set, and let $K_1\sqsubseteq K_2\sqsubseteq\cdots:\H\to\X$ be a monotonically ascending sequence with limit $K_\infty$. Then
\[ \bigwedge_{n\in\NN}R\circ I_\X\circ K_n=\bigwedge_{n\in\NN}R\circ K_n=R\circ K_\infty=R\circ I_\X\circ K_\infty,\]
and hence $I_\X:\X\to\X$ is Scott continuous. Together with the previous lemma, this shows that quantum posets and Scott continuous maps form a subcategory $\qPOS_{\mathrm{SC}}$ of $\qPOS$. By restricting the class of objects to quantum cpos we obtain the category $\cat{qCPO}$.

\begin{definition}[$\qCPO$]
We define $\qCPO$ to be the category of quantum cpos and Scott continuous functions.
\end{definition}

\subsection{Enrichment over $\CPO$}

We show that for  quantum cpos $(\X,R)$ and $(\Y,S)$, the subset $\cat{qCPO}((\X,R), (\Y,S)) \subsetof \cat{qPOS}((\X,R), (\Y,S))$ is a cpo in the induced order. Thus, $\cat{qCPO}$ is enriched over $\cat{CPO}$.

\begin{proposition}\label{prop:limit of functions is Scott continuous}
Let $(\X,R)$ be a quantum poset, and let $(\Y,S)$ be a quantum cpo. Let $F_1\sqsubseteq F_2\sqsubseteq\cdots\sqsubseteq F_\infty:\X\to\Y$ satisfy $F_n\nearrow F_\infty$. If each $F_n$ is monotone, then so is $F_\infty$, and if each $F_n$ is Scott continuous, so is $F_\infty$.
\end{proposition}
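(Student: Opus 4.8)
The plan is to treat the two assertions in turn; both reduce to manipulations of meets in the complete lattices $\qRel(-,-)$ together with the equivalent descriptions of the order $\sqsubseteq$ recorded in Section~\ref{sub:quantum posets}.

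For monotonicity of $F_\infty$, I want $F_\infty\circ R\leq S\circ F_\infty$. Since $F_n\nearrow F_\infty$ gives $S\circ F_\infty=\bigwedge_{n\in\NN}S\circ F_n$, it suffices to show $F_\infty\circ R\leq S\circ F_n$ for every $n$. From $F_n\sqsubseteq F_\infty$, one of the equivalent conditions of Section~\ref{sub:quantum posets} yields $F_\infty\leq S\circ F_n$, hence $F_\infty\circ R\leq S\circ F_n\circ R$ as composition is monotone; monotonicity of $F_n$ gives $F_n\circ R\leq S\circ F_n$, so $S\circ F_n\circ R\leq S\circ S\circ F_n\leq S\circ F_n$ by transitivity of $S$. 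Concatenating these inequalities, $F_\infty\circ R\leq S\circ F_n$ for all $n$, hence $F_\infty\circ R\leq\bigwedge_n S\circ F_n=S\circ F_\infty$.

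Now suppose in addition that each $F_n$ is Scott continuous; then $F_\infty$ is monotone by the part just proved, so by \cite{KLM20}*{Lemma~4.4} the sequence $F_\infty\circ K_1\sqsubseteq F_\infty\circ K_2\sqsubseteq\cdots$ is ascending for any monotone sequence $K_1\sqsubseteq K_2\sqsubseteq\cdots\colon\H\to\X$. Fix such a sequence with $K_m\nearrow K_\infty$; it remains to prove $S\circ F_\infty\circ K_\infty=\bigwedge_{m\in\NN}S\circ F_\infty\circ K_m$. The key input is that composition with a function distributes over meets \cite{KLM20}*{Proposition~A.6}, as already exploited in the proof of Lemma~\ref{lem:order iso is Scott continuous}. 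Applying it with the function $K_\infty$ to $S\circ F_\infty=\bigwedge_n S\circ F_n$ gives $S\circ F_\infty\circ K_\infty=\bigwedge_n S\circ F_n\circ K_\infty$, and Scott continuity of each $F_n$ rewrites this as $\bigwedge_n\bigwedge_m S\circ F_n\circ K_m$. Applying the same distributivity with each function $K_m$ gives $\bigwedge_m S\circ F_\infty\circ K_m=\bigwedge_m\bigwedge_n S\circ F_n\circ K_m$. Since iterated meets in the complete lattice $\qRel(\H,\Y)$ agree with the single meet over $\NN\times\NN$, the two double meets coincide, yielding $S\circ F_\infty\circ K_\infty=\bigwedge_m S\circ F_\infty\circ K_m$, i.e.\ $F_\infty\circ K_m\nearrow F_\infty\circ K_\infty$, so $F_\infty$ is Scott continuous.

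I expect the commutation of $(\bigwedge_n S\circ F_n)\circ K$ with the meet to be the only delicate point: composition in a quantaloid preserves joins but not meets in general, so this step genuinely needs $K$ to be a function — which is precisely why Definition~\ref{def:limit} and the definition of Scott continuity are phrased in terms of functions rather than arbitrary binary relations — and the proof must invoke \cite{KLM20}*{Proposition~A.6} at both of its occurrences. Everything else is routine bookkeeping with the lattice structure of homsets, transitivity of $S$, and the equivalent forms of $\sqsubseteq$.
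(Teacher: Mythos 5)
Your proof is correct and follows essentially the same route as the paper's: monotonicity via the transitivity of $S$ and the characterization $F_\infty \leq S\circ F_n$, and Scott continuity via the interchange of the two meets $\bigwedge_n\bigwedge_m S\circ F_n\circ K_m$, invoking \cite{KLM20}*{Proposition~A.6} at exactly the two places where a meet must be pushed past composition with a function. Your closing remark correctly identifies the only delicate point, and the minor presentational difference in the monotonicity argument (bounding $F_\infty\circ R$ by each $S\circ F_n$ separately rather than pushing $\circ R$ inside the meet) is immaterial.
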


\begin{proof}
By the definition of limit (Definition \ref{def:limit}),
\begin{equation}\label{eq:limit of functions}
S\circ F_\infty=\bigwedge_{n\in\NN} S\circ F_n.
\end{equation}
Assume that each $F_n$ is monotone, i.e., that $S\circ F_n\geq F_n\circ R$. Then
\begin{align*}
    S\circ F_\infty & =\bigwedge_{n\in\NN}S\circ S\circ F_n\geq \bigwedge_{n\in\NN}S\circ F_n\circ R\geq \left(\bigwedge_{n\in\NN}S\circ F_n\right)\circ R=S\circ F_\infty\circ R\geq F_\infty\circ R,
\end{align*}
where the first equality follows by combining Equation (\ref{eq:limit of functions}) with the definition of an order; that
definition is also used in the last inequality. We use \cite[Lemma~A.1]{KLM20} in the second inequality.

Now, assume that each $F_n$ is Scott continuous. Let $\H$ be an atomic quantum set, and let $K_1\sqsubseteq K_2\sqsubseteq\cdots:\H\to\X$ be a monotonically ascending sequence of functions with limit $K_\infty$. Then, for each $n \in\NN$, we have that $F_n\circ K_m\nearrow F_n\circ K_\infty$, i.e., 
\begin{equation}\label{eq:limit of functions 2}
 \bigwedge_{m\in\NN}S\circ F_n\circ K_m=S\circ F_n\circ K_\infty.
 \end{equation}
Using \cite[Proposition~A.6]{KLM20} implicitly, we find that
$S\circ F_\infty\circ K_\infty  = \bigwedge_{n\in\NN}S\circ F_n\circ K_\infty = \bigwedge_{n,m\in\NN} S\circ F_n\circ K_m =\bigwedge_{m\in\NN}S\circ F_\infty\circ K_m,$
where we use Equation (\ref{eq:limit of functions}) in the first and last equalities and Equation (\ref{eq:limit of functions 2}) in the middle equality.
\end{proof}

\begin{corollary}\label{cor:hom sets are cpos}
Let $(\X,R)$ and $(\Y,S)$ be quantum cpos. Then, $\qCPO(\X,\Y)$ is a cpo. 
\end{corollary}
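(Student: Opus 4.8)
The plan is to verify directly that every monotonically ascending sequence in the ordinary poset $\qCPO(\X,\Y)$ has a supremum in $\qCPO(\X,\Y)$. Let $F_1 \sqsubseteq F_2 \sqsubseteq \cdots$ be such a sequence of Scott continuous functions. First I would observe that each $F_n$ is in particular a function $\X \to \Y$ between quantum sets and hence, by Proposition \ref{prop:suprema of sequences of functions} applied to the quantum cpo $(\Y,S)$ with indexing quantum set $\W = \X$, there exists a function $F_\infty : \X \to \Y$ with $F_n \nearrow F_\infty$. By Lemma \ref{lem:lim is sup}, $F_\infty$ is the supremum $\bigsqcup_{n\in\NN} F_n$ of the sequence in the ordinary poset $\qSet(\X,\Y)$.

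Next I would invoke Proposition \ref{prop:limit of functions is Scott continuous}: since each $F_n$ is Scott continuous (in particular monotone) and $F_n \nearrow F_\infty$, the limit $F_\infty$ is again monotone and Scott continuous. Hence $F_\infty \in \qCPO(\X,\Y)$. It remains to check that $F_\infty$ is the supremum of the $F_n$ \emph{within} the sub-poset $\qCPO(\X,\Y)$, not merely within $\qSet(\X,\Y)$; but this is immediate, because $\qCPO(\X,\Y) \subseteq \qPOS(\X,\Y) \subseteq \qSet(\X,\Y)$ all carry the same induced order $\sqsubseteq$, and a supremum in the larger poset that happens to lie in the smaller one is automatically a supremum there as well. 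Thus every ascending sequence in $\qCPO(\X,\Y)$ has a supremum in $\qCPO(\X,\Y)$, i.e., $\qCPO(\X,\Y)$ is a cpo.

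There is essentially no obstacle here: the corollary is a straightforward packaging of the two preceding results, Proposition \ref{prop:suprema of sequences of functions} (which supplies the pointwise limit as an honest function) and Proposition \ref{prop:limit of functions is Scott continuous} (which shows this limit stays in the subcategory). The only point requiring a word of care is the remark that suprema are computed in $\qSet(\X,\Y)$ and inherited by the full subposet $\qCPO(\X,\Y)$; once that is noted, the proof is complete. (One should also recall that $(\qSet(\X,\Y),\sqsubseteq)$ is an ordinary poset, as stated in Section \ref{sub:quantum posets}, so that ``cpo'' makes sense here.)
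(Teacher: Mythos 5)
Your proof is correct and follows essentially the same route as the paper's: apply Proposition \ref{prop:suprema of sequences of functions} to obtain the limit $F_\infty$, Proposition \ref{prop:limit of functions is Scott continuous} to see that $F_\infty$ is Scott continuous, and Lemma \ref{lem:lim is sup} to identify it as the supremum. Your extra remark that a supremum computed in the ambient poset $\qSet(\X,\Y)$ restricts to one in the subposet $\qCPO(\X,\Y)$ is a correct and welcome clarification of a point the paper leaves implicit.
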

\begin{proof}
Let $F_1\sqsubseteq F_2\sqsubseteq\cdots$ be a monotone sequence in $\qCPO(\X,\Y)$. Then $F_n\nearrow F_\infty$ for some $F_\infty:\X\to\Y$ by Proposition \ref{prop:suprema of sequences of functions}. By Proposition \ref{prop:limit of functions is Scott continuous}, we find that $F_n\in \qCPO(\X,\Y)$. By Lemma \ref{lem:lim is sup}, it follows that $F_\infty=\bigsqcup_{n\in\NN}F_n$ in $\qCPO(\X,\Y)$.
\end{proof}

\begin{lemma}\label{lem:right multiplication is Scott continuous}
Let $\X$ and $\Y$ be quantum sets, and let $(\Z,T)$ be a quantum cpo. Let $F\: \X \To \Y$ be a function. For any monotonically increasing sequence of functions $G_1 \below G_2 \below \cdots\sqsubseteq G_\infty\: \Y \to \Z$ satisfying $G_n\nearrow G_\infty$, we have $G_n \circ F \nearrow G_\infty \circ F$.
\end{lemma}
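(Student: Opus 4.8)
The plan is to verify the defining equation of a limit directly. We must show that $T \circ (G_\infty \circ F) = \bigwedge_{n \in \NN} T \circ (G_n \circ F)$. Since $\circ$ is associative, the left-hand side is $(T \circ G_\infty) \circ F$, and using the hypothesis $G_n \nearrow G_\infty$, i.e.\ $T \circ G_\infty = \bigwedge_{n\in\NN} T \circ G_n$, this becomes $\left(\bigwedge_{n\in\NN} T \circ G_n\right) \circ F$. So the whole statement reduces to the claim that precomposition with the fixed function $F$ commutes with the relevant countable meet: $\left(\bigwedge_{n\in\NN} T \circ G_n\right) \circ F = \bigwedge_{n\in\NN} (T \circ G_n) \circ F$.

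The key point is that $F$ is a \emph{function}, not an arbitrary binary relation, so right composition with $F$ in $\qRel$ is not merely suprema-preserving (which it is automatically, as in any quantaloid) but also preserves these meets. I expect this to follow from a lemma in \cite{KLM20} about functions; the relevant fact is that for a function $F\:\X\to\Y$ and any family of relations $(R_i)_{i}$ from $\Y$ to $\Z$, one has $\bigl(\bigwedge_i R_i\bigr)\circ F = \bigwedge_i (R_i \circ F)$ — this is the dual, via the dagger, of the statement that $F^\dag \circ (-)$ preserves meets, which in turn holds because $F^\dag$ is the converse of a function and composition with such relations behaves well with respect to intersections (cf.\ the reasoning used in Example \ref{ex:trivially ordered set is qcpo} and Lemma \ref{lem:order iso is Scott continuous}, where analogous meet-preservation facts from \cite{KLM20} are invoked, e.g.\ \cite[Proposition~A.6]{KLM20} and \cite[Lemma~A.1]{KLM20}). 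Once that lemma is in hand, the computation is a two-line chain of equalities.

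Concretely, I would write:
\begin{align*}
T \circ (G_\infty \circ F) &= (T \circ G_\infty) \circ F = \Bigl(\bigwedge_{n\in\NN} T \circ G_n\Bigr) \circ F = \bigwedge_{n\in\NN}\bigl((T \circ G_n) \circ F\bigr) = \bigwedge_{n\in\NN} T \circ (G_n \circ F),
\end{align*}
where the second equality is the hypothesis $G_n \nearrow G_\infty$ and the third is meet-preservation of $(-)\circ F$. This exhibits $G_\infty \circ F$ as the limit of the sequence $G_1 \circ F \below G_2 \circ F \below \cdots$ — noting that this is indeed a monotonically ascending sequence, since right multiplication with $F$ is monotone (as $G_n \below G_{n+1}$ gives $T\circ G_{n+1} \le T \circ G_n$, whence $(T\circ G_{n+1})\circ F \le (T\circ G_n)\circ F$, i.e.\ $G_{n+1}\circ F \ge$ — rather $G_n \circ F \below G_{n+1}\circ F$).

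The main obstacle is purely bookkeeping: pinning down the precise statement in \cite{KLM20} that $(-) \circ F$ preserves countable meets when $F$ is a function, and confirming it applies to meets indexed by $\NN$ rather than just binary meets. If no such lemma is stated in the exact form needed, one proves it by hand: write $F^\dag \circ F = I_\X$ and $F \circ F^\dag \le I_\Y$, and for each pair of atoms compute $\bigl(\bigwedge_n R_n \circ F\bigr)(X,Z) = \bigvee_{Y} \bigl(\bigwedge_n R_n(Y,Z)\bigr) \cdot F(X,Y)$ versus $\bigwedge_n \bigvee_Y R_n(Y,Z)\cdot F(X,Y)$, using that the components $F(X,Y)$ of a function are, up to the atomic block structure, determined by partial isometries so that left multiplication by $F(X,Y)$ commutes with the finite-dimensional intersections involved. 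I do not anticipate any genuine difficulty beyond locating or extracting this auxiliary fact.
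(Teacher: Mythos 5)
Your proposal is correct and follows essentially the same route as the paper: the paper likewise first notes that $G_1\circ F\sqsubseteq G_2\circ F\sqsubseteq\cdots$ is monotonically ascending (citing \cite{KLM20}*{Lemma 4.2} for monotonicity of right multiplication) and then computes $\bigwedge_{n\in\NN}T\circ G_n\circ F=\bigl(\bigwedge_{n\in\NN}T\circ G_n\bigr)\circ F=T\circ G_\infty\circ F$, where the meet-preservation of precomposition with the function $F$ is exactly \cite[Proposition~A.6]{KLM20}, the reference you conjectured. The auxiliary fact you were unsure how to pin down is indeed already available in that form, so no extra work is needed.
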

\begin{proof}
Since right multiplication is monotone \cite{KLM20}*{Lemma 4.2}, it follows that 
$G_1\circ F\sqsubseteq G_2\circ F\sqsubseteq\cdots:\X\to\Z$ is a monotonically ascending sequence of functions. Then,
\[  \bigwedge_{n\in\NN}T\circ G_n\circ F=\left(\bigwedge_{n\in\NN}T\circ G_n\right)\circ F=T\circ G_\infty\circ F,\]
where the first equality is by \cite[Proposition~A.6]{KLM20}, and the second equality since $G_n\nearrow G_\infty$. We conclude that $G_n\circ F\nearrow G_\infty\circ F$. 
\end{proof}

\begin{lemma}\label{lem:left multiplication by Scott continuous function is Scott continuous}
Let $\X$ be a quantum set, and let $(\Y,S)$ and $(\Z,T)$ be quantum cpos. Let $G:\Y\to\Z$ be Scott continuous. If $F_1\sqsubseteq F_2\sqsubseteq\cdots\sqsubseteq F_\infty:\X\to\Y$ satisfies $F_n\nearrow F_\infty$, then $G\circ F_n\nearrow G\circ F_\infty$. 
\end{lemma}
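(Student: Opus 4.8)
The plan is to reduce the assertion to the defining property of Scott continuity of $G$, which only speaks about sequences indexed by \emph{atomic} quantum sets, and then to reassemble the resulting limits over the atoms of $\X$. As a preliminary observation, $G$ is monotone, being Scott continuous, so by \cite{KLM20}*{Lemma 4.4} left multiplication by $G$ preserves the order $\sqsubseteq$; hence $G\circ F_1\sqsubseteq G\circ F_2\sqsubseteq\cdots$ is a monotonically ascending sequence of functions $\X\to\Z$, and it remains only to show that $T\circ(G\circ F_\infty)=\bigwedge_{n\in\NN}T\circ(G\circ F_n)$.

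For the atomic step, I would fix an atom $W\atomof\X$ and let $J_W\colon\Q\{W\}\to\X$ be the canonical inclusion. Precomposition with the function $J_W$ distributes over meets by \cite{KLM20}*{Proposition A.6}, so the hypothesis $S\circ F_\infty=\bigwedge_{n\in\NN}S\circ F_n$ yields $S\circ(F_\infty\circ J_W)=\bigwedge_{n\in\NN}S\circ(F_n\circ J_W)$, i.e.\ $F_n\circ J_W\nearrow F_\infty\circ J_W$. (This is the same atomic data used to build the limit in the proof of Proposition \ref{prop:suprema of sequences of functions}.) Since $\Q\{W\}$ is atomic and $G$ is Scott continuous, it follows that $G\circ F_n\circ J_W\nearrow G\circ F_\infty\circ J_W$, that is, $\bigwedge_{n\in\NN}T\circ G\circ F_n\circ J_W=T\circ G\circ F_\infty\circ J_W$.

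It remains to reassemble these identities over the atoms of $\X$. Since $J_W$ has $\CC 1_W$ as its only nonvanishing component, $(P\circ J_W)(W,Z)=P(W,Z)$ for every relation $P\colon\X\to\Z$ and every $Z\atomof\Z$, and meets in $\qRel$ are computed componentwise; hence for all $W\atomof\X$ and $Z\atomof\Z$,
\begin{align*}
\Bigl(\bigwedge_{n\in\NN}T\circ G\circ F_n\Bigr)(W,Z)
&=\bigwedge_{n\in\NN}\bigl(T\circ G\circ F_n\circ J_W\bigr)(W,Z)\\
&=\bigl(T\circ G\circ F_\infty\circ J_W\bigr)(W,Z)
=\bigl(T\circ G\circ F_\infty\bigr)(W,Z).
\end{align*}
Therefore $\bigwedge_{n\in\NN}T\circ G\circ F_n=T\circ G\circ F_\infty$, which is precisely $G\circ F_n\nearrow G\circ F_\infty$. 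The genuinely substantive point is the mismatch of index conventions: limits of sequences of functions $\X\to\Y$ are defined for arbitrary $\X$, whereas Scott continuity supplies information only over atomic domains, so the descent to atoms followed by reassembly is where the real work lies; a secondary subtlety, easy to overlook, is that $F_n\circ J_W\nearrow F_\infty\circ J_W$ does not follow formally from $F_n\nearrow F_\infty$ but requires that precomposition with a function commutes with meets.
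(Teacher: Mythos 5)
Your proof is correct and follows essentially the same route as the paper's: restrict along the inclusion $J_W$ of each atom, invoke Scott continuity of $G$ on the atomic domain, and reassemble componentwise. The only cosmetic difference is that you derive $F_n\circ J_W\nearrow F_\infty\circ J_W$ directly from \cite{KLM20}*{Proposition A.6}, whereas the paper cites Lemma \ref{lem:right multiplication is Scott continuous} for that step — which is proved by exactly the argument you give.
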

\begin{proof}
Fix $X \atomof \X$, and let $J_X:\Q\{X\}\to\X$ be the inclusion function. By Lemma \ref{lem:right multiplication is Scott continuous}, $F_n \circ J_X \nearrow F_\infty \circ J_X$. Since $\Q\{X\}$ is atomic, it follows from the Scott continuity of $G$ that 
$G\circ F_n\circ J_X\nearrow G\circ F_\infty\circ J_X$. We now reason that
\begin{align*}&
\left(\bigwedge_{n\in\NN}T\circ G\circ F_n\right) \circ J_X
=
\bigwedge_{n\in\NN}T\circ G\circ F_n \circ J_X
=
T \circ G \circ F_\infty \circ J_X,
\end{align*}
where the second equality follows by \cite[Proposition~A.6]{KLM20}. We vary $X \atomof \X$ to conclude that $\bigwedge_{n\in\NN}T\circ G\circ F_n = T \circ G \circ F_\infty$, i.e., that $G \circ F_n \nearrow G \circ F_\infty$, as desired.
\end{proof}

\begin{theorem}\label{thm:qCPO enriched over CPO}
The category $\qCPO$ is enriched over $\CPO$.
\end{theorem}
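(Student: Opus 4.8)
The plan is to establish the two defining ingredients of a $\CPO$-enrichment: that every hom-object is a cpo, and that composition is a morphism of cpos, i.e., Scott continuous. The first ingredient is precisely Corollary \ref{cor:hom sets are cpos}, and it is worth recording that, by Lemma \ref{lem:lim is sup}, the supremum of a monotonically ascending sequence $F_1 \sqsubseteq F_2 \sqsubseteq \cdots$ in $\qCPO(\X,\Y)$ is exactly its limit $F_\infty$, so that ``$F_n \nearrow F_\infty$'' and ``$F_\infty = \bigsqcup_{n \in \NN} F_n$ in $\qCPO(\X,\Y)$'' may be used interchangeably. The identity-selecting morphisms $\mathbf 1 \to \qCPO(\X,\X)$ are Scott continuous because $\mathbf 1$ is terminal in $\CPO$, and the associativity and unit coherence diagrams for a $\CPO$-category commute as soon as they commute at the level of underlying sets, which they do because $\qCPO$ is already an ordinary category (as noted just before its definition). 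Hence the entire content of the theorem is the Scott continuity of the composition maps $\circ \colon \qCPO(\Y,\Z) \times \qCPO(\X,\Y) \to \qCPO(\X,\Z)$, where the domain carries the product cpo structure.

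First I would observe that $\circ$ is monotone on this product poset: right and left multiplication are each monotone by \cite{KLM20}*{Lemma 4.2} and \cite{KLM20}*{Lemma 4.4} respectively, so the joint operation is too. Then I would invoke the standard fact that a monotone map $f \colon C \times D \to E$ between cpos is Scott continuous provided it is Scott continuous in each variable separately: given $c_n \nearrow c$ in $C$ and $d_n \nearrow d$ in $D$, monotonicity gives $f(c_n, d_n) \sqsubseteq f(c,d)$ for all $n$, while separate continuity together with the interchange of suprema of a monotone double sequence yields $f(c,d) = \bigsqcup_{n} \bigsqcup_{m} f(c_n, d_m) = \bigsqcup_{k} f(c_k, d_k)$, the last step using $f(c_n, d_m) \sqsubseteq f(c_{\max(n,m)}, d_{\max(n,m)})$. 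Thus $f(c_n,d_n) \nearrow f(c,d)$.

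It remains to verify the two separate continuities, both of which are already available. Fixing a function $F \colon \X \to \Y$ in $\qCPO(\X,\Y)$, the assignment $G \mapsto G \circ F$ maps $\qCPO(\Y,\Z)$ into $\qCPO(\X,\Z)$ by Lemma \ref{lem:composition is Scott continuous}, and it sends the limit of a monotonically ascending sequence to a limit by Lemma \ref{lem:right multiplication is Scott continuous}; by the identification in the first paragraph it is therefore Scott continuous. Fixing a Scott continuous function $G \colon \Y \to \Z$, the assignment $F \mapsto G \circ F$ maps $\qCPO(\X,\Y)$ into $\qCPO(\X,\Z)$ by Lemma \ref{lem:composition is Scott continuous}, and it sends limits to limits by Lemma \ref{lem:left multiplication by Scott continuous function is Scott continuous}; hence it too is Scott continuous. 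Combining these with the second paragraph shows that $\circ$ is Scott continuous, which finishes the proof. I do not expect a genuine obstacle here; the only slightly delicate point is the reduction from separate to joint Scott continuity over the product cpo, which is handled by the $\max$-reindexing above.
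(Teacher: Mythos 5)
Your proposal is correct and follows essentially the same route as the paper's proof: hom-objects are cpos by Corollary \ref{cor:hom sets are cpos}, suprema in these hom-cpos coincide with limits, and Lemmas \ref{lem:right multiplication is Scott continuous} and \ref{lem:left multiplication by Scott continuous function is Scott continuous} give Scott continuity of composition in each variable separately, whence joint Scott continuity. The only difference is that you spell out the separate-to-joint reduction (the $\max$-reindexing) and the identity/coherence remarks explicitly, which the paper leaves implicit.
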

\begin{proof}
Let $(\X,R)$, $(\Y,S)$ and $(\Z,T)$ be quantum cpos. The homsets $\cat{qCPO}(\X,\Y)$ and $\cat{qCPO}(\Y,\Z)$ are cpos by Corollary \ref{cor:hom sets are cpos}. It follows that for functions $F_1 \below F_2 \below \cdots\below F_\infty\in \cat{qCPO}(\X,\Y)$, the condition $F_\infty = \bigsqcup_{n \in \NN} F_n$ is equivalent to the condition $F_n \nearrow F_\infty$, and likewise for functions $G_1 \below G_2 \below \cdots \below G_\infty\in \cat{qCPO}(\Y,\Z)$. We apply Lemmas \ref{lem:right multiplication is Scott continuous} and \ref{lem:left multiplication by Scott continuous function is Scott continuous} to infer that composition $\cat{qCPO}(\X,\Y) \times \cat{qCPO}(\Y,\Z) \To \cat{qCPO}(\X,\Z)$ is Scott continuous in each of its two variables separately. It follows that the iterated limits are equal and they equal the double limit, so we conclude composition is Scott continuous. Therefore $\cat{qCPO}$ is enriched over $\cat{CPO}$. 
\end{proof}

\subsection{Completeness} We next show that the category $\cat{qCPO}$ is complete. For denotational semantics, completeness on its own has no special meaning. Later, we use completeness of $\cat{qCPO}$ to show that $\cat{qCPO}$ is also cocomplete, which in turn will be used to construct a category of quantum cpos that is $\CPO$-algebraically compact, a property that is fundamental for modeling recursive types. Eventually, we will show that $\qCPO$ is symmetric monoidal closed. In combination with completeness and cocompleteness, this implies that $\qCPO$ is a \emph{cosmos}, which is a category that has suitable properties for enriched category theory.

\begin{theorem}\label{thm:qCPO has all limits}
The category $\qCPO$ is complete. Any limit in $\qCPO$ can be calculated in the ambient category $\qPOS$, and the embedding of $\qCPO$ in $\qPOS_{\mathrm{SC}}$ preserves all limits, where $\qPOS_{\mathrm{SC}}$ denotes the category of quantum posets and Scott continuous functions.
\end{theorem}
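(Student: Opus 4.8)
The plan is to reduce completeness of $\qCPO$ to completeness of $\qPOS$, which is available from the cited theorem of \cite{KLM20}. Concretely, since $\qPOS$ has all small limits, it suffices to show that if $D\colon \mathbf{J}\to\qCPO$ is a small diagram, then the limiting cone $(L,R_L)$ over $D$ computed in $\qPOS$ has the property that (i) $(L,R_L)$ is again a quantum cpo, and (ii) each projection $\pi_j\colon L\to D(j)$ is Scott continuous, and moreover (iii) any competing cone in $\qCPO$ factors through $(L,R_L)$ by a Scott continuous map. Granting (i)--(iii), the embedding $\qCPO\hookrightarrow\qPOS_{\mathrm{SC}}$ preserves the limit essentially by construction, since the mediating morphism is the same underlying function.

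The first step is to recall the explicit description of limits in $\qPOS$ from \cite{KLM20}. Products are carried by the Cartesian product $\X = \prod_j D(j)$ with the product order, and equalizers are carried by suitable subsets with the restricted order; a general limit is the equalizer of two maps between products, so $L$ sits inside $\prod_j \X_j$ as the subset picked out by the compatibility conditions of the diagram, equipped with the restricted order $R_L$. The key technical tool for verifying (i) is Proposition \ref{prop:suprema of sequences of functions} together with the pointwise nature of the limit condition $R\circ K_\infty = \bigwedge_{n} R\circ K_n$: given a monotonically ascending sequence $K_1\below K_2\below\cdots\colon\H\to L$, compose with each projection to get ascending sequences $\pi_j\circ K_n\colon\H\to D(j)$ in $\qSet$; since each $D(j)$ is a quantum cpo, these have limits, and one assembles a candidate $K_\infty\colon\H\to L$ whose $j$-th projection is $\lim_n \pi_j\circ K_n$. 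One then checks that $K_\infty$ really lands in the subset $L$ (using that each competing cone condition is an equation of binary relations, preserved under the infima involved) and that $R_L\circ K_\infty = \bigwedge_n R_L\circ K_n$, which follows because both the order $R_L$ on $L$ and the infimum operation are computed componentwise along the atoms of $\prod_j \X_j$, by the definition of the order on a subset and Lemma \ref{lem:lim is sup}-style bookkeeping. This simultaneously shows that each $\pi_j$ is Scott continuous, since $\pi_j\circ K_n\nearrow \pi_j\circ K_\infty$ by construction.

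For (iii), suppose $(C,R_C)$ is a quantum cpo with a cone of Scott continuous maps $f_j\colon C\to D(j)$; the universal property in $\qPOS$ yields a unique monotone mediating map $u\colon C\to L$ with $\pi_j\circ u = f_j$. It remains to see $u$ is Scott continuous: given $K_n\nearrow K_\infty$ in $\qSet(\H,C)$, we want $u\circ K_n\nearrow u\circ K_\infty$, and since the limit condition for $L$ can be tested projection-wise (by the componentwise description of $R_L$ and the fact, just established, that each $\pi_j$ is Scott continuous and jointly the $\pi_j$ detect the order), it suffices that $\pi_j\circ u\circ K_n = f_j\circ K_n \nearrow f_j\circ K_\infty = \pi_j\circ u\circ K_\infty$, which is exactly Scott continuity of $f_j$. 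I expect the main obstacle to be the careful verification that the assembled $K_\infty$ actually factors through the subset $L\subseteq\prod_j\X_j$ and that $R_L\circ K_\infty=\bigwedge_n R_L\circ K_n$ rather than merely $\leq$; this hinges on the fact that taking the infimum $\bigwedge_n$ of the relations $R_L\circ K_n$ commutes with the componentwise/atomwise decomposition and with restriction along the inclusion $J_L\colon L\to\prod_j\X_j$, for which Proposition A.6 of \cite{KLM20} and the explicit formula for composition in $\qRel$ are the relevant levers. Everything else is a routine transfer of the classical argument that limits of cpos are computed as in $\mathbf{POS}$.
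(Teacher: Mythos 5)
Your overall strategy coincides with the paper's: compute the limit in $\qPOS$, where the order on the limit object is $R=\bigwedge_{\alpha}F_\alpha^\dag\circ R_\alpha\circ F_\alpha$; build the candidate $K_\infty$ componentwise using that each object of the diagram is a quantum cpo; verify $K_n\nearrow K_\infty$ by pushing $\bigwedge_n$ through this formula via \cite[Proposition~A.6]{KLM20}; and read off Scott continuity of the projections and of the mediating map from the same projection-wise test. The paper does not pass through the product--equalizer decomposition but treats an arbitrary limit directly, which is a cosmetic difference; it also has the advantage of not requiring any concrete description of the categorical product in $\qSet$ (which, note, is \emph{not} the monoidal ``Cartesian product'' $\times$, so your phrase ``computed componentwise along the atoms of $\prod_j\X_j$'' should be read only as the statement that the limit order is the infimum of the pullback orders along the projections).

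The one step whose stated justification would not survive being written out is the claim that the assembled $K_\infty$ lands in $L$, equivalently that the componentwise limits $G_j=\lim_n\pi_j\circ K_n$ form a cone over the diagram. This is not because ``each competing cone condition is an equation of binary relations, preserved under the infima involved'': the condition to be checked is $G_{j'}=D(f)\circ G_j$ for each arrow $f\colon j\to j'$, and the only way to get it is to use that $D(f)$ is \emph{Scott continuous}, so that $\pi_{j'}\circ K_n=D(f)\circ\pi_j\circ K_n\nearrow D(f)\circ G_j$, and then to invoke uniqueness of limits (Lemma \ref{lem:lim is sup}) against $\pi_{j'}\circ K_n\nearrow G_{j'}$. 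Without Scott continuity of the connecting morphisms of the diagram there is no reason for $D(f)$ to commute with the limit, and the equalizer condition on $K_\infty$ could genuinely fail. With that repair the rest of your outline is sound; for the preservation claim you should also note explicitly that your mediating-map argument never uses that the apex $C$ is a quantum cpo, so it applies to cones of Scott continuous maps from arbitrary quantum posets, which is exactly what makes the embedding into $\qPOS_{\mathrm{SC}}$ limit-preserving.
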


\begin{proof}
Consider a diagram of shape $A$, consisting of objects $(\X_\alpha,R_\alpha)\in\qCPO$ and with Scott continuous maps $F_f:\X_\alpha\to \X_\beta$ for each morphism $f:\alpha\to\beta$ in $A$. We can regard this as a diagram in $\mathbf{qPOS}$. The latter category is complete by Theorem 5.5~\cite{KLM20}, which also states that the limit of the diagram in $\qPOS$ is given by $(\X,R)$, where $\X$ is the limit of the diagram $(\X_\alpha)_{\alpha\in A}$ in $\mathbf{qSet}$, and where  $R=\bigwedge_{\alpha\in A}F_\alpha^\dag\circ R_\alpha\circ F_\alpha$, where $F_\alpha:\X\to\X_\alpha$ are the limiting functions of the diagram in $\mathbf{qSet}$. Theorem 5.5~\cite{KLM20} also assures that these limiting functions $F_\alpha$ are monotone, so they are also limiting functions in $\mathbf{qPOS}$. 

We first show that $(\X,R)$ is a quantum cpo. For a fixed atomic quantum set $\H$ let 
$K_1\sqsubseteq K_2\sqsubseteq\ldots:\H\to\X$
be a monotonically ascending sequence of functions. Since left multiplication is monotone \cite[Lemma 4.4]{KLM20}, we find for each $\alpha\in A$ that $F_\alpha\circ K_1\sqsubseteq F_\alpha\circ K_2\sqsubseteq\ldots:\H\to\X_\alpha$
is a monotonically ascending sequence of functions, too. 

We first claim that if there exists a function $K_\infty:\H\to\X$ such that 
\begin{equation}\label{eq:lim in lim}
F_\alpha\circ K_n\nearrow F_\alpha\circ K_\infty
\end{equation}
for each $\alpha\in A$, then $K_n\nearrow K_\infty$. Equation (\ref{eq:lim in lim}) means that 
$ R_\alpha\circ F_\alpha\circ K_\infty=\bigwedge_{n\in\NN}R_\alpha\circ F_\alpha\circ K_n$; hence we find that
\begin{align*}
    R\circ K_\infty & = \left(\bigwedge_{\alpha\in A}F_\alpha^\dag\circ R_\alpha\circ F_\alpha\right)\circ K_\infty=\bigwedge_{\alpha\in A}F_\alpha^\dag\circ R_\alpha\circ F_\alpha\circ K_\infty \\
    & = \bigwedge_{\alpha\in A}F_\alpha^\dag\circ \bigwedge_{n\in\NN}R_\alpha\circ F_\alpha\circ K_n=\bigwedge_{\alpha\in A} \bigwedge_{n\in\NN}F_\alpha^\dag\circ R_\alpha\circ F_\alpha\circ K_n\\
    & = \bigwedge_{n\in\NN}\bigwedge_{\alpha\in A} F_\alpha^\dag\circ R_\alpha\circ F_\alpha\circ K_n= \bigwedge_{n\in\NN}\left(\bigwedge_{\alpha\in A} F_\alpha^\dag\circ R_\alpha\circ F_\alpha\right)\circ K_n = \bigwedge_{n\in\NN}R\circ K_n,
\end{align*}
where we use \cite[Proposition~A.6]{KLM20} in the second, the third and the penultimate equalities. Therefore, $K_n\nearrow K_\infty$, as claimed.

Next we construct a function $K_\infty$ that satisfies (\ref{eq:lim in lim}). Since $(\X_\alpha,R_\alpha)$ is by assumption a quantum cpo for each $\alpha \in A$, there exists a function $G_\alpha:\H\to\X_\alpha$ such that $F_\alpha\circ K_n\nearrow G_\alpha$. 
Consider a morphism $f:\alpha\to\beta$ in $A$. By the definition of $F_\alpha$ and $F_\beta$ being limiting maps, we have 
$ F_f\circ F_\alpha=F_\beta.$ By the Scott continuity of $F_f$, we have 
$ F_\beta\circ K_n=F_f\circ F_\alpha\circ K_n\nearrow F_f\circ G_\alpha.$
It now follows from Lemma \ref{lem:lim is sup} that $G_\beta=F_{f}\circ G_\alpha$. Thus, the functions $G_\alpha:\H\to\X_\alpha$ form a cone, and since the functions $F_\alpha:\X\to\X_\alpha$ form the limiting cone, there must be a monotone function $K_\infty:\H\to\X$ such that $G_\alpha=F_\alpha\circ K_\infty$. Since $F_\alpha\circ K_n\nearrow G_\alpha$, it follows that Equation (\ref{eq:lim in lim}) holds for each $\alpha\in A$, hence we conclude that $K_n\nearrow K_\infty$. Therefore, $(\X,R)$ is a quantum cpo. 

We have to show that the limiting functions $F_\alpha$ are Scott continuous. Let $\H$ be an atomic quantum set, and let 
$K_1\sqsubseteq K_2\sqsubseteq\cdots:\H\to\X$
be a monotonically ascending sequence. Then $K_n\nearrow K_\infty'$ for some $K'_\infty:\H\to\X$  since $\X$ is a quantum cpo. Now, as before, construct some $K_\infty$ such that (\ref{eq:lim in lim}) holds for each $\alpha\in A$, ,and note that $K_n\nearrow K_\infty$. Then Lemma \ref{lem:lim is sup} implies $K_\infty=K_\infty'$. Thus, Equation (\ref{eq:lim in lim}) expresses that $F_\alpha$ is Scott continuous. We have shown that the vertex $(\X,R)$ of the limiting cone is a quantum cpo and that the limiting functions $F_\alpha$ are all Scott continuous; hence, this cone is entirely in $\cat{qCPO}$. It remains to show that it is also limiting in $\cat{qCPO}$.

Let $(\Y,S)$ be a quantum poset, and let $C_\alpha:\Y\to\X_\alpha$ be Scott continuous maps that form a cone. Since $(\X,R)$ is the limit of the $(\X_\alpha,R_\alpha)$ in $\qPOS$, there exists a unique monotone map $M:\Y\to\X$ such that $F_\alpha\circ M=C_\alpha$ for each $\alpha\in A$.
We claim that $M$ is Scott continuous: 
Indeed, let $\H$ be an atomic quantum set, let $E_1\sqsubseteq E_2\sqsubseteq\cdots:\H\to\Y$ be a monotonically ascending sequence of functions, and assume that there exists a function $E_\infty:\H\to\Y$ such that $E_n\nearrow E_\infty$. Note that $E_\infty$ exists if we assume that $(\Y,S)$ is a quantum cpo. Let $K_n=M\circ E_n$ for each $n\in\{1,2,\ldots,\infty\}$. Since $M$ is monotone, left multiplication with $M$ is monotone \cite{KLM20}*{Lemma 4.4}, hence the sequence $K_1\sqsubseteq K_2\sqsubseteq\cdots$ is also monotonically ascending. Since the $C_\alpha$ are Scott continuous, we have $C_\alpha\circ E_n\nearrow C_\alpha\circ E_\infty$. So, for each $\alpha\in A$ we obtain
\[ F_\alpha\circ K_n=F_\alpha\circ M\circ E_n=C_\alpha\circ E_n\nearrow C_\alpha\circ E_\infty=F_\alpha\circ M\circ E_\infty=F_\alpha\circ K_\infty,\]
which shows the $K_n$ satisfy condition (\ref{eq:lim in lim}) for each $\alpha\in A$; i.~e.,  we obtain $K_n\nearrow K_\infty$. But this exactly expresses that $M\circ  E_n\nearrow M\circ E_\infty$, so $M$ is Scott continuous. If $\Y$ is a quantum cpo, it follows that the Scott continuous functions $F_\alpha:\X\to\X_\alpha$ form a limiting cone in $\qCPO$, so $(\X,R)$ is the limit of the diagram.

Choosing $(\Y,S)$ to be a quantum cpo shows that $\qCPO$ is complete. Since we allowed $(\Y,S)$ to be an arbitrary quantum poset, it follows that the embedding of $\qCPO$ into $\qPOS_{\mathrm{SC}}$ preserves all limits.
\end{proof}

\section{Coproducts and Cocompleteness}\label{sec:qCPO is cocomplete}

In this section, we show that the category $\cat{qCPO}$ has coproducts and that it is cocomplete. Coproducts are used in semantics to support sum types and conditional branching, while cocompleteness is required to construct a $\CPO$-algebraic compact category of quantum cpos, which is key for modeling recursive types.  We begin with coproducts.

\subsection{Coproducts} The main technical challenge of showing the existence of coproducts in $\qCPO$ is the case that a monotone ascending sequence $K_1 \below K_2 \below \cdots\colon \H\to \X_i\uplus \X_2$ of functions from an atomic quantum set to a coproduct of quantum posets - intuitively a family of monotone ascending sequences in $\X_1 \uplus \X_2$ indexed by $\H$ - have non-zero range in both summands.

We briefly recall the construction of coproducts of quantum sets and of quantum posets. The coproduct $\biguplus_{\alpha\in A}\X_\alpha$ of a family $(\X_\alpha)_{\alpha\in A}$ of quantum sets coincides in $\mathbf{qSet}$ and $\mathbf{qRel}$, and is defined as follows. If all quantum sets in the family are pairwise disjoint, their coproduct is simply the union $\bigcup_{\alpha\in A}\X_\alpha:=\Q\bigcup_{\alpha\in A}\At(\X_\alpha)$, otherwise the coproduct is given by the disjoint union $\bigcup_{\alpha\in A}\X_\alpha\times `\{\alpha\}$. 
So, just as for ordinary sets, for most results involving coproducts we can assume that the family over which the coproduct is formed consists of pairwise disjoint quantum sets. For each $\beta\in A$, we denote the canonical injection $\X_\beta\to\biguplus_{\alpha\in A}\X_\alpha$ by $J_\beta$.  

If each quantum set in the family $(\X_\alpha)_{\alpha\in A}$ is equipped with an order $R_\alpha$, then the coproduct is $\biguplus_{\alpha\in A}(\X_\alpha,R_\alpha):=(\X,R)$, where $\X=\biguplus_{\alpha\in A}\X_\alpha$ in $\mathbf{qSet}$, and $R=\biguplus_{\alpha\in A}R_\alpha$ in $\mathbf{qRel}$, i.e., the unique relation on $\X$ such that $R\circ J_\alpha=R_\alpha\circ J_\alpha$ for each $\alpha\in A$ \cite[Proposition 6.2]{KLM20}.

\begin{definition}[Decomposition of $\H$]\label{def:decomposition}
Let $\H$ be an atomic quantum set, and let $\X$ be any quantum set. Let $H$ be the unique atom of $\H$. A function $D\: \H \to \X$ is a \emph{decomposition} of $\H$ if there exists a family $\{X_\alpha\}_{\alpha \in A}$ of pairwise orthogonal subspaces of $H$ that sum to $H$ such that
\begin{itemize}
    \item[(1)] $\X =\Q\{X_\alpha:\alpha\in A\}$, and
    \item[(2)] $D(H, X_\alpha) = \CC \cdot \mathrm{proj}_{X_\alpha}$ for each $\alpha \in A$,
\end{itemize}
where $\mathrm{proj}_{X_\alpha}\in L(H, X_\alpha)$ is the orthogonal projection onto $X_\alpha$.
\end{definition}
We have a canonical one-to-one correspondence between the decompositions of $\H$ and orthogonal direct sum decompositions $H = \bigoplus_{\alpha \in A} X_\alpha$. The non-zero subspaces $X_\alpha$ are pair-wise distinct, so $\bigcup_{\alpha \in A} \Q\{X_\alpha\} = \biguplus_{\alpha \in A} \Q\{X_\alpha\}$.

\begin{lemma}\label{lem:decomposition is surjective}
Let $\H$ be an atomic quantum set, and let $D\: \H \to \biguplus_{\alpha \in A} \Q\{X_\alpha\}$ be any decomposition of $\H$. Then, $D$ is surjective.
\end{lemma}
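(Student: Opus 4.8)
The plan is to verify directly that $D \circ D^\dag = I_\X$, which is precisely the definition of surjectivity for a function between quantum sets. Write $H$ for the unique atom of $\H$ and abbreviate $p_\alpha := \mathrm{proj}_{X_\alpha} \in L(H,X_\alpha)$, so that $D(H,X_\alpha) = \CC p_\alpha$ by condition (2) of Definition \ref{def:decomposition}. Recall from the remark following that definition that the atoms of $\X = \biguplus_{\alpha\in A}\Q\{X_\alpha\}$ are exactly the \emph{nonzero} subspaces among the $X_\alpha$, and that these are pairwise distinct; thus every component of $D\circ D^\dag$ that we must compute is indexed by nonzero, mutually orthogonal subspaces of $H$.

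First I would compute the dagger. By definition $D^\dag(X_\alpha, H) = \{r^\dag : r \in D(H,X_\alpha)\} = \CC p_\alpha^\dag$, where $p_\alpha^\dag \in L(X_\alpha, H)$ is the hermitian adjoint of the orthogonal projection, that is, the isometric inclusion of the subspace $X_\alpha$ into $H$. Since $\H$ has a single atom, the matrix-multiplication formula for composition in $\qRel$ collapses to a single term, and using that the $\mathbf{FdOS}$-composition of two one-dimensional operator spaces is the span of the composite operator, we get
\[
(D \circ D^\dag)(X_\alpha, X_\beta) \;=\; D(H,X_\beta)\cdot D^\dag(X_\alpha,H) \;=\; \CC\,(p_\beta\, p_\alpha^\dag)
\]
for all atoms $X_\alpha, X_\beta$ of $\X$.

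It then remains only to identify the operator $p_\beta\, p_\alpha^\dag \colon X_\alpha \to X_\beta$: for $v \in X_\alpha$, the vector $p_\alpha^\dag v$ is $v$ viewed inside $H$, and $p_\beta$ projects it orthogonally onto $X_\beta$. When $\alpha = \beta$ this yields $p_\alpha\, p_\alpha^\dag = 1_{X_\alpha}$, and when $\alpha \ne \beta$ the orthogonality $X_\alpha \perp X_\beta$ gives $p_\beta\, p_\alpha^\dag = 0$. Hence $(D\circ D^\dag)(X_\alpha, X_\beta)$ equals $\CC 1_{X_\alpha}$ if $\alpha = \beta$ and $0$ otherwise, which is exactly $I_\X(X_\alpha, X_\beta)$. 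Therefore $D \circ D^\dag = I_\X$, i.e., $D$ is surjective. I do not expect a genuine obstacle here; the only points requiring care are the bookkeeping with the matrix-composition convention and the observation that $\At(\X)$ consists of the nonzero $X_\alpha$, so that the computation of $p_\beta\, p_\alpha^\dag$ always takes place among genuinely orthogonal nonzero subspaces.
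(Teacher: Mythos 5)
Your proof is correct and follows essentially the same route as the paper's: both compute the components of $D \circ D^\dagger$ directly, using that $\mathrm{proj}_{X_\alpha}\mathrm{proj}_{X_\alpha}^\dagger = 1_{X_\alpha}$ and $\mathrm{proj}_{X_\beta}\mathrm{proj}_{X_\alpha}^\dagger = 0$ for $\alpha \neq \beta$ by orthogonality, concluding $D \circ D^\dagger = I_\X$. No issues.
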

\begin{proof}
For each $\alpha\in A$, let $\proj_{X_\alpha}:H\to H$ be the projection operator whose range is the subspace $X_\alpha$ of $H$. Clearly $\proj_{X_\alpha}\proj_{X_\beta}^\dag = 0$ for distinct $\alpha, \beta \in A$. We now calculate that for all $\alpha, \beta \in A$, 
\begin{align*}
   (D\circ D^\dag)(X_\beta,{X_\alpha}) & = \bigvee_{H\atomof\H}D(H,X_\alpha)D^\dag({X_\beta},H)=D(H, X_\alpha)D^\dag(X_\beta, H)
 =D(H, X_\alpha)D(H,X_\beta)^\dag\\
 &  = \CC \proj_{X_\alpha} \proj_{X_\beta}^\dag  =\delta_{\alpha,\beta}\CC \proj_{X_\alpha}\proj_{X_\alpha}^\dag 
 = \delta_{\alpha,\beta}\CC 1_{X_\alpha}=I_{\X}(X_\beta,X_\alpha).\qedhere
 \end{align*}
\end{proof}

In general, a decomposition $D:\H\to\X$ of $\H$ is not injective. In fact, $D$ is injective precisely when it is bijective, in which case $\X$ also is atomic. In this case, it necessarily follows that the unique atoms of $\X$ and $\H$ have the same dimension.

\begin{lemma}\label{lem:decomposition}
Let $F$ be a function from an atomic quantum set $\H=\Q\{H\}$, to the disjoint union of an indexed family of quantum sets $\{\Y_\alpha\}_{\alpha \in A}$. Then, there exists a decomposition $D\: \H \To \biguplus_{\alpha \in A} \Q\{X_\alpha\}$ and a family of functions $\{F_\alpha \: \Q\{X_\alpha\} \To \Y_\alpha\}_ {\alpha \in A}$ such that $(\biguplus_{\alpha \in A} F_\alpha) \circ D = F$.

\end{lemma}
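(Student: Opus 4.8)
The plan is to reduce the statement to elementary linear algebra on the component operator spaces. Write $H$ for the unique atom of $\H$, and for each atom $Y$ of the codomain $\biguplus_{\alpha\in A}\Y_\alpha$ abbreviate $F_Y:=F(H,Y)\subseteq L(H,Y)$. Unwinding the definitions of composition and dagger in $\qRel$, the hypothesis that $F$ is a function becomes: (i) $F_YF_{Y'}^\dag\subseteq\CC 1_Y$ when $Y=Y'$ and $F_YF_{Y'}^\dag=0$ when $Y\neq Y'$; and (ii) $\bigvee_Y F_Y^\dag F_Y\ni 1_H$, where the join runs over all atoms $Y$ of $\biguplus_{\alpha\in A}\Y_\alpha$. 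For each $\alpha\in A$ I would set $X_\alpha\subseteq H$ to be the span of the ranges of the operators $a^\dag$, $a\in\bigcup_{Y\atomof\Y_\alpha}F_Y$; equivalently, $X_\alpha=\bigl(\bigcap\{\ker a: Y\atomof\Y_\alpha,\ a\in F_Y\}\bigr)^\perp$. Write $\proj_{X_\alpha}\in L(H,X_\alpha)$ for the orthogonal projection onto $X_\alpha$, so that $\proj_{X_\alpha}\proj_{X_\alpha}^\dag=1_{X_\alpha}$ and $\proj_{X_\alpha}^\dag\proj_{X_\alpha}\in L(H,H)$ is the associated self-projection.

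First I would check that $\{X_\alpha\}_{\alpha\in A}$ consists of pairwise orthogonal subspaces summing to $H$. Orthogonality: if $Y\atomof\Y_\alpha$ and $Y'\atomof\Y_\beta$ with $\alpha\neq\beta$, then $Y$ and $Y'$ are distinct atoms, so $F_YF_{Y'}^\dag=0$ by (i), and hence for $a\in F_Y$, $b\in F_{Y'}$ the ranges of $a^\dag$ and $b^\dag$ are orthogonal; thus $X_\alpha\perp X_\beta$. That the $X_\alpha$ sum to $H$ follows from (ii), since $1_H$ is a finite sum of operators $a^\dag a'$ with $a,a'\in F_Y$ for various atoms $Y$, and each such operator has range inside $X_\alpha$ whenever $Y\atomof\Y_\alpha$. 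It follows that the relation $D$ defined by $D(H,X_\alpha)=\CC\cdot\proj_{X_\alpha}$ is a decomposition in the sense of Definition~\ref{def:decomposition}, with codomain $\biguplus_{\alpha\in A}\Q\{X_\alpha\}$ (the nonzero $X_\alpha$ being pairwise distinct); that $D$ is a function is the computation in the proof of Lemma~\ref{lem:decomposition is surjective}, which gives $D\circ D^\dag=I$, together with the analogous observation that $\sum_\alpha\proj_{X_\alpha}^\dag\proj_{X_\alpha}=1_H$ lies in $(D^\dag\circ D)(H,H)$.

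Next I would define $F_\alpha\:\Q\{X_\alpha\}\to\Y_\alpha$ by letting $F_\alpha(X_\alpha,Y):=F_Y\cdot\proj_{X_\alpha}^\dag$ for each $Y\atomof\Y_\alpha$, i.e.\ the restriction of each $a\in F_Y$ to $X_\alpha$. Since every $a\in F_Y$ has range of $a^\dag$ inside $X_\alpha$, it annihilates $X_\alpha^\perp$, so $a=a\proj_{X_\alpha}^\dag\proj_{X_\alpha}$ and therefore $F_Y=F_\alpha(X_\alpha,Y)\cdot\proj_{X_\alpha}$; feeding this into the matrix-composition formula of $\qRel$ yields $(\biguplus_{\alpha\in A}F_\alpha)\circ D=F$ componentwise, using that a $\gamma$-indexed atom of the domain and a $Y$-indexed atom of the codomain of $\biguplus_{\alpha\in A}F_\alpha$ pair nontrivially only when $\gamma$ matches the index of $Y$. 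It then remains to check that each $F_\alpha$ is a function. The inequality $F_\alpha\circ F_\alpha^\dag\leq I_{\Y_\alpha}$ is inherited from (i): for $a\in F_Y$, $b\in F_{Y'}$ with $Y,Y'\atomof\Y_\alpha$, the composite $(a\proj_{X_\alpha}^\dag)(b\proj_{X_\alpha}^\dag)^\dag$ equals $ab^\dag$, which by (i) is contained in $\CC 1_Y$ if $Y=Y'$ and is $0$ otherwise. For $F_\alpha^\dag\circ F_\alpha\geq I_{\Q\{X_\alpha\}}$, I would sharpen the previous step: writing $1_H=\sum_\alpha d_\alpha$ with $d_\alpha\in\bigvee_{Y\atomof\Y_\alpha}F_Y^\dag F_Y$, every generator $a^\dag a'$ of $d_\alpha$ factors through $X_\alpha$, so $d_\alpha=(\proj_{X_\alpha}^\dag\proj_{X_\alpha})\,d_\alpha\,(\proj_{X_\alpha}^\dag\proj_{X_\alpha})$; the pairwise orthogonality of the $X_\alpha$ then forces $d_\alpha=\proj_{X_\alpha}^\dag\proj_{X_\alpha}$, and conjugating by $\proj_{X_\alpha}^\dag$ (using that $a\mapsto a\proj_{X_\alpha}^\dag$ and $c\mapsto\proj_{X_\alpha}c\proj_{X_\alpha}^\dag$ preserve joins of subspaces) gives $1_{X_\alpha}\in\bigvee_{Y\atomof\Y_\alpha}F_\alpha(X_\alpha,Y)^\dag F_\alpha(X_\alpha,Y)=(F_\alpha^\dag\circ F_\alpha)(X_\alpha,X_\alpha)$.

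I expect the main obstacle to be this last point. Condition (ii) only asserts that $F$ as a whole is surjective onto $\H$, i.e.\ that $1_H\in\bigvee_Y F_Y^\dag F_Y$ over \emph{all} atoms $Y$; what is actually needed is that, for each fixed $\alpha$, the spaces $F_\alpha(X_\alpha,Y)$ with $Y\atomof\Y_\alpha$ already ``cover'' $X_\alpha$. Decoupling the contributions of the different summands $\Y_\alpha$ is exactly where the orthogonality of the $X_\alpha$ established in the second step enters. Everything else is routine bookkeeping with the matrix description of $\qRel$ together with the fact that the image of a subspace under a linear map --- here restriction along, and conjugation by, the inclusion $\proj_{X_\alpha}^\dag$ --- is again a subspace and commutes with finite joins.
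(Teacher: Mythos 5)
Your proof is correct, but it takes a genuinely different route from the paper's. The paper fixes $\alpha$, observes that $J_\alpha^\dagger \circ F$ is a partial function, and invokes the cited factorization result \cite[Proposition C.6]{Kornell18} to produce $X_\alpha$ and $F_\alpha$ abstractly; it then assembles $D = [K_\alpha^\dagger : \alpha \in A]^\dagger$, checks $D \circ D^\dagger \leq I$ by a computation with $K_\alpha \circ K_\beta^\dagger = \bot$, and passes through the duality with hereditarily atomic von Neumann algebras to upgrade $D$ from a partial function to a (total) function before verifying it is a decomposition. You instead construct everything explicitly: $X_\alpha$ is the span of the ranges of $a^\dagger$ for $a$ in the components of $F$ landing in $\Y_\alpha$, orthogonality of the $X_\alpha$ falls out of $F \circ F^\dagger \leq I$, completeness $\bigoplus_\alpha X_\alpha = H$ falls out of $F^\dagger \circ F \geq I$, and $F_\alpha$ is restriction along $\proj_{X_\alpha}^\dagger$. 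Your key extra step --- extracting $P_{X_\alpha} = d_\alpha$ from $1_H = \sum_\alpha d_\alpha$ with $d_\alpha = P_{X_\alpha} d_\alpha P_{X_\alpha}$ by compressing with the mutually orthogonal projections --- is exactly the right way to localize the totality of $F$ to each summand, and it is sound. What each approach buys: the paper's is shorter given the imported machinery and reuses a factorization principle that appears elsewhere; yours is self-contained, identifies the decomposition concretely, and in particular sidesteps the von Neumann algebra detour for the unitality of $D$ by exhibiting $1_H = \sum_\alpha \proj_{X_\alpha}^\dagger \proj_{X_\alpha}$ directly. The one point worth making explicit in a written version is that atoms coming from distinct summands $\Y_\alpha$, $\Y_\beta$ of the disjoint union are automatically distinct atoms of $\biguplus_\alpha \Y_\alpha$, which is what licenses applying condition (i) across summands.
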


\begin{proof}
Fix $\alpha \in A$, and let $J_\alpha$ be the inclusion of $\Y_\alpha$ into the disjoint union $\biguplus_{\alpha'} \Y_{\alpha'}$. We have that $J_\alpha^\dagger \circ F \circ F^\dagger \circ J_\alpha \leq J_\alpha^\dagger \circ J_\alpha = I_{\Y_\alpha}$, and thus, $J_\alpha^\dagger \circ F$ is a partial function. Applying \cite[Proposition C.6]{Kornell18}, we obtain a subspace $X_\alpha \leq H$ and a function $F_\alpha\: \Q\{X_\alpha\} \to \Y_\alpha$ such that $F_\alpha\circ K_\alpha=J_\alpha^\dag\circ F$, where $K_\alpha$ is the partial function from $\H$ to $\X_\alpha = \Q\{X_\alpha\}$ defined by $K_\alpha(H, X_\alpha) = \CC \cdot \proj_{X_\alpha}^\dagger$.

If we vary $\alpha \in A$, we obtain families $\{X_\alpha\atomof\X_\alpha\}_{\alpha \in A}$, $\{K_\alpha\}_{\alpha \in A}$ and $\{F_\alpha\}_{\alpha \in A}$. Now, let $D = [K_\alpha^\dagger : \alpha \in A]^\dagger$, where the square-bracket notation refers to the coproduct in $\qRel$. We compute that 
\begin{align*}
\left(\biguplus _\alpha F_\alpha \right) & \circ D
=
\left(\biguplus _\alpha F_\alpha \right) \circ [K_\alpha^\dagger: \alpha \in A]^\dagger
= 
\left( [K_\alpha^\dagger: \alpha \in A] \circ \left(\biguplus _\alpha F_\alpha^\dagger\right)\right)^\dagger
\\ & =
[K_\alpha^\dagger \circ F_\alpha^\dagger : \alpha \in A]^\dagger
=
[(F_\alpha\circ K_\alpha)^\dagger : \alpha \in A]^\dagger
=
[(J_\alpha^\dagger \circ F)^\dagger : \alpha \in A]^\dagger
\\ & =
[F^\dagger \circ J_\alpha : \alpha \in A]^\dagger
=
(F^\dagger)^\dagger
=
F.
\end{align*}
Thus, $(\biguplus_\alpha F_\alpha) \circ D = F$. 

To show that $D$ is a decomposition, we show that it is a function. For distinct $\alpha, \beta \in A$, we apply \cite[Lemma~6.1(a)]{KLM20} to find that 
\begin{align*}
K_\alpha \circ K_\beta^\dagger
& \leq
F_\alpha^\dagger \circ F_\alpha \circ K_\alpha \circ K_\beta^\dagger \circ F_\beta^\dagger \circ F_\beta
=
F_\alpha^\dagger \circ J_\alpha^\dagger \circ F \circ F^\dagger \circ J_\beta \circ F_\beta
\\ & \leq 
F_\alpha^\dagger\circ J_\alpha^\dagger \circ J_\beta \circ F_\beta
=
F_\alpha^\dagger \circ \bot \circ F_\beta
=
\bot,
\end{align*}
so $K_\alpha \circ K_\beta^\dagger = \bot$. It follows that
\begin{align*}
D \circ D^\dagger
& =
\bigvee_{\alpha, \beta\in A} J_\alpha \circ J_\alpha^\dagger \circ D \circ D^\dagger \circ J_\beta \circ J_\beta^\dagger
=
\bigvee_{\alpha, \beta\in A} J_\alpha \circ (D^\dagger \circ J_\beta)^\dagger \circ ( D^\dagger \circ J_\beta) \circ J_\beta^\dagger
\\ & =
\bigvee_{\alpha, \beta\in A} J_\alpha \circ K_\alpha \circ K_\beta^\dagger \circ J_\beta^\dagger
=
\bigvee_{\alpha\in A} J_\alpha \circ K_\alpha \circ K_\alpha^\dagger \circ J_\alpha^\dagger
\leq
\bigvee_{\alpha\in A} J_\alpha \circ J_\alpha^\dagger
=
I_{(\biguplus_\alpha \Y_\alpha)}.
\end{align*}
Thus, $D$ is a partial function.

The equation $(\biguplus_\alpha F_\alpha) \circ D = F$ in the category $\mathbf{qPar}$ of quantum sets and partial functions is equivalent to the equation $D^\star \circ (\biguplus_\alpha F_\alpha)^\star = F^\star$ in the category $\mathbf{WStar}_\mathrm{HA}$ of hereditarily atomic von Neumann algebras and normal $*$-homomorphisms (cf. Theorem \ref{thm:qSet}). The normal $*$-homomorphism $F^\star$ is unital, because $F$ is a function. Then $D^\star \circ (\biguplus_\alpha F_\alpha)^\star = F^\star$ implies the normal $*$-homomorphism $D^\star$ is also unital, so $D$ is also a function.

It follows that $D$ is a decomposition: By a short direct computation, we have that $D(H, X_\alpha) = K_\alpha(H, X_\alpha) = \CC \cdot \mathrm{proj}_{X_\alpha}^\dagger$ for all $\alpha \in A$. For all distinct $\alpha, \beta \in A$, if $X_\alpha$ and $X_\beta$ are both nonzero, then they must be orthogonal anyway, because
$$\CC \cdot \mathrm{proj}_{X_\alpha} \cdot \mathrm{proj}_{X_\beta}^\dagger = D(H,X_\alpha)^\dagger \cdot D(H, X_\beta) \leq I_{(\biguplus_{\alpha\in A} \X_\alpha)}(X_\alpha, X_\beta) = 0.$$
Similarly,
$1_H \in I_\H(H, H) = \bigvee_{\alpha\in A} D(H, X_\alpha) \cdot D(H, X_\alpha)  = \bigvee_{\alpha\in A} \CC \cdot \mathrm{proj}_{X_\alpha}^\dagger \cdot \mathrm{proj}_{X_\alpha}$,
so $1_H = \sum_{\alpha\in A} \mathrm{proj}_{X_\alpha}^\dagger \cdot \mathrm{proj}_{X_\alpha}$, implying that $\bigvee_{\alpha\in A} X_\alpha = H$.
\end{proof}

\begin{lemma}\label{lem:coproduct of qposets decomposition}
Let $\{(\Y_\alpha,S_\alpha)\}_{\alpha\in A}$ be an indexed family of quantum posets, and let $\Y=\biguplus_{\alpha\in A}\Y_\alpha$ be the coproduct of $\{\Y_\alpha\}_{\alpha \in A}$ in $\qSet$. Let $\H$ be an atomic quantum set, let $S$ be an order on $\Y$, and let $F^1,F^2:\H\to\Y$ be functions such that $F^1\sqsubseteq F^2$. For each $i \in \{1,2\}$, let $D_i:\H\to\biguplus_{\alpha\in A}\Q\{X^i_\alpha\}$ be a decomposition of $\H$ and let $\{F^i_\alpha:\Q\{X^i_\alpha\}\to\Y_\alpha\}_{\alpha \in A}$ be an indexed family of functions such that 
\[ F^i=\left(\biguplus_{\alpha\in A}F^i_\alpha\right)\circ D_i,\]
as in Lemma \ref{lem:decomposition}.

Let $\beta \in A$. If $J_\beta\circ S_\beta=S\circ J_\beta,$ where $J_\beta\: \Y_\beta \to\Y$ is the canonical injection, then
\begin{itemize}
    \item[(1)] $X_\beta^1\perp X_\gamma^2$ for each $\gamma\in A$ distinct from $\beta$ and
\item[(2)]  $X^1_\beta = X^2_\beta$ implies $F_\beta^1\sqsubseteq F_{\beta}^2$.
\end{itemize}
\end{lemma}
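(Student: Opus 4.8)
The plan is to argue entirely inside the dagger quantaloid $\qRel$, using the characterization $F^1\sqsubseteq F^2\Leftrightarrow F^2\circ(F^1)^\dag\leq S$ from Section~\ref{sub:quantum posets}, together with the identities $J_\gamma^\dag\circ J_\beta=\bot$ for $\gamma\neq\beta$ and $J_\beta^\dag\circ J_\beta=I_{\Y_\beta}$, which hold because $\Y=\biguplus_\alpha\Y_\alpha$ is a coproduct in $\qSet$. First I would unpack the hypothesis: composing $J_\beta\circ S_\beta=S\circ J_\beta$ on the left with $J_\gamma^\dag$ gives $J_\gamma^\dag\circ S\circ J_\beta=\bot$ for every $\gamma\neq\beta$, and composing with $J_\beta^\dag$ gives $J_\beta^\dag\circ S\circ J_\beta=S_\beta$. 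Second, I would record the consequence of the decomposition $F^i=(\biguplus_\alpha F^i_\alpha)\circ D_i$: for each $\gamma\in A$ we have $J_\gamma^\dag\circ F^i=F^i_\gamma\circ D^\gamma_i$, where $D^\gamma_i\colon\H\to\Q\{X^i_\gamma\}$ is the partial function with single component $D^\gamma_i(H,X^i_\gamma)=\CC\cdot\proj_{X^i_\gamma}$ obtained by composing $D_i$ with the $\gamma$-th coproduct projection; this follows from the block structure of $\biguplus_\alpha F^i_\alpha$ and clause~(2) of Definition~\ref{def:decomposition}.

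For part~(1) I would fix $\gamma\neq\beta$ and assume $X^1_\beta$ and $X^2_\gamma$ are nonzero, as otherwise they are trivially orthogonal. Composing $F^2\circ(F^1)^\dag\leq S$ with $J_\gamma^\dag$ on the left and $J_\beta$ on the right and using the identities above gives
\[
F^2_\gamma\circ D^\gamma_2\circ(D^\beta_1)^\dag\circ(F^1_\beta)^\dag=J_\gamma^\dag\circ F^2\circ(F^1)^\dag\circ J_\beta\leq J_\gamma^\dag\circ S\circ J_\beta=\bot.
\]
Here $M:=D^\gamma_2\circ(D^\beta_1)^\dag\colon\Q\{X^1_\beta\}\to\Q\{X^2_\gamma\}$ has the single component $M(X^1_\beta,X^2_\gamma)=\CC\cdot c$, where $c\colon X^1_\beta\to X^2_\gamma$ is the composite of the inclusion $X^1_\beta\hookrightarrow H$ with the orthogonal projection $H\to X^2_\gamma$, and $c=0$ precisely when $X^1_\beta\perp X^2_\gamma$. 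Since $F^1_\beta$ and $F^2_\gamma$ are functions, $I\leq(F^1_\beta)^\dag\circ F^1_\beta$ and $I\leq(F^2_\gamma)^\dag\circ F^2_\gamma$, so by monotonicity of composition (the argument of \cite[Lemma~6.1]{KLM20}),
\[
M=I\circ M\circ I\leq(F^2_\gamma)^\dag\circ F^2_\gamma\circ M\circ(F^1_\beta)^\dag\circ F^1_\beta=(F^2_\gamma)^\dag\circ\bot\circ F^1_\beta=\bot,
\]
whence $c=0$ and $X^1_\beta\perp X^2_\gamma$.

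For part~(2) I would suppose $X^1_\beta=X^2_\beta=:X_\beta$ (if $X_\beta=0$ both $F^i_\beta$ are the unique function out of the empty quantum set and there is nothing to prove). Then $\proj_{X^1_\beta}=\proj_{X^2_\beta}$, since orthogonal projection onto a fixed subspace is unique, so $D^\beta_1=D^\beta_2$ and $D^\beta_2\circ(D^\beta_1)^\dag=I_{\Q\{X_\beta\}}$ because $\proj_{X_\beta}\circ\proj_{X_\beta}^\dag=1_{X_\beta}$. Composing $F^2\circ(F^1)^\dag\leq S$ with $J_\beta^\dag$ and $J_\beta$ then gives
\[
F^2_\beta\circ(F^1_\beta)^\dag=F^2_\beta\circ D^\beta_2\circ(D^\beta_1)^\dag\circ(F^1_\beta)^\dag=J_\beta^\dag\circ F^2\circ(F^1)^\dag\circ J_\beta\leq J_\beta^\dag\circ S\circ J_\beta=S_\beta,
\]
which is exactly the condition $F^1_\beta\sqsubseteq F^2_\beta$.

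The componentwise identifications are routine bookkeeping; I expect the one genuinely delicate point to be the cancellation step in part~(1): stripping $F^1_\beta$ and $F^2_\gamma$ off either side of the compression $M$ is legitimate precisely because these are \emph{total} functions, so that $I$ lies below $G^\dag\circ G$ for $G=F^1_\beta,F^2_\gamma$. Some care is also needed with dagger bookkeeping and with reconciling the two conventions for $\proj_{X_\alpha}$ used in Definition~\ref{def:decomposition} and in the proof of Lemma~\ref{lem:decomposition}.
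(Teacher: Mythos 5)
Your proof is correct and follows essentially the same route as the paper's: both arguments rest on the characterization $F^1\sqsubseteq F^2\Leftrightarrow F^2\circ(F^1)^\dag\leq S$, the sandwiching trick $I\leq G^\dag\circ G$ for total functions, the identities $J_\gamma^\dag\circ J_\beta=\bot$ and $J_\beta^\dag\circ J_\beta=I$, and the explicit computation of the components of $D_2\circ D_1^\dag$ as $\CC\cdot\proj_{X^2_\gamma}\cdot\proj_{X^1_\beta}^\dag$. The only (cosmetic) differences are that you compress to components before sandwiching rather than after, and in part (2) you work directly with $F^2_\beta\circ(F^1_\beta)^\dag\leq S_\beta$ instead of the paper's route through $S_\beta\circ F^i_\beta=J_\beta^\dag\circ S\circ F^i\circ D_i^\dag\circ J^i_\beta$ and the surjectivity of $D_i$, which is a slightly leaner bookkeeping of the same idea.
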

\begin{proof}
The condition $F^1\sqsubseteq F^2$ is equivalent to $F^2\circ (F^1)^\dag\leq S$. Hence, we obtain
\[ \left(\biguplus_{\alpha\in A} F^2_\alpha\right)\circ D_2\circ D_1^\dag\circ \left(\biguplus_{\alpha\in A}F^1_\alpha\right)^\dag\leq S,  \]
and hence
\begin{align*} D_2\circ D_1^\dag & \leq \left(\biguplus_{\alpha\in A} F_\alpha^2\right)^\dag\circ
\left(\biguplus_{\alpha\in A} F_\alpha^2\right)\circ D_2\circ D_1^\dag\circ \left(\biguplus_{\alpha\in A}F^1_\alpha\right)^\dag\circ \left(\biguplus_{\alpha\in A}F^1_\alpha\right)\\
& \leq \left(\biguplus_{\alpha\in A} F_\alpha^2\right)^\dag\circ  S \circ \left(\biguplus_{\alpha\in A}F^1_\alpha\right).
\end{align*}
For $i \in \{1,2\}$ and $\beta \in A$, let $J_\beta^i:\Q\{X_\beta^i\}\hookrightarrow\biguplus_{\alpha\in A}\Q\{X^i_\alpha\}$ be inclusion. Let $\beta,\gamma\in A$ be distinct. Then,
\begin{align*}
 (J_\gamma^2)^\dag\circ D_2\circ D_1^\dag\circ J_\beta^1 & \leq (J_\gamma^2)^\dag\circ\left(\biguplus_{\alpha\in A} F_\alpha^2\right)^\dag\circ  S \circ \left(\biguplus_{\alpha\in A}F^1_\alpha\right)\circ J^1_\beta
 =(F_\gamma^2)^\dag\circ J_\gamma^\dag\circ S\circ J_\beta\circ F_\beta^1\\
 &= (F_\gamma^2)^\dag\circ J_\gamma^\dag\circ J_\beta \circ S_\beta\circ F_\beta^1= \bot, 
 \end{align*}
since $J_\gamma^\dag\circ J_\beta=\bot$ by \cite[Lemma~6.1(a)]{KLM20}.
Applying \cite[Lemma~A.5]{KLM20}, we now infer that  $((J^1_\beta)^\dag\circ D_1)(H,X_\beta)=D_1(H,X_\beta)=\CC \cdot \mathrm{proj}_{X^1_{\beta}}$ 
and similarly that $((J_\gamma^2)^\dag\circ D_2)(H,X^2_\gamma)=\CC\cdot \mathrm{proj}_{X^2_\gamma}.$ Thus, for distinct $\beta,\gamma\in A$, if $X_\beta^1$ and $X_\gamma^2$ are non-zero, we find that 
\begin{align*}
0 & = \bot(X^1_\beta, X^2_\gamma) =  [(J_\gamma^2)^\dag\circ D_2\circ D_1^\dag\circ J^1_\beta](X_\beta^1,X^2_\gamma)=[((J_\gamma^2)^\dag\circ D_2)\circ ( (J^1_\beta)^\dag\circ D_1)^\dag](X^1_\beta,X^2_\gamma)\\
& =
\bigvee_{H\atomof\H}((J_\gamma^2)^\dag\circ D_2)(H,X_\gamma^2)\cdot((J_\beta^1)^\dag\circ D_1)^\dag(X^1_\beta,H)=((J_\gamma^2)^\dag\circ D_2)(H,X_\gamma^2)\cdot ((J^1_\beta)^\dag\circ D_1)(H,X^1_\beta)^\dag\\
& = (\CC \cdot \mathrm{proj}_{X_\gamma^2}) \cdot(\CC \cdot \mathrm{proj}_{X_\beta^1})^\dag
= \CC \cdot \mathrm{proj}_{X_\gamma^2} \cdot \mathrm{proj}_{X_\beta^1}^\dag,
\end{align*}
which implies that $X_\gamma^2$ and $X_\beta^1$ are orthogonal subspaces of $H$.

For (2), assume $X_1^\beta=X_2^\beta$.  For each $i \in \{1,2\}$, the decomposition $D_i$ is surjective by Lemma \ref{lem:decomposition is surjective}, and thus, we have that
\begin{align*}
    S_\beta\circ F_\beta^i & =  J_\beta^\dag\circ J_\beta\circ S_\beta\circ F_\beta^i=J_\beta^\dag\circ S\circ J_\beta\circ F^i_\beta= J_\beta^\dag\circ S\circ\left(\biguplus_{\alpha\in A}F_\alpha^i\right)\circ J_\beta^i\\
    & =J_\beta^\dag\circ S\circ\left(\biguplus_{\alpha\in A}F_\alpha^i\right)\circ D_i \circ D_i^\dag \circ J_\beta^i =J_\beta^\dag\circ S\circ F^i\circ D_i^\dag\circ J_\beta^i. 
\end{align*}
As above, the partial functions $(J^1_\beta)^\dag\circ D_1$ and $(J^2_\beta)^\dag\circ D_1$ are completely determined by $X^1_\beta$ and $X^2_\beta$, respectively, and hence we have that $(J^1_\beta)^\dag\circ D_1  = (J^2_\beta)^\dag\circ D_2$. Recalling that $F^1\sqsubseteq F^2$ is equivalent to $S\circ F^2\leq S\circ F^1$, we calculate
\begin{align*}
    S_\beta\circ F^2_\beta & = 
    J_\beta\circ S\circ F^2\circ D_2^\dag\circ J^2_\beta=
    J_\beta\circ S\circ F^2\circ D_1^\dag\circ J^1_\beta \\ & \leq J_\beta\circ S\circ F^1\circ D_1^\dag\circ J^1_\beta
    =S_\beta\circ F_\beta^1.
\end{align*}
Therefore, $F^1_\beta \below F^2_\beta$, as claimed.
\end{proof}

\begin{theorem}\label{thm:qCPO has all coproducts}
The category $\qCPO$ has all coproducts. The coproduct of an indexed family of quantum cpos in $\cat{qCPO}$ is also their coproduct in $\cat{qPOS}$.
\end{theorem}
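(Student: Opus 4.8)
The plan is to build the coproduct in $\qCPO$ on top of the coproduct $(\Y,S)$ of the underlying quantum posets in $\qPOS$, which exists since $\qPOS$ is cocomplete \cite{KLM20}. Here $\Y=\biguplus_{\alpha\in A}\Y_\alpha$ with canonical injections $J_\alpha\colon\Y_\alpha\to\Y$, and the coproduct order satisfies $S=\bigvee_{\alpha\in A}J_\alpha\circ S_\alpha\circ J_\alpha^\dag$, so that $S\circ J_\beta=J_\beta\circ S_\beta$ for every $\beta\in A$ --- exactly the hypothesis needed to invoke Lemma \ref{lem:coproduct of qposets decomposition}. I then have to check three things: that $(\Y,S)$ is a quantum cpo, that each $J_\alpha$ is Scott continuous, and that $(\Y,S)$ together with the $J_\alpha$ is universal among Scott continuous cocones out of the $(\Y_\alpha,S_\alpha)$.

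For the first point, fix an atomic quantum set $\H=\Q\{H\}$ and a monotonically ascending sequence $K_1\sqsubseteq K_2\sqsubseteq\cdots\colon\H\to\Y$. By Lemma \ref{lem:decomposition}, each $K_n$ factors as $K_n=(\biguplus_\alpha K_n^\alpha)\circ D_n$ for a decomposition $D_n\colon\H\to\biguplus_\alpha\Q\{X_\alpha^n\}$ and functions $K_n^\alpha\colon\Q\{X_\alpha^n\}\to\Y_\alpha$. The key step --- and the one I expect to be the main obstacle --- is to show that these decompositions all coincide. Applying Lemma \ref{lem:coproduct of qposets decomposition}(1) to $K_n\sqsubseteq K_m$ for $n\le m$ gives $X_\beta^n\perp X_\gamma^m$ for all distinct $\beta,\gamma\in A$, and since $\bigoplus_\gamma X_\gamma^m=H$ this forces $X_\beta^n\subseteq X_\beta^m$. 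As $\sum_\alpha\dim X_\alpha^n=\dim H$ for every $n$ (with only finitely many nonzero pieces) and each $\dim X_\alpha^n$ is non-decreasing in $n$, a finite telescoping/dimension count shows $\dim X_\alpha^n=\dim X_\alpha^m$ and hence $X_\alpha^n=X_\alpha^m$ whenever $n\le m$. So there is a single decomposition $D$, with nonzero pieces $\X_\alpha:=\Q\{X_\alpha\}$, with $K_n=(\biguplus_\alpha K_n^\alpha)\circ D$ for all $n$. Lemma \ref{lem:coproduct of qposets decomposition}(2) then gives $K_n^\alpha\sqsubseteq K_{n+1}^\alpha$, so each $K_1^\alpha\sqsubseteq K_2^\alpha\sqsubseteq\cdots\colon\X_\alpha\to\Y_\alpha$ is monotonically ascending; since $\X_\alpha$ is atomic and $(\Y_\alpha,S_\alpha)$ is a quantum cpo, it has a limit $K_\infty^\alpha$. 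I then put $K_\infty:=(\biguplus_\alpha K_\infty^\alpha)\circ D$ and verify $K_n\nearrow K_\infty$ by a direct computation: using $S\circ(\biguplus_\alpha G_\alpha)=\biguplus_\alpha(S_\alpha\circ G_\alpha)$, that the block-diagonal construction $\biguplus_\alpha(-)$ is computed componentwise and so commutes with meets, that $S_\alpha\circ K_\infty^\alpha=\bigwedge_{n}S_\alpha\circ K_n^\alpha$ by the definition of limit, and that precomposition with the function $D$ preserves meets \cite[Proposition~A.6]{KLM20}, one gets $S\circ K_\infty=\bigwedge_n S\circ K_n$.

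For the second point, if $K_1\sqsubseteq\cdots\colon\H\to\Y_\beta$ satisfies $K_n\nearrow K_\infty$, i.e.\ $S_\beta\circ K_\infty=\bigwedge_n S_\beta\circ K_n$, then postcomposing with $J_\beta$ and using $S\circ J_\beta=J_\beta\circ S_\beta$ together with the fact that composition with the function $J_\beta$ preserves meets \cite[Proposition~A.6]{KLM20} yields $S\circ J_\beta\circ K_\infty=\bigwedge_n S\circ J_\beta\circ K_n$, that is, $J_\beta\circ K_n\nearrow J_\beta\circ K_\infty$. For the third point, let $(\Z,T)$ be a quantum cpo and let $C_\alpha\colon\Y_\alpha\to\Z$ be Scott continuous; the universal property of $(\Y,S)$ in $\qPOS$ provides a unique monotone $M\colon\Y\to\Z$ with $M\circ J_\alpha=C_\alpha$, and it remains to see that $M$ is Scott continuous. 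Given $K_n\nearrow K_\infty$ with $K_n\colon\H\to\Y$, I reuse the analysis above: $K_n=(\biguplus_\alpha K_n^\alpha)\circ D$, and by uniqueness of limits (Lemma \ref{lem:lim is sup}) $K_\infty=(\biguplus_\alpha K_\infty^\alpha)\circ D$ with $K_n^\alpha\nearrow K_\infty^\alpha$; hence $M\circ K_n=[\,C_\alpha\circ K_n^\alpha:\alpha\in A\,]\circ D$. Since postcomposition with $T$ distributes over the cotuple $[\,\cdot\,]$, the cotuple is again computed componentwise and so commutes with meets, each $C_\alpha$ is Scott continuous so that $T\circ C_\alpha\circ K_\infty^\alpha=\bigwedge_n T\circ C_\alpha\circ K_n^\alpha$, and precomposition with $D$ preserves meets, I obtain $T\circ M\circ K_\infty=\bigwedge_n T\circ M\circ K_n$, i.e.\ $M\circ K_n\nearrow M\circ K_\infty$. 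This shows $(\Y,S)$ with the $J_\alpha$ is the coproduct in $\qCPO$, and since it is literally the $\qPOS$-coproduct cocone, the two coproducts coincide.

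The main obstacle is the claim in the first point that the decompositions $D_n$ can be taken to be one and the same --- this is precisely the difficulty flagged before Definition \ref{def:decomposition}, that an $\H$-indexed family of ascending sequences in a coproduct may spread its range across several summands, and it is resolved by feeding the orthogonality output of Lemma \ref{lem:coproduct of qposets decomposition}(1) into a dimension count on the finite-dimensional atom $H$. After that, everything reduces to the components $(\Y_\alpha,S_\alpha)$ and becomes routine quantaloid bookkeeping --- pushing meets through block-diagonal relations, cotuples, and composition with functions, all justified by the componentwise description of these operations and \cite[Proposition~A.6]{KLM20}.
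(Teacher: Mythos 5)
Your proposal is correct and follows essentially the same route as the paper: factor each $K_n$ through a decomposition via Lemma \ref{lem:decomposition}, use Lemma \ref{lem:coproduct of qposets decomposition}(1) plus the fact that the pieces sum to $H$ to identify all the decompositions, reduce to the summands via part (2), and reassemble the limits and the cotuple by pushing meets through the componentwise operations. The only cosmetic differences are that you spell out the dimension count that the paper leaves implicit, and you prove Scott continuity of $J_\beta$ by pushing the meet directly through the injection (which should be justified by the componentwise description of composition with coproduct injections, i.e.\ \cite[Lemma~6.1]{KLM20}, rather than \cite[Proposition~A.6]{KLM20}) instead of testing against all $J_\alpha^\dag$ as the paper does.
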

\begin{proof}
Let $\{(\Y_\alpha,S_\alpha)\}_{\alpha\in A}$ be an indexed family of quantum cpos, and let $(\Y,S)$ be its coproduct $(\Y,S)$ in $\qPOS$ (cf. \cite[Proposition 6.2]{KLM20}). Hence,
\begin{equation}\label{eq:condition for applying lemma}
 J_\alpha\circ S_\alpha=S\circ J_\alpha
 \end{equation}
for each $\alpha\in A$.
Let $\H = \Q\{H\}$ be an atomic quantum set, and let $K^1\sqsubseteq K^2\sqsubseteq\cdots:\H\to\Y$ be a monotonically ascending sequence of functions.
By Lemma \ref{lem:decomposition}, for each $n\in\NN$, there exist a decomposition $D_n:\H\to\biguplus_{\alpha\in A}\Q\{X^n_\alpha\}$ and an indexed family $\{K^n_\alpha: \Q\{X^n_\alpha\} \to \Y_\alpha\}_{\alpha \in A}$ of functions such that $K^n=\left(\biguplus_{\alpha\in A}K_\alpha^n\right)\circ D_n$. 

We show that the decompositions $D_n$ are all the same, as follows: Since Equation (\ref{eq:condition for applying lemma}) holds for each $\alpha\in A$, we can apply Lemma \ref{lem:coproduct of qposets decomposition}(1) to conclude that for each $n\in\NN$, we have $X^1_\alpha\perp X^{n}_\beta$ for each $\alpha\neq\beta$ in $A$. 
Since $H=\bigoplus_{\alpha\in A}X_\alpha^n$ by the definition of a decomposition, it follows that $X_\alpha^1=X_{\alpha}^{n}$ for each $\alpha\in A$. Hence $D_1=D_{n}$ for each $n\in\NN$. Let $D=D_1 = D_2 = \ldots$, and for each $\alpha \in A$, let $X_\alpha = X_\alpha^1 = X_\alpha^2 = \cdots$.

We now apply Lemma \ref{lem:coproduct of qposets decomposition}(2) to conclude that $K_\alpha^1\sqsubseteq K_\alpha ^2\sqsubseteq\cdots$ for each $\alpha\in A$.
If $X_\alpha$ is zero-dimensional, then $\Q\{X_\alpha\} = `\emptyset$ is initial in $\qSet$; hence $K^1_\alpha=K^n_\alpha$ for each $n\in\NN$. Taking $K^\infty_\alpha=K^1_\alpha$, we automatically have $K_\alpha^n\nearrow K_\alpha^\infty$.
If $X_\alpha$ is not zero-dimensional, then the fact that $\Y_\alpha$ is a quantum cpo ensures the existence of a function $K^\infty_\alpha:\X_\alpha\to\Y_\alpha$ such that $K^n_\alpha\nearrow K^\infty_\alpha$. In either case, 
\begin{equation}\label{eq:intersection coproduct} S_\alpha\circ K_\alpha^\infty=\bigwedge_{n\in\NN}S_\alpha\circ K_\alpha^n.\end{equation}

Let $K^\infty=\left(\biguplus_{\alpha\in A}K_\alpha^\infty\right)\circ D$. We show that $K^n\nearrow K^\infty$: 
\begin{align*}
    S\circ K^\infty & = \left(\biguplus_{\alpha\in A}S_\alpha^\infty\right)\circ\left(\biguplus_{\alpha\in A}K_\alpha^\infty\right)\circ D =\left(\biguplus_{\alpha\in A}(S_\alpha\circ K_\alpha^\infty)\right)\circ D=\left(\biguplus_{\alpha\in A}\left(\bigwedge_{n\in\NN}(S_\alpha\circ K_\alpha^n)\right)\right)\circ D\\
    & =\left(\bigwedge_{n\in\NN}\left(\biguplus_{\alpha\in A}(S_\alpha\circ K_\alpha^n)\right)\right)\circ D=\bigwedge_{n\in\NN}\left(\biguplus_{\alpha\in A}(S_\alpha\circ K_\alpha^n)\circ D\right)\\
    & =\bigwedge_{n\in\NN}\left(\left(\biguplus_{\alpha\in A}S_\alpha\right)\circ\left(\biguplus_{\alpha\in A}K_\alpha^n\right)\circ D\right)=\bigwedge_{n\in\NN}S\circ K^n, 
\end{align*}
where the fourth equality follows from \cite[Lemma~6.1]{KLM20}, and the fifth equality follows from \cite[Proposition~A.6]{KLM20}. Therefore, $\Y$ is a quantum cpo.

Next, we show that the canonical injections $J_\beta:\Y_\beta\to\Y$ are Scott continuous for each $\beta\in A$. So, let $\H$ be an atomic quantum set, and let $E_1\sqsubseteq E_2\sqsubseteq\cdots:\H\to\Y_\beta$ be a monotonically ascending sequence of functions with limit $E_\infty$. Let $K^i=J_\beta\circ E_i$ for each $i\in\NN\cup\{\infty\}$. Since the $J_\alpha$ are monotone by \cite[Proposition~6.2]{KLM20}, left multiplication with $J_\alpha$ is monotone \cite{KLM20}*{Lemma 4.4}, and it follows  that $K^1\sqsubseteq K^2\sqsubseteq\cdots\sqsubseteq$ is a monotonically ascending sequence. We need to show that $K^n\nearrow K^\infty$. For each $\alpha\in A$, we calculate that
\begin{align*} J_\alpha^\dag\circ\bigwedge_{n\in\NN} S\circ K^n & =\bigwedge_{n\in\NN}J_\alpha^\dag\circ S\circ K^n=\bigwedge_{n\in\NN}J_\alpha^\dag\circ S\circ J_\beta\circ E_n=\bigwedge_{n\in\NN}J_\alpha^\dag\circ J_\beta\circ S_\beta\circ E_n\\
& =\Delta_{\alpha,\beta} \cdot \bigwedge_{n\in\NN}S_\beta\circ E_n=\Delta_{\alpha,\beta}\cdot (S_\beta\circ E_\infty) =J_\alpha^\dag\circ J_\beta\circ S_\beta\circ E_\infty\\
&= J_\alpha^\dag\circ S\circ J_\beta\circ E_\infty =J_\alpha^\dag\circ S\circ K_\infty,
\end{align*}
where we have used \cite[Proposition~A.6]{KLM20} for the first and the penultimate equality, \cite[Proposition~6.2]{KLM20} for the third equality, and \cite[Lemma~6.1]{KLM20} for the fourth and sixth equalities. We use $E_n\nearrow E_\infty$ for the fifth equality. It now follows by \cite[Lemma~6.1]{KLM20} that $\bigwedge_{n\in\NN}S\circ K^n=S\circ K^\infty$, i.e., that $K^n\nearrow K^\infty$. Therefore, each of the canonical injections $J_\beta$, for $\beta \in A$, is Scott continuous.

Finally, let $(\Z,T)$ be a quantum cpo, and for each $\alpha \in A$, let $F_\alpha:\Y_\alpha\to \Z$ be a Scott continuous function. We need to show that $[F_\alpha:\alpha\in A]:\Y\to\Z$ is Scott continuous, where the notation refers to the universal property of the coproduct. Hence, let $\H$ be an atomic quantum set and let 
$K^1\sqsubseteq K^2\sqsubseteq\cdots:\H\to\Y$
be a monotonically ascending sequence of functions with limit $K^\infty$.
As before, we have a decomposition $D:\H\to\biguplus_{\alpha\in A}\Q\{X_\alpha\}$ and an indexed family of functions $\{K^i_\alpha:\{X_\alpha\}\to\Y_\alpha\}_{\alpha \in A}$ such that
$K^i=\left(\biguplus_{\alpha\in A}K^i_\alpha\right)\circ D$, for each index $i \in \NN \union \{\infty\}$, and $K^n_\alpha\nearrow K^\infty_\alpha$. Since each function $F_\alpha$ is Scott continuous, we also have that $F_\alpha\circ K_\alpha^n\nearrow F_\alpha\circ K_\alpha^\infty$. We show that $\left[F_\alpha:\alpha\in A\right]\circ K^n\nearrow \left[F_\alpha:\alpha\in A\right]\circ K^\infty$: 
\begin{align*}
\bigwedge_{n\in\NN}T\circ [F_\alpha:\ &\alpha\in A]\circ K^n  =  \bigwedge_{n\in\NN}T\circ [F_\alpha:\alpha\in A]\circ \left(\biguplus_{\alpha\in A}K^n_\alpha\right)\circ D  \\
& = \bigwedge_{n\in\NN}[T\circ F_\alpha\circ K_\alpha^n:\alpha\in A]\circ D
= \left(\bigwedge_{n\in\NN} [T\circ F_\alpha\circ K_\alpha^n:\alpha\in A]\right)\circ D\\
& = \left[\bigwedge_{n\in\NN}T\circ F_\alpha\circ K_\alpha^n:\alpha\in A\right]\circ D
= \left[T\circ F_\alpha\circ K_\alpha^\infty:\alpha\in A\right]\circ D\\
& = T\circ \left[F_\alpha\in A\right]\circ\left(\biguplus_{\alpha\in A}K_\alpha^\infty\right)\circ D 
= T\circ[F_\alpha:\alpha\in A]\circ K^\infty,
\end{align*}
where the third equality follows from \cite[Proposition~A.6]{KLM20}, the fourth from \cite[Lemma~6.1]{KLM20}, and the fifth equality is $F_\alpha\circ K_\alpha^n\nearrow F_\alpha\circ K_\alpha^\infty$. Therefore, the monotone function $[F_\alpha:\alpha\in A]:\Y\to\Z$, whose existence and uniqueness is guaranteed by the universal property of the coproduct in $\cat{qPOS}$, is Scott continuous. All together, we find that $(\Y,S)$ is also its coproduct of the family $\{(\Y_\alpha, S_\alpha)\}_{\alpha \in A}$ in $\cat{qCPO}$.
\end{proof}

\subsection{Factorizations in $\mathbf{qSet}$}\label{subsec:EM-factorization}

Let $F:\X\to\Y$ be a function. Then the restriction of $F|_\Z=F\circ J_\Z$ to any subset of $\Z\subseteq\X$ is always a function, but the corestriction $F|^\W=J_\W^\dag\circ F$ of $F$ to a subset $\W\subseteq \Y$ is not always a function. However, we prove below that $F|^\W$ is a function if $\W$ contains the \emph{range} of $F$, i.e., the subset 
\[\ran F:=\Q\{Y\atomof\Y:F(X,Y)\neq 0\text{ for some }X\atomof\X\}.\]
We denote the embedding $J_{\ran F}^\Y:\ran F\to\Y$ by $J_F$.

\begin{lemma}\label{lem:surjective function and range}
Let $F:\X\to\Y$ and $G:\Y\to\Z$ be functions between quantum sets. Then:
\begin{itemize}
    \item[(a)] $F$ is surjective if and only if $\ran F=\Y$;
    \item[(b)] $\ran (G\circ F)\subseteq\ran G$;
    \item[(c)] $\ran (G\circ F)=\ran G$ if $F$ is surjective.
\end{itemize}
\end{lemma}
\begin{proof}
Assume $F$ is surjective, i.e., $F\circ F^\dag=I_\Y$. Let $Y\atomof\Y$. Then $\CC 1_Y=(I_\Y)_Y^Y=(F\circ F^\dag)_Y^Y=\bigvee_{X\atomof\X}F_X^Y\cdot (F^\dag)_Y^X,$ hence there must be some $X\atomof\X$ such that $F(X,Y)=F_X^Y\neq 0$, i.e., $Y\atomof\ran F$. Conversely, assume that $\ran F=\Y$. Let $Y\atomof\Y$. Then there is some $X\atomof\X$ such that $F(X,Y)\neq 0$, hence there is some nonzero linear map $a:X\to Y$ in $F(X,Y)$, then also $0\neq aa^\dag$, which is an element in $F(X,Y)\cdot F(X,Y)^\dag=F(X,Y)\cdot F^\dag(Y,X)$, hence $0<F(X,Y)\cdot F^\dag(Y,X)\leq (F\circ F)^\dag(Y,Y)\leq I_\Y(Y,Y)=\CC 1_Y$, and since $\CC 1_Y$ is one-dimensional, we must have $(F\circ F)^\dag(Y,Y)=I_\Y(Y,Y)$. If $Y,Y'\atomof\Y$ are distinct, then $(F\circ F)^\dag(Y,Y')\leq I_\Y(Y,Y')=0$ forces $(F\circ F)^\dag=I_\Y(Y,Y')$, whence $F\circ F^\dag=I_\Y$, i.e., $F$ is surjective, which proves (a).

Let $Z\atomof\ran(G\circ F)$. Then $(G\circ F)(X,Z)\neq 0$ for some $X\atomof\X$, hence there must be some $Y\atomof\Y$ such that $0<G(Y,Z)\cdot F(X,Y)\leq(G\circ F)(X,Z)$, which can only happen if $G(Y,Z)\neq 0$, i.e., $Z\atomof\ran G$, which proves (b). For (c), assume that $Z\atomof\ran G$, so $G(Y,Z)\neq 0$ for some $Y\atomof\Y$. Since $F$ is surjective, we have $F\circ F^\dag=I_\Y$, whence $(G\circ F\circ F^\dag)(Y,Z)\neq 0$. It follows that there must be some $X\atomof\X$ such that $0<(G\circ F)(X,Z)\cdot F^\dag(Y,X)\leq (G\circ F\circ F^\dag)(Y,Z)$, forcing $(G\circ F)(X,Z)\neq 0$. Thus $Z\atomof\ran(G\circ F)$.
\end{proof}

\begin{lemma}\label{lem:bar F is surjective function}
    Let $F:\X\to\Y$ be a function. Then the restriction $\overline F:=F|^{\ran F}$ to its range is a surjective function such that $F=J_F\circ \overline F$.
\end{lemma}
\begin{proof}We have 
\[\overline F\circ \overline F^\dag=J_F\circ F\circ F^\dag\circ J_F^\dag\leq J_F\circ J_F^\dag=I_{\ran F},\]
whereas for each $X\atomof\X$, we have
\begin{align*}
   (\overline F^\dag\circ \overline F)_X^X  &  = \bigvee_{W\atomof\ran F}(F^\dag\circ J_F )_W^X\cdot (J_F\circ F)_X^W=\bigvee_{W\atomof\W}F^X_W\cdot F_X^W\\
   & =\bigvee_{Y\atomof\Y}F^X_Y\cdot F_X^Y=(F^\dag\circ F)_X^X=(I_\X)_X^X.
\end{align*}
For atoms $X\neq X'$ of $\X$, we have
$(\overline F^\dag\circ \overline F)_X^{X'}\geq 0=(I_\X)_X^X$, so $\overline F^\dag\circ \overline F\geq I_\X$, which shows that $\overline F$ is indeed a function. Since $\overline F$ is a function $\X\to\ran F$, we have $\ran \overline F\subseteq\ran F$. Let $Y\atomof\ran F$. Then $F(X,Y)\neq 0$ for some $X\atomof\Y$, hence $\overline F(X,Y)=(J_F^\dag\circ F)(X,Y)=F(X,Y)\neq 0$, so $Y\in\ran \overline F$. We conclude that $\ran \overline F=\ran F$, hence it follows from Lemma \ref{lem:surjective function and range} that $\overline F$ is surjective. Finally, we have $J_F\circ\overline F=J_F\circ J_F^\dag\circ F\leq F$ for $J_F$ is a function. Since parallel functions are only comparable if they are equal \cite[Lemma A.7]{KLM20}, it follows that $F=J_F\circ\overline F$.
\end{proof}

The next lemma now characterizes precisely for which subsets $\W$ of $\Y$ the corestriction $F|^\W$ is a function.

\begin{lemma}\label{lem:factors through a subset}
Let $\X$ and $\Y$ be quantum sets, and let $F \colon \X \to \Y$ be a function. Let $\W$ be a subset of $\Y$. Then the following are equivalent:
\begin{itemize}
    \item[(a)] $\ran F \subsetof \W$;
    \item[(b)] there is a function $G:\X\to\W$ such that $F=J_\W\circ G$;
    \item[(c)] $F|^\W$ is a function.
\end{itemize}
In case these equivalent conditions hold, we have:
\begin{itemize}
    \item[(1)] $F|^\W$ is the unique function $G:\X\to\W$ such that $F=J_\W\circ G$;
\item[(2)] $\ran F=\ran F|^\W$;
\item[(3)] $F|^\W$ is injective if and only if $F$ is injective;
\item[(4)] $F|^\W$ is surjective if and only if $\W=\ran F$.
\end{itemize} 
\end{lemma}

\begin{proof}
Assume that $\ran F \subsetof \W$. By Lemma \ref{lem:bar F is surjective function}, we have $F = J_F \circ \overline F$. Then, \[F = J_F \circ \overline F = J_{\W} \circ J_F^\W \circ \overline F.\] We conclude that $G = J_F^\W \circ \overline F$ is a function $\X \to \W$ such that $F = J_\W \circ G$, so (a) implies (b). Now, (c) follows from (b) because \[F|^\W=J_\W^\dag\circ F=J_\W^\dag\circ J_\W\circ G=G,\] which also shows that $F|^\W$ is the unique function such that $F=J_\W\circ G$. Conversely, assume that $F|^\W$ is a function. Then $J_\W\circ F|^\W$ is also a function, and $J_\W\circ F|^\W=J_\W\circ J_\W^\dag\circ F\leq F$, since $J_\W$ is a function. Now, parallel functions are only comparable if they are equal \cite[Lemma A.7]{KLM20}, so $J_\W\circ F|^\W=F$. It follows that  $\ran F=\ran (J_\W\circ F|^\W).$ 
Let $Y\atomof\ran F$. Then there is some $X\atomof\X$ such that $0\neq F(X,Y)=(J_\W\circ F|^\W)(X,Y)$, hence there must be some $W\atomof\W$ such that  $F|^\W(X,W)\neq 0$ and $J_\W(W,Y)\neq 0$. The former expression implies that $W\atomof\ran F|^\W\subseteq\W$. The latter expression only holds if $W=Y$, so $Y\in\ran F|^\W\subseteq \W$, which shows that $\ran F\subseteq \ran F|^\W$ and that (c) implies (a).

We already showed that $G$ in (b) must be equal to $F|^\W$, and that $\ran F\subseteq \ran F|^\W$. For the remainder of the proof, assume that conditions (a)-(c) holds. Let $W\atomof\ran F|^\W$. Then $F|^\W(X,W)\neq 0$ for some $X\atomof\X$, and $J_\W(W,W)=\mathbb C 1_W$, hence \[F(X,W)=(J_\W\circ F|^\W)(X,W)\geq J_\W(W,W)\cdot F|^\W(X,W)\neq 0,\] implying that $W\atomof\ran F$. We conclude that $\ran F=\ran F|^\W$.
For (3), let $F$ be injective, so $F^\dag\circ F=I_\X$. Then
\[ I_\X\leq(F|^\W)^\dag\circ F|^\W=F^\dag\circ J_\W\circ J_\W^\dag\circ F\leq F^\dag\circ F=I_\X,\]
where the first inequality follows since $F|^\W$ is a function and the second because $J_\W$ is a function. Hence, $(F|^\W)^\dag\circ F|^\W=I_\X$, i.e., $F|^\W$ is injective. Conversely, if $F|^\W$ is injective, then $F=J_\W\circ F|^\W$ is injective as the composition of injective functions.
Finally, by (2) and by Lemma \ref{lem:surjective function and range}, $F|^\W$ is surjective if and only if $\ran F=\ran \overline F=\W$, which shows (4). 
\end{proof}

In particular, $\overline F$ is the unique surjection $G$ such that $F=J_F\circ G$.  More generally, the injections and surjections of $\cat{qSet}$ form a factorization structure, and the factorization $F = J_F \circ \overline F$ is the canonical factorization of $F$ for this factorization system.

\begin{proposition}\label{prop:factorization structure}
The surjections and injections of $\cat{qSet}$ form a factorization structure on $\cat{qSet}$ in the sense of \cite{adameketall:joyofcats}*{Definition 14.1}, i.e., if $\E$ denotes the class of surjections and $\M$ the class of injections in $\mathbf{qSet}$, then $\E$ and $\M$ are closed under composition with isomorphisms (which in $\mathbf{qSet}$ are bijections), any morphism $F$ in $\mathbf{qSet}$ can be written as $F=M\circ E$ for $M\in\M$ and $E\in\E$, and for each commutative square as below, with $E'\in\E$ and $M'\in\M$, there is a unique diagonal map $D:\W\to\Y$ such that the diagram commutes:
\[ \begin{tikzcd}
\X\ar{r}{E'}\ar{d}[swap]{F} & \W\ar{d}{G}\arrow[dl,dashed,swap,"D"]\\
\Y\ar{r}[swap]{M'} & \Z.
\end{tikzcd} \]
\end{proposition}

\begin{proof}
The surjections and injections in the category $\mathbf{WStar}$ of von Neumann algebras and unital normal $*$-homomorphisms form a factorization structure. Indeed, it is well known that the range of a unital normal $*$-homomorphism from one von Neumann algebra to another is itself a von Neumann algebra, and this readily implies that each unital normal $*$-homomorphism factors into an surjective unital normal $*$-homomorphism followed by a injective one.
It is routine to verify the unique diagonalization property. The surjections and injections in the full subcategory $\mathbf{WStar}_\mathrm{HA}$ of hereditarily atomic von Neumann algebra therefore also form a factorization structure. Appealing to Theorem \ref{thm:qSet}, and to \cite[Propositions 8.1 \& 8.4]{Kornell18}, we conclude that the surjections and injections in $\cat{qSet}$ form a factorization structure.
\end{proof}

\subsection{Sub-cpo's}
Recall that a sub-cpo of a cpo $(S, \below)$ is  a subset of $S$ that is closed under suprema of $\omega$-chains. We generalize this notion to the quantum setting.

\begin{definition}[Quantum sub-cpo]
Let $(\Y,S)$ be a quantum cpo. We call $\X\subseteq\Y$ a \emph{sub-cpo of $\Y$} if given any atomic quantum set $\H$ and a monotonically increasing sequence
$K_1\sqsubseteq K_2\sqsubseteq\cdots:\H\to\Y$ with  $\ran K_i\subseteq\X$, the limit $K_i\nearrow K_\infty$ satisfies $\ran K_\infty\subseteq\X$.
\end{definition}

\begin{proposition}\label{prop:subqcpo}
Let $(\Y,R)$ be a quantum cpo and let $\X\subseteq\Y$ be a subposet with the relative order $R|_\X^\X=J_\X^\dag\circ R\circ J_\X$. Then $\X$ is a sub-cpo of $\Y$ if and only if $(\X,R|_\X^\X)$ is a quantum cpo and $J_\X$ is Scott continuous.
\end{proposition}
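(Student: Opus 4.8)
The plan is to prove both implications by transporting monotonically ascending sequences back and forth along the canonical injection $J_\X\colon\X\to\Y$. The tools I would use are: $J_\X$ is an order embedding, since by hypothesis $R=J_\X^\dag\circ S\circ J_\X$; left multiplication by a monotone function is monotone \cite{KLM20}*{Lemma 4.4}; left multiplication by the dagger of a function preserves meets \cite[Proposition~A.6]{KLM20}; limits are unique (Lemma \ref{lem:lim is sup}); and, crucially, Lemma \ref{lem:factors through a subset}, which says that a function into $\Y$ has range inside $\X$ exactly when it factors through $J_\X$.

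For the forward direction, suppose $\X$ is a sub-cpo of $\Y$. Given an atomic quantum set $\H$ and a monotone ascending sequence $K_1\sqsubseteq K_2\sqsubseteq\cdots\colon\H\to\X$, I would post-compose with $J_\X$ to obtain a monotone ascending sequence $J_\X\circ K_1\sqsubseteq J_\X\circ K_2\sqsubseteq\cdots\colon\H\to\Y$. Since $(\Y,S)$ is a quantum cpo, this sequence has a limit $L_\infty\colon\H\to\Y$. Each $J_\X\circ K_n$ factors through $J_\X$, so has range inside $\X$ by Lemma \ref{lem:factors through a subset}; hence the sub-cpo hypothesis gives $\ran L_\infty\subseteq\X$, and Lemma \ref{lem:factors through a subset} produces a (unique) function $K_\infty\colon\H\to\X$ with $L_\infty=J_\X\circ K_\infty$. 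Then one computes $R\circ K_\infty=J_\X^\dag\circ S\circ J_\X\circ K_\infty=J_\X^\dag\circ S\circ L_\infty=J_\X^\dag\circ\bigwedge_{n\in\NN}S\circ J_\X\circ K_n=\bigwedge_{n\in\NN}R\circ K_n$, so $K_n\nearrow K_\infty$; thus $(\X,R)$ is a quantum cpo. The same chain shows $S\circ J_\X\circ K_\infty=S\circ L_\infty=\bigwedge_{n\in\NN}S\circ J_\X\circ K_n$, i.e. $J_\X\circ K_n\nearrow J_\X\circ K_\infty$; since limits in $(\X,R)$ are unique, this holds for the limit of any such sequence, so $J_\X$ is Scott continuous.

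For the converse, suppose $(\X,R)$ is a quantum cpo and $J_\X$ is Scott continuous. Let $K_1\sqsubseteq K_2\sqsubseteq\cdots\colon\H\to\Y$ be a monotone ascending sequence with limit $L_\infty$ and $\ran K_n\subseteq\X$ for every $n$. By Lemma \ref{lem:factors through a subset} there are unique functions $K_n'\colon\H\to\X$ with $K_n=J_\X\circ K_n'$, and since $J_\X$ is an order embedding, applying left multiplication by $J_\X^\dag$ to the inequality $S\circ K_{n+1}\leq S\circ K_n$ shows $K_n'\sqsubseteq K_{n+1}'$. As $(\X,R)$ is a quantum cpo, the sequence $(K_n')$ has a limit $K_\infty'\colon\H\to\X$; Scott continuity of $J_\X$ then gives $K_n=J_\X\circ K_n'\nearrow J_\X\circ K_\infty'$, whence $L_\infty=J_\X\circ K_\infty'$ by uniqueness of limits, and therefore $\ran L_\infty\subseteq\X$ by Lemma \ref{lem:factors through a subset}. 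Thus $\X$ is a sub-cpo of $\Y$.

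I expect the only real obstacle to be organizational: keeping track of which side of a composite a meet can be pulled through (handled by \cite[Proposition~A.6]{KLM20}) and checking that the order-embedding property of $J_\X$ transports the pointwise order on sequences in both directions. No genuinely analytic difficulty is anticipated, as all the substantive content has been isolated already in Lemmas \ref{lem:lim is sup} and \ref{lem:factors through a subset} together with the $\qRel$-calculus of \cite{KLM20}.
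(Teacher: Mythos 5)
Your proposal is correct and follows essentially the same route as the paper's proof: transport the sequence along $J_\X$, use the sub-cpo condition together with Lemma \ref{lem:factors through a subset} to factor the limit, compute $R\circ K_\infty=\bigwedge_n R\circ K_n$ by pulling $J_\X^\dag$ through the meet via \cite[Proposition~A.6]{KLM20}, and in the converse use the order-embedding property plus Scott continuity of $J_\X$ and uniqueness of limits. No gaps.
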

\begin{proof}
First assume that $\X$ is a sub-cpo of $\Y$ and let $K_1\sqsubseteq K_2\sqsubseteq\cdots:\H\to\X$ be a monotonically ascending sequence of functions. By definition of $R$, it follows that $J_\X$ is an order embedding, hence $J_\X$ is injective and monotone by \cite[Lemma~2.4]{KLM20}. Let $G_n=J_\X\circ K_n$. Since left multiplication with the monotone function $J_\X$ is monotone \cite{KLM20}*{Lemma 4.4}, it follows that $G_1\sqsubseteq G_2\sqsubseteq\cdots:\H\to\Y$ 
is a monotonically ascending sequence of functions, which has a limit $G_\infty$, because $\Y$ is a quantum cpo.
Since $G_n$ factors via $J_\X$, it follows that $\ran G_n\subseteq\X$. By definition of a sub-cpo, we have that $\ran G_\infty\subseteq\X$, and by Lemma \ref{lem:factors through a subset}, it follows that $G_\infty= J_\X\circ K_\infty$ for some function $K_\infty:\H\to\X$. 
Then
\begin{align*}
    R|_\X^\X\circ K_\infty & = J_\X^\dag\circ R\circ J_\X\circ K_\infty=J_\X^\dag\circ R\circ G_\infty = J_\X^\dag\circ\bigwedge_{n\in\NN}R\circ G_n\\
    & = \bigwedge_{n\in\NN}J_\X^\dag\circ R\circ G_n =\bigwedge_{n\in\NN}J_\X^\dag\circ R\circ J_\X\circ K_n = \bigwedge_{n\in\NN}R|_\X^\X\circ K_n,
\end{align*}
where the first and the last equalities are by definition of $R$, the second and the penultimate equalities are by definition of the $G_i$, the third equality is by $G_n\nearrow G_\infty$, and the fourth equality is by \cite[Proposition~A.6]{KLM20}. We conclude that $K_n\nearrow K_\infty$, so $(\X,R|_\X^\X)$ is a quantum cpo. 

Next we show that $J_\X$ is Scott continuous. Let $K_1\sqsubseteq K_2\sqsubseteq\cdots\sqsubseteq K_\infty\:\H\to\X$ with $K_n\nearrow K_\infty$. Then $\ran K_\infty\subseteq \X$ since $\X$ is a sub-cpo of $\Y$. By the first part of the proof, the sequence $J_\X\circ K_n$ has a limit $G_\infty$ and there is some $K'_\infty$ satisfying $\ran K'_\infty\subseteq\X$, $J_\X\circ K'_\infty = G_\infty$, and $K_n\nearrow K'_\infty$. But, limits are unique by Lemma \ref{lem:lim is sup}, so $K_\infty = K'_\infty$, and then $J_\X\circ K_n\nearrow J_\X \circ K_\infty$, so $J_\X$ is Scott continuous.

Now assume that $(\X,R|_\X^\X)$ is a quantum cpo and that $J_\X$ is Scott continuous. Let $G_1\sqsubseteq G_2\sqsubseteq \cdots\sqsubseteq G_\infty:\H\to\Y$ satisfy $G_n\nearrow G_\infty$. Assume that $\ran G_n\subseteq\X$ for each $n\in\NN$. We need to show that $\ran G_\infty\subseteq\X$, too. For each $n\in\NN$, since $\ran G_n\subseteq\X$, it follows from Lemma \ref{lem:factors through a subset} that $G_n=J_\X\circ K_n$ for some function $K_n:\H\to\X$.
Then for each $n,m\in\NN$ such that $n\leq m$, we have
$R|_\X^\X\circ K_m =J_\X^\dag\circ R\circ J_\X\circ K_m=J_\X^\dag\circ R\circ G_m\leq J_\X^\dag\circ R\circ G_n=J_\X^\dag\circ R\circ J_\X\circ K_n=R|_\X^\X\circ K_n,$
where the inequality follows from $G_n\sqsubseteq G_m$, hence also $K_n\sqsubseteq K_m$.
We conclude that $K_1\sqsubseteq K_2\sqsubseteq\cdots:\H\to\X$
is a monotonically ascending sequence of functions, which has a limit $K_\infty$, since $\X$ is a quantum cpo. By Scott continuity of $J_\X$ we obtain $G_n=J_\X\circ K_n\nearrow J_\X\circ K_\infty,$
and since also $G_n\nearrow G_\infty$, it follows from  Lemma \ref{lem:lim is sup} that $G_\infty=J_\X\circ K_\infty$. Lemma \ref{lem:factors through a subset} now implies that $\ran G_\infty\subseteq\X$, so $\X$ is a sub-cpo of $\Y$.
\end{proof}

\begin{proposition}\label{prop:finite subset is subqcpo}
Let $(\Y,S)$ be a quantum cpo and let $\X\subseteq\Y$ be a finite subset. Then $\X$ is a sub-cpo of $\Y$.
\end{proposition}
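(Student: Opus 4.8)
The plan is to deduce this immediately from the two preceding results, since the technical content has already been established. By Proposition \ref{prop:subqcpo}, the subposet $\X \subseteq \Y$ equipped with the induced order $R = J_\X^\dag \circ S \circ J_\X$ is a sub-cpo of $(\Y,S)$ precisely when $(\X,R)$ is itself a quantum cpo \emph{and} the inclusion $J_\X\: \X \to \Y$ is Scott continuous. So it suffices to verify these two conditions under the hypothesis that $\At(\X)$ is finite.

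For the first condition, I would invoke Proposition \ref{prop:finite qposet is qcpo}: a finite quantum set equipped with any order is a quantum cpo. The induced order $R$ on the subset $\X$ is reflexive, transitive, and antisymmetric by \cite[Definition~2.2]{KLM20} and the surrounding discussion, so Proposition \ref{prop:finite qposet is qcpo} applies directly to $(\X,R)$.

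For the second condition, recall that every inclusion function is monotone by \cite[Lemma~2.1]{KLM20}, so $J_\X\: (\X,R) \to (\Y,S)$ is a monotone function whose domain is a finite quantum poset. The second assertion of Proposition \ref{prop:finite qposet is qcpo} states exactly that any monotone function out of a finite quantum poset into an arbitrary quantum poset is Scott continuous; hence $J_\X$ is Scott continuous. Combining these two facts via Proposition \ref{prop:subqcpo} yields the claim.

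I do not anticipate any real obstacle: this proposition is essentially a corollary packaging Propositions \ref{prop:subqcpo} and \ref{prop:finite qposet is qcpo} together, and the only point needing (trivial) care is confirming that the induced order on a subset is genuinely an order so that Proposition \ref{prop:finite qposet is qcpo} is applicable, which is immediate.
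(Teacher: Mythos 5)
Your proof is correct and takes exactly the same route as the paper, which simply cites Propositions \ref{prop:subqcpo} and \ref{prop:finite qposet is qcpo}; your version just spells out the details of how those two results combine. No issues.
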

\begin{proof}
This follows directly from Propositions \ref{prop:finite qposet is qcpo} and \ref{prop:subqcpo}.
\end{proof}

Classically, if $Y$ is a cpo, then the sub-cpos of $Y$ form the closed sets of the \emph{$d$-topology on $Y$}~\cite{Zhao-Fan}. This topology also has an intrinsic description: a subset $X\subseteq Y$ is $d$-closed iff $X$ is closed under suprema of ascending sequences, and the $d$-closure of a subset is the intersection of the sub-cpos containing the subset. The fact that quantum sets are not well-pointed makes it challenging to describe  quantum topologies, but the next results allow use to define the `quantum' $D$-closed sets of a quantum cpo, along with the $D$-closure of a subset.

\begin{lemma}\label{lem:sub-cpo generated by subset}
Let $\Y$ be a quantum cpo, and let $\{\X_\alpha\}_{\alpha\in A}$ be a collection of sub-cpos of $\Y$. Then $\X=\bigcap_{\alpha\in A}\X_\alpha$ is a sub-cpo of $\Y$. In particular, if $\Z\subset\Y$ is non-empty, and $\{\X_\alpha\mid \alpha \in A\}$ is the family of sub-cpos of $\Y$ containing $\Z$, then $\X = \bigcap_\alpha \X_\alpha$ is a non-empty sub-cpo of $\Y$ that contains $\Z$, so it is the smallest such sub-cpo.
\end{lemma}
\begin{proof}
Let $K_1\sqsubseteq K_2\sqsubseteq\cdots\sqsubseteq K_\infty:\H\to\Y$ satisfy $K_n\nearrow K_\infty$ such that $\ran K_n\subseteq\X$ for each $n\in\NN$. Fix $\alpha\in A$. Then for each $n\in\NN$, we have $\ran K_n\subseteq\X\subseteq\X_\alpha$, and since $\X_\alpha$ is a sub-cpo of $\Y$, it follows that $\ran K_\infty\subseteq\X_\alpha$. Since $\alpha\in A$ is arbitrary, we conclude that $\ran K_\infty\subseteq\bigcap_{\alpha\in A}\X_\alpha=\X$, whence $\X$ is indeed a sub-cpo of $\Y$.

The proof of the second assertion follows because $\Y$ is a sub-cpo containing $\Z$, so the family of such sub-cpos of $\Y$ is non-empty. Since the intersection of this family  clearly contains $\Z$, it follows that the intersection is the smallest sub-cpo of $\Y$ containing $\Z$.
\end{proof}

Thus the following concept is well defined:
\begin{definition}[Sub-cpo $\overline\X$ generated by $\X$]\label{def:sub-cpo generated by subset}
Let $\Y$ be a quantum cpo and let $\X\subseteq\Y$. Then we define
\[\overline\X=\bigcap\{\Z\subseteq\Y:\Z\text{ is a sub-cpo of }\Y\text{ containing }\X\},\]
which we call the \emph{sub-cpo of} $\Y$ \emph{generated by} $\X$ or the \emph{$D$-closure} of $\X$ in $\Y$.
\end{definition}

\subsection{The $\D$-completion of quantum posets and cocompleteness of $\qCPO$}
Let $\POS_{\mathrm{SC}}$ be the category of ordinary posets and Scott continuous maps. Then the inclusion $\CPO\to\POS_{\mathrm{SC}}$ has a left adjoint, which assigns to each poset $X$ its \emph{$D$-completion} \cite{Zhao-Fan}. Instrumental for this completion is the \emph{$D$-topology} on a cpo $Y$; the closed subsets of this topology are precisely the sub-cpos of $Y$. As a consequence, the $D$-closure of a subset $S$ of $Y$ is the intersection of all sub-cpos of $Y$ containing $S$. 

We will generalize 
%the $D$-completion 
this construction to the quantum realm, i.e., if $\qPOS_\mathrm{SC}$ denotes the category of quantum posets and Scott continuous maps, we show that the inclusion $\I:\qCPO\to\qPOS_\mathrm{SC}$ has a left adjoint $\D$. It turns out that the results required to show that $\D$ exists are almost identical to the results required for the cocompleteness of $\qCPO$. 

We first need to show that $\qCPO$ is \emph{wellpowered}. This means that for each quantum cpo $\Y$, the class of all equivalence classes of monomorphisms in $\qCPO$ with codomain $\Y$ under a suitable equivalence relation $\sim$ forms a set. Specifically, for monomorphisms $F:\X\to\Y$ and $F':\X'\to \Y$, we define $F\sim F'$ if there exists an isomorphism $K:\X\to\X'$ in $\qCPO$ such that $F'=F\circ K$. In this case, $F$ and $F'$ and are said to be \emph{isomorphic}.

\begin{proposition}\label{prop:monos in qCPO}Monomorphisms in $\mathbf{qPOS}_\mathrm{SC}$ and in $\mathbf{qCPO}$ are precisely the injective Scott continuous maps.
\end{proposition}
\begin{proof}
The proof of this statement is almost identical to the proof of \cite[Lemma~3.1]{KLM20}, which states that monomorphisms in $\mathbf{qPOS}$ are precisely the injective monotone maps. In particular, the proof that any injective monotone morphism is a monomorphism is identical. For the other direction, let $(\X,R)$ be a quantum poset and let $(\Y,S)$ be a quantum cpo. Let $M:(\X,R)\to(\Y,S)$ be a Scott continuous map that is not injective. Just as in \cite[Lemma~3.1]{KLM20}, it follows that $M:\X\to\Y$ is not a monomorphism in $\mathbf{qSet}$, hence there is a quantum set $\W$ and distinct functions $F,G:\W\to\X$ such that $M\circ F=M\circ G$. By Example \ref{ex:trivially ordered set is qcpo}, $(\W,I_\W)$ is a quantum cpo, and it follows from Example \ref{ex:function with trivially ordered domain is Scott continuous} that the functions $F,G:(\W,I_\W)\to(\X,R)$ are Scott continuous, which implies that $M$ is not an monomorphism in $\mathbf{qPOS}_{\mathrm{SC}}$. If $(\X,R)$ is a quantum cpos, it follows from the same argumentation that $M$ is not a monomorphism in $\qCPO$.  
\end{proof}

\begin{proposition}\label{prop:qCPO is wellpowered}
$\qCPO$ is wellpowered.
\end{proposition}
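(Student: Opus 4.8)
The plan is to reduce the subobjects of a fixed quantum cpo $(\Y,S)$ to data indexed by a set, exploiting the description of monomorphisms in $\qCPO$ as the injective Scott continuous functions (Proposition \ref{prop:monos in qCPO}) together with the range factorization in $\qSet$. First I would take an arbitrary monomorphism $M\:(\X,R)\to(\Y,S)$; since $M$ is injective, the range factorization $M=J_{\ran M}\circ\overline M$ of \cite[Definition~3.2]{KLM20} has $\overline M\:\X\to\ran M$ a \emph{bijection} in $\qSet$. The key move is then to transport the order: on $\Z:=\ran M\subseteq\Y$ I would put $R':=\overline M\circ R\circ\overline M^\dag$, so that $\overline M\:(\X,R)\to(\Z,R')$ becomes an order isomorphism (checking that $R'$ is reflexive, transitive, antisymmetric and that $R=\overline M^\dag\circ R'\circ\overline M$ is routine, using $\overline M^\dag\circ\overline M=I_\X$ and $\overline M\circ\overline M^\dag=I_\Z$).

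Next I would verify that $(\Z,R')$ is again a quantum cpo: given a monotone sequence $K_n\:\H\to\Z$, the sequence $\overline M^\dag\circ K_n$ is monotone and has a limit $L$ in the quantum cpo $\X$, and then $K_n=\overline M\circ(\overline M^\dag\circ K_n)\nearrow\overline M\circ L$ since order isomorphisms are Scott continuous (Lemma \ref{lem:order iso is Scott continuous}). The same lemma gives that $\overline M^\dag\:(\Z,R')\to(\X,R)$ is Scott continuous, hence by Lemma \ref{lem:composition is Scott continuous} so is $J_\Z=M\circ\overline M^\dag\:(\Z,R')\to(\Y,S)$; being injective, $J_\Z$ is a monomorphism in $\qCPO$, and since $M=J_\Z\circ\overline M$ with $\overline M$ an isomorphism in $\qCPO$, the monomorphisms $M$ and $J_\Z$ represent the same subobject.

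This would show that every subobject of $(\Y,S)$ is represented by an inclusion $J_\Z\:(\Z,R')\to(\Y,S)$ for some subset $\Z\subseteq\Y$ and some order $R'\in\qRel(\Z,\Z)$ making $(\Z,R')$ a quantum cpo with $J_\Z$ Scott continuous. To finish I would observe that such pairs $(\Z,R')$ form a set: the subsets $\Z$ of $\Y$ are indexed by subsets of $\At(\Y)$, and for each $\Z$ the homset $\qRel(\Z,\Z)$ is a set since $\qRel$ is locally small. The class of subobjects of $(\Y,S)$ is then the image of this set under $(\Z,R')\mapsto[J_\Z]$, hence a set, and $\qCPO$ is wellpowered.

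I expect the only real subtlety to be this order transport, together with the temptation to skip it. A monomorphism in $\qCPO$ need not be an order embedding -- just as in $\CPO$, e.g.\ the identity-on-underlying-set function from a flatly ordered set into a chain -- so the induced order $J_{\ran M}^\dag\circ S\circ J_{\ran M}$ on $\ran M$ can differ from $R'$, and $\ran M$ with its induced order need not be a quantum cpo at all. Carrying the order $R$ across the bijection $\overline M$ is exactly what both confines each subobject to a member of a set and retains the correct order structure, so this is the step that must be handled with care rather than by reflex.
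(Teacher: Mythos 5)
Your argument is correct, but it takes a different route from the paper. The paper's proof is a two-line reduction to the wellpoweredness of $\qPOS$ (\cite[Theorem~6.4]{KLM20}): it observes that monomorphisms in $\qCPO$ are monomorphisms in $\qPOS$ (combining Proposition \ref{prop:monos in qCPO} with \cite[Lemma~3.1]{KLM20}) and that, by Lemma \ref{lem:order iso is Scott continuous}, two monomorphisms between quantum cpos are equivalent in $\qCPO$ if and only if they are equivalent in $\qPOS$, so the subobjects of $(\Y,S)$ in $\qCPO$ inject into its subobjects in $\qPOS$, which form a set. You instead construct an explicit canonical representative for each subobject --- the inclusion of a subset $\Z\subseteq\Y$ carrying the order transported along the bijection $\overline M$ --- and count the pairs $(\Z,R')$ directly; this essentially inlines, inside $\qCPO$, the argument behind the cited result for $\qPOS$. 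Each step checks out: $\overline M$ is indeed a bijection because $M^\dag\circ M=I_\X$ forces $\overline M^\dag\circ\overline M=I_\X$; the transported relation $R'$ is an order making $\overline M$ an order isomorphism; Lemmas \ref{lem:order iso is Scott continuous} and \ref{lem:composition is Scott continuous} give that $(\Z,R')$ is a quantum cpo and $J_\Z=M\circ\overline M^\dag$ is Scott continuous; and the counting at the end is the same device the paper uses in Proposition \ref{prop:ext-cowellpowered} (local smallness of $\qRel$). Your approach is longer but self-contained and yields a concrete description of the subobjects; the paper's is shorter but leans on the external result for $\qPOS$. Your closing caveat --- that a monomorphism in $\qCPO$ need not be an order embedding, so the transported order can differ from the induced order $J_\Z^\dag\circ S\circ J_\Z$ and the latter need not make $\Z$ a quantum cpo --- is exactly the right thing to flag; it is why the order must be carried across $\overline M$ rather than read off from $\Y$.
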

\begin{proof}
Let $(\Y,S)$ be a quantum cpo, and let $F:(\X,R)\to(\Y,S)$ be a monomorphism in $\qCPO$. By \cite[Lemma~3.1]{KLM20} and Proposition \ref{prop:monos in qCPO} it follows that any monomorphism in $\qCPO$ is a monomorphism in the ambient category $\qPOS$. Moreover, by Lemma \ref{lem:order iso is Scott continuous}, any isomorphism in $\qCPO$ is an isomorphism in $\qPOS$, and any isomorphism in $\qPOS$ between objects in $\qCPO$ is an isomorphism in $\qCPO$. As a consequence, the class $\aA$ of monomorphisms in $\qCPO$ with codomain $\Y$ is a subclass of the class $\bB$ of monomorphisms in $\qPOS$ with codomain $\Y$. Since isomorphisms in $\qCPO$ coincide with isomorphisms in $\qPOS$, it follows that $\aA/\sim$ is a subclass of $\bB/\sim$, which is a set by \cite[Theorem~6.4]{KLM20}. We conclude that $\aA/\sim$ must also be a set, whence $\qCPO$ is wellpowered.
\end{proof}

\begin{proposition}\label{prop:range is finite}
Let $\X$ and $\Y$ be quantum sets, and let $F\colon \X \to \Y$ be a function. If $\At(\X)$ is finite, then $\At(\ran F)$ is finite. If $\At(\X)$ is infinite, then $\mathrm{card}(\At(\ran F))\leq \mathrm{card}(\At(\X))$.
\end{proposition}

\begin{proof}
We factor $F$ through its range $F = J_F \circ \overline F$ \cite[Definition~3.2]{KLM20}. The function $\overline F\: \X \to \ran F$ is surjective, so $\overline F^\star\: \ell^\infty(\ran F) \to \ell^\infty(\X) = \bigoplus_{X\atomof\X}L(X)$ is an injective unital normal $*$-homomorphism as follows from Theorem \ref{thm:qSet} and \cite{Kornell18}*{Proposition 8.1}.

Now, $\At(\X)$ is finite if and only if $\ell^\infty(\X)=\bigoplus_{X\in\At(\X)}L(X)$ is finite-dimensional, hence if $\At(\X)$ is finite, then the injectivity of $\overline F^\star$ forces $\ell^\infty(\ran F)$ to be finite-dimensional, so $\At(\ran F)$ must be finite, too. For the second statement, assume that $\At(\X)$ is infinite. We note that the dimension of each $X\in\At(\X)$ is equal to the maximum cardinality of any set of pairwise orthogonal projections in $L(X)$. Then the maximum cardinality of any set of pairwise orthogonal projections in $\ell^\infty(\X)=\bigoplus_{X\in\At(\X)}L(X)$ is $\sum_{X\in\At(\X)}\dim(X)$, which equals the cardinality of $\At(\X)$ because every atom is finite-dimensional. Now, since injective $*$-homomorphisms map sets of pairwise orthogonal projections to sets or pairwise orthogonal projections, the second statement follows from the injectivity of $\overline{F}^\star$. 
\end{proof}

We next generalize the $D$-closure of a subset to the quantum setting.
Since we lack topological tools, we first show we can prove statements about the $D$-closure by transfinite induction. We say two functions $F:\W\to\Y$ and $G:\V\to\Y$ with the same codomain are \emph{equivalent} if $F=\V\circ K$ for some bijection $K:\W\to\V$; it is easy to see that equivalent functions have the same range. The fact that up to isomorphism there are only countably many atomic quantum sets then assures that $\X_{\alpha+1}$ in the following lemma is well defined. 

\begin{lemma}\label{lem:generated subqcpo}
Let $\Y$ be a quantum cpo, and let $\X\subseteq\Y$.
Define $\X_0=\X$, and for an ordinal $\alpha$, define $\X_{\alpha+1} = \bigcup_{K_\infty\in {\mathcal K}} \ran K_\infty$ where $\mathcal K$ is the family of all  $K_\infty:\H\to\Y$ satisfying there is an atomic quantum set $\H$ and a sequence $K_n\nearrow K_\infty\:\H\to\Y$ with $\ran K_n\subseteq\X_{\alpha}$ for each $n\in\NN$.
For each limit ordinal $\lambda$ define
\[\X_\lambda=\bigcup_{\alpha<\lambda}\X_\alpha.\]
Then 
\begin{itemize}
    \item[(a)] $\X_\alpha\subseteq\X_\beta$ for ordinals $\alpha$ and $\beta$ such that $\alpha<\beta$;
    \item[(b)] $\overline\X=\X_{\omega_1}$.
    \end{itemize}
\end{lemma}

\begin{proof}
Firstly, for any successor ordinal $\alpha+1$, we have $\X_\alpha\subseteq\X_{\alpha+1}$: let $X\atomof\X_\alpha$. Then $\H=\Q\{X\}$ is atomic. Consider the constant sequence $K_n:\H\to\Y$ given by $K_n=J_\X$ for $n\in\NN$. Then $\ran K_n=\Q\{X\}\subseteq\X_\alpha$, and $K_n\nearrow J_X$. Hence $\Q\{X\}=\ran J_X\subseteq\X_{\alpha+1}$. Given a limit ordinal $\lambda$, we clearly have $\X_\alpha\subseteq\X_\lambda$ for each $\alpha<\lambda$, whence
for any two ordinals $\alpha$ and $\beta$, we have $\alpha<\beta$ implies $\X_\alpha\subseteq\X_\beta$.

Since $\X_0=\X$, it follows that $\X\subseteq\X_{\omega_1}$. We show that $\X_{\omega_1}$ is a sub-cpo of $\Y$. So let $\H$ be an atomic quantum set and let $K_1\sqsubseteq K_2\sqsubseteq\cdots:\H\to\X_{\omega_1}$ be a monotonically ascending sequence of functions, and note that $\ran K_n\subseteq\X_{\omega_1}=\bigcup_{\alpha<\omega_1}\X_\alpha$. Since $\Y$ is a cpo, the sequence $K_n$ has a limit $K_\infty$. We show $\ran K_\infty \subseteq \X_{\omega_1}$.

Fix $n\in\NN$ and let $X\atomof\ran K_n$. Then $X\atomof\X_{\beta_X}$ for some $\beta_X<\omega_1$. Because $\H$ consists of a single atom, the range of $K_n$ consists of finitely many atoms (Proposition \ref{prop:range is finite}).
It follows that $\alpha_n=\max\{\beta_X:X\atomof\ran K_n\}$ exists and is smaller than $\omega_1$. Thus $\ran K_n\subseteq\X_{\alpha_n}$. Since $\alpha_n<\omega_1$, $\alpha_n$ is countable, and then $\alpha:=\sup_{n\in\NN}\alpha_n=\bigcup_{n\in\NN}\alpha_n$ is countable, so $\alpha<\omega_1$. Hence, $\ran K_n\subseteq\X_\alpha$ for each $n\in\NN$, so $\ran K_\infty\subseteq\X_{\alpha+1}$, and since also $\alpha+1<\omega_1$, it follows that $\ran K_\infty\subseteq\X_{\omega_1}$. Hence $\X_{\omega_1}$ is indeed a sub-cpo of $\Y$, and since it contains $\X$, it follows that $\overline\X\subseteq\X_{\omega_1}$.

For the reverse inclusion, $\X_0 = \X\subseteq\overline\X$ by Lemma~\ref{lem:sub-cpo generated by subset}. We claim that for an ordinal $\alpha$, if $\X_\alpha\subseteq\overline\X$ then $\X_{\alpha+1}\subseteq\overline\X$. Indeed, assume $\X_\alpha\subseteq\overline\X$. Let $\H$ be an atomic quantum set, and let $K_1\sqsubseteq K_2\sqsubseteq\cdots:\H\to\Y$ be a sequence with limit $K_\infty$ and with $\ran K_n\subseteq\X_\alpha$ for each $n\in\NN$. Then $\ran K_n\subseteq\overline\X$ for each $n\in\NN$, and since $\bar X$ is a sub-cpo of $\Y$, it follows that $\ran K_\infty\subseteq\overline\X$, whence $\X_{\alpha+1}\subseteq\overline\X$. 

If $\lambda$ is a limit ordinal such that $\X_\alpha\subseteq\overline\X$ for each $\alpha<\lambda$, it follows by definition of $\X_\lambda$ that $\X_\lambda\subseteq\overline\X$. Thus $\X_{\omega_1}\subseteq\overline\X$ by ordinal induction. 
\end{proof}

%{\color{blue}\subsection{(Epi, Mono) factorizations in $\mathbf{qCPO}$}}
Our next goal is to show that any Scott continuous function $F$ between quantum cpos can be written as $F=M\circ E$ for some monomorphism $M$ and some epimorphism $E$. We already characterized monomorphisms, hence we proceed with investigating epimorphisms.

\begin{lemma}\label{lem:dense range is epi}
Let $\X$ be a quantum poset, let $\Y$ be a quantum cpo, and let $E:\X\to\Y$ be Scott continuous. If  $\overline{\ran E}=\Y$, then $E$ is an epimorphism in $\qPOS_{\mathrm{SC}}$. If $\X$ is a quantum cpo, then $E$ is also an epimorphism in $\qCPO$.
\end{lemma}
\begin{proof}
First assume that $\overline{\mathrm{ran}(E)}=\Y$. Let $(\Z,T)$ be a quantum poset and let $F,G:\Y\to\Z$ be Scott continuous functions such that $F\circ E=G\circ E$.

Write $\W=\mathrm{ran}(E)$. We claim that $F$ and $G$ coincide on $\W$, i.e., $F\circ J_\W=G\circ J_\W$, where $J_\W:\W\to\Y$ is the inclusion. Indeed, by Lemma \ref{lem:factors through a subset}, there is a surjective function $K:\X\to\W$ such that $E=J_\W\circ K$.
Then 
\[F\circ J_\W=F\circ J_\W\circ K\circ K^\dag=F\circ E\circ K^\dag=G\circ E\circ K^\dag=G\circ J_\W\circ K\circ K^\dag=G\circ J_\W,\]
where $K\circ K^\dag=I_\W$ by surjectivity of $K$.

We recall Lemma \ref{lem:generated subqcpo} and follow the same notation. Hence let $\W_0=\W$ and  define $\W_\lambda=\bigcup_{\alpha<\lambda}\W_\alpha$ for each limit ordinal $\lambda$. Finally, 
for any ordinal $\alpha$, define
$\W_{\alpha+1}= \bigcup_{K_\infty\in {\mathcal K}} \ran K_\infty$ where $\mathcal K$ is the family of all $K_\infty:\H\to \Y$, with $\H$ an atomic quantum set, for which there is a sequence $K_1\sqsubseteq K_2\sqsubseteq\cdots:\H\to\Y$ with $K_n\nearrow K_\infty$, and $\ran K_n\subseteq\W_{\alpha}$ for each $n\in\NN$. 

Let $\beta$ be an ordinal, and assume that $F\circ J_{\W_\alpha}=G\circ J_{\W_\alpha}$ for each $\alpha<\beta$. We aim to show that also $F\circ J_{\W_\beta}=G\circ J_{\W_\beta}$. First assume that $\beta$ is a successor ordinal, so $\beta=\alpha+1$. 
Let $Y\atomof\W_\beta$ and $Z\atomof\Z$. Then $Y\atomof\ran K_\infty$ for some atomic quantum set $\H$ and some function $K_\infty:\H\to\Y$ for which there is a sequence $K_1\sqsubseteq K_2\sqsubseteq\cdots:\H\to\Y$ with $K_n\nearrow K_\infty$ and $\ran K_n\subseteq\W_\alpha$.
The last inclusion implies there is a function $\tilde K_n:\H\to\W_\alpha$ such that $K_n=J_{W_\alpha}\circ \tilde K_n$ (Lemma \ref{lem:factors through a subset}). Let $\V=\ran K_\infty$. Then there is a surjective function $\overline K_\infty:\H\to\V$ such that $K_\infty=J_{\V}\circ \overline K_\infty$. 

Since $K_n\nearrow K_\infty$, and $F$ and $G$ are Scott continuous continuous, we have
\begin{align*}
F\circ K_n\nearrow F\circ K_\infty & =F\circ J_{\V}\circ \overline K_\infty,\quad \mathrm{and}\\
G\circ K_n\nearrow G\circ K_\infty & =G\circ J_{\V}\circ \overline K_\infty.
\end{align*}
Since $F\circ J_{\W_\alpha}=G\circ J_{\W_\alpha}$ by assumption, we obtain
$F\circ K_n=F\circ J_{\W_\alpha}\circ \tilde K_n=G\circ J_{\W_\alpha}\circ \tilde K_n=G\circ K_n,$
and since limits are unique, we obtain
$F\circ J_{\V}\circ\overline K_\infty =G\circ J_{\V}\circ\overline K_\infty,$
and the surjectivity of $\overline K_\infty$ then implies
\[F\circ J_\V=F\circ J_\V\circ \overline K_\infty\circ\overline K_\infty^\dag=G\circ J_\V\circ \overline K_\infty\circ\overline K_\infty^\dag=G\circ J_\V.\]
Since $\V=\ran K_\infty\subseteq\W_\beta$, and $Y\atomof\V$, we obtain
\[ (F\circ J_{\W_\beta})(Y,Z)=F(Y,Z)=(F\circ J_\V)(Y,Z)=(G\circ J_\V)(Y,Z)=G(Y,Z)=(G\circ J_{\W_\beta})(Y,Z),\]
and since $Y\atomof\W_\beta$ and $Z\atomof\Z$ are arbitrary, it follows that $F\circ J_{\W_\beta}=G\circ J_{\W_\beta}$.

Now assume that $\beta$ is a limit ordinal. Let $Y\atomof\W_\beta$ and $Z\atomof\Z$. Then $Y\atomof\W_\alpha$ for some $\alpha<\beta$, whence
\[(F\circ J_{\W_\beta})(Y,Z)=F(Y,Z)=(F\circ J_{\W_\alpha})(Y,Z)=(G\circ J_{\W_\alpha})(Y,Z)=G(Y,Z)=(G\circ J_{\W_\beta})(Y,Z),\]
and since $Y\atomof\W_\beta$ and $Z\atomof\Z$ are arbitrary, it follows that $F\circ J_{\W_\beta}=G\circ J_{\W_\beta}$.

Now by Lemma \ref{lem:generated subqcpo}, we have $\bar\W=\W_{\omega_1}$, so we have $F\circ J_{\W_{\omega_1}}=G\circ J_{\W_{\omega_1}}$. By assumption $\Y=\bar\W$, so $F\circ J_\Y=G\circ J_\Y$, but since $J_\Y:\Y\to\Y$ is the identity $I_\Y$ on $\Y$, we obtain $F=G$. So $E$ is indeed an epimorphism.
\end{proof}

\begin{proposition}\label{prop:epi-mono in qCPO} 
Let $(\X,R)$ be a quantum poset, let $(\Y,S)$ be a quantum cpo, and let $F:\X\to\Y$ be Scott continuous. Let $\Z\subseteq\Y$ be a subset such that $\ran F\subseteq\Z$ and equip $\Z$ with the relative order $J_\Z^\dag\circ S\circ J_\Z$ (cf. \cite[Definition~2.2]{KLM20}). Then:
\begin{itemize}
    \item[(a)] The corestriction $F|^\Z:\X\to\Z$ of $F$ is a Scott continuous function;
    \item[(b)] If $\Z=\overline{\ran F}$, then $\overline{\ran F|^\Z}=\Z$.
\end{itemize}
\end{proposition}
Note that $\Z$ does not need to be a sub-cpo of $\Y$, even if $\Z=\ran F$.
\begin{proof}
 Since $\ran F\subseteq\Z$ it follows from Lemma \ref{lem:factors through a subset} that $F|^\Z$ is a function such that $J_\Z\circ F|^\W=F$, so we proceed with showing that it is Scott continuous. So for an atomic quantum set $\H$, let  $K_1\sqsubseteq K_2\sqsubseteq\cdots\sqsubseteq K_\infty\:\H\to\X$ be functions satisfying $K_n\nearrow K_\infty$. Then
\begin{align*}
    \bigwedge_{n\in\NN}J_\Z^\dag\circ S\circ J_\Z\circ F|^\Z\circ K_n & =\bigwedge_{n\in\NN}J_\Z^\dag\circ S\circ F\circ K_n=J_\Z^\dag\circ \bigwedge_{n\in\NN}S\circ F\circ K_n\\
    & =J_\Z^\dag\circ S\circ F\circ K_\infty=J_\Z^\dag\circ S\circ J_\Z\circ F|^\Z\circ K_\infty, 
\end{align*}
where we used \cite[Proposition~A.6]{KLM20} in the second equality, and Scott continuity of $F$ in the penultimate equality. We conclude that $F|^\Z$ is indeed Scott continuous. 

For (b) assume that $\Z=\overline{\ran F}$. Then $\ran F\subseteq\Z$, so it follows from Lemma \ref{lem:factors through a subset} that $\ran F=\ran F|^\Z$. Hence, $\overline{\ran F^\Z}=\overline{\ran F}=\Z$.
\end{proof}

\begin{corollary}(Epi-Mono Factorization in $\qCPO$)\label{cor:epi-mono factorization in qCPO}
Let $F:\X\to\Y$ be a Scott continuous function from a quantum poset $\X$ to a quantum cpo $\Y$. Let $\Z=\overline{\ran F}$. Then $\Z$ is a sub-cpo of $\Y$, $J_\Z$ is a monomorphism in $\qPOS_\mathrm{SC}$, and the corestriction $E:=F|^\Z$ is an epimorphism in $\qPOS_\mathrm{SC}$ such that $\overline{\ran E}=\Z$. If $\X$ is a quantum cpo, then $J_\Z$ is a monomorphism in $\qCPO$ and $E$ is an epimorphism in $\qCPO$.
\end{corollary}
\begin{proof}
By construction $\Z$ is a sub-cpo of $\Y$, whence $J_\Z$ is Scott continuous by Proposition \ref{prop:subqcpo}. Since $J_\Z$ is injective, it is a monomorphism in $\qPOS_\mathrm{SC}$ by Proposition \ref{prop:monos in qCPO}. Note that if $\X$ is a quantum cpo, the same proposition assures that $J_\Z$ is a monomorphism in $\qCPO$. Now, Proposition \ref{prop:epi-mono in qCPO} implies that $E$ is Scott continuous function. It now follows from Lemma \ref{lem:dense range is epi} that $E$ is an epimorphism in $\qPOS_\mathrm{SC}$. The same lemma assures that $E$ is an epimorphism in $\qCPO$ if $\X$ is a quantum cpo. 
\end{proof}

\begin{lemma}\label{lem:card lemma 1}
Let $\X$ and $\Y$ be finite quantum sets. Then $\mathrm{card}(\qSet(\X,\Y))\leq 2^{\aleph_0}$.
\end{lemma}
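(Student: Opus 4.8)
The plan is to bound $\qSet(\X,\Y)$ by the larger homset $\qRel(\X,\Y)$, since every quantum function is in particular a binary relation. By the definition of $\qRel = \mathrm{Matr}(\mathbf{FdOS})$, a binary relation $R\colon \X \to \Y$ amounts to nothing more than a choice, for each pair of atoms $X \atomof \X$ and $Y \atomof \Y$, of a linear subspace $R(X,Y)$ of the vector space $L(X,Y)$. Thus, writing $\mathrm{Sub}(V)$ for the collection of all linear subspaces of a vector space $V$, there is a bijection
\[ \qRel(\X,\Y) \;\cong\; \prod_{X \atomof \X,\ Y \atomof \Y} \mathrm{Sub}(L(X,Y)). \]
Since $\X$ and $\Y$ are finite, the index set $\At(\X) \times \At(\Y)$ of this product is finite, so it suffices to prove that each factor $\mathrm{Sub}(L(X,Y))$ has cardinality at most $2^{\aleph_0}$.

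For this, fix atoms $X \atomof \X$ and $Y \atomof \Y$ and note that $L(X,Y)$ is a complex vector space of finite dimension $d := \dim(X)\dim(Y)$, so $\mathrm{card}(L(X,Y)) = \mathrm{card}(\CC^d) \leq 2^{\aleph_0}$. Every linear subspace of $L(X,Y)$ is the linear span of some finite tuple of its vectors, so the assignment $(v_1, \dots, v_k) \mapsto \mathrm{span}\{v_1, \dots, v_k\}$ defines a surjection from $\bigcup_{k \in \NN} L(X,Y)^k$ onto $\mathrm{Sub}(L(X,Y))$. The domain of this surjection has cardinality at most $\aleph_0 \cdot 2^{\aleph_0} = 2^{\aleph_0}$, whence $\mathrm{card}(\mathrm{Sub}(L(X,Y))) \leq 2^{\aleph_0}$ as well.

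Combining these observations, $\mathrm{card}(\qSet(\X,\Y)) \leq \mathrm{card}(\qRel(\X,\Y)) \leq (2^{\aleph_0})^{\mathrm{card}(\At(\X)) \cdot \mathrm{card}(\At(\Y))} = 2^{\aleph_0}$, where the last equality uses that the exponent is a finite cardinal. There is no genuine obstacle here: the argument is elementary, and the only step requiring a moment's care is the cardinality count for the set of subspaces of a finite-dimensional complex vector space, which the spanning-tuple surjection above settles.
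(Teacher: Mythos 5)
Your proof is correct, but it takes a different route from the paper's. You bound $\qSet(\X,\Y)$ by the full homset $\qRel(\X,\Y)$ and count directly: a binary relation is a finite matrix of subspaces of finite-dimensional spaces $L(X,Y)$, and the set of subspaces of a finite-dimensional complex vector space has cardinality at most $2^{\aleph_0}$ via the spanning-tuple surjection. The paper instead invokes the duality $\qSet(\X,\Y)\cong\Mstar1(\ell^\infty(\Y),\ell^\infty(\X))$ from \cite{Kornell18}*{Theorem 7.4} and observes that the latter is a subset of the finite-dimensional vector space $L(\ell^\infty(\Y),\ell^\infty(\X))$, which already has cardinality at most $2^{\aleph_0}$. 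The paper's argument is shorter because it leans on machinery already established elsewhere in the development; yours is more self-contained, working only from the definition of $\qRel$ as $\mathrm{Matr}(\mathbf{FdOS})$, at the mild cost of the extra cardinality count for the set of subspaces. Both are sound, and your handling of the only delicate point (the cardinality of $\mathrm{Sub}(L(X,Y))$) is correct.
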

\begin{proof}
We have a bijection between $\qSet(\X,\Y)$ and $\mathbf{WStar}_\mathrm{HA}(\ell^\infty(\Y),\ell^\infty(\X))$ (cf. Theorem \ref{thm:qSet}), where the latter is a subspace of $L(\ell^\infty(\Y),\ell^\infty(\X))$. Since both $\X$ and $\Y$ are finite, it follows that $\ell^\infty(\X)$ and $\ell^\infty(\Y)$, and hence also $L(\ell^\infty(\Y),\ell^\infty(\X))$ are finite-dimensional. Any finite-dimensional space over $\CC$ is isomorphic as a vector space to the product of a finite number of copies of $\CC$, whose cardinality is $2^{\aleph_0}$. Hence the cardinality $\qSet(\X,\Y)$ can be at most $2^{\aleph_0}$, too.
\end{proof}

\begin{lemma}\label{lem:card lemma 2}
Let $\X$ be a finite quantum set, and let $\Y$ be an arbitrary quantum set. Then $\mathrm{card}(\qSet(\X,\Y))\leq 2^{\aleph_0}\cdot\mathrm{card}(\At(\Y)).$
\end{lemma}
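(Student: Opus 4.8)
The plan is to reduce to Lemma \ref{lem:card lemma 1} by factoring every function through its range. Given a function $F\:\X\to\Y$, recall that $F=J_{\ran F}\circ\overline F$ with $\overline F\:\X\to\ran F$ surjective, and that by Proposition \ref{prop:range is finite} the subset $\ran F\subseteq\Y$ is finite, since $\At(\X)$ is finite. The assignment $F\mapsto(\ran F,\overline F)$ is injective: from a finite subset $\Z\subseteq\Y$ together with a function $G\:\X\to\Z$ we recover at most one $F$, namely $F=J_\Z\circ G$. Hence $\qSet(\X,\Y)$ injects into the disjoint union $\coprod_{\Z}\qSet(\X,\Z)$, taken over the finite subsets $\Z\subseteq\Y$.

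It therefore suffices to bound the cardinality of this disjoint union. By Lemma \ref{lem:card lemma 1}, $\mathrm{card}(\qSet(\X,\Z))\leq 2^{\aleph_0}$ for every finite subset $\Z\subseteq\Y$, since both $\X$ and $\Z$ are finite. The finite subsets of $\Y$ correspond bijectively to the finite subsets of the ordinary set $\At(\Y)$; if $\At(\Y)$ is infinite there are exactly $\mathrm{card}(\At(\Y))$ of them, and if $\At(\Y)$ is (nonempty and) finite there are finitely many. In either case, provided $\At(\Y)\neq\emptyset$, the number of finite subsets of $\Y$ is at most $\aleph_0\cdot\mathrm{card}(\At(\Y))$. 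Combining these estimates,
\[
\mathrm{card}(\qSet(\X,\Y))\leq 2^{\aleph_0}\cdot\bigl(\aleph_0\cdot\mathrm{card}(\At(\Y))\bigr)=2^{\aleph_0}\cdot\mathrm{card}(\At(\Y)),
\]
using $2^{\aleph_0}\cdot\aleph_0=2^{\aleph_0}$. The degenerate case $\At(\Y)=\emptyset$ is immediate, as then $\qSet(\X,\Y)$ has at most one element.

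There is no serious obstacle here; the only point requiring a little care is the cardinal bookkeeping, namely counting the finite subsets of $\Y$ and absorbing the extra factor of $2^{\aleph_0}$, together with the trivial degenerate case. The substantive inputs — that the range of a function out of a finite quantum set is finite (Proposition \ref{prop:range is finite}), and that there are at most continuum-many functions between finite quantum sets (Lemma \ref{lem:card lemma 1}) — are already available.
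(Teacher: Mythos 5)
Your proof is correct and follows essentially the same route as the paper: factor each function through its (finite) range using Proposition \ref{prop:range is finite}, bound the functions into each finite subset by $2^{\aleph_0}$ via Lemma \ref{lem:card lemma 1}, count the finite subsets of $\Y$ by $\aleph_0\cdot\mathrm{card}(\At(\Y))$, and absorb the extra factor by cardinal arithmetic. Your write-up is in fact slightly more careful than the paper's, making the injectivity of $F\mapsto(\ran F,\overline F)$ and the degenerate case $\At(\Y)=\emptyset$ explicit.
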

\begin{proof}
Because $\X$ is finite, every function $F:\X\to\Y$ has a finite range (Proposition \ref{prop:range is finite}). The number of finite subsets of $\Y$ is $\aleph_0\cdot\mathrm{card}(\At(\Y))$. Multiplying with $2^{\aleph_0}$ obtained from Lemma \ref{lem:card lemma 1} gives us the cardinality of $\qSet(\X,\Y)$. Finally, we use that $\aleph_0\cdot 2^{\aleph_0}=2^{\aleph_0}$. 
\end{proof}

\begin{lemma}\label{lem:upper bound cardinality of sub-cpo generated by subset}
Let $\Y$ be a quantum cpo, and let $\X\subseteq\Y$ be a quantum subset. Then
\[\mathrm{card}(\At(\overline\X))\leq(\mathrm{card}(\At(\X)))^{\aleph_0}.\]
\end{lemma}
\begin{proof}
If $\X$ is finite, it is already a sub-cpo by Proposition \ref{prop:finite subset is subqcpo}, in which case $\overline\X=\X$, for which the statement clearly holds. 

Assume that $\X$ is infinite.
By Lemma \ref{lem:generated subqcpo} it is sufficient to show that
\begin{equation}\label{ineq:IH generated subqcpo}\mathrm{card}(\X_\alpha)\leq (\mathrm{card}(\At(\X)))^{\aleph_0}
\end{equation}
holds for each ordinal $\alpha\leq\omega_1$. Note that the same lemma assures that each $\X_\alpha$ is infinite. Clearly (\ref{ineq:IH generated subqcpo}) holds for $\alpha=0$.
Given an ordinal $\beta\leq \omega_1$ assume as an induction hypothesis that (\ref{ineq:IH generated subqcpo}) holds for each ordinal $\alpha<\beta$.
Assume first that $\beta$ is a succesor ordinal, so $\beta=\alpha+1$ for some ordinal $\alpha$. 
For each $d\in\NN$ fix an atomic quantum set $\H_d$ with single atom $H_d$ of dimension $d$ and let $S_d$ be the set of all sequences $K_1\sqsubseteq K_2\sqsubseteq\cdots:\H_d\to\Y$ such that $\ran K_n\subseteq X_\alpha$ for each $n\in\NN$. Given a function $K:\H_d\to\Y$ with $\ran K\subseteq X_\alpha$, we know that $K(H_d,Y)=0$ if $Y\atomof\Y$ is not an atom of $\X_\alpha$, whereas for each $X\atomof\X_\alpha\subseteq\Y$, $K(H_d,X)$ is a subspace of the finite-dimensional space $L(H_d,X)$, hence there are only finitely many possible choices for $K(H_d,X)$. Since $K$ is completely determined by  $K(H_d,X)$ for each $X\atomof\X_\alpha$, there at most $\mathrm{card}(\At(\X_\alpha))$ many functions $K:\H_d\to\X_\alpha$, so there are at most $2^{\aleph_0}\cdot \mathrm{card}(\At(\X_\alpha))$ elements in $S_d$. This means that 
\[\mathrm{card}\left(\bigcup_{d\in\NN}S_d\right)\leq \aleph_0\cdot 2^{\aleph_0}\mathrm{card}(\At(\X_\alpha))\leq \aleph_0\cdot 2^{\aleph_0}\cdot\mathrm{card}(\At(\X)^{\aleph_0}=\mathrm{card}(\At(\X))^{\aleph_0},\]
where we used the induction hypothesis in the last inequality. Let $X$ be an atom of $\X_\beta$. Then there is some atomic quantum set $\H$ and a monotonically ascending sequence of functions $K_1\sqsubseteq K_2\sqsubseteq\cdots:\H\to\Y$ with with limit $K_\infty$ such that $\ran K_n\subseteq\X_\alpha$ for each $n\in\NN$ and $X\atomof\ran K_\infty$. 
Let $d$ be the dimension of the atom $H$ of $\H$. Then there is a bijection $F:\H_d\to\H$, and it follows from Lemma \ref{lem:right multiplication is Scott continuous} that $(K_n\circ F)_{n\in\NN}$ is a monotonically ascending sequence in $\qSet(\H_d,\X_\alpha)$ with limit $K_\infty\circ F$. Thus $(K_n\circ F)_{n\in\NN}$ is an element of $S_d$, and clearly $\ran K_\infty=\ran K_\infty\circ F$, which is finite by Proposition \ref{prop:range is finite}.
Hence $\mathrm{card}(\X_\beta)\leq \aleph_0\cdot\mathrm{card}\left(\bigcup_{d\in\NN} S_d\right)$, from which it follows that (\ref{ineq:IH generated subqcpo}) also holds for $\beta$.

Now assume $\beta$ is a limit ordinal. Then (\ref{ineq:IH generated subqcpo}) hold for each $\alpha<\beta$, hence
\[\mathrm{card}(\At(\X_\beta))=\mathrm{card}\left(\bigcup_{\alpha<\beta}\X_\alpha\right)\leq2^{\aleph_0}\cdot\mathrm{card}(\At(\X_\alpha))\leq 2^{ \aleph_0}\cdot\mathrm{card}(\At(\X))^{\aleph_0}=\mathrm{card}(\At(\X))^{\aleph_0},\]
where we used that $\mathrm{card}(\omega_1)=\aleph_1\leq 2^{\aleph_0}$ in the first inequality (equality if the Continuum Hypothesis is true); the second inequality follows from the induction hypothesis. Hence (\ref{ineq:IH generated subqcpo}) holds for any ordinal $\alpha\leq\omega_1$.
\end{proof}

Let $G:\mathbf D\to\mathbf C$ be a functor. Given an object $X$ in $\mathbf C$, a pair $(f,Y)$ consisting of an object $Y\in\mathbf D$ and a morphism $f:X\to GY$ in $\mathbf C$ is called a \emph{G-structure with domain} $X$ \cite[Definition 8.30]{adameketall:joyofcats}. A $G$ structure $(f,Y)$ is \emph{generating} if for any parallel morphisms $r,s:Y\to Y'$ in $\mathbf D$, the equality $Gr\circ f=Gs\circ f$ implies $r=s$. A generating $G$-structure $(f,Y)$ with domain $X$ is  \emph{extremally generating} if every monomorphism $m:Y'\to Y$ in $\mathbf D$ and $G$-structure $(g,Y')$ with $f=Gm\circ g$ satisfy $m$ is an isomorphism in $\mathbf D$. Two $G$-structures $(f,Y)$ and $(f',Y')$ with the same domain $X$ are called \emph{isomorphic} if there is an isomorphism $k:Y\to Y'$ in $\mathbf D$ such that $Gk\circ f=f'$ \cite[Definition 8.34]{adameketall:joyofcats}. In this case, we write $(f,Y)\sim (f',Y')$. The relation $\sim$ on the class of $G$-structures with domain $X$ is an equivalence relation. We call $G$ \emph{(extremally) co-wellpowered} if for each object $X$ of $\mathbf C$, the class of all equivalence classes under $\sim$ of (extremally) generating $G$-structures with domain $X$ is a set \cite[Definition 8.37]{adameketall:joyofcats}.

Let $\mathbf D$ be a category. We recall that two epimorphisms $e:X\to Y$ and $e':X\to Y'$ in $\mathbf D$ are called \emph{isomophic} if there is an isomorphism $k:Y\to Y'$ such that $e'=k\circ e$, in which case we write $e\sim e'$. The relation $\sim$ is an equivalence relation on the class of all epimorphisms with domain $X$. Moreover, we recall that an epimorphism $e:X\to Y$ is called \emph{extremal} if for each monomorphism $m:Y'\to Y$ such that $e=m\circ g$ for some $g:X\to Y'$, it follows that $m$ is an isomorphism. We call $\mathbf D$ \emph{extremally co-wellpowered} if for each object $X\in\mathbf D$ the class of all equivalence classes under $\sim$ of epimorphisms with domain $X$ is a set.

Let $G:\mathbf D\to\mathbf C$ be a faithful extremally co-wellpowered functor. Then $\mathbf D$ is not necessarily extremally co-wellpowered \cite[Remark 8.39]{adameketall:joyofcats}. However, if $\mathbf D$ is a subcategory of $\mathbf C$, and $G$ is the inclusion, then for each object $X$ of $\mathbf D$, we clearly have that a morphism $f:X\to Y$ is an extremal epimorphism if and only if $(f,Y)$ is an extremally generating $G$-structure, and $f\sim f'$ if and only if $(f,Y)\sim(f',Y)$ for any other epimorphism $f':X\to Y'$.  Hence, we obtain:
\begin{lemma}\label{lem:ext-co-wellpowered-inclusion-implies-ext-co-wellpowered-domain}
    Let $\mathbf C$ be a category, and let $\mathbf D$ be a subcategory of $\mathbf C$. If the inclusion $G:\mathbf D\to\mathbf C$ is extremally co-wellpowered, so is $\mathbf D$.
\end{lemma}
The importance of these concepts lie in the following two theorems that we will use:
\begin{theorem}\cite[Proposition 12.5 \& Theorem 18.14]{adameketall:joyofcats}\label{thm:adjoint functor theorem}
Let $\mathbf D$ be a complete, wellpowered and extremally co-wellpowered category. Then a functor $G:\mathbf D\to\mathbf C$ has a left adjoint if and only if it is extremally co-wellpowered.
\end{theorem}

\begin{theorem} \cite[Theorem 5.11]{NAKAGAWA1989563}\label{thm:coequalizers}
A complete, wellpowered and extremally co-wellpowered category  has all coequalizers.
\end{theorem}

\begin{proposition}\label{prop:extremal epis have d-dense image}
Let $\I:\mathbf{qCPO}\to\mathbf{qPOS}_\mathrm{SC}$ be the inclusion. Let $E:(\X,R)\to\I(\Y,S)$ be an extremally generating $\I$-structure. Then $\overline{\ran E}=\Y$.
\end{proposition}
\begin{proof}
By definition of an $\I$-structure, $(\Y,S)$ is a quantum cpo, and $E:(\X,R)\to (\Y,S)$ is Scott continuous. Let $\Z:=\overline{\ran E}$. By definition, $\Z$ is a sub-cpo of $\Y$, hence it follows from Proposition \ref{prop:subqcpo} that $J_\Z$ is Scott continuous. Since $J_\Z$ is injective, it follows from Proposition \ref{prop:monos in qCPO} that it is a monomorphism in $\qCPO$. Since $\ran E\subseteq\Z$, Proposition \ref{prop:epi-mono in qCPO} assures the existence of a Scott continuous function $G:\X\to\Z$ such that $E=J_\Z\circ G$. Since $\Z$ is a quantum cpo, we have that $G$ is actually a morphism $\X\to \I\Z$, so $G$ is an $\I$-structure. Thus, $E=J_\Z\circ G=\I(J_\Z)\circ G$, and since $E$ is an extremally generating $\I$-structure, it follows that $J_\Z$ is an isomorphism in $\mathbf{qCPO}$. In particular, $J_\Z$ must be a bijection, which forces $\Z=\Y$. 
\end{proof}

\begin{proposition}
\label{prop:ext-cowellpowered}
Let $(\X,R)$ be a quantum poset, and let $\fF$ be the class of all Scott continuous functions $F:\X\to\Y_F$ where $\Y_F$ is a quantum cpo such that $\overline{\ran F}=\Y_F$. Define an equivalence relation $\sim$ on $\fF$ by $F\sim F'$ if and only if there is an isomorphism $G:\Y_F\to\Y_{F'}$ in $\qCPO$ such that $F'=G\circ F$. Then $\fF/\sim$ is a set.
\end{proposition}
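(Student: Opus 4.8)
The plan is to bound, uniformly over $F\in\fF$, the size of the codomain $\Y_F$ by a cardinal depending only on $\X$, and then to observe that up to isomorphism there are only set-many quantum cpos of that size and, for each, only set-many Scott continuous functions from the fixed quantum poset $\X$ into it. The only genuine estimate needed is already supplied by the results above; everything else is set-theoretic bookkeeping.

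\emph{A uniform bound on $\mathrm{card}(\At(\Y_F))$.} Fix $F\in\fF$. Since $\Y_F=\overline{\ran F}$, Lemma~\ref{lem:upper bound cardinality of subcpo generated by subset} gives $\mathrm{card}(\At(\Y_F))\le\big(\mathrm{card}(\At(\ran F))\big)^{\aleph_0}$. Moreover $\overline F^\star\colon\ell^\infty(\ran F)\to\ell^\infty(\X)$ is an injective normal $*$-homomorphism (Proposition~\ref{prop:range is finite} and its proof), so a family of pairwise orthogonal nonzero projections in $\ell^\infty(\ran F)$ of size $\mathrm{card}(\At(\ran F))$ maps to one of the same cardinality in $\ell^\infty(\X)$; hence $\mathrm{card}(\At(\ran F))\le\max(\aleph_0,\mathrm{card}(\At(\X)))$. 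Therefore $\mathrm{card}(\At(\Y_F))\le\kappa$, where $\kappa:=\big(\max(\aleph_0,\mathrm{card}(\At(\X)))\big)^{\aleph_0}$ depends only on $\X$.

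\emph{A set of representative codomains.} Next I would fix a set $\M$ of quantum cpos such that every quantum cpo with at most $\kappa$ atoms is isomorphic in $\qCPO$ to a member of $\M$. Such a set exists: every quantum set with at most $\kappa$ atoms is isomorphic in $\qSet$ to one whose atom set is an ordinal $\le\kappa$ together with a choice of dimensions, and these form a set; and for each such quantum set $\Y$, the relations $S$ making $(\Y,S)$ a quantum cpo form a subset of the set $\qRel(\Y,\Y)$. Here I use that a bijective function transports an order along itself so as to become an order isomorphism, that order isomorphisms between quantum cpos are precisely the isomorphisms of $\qCPO$ (with Lemma~\ref{lem:order iso is Scott continuous} supplying one direction), and that being a quantum cpo is invariant under order isomorphism. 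For each $(\Y,S)\in\M$ the hom-set $\qCPO(\X,(\Y,S))$ is a subset of the set $\qRel(\X,\Y)$, hence a set, so $\fF':=\{F\in\fF:\Y_F\in\M\}\subseteq\bigcup_{(\Y,S)\in\M}\qCPO(\X,(\Y,S))$ is a set.

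\emph{Conclusion.} Finally I would check that every $\sim$-class of $\fF$ meets $\fF'$. Given $F\in\fF$, the bound $\mathrm{card}(\At(\Y_F))\le\kappa$ lets me choose $(\Y,S)\in\M$ and a $\qCPO$-isomorphism $\Phi\colon\Y_F\to(\Y,S)$. Then $\Phi\circ F$ is Scott continuous by Lemma~\ref{lem:composition is Scott continuous}; and since an order isomorphism preserves induced orders, Scott continuity of inclusions, hence sub-cpos (Proposition~\ref{prop:subqcpo}) and therefore $D$-closures, $\Phi$ carries $\ran F$ bijectively onto $\ran(\Phi\circ F)$ and $\overline{\ran F}=\Y_F$ onto $\overline{\ran(\Phi\circ F)}$, so $\overline{\ran(\Phi\circ F)}=\Y$ and $\Phi\circ F\in\fF'$, with $\Phi$ witnessing $F\sim\Phi\circ F$. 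Since every $\sim$-class $C$ is recovered from any of its elements, the map $C\mapsto C\cap\fF'$ is an injection of $\fF/{\sim}$ into the set $\mathcal P(\fF')$, so $\fF/{\sim}$ is a set. The main obstacle here is not the mathematics but precisely this last reduction: one must be sure that quantum cpos of bounded size form a set up to isomorphism and that $\qCPO$-isomorphisms preserve both Scott continuity and the density condition $\overline{\ran(-)}=\Y$; the cardinality estimate itself is handed to us by Proposition~\ref{prop:range is finite} and Lemma~\ref{lem:upper bound cardinality of subcpo generated by subset}.
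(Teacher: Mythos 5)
Your proof is correct and follows essentially the same route as the paper's: the heart of both arguments is the bound $\mathrm{card}(\At(\Y_F))\le\mathrm{card}(\At(\ran F))^{\aleph_0}$ from Lemma~\ref{lem:upper bound cardinality of subcpo generated by subset} combined with a bound on $\mathrm{card}(\At(\ran F))$ in terms of $\mathrm{card}(\At(\X))$ (the paper gets the latter by writing $\X$ as a coproduct of its atoms and using that each atom has finite range, while you use orthogonal projections under the injective $*$-homomorphism $\overline F^\star$ --- equivalent in effect), followed by set-theoretic bookkeeping. Your write-up is in fact somewhat more careful than the paper's on that last step, spelling out the choice of a representative set $\M$ of codomains and verifying that transport along a $\qCPO$-isomorphism preserves both Scott continuity and the condition $\overline{\ran(-)}=\Y$, which the paper leaves implicit.
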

\begin{proof}
Let $F\sim F'$ in $\fF$. Since order isomorphisms are the isomorphisms in $\qCPO$ (see also Lemma \ref{lem:order iso is Scott continuous}), there is an order isomorphism $G:\Y_F\to \Y_{F'}$, which in particular implies the existence of a bijection $f:\At(\Y_F)\to\At(\Y_{F'})$ such that $\dim f(Y)=\dim Y$ for each $Y\atomof\Y_F$ \cite[Proposition 4.4]{Kornell18}. It follows that the equivalence class of $F\in\fF$ is determined by the cardinality of $\At(\Y_F)$, the dimensions of the atoms of $\Y_F$, and the possible orders on $\Y_F$ relative to which it is a quantum cpo with $\overline{\ran F}=\Y_F$. The statement follows if we can show that the  cardinality of $\At(\Y_F)$ is bounded, because we have only a countable choice of possible dimensions for each atom of $\Y_F$, and any suitable order on $\Y_F$ is an element of $\qRel(\Y_F,\Y_F)$, which is a set since $\qRel$ is locally small.

By Lemma \ref{lem:upper bound cardinality of sub-cpo generated by subset} we have
\[\mathrm{card}(\At(\Y_F))=\mathrm{card}(\At(\overline{\ran F}))\leq \mathrm{card}(\At(\ran F))^{\aleph_0}.\]
Given a surjective function $G:\X\to\W$, we claim that 
$\mathrm{card}(\At(\W))\leq\aleph_0\cdot\mathrm{card}(\At(\X))$. Indeed, since $\X\cong\coprod_{X\atomof\X}\Q\{X\},$ with canonical injections $J_X:\Q\{X\}\to\X$, we have functions $G_X:\Q\{X\}\to\W$ such that $[G_X:X\atomof\X]=G$, i.e., $G\circ J_X=G_X$ for each $X\atomof\X$. By Proposition \ref{prop:range is finite}, the range of each $G_X$ is finite, hence $\mathrm{card}(\ran G_X)\leq\aleph_0$. Then
\begin{align*} \ran G & =\ran[G_X:X\atomof\X]=\{W\atomof\W:[G_X](X',W)\neq 0\text{ for some }X'\atomof\X\}\\
& = \{W\atomof\W:G_X(X,W)\neq 0\text{ for some }X\atomof\X\}=\bigcup_{X\atomof\X}\ran G_\X,
\end{align*}
hence $\mathrm{card}(\W)=\mathrm{card}(\ran G)=\aleph_0\cdot\mathrm{card}(\At(\X)).$
We factor $F$ through its range, $F = J_F \circ \overline F$, and take $\W = \ran F$. Hence, $\mathrm{card}(\At(\ran F))\leq\aleph_0\cdot\mathrm{card}(\At(\X))$, whence
\[\mathrm{card}(\At(\Y_F))\leq\aleph_0\cdot \mathrm{card}(\At(\X))^{\aleph_0}.\]
 We conclude that $\fF/\sim$ indeed is a set.
 \end{proof}

\begin{corollary}
\label{cor:ext-cowellpowered}\label{cor:qCPO is extremal co-wellpowered}
Both $\mathbf{qCPO}$ and the inclusion $\I:\mathbf{qCPO}\to\mathbf{qPOS}_\mathrm{SC}$ are extremally co-wellpowered.
\end{corollary}
\begin{proof}
    Let $\X$ be a quantum poset, and let $\gG$ be the class of all extremally generating $\I$-structures with domain $\X$. If $(F,\Y)$ is an element of $\gG$, then $\Y$ is a quantum cpo, and $F:\X\to \Y$ is Scott continuous. By Proposition \ref{prop:extremal epis have d-dense image}, we have $\overline{\ran F}=\Y$. It follows that $\gG$ is a subclass of the class $\fF$ of  Proposition \ref{prop:ext-cowellpowered}. Since this proposition asserts that $\fF/\sim$ is a set, so must be $\gG/\sim$ from which it follows that $\I$ is extremally co-wellpowered. It now follows from Lemma \ref{lem:ext-co-wellpowered-inclusion-implies-ext-co-wellpowered-domain} that $\qCPO$ is extremally co-wellpowered, too.
\end{proof}

In \cite[Definition 1]{Zhao-Fan}, the $D$-completion of a poset $X$ is defined as a dcpo $D(X)$ together with a Scott continuous map $\eta:X\to D(X)$ such that for any other dcpo $Y$ and Scott continuous map $f:X\to Y$ there is a unique Scott continuous map $k:D(X)\to Y$ such that $f=\hat f\circ\eta$. The following theorem asserts we have a similar completion for quantum posets.

\begin{theorem}\label{thm:d-completion}
The inclusion $\I:\qCPO\to\mathbf{qPOS}_\mathrm{SC}$ has a left adjoint $\D$. In particular, if $\X$ is a quantum poset, then there exists a quantum cpo $\D(\X)$ and a Scott continuous function $E_\X:\X\to\D(\X)$ satistying the following universal property: for each quantum cpo $\Z$ and each Scott continuous map $G:\X\to \Z$, there is a unique Scott continuous map $K:\D(\X)\to\Z$ such that the following diagram commutes:
\[ \begin{tikzcd}
\X\ar{r}{E_\X}\ar{dr}[swap]{G} & \D(\X)\ar{d}{K}\\
& \Z
\end{tikzcd} \]
\end{theorem}
\begin{proof}
The category $\qCPO$ is complete by Theorem \ref{thm:qCPO has all limits}, wellpowered by Proposition \ref{prop:qCPO is wellpowered}, and extremally co-wellpowered by Corollary \ref{cor:qCPO is extremal co-wellpowered}. Since the same corollary assures that the inclusion $\I:\qCPO\to\mathbf{qPOS}_\mathrm{SC}$ is extremally co-wellpowered, it follows from  Theorem  \ref{thm:adjoint functor theorem} that $\I$ has a left adjoint $\D$. The universal property of the function $E_\X$ is expresses the universal property of the unit $E$ of the adjunction $\D\dashv \I$. 

Let $J:\overline{\ran E_\X}\to\D(\X)$ be the embedding. Since ${\mathcal D}(\X)$ is a quantum cpo, Proposition \ref{prop:epi-mono in qCPO} assures that the corestriction $G:\X\to\overline{\ran E_\X}$ of $E_\X:\X\to\D(\X)$, i.e., the unique function such that $E_\X=J\circ G$, exists and is Scott continuous. Then from the universal property of $E_\X$, it follows that there exists a function $K:\D(\X)\to\overline{\ran E_\X}$ such that $K\circ E_\X=G$. Then $E_\X=J\circ G=J\circ K\circ E_\X$, and again using the universal property of $E_\X$ yields $J\circ K=I_{\D(\X)}$, i.e., $J$ is a split epimorphism in $\qSet$. By \cite[Proposition 8.1]{Kornell18}, epimorphisms in $\qSet$ are precisely the surjections, which forces $\overline{\mathrm{ran}~E_\X}=\D(\X)$.
\end{proof}

\begin{theorem}\label{thm:qCPO is cocomplete}
$\qCPO$ is cocomplete. 
\end{theorem}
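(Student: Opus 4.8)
The plan is to derive cocompleteness from the general criterion \cite[Theorem 5.11]{NAKAGAWA1989563}, in the same way that this criterion was used to establish the cocompleteness of $\qPOS$. It produces all small colimits in a category from a short list of ingredients: the existence of all coproducts, a suitable (epi, mono)-factorization structure, wellpoweredness, and co-wellpoweredness, together with the fact that $\qCPO$ sits inside $\qPOS_{\mathrm{SC}}$ as a reflective subcategory via the $D$-completion. So the proof will consist of recalling that each of these has been secured over the course of the section and then invoking the theorem.

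Concretely, I would argue as follows. First, $\qCPO$ has all coproducts by Theorem \ref{thm:qCPO has all coproducts}, computed as in $\qPOS$. Second, the required factorizations are supplied by Corollary \ref{cor:epi-mono factorization in qCPO}: every Scott continuous function $F\colon\X\to\Y$ factors as $F = J_{\overline{\ran F}}\circ E$ with $J_{\overline{\ran F}}$ a monomorphism (Proposition \ref{prop:monos in qCPO}) and $E$ an epimorphism satisfying $\overline{\ran E}=\overline{\ran F}$ (Lemma \ref{lem:dense range is epi}), while Proposition \ref{prop:extremal epis have d-dense image} identifies the extremal epimorphisms as exactly the Scott continuous functions with $D$-dense range; this is the (extremal epi, mono)-factorization structure needed. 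Third, $\qCPO$ is wellpowered by Proposition \ref{prop:qCPO is wellpowered} and extremal co-wellpowered by Corollary \ref{cor:qCPO is extremal co-wellpowered}. Finally, Theorem \ref{thm:d-completion} provides the $D$-completion functor $\D$, exhibiting $\qCPO$ as a reflective subcategory of $\qPOS_{\mathrm{SC}}$. These are precisely the hypotheses of \cite[Theorem 5.11]{NAKAGAWA1989563}, and feeding them in gives the conclusion that $\qCPO$ is cocomplete.

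The substantive work has all been completed before this point, so the proof of the theorem itself is a short citation. The crux --- and the step I would expect to be the main obstacle --- is that, in contrast with $\qSet$, an extremal epimorphism in $\qCPO$ need not be surjective, so the surjection/injection factorization structure of $\qSet$ cannot simply be transported. Handling this forces one to pass to the $D$-closure $\overline{(-)}$ of a subset of a quantum cpo, to verify by transfinite induction (Lemma \ref{lem:generated subqcpo}) that it coincides with the evident iterative construction, to bound $\mathrm{card}(\At(\overline{\X}))$ in terms of $\mathrm{card}(\At(\X))$ (Lemma \ref{lem:upper bound cardinality of subcpo generated by subset}) so as to obtain co-wellpoweredness, and to build the $D$-completion functor $\D$. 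With those results already in place, no further construction is needed here.
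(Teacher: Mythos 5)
Your approach is the paper's approach: the theorem itself is a one-paragraph assembly of previously established results, with \cite[Theorem 5.11]{NAKAGAWA1989563} doing the work, and you correctly locate all the genuine difficulty in the earlier machinery (the $D$-closure, the transfinite induction of Lemma \ref{lem:generated subqcpo}, the cardinality bound of Lemma \ref{lem:upper bound cardinality of subcpo generated by subset}, and extremal co-wellpoweredness). The one substantive discrepancy is your list of hypotheses for the cited theorem. The paper feeds in \emph{completeness} (Theorem \ref{thm:qCPO has all limits}), wellpoweredness (Proposition \ref{prop:qCPO is wellpowered}) and extremal co-wellpoweredness (Corollary \ref{cor:qCPO is extremal co-wellpowered}) to obtain \emph{coequalizers}, and then combines coequalizers with the coproducts of Theorem \ref{thm:qCPO has all coproducts} via the dual of the standard ``products $+$ equalizers $\Rightarrow$ complete'' result to conclude cocompleteness. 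You never mention completeness, which is the essential input to Nakagawa's criterion, and you instead list the factorization structure and the reflectivity of $\qCPO$ in $\qPOS_{\mathrm{SC}}$, neither of which is cited in the final proof (the $D$-completion adjunction is a byproduct of the section, not an ingredient here; reflectivity alone would only give cocompleteness if $\qPOS_{\mathrm{SC}}$ were known to be cocomplete, which the paper does not establish). So the write-up should be corrected to route through completeness and the coequalizers-plus-coproducts step; with that fix the argument is exactly the paper's.
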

\begin{proof}
By Theorem \ref{thm:qCPO has all limits}, Proposition \ref{prop:qCPO is wellpowered} and Corollary \ref{cor:qCPO is extremal co-wellpowered}, $\qCPO$ is complete, wellpowered, and extremally co-wellpowered, hence it has all coequalizers by Theorem \ref{thm:coequalizers}. By Theorem \ref{thm:qCPO has all coproducts}, $\qCPO$ has also all coproducts. It is a standard result that any category with all products and all equalizers is complete, see for instance \cite[Theorem 2.8.1]{borceux:handbook1}, hence by duality it follows that $\qCPO$ has all colimits.
\end{proof}

\section{Monoidal structure on $\cat{qCPO}$}\label{sec:monoidal}
In this section, we prove that $\qCPO$ is symmetric monoidal closed. Monoidal products are relevant to denotational semantics as they offer support for pair types, whereas the internal hom arising from the monoidal closed structure can be used to model function types.

The monoidal product of two quantum cpos $(\X_1, R_1)$ and $(\X_2, R_2)$ is simply their monoidal product as posets: $(\X_1 \times \X_2, R_1 \times R_2)$. The monoidal structure on $\cat{qSet}$ is closed, yielding quantum functions sets $\Y^\X$ for all quantum sets $\X$ and $\Y$; see also Theorem \ref{thm:qSet}. The monoidal structure on $\cat{qCPO}$ is similarly closed, with each inner hom object $[\X,\Y]_\uparrow$ equal to a subset of $\Y^\X$, appropriately ordered. Specifically, $[\X,\Y]_\uparrow$ is the largest subset of $\Y^\X$ on which the evaluation function $\Y^\X \times \X \to \Y$ is Scott continuous in the second variable, in a sense that is still to be defined. This subset $[\X,\Y]_\uparrow$ is also a subset of the quantum poset $[\X,\Y]_\below$, and it is ordered accordingly.

\subsection{Monoidal product.} We show that the monoidal product of any two quantum cpos $(\X,R)$ and $(\Y,S)$ is itself a quantum cpo. Let $P\: \X \times \Y \to \X$ and $Q\: \X \times \Y \to \Y$ be the canonical projection functions \cite{Kornell18}*{Section 10}. Given a monotonically ascending sequence $K_1 \below K_2 \below \cdots$ of functions from an atomic quantum set $\H$ to a monoidal product of quantum cpos $\X \times \Y$, it is not so difficult to show that $P \circ K_n \nearrow F_\infty$, for some function $F_\infty\: \H \to \X$, and symmetrically, that $Q \circ K_n \nearrow G_\infty$, for some function $G_\infty\: \H \to \Y$. The main technical challenge is to show that $F_\infty$ and $G_\infty$ are compatible in the sense of \cite{Kornell18}*{Definition 10.3}, i.e., to show that $F_\infty$ and $G_\infty$ are the components of a function $K_\infty\: \H \to \X \times \Y$.

\begin{definition}\label{def:compatible-functions}
Two functions $F\: \H \to \X$ and $G\: \H \to \Y$ are said to be \emph{compatible} if there exists a function $(F,G)\: \H \to \X \times \Y$ that makes the following diagram commute:
$$
\begin{tikzcd}
&&
\H \arrow{lld}[swap]{F} \arrow{rrd}{G}
\arrow[dotted]{d}[swap]{(F,G)}
&&
\\
\X
&&
\X \times \Y \arrow{ll}{P} \arrow{rr}[swap]{Q}
&&
\Y
\end{tikzcd}
$$
The duality between quantum sets and hereditarily atomic von Neumann algebras makes it clear that if such a function $(F, G)$ exists, then it is unique (cf. Theorem \ref{thm:qSet}). Furthermore, if $F'$ is a function from $\X$ to a quantum set $\X'$ and $G$ is a function from $\Y$ to a quantum set $\Y'$, then $(F' \times G') \circ (F, G) = (F' \circ F, G' \circ G)$.
\end{definition}

\begin{lemma}\label{lem:tensor product is Scott continuous bifunctor}
Let $\V$ and $\W$ be quantum sets, and let $(\X,R)$ and $(\Y,S)$ be quantum posets. Let $F_1\sqsubseteq F_2\sqsubseteq F_3\sqsubseteq\cdots:\V\to\X$ and $G_1\sqsubseteq G_2\sqsubseteq G_3\sqsubseteq\cdots:\W\to\Y$
be monotonically ascending sequences of functions with limits $F_\infty$ and $G_\infty$, respectively. 
Then $F_n\times G_n\nearrow F_\infty\times G_\infty$. 
\end{lemma}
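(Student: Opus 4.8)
The plan is to unwind the meaning of $\nearrow$ and reduce the whole statement to an elementary fact about intersections of subspaces. By Definition~\ref{def:limit}, the hypotheses $F_n \nearrow F_\infty$ and $G_n \nearrow G_\infty$ say precisely that $R \circ F_\infty = \bigwedge_{n \in \NN} R \circ F_n$ and $S \circ G_\infty = \bigwedge_{n \in \NN} S \circ G_n$. Applying the same definition to the quantum set $\V \times \W$ and the quantum poset $(\X \times \Y, R \times S)$, the desired conclusion $F_n \times G_n \nearrow F_\infty \times G_\infty$ amounts to two things: that $F_1 \times G_1 \below F_2 \times G_2 \below \cdots \below F_\infty \times G_\infty$ is a monotonically ascending sequence in $\qSet(\V \times \W, \X \times \Y)$, and that $(R \times S) \circ (F_\infty \times G_\infty) = \bigwedge_{n \in \NN} (R \times S) \circ (F_n \times G_n)$. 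The first point is a short computation from bifunctoriality of the monoidal product $\times$ on $\qRel$, which gives $(R \times S) \circ (F_i \times G_i) = (R \circ F_i) \times (S \circ G_i)$ for each $i \in \NN \cup \{\infty\}$, together with the fact that $\times$ is monotone on the homsets of $\qRel$. Using this identity again, the second point reduces to the single equation
\begin{equation*}
(R \circ F_\infty) \times (S \circ G_\infty) = \bigwedge_{n \in \NN} (R \circ F_n) \times (S \circ G_n).
\end{equation*}

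To prove this equation I would argue componentwise. Every atom of $\V \times \W$ has the form $V \otimes W$ with $V \atomof \V$ and $W \atomof \W$, and every atom of $\X \times \Y$ has the form $X \otimes Y$ with $X \atomof \X$ and $Y \atomof \Y$. Since the meet in $\qRel$ is the componentwise intersection of operator spaces and the monoidal product of relations is the componentwise tensor product of operator spaces, the $(V \otimes W, X \otimes Y)$-components of the two sides of the displayed equation are
\begin{equation*}
\Big( \bigcap_{n} (R \circ F_n)(V,X) \Big) \otimes \Big( \bigcap_{m} (S \circ G_m)(W,Y) \Big)
\qquad\text{and}\qquad
\bigcap_{n} \big( (R \circ F_n)(V,X) \otimes (S \circ G_n)(W,Y) \big),
\end{equation*}
respectively. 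Write $A_n = (R \circ F_n)(V,X)$ and $B_n = (S \circ G_n)(W,Y)$, which are subspaces of the finite-dimensional spaces $L(V,X)$ and $L(W,Y)$. Since $F_n \below F_{n+1}$ forces $R \circ F_{n+1} \leq R \circ F_n$ (and likewise for $G$), the sequences $(A_n)_n$ and $(B_n)_n$ are descending chains of subspaces. So the whole lemma is reduced to the assertion that $(\bigcap_n A_n) \otimes (\bigcap_n B_n) = \bigcap_n (A_n \otimes B_n)$ for descending chains $A_1 \supseteq A_2 \supseteq \cdots$ and $B_1 \supseteq B_2 \supseteq \cdots$ of subspaces of finite-dimensional vector spaces.

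This last identity is where finite dimensionality does all the work: a descending chain of subspaces of a finite-dimensional space is eventually constant, so there is an index $N$ with $A_n = \bigcap_m A_m$ and $B_n = \bigcap_m B_m$ for all $n \geq N$. Then, using only that the tensor product of subspaces is monotone in each argument (so that the terms with $n < N$ contain $A_N \otimes B_N$), one obtains $\bigcap_n (A_n \otimes B_n) = \bigcap_{n \geq N} (A_n \otimes B_n) = A_N \otimes B_N = (\bigcap_m A_m) \otimes (\bigcap_m B_m)$, as needed. Varying over all atoms $V \otimes W$ of $\V \times \W$ and $X \otimes Y$ of $\X \times \Y$ then gives the displayed equation, hence $F_n \times G_n \nearrow F_\infty \times G_\infty$. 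I expect the only care needed is in the bookkeeping with atoms of Cartesian products, and in the observation that the stabilization index $N$ may depend on the chosen component $(V \otimes W, X \otimes Y)$; this is harmless, because the meet in $\qRel$ is taken component by component, so no bound on $N$ uniform in the component is required.
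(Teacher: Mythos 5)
Your proof is correct, and it takes a somewhat different route from the paper's at the key step. Both arguments open identically, using the interchange law $(R\times S)\circ(F_n\times G_n)=(R\circ F_n)\times(S\circ G_n)$ to reduce the claim to $\bigwedge_{n\in\NN}(R\circ F_n)\times(S\circ G_n)=(R\circ F_\infty)\times(S\circ G_\infty)$. The paper then notes that $(R\circ F_n)\times(S\circ G_m)$ is decreasing in both indices, replaces the diagonal meet by the meet over all pairs $(n,m)$, and invokes \cite[Lemma~A.3]{KLM20} to pull the two meets out of the monoidal product separately. You instead prove the required interchange of meet and $\times$ from scratch, componentwise: the components are descending chains of subspaces of finite-dimensional operator spaces, each such chain stabilizes, and the identity $(\bigcap_n A_n)\otimes(\bigcap_n B_n)=\bigcap_n(A_n\otimes B_n)$ is then immediate — no diagonalization needed, since stabilization handles the diagonal directly. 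Your version is self-contained and makes explicit where finite-dimensionality does the work (the cited lemma ultimately rests on the same fact), and your observation that the stabilization index may depend on the component is exactly the right point to flag, since meets in $\qRel$ are computed component by component. The paper's version is shorter given the citation and stays at the level of relations rather than components. The remaining verifications you defer (that $F_n\times G_n$ is monotonically ascending and that $F_\infty\times G_\infty$ is a function $\V\times\W\to\X\times\Y$) are indeed routine.
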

\begin{proof}
By the definition of the order of functions, we have $R \circ F_1 \geq R \circ F_2 \geq \cdots$ and $S \circ G_1 \geq S \circ G_2 \cdots$. Thus, the binary relation $(R \circ F_n) \times (S \circ G_m)$ is monotonically decreasing in both variables $n$ and $m$. Hence, we have $\bigwedge_{n\in\NN}(R\circ F_n)\times( S\circ G_n) = \bigwedge_{n,m\in\NN}(R\circ F_n)\times( S\circ G_m)$. We calculate that
\begin{align*}
    \bigwedge_{n\in\NN}(R\times S)\circ (F_n\times G_n)  & =     \bigwedge_{n\in\NN}(R\circ F_n)\times( S\circ G_n) = \bigwedge_{n,m\in\NN}(R\circ F_n)\times( S\circ G_m)\\
    & = \bigwedge_{n\in\NN}(R\circ F_n)\times\bigwedge_{m\in\NN}( S\circ G_m) = (R\circ F_\infty)\times(S\circ G_\infty)\\
    & = (R\times S)\circ (F_\infty\times G_\infty),
\end{align*}
where we use \cite[Lemma~A.3]{KLM20} for the third equality.
\end{proof}

\begin{lemma}\label{lem:monoidal product qCPO}
Let $(\X,R)$ be a quantum poset, and let $\H$ be an atomic quantum set. Let $\X_1,\X_2\subseteq\X$, and let 
$J_1=J_{\X_1}$ and $J_2=J_{\X_2}$. Let $\bar F_1:\H\to\X_1$ and $\bar F_2:\H\to\X_2$ be surjective functions, and let $M$ be the binary relation from $\X_1$ to $\X_2$ defined by 
$M=\bar F_2\circ \bar F_1^\dag.$
If both
\begin{align}
\label{assumption 1} J_1\circ \bar F_1 & \sqsubseteq J_2\circ\bar F_2\qquad
\mathrm{and}\\
\label{assumption 2} J_2^\dag\circ R\circ J_2\circ \bar F_2& =J_2^\dag\circ R\circ J_1\circ\bar F_1,
\end{align}
hold, then
\begin{itemize}
    \item[(a)] $M$ is a function $\X_1\to\X_2$,
    \item[(b)] $M$ is monotone for the induced orders on $\X_1$ and $\X_2$, and
    \item[(c)] $J_2^\dag\circ R\circ J_2\circ M=J_2^\dag\circ R\circ J_1$.
    \end{itemize}
\end{lemma}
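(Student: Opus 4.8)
The plan is to pass to the auxiliary functions $G_1 := J_1 \circ \bar F_1$ and $G_2 := J_2 \circ \bar F_2$ from $\H$ to $\X$. Since the canonical injections satisfy $J_i^\dag \circ J_i = I_{\X_i}$, we get $\bar F_i = J_i^\dag \circ G_i$, $G_i = J_i \circ J_i^\dag \circ G_i$, and $M = J_2^\dag \circ G_2 \circ G_1^\dag \circ J_1$; since the $\bar F_i$ are surjective we also get $\bar F_i \circ \bar F_i^\dag = I_{\X_i}$. Writing $R_i := J_i^\dag \circ R \circ J_i$ for the induced order on $\X_i$, which is again reflexive, transitive and antisymmetric \cite{KLM20}, the hypotheses become $G_1 \sqsubseteq G_2$ (this is \eqref{assumption 1}) and $J_2^\dag \circ R \circ G_2 = J_2^\dag \circ R \circ G_1$ (this is \eqref{assumption 2}).

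For (a), the inclusion $M^\dag \circ M \geq I_{\X_1}$ is immediate: $M^\dag \circ M = \bar F_1 \circ (\bar F_2^\dag \circ \bar F_2) \circ \bar F_1^\dag \geq \bar F_1 \circ \bar F_1^\dag = I_{\X_1}$, using that $\bar F_2$ is a function and $\bar F_1$ is surjective. The real work is $M \circ M^\dag \leq I_{\X_2}$. Since $M \circ M^\dag = \bar F_2 \circ (\bar F_1^\dag \circ \bar F_1) \circ \bar F_2^\dag$ and $\bar F_2 \circ \bar F_2^\dag = I_{\X_2}$, it is enough to prove the sandwiched estimate $\bar F_1^\dag \circ \bar F_1 \leq \bar F_2^\dag \circ R_2^\dag \circ \bar F_2$, i.e.\ $G_1^\dag \circ G_1 \leq G_2^\dag \circ R^\dag \circ G_2$. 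One should not try to bound $\bar F_1^\dag \circ \bar F_1$ alone, since it has no useful upper bound; both hypotheses enter only through this sandwich. The first ingredient is the identity
\[
G_2^\dag \circ R^\dag \circ G_2 \;=\; G_1^\dag \circ R^\dag \circ G_2,
\]
obtained by taking daggers in \eqref{assumption 2} to get $G_2^\dag \circ R^\dag \circ J_2 = G_1^\dag \circ R^\dag \circ J_2$ and then inserting $G_2 = J_2 \circ J_2^\dag \circ G_2$ on the right. The second ingredient is that $G_1 \sqsubseteq G_2$ gives $R^\dag \circ G_1 \leq R^\dag \circ G_2$, whence $G_1 \leq R^\dag \circ G_1 \leq R^\dag \circ G_2$ by reflexivity of $R$; post-composing with $G_1^\dag$ and using the identity, $G_1^\dag \circ G_1 \leq G_1^\dag \circ R^\dag \circ G_2 = G_2^\dag \circ R^\dag \circ G_2$. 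It follows that $M \circ M^\dag \leq \bar F_2 \circ \bar F_2^\dag \circ R_2^\dag \circ \bar F_2 \circ \bar F_2^\dag = R_2^\dag$; since $M \circ M^\dag$ is self-adjoint, applying the dagger also gives $M \circ M^\dag \leq R_2$, so $M \circ M^\dag \leq R_2 \wedge R_2^\dag \leq I_{\X_2}$ by antisymmetry of $R_2$. Hence $M$ is a function.

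Parts (c) and (b) are then quick. For (c), hypothesis \eqref{assumption 2} and $\bar F_1 \circ \bar F_1^\dag = I_{\X_1}$ give $R_2 \circ M = J_2^\dag \circ R \circ J_2 \circ \bar F_2 \circ \bar F_1^\dag = J_2^\dag \circ R \circ J_1 \circ \bar F_1 \circ \bar F_1^\dag = J_2^\dag \circ R \circ J_1$. For (b), I would write $M \circ R_1 = J_2^\dag \circ G_2 \circ G_1^\dag \circ R \circ J_1$; since \eqref{assumption 1} also gives $G_2 \circ G_1^\dag \leq R$, transitivity of $R$ yields $G_2 \circ G_1^\dag \circ R \leq R \circ R \leq R$, and therefore $M \circ R_1 \leq J_2^\dag \circ R \circ J_1 = R_2 \circ M$ by (c); so the function $M$ is monotone for the induced orders.

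The main obstacle is the estimate $M \circ M^\dag \leq I_{\X_2}$ in part (a). The point to get right is that the two hypotheses must be combined into the single sandwiched inequality $\bar F_1^\dag \circ \bar F_1 \leq \bar F_2^\dag \circ R_2^\dag \circ \bar F_2$ — the heart of which is the identity $G_2^\dag \circ R^\dag \circ G_2 = G_1^\dag \circ R^\dag \circ G_2$ coming from the daggered form of \eqref{assumption 2} — after which antisymmetry of the induced order, rather than any direct computation, closes the argument.
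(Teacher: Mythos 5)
Your proof is correct and follows essentially the same route as the paper's: part (c) by direct computation from \eqref{assumption 2} and surjectivity of $\bar F_1$, the bound $M\circ M^\dag\leq R_2^\dag$ (the paper gets $M\circ M^\dag\leq R_2$, using only \eqref{assumption 2} via (c) and reflexivity rather than your sandwich through $G_1^\dag\circ G_1\leq G_2^\dag\circ R^\dag\circ G_2$) closed off identically by self-adjointness of $M\circ M^\dag$ and antisymmetry of the induced order, and part (b) from the graph form $G_2\circ G_1^\dag\leq R$ of \eqref{assumption 1} together with transitivity and (c).
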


\begin{proof}
We first show (c) by a direct calculation:
\[ J_2^\dag\circ R\circ J_2\circ M=J_2^\dag\circ R\circ J_2\circ \bar F_2\circ \bar F_1^\dag=J_2^\dag\circ R\circ J_1\circ\bar F_1\circ\bar F_1^\dag= J_2^\dag\circ R\circ J_1,\]
where the second equality follows from Equation (\ref{assumption 2}) and the last equality holds because $\bar F_1$ is surjective.
For (a), we calculate that
\[M^\dag\circ M=(\bar F_2\circ\bar F_1^\dag)^\dag\circ (\bar F_2\circ\bar F_1^\dag)=\bar F_1\circ\bar F_2^\dag\circ \bar F_2\circ \bar F_1^\dag\geq \bar F_1\circ I\circ \bar F_1^\dag=\bar F_1\circ\bar F_1^\dag =I,\]
where we use the surjectivity of $\bar F_1$ for the last equality. Similarly, we calculate that
\begin{align*} M\circ M^\dag & = I_{\X_2}\circ M\circ M^\dag\leq J_2^\dag\circ R\circ J_2\circ M\circ M^\dag=J_2^\dag\circ R\circ J_1\circ M^\dag \\
& = J_2^\dag\circ R\circ J_1\circ\bar F_1\circ\bar F_2^\dag= J_2^\dag\circ R\circ J_2\circ \bar F_2\circ \bar F_2^\dag=J_2^\dag\circ R\circ J_2,
\end{align*}
where the inequality follows because $J_2^\dag\circ R\circ J_2$ is an order on $\X_2$ \cite[Lemma~2.1]{KLM20}, the next equality follows from (c), the penultimate equality follows from Equation (\ref{assumption 2}), and the last equality follows from the  surjectivity of $\bar F_2$.
Since $(M\circ M^\dag)^\dag=M\circ M^\dag$, we have
\begin{align*} M\circ M^\dag& =(M\circ M^\dag)\wedge (M\circ M^\dag)^\dag\leq (J_2^\dag\circ R\circ J_2)\wedge (J_2^\dag\circ R\circ J_2)^\dag\\
& = (J_2^\dag\circ R\circ J_2)\wedge (J_2^\dag\circ R^\dag\circ J_2)=J_2^\dag\circ (R\wedge R^\dag)\circ J_2=J_2^\dag\circ J_2=I_{\X_2},
\end{align*}
where we use \cite[Proposition~A.6]{KLM20} in the second equality of the second line and the antisymmetry axiom of an order in the penultimate equality. The last equality follows by the injectivity of $J_2$. 
Finally, we show (b). 
\begin{align*}
    M\circ J_1^\dag\circ R\circ J_1 &  = \bar F_2\circ \bar F_1^\dag\circ J_1^\dag\circ R\circ J_1 = \bar F_2\circ (R^\dag \circ J_1\circ \bar F_1)^\dag\circ J_1 \leq \bar F_2\circ (R^\dag\circ J_2\circ\bar F_2)\circ J_1\\
    & =\bar F_2\circ\bar F_2^\dag\circ J_2^\dag\circ R\circ J_1=J_2^\dag\circ R\circ J_1 = J_2^\dag\circ R\circ J_2\circ M
    \end{align*}
where the inequality follows from Equation (\ref{assumption 1}) and \cite[Lemma~4.1]{KLM20}, the penultimate equality follows from the surjectivity of $\bar F_2$, and the last equality is (c).
\end{proof}

\begin{theorem}\label{thm:monoidal product of qCPOs}
Let $(\X,R)$ and $(\Y,S)$ be quantum posets, let $\H = \Q\{H\}$ be an atomic quantum set, and let $E_1\sqsubseteq E_2\sqsubseteq\cdots:\H\to\X\times\Y$ be a monotonically ascending sequence of functions.
For each $n\in\NN$, let $F_n=P\circ E_n:\H\to\X$, and $G_n=Q\circ E_n:\H\to\Y$. Assume that there are functions  $F_\infty:\H\to\X$ and $G_\infty:\H\to\Y$ such that $F_n\nearrow F_\infty$ and $G_n\nearrow G_\infty$.
Then, there exists a natural number $\ell$ such that the binary relation $E_\infty:\H\to\X\times\Y$, defined by
$E_\infty= (F_\infty\times G_\infty)\circ (F_\ell^\dag\times G_\ell^\dag)\circ (F_\ell,G_\ell),$
is a function that satisfies $E_n\nearrow E_\infty$, $P\circ E_\infty=F_\infty$, and $Q\circ E_\infty=G_\infty.$  
\end{theorem}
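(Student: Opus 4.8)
The plan is to work with the quantum sets $\X_n := \ran F_n$ and $\Y_n := \ran G_n$ as subsets of $\X$ and $\Y$, and to observe that since $\H$ is atomic, each $\ran F_n$ and each $\ran G_n$ is finite (Proposition \ref{prop:range is finite}). Because $F_1\sqsubseteq F_2\sqsubseteq\cdots$ one expects the ranges $\X_n$ to be eventually constant: the sequence $R\circ F_1\geq R\circ F_2\geq\cdots$ of binary relations from $\H$ to $\X$ is decreasing, and restricting to any fixed atom of $\X$ this is a decreasing sequence of subspaces of a finite-dimensional operator space, hence eventually stationary; since $\X$ (or rather the part of $\X$ hit by the $F_n$) contains only finitely many atoms (again Proposition \ref{prop:range is finite} applied to a single-atom domain), there is some $\ell_1$ with $R\circ F_n = R\circ F_{\ell_1}$, hence $\X_n=\X_{\ell_1}$, for all $n\geq \ell_1$; similarly there is $\ell_2$ for the $G_n$. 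Set $\ell=\max(\ell_1,\ell_2)$. This is the natural number promised in the statement.

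Next I would write $F_\ell = J_{\X_\ell}\circ \bar F_\ell$ and $G_\ell = J_{\Y_\ell}\circ\bar G_\ell$ with $\bar F_\ell,\bar G_\ell$ surjective, and likewise $F_\infty=J_{\ran F_\infty}\circ\bar F_\infty$, $G_\infty = J_{\ran G_\infty}\circ\bar G_\infty$. Because $F_\ell\sqsubseteq F_\infty$ and $R\circ F_\infty=\bigwedge_n R\circ F_n = R\circ F_\ell$ (using that $R\circ F_n$ has stabilized by $\ell$), the hypotheses (\ref{assumption 1}) and (\ref{assumption 2}) of Lemma \ref{lem:monoidal product qCPO} are satisfied with $\X_1=\X_\ell$, $\X_2=\ran F_\infty$ (and the roles of the ``$1$'' and ``$2$'' subscripts chosen so that the $\infty$-range sits in the second slot), so that lemma produces a function $M_\X\: \X_\ell\to\ran F_\infty$ with $M_\X = \bar F_\infty\circ\bar F_\ell^\dag$; symmetrically one gets $M_\Y = \bar G_\infty\circ\bar G_\ell^\dag\:\Y_\ell\to\ran G_\infty$. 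Then $F_\infty\circ F_\ell^\dag = J_{\ran F_\infty}\circ M_\X \circ J_{\X_\ell}^\dag$ and similarly for $\Y$, so the binary relation $E_\infty=(F_\infty\times G_\infty)\circ(F_\ell^\dag\times G_\ell^\dag)\circ(F_\ell,G_\ell)$ is a composite of functions — here using that $(F_\ell,G_\ell)$ is a function by hypothesis and that $(F_\infty\times G_\infty)\circ(F_\ell^\dag\times G_\ell^\dag)$ equals $(M_\X\times M_\Y)$ sandwiched between the appropriate inclusions and their daggers — hence $E_\infty$ is itself a function. The identities $P\circ E_\infty=F_\infty$ and $Q\circ E_\infty=G_\infty$ then follow by composing with the projections and using $P\circ(F_\ell,G_\ell)=F_\ell$, $Q\circ(F_\ell,G_\ell)=G_\ell$, $F_\ell^\dag\circ F_\ell\geq I$ together with the fact that $F_\infty\circ F_\ell^\dag\circ F_\ell=F_\infty$ (which holds because $F_\infty$ factors through $\bar F_\ell$ via $M_\X$), and symmetrically for $Q$.

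Finally, to get $E_n\nearrow E_\infty$, I would compute $(R\times S)\circ E_\infty = (R\circ F_\infty)\times(S\circ G_\infty)$ using $P\circ E_\infty=F_\infty$, $Q\circ E_\infty=G_\infty$ and the fact that $R\times S$ applied to a function into $\X\times\Y$ is determined by its two projections; then invoke Lemma \ref{lem:tensor product is Scott continuous bifunctor}, or more directly the computation inside its proof, to get $(R\circ F_\infty)\times(S\circ G_\infty)=\bigwedge_{n}(R\circ F_n)\times(S\circ G_n)=\bigwedge_n (R\times S)\circ(F_n\times G_n)=\bigwedge_n(R\times S)\circ E_n$, the last step because $F_n\times G_n$ and $E_n$ have the same two projections and hence $(R\times S)\circ E_n = (R\circ F_n)\times(S\circ G_n)$. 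That gives $(R\times S)\circ E_\infty=\bigwedge_n(R\times S)\circ E_n$, i.e.\ $E_n\nearrow E_\infty$.

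The main obstacle I anticipate is the bookkeeping around compatibility: verifying cleanly that the displayed formula for $E_\infty$ really is a morphism in $\qSet$, i.e.\ that $E_\infty^\dag\circ E_\infty\geq I_\H$ and $E_\infty\circ E_\infty^\dag\leq I_{\X\times\Y}$, rather than merely a binary relation. The key that unlocks this is the eventual constancy of the ranges, which forces $F_\infty\circ F_\ell^\dag$ and $G_\infty\circ G_\ell^\dag$ to be honest functions between subsets via Lemma \ref{lem:monoidal product qCPO}(a); once that is in hand, $E_\infty$ is visibly a composite of functions (including the given function $(F_\ell,G_\ell)$), so it is a function, and the compatibility of $F_\infty$ and $G_\infty$ — the stated ``main technical challenge'' — falls out because $P\circ E_\infty=F_\infty$ and $Q\circ E_\infty=G_\infty$ exhibit $E_\infty$ as the witnessing function $(F_\infty,G_\infty)$.
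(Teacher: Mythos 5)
Your construction of $\ell$ and of $E_\infty$, and your verification that $E_\infty$ is a function with $P\circ E_\infty = F_\infty$ and $Q\circ E_\infty = G_\infty$, follow essentially the paper's route: stabilize $(R\circ F_n)(H,X)$ on the finitely many atoms $X\atomof\ran F_\infty$, apply Lemma \ref{lem:monoidal product qCPO} to obtain the functions $\bar F_\infty\circ\bar F_\ell^\dag$ and $\bar G_\infty\circ\bar G_\ell^\dag$, and exhibit $E_\infty$ as a composite of functions. One caveat: you assert $R\circ F_n=R\circ F_{\ell_1}$ for all $n\geq\ell_1$, i.e., stabilization at \emph{every} atom of $\X$. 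Since $R\circ F_n$ may be nonzero at atoms outside $\ran F_n$ and $\X$ may be infinite, this uniform global stabilization is not justified; but it is also not needed, because hypothesis (\ref{assumption 2}) of Lemma \ref{lem:monoidal product qCPO} only requires agreement after composing with $J_{\ran F_\infty}^\dag$, i.e., on the finitely many atoms of $\ran F_\infty$.

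The last step, however, has a genuine gap. You claim $(R\times S)\circ E_n = (R\circ F_n)\times(S\circ G_n)$ ``because $F_n\times G_n$ and $E_n$ have the same two projections.'' This does not typecheck: $(R\circ F_n)\times(S\circ G_n)$ is a relation $\H\times\H\to\X\times\Y$, whereas $(R\times S)\circ E_n$ is a relation $\H\to\X\times\Y$, and there is no diagonal function $\H\to\H\times\H$ in $\qSet$ with which to mediate between them (essentially by no-cloning). More substantively, the passage from $\bigwedge_n R\circ F_n=R\circ F_\infty$ and $\bigwedge_n S\circ G_n=S\circ G_\infty$ to $\bigwedge_n(R\times S)\circ(F_n,G_n)=(R\times S)\circ(F_\infty,G_\infty)$ is precisely the hard content of the theorem: the assignment $(F,G)\mapsto(R\times S)\circ(F,G)$ admits no pointwise-product decomposition in terms of $R\circ F$ and $S\circ G$ alone. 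The paper proves the two inequalities separately: the inequality $(R\times S)\circ E_\infty\leq\bigwedge_n(R\times S)\circ E_n$ follows from $E_n\sqsubseteq E_\infty$ via \cite[Proposition~7.6]{KLM20}; for the reverse inequality it fixes an atom $X\otimes Y$, uses the estimate $E_n\leq(F_n\times G_n)\circ(F_\ell^\dag\times G_\ell^\dag)\circ(F_\ell,G_\ell)$ from \cite[Proposition~B.7]{KLM20} to decouple the two coordinates into independent indices $n$ and $m$, and then invokes a second, atomwise stabilization of $(R\circ F_n)(H,X)$ and $(S\circ G_m)(H,Y)$ to recombine them into $(R\times S)\circ E_\infty$. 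Without some substitute for this decoupling argument, your derivation of $E_n\nearrow E_\infty$ does not go through.
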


Thus, we find that $F_\infty$ and $G_\infty$ are compatible, with $(F_n, G_n) \nearrow (F_\infty, G_\infty)$.

\begin{proof}[Proof of Theorem \ref{thm:monoidal product of qCPOs}]
First, we show that $F_\infty$ and $G_\infty$ are compatible. We begin by factoring each function $F_i$ through its range, and likewise each function $G_i$ \cite[Definition~3.2]{KLM20}. Thus, writing $J_i$ for the inclusion of $\X_i:= \ran F_i$ into $\X$ and $K_i$ for the inclusion of $\Y_i:= \ran G_i$ into $\Y$, we have $F_i = J_i \circ \overline F_i$ and $G_i = K_i \circ \overline G_i$, for all $i \in \{0,1, \ldots, \infty\}$. Each function $\overline F_i \: \H \to \X_i$ is surjective, and this implies that its codomain $\X_i$ is finite in the sense that it has only finitely many atoms. Indeed, by \cite{Kornell18}*{Proposition 8.1} the unital $*$-homomorphism $F_i^\star\: \ell^\infty(\X_i) \To \ell^\infty(\H) = L(H)$ is injective. Hence, each $\X_i$ is finite, for $i = \{1, 2, \ldots, \infty\}$, and similarly, each $\Y_i$ is finite.

For each atom $X\atomof\X_\infty$, the sequence of operator subspaces $(R\circ F_1)(H,X)\geq (R\circ F_2)(H,X)\geq \cdots$ is monotonically decreasing, simply because the sequence of functions $F_1 \below F_2 \below$ is monotonically ascending. These subspaces are all finite-dimension; hence, we conclude that this monotonically decreasing sequence of subspaces must stablize. Similarly, for each atom $Y\atomof\Y_\infty$, the monotonically decreasing sequence $(S\circ G_1)(H,Y)\geq (S\circ G_2)(H,Y)\geq\cdots$ must also stabilize.
Since both $\X_\infty$ and $\Y_\infty$ are finite, this means that we can find an index $\ell\in\NN$ such that for each index $i \in \{\ell, \ell+1, \ldots,\}$, each atom $X\atomof\X_\infty$ and each atom $Y\atomof\Y_\infty$, we have
\begin{align*}
    (R\circ F_i)(H,X)  =(R\circ F_\infty)(H,X),\quad\text{and}\quad
    (S\circ G_i)(H,Y)  =(S\circ G_\infty)(H,Y).
\end{align*}
We apply \cite[Lemma~A.5]{KLM20} to find that 
\begin{align*}
    J_\infty^\dag\circ R\circ F_i  =J_\infty^\dag\circ R\circ F_\infty,\quad\text{and}\quad
    K_\infty^\dag\circ S\circ G_i  = K_\infty^\dag\circ S\circ G_\infty.
\end{align*}
In other words, $
    J_\infty^\dag\circ R\circ J_i\circ\bar F_i =J_\infty^\dag\circ R\circ J_\infty\circ \bar F_\infty$, and $
    K_\infty^\dag\circ S\circ K_i\circ \bar G_i  = K_\infty^\dag\circ S\circ K_\infty\circ \bar G_\infty. 
$
We certainly also have that 
$
    J_i\circ\bar F_i \sqsubseteq J_\infty\circ\bar F_\infty,$ and $
    K_i\circ\bar G_i  \sqsubseteq  K_\infty\circ\bar G_\infty,
$
so we can apply Lemma \ref{lem:monoidal product qCPO} at each index $i \in \{\ell, \ell+1, \ldots\}$.

For each index $i \in \{\ell, \ell+1, \ldots\}$, let $M_i:=\bar F_\infty\circ \bar F_i^\dag$ and $N_i:= \bar G_\infty\circ \bar G_i^\dag$. We have that
\begin{itemize}
    \item[(a)] $M_i:\X_i\to\X_\infty$ and $N_i:\Y_i\to\Y_\infty$ are functions,
    \item[(b)] $M_i$ and $N_i$ are monotone,
    \item[(c)] $
        J_\infty^\dag\circ R\circ J_\infty\circ M_i = J_\infty^\dag\circ R\circ J_i$ and $K_\infty^\dag\circ S\circ K_\infty\circ N_i  = K_\infty^\dag\circ S\circ K_i$.
\end{itemize}
Note that $F_i = J_i\circ \bar F_i$ and $G_i = K_i\circ\bar G_i$ are compatible by definition. Since $J_i$ and $K_i$ are injective, it follows by \cite[Corollary~B.4]{KLM20} that $\overline F_i$ and $\overline G_i$ are also compatible. Let $\bar E_i = (\bar F_i, \bar G_i)$. We reason that
\[ (J_i\times K_i)\circ\bar E_i=(J_i\times K_i)\circ (\bar F_i,\bar G_i)=(J_i\circ \bar F_i,K_i\circ\bar G_i)=(F_i,G_i)=E_i.\]
Now, define $
    \bar E_\infty  =(M_\ell\times N_\ell)\circ \bar E_\ell$ and 
    $E_\infty =(J_\infty\times K_\infty)\circ \bar E_\infty.$
We calculate that
\begin{align*}
    E_\infty & =(J_\infty\times K_\infty)\circ (M_\ell\times N_\ell)\circ \bar E_\ell\\
    & = (J_\infty\times K_\infty)\circ ((\bar F_\infty\circ \bar F_\ell^\dag)\times(\bar G_\infty\circ \bar G_\ell^\dag ))\circ (\bar F_\ell,\bar G_\ell)\\
    & = (J_\infty\times K_\infty)\circ (\bar F_\infty\times \bar G_\infty)\circ (\bar F^\dag_\ell\times \bar G_\ell^\dag)\circ (\bar F_\ell,\bar G_\ell)\\
    & = (J_\infty\times K_\infty)\circ (\bar F_\infty\times \bar G_\infty)\circ (\bar F^\dag_\ell\times \bar G_\ell^\dag)\circ(J_\ell^\dag\times K_\ell^\dag)\circ (J_\ell\times K_\ell)\circ  (\bar F_\ell,\bar G_\ell)\\
    & =((J_\infty\circ \bar F_\infty)\times (K_\infty\circ \bar G_\infty))\circ ((\bar F^\dag_\ell\circ J_\ell^\dag)\times (\bar G_\ell^\dag\circ K_\ell^\dag))\circ  (J_\ell\circ\bar F_\ell,K_\ell\circ \bar G_\ell)\\
    & = (F_\infty\times G_\infty)\circ (F_\ell^\dag\times G_\ell^\dag)\circ (F_\ell,G_\ell),
\end{align*}
where we use \cite[Corollary~B.4]{KLM20} in the penultimate equality.
Hence, for each finite $i \geq \ell$,
\begin{align*}
    P\circ (J_\infty\times K_\infty)\circ (M_i\times N_i)\circ\bar E_i & =     P\circ (J_\infty\times K_\infty)\circ (M_i\times N_i)\circ(\bar F_i,\bar G_i)\\
    & =     P\circ (J_\infty\circ M_i\circ\bar F_i,K_\infty\circ N_i\circ\bar G_i)\\
    & = J_\infty\circ M_i\circ\bar F_i=J_\infty\circ\bar F_\infty\circ \bar F_i^\dag\circ\bar F_i\\
    & \geq J_\infty\circ\bar F_\infty\circ I=J_\infty\circ\bar F_\infty= F_\infty.
\end{align*}
By \cite[Lemma~A.7]{KLM20}
we obtain 
$P\circ (J_\infty\times K_\infty)\circ (M_i\times N_i)\circ\bar E_i =F_\infty,$
and in a similar way, we find
$ Q \circ (J_\infty\times K_\infty)\circ (M_i\times N_i)\circ\bar E_i=G_\infty.$ Therefore, $F_\infty$ and $G_\infty$ are compatible and
$(F_\infty,G_\infty)= (J_\infty\times K_\infty)\circ (M_i\times N_i)\circ\bar E_i$ for each $i \in \{\ell ,\ell+1, \cdots\}$.
In particular, we conclude that 
$ (F_\infty,G_\infty)= (J_\infty\times K_\infty)\circ (M_\ell\times N_\ell)\circ\bar E_\ell=(J_\infty\times K_\infty)\circ \bar E_\infty=E_\infty.$

Next we show that $E_n\nearrow E_\infty$. 
Since $F_n\sqsubseteq F_\infty$ and $G_n\sqsubseteq G_\infty$ for each $n\in\NN$, it follows from \cite[Proposition~7.6]{KLM20} that $E_n\sqsubseteq E_\infty$ for each $n\in\NN$. Thus, $(R\times S)\circ E_\infty\leq (R\times S)\circ E_n$ for each $n\in\NN$, and therefore,
\begin{equation}
    \label{ineq:product supremum}(R\times S)\circ E_\infty\leq \bigwedge_{n\in\NN}(R\times S)\circ E_n.
\end{equation} 

Fix an atom $X\otimes Y\atomof\X\times\Y$. Let $J_X$ be the inclusion function $J_{\Q\{X\}}:\Q\{X\}\to\X$. 
Then we have 
$J_{X\otimes Y}=J_{\Q\{X\otimes Y\}}=J_{\Q\{X\}\times\Q\{Y\}}=J_{\Q\{X\}}\times J_{\Q\{Y\}}=J_{X}\times J_Y$
where we used \cite[Proposition~B.3]{KLM20} in the penultimate equality.
Since $L(H,X)$ and $L(H,Y)$ are finite-dimensional, the monotonically decreasing sequences
\begin{align*}
    (R\circ F_1)(H,X)\geq (R\circ F_2)(H,X)\geq\cdots,\quad \text{and} \quad
    (S\circ G_1)(H,Y)\geq (S\circ G_2)(H,Y)\geq\cdots
\end{align*}
stabilize, and since $\bigwedge_{n\in\NN}R\circ F_n=R\circ F_\infty$ and $\bigwedge_{n\in\NN}S\circ G_n=S\circ G_\infty$, this means that we can find a natural number $m\geq \ell$ such that
\begin{align*}
(R\circ F_n)(H,X) = (R\circ F_\infty)(H,X),\quad \text{and} \quad
(S\circ G_n)(H,Y)  = (S\circ G_\infty)(H,Y)
\end{align*}
for all $n\geq m$. Applying \cite[Lemma~A.5]{KLM20}, we infer that
\begin{align*}
(J_X^\dag\circ R\circ F_n)(H,X) & = (J_X^\dag\circ R\circ F_\infty)(H,X), \; \text{and} \\
(J_Y^\dag\circ S\circ G_n)(H,Y) & = (J_Y^\dag\circ S\circ G_\infty)(H,Y).
\end{align*}
Hence,
\begin{align}
\label{eq: tensor identity 1} J_X^\dag\circ R\circ F_n & = J_X^\dag\circ R\circ F_\infty\qquad \mathrm{and}\\
\label{eq: tensor identity 2} J_Y^\dag\circ S\circ G_n & = J_Y^\dag\circ S\circ G_\infty
\end{align}
for each $n\geq m$.
Now, we calculate that
\begin{align*}
    J_{X\otimes Y}^\dag\circ&\bigwedge_{n\in\NN}(R\times S)\circ E_n  =     \bigwedge_{n\in\NN}J_{X\otimes Y}^\dag\circ(R\times S)\circ E_n\\
           & =      \bigwedge_{n\in\NN}(J_{X }^\dag\times J_{Y}^\dag)\circ(R\times S)\circ (F_n,G_n)\\
               & =      \bigwedge_{n\in\NN}(J_{X }^\dag\times J_{Y}^\dag)\circ(R\times S)\circ (J_n\times K_n)\circ (\bar F_n,\bar G_n)\\
        & \leq      \bigwedge_{n\geq m}(J_{X }^\dag\times J_{Y}^\dag)\circ(R\times S)\circ (J_n\times K_n)\circ \big((\bar F_n\circ\bar F_\ell^\dag)\times(\bar G_n\circ\bar G_\ell^\dag)\big)\circ (\bar F_\ell,\bar G_\ell)\\
               & =     \bigwedge_{n\geq m}\big((J_{X }^\dag\circ R\circ  F_n)\times (J_{Y}^\dag\circ S\circ G_n)\big) \circ (\bar F_\ell^\dag\times \bar G_\ell^\dag)\circ (\bar F_\ell,\bar G_\ell)\\
        & =     \big((J_{X }^\dag\circ R\circ  F_\infty)\times (J_{Y}^\dag\circ S\circ G_\infty)\big) \circ (\bar F_\ell^\dag\times \bar G_\ell^\dag)\circ (\bar F_\ell,\bar G_\ell)\\
               & =   \big((J_{X }^\dag\circ R\circ  J_\infty)\times (J_{Y}^\dag\circ S\circ K_\infty)\big) \circ (M_\ell\times N_\ell)\circ (\bar F_\ell,\bar G_\ell)\\
        & =J_{X\otimes Y }^\dag\circ (R\times S)\circ  E_\infty,
\end{align*}
where the first equality follows by \cite[Proposition~A.6]{KLM20} and the inequality follows by \cite[Proposition~B.7]{KLM20}. The third-to-last equality follows from Equations (\ref{eq: tensor identity 1}) and (\ref{eq: tensor identity 2}).
We now apply \cite[Lemma~A.5]{KLM20} to calculate that
\begin{align*}& \left(\bigwedge_{n\in\NN}(R\times S)\circ E_n\right)(H,X\otimes Y)  =\left(J_{X\otimes Y}^{\dag}\circ \bigwedge_{n\in\NN}(R\times S)\circ E_n\right)(H,X\otimes Y)\\
& \leq\left(J_{X\otimes Y}^{\dag}\circ(R\times S)\circ E_\infty\right)(H,X\otimes Y)
 = ((R\times S)\circ E_\infty)(H,X\otimes Y).
\end{align*}
We now vary the atom $X\otimes Y\atomof\X\times\Y$ to conclude that $\bigwedge_{n\in\NN}(R\times S)\circ E_n\leq (R\times S)\circ E_\infty.$
With Equation (\ref{ineq:product supremum}), this implies that $E_n\nearrow E_\infty$, as claimed.
\end{proof}

\begin{corollary}\label{cor:monoidal product of qCPOs}
Let $(\X,R)$ and $(\Y,S)$ be quantum cpos. Then $(\X\times \Y,R\times S)$ is a quantum cpo, and the projection maps $P:\X\times \Y\to\X$ and $Q:\X\times \Y\to\Y$ are Scott continuous.
\end{corollary}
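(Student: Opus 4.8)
The plan is to read off this corollary from Theorem \ref{thm:monoidal product of qCPOs}, where the real work has already been done; only some bookkeeping remains, namely converting a monotonically ascending sequence in the product into a pair of such sequences in the two factors, and converting the resulting limit back.

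First I would verify that $(\X\times\Y, R\times S)$ is a quantum cpo. Fix an atomic quantum set $\H$ and a monotonically ascending sequence $E_1\sqsubseteq E_2\sqsubseteq\cdots\:\H\to\X\times\Y$. Set $F_n = P\circ E_n$ and $G_n = Q\circ E_n$. Since the projections $P$ and $Q$ are monotone, left multiplication with them is monotone \cite{KLM20}*{Lemma~4.4}, so $F_1\sqsubseteq F_2\sqsubseteq\cdots\:\H\to\X$ and $G_1\sqsubseteq G_2\sqsubseteq\cdots\:\H\to\Y$ are again monotonically ascending. Because $(\X,R)$ and $(\Y,S)$ are quantum cpos, there are functions $F_\infty\:\H\to\X$ and $G_\infty\:\H\to\Y$ with $F_n\nearrow F_\infty$ and $G_n\nearrow G_\infty$. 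Theorem \ref{thm:monoidal product of qCPOs} now applies and yields a function $E_\infty\:\H\to\X\times\Y$ with $E_n\nearrow E_\infty$, $P\circ E_\infty = F_\infty$, and $Q\circ E_\infty = G_\infty$; in particular the sequence $(E_n)$ has a limit, so $(\X\times\Y, R\times S)$ is a quantum cpo.

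Next I would prove that $P$ is Scott continuous, the argument for $Q$ being symmetric. Take an atomic quantum set $\H$ and a monotonically ascending sequence $E_1\sqsubseteq E_2\sqsubseteq\cdots\:\H\to\X\times\Y$ with limit $E_\infty$. Forming $F_n = P\circ E_n$ and $G_n = Q\circ E_n$ as above, with $F_n\nearrow F_\infty$ and $G_n\nearrow G_\infty$, Theorem \ref{thm:monoidal product of qCPOs} produces a function $E_\infty'\:\H\to\X\times\Y$ with $E_n\nearrow E_\infty'$ and $P\circ E_\infty' = F_\infty$. By the uniqueness of limits (Lemma \ref{lem:lim is sup}) we have $E_\infty' = E_\infty$, hence $P\circ E_n\nearrow F_\infty = P\circ E_\infty$, which is precisely Scott continuity of $P$. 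No genuine obstacle remains at this point: monotonicity of the projections, the factor-wise limits, the uniqueness of limits, and Theorem \ref{thm:monoidal product of qCPOs} are all in hand, so the entire difficulty of the statement was absorbed into that theorem, whose proof had to pin down the common stabilization index $\ell$ of the relevant finite-dimensional subspace sequences in order to exhibit the compatible pair $(F_\infty, G_\infty)$.
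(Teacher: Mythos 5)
Your proposal is correct and follows essentially the same route as the paper: project the sequence, use monotonicity of $P$ and $Q$ to get ascending sequences in the factors, take their limits in the quantum cpos $\X$ and $\Y$, and invoke Theorem \ref{thm:monoidal product of qCPOs} to assemble the limit $E_\infty$ and read off Scott continuity of the projections. The only cosmetic difference is that you invoke Lemma \ref{lem:lim is sup} explicitly to identify the constructed limit with the given one, a step the paper leaves implicit.
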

\begin{proof}
Let $\H$ be an atomic quantum set, and let $E_1\sqsubseteq E_2\sqsubseteq\ldots$ be a monotonically ascending sequence of functions $\H\to \X\times\Y$. Let $F_n=P\circ E_n$, and let $G_n=Q\circ E_n$. By \cite[Lemma~7.3]{KLM20}, the projection functions $P$ and $Q$ are monotone; hence $F_1\sqsubseteq F_2\sqsubseteq\ldots$ and $G_1\sqsubseteq G_2\sqsubseteq\ldots$ are monotonically ascending sequences. Since both $\X$ and $\Y$ are quantum cpos, there exist functions $F_\infty:\H\to\X$ and $G_\infty:\H\to\Y$ such that $F_n\nearrow F_\infty$ and $G_n\nearrow G_\infty$. Now, we apply Theorem \ref{thm:monoidal product of qCPOs} to infer that there exists a function $E_\infty\: \H \to \X \times \Y$ such that $E_n\nearrow E_\infty$, $P\circ E_\infty=F_\infty$, and $Q\circ E_\infty=G_\infty.$  
Since 
    $P\circ E_n  =F_n\nearrow F_\infty=P\circ E_\infty$ and
    $Q\circ E_n = G_n\nearrow G_\infty = Q\circ E_\infty$, we also infer that $P$ and $Q$ are Scott continuous.
\end{proof}

\begin{corollary}
Let $(\X,R)$ and $(\Y,S)$  be quantum cpos, and let $(\Z,T)$ be a quantum poset. Let $M:\Z\to\X\times\Y$ be a function.  Then $M$ is Scott continuous if and only if $P\circ M:\Z\to\X$ and $Q\circ M:\Z\to\Y$ are Scott continuous.
\end{corollary}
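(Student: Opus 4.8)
The plan is to derive this corollary directly from Theorem~\ref{thm:monoidal product of qCPOs} and Corollary~\ref{cor:monoidal product of qCPOs}; morally it is just a repackaging of what has already been proved. The forward implication is routine: if $M$ is Scott continuous, then since the projections $P$ and $Q$ are Scott continuous by Corollary~\ref{cor:monoidal product of qCPOs}, the composites $P\circ M$ and $Q\circ M$ are Scott continuous by Lemma~\ref{lem:composition is Scott continuous}. So all of the content sits in the converse.

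For the converse, assume $P\circ M$ and $Q\circ M$ are Scott continuous; in particular both are monotone, and I would first observe that $M$ is then monotone too, which follows from the description of the product order $R\times S$ and of pairings into $\X\times\Y$ in \cite{KLM20}*{Section 7}. Now fix an atomic quantum set $\H$ and a monotonically ascending sequence $K_1\below K_2\below\cdots\:\H\to\Z$ with limit $K_\infty$. I would set $E_i:=M\circ K_i$ for $i\in\NN\cup\{\infty\}$; since $M$ is monotone, left multiplication by $M$ is monotone \cite{KLM20}*{Lemma 4.4}, so $(E_i)$ is again monotonically ascending. Writing $F_i:=P\circ E_i=(P\circ M)\circ K_i$ and $G_i:=Q\circ E_i=(Q\circ M)\circ K_i$, the Scott continuity of $P\circ M$ and $Q\circ M$ gives $F_n\nearrow F_\infty$ and $G_n\nearrow G_\infty$ with $F_\infty=(P\circ M)\circ K_\infty$ and $G_\infty=(Q\circ M)\circ K_\infty$. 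At this point Theorem~\ref{thm:monoidal product of qCPOs} applies verbatim to the sequence $(E_n)$ and the limits $F_\infty,G_\infty$, producing an index $\ell$ and a function $E_\infty':=(F_\infty\times G_\infty)\circ(F_\ell^\dag\times G_\ell^\dag)\circ(F_\ell,G_\ell)$ with $E_n\nearrow E_\infty'$, $P\circ E_\infty'=F_\infty$, and $Q\circ E_\infty'=G_\infty$.

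The final step is to recognize that $E_\infty'=M\circ K_\infty$: both $E_\infty'$ and $M\circ K_\infty$ are functions $\H\to\X\times\Y$ whose composites with $P$ and $Q$ are $F_\infty$ and $G_\infty$ respectively, and such a compatible pairing is unique \cite{Kornell18}*{Theorem 7.4}. Hence $M\circ K_n=E_n\nearrow E_\infty'=M\circ K_\infty$, which is exactly the Scott continuity of $M$. I do not anticipate a genuine obstacle, since Theorem~\ref{thm:monoidal product of qCPOs} carries the real weight; the only points needing a little care are (i) inferring monotonicity of $M$ from that of its two components, and (ii) verifying that the function manufactured by Theorem~\ref{thm:monoidal product of qCPOs} is literally $M\circ K_\infty$ rather than merely some pointwise supremum of the $E_n$ — and (ii) is settled by the uniqueness of pairings into $\X\times\Y$.
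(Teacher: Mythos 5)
Your proposal is correct and follows essentially the same route as the paper: the forward direction via Corollary \ref{cor:monoidal product of qCPOs} and Lemma \ref{lem:composition is Scott continuous}, and the converse by applying Theorem \ref{thm:monoidal product of qCPOs} to $E_i = M\circ K_i$ and identifying the resulting limit with $M\circ K_\infty$ through the uniqueness of pairings into $\X\times\Y$. The paper's proof makes exactly the same two observations you flag as needing care, so there is nothing to add.
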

\begin{proof}
Assume that $M$ is Scott continuous. Since $P$ and $Q$ are Scott continuous by Corollary \ref{cor:monoidal product of qCPOs} and the composition of Scott continuous maps is Scott continuous (cf. Lemma \ref{lem:composition is Scott continuous}), it follows that $P\circ M$ and $Q\circ M$ are Scott continuous. For the converse, assume that $P\circ M$ and $Q\circ M$ are Scott continuous. Then, $P\circ M$ and $Q\circ M$ are monotone, so $M$ is also monotone by \cite[Proposition~7.4]{KLM20}.

Let 
$K_1\sqsubseteq K_2\sqsubseteq\cdots$
be a monotonically ascending sequence of functions $\H\to\Z$ with limit $K_\infty$. For each $i\in \{1,2,\ldots,\infty\}$, let $E_i=M\circ K_i$, let 
$F_i=P\circ E_i$ and $G_i=Q\circ E_i$.
By the Scott continuity of $P\circ M$ and $Q\circ M$, we have
\begin{align*}
    F_n=P\circ M\circ K_n \nearrow P\circ M\circ K_\infty=F_\infty,\quad \text{and} \quad
    G_n=Q\circ M\circ K_n \nearrow Q\circ M\circ K_\infty =G_\infty.
\end{align*}
By the monotonicity of $M$, 
$E_1\sqsubseteq E_2\sqsubseteq\cdots$
is a monotonically ascending sequence of functions $\H\to\X\times\Y$. We now apply Theorem \ref{thm:monoidal product of qCPOs} to we conclude that $E_n \nearrow E_\infty$, because the function $E_\infty$ is uniquely determined by the equations $P \circ E_\infty = F_\infty$ and $Q \circ E_\infty = G_\infty$. In other words, $M \circ K_n \nearrow M \circ K_\infty$. Therefore, $M$ is Scott continuous.
\end{proof}

\begin{lemma}\label{lem:tensor product of Scott continuous maps is Scott continuous}
Let $(\X_1,R_1)$, $(\X_2,R_2)$, $(\Y_1,S_1)$ and $(\Y_2,S_2)$ be quantum cpos, and let $F:(\X_1,R_1)\to(\X_2,R_2)$ and $G:(\Y_1,S_1)\to(\Y_2,S_2)$ be Scott continuous. Then \[F\times G:(\X_1\times \Y_1,R_1\times S_1)\to(\X_2\times \Y_2,R_2\times S_2)\] is Scott continuous.
\end{lemma}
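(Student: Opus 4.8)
The plan is to reduce the claim to the corollary, proved just above, stating that a function $M$ from a quantum poset into a monoidal product $\X \times \Y$ of two quantum cpos $(\X,R)$ and $(\Y,S)$ is Scott continuous if and only if $P \circ M$ and $Q \circ M$ are Scott continuous, where $P$ and $Q$ are the projection functions out of $\X \times \Y$. By Corollary \ref{cor:monoidal product of qCPOs}, the monoidal product $(\X_1 \times \Y_1, R_1 \times S_1)$ is itself a quantum cpo, hence in particular a quantum poset, so it may serve as the domain in this characterisation. Thus, writing $P_i \colon \X_i \times \Y_i \to \X_i$ and $Q_i \colon \X_i \times \Y_i \to \Y_i$ for the projection functions ($i = 1,2$), it suffices to show that $P_2 \circ (F \times G)$ and $Q_2 \circ (F \times G)$ are Scott continuous.

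First I would record the naturality identities $P_2 \circ (F \times G) = F \circ P_1$ and $Q_2 \circ (F \times G) = G \circ Q_1$. Since the composites of $I_{\X_1 \times \Y_1}$ with $P_1$ and $Q_1$ are again $P_1$ and $Q_1$, uniqueness of pairings gives $(P_1, Q_1) = I_{\X_1 \times \Y_1}$; hence, by the identity $(F' \times G') \circ (F, G) = (F' \circ F, G' \circ G)$ recalled above, $F \times G = (F \times G) \circ (P_1, Q_1) = (F \circ P_1, G \circ Q_1)$, and composing on the left with $P_2$, respectively $Q_2$, yields the two identities. Now $P_1$ and $Q_1$ are Scott continuous by Corollary \ref{cor:monoidal product of qCPOs}, and $F$ and $G$ are Scott continuous by hypothesis, so $F \circ P_1$ and $G \circ Q_1$ are Scott continuous by Lemma \ref{lem:composition is Scott continuous}. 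Invoking the characterisation recalled above then completes the proof.

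I expect no genuine obstacle; the only point requiring a moment's care is the identification $P_2 \circ (F \times G) = F \circ P_1$, i.e., the compatibility of the monoidal product of functions with the canonical projections, which is the standard naturality of the projections in a semicartesian monoidal category. If one prefers to lean on Theorem \ref{thm:monoidal product of qCPOs} directly, the same conclusion follows thus: given an atomic quantum set $\H$ and a monotonically ascending sequence $E_1 \sqsubseteq E_2 \sqsubseteq \cdots \colon \H \to \X_1 \times \Y_1$ with limit $E_\infty$, Scott continuity of $P_1$ and $Q_1$ gives $P_1 \circ E_n \nearrow P_1 \circ E_\infty$ and $Q_1 \circ E_n \nearrow Q_1 \circ E_\infty$, whence Scott continuity of $F$ and $G$ gives $F \circ P_1 \circ E_n \nearrow F \circ P_1 \circ E_\infty$ and $G \circ Q_1 \circ E_n \nearrow G \circ Q_1 \circ E_\infty$; since $P_2 \circ (F \times G) \circ E_n = F \circ P_1 \circ E_n$ and $Q_2 \circ (F \times G) \circ E_n = G \circ Q_1 \circ E_n$, Theorem \ref{thm:monoidal product of qCPOs}, together with uniqueness of the function with prescribed components (exactly as in the corollary recalled above), yields $(F \times G) \circ E_n \nearrow (F \times G) \circ E_\infty$, so $F \times G$ is Scott continuous.
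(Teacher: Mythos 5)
Your proof is correct and rests on essentially the same ingredients as the paper's: Theorem \ref{thm:monoidal product of qCPOs}, the Scott continuity of the projections from Corollary \ref{cor:monoidal product of qCPOs}, the naturality identity $P_2 \circ (F \times G) = F \circ P_1$, and closure of Scott continuous maps under composition. Your first argument is in fact tidier than the paper's, since it delegates the sequence-chasing to the componentwise characterization proved in the corollary immediately preceding the lemma, whereas the paper unfolds that argument inline; the only point to watch in your alternative sketch is that before applying Theorem \ref{thm:monoidal product of qCPOs} one must know that $(F\times G)\circ E_n$ is a monotonically ascending sequence, which the paper secures by first recording that $F\times G$ is monotone via \cite[Lemma~7.2]{KLM20}.
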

\begin{proof}
Let $K_1\sqsubseteq K_2\sqsubseteq\cdots:\H\to\X_1\times\Y_1$ be a monotonically ascending sequence with limit $K_\infty$, and for each $i\in\{1,2,\ldots,\infty\}$, let $E_i=(F\times G)\circ K_i$.
By \cite[Lemma~7.2]{KLM20}, $F\times G$ is monotone; hence
$E_1\sqsubseteq E_2\sqsubseteq\cdots:\H\to\X_2\times\Y_2$
is a monotonically ascending sequence of functions, too. It follows from Theorem \ref{thm:monoidal product of qCPOs} that $E_n\nearrow E$ for some $E:\H\to\X_2\times\Y_2$ such that $P_2\circ E_n\nearrow P_2\circ E$ and $Q_2\circ E_n\nearrow Q_2\circ E$. Here, $P_i\: \X_i \times \Y_i \To \X_i$ and $Q_i\: \X_i \times \Y_i \To \Y_i$ are the canonical projections, for both $i \in \{1,2\}$.

It also follows from Corollary \ref{cor:monoidal product of qCPOs} that $P_1\circ K_n\nearrow P_1\circ K_\infty$ and $Q_1\circ K_n\nearrow Q_1\circ K_\infty$. Since $F$ and $G$ are Scott continuous, it follows that $F\circ P_1\circ K_n\nearrow F\circ P_1\circ K_\infty$ and $G\circ Q_1\circ K_n\nearrow G\circ Q_1\circ K_\infty$. Using \cite[Proposition~B.2]{KLM20}, we obtain
\begin{align*}
P_2\circ E_n & = P_2\circ (F\times G)\circ K_n=F\circ P_1\circ K_n\nearrow F\circ P_1\circ K_\infty = P_2\circ (F\times G)\circ K_\infty= P_2\circ E_\infty,\\
Q_2\circ E_n & = Q_2\circ (F\times G)\circ K_n=G\circ Q_1\circ K_n\nearrow G\circ Q_1\circ K_\infty = Q_2\circ (F\times G)\circ K_\infty= Q_2\circ E_\infty.
\end{align*}
By the uniqueness of limits (Lemma \ref{lem:lim is sup}), we have that $P\circ E=P\circ E_\infty$ and $Q\circ E=Q\circ E_\infty$. It then follows that $E=E_\infty$ \cite[Appendix~B]{KLM20}; hence $ (F\times G)\circ K_n=E_n\nearrow E=E_\infty = (F\times G)\circ K_\infty$. Therefore, $F\times G$ is indeed Scott continuous.
\end{proof}

\begin{theorem}\label{thm:qCPO is monoidal}
The symmetric monoidal structure on $\qPOS$ restricts to $\qCPO$.
\end{theorem}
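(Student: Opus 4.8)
\subsection*{Proof proposal}

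The plan is to assemble the monoidal data of $\qPOS$ and check, piece by piece, that everything lives inside $\qCPO$, so that the inclusion $\qCPO\hookrightarrow\qPOS$ becomes a strict monoidal functor. Recall that on $\qPOS$ the monoidal product sends a pair of quantum posets $(\X_1,R_1),(\X_2,R_2)$ to $(\X_1\times\X_2,R_1\times R_2)$ and a pair of monotone functions to their monoidal product in $\qSet$, the monoidal unit is $\mathbf 1=\Q\{\CC\}$ with its unique order $I_{\mathbf 1}$, and the associator, unitors, and braiding are order isomorphisms inherited from $\qSet$.

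First I would note that $\mathbf 1$ is a quantum cpo: it has a single atom, hence is finite, so Proposition \ref{prop:finite qposet is qcpo} applies (equivalently, $I_{\mathbf 1}$ is the trivial order, so Example \ref{ex:trivially ordered set is qcpo} applies). Next I would invoke Corollary \ref{cor:monoidal product of qCPOs} to see that $(\X_1\times\X_2,R_1\times R_2)$ is a quantum cpo whenever both factors are, and Lemma \ref{lem:tensor product of Scott continuous maps is Scott continuous} to see that $F\times G$ is Scott continuous whenever $F$ and $G$ are; together with the Scott continuity of the identity functions established just before the definition of $\qCPO$, this shows that $-\times-$ restricts to a bifunctor on $\qCPO$.

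It then remains to handle the structural isomorphisms. The associator $a_{\X,\Y,\Z}$, the unitors $\lambda_\X,\rho_\X$, and the braiding $\sigma_{\X,\Y}$ of $\qPOS$ are all order isomorphisms between the relevant quantum posets, and by the previous paragraph their sources and targets are quantum cpos; hence by Lemma \ref{lem:order iso is Scott continuous} each of them is a morphism of $\qCPO$. Naturality of these transformations and the coherence axioms (pentagon, triangle, hexagon) are then inherited from $\qPOS$, since the inclusion $\qCPO\hookrightarrow\qPOS$ is faithful and preserves the monoidal product, the unit, and the structure maps strictly. This yields a symmetric monoidal structure on $\qCPO$ with strict monoidal inclusion into $\qPOS$.

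I do not expect a genuine obstacle here: the substantive content, namely closure of the objects under $\times$ and closure of the Scott continuous maps under $\times$, is exactly Corollary \ref{cor:monoidal product of qCPOs} and Lemma \ref{lem:tensor product of Scott continuous maps is Scott continuous}, and the rest is bookkeeping. The only point requiring care is the interpretation of the word ``restricts'': it should be read as the assertion that $\qCPO$ is a symmetric monoidal subcategory of $\qPOS$ on which the inclusion is strict monoidal, so that coherence is inherited rather than re-verified.
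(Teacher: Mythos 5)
Your proposal is correct and follows essentially the same route as the paper: the unit is a quantum cpo by Proposition \ref{prop:finite qposet is qcpo}, closure of objects and morphisms under $\times$ is Corollary \ref{cor:monoidal product of qCPOs} (the paper cites the underlying Theorem \ref{thm:monoidal product of qCPOs}) together with Lemma \ref{lem:tensor product of Scott continuous maps is Scott continuous}, and the structural isomorphisms are order isomorphisms, hence Scott continuous by Lemma \ref{lem:order iso is Scott continuous}. The remaining coherence bookkeeping is, as you say, inherited from $\qPOS$.
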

\begin{proof}
By Theorem \ref{thm:monoidal product of qCPOs} and Lemma \ref{lem:tensor product of Scott continuous maps is Scott continuous}, the monoidal product on $\qPOS$ restricts to a bifunctor on $\qCPO$. The tensor unit $(\mathbf 1,I_{\mathbf 1}$) of $\qPOS$ is a quantum cpo by Proposition  \ref{prop:finite qposet is qcpo}. Finally, the components of the associator, the left and right unitors and the symmetry in $\qPOS$ are order isomorphisms (cf. \cite[Theorem~7.5]{KLM20}); hence, they are Scott continuous by Lemma \ref{lem:order iso is Scott continuous}, i.e., they are morphisms of $\qCPO$. 
\end{proof}

\subsection{Scott continuity in one of two variables} Let $(\X,R)$, $(\Y,S)$ and $(\Z,T)$ be quantum cpos, and let $D\: \X \times \Y \to \Z$ be a function. We define the Scott continuity of $D$ in the first variable and, symmetrically, the Scott continuity of $D$ in the second variable. We then show that $D$ is Scott continuous if and only if it is Scott continuous in both variables.

\begin{definition}\label{lem:monotone in the first variable iff monotone when second factor is trivially ordered}
Let $\X$ and $\Y$ be quantum sets, and let $(\Z,T)$ be a quantum poset. Let $D:\X\times\Y\to\Z$ be a function. If $R$ is an order on $\X$, we say that $D$ is \emph{monotone in the first variable} if 
$D\circ (R\times I_\Y)\leq T\circ D$, or equivalently, if $D$ is a monotone function from $(\X \times \Y, R \times I_\Y)$ to $(\Z, T)$. Similarly, if $S$ is an order on $\Y$, we say that $D$ is \emph{monotone in the second variable} if 
$ D\circ (I_\X\times S)\leq T\circ D$, or equivalently, if $D$ is a monotone function from $(\X \times \Y, I_\X \times S)$ to $(\Z, T)$.
\end{definition}

\begin{lemma}\label{lem:monotone in both variables iff monotone}
Let $(\X,R)$, $(\Y,S)$ and $(\Z,T)$ be quantum posets, and let $D:\X\times\Y\to\Z$ be a function. Then, $D$ is monotone if and only if it is monotone in both variables.
\end{lemma}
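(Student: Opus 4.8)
The plan is to reduce all three notions to inequalities between binary relations in $\qRel$ and then chain them, using only that the monoidal product $\times$ is a bifunctor on $\qRel$ (hence order-preserving in each argument and satisfying the interchange law), that composition in $\qRel$ is order-preserving in each argument, and that $R$, $S$, $T$ are reflexive and transitive. Concretely, $D$ being monotone unwinds to $D \circ (R \times S) \leq T \circ D$, monotonicity in the first variable to $D \circ (R \times I_\Y) \leq T \circ D$, and monotonicity in the second variable to $D \circ (I_\X \times S) \leq T \circ D$, by Definition \ref{lem:monotone in the first variable iff monotone when second factor is trivially ordered}.

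For the forward implication I would argue as follows: assuming $D$ is monotone, reflexivity of $S$ gives $I_\Y \leq S$, so $R \times I_\Y \leq R \times S$ by monotonicity of $\times$ in its second argument; composing on the left with $D$ (and using monotonicity of composition) yields $D \circ (R \times I_\Y) \leq D \circ (R \times S) \leq T \circ D$, which is monotonicity in the first variable. The statement for the second variable is the mirror image, using reflexivity of $R$ to get $I_\X \leq R$, hence $I_\X \times S \leq R \times S$. For the converse, assume monotonicity in both variables. The key observation is the factorization $R \times S = (R \times I_\Y) \circ (I_\X \times S)$, which is immediate from the interchange law for $\times$ since $R \circ I_\X = R$ and $I_\Y \circ S = S$. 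Then
\[ D \circ (R \times S) = D \circ (R \times I_\Y) \circ (I_\X \times S) \leq \big(T \circ D\big) \circ (I_\X \times S) = T \circ \big(D \circ (I_\X \times S)\big) \leq T \circ T \circ D \leq T \circ D, \]
where the first inequality is monotonicity of $D$ in the first variable composed on the right with $I_\X \times S$, the second is monotonicity of $D$ in the second variable composed on the left with $T$, and the last is transitivity of $T$.

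I do not expect a genuine obstacle here; the lemma is formal. The only thing that requires care is bookkeeping: making sure each identity or inequality is inserted on the correct side of a composite and that one is genuinely entitled to the order-preservation of $\circ$ and $\times$ in $\qRel$ (which follows from $\qRel$ being a $\mathbf{Sup}$-enriched monoidal category) and to the interchange law (which is just bifunctoriality of the monoidal product). If desired, the relevant order-preservation facts can be cited from \cite{KLM20}.
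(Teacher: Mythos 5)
Your proposal is correct and follows essentially the same route as the paper: the forward direction uses reflexivity of $R$ and $S$ together with order-preservation of $\times$ and $\circ$, and the converse uses the interchange-law factorization $R\times S=(R\times I_\Y)\circ(I_\X\times S)$ followed by transitivity of $T$. If anything, your bookkeeping is cleaner — the paper's displayed chain contains typos ($S\times R$ for $R\times S$, $I_\X\times R$ for $I_\X\times S$) and swaps the annotations of which variable's monotonicity justifies which inequality, whereas your labeling is the correct one.
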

\begin{proof}
Assume that $D$ is monotone, i.e.,
$ D\circ (R\times S)\leq T\circ D.$
Since $I_\X\leq R$ and $I_\Y\leq S$, we have that
$D\circ (I_\X\times S)\leq D\circ (R\times S)\leq T\circ D$
and $D\circ (R\times I_\Y)\leq D \circ (R \times S) \leq  T\circ D$ \cite[Lemma~A.3(b)]{KLM20}. Hence, $D$ is monotone in both variables. 

For the converse, assume that $D$ is monotone in both variables. Then,
\[ D\circ (S\times R)=D\circ (R\times I_\Y)\circ (I_\X\times S)\leq T\circ D\circ (I_\X\times R)\leq T\circ T\circ D\leq T\circ D,\]
where the first inequality holds because $D$ is monotone in the second variable and the second inequality holds because $D$ is monotone in the first variable. 
\end{proof}

\begin{definition}
Let $\X$ and $\Y$ be quantum sets, and let $(\Z,T)$ be a quantum poset. If $R$ is an order on $\X$, we say that a function $D:\X\times\Y\to \Z$ is \emph{Scott continuous in the first variable} if it is monotone in the first variable and, for each atomic quantum set $\H$ and each monotonically ascending sequence 
$F_1\sqsubseteq F_2\sqsubseteq\cdots$ of functions $\H \to \X$, if $F_n\nearrow F_\infty$ for some function $F_\infty:\H\to\X$, then $D\circ (F_n\times I_\Y)\nearrow D\circ (F_\infty\times I_\Y)$.

Similarly, if $S$ is an order on $\Y$, we say that a function $D\: \X \times \Y \to \Z$ is \emph{Scott continuous in the second variable} if it is monotone in the second variable and, for each atomic quantum set $\H$ and each monotonically ascending sequence $G_1\sqsubseteq G_2\sqsubseteq\cdot$ of functions $\H \to \Y$, if $G_n\nearrow G_\infty$ for some function  $G_\infty:\H\to\Y$, then $D\circ (I_\X\times G_n)\nearrow D\circ (I_\X\times G_\infty)$.
\end{definition}

%\begin{example}\label{ex:trivial order on first factor of domain makes function SC in first variable}
%Let $\X$ and $\Y$ be quantum sets, let $(\Z,T)$ be a quantum poset, and let $D\: \X \times \Y \to \Z$ be a function. If we order $\X$ trivially, that is, by $I_\X$, then any monotonically ascending sequence $F_1\sqsubseteq F_2\sqsubseteq\cdots$ of functions $\H \to \X$ is constant, as in Example \ref{ex:trivially ordered set is qcpo}. Thus, $\bigwedge_{n\in\NN}T\circ D \circ (F_n\times I_\Y)=T \circ D \circ(F_1\times I_\Y)$. We conclude that if $\X$ is trivially ordered, then $D$ is automatically Scott continuous in the first variable.
%\end{example}

\begin{lemma}\label{lem:D Scott continuous in first variable and F Scott continuous then DFtimes G Scott continuous in first variable}
Let $\V$, $\W$, $\X$, $\Y$ and $\Z$ be quantum sets, and let $Q$, $S$ and $T$ be orders on $\W$, $\Y$ and $\Z$, respectively. Let $F\: \V \to \X$, $G\: \W \to \Y$ and $D\: \X \times \Y \to \Z$ be functions. If $G$ is Scott continuous and $D$ is Scott continuous in the second variable, then $D \circ (F \times G)$ is also Scott continuous in the second variable.
\end{lemma}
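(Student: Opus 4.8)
The plan is to reduce everything to a single rewriting identity. For any function $L\:\H\to\W$, the functoriality of the monoidal product $\times$ on $\qRel$ gives
\[
(F\times G)\circ(I_\V\times L)\;=\;F\times(G\circ L)\;=\;\bigl(I_\X\times(G\circ L)\bigr)\circ(F\times I_\H),
\]
so that $D\circ(F\times G)\circ(I_\V\times L)=D\circ\bigl(I_\X\times(G\circ L)\bigr)\circ(F\times I_\H)$ for every such $L$. The whole argument is then a chase of this identity together with the hypothesis that $D$ is Scott continuous in the second variable and that $G$ is Scott continuous.

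First I would verify monotonicity of $D\circ(F\times G)$ in the second variable, i.e.\ $D\circ(F\times G)\circ(I_\V\times Q)\le T\circ D\circ(F\times G)$. Since $G$ is Scott continuous it is in particular monotone, so $G\circ Q\le S\circ G$. Using the rewriting identity and monotonicity of $\times$ \cite[Lemma~A.3(b)]{KLM20}, we get $(F\times G)\circ(I_\V\times Q)=F\times(G\circ Q)\le F\times(S\circ G)=(I_\X\times S)\circ(F\times G)$, and post-composing with $D$ and using that $D$ is monotone in the second variable yields $D\circ(F\times G)\circ(I_\V\times Q)\le D\circ(I_\X\times S)\circ(F\times G)\le T\circ D\circ(F\times G)$.

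Next I would establish the continuity clause. Let $\H$ be an atomic quantum set and let $K_1\sqsubseteq K_2\sqsubseteq\cdots\:\H\to\W$ be monotonically ascending with $K_n\nearrow K_\infty$. Because $G$ is monotone, left multiplication by $G$ is monotone \cite{KLM20}*{Lemma 4.4}, so $G\circ K_1\sqsubseteq G\circ K_2\sqsubseteq\cdots\:\H\to\Y$ is monotonically ascending, and the Scott continuity of $G$ gives $G\circ K_n\nearrow G\circ K_\infty$. Applying the hypothesis that $D$ is Scott continuous in the second variable to this sequence yields $D\circ\bigl(I_\X\times(G\circ K_n)\bigr)\nearrow D\circ\bigl(I_\X\times(G\circ K_\infty)\bigr)$, that is, $\bigwedge_{n\in\NN}T\circ D\circ\bigl(I_\X\times(G\circ K_n)\bigr)=T\circ D\circ\bigl(I_\X\times(G\circ K_\infty)\bigr)$. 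Since $F\times I_\H$ is a function, I can then compute
\[
\bigwedge_{n\in\NN}T\circ D\circ(F\times G)\circ(I_\V\times K_n)
=\Bigl(\bigwedge_{n\in\NN}T\circ D\circ\bigl(I_\X\times(G\circ K_n)\bigr)\Bigr)\circ(F\times I_\H)
=T\circ D\circ\bigl(I_\X\times(G\circ K_\infty)\bigr)\circ(F\times I_\H),
\]
where the first equality uses the rewriting identity and \cite[Proposition~A.6]{KLM20} to pull $F\times I_\H$ out of the meet, and the second is the displayed equality; by the rewriting identity at $L=K_\infty$ the last expression is $T\circ D\circ(F\times G)\circ(I_\V\times K_\infty)$. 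Hence $D\circ(F\times G)\circ(I_\V\times K_n)\nearrow D\circ(F\times G)\circ(I_\V\times K_\infty)$, and combined with the monotonicity established above this proves that $D\circ(F\times G)$ is Scott continuous in the second variable.

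I do not anticipate a genuine obstacle; the only points requiring care are the bookkeeping with the interchange law (checking that the domains and codomains of the factors line up so that $(F\times G)\circ(I_\V\times K_n)$ and $\bigl(I_\X\times(G\circ K_n)\bigr)\circ(F\times I_\H)$ are literally the same morphism $F\times(G\circ K_n)$), and invoking \cite[Proposition~A.6]{KLM20} directly for the right-multiplication step rather than Lemma~\ref{lem:right multiplication is Scott continuous}, since here $(\Z,T)$ is only assumed to be a quantum poset, not a quantum cpo.
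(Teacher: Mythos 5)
Your proof is correct and follows essentially the same route as the paper's: establish monotonicity in the second variable, then use Scott continuity of $G$ to get $G\circ K_n\nearrow G\circ K_\infty$, apply Scott continuity of $D$ in the second variable, and transport the result along the interchange identity $(F\times G)\circ(I_\V\times K_n)=(I_\X\times(G\circ K_n))\circ(F\times I_\H)$. Your closing remark is well taken: the paper invokes Lemma \ref{lem:right multiplication is Scott continuous} (stated for a quantum cpo codomain) where only the \cite[Proposition~A.6]{KLM20} computation is needed, and your direct appeal to that proposition is the cleaner justification given that $(\Z,T)$ is here only a quantum poset.
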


\begin{proof}
The function $D$ is Scott continuous in the second variable, and in particular, it is monotone in the second variable. Equivalently, it is monotone with respect to the order $I_\X \times S$ on $\X \times \Y$ (Definition \ref{lem:monotone in the first variable iff monotone when second factor is trivially ordered}). The function $G:\V\to\Y$ is trivially monotone if we order $\Y$ and $\V$ trivially \cite[Example~1.11]{KLM20}. Hence, $F\times G:\W\times\V\to\X\times\Y$ is also monotone, by \cite[Theorem~7.5]{KLM20}. Thus, $D\circ(F\times G)$ is monotone with respect to the order $I_\V\times Q$ on $\V\times\W$, i.e., it is monotone in the second variable.

Now, let $\H$ be an atomic quantum set, and let $K_1\sqsubseteq K_2\sqsubseteq \cdots$ be a monotonically ascending sequence of functions $\H\to\W$ with limit $K_\infty$. By the Scott continuity of $G$, it follows that $G\circ K_1\sqsubseteq G\circ K_2\sqsubseteq \cdots$ is a monotonically ascending sequence of functions $\H\to\Y$ with limit $G\circ K_\infty$. Since $D$ is Scott continuous in the second variable, it follows that $D\circ ( I_\X \times (G\circ K_n))\nearrow D\circ ( I_\X \times (G\circ K_\infty))$. Furthermore, by Lemma \ref{lem:right multiplication is Scott continuous}, 
\[ D\circ (I_\X \times (G\circ K_n))\circ (F\times I_\Y)\nearrow D\circ (I_\X \times (G\circ K_\infty))\circ (F\times I_\Y),  \]  which is nothing more than
\[D\circ (F\times G)\circ (I_\V \times K_n)\nearrow  D\circ (F\times G)\circ (I_\V \times K_\infty).\]
Therefore, $D\circ (F\times G)$ is indeed Scott continuous in the second variable.
\end{proof}

\begin{proposition}\label{prop:Scott continuous iff Scott continuous in both variables}
Let $(\X,R)$, $(\Y,S)$ and $(\Z,T)$ be quantum cpos. Then, a function $D:\X\times\Y\to\Z$ is Scott continuous if and only if it is Scott continuous in both variables.
\end{proposition}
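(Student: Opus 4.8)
The proof splits into two implications. The forward direction—if $D$ is Scott continuous then it is Scott continuous in each variable—should follow by feeding $D$ the right kind of ascending sequences. For the first variable, suppose $F_1 \sqsubseteq F_2 \sqsubseteq \cdots \: \H \to \X$ with $F_n \nearrow F_\infty$. I would form the sequence $(F_n, I_\Y^{\mathrm{const}})$; more precisely, consider the functions $\langle F_n \circ P_\H, J_\Y \rangle$ or, cleaner, take the product sequence $F_n \times I_\Y$ suitably composed to land in $\X \times \Y$. Since a constant sequence has its own value as limit (Example \ref{ex:trivially ordered set is qcpo} / Example \ref{lem:trivial order on first factor of domain makes function SC in first variable}), Lemma \ref{lem:tensor product is Scott continuous bifunctor} gives $F_n \times G \nearrow F_\infty \times G$ for any fixed $G$; precomposing with the appropriate diagonal-type function and using that $D$ is Scott continuous yields $D \circ (F_n \times I_\Y) \nearrow D \circ (F_\infty \times I_\Y)$. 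Monotonicity in each variable is immediate from Lemma \ref{lem:monotone in both variables iff monotone}. The second variable is symmetric.

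**The converse.** This is the substantive direction: assume $D$ is Scott continuous in both variables; show $D$ is Scott continuous. First, $D$ is monotone by Lemma \ref{lem:monotone in both variables iff monotone}. Now let $\H = \Q\{H\}$ be atomic and let $K_1 \sqsubseteq K_2 \sqsubseteq \cdots \: \H \to \X \times \Y$ be a monotonically ascending sequence with limit $K_\infty$ (which exists since $\X \times \Y$ is a quantum cpo by Corollary \ref{cor:monoidal product of qCPOs}). Write $F_n = P \circ K_n$ and $G_n = Q \circ K_n$. By Corollary \ref{cor:monoidal product of qCPOs}, the projections $P, Q$ are Scott continuous, so $F_n \nearrow F_\infty := P \circ K_\infty$ and $G_n \nearrow G_\infty := Q \circ K_\infty$, and moreover $K_n = (F_n, G_n)$ with $(F_n, G_n) \nearrow (F_\infty, G_\infty)$ by Theorem \ref{thm:monoidal product of qCPOs}. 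The goal is to show $D \circ K_n \nearrow D \circ K_\infty$, i.e. $T \circ D \circ K_\infty = \bigwedge_{n} T \circ D \circ K_n$.

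**The key step and the obstacle.** The idea is a two-step "diagonal" argument mimicking the classical proof that separate continuity plus monotonicity gives joint continuity for $\omega$-chains. I would first use Scott continuity in the second variable with the fixed first component $F_\infty$: the sequence $G_n \nearrow G_\infty$ gives $D \circ (F_\infty \times G_n) \nearrow D \circ (F_\infty \times G_\infty)$, hence after composing with $(F_\infty, G_n) = (F_\infty \times G_n) \circ \mathrm{(appropriate\ map)}$, that $D \circ (F_\infty, G_n) \nearrow D \circ (F_\infty, G_\infty)$. Separately, Scott continuity in the first variable with fixed second component $G_n$ gives $D \circ (F_m, G_n) \nearrow D \circ (F_\infty, G_n)$ in $m$ for each fixed $n$. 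Combining these—taking the meet over $m$ then over $n$, and observing that the double-indexed family $T \circ D \circ (F_m, G_n)$ is monotone decreasing in both indices so meets over $m,n$ jointly equal the iterated meets, via \cite[Proposition~A.6]{KLM20}—should collapse to $T \circ D \circ (F_\infty, G_\infty) = \bigwedge_{m,n} T \circ D \circ (F_m, G_n) = \bigwedge_{n} T \circ D \circ (F_n, G_n) = \bigwedge_n T \circ D \circ K_n$, where the penultimate equality uses that the diagonal is cofinal in the product order on $\NN \times \NN$ together with monotonicity. The main obstacle is the bookkeeping needed to pass from statements about $D \circ (F \times G)$ (with $F, G$ on possibly different domains) to statements about $D \circ (F, G)$ (paired functions on the same atomic domain $\H$): one must carefully use the compatibility machinery and the identities $(F' \times G') \circ (F, G) = (F' \circ F, G' \circ G)$ recorded before Lemma \ref{lem:tensor product is Scott continuous bifunctor}, and confirm that "Scott continuous in the second variable" as phrased via $D \circ (I_\X \times G_n)$ transfers correctly to $D \circ (F_\infty, G_n)$—this is essentially an application of Lemma \ref{lem:D Scott continuous in first variable and F Scott continuous then DFtimes G Scott continuous in first variable} with $F = F_\infty$ (reading $F_\infty$ as a trivially-monotone function from a trivially ordered domain). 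Once that translation is in hand, the rest is the order-theoretic collapse of the double meet, which is routine given the cited lemmas from \cite{KLM20}.
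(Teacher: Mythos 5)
Your forward direction is fine and matches the paper: $F_n \times I_\Y \nearrow F_\infty \times I_\Y$ by Lemma \ref{lem:tensor product is Scott continuous bifunctor}, then left-compose with $D$ via Lemma \ref{lem:left multiplication by Scott continuous function is Scott continuous}. (Note that no ``diagonal-type function'' is needed here, which is fortunate, because none exists.)

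The converse contains a genuine gap, and it is precisely the point you dismiss as ``bookkeeping.'' Your argument manipulates the paired functions $(F_\infty, G_n)$ and $(F_m, G_n)$ for mismatched indices, but these are not known to exist: two functions $\H \to \X$ and $\H \to \Y$ must be \emph{compatible} for their pairing to be a function into $\X \times \Y$, and compatibility is only given on the diagonal, where $(F_n, G_n) = E_n$. Likewise, your reduction ``$(F_\infty, G_n) = (F_\infty \times G_n) \circ (\text{appropriate map})$'' would require a function $\H \to \H \times \H$ projecting to the identity twice, i.e.\ a diagonal, which $\qSet$ lacks for atoms of dimension greater than one. This is exactly the obstruction the paper's proof is built around. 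It replaces the diagonal by the fixed binary relation $(F_\ell^\dag \times G_\ell^\dag) \circ (F_\ell, G_\ell)$ supplied by Theorem \ref{thm:monoidal product of qCPOs}, obtaining only the \emph{inequality} $T \circ D \circ (F_n, G_n) \leq T \circ D \circ (F_n \times G_m) \circ (F_\ell^\dag \times G_\ell^\dag) \circ (F_\ell, G_\ell)$ via \cite[Proposition~B.7]{KLM20}. A second problem then appears: because that spreading relation is not a function, \cite[Proposition~A.6]{KLM20} does not let you push the meets over $n$ and $m$ past it, so the ``routine order-theoretic collapse of the double meet'' is not available. The paper instead localizes at each atom $Z \atomof \Z$, factors the $F_n$ and $G_n$ through their (finite) ranges, and uses finite-dimensionality to find indices $k_n$ and $k_\infty$ at which the relevant decreasing sequences \emph{stabilize}, so that $G_m$ can be replaced by $G_\infty$ (and then $F_n$ by $F_\infty$) termwise rather than by a meet computation. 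Your proposal contains neither the inequality step nor the stabilization step, and without them the argument does not go through.
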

\begin{proof}
First, assume that $D$ is Scott continuous. It is monotone in both variables by Lemma \ref{lem:monotone in both variables iff monotone}. Hence, let $\H$ be an atomic quantum set, and let $F_1 \below F_2 \below \cdots$ be a monotonically ascending sequence of functions $\H \to \X$ with limit $F_\infty$. By Lemma \ref{lem:tensor product is Scott continuous bifunctor}, $F_n \times I_\Y \nearrow F_\infty \times I_\Y$. By Lemma \ref{lem:left multiplication by Scott continuous function is Scott continuous}, $D \circ (F_n \times I_\Y) \nearrow D \circ (F_\infty \times I_\Y)$. Therefore, $D$ is Scott continuous in the first variable, and symmetrically, it is Scott continuous in the second variable.

Conversely, assume that $D$ is Scott continuous in both variables. In particular, it is monotone in both variables; hence it is monotone by Lemma \ref{lem:monotone in both variables iff monotone}. Let $E_1\sqsubseteq E_2\sqsubseteq\cdots$ be a monotonically ascending sequence of functions $\H\to\X\times\Y$ with limit $E_\infty$. 
Let $F_i=P\circ E_i$ and $G_i=Q\circ E_i$, for $i\in\{1,2,\ldots,\infty\}$. 
Since $P$ and $Q$ are monotone \cite[Lemma~7.3]{KLM20}, it follows that $F_1\sqsubseteq F_2\sqsubseteq\cdots\sqsubseteq F_\infty$ and $G_1\sqsubseteq G_2\sqsubseteq\cdots\sqsubseteq G_\infty.$ By Corollary \ref{cor:monoidal product of qCPOs}, we have $F_n\nearrow F_\infty$ and $G_n\nearrow G_\infty$. We apply Theorem \ref{thm:monoidal product of qCPOs} to obtain a natural number $\ell$ such that
$E_\infty=(F_\infty\times G_\infty)\circ (F_\ell^\dag\times G_\ell^\dag)\circ (F_\ell,G_\ell).$

Fix $Z \atomof \Z$, and let $J_Z\: \Q\{Z\} \hookrightarrow \Z$ be inclusion.
For each index $i \in \{0, 1 ,\ldots, \infty\}$, we factor the function $F_i$ through its range: $F_i = J_{\X_i} \circ \bar F_i$, as in \cite[Definition~3.2]{KLM20}. Similarly, for each index $j \in \{0, 1, \ldots, \infty\}$, we factor the function $G_j$ through its range: $G_j = J_{\Y_j} \circ \bar G_j$. For each $n \in \NN$, the function $\bar F_n\: \H \to \X_n$ is surjective, so the unital normal $*$-homomorphism $\bar F_n^\star\: \ell^\infty(\X_n) \to \ell^\infty(\H) = L(H)$ is injective by \cite{Kornell18}*{Proposition 8.1}; thus, $\X_n$ has only finitely many atoms. We now reason that, because $D\: \X \times \Y \to \Z$ is Scott continuous in the second variable, the sequence $$T \circ D \circ (I_\X \times G_1) \geq T \circ D \circ (I_\X \times G_2) \geq \cdots$$ is monotonically decreasing in $\cat{qRel}(\X \times \H, \Z)$, with infimum $T \circ D \circ (I_\X \times G_\infty)$, and therefore, the sequence $$J_\Z^\dagger \circ T \circ D \circ (I_\X \times G_1) \circ (J_{\X_n} \times I_\H) \geq J_\Z^\dagger \circ  T \circ D \circ (I_\X \times G_2) \circ (J_{\X_n} \times I_\H) \geq \cdots$$ is monotonically decreasing in $\cat{qRel}(\X_n \times \H, \Q\{Z\})$, with infimum $J_\Z^\dagger \circ  T \circ D \circ (I_\X \times G_\infty) \circ (J_{\X_n} \times I_\H)$. The poset $\cat{qRel}(\X_n \times \H, \Q\{Z\})$ clearly has finite height because the quantum sets $\X_n$, $\Z$ and $\Q\{Z\}$ all have only finitely many atoms. Therefore, for each $n \in \NN$, there exists a number $k_n \in \NN$ such that $J_\Z^\dagger \circ T \circ D \circ (I_\X \times G_m) \circ (J_{\X_n} \times I_\H) = J_\Z^\dagger \circ T \circ D \circ (I_\X \times G_\infty) \circ (J_{\X_n} \times I_\H)$ for all $m \geq k_n$. Similarly, there exists a number $k_\infty$ such that $J_Z^\dagger \circ T \circ D \circ (F_n \times I_\Y) \circ (I_\X \times J_{\Y_\infty}) = J_Z^\dagger \circ T \circ D \circ (F_\infty \times I_\Y) \circ (I_\X \times J_{\Y_\infty})$ for all $n \geq k_\infty$, because $D$ is Scott continuous in the first variable.

We now calculate that
\begin{align*}
 J_Z^\dagger &\circ \left( \bigwedge_{n\in\NN}T\circ D\circ E_n \right) \mathop{=}^{(a)}  \bigwedge_{n\in\NN}J_Z^\dagger \circ T\circ D\circ E_n
=
\bigwedge_{n\in\NN} J_Z^\dagger \circ T\circ D\circ (F_n,G_n)
\\& \mathop{\leq}^{(b)} 
\bigwedge_{n\in\NN} J_Z^\dagger \circ T\circ D\circ (F_n\times G_n)\circ (F_\ell^\dag\times G^\dag_\ell)\circ (F_\ell,G_\ell)
\\ & \mathop{=}^{(c)}
\bigwedge_{n\in\NN}\bigwedge_{m\in\NN} J_Z^\dagger \circ T\circ D\circ (F_n\times G_m)\circ (F_\ell^\dag\times G^\dag_\ell)\circ (F_\ell,G_\ell)
\\ & 
\leq \bigwedge_{n\geq k_\infty}\bigwedge_{m\geq k_n} J_Z^\dagger \circ T\circ D\circ (F_n\times G_m)\circ (F_\ell^\dag\times G^\dag_\ell)\circ (F_\ell,G_\ell)
\\ & =
\bigwedge_{n\geq k_\infty}\bigwedge_{m\geq k_n} J_Z^\dagger \circ T\circ D\circ (I_\X\times G_m)\circ(J_{\X_n}\times I_{\H})\circ (( \bar F_n\circ F_\ell^\dag)\times G^\dag_\ell)\circ (F_\ell,G_\ell)
\\ & =
\bigwedge_{n\geq k_\infty}\bigwedge_{m\geq k_n} J_Z^\dagger \circ T\circ D\circ (I_\X\times G_\infty)\circ(J_{\X_n}\times I_{\H})\circ (( \bar F_n\circ F_\ell^\dag)\times G^\dag_\ell)\circ (F_\ell,G_\ell)
\\ & =
\bigwedge_{n\geq k_\infty} J_Z^\dagger \circ T\circ D\circ (I_\X\times G_\infty)\circ(J_{\X_n}\times I_{\H})\circ (( \bar F_n\circ F_\ell^\dag)\times G^\dag_\ell)\circ (F_\ell,G_\ell)
\\ &=
\bigwedge_{n\geq k_\infty} J_Z^\dagger \circ T\circ D\circ (F_n\times G_\infty)\circ (F_\ell^\dag\times G^\dag_\ell)\circ (F_\ell,G_\ell) 
\\ & = \bigwedge_{n\geq k_\infty}J_Z^\dagger \circ T\circ D\circ (F_n\times I_\Y)\circ (I_{\H}\times J_{\Y_\infty})\circ (F_\ell^\dag\times(\bar G_\infty\circ G^\dag_\ell))\circ (F_\ell,G_\ell)
\\ & = \bigwedge_{n\geq k_\infty} J_Z^\dagger \circ  T\circ D\circ (F_\infty\times I_\Y)\circ (I_{\H}\times J_{\Y_\infty})\circ (F_\ell^\dag\times (\bar G_\infty\circ G^\dag_\ell))\circ (F_\ell,G_\ell)
\\ & = J_Z^\dagger \circ  T\circ D\circ (F_\infty\times I_\Y)\circ (I_{\H}\times J_{\Y_\infty})\circ (F_\ell^\dag\times (\bar G_\infty\circ G^\dag_\ell))\circ (F_\ell,G_\ell)
\\ & = J_Z^\dagger \circ  T\circ D\circ (F_\infty\times G_\infty)\circ (F_\ell^\dag\times G^\dag_\ell)\circ (F_\ell,G_\ell)
= J_Z^\dagger \circ T \circ D \circ E_\infty.
\end{align*}
The equality marked $(a)$ follows from \cite[Proposition~A.6]{KLM20}. The inequality marked $(b)$ follows from \cite[Proposition~B.7]{KLM20}. The equality marked $(c)$ follows from the fact that the binary relation $ T\circ D\circ (F_n\times G_m) \in \cat{qRel}(\H \times \H, \Z)$ can be written as both $T \circ D\circ (F_n\times I_{\Y}) \circ (I_\H \times G_m)$ and $T \circ D\circ (I_\X \times G_m) \circ (F_n\times I_{\H})$, and thus, $ T\circ D\circ (F_n\times G_m)$ is monotonically decreasing in both $n \in \NN$ and $m \in \NN$, because $D$ is monotone in both variables.

We vary $Z \atomof \Z$ to conclude that $\bigwedge_{n\in\NN}T\circ D\circ E_n \leq T \circ D \circ E_\infty$. The converse inequality follows easily from the fact that $E_n \below E_\infty$ for each $n \in \NN$. Therefore, $\bigwedge_{n\in\NN}T\circ D\circ E_n = T \circ D \circ E_\infty$; in other words, $D \circ E_n \nearrow D \circ E_\infty$. More generally, we conclude that $D$ is Scott continuous.
\end{proof}

\subsection{Quantum families of Scott continuous functions} We show that the monoidal structure on $\cat{qCPO}$ is closed, by distinguishing a subset of the quantum function set $\Y^\X$ that intuitively consists of Scott continuous functions, for all quantum cpos $(\X,R)$ and $(\Y,S)$. Together with the evaluation function $\Eval\: \Y^\X \times \X \To \Y$, the quantum functions set $\Y^\X$ may be regarded as the universal quantum family of functions from $\X$ to $\Y$, and it is essentially constructed as such \cite{Kornell18}*{Definition 9.2}. Similarly, to construct the $\cat{qCPO}$ inner hom $[(\X,R),(\Y,S)]_\uparrow$, we begin by defining a quantum family of Scott continuous functions from $(\X,R)$ to $(\Y,S)$ indexed by a quantum set $\W$ to be a function $F\: \W \times \X \to \Y$ that is Scott continuous in the second variable.

\begin{definition}
Let $\W$, $\X$ and $\Y$ be quantum sets. A \emph{quantum family of functions} $\X \to \Y$ indexed by $\W$ is just a function $\W \times \X \To \Y$. Furthermore, if $(\X,R)$ and $(\Y,S)$ are quantum cpos, then a \emph{quantum family of Scott continuous functions} from $(\X,R)$ to $(\Y,S)$ is a function $\W \times \X \to \Y$ that is Scott continuous in the second variable, or equivalently, a Scott continuous function $(\W, I_\W) \times (\X, R) \to (\Y, S)$. 
\end{definition}

\begin{lemma}\label{lem:rabarber}
Let $(\X,R)$ and $(\Y,S)$ be quantum cpos, and let $\V$ and $\W$ be quantum sets. Let $F:\W\times\X\to\Y$ be a quantum family of functions, and let $G:\V\to\W$ be a surjective function. If $F\circ (G\times I_\X)$ is Scott continuous in the second variable, then so is $F$.
\end{lemma}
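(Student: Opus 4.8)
The plan is to show that $F$ is Scott continuous in the second variable by reducing an arbitrary monotonically ascending sequence of functions into $\X$ to one that we can analyze through the hypothesis on $F \circ (G \times I_\X)$. Since $G$ is surjective, $G \circ G^\dagger = I_\W$, so any function $H \colon \H \to \W$ factors as $H = G \circ (G^\dagger \circ H)$; but $G^\dagger \circ H$ need not be a function, only a partial function, so a direct "lift everything along $G$" argument will not work cleanly. Instead I would work atom-by-atom: fix an atomic quantum set $\H = \Q\{H\}$ and a monotonically ascending sequence $K_1 \sqsubseteq K_2 \sqsubseteq \cdots \colon \H \to \X$ with limit $K_\infty$, and aim to prove $F \circ (I_\W \times K_n) \nearrow F \circ (I_\W \times K_\infty)$, i.e.\ $\bigwedge_{n} S \circ F \circ (I_\W \times K_n) = S \circ F \circ (I_\W \times K_\infty)$ as binary relations $\W \times \H \to \Y$. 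The inequality $\leq$ is the nontrivial direction (the reverse follows from monotonicity of $F$ in the second variable, which is immediate since $F \circ (G \times I_\X)$ is monotone in the second variable and $G$ is surjective, hence epic, cf.\ Lemma \ref{lem:monotone in both variables iff monotone} and the surjectivity argument).

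**Key steps.** First, establish that $F$ is monotone in the second variable: $F \circ (I_\W \times R) \leq S \circ F$. From $F \circ (G \times I_\X) \circ (I_\V \times R) \leq S \circ F \circ (G \times I_\X)$ and $G \circ G^\dagger = I_\W$, precompose suitably and use $(G \times I_\X) \circ (G^\dagger \times I_\X) = I_\W \times I_\X$ to deduce $F \circ (I_\W \times R) \leq S \circ F$. Second, to check the limit condition it suffices, by \cite[Lemma~A.5]{KLM20} and varying atoms, to verify the equality of components $J_Z^\dagger \circ (\,\cdot\,) \circ J_{W\otimes H}$ for each $W \atomof \W$ and $Z \atomof \Y$. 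Here is where surjectivity of $G$ enters: for each $W \atomof \W$ there is some $V \atomof \V$ with $G(V,W) \neq 0$, and precomposing the hypothesis "$F \circ (G \times I_\X)$ is Scott continuous in the second variable" with the inclusion $J_{\Q\{V\}} \times I_\X$ and with the canonical partial function picking out $W$ from $G$, one transfers the stabilization information from the $\V$-side to the $\W$-side. Concretely, I would apply the hypothesis to the sequence $K_1 \sqsubseteq K_2 \sqsubseteq \cdots \colon \H \to \X$ itself (the indexing quantum set in the hypothesis is $\V$, but Proposition \ref{prop:suprema of sequences of functions} lets us use any quantum set, and Lemma \ref{lem:right multiplication is Scott continuous} handles precomposition with a partial function on the index side). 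Third, assemble: using \cite[Proposition~A.6]{KLM20} to pull $\bigwedge_n$ through left-composition with $J_Z^\dagger$ and through right-composition with the fixed partial function built from $G$, and the surjectivity identity $G \circ G^\dagger = I_\W$, conclude the desired componentwise equality, then vary $W$ and $Z$.

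**Main obstacle.** The delicate point is that $G^\dagger$ is only a partial function, so one cannot simply write $K_n = $ something precomposed with $G$ and invoke the hypothesis verbatim; the transfer must be done at the level of the binary relations $S \circ F \circ (I_\W \times K_n)$, exploiting that $I_\W = G \circ G^\dagger$ and that right-composition with the partial function $G^\dagger$ (equivalently, with a partial-function-like relation on the index side) commutes with the relevant infima by \cite[Proposition~A.6]{KLM20} and Lemma \ref{lem:right multiplication is Scott continuous}. I expect that once the identity $I_\W = G \circ G^\dagger$ is inserted in the right place and the infimum is moved past the fixed composition factors, the computation closes in a few lines; the bookkeeping of which factor is a function versus a partial function, and ensuring all the $\bigwedge_n$-manipulations are justified, is the only real care required.
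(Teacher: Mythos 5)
Your proposal is correct and follows essentially the same route as the paper: monotonicity in the second variable is obtained exactly as you describe from $G\circ G^\dagger=I_\W$, and the limit condition is obtained by inserting $(G\times I_\H)\circ(G^\dagger\times I_\H)=I_{\W\times\H}$ on the right of $\bigwedge_n S\circ F\circ(I_\W\times K_n)$ and using the interchange $(I_\W\times K_n)\circ(G\times I_\H)=(G\times I_\X)\circ(I_\V\times K_n)$ together with the hypothesis on $F\circ(G\times I_\X)$. The "main obstacle" you raise dissolves in the paper's arrangement: only the function factor $G\times I_\H$ is ever pulled inside the infimum (via \cite[Proposition~A.6]{KLM20}), while $G^\dagger\times I_\H$ stays outside until after the Scott-continuity hypothesis has collapsed the infimum, so no atomwise reduction and no commutation of infima with the partial function $G^\dagger$ is needed.
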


\begin{proof}
First, we verify that $F$ is monotone in the second variable. Since $F\circ (G\times I_\X)$ is monotone in the second variable, we have that
$ F\circ (G\times I_\X)\circ (I_\V\times R)\leq S\circ F\circ (G\times I_\X)$. Hence,
$
    F\circ (I_\W\times R)=F\circ (I_\W\times R)\circ(G\times I_\X)\circ  (G^\dag\times I_\X)
=  F\circ (G\times I_\X)\circ (I_\V\times R)\circ (G^\dag\times I_\X)  
    \leq S\circ F\circ (G\times I_\X)\circ (G^\dag\times I_\X)=S\circ F.
$ Therefore, $F$ is monotone in the second variable.

Now, let $\H$ be an atomic quantum set, and let $K_1\sqsubseteq K_2\sqsubseteq\cdots$ be a monotonically ascending sequence of functions $\H\to\X$ with limit $K_\infty$. We calculate that
\begin{align*} 
\bigwedge_{n \in \NN} S \circ F \circ (I_\W \times K_n) &
= 
\left(\bigwedge_{n \in \NN} S \circ F \circ (I_\W \times K_n) \right) \circ (G \times I_\H) \circ (G^\dagger \times I_\H)
\\ & =
\left(\bigwedge_{n \in \NN} S \circ F \circ (I_\W \times K_n) \circ (G \times I_\H) \right) \circ (G^\dagger \times I_\H)
\\ & =
\left(\bigwedge_{n \in \NN} S \circ F \circ (G \times I_\X)\circ (I_\V \times K_n)  \right) \circ (G^\dagger \times I_\H)
\\ & =
S \circ F \circ (G \times I_\X)\circ (I_\V \times K_\infty) \circ (G^\dagger \times I_\H)
 =
S \circ F \circ (I_\V \times K_\infty),
\end{align*}
where we apply \cite[Proposition~A.6]{KLM20} in the second equality and the Scott continuity in the second variable of $F\circ (G\times I_\X)$ in the fourth equality. We conclude that $F$ is Scott continuous in the second variable.
\end{proof}

\begin{proposition}\label{prop:maximum Scott continuous subset}
Let $(\X,R)$ and $(\Y,S)$ be quantum cpos. Let  
$\mathfrak A$ be the set of all subsets $\V\subseteq[\X,\Y]_\sqsubseteq$ \cite[Theorem~8.3]{KLM20} such that $\Eval_{\sqsubseteq}\circ (J_\V\times I_\X):\V\times\X\to\Y$ is Scott continuous in the second variable. Then $\mathfrak A$ has a greatest element.
\end{proposition}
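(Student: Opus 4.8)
The plan is to take $\V^\ast := \bigcup_{\V \in \mathfrak A} \V$ and show that it again lies in $\mathfrak A$. Since every member of $\mathfrak A$ is a subset of the fixed quantum set $[\X,\Y]_\sqsubseteq$, the class $\mathfrak A$ is genuinely a set, so this union is well defined and is itself a subset of $[\X,\Y]_\sqsubseteq$. It is evidently an upper bound for $\mathfrak A$ in the inclusion ordering, so it will suffice to prove that $D^\ast := \Eval_\sqsubseteq \circ (J_{\V^\ast} \times I_\X)\: \V^\ast \times \X \to \Y$ is Scott continuous in the second variable; then $\V^\ast$ is the greatest element of $\mathfrak A$.

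The key observation is that Scott continuity in the second variable of a quantum family of functions can be checked one atom of the index quantum set at a time. Fix an atom $V \atomof \V^\ast$; then $V \atomof \V$ for some $\V \in \mathfrak A$. Writing $J_V$ for the various canonical inclusions and using that inclusions compose to inclusions, both $D^\ast \circ (J_V \times I_\X)$ and $D_\V \circ (J_V \times I_\X)$, where $D_\V := \Eval_\sqsubseteq \circ (J_\V \times I_\X)$, coincide with $\Eval_\sqsubseteq \circ (J_{\Q\{V\}} \times I_\X)\: \Q\{V\} \times \X \to \Y$. Since $\V \in \mathfrak A$, the function $D_\V$ is Scott continuous in the second variable, and the identity $I_\X$ is Scott continuous by Lemma~\ref{lem:order iso is Scott continuous}, so Lemma~\ref{lem:D Scott continuous in first variable and F Scott continuous then DFtimes G Scott continuous in first variable}, applied with the inclusion $\Q\{V\} \hookrightarrow \V$ in the first slot and $I_\X$ in the second, shows that $D_\V \circ (J_V \times I_\X)$, hence also $D^\ast \circ (J_V \times I_\X)$, is Scott continuous in the second variable.

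It remains to patch these atomic restrictions together. Given an atomic quantum set $\H$ and a monotonically ascending sequence $K_1 \sqsubseteq K_2 \sqsubseteq \cdots\: \H \to \X$ with $K_n \nearrow K_\infty$, we must show $\bigwedge_{n \in \NN} S \circ D^\ast \circ (I_{\V^\ast} \times K_n) = S \circ D^\ast \circ (I_{\V^\ast} \times K_\infty)$ as binary relations $\V^\ast \times \H \to \Y$, together with the inequality $D^\ast \circ (I_{\V^\ast} \times R) \leq S \circ D^\ast$ expressing monotonicity in the second variable. Both are obtained by precomposing with $J_V \times I_\H$ for each atom $V \atomof \V^\ast$: using \cite[Proposition~A.6]{KLM20} to pull the infimum through the composition, the bifunctoriality identity $(I_{\V^\ast} \times K_n) \circ (J_V \times I_\H) = (J_V \times I_\X) \circ (I_{\Q\{V\}} \times K_n)$, and the Scott continuity of $D^\ast \circ (J_V \times I_\X)$ in the second variable established above, one checks that both sides restrict to the same relation out of the atomic quantum set $\Q\{V\} \times \H$. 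Since a binary relation out of $\V^\ast \times \H$ is determined by its restrictions along the inclusions $J_V \times I_\H$ as $V$ ranges over the atoms of $\V^\ast$ (each such restriction recovering the columns of the relation over the atom $V \otimes H$), the desired equality and inequality follow. Hence $D^\ast$ is Scott continuous in the second variable, $\V^\ast \in \mathfrak A$, and $\V^\ast$ is the greatest element. I expect the last step — formalizing that Scott continuity in the second variable of a quantum family is a columnwise condition on the index quantum set — to be the main obstacle, as it requires careful bookkeeping of the interaction between infima, composition with functions, and the monoidal product of binary relations.
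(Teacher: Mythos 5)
Your proposal is correct and follows essentially the same route as the paper: take the union of all members of $\mathfrak A$, verify monotonicity and Scott continuity in the second variable one atom $V$ at a time by passing through a witnessing $\V \in \mathfrak A$ containing $V$, and conclude by varying the atom (the paper inlines the computation that you package via Lemma \ref{lem:D Scott continuous in first variable and F Scott continuous then DFtimes G Scott continuous in first variable}, and likewise appeals to \cite[Proposition~A.6]{KLM20} to pull the infimum past the restriction). The only slip is cosmetic: for the monotonicity inequality the precomposition should be with $J_V \times I_\X$ rather than $J_V \times I_\H$.
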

\begin{proof}
Let $\W=\bigcup\mathfrak A$. We will show that $\W\in\mathfrak A$, i.e., that $\Eval_\sqsubseteq\circ (J_{\W}\times I_\X)$ is Scott continuous in the second variable. First, we show that $\Eval_\sqsubseteq\circ (J_\W\times I_\X)$ is monotone in the second variable. Fix $W \atomof \W$, and let $\V \in \mathfrak A$ be such that $W \atomof \V$. Let $J_W^\V\: \Q\{W\} \hookrightarrow \V$ and $J_W^\W\: \Q\{W\} \hookrightarrow \W$ be the inclusion functions of $\Q\{\W\}$ into $\V$ and $\W$, respectively. Similarly, let $J_\V\: \V \hookrightarrow [\X,\Y]_\below$ and $J_\W\: \W \hookrightarrow [\X,\Y]_\below$ be the inclusion functions of $\V$ and $\W$, respectively, into $[\X,\Y]_\below$. We certainly have that $J_\V \circ J_W^\V = J_\W \circ J_W^\W$, since both sides of the equality are just the inclusion function $\Q\{W\} \hookrightarrow [\X, \Y]_\below$. We now compute that
\begin{align*}&
\Eval_\below \circ (J_\W \times I_\X) \circ (I_\W \times R) \circ (J_W^\W \times I_\X)
=
\Eval_\below \circ (I \times R) \circ (J_\W \times I_\X) \circ (J_W^\W \times I_\X)
\\ & =
\Eval_\below \circ (I \times R) \circ (J_\V \times I_\X) \circ (J_W^\V \times I_\X)
 =
\Eval_\below \circ (J_\V \times I_\X) \circ (I_\V \times R) \circ (J_W^\V \times I_\X)
\\ & \leq
S \circ \Eval_\below \circ (J_\V \times I_\X) \circ (J_W^\V \times I_\X)
 =
S \circ \Eval_\below \circ (J_\W \times I_\X) \circ (J_W^\W \times I_\X).
\end{align*}
Thus, $\Eval_\below \circ (J_\W \times I_\X) \circ (I_\W \times R) \circ (J_W^\W \times I_\X) \leq S \circ \Eval_\below \circ (J_\W \times I_\X) \circ (J_W^\W \times I_\X)$. We vary $W \atomof \W$ to conclude that $\Eval_\below \circ (J_\W \times I_\X) \circ (I_\W \times R) \leq S \circ \Eval_\below \circ (J_\W \times I_\X)$. Therefore, $\Eval_\below \circ (J_\W \times I_\X)$ is monotone in the second variable.

Let $\H$ be an atomic quantum set, and let $K_1 \below K_2 \below \cdots$ be a monotonically ascending sequence of functions $\H \to \X$, with limit $K_\infty$. Fix $W \atomof \W$, and let $\V \in \mathfrak A$ be such that $W \atomof \V$. We now compute that
\begin{align*}
\bigg( \bigwedge_{n \in \NN} S \circ \Eval_\below \circ (J_\W \times I_\X)& \circ (I_\W \times K_n) \bigg) \circ (J_W^\W \times I_\H)
\\ & =
\bigwedge_{n \in \NN} S \circ \Eval_\below \circ (J_\W \times I_\X) \circ (I_\W \times K_n) \circ (J_W^\W \times I_\H)
\\ & =
\bigwedge_{n \in \NN} S \circ \Eval_\below \circ (J_\V \times I_\X) \circ (I_\V \times K_n) \circ (J_W^\V \times I_\H)
\\ & = 
\bigg( \bigwedge_{n \in \NN} S \circ \Eval_\below \circ (J_\V \times I_\X) \circ (I_\V \times K_n) \bigg) \circ (J_W^\V \times I_\H)
\\ & =
S \circ \Eval_\below \circ (J_\V \times I_\X) \circ (I_\V \times K_\infty) \circ (J_W^\V \times I_\H)
\\ & =
S \circ \Eval_\below \circ (J_\W \times I_\X) \circ (I_\W \times K_\infty) \circ (J_W^\W \times I_\H),
\end{align*}
where we appeal to \cite[Proposition~A.6]{KLM20} for the first and third equalities. We vary $W \atomof \W$ to conclude that $\bigwedge_{n \in \NN} S \circ \Eval_\below \circ (J_\W \times I_\X) \circ (I_\W \times K_n) = S \circ \Eval_\below \circ (J_\W \times I_\X) \circ (I_\W \times K_\infty)$. In other words, $\Eval_\below \circ (J_\W \times I_\X) \circ (I_\W \times K_n) \nearrow \Eval_\below \circ (J_\W \times I_\X) \circ (I_\W \times K_\infty)$. Thus, $\Eval_\below \circ (J_\W \times I_\X)$ is Scott continuous in the second variable, and therefore, $\W \in \mathfrak A$ as claimed.
\end{proof}

Proposition \ref{prop:maximum Scott continuous subset} ensures that the subset $[\X, \Y]_\uparrow \subsetof [\X,\Y]_\below$ is well defined:

\begin{definition}\label{def:inner hom qCPO}
Let $\X$ and $\Y$ be quantum cpos. We define $[\X,\Y]_\uparrow$ to be the largest subset $\W$ of $[\X,\Y]_\sqsubseteq$ such that $\Eval_{\sqsubseteq}\circ(J_\W\times I_\X)$ is Scott continuous in the second variable.
\end{definition}

In this context, we let $J$ be the inclusion function $[\X,\Y]_\uparrow\hookrightarrow [\X,\Y]_\sqsubseteq$, and we equip $[\X,\Y]_\uparrow$ with the induced order $J^\dag \circ Q \circ J$ \cite[Definition~2.2, Theorem~8.3]{KLM20}. Furthermore, we define $\Eval_\uparrow:[\X,\Y]_\uparrow\times\X\to\Y$ to be the composition $\Eval_{\sqsubseteq}\circ(J\times I_\X)$. 

\begin{lemma}\label{lem:step 1 monoidal closure qCPO}
Let $(\X,R)$, $(\Y,S)$ and $(\Z,T)$ be quantum cpos. Let $F:\Z\times \X\to\Y$ be a Scott continuous function. Then there is a unique monotone function $G:\Z\to[\X,\Y]_\uparrow$ such that the following diagram commutes:
\[\begin{tikzcd}
\Z \times \X
\arrow{rrrd}{F}
\arrow[dotted]{d}[swap]{G \times I_\X}
&
&
&
\\
\phantom{}[\X,\Y]_{\uparrow} \times \X
\arrow{rrr}[swap]{\mathrm{Eval_{\uparrow}}}
&
&
&
\Y
\end{tikzcd}\]
\end{lemma}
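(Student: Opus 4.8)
Here is how I would prove Lemma \ref{lem:step 1 monoidal closure qCPO}.

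The plan is to build $G$ by currying $F$ in $\qPOS$ and then checking that the resulting monotone map corestricts to the subset $[\X,\Y]_\uparrow$. Regard $F$ as a Scott continuous function $\Z\times\X\to\Y$, as in the diagram; write $Q$ for the order on $[\X,\Y]_\sqsubseteq$, and let $J\:[\X,\Y]_\uparrow\hookrightarrow[\X,\Y]_\sqsubseteq$ be the inclusion, so that $\Eval_\uparrow=\Eval_\sqsubseteq\circ(J\times I_\X)$ and $[\X,\Y]_\uparrow$ carries the induced order $J^\dag\circ Q\circ J$. Since $F$ is Scott continuous it is in particular monotone, so the fact that $\qPOS$ is symmetric monoidal closed with internal hom $[\X,\Y]_\sqsubseteq$ \cite[Theorem~8.3]{KLM20} yields a unique monotone function $\widehat F\:\Z\to[\X,\Y]_\sqsubseteq$ with $\Eval_\sqsubseteq\circ(\widehat F\times I_\X)=F$; on underlying quantum sets this $\widehat F$ is just the currying of $F$ in $\qSet$, so it is in any case uniquely determined as a function. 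The whole lemma then reduces to showing that $\widehat F$ factors through $J$, which by Lemma \ref{lem:factors through a subset} is the same as the inclusion $\ran\widehat F\subseteq[\X,\Y]_\uparrow$.

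To prove that inclusion I would factor $\widehat F$ through its range, $\widehat F=J_{\ran\widehat F}\circ\overline{\widehat F}$, with $\overline{\widehat F}\:\Z\to\ran\widehat F$ surjective. Then
\[
\bigl(\Eval_\sqsubseteq\circ(J_{\ran\widehat F}\times I_\X)\bigr)\circ(\overline{\widehat F}\times I_\X)=\Eval_\sqsubseteq\circ(\widehat F\times I_\X)=F,
\]
and $F$, being Scott continuous, is Scott continuous in its second variable by Proposition \ref{prop:Scott continuous iff Scott continuous in both variables}. Applying Lemma \ref{lem:rabarber} along the surjection $\overline{\widehat F}$ shows that the quantum family $\Eval_\sqsubseteq\circ(J_{\ran\widehat F}\times I_\X)\:\ran\widehat F\times\X\to\Y$ is itself Scott continuous in its second variable, so $\ran\widehat F$ lies in the set $\mathfrak A$ of Proposition \ref{prop:maximum Scott continuous subset}, whence $\ran\widehat F\subseteq[\X,\Y]_\uparrow$ by maximality. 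This is the one step with any real content: Lemma \ref{lem:rabarber} is exactly what is needed to transport Scott continuity in the second variable back across $\overline{\widehat F}$, and everything else is bookkeeping with the factorization system on $\qSet$ and the $\qPOS$-exponential.

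With $\ran\widehat F\subseteq[\X,\Y]_\uparrow$ in hand, Lemma \ref{lem:factors through a subset} produces a function $G\:\Z\to[\X,\Y]_\uparrow$ with $\widehat F=J\circ G$; concretely, $G$ is the composite of the corestriction $\overline{\widehat F}\:\Z\to\ran\widehat F$ with the inclusion $\ran\widehat F\hookrightarrow[\X,\Y]_\uparrow$. This $G$ is monotone, since $\overline{\widehat F}$ is monotone by \cite[Lemma~3.3]{KLM20} and the inclusion is monotone by \cite[Lemma~2.1]{KLM20}, using that the induced order on $\ran\widehat F$ as a subset of $[\X,\Y]_\uparrow$ coincides with its induced order as a subset of $[\X,\Y]_\sqsubseteq$. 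The diagram commutes because $\Eval_\uparrow\circ(G\times I_\X)=\Eval_\sqsubseteq\circ(J\times I_\X)\circ(G\times I_\X)=\Eval_\sqsubseteq\circ(\widehat F\times I_\X)=F$. For uniqueness, if $G'\:\Z\to[\X,\Y]_\uparrow$ is monotone with $\Eval_\uparrow\circ(G'\times I_\X)=F$, then $J\circ G'$ is a monotone map $\Z\to[\X,\Y]_\sqsubseteq$ satisfying $\Eval_\sqsubseteq\circ((J\circ G')\times I_\X)=F$, so $J\circ G'=\widehat F=J\circ G$ by the uniqueness of the currying in $\qPOS$, and therefore $G'=G$ because $J$ is a monomorphism.
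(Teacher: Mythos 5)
Your proposal is correct and follows essentially the same route as the paper's own proof: curry $F$ in $\qPOS$ via \cite[Theorem~8.3]{KLM20}, factor the curried map through its range, use Lemma \ref{lem:rabarber} together with the maximality from Proposition \ref{prop:maximum Scott continuous subset} to get the range inside $[\X,\Y]_\uparrow$, and conclude by injectivity of the inclusion. The only (immaterial) difference is that you establish monotonicity of $G$ by composing the monotone corestriction with the monotone inclusion, while the paper derives it directly from $J\circ G\circ T\leq Q\circ J\circ G$ and the injectivity of $J$.
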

\begin{proof}
By \cite[Theorem~8.3]{KLM20}, which expresses that $\qPOS$ is monoidal closed, there exists a unique monotone $\tilde G:\Z\to[\X,\Y]_\sqsubseteq$ such that $F=\Eval_\sqsubseteq\circ ( \tilde G\times I_\X)$. We factor $\tilde G$ through its range $\V: = \ran \tilde G$, writing $\tilde G = J_\V \circ \bar G$ \cite[Definition~3.2]{KLM20}. The function $F = \Eval_\sqsubseteq\circ (\tilde G\times I_\X)=\Eval_\sqsubseteq\circ ( J_\V\times I_\X)\circ (\bar G\times I_\X)$ is given to be Scott continuous, and in particular, it is Scott continuous in the second variable. Since $\bar G$ is surjective, we apply Lemma \ref{lem:rabarber} to conclude that $\Eval_\sqsubseteq\circ ( J_\V\times I_\X)$ is Scott continuous in the second variable. Therefore, by the definition $[\X,\Y]_\uparrow$, we have that $\V \subsetof [\X,\Y]_\uparrow$.

Writing $J$ for the inclusion $[\X, \Y]_\uparrow \hookrightarrow [\X, \Y]_\below$, as before, we clearly have that $J_\V = J \circ J_\V^{[\X, \Y]_\uparrow}$. Hence, defining $G:= J_\V^{[\X, \Y]_\uparrow} \circ \bar G$, we have that $J \circ G = J \circ J_\V^{[\X, \Y]_\uparrow} \circ \bar G = J_\V \circ \bar G = \tilde G$. We argue that $G$ is monotone. The function $\tilde G$ is certainly monotone, so $J\circ G\circ T=\tilde G\circ T\leq Q\circ \tilde G=Q\circ J\circ G$, where $Q$ is the canonical order on $[\X, \Y]_\below$. The injectivity of $J$ now implies that $G\circ T\leq J^\dag\circ Q\circ J\circ G,$
which expresses that $G$ is monotone, too.

By construction, $\Eval_\uparrow\circ (G\times I_\X ) = \Eval_\sqsubseteq \circ (J\times I_\X )\circ (G\times I_\X ) = \Eval_\sqsubseteq\circ (\tilde G\times I_\X) = F$. The function $G$ is also the unique function with this property, simply because $J$ is injective. Indeed, if $G_1$ and $G_2$ are two functions $\Z \to [\X,\Y]_\uparrow$ making the diagram commute, then we have that $F = \Eval_\uparrow \circ (G_i \times I_\X) = \Eval_\below \circ ((J \circ G_i) \times I_\X)$ for each $i \in \{1,2\}$. The universal property of $\Eval_\below$ implies that $J \circ G_1 = J \circ G_2$, and the injectivity of $J$ is then implies that $G_1 = G_2$. Therefore, $G$ is the unique monotone function making the diagram commute, as claimed.
\end{proof}

\begin{proposition}\label{prop:relating Q and W}
Let $(\X,R)$ and $(\Y,S)$ be quantum posets. Let $Q$ be the order on $[\X,\Y]_\sqsubseteq$ as in \cite[Theorem 8.3]{KLM20}. Let $W$ be a relation on $[\X,\Y]_\sqsubseteq$ such that one of the following (equivalent) conditions hold:
\begin{itemize}
    \item[(a)] $W\times I_\X\leq\Eval_\sqsubseteq^\dag\circ S\circ\Eval_\sqsubseteq$;
    \item[(b)] $\Eval_\sqsubseteq\circ (W\times I_\X)\leq S\circ \Eval_\sqsubseteq$.
\end{itemize}
Then $W\leq Q$.
\end{proposition}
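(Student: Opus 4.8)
The plan is to prove the equivalence of (a) and (b) by the Galois connection attached to $\Eval_\sqsubseteq$, and then to match condition (b) with the defining property of the order $Q$ from \cite[Theorem~8.3]{KLM20}.

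For the equivalence of (a) and (b): the binary relation $\Eval_\sqsubseteq\:[\X,\Y]_\sqsubseteq\times\X\to\Y$ is a function, so $I_{[\X,\Y]_\sqsubseteq\times\X}\leq\Eval_\sqsubseteq^\dag\circ\Eval_\sqsubseteq$ and $\Eval_\sqsubseteq\circ\Eval_\sqsubseteq^\dag\leq I_\Y$. If (a) holds, then $\Eval_\sqsubseteq\circ(W\times I_\X)\leq\Eval_\sqsubseteq\circ\Eval_\sqsubseteq^\dag\circ S\circ\Eval_\sqsubseteq\leq S\circ\Eval_\sqsubseteq$, which is (b); conversely, if (b) holds, then $W\times I_\X\leq\Eval_\sqsubseteq^\dag\circ\Eval_\sqsubseteq\circ(W\times I_\X)\leq\Eval_\sqsubseteq^\dag\circ S\circ\Eval_\sqsubseteq$, which is (a). This is the standard adjunction induced by a map in a quantaloid and uses no properties of $W$, $R$, or $S$.

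For the inequality $W\leq Q$: recall from \cite[Theorem~8.3]{KLM20} that $[\X,\Y]_\sqsubseteq$ sits inside the quantum function set $\Y^\X$ in such a way that $\Eval_\sqsubseteq$ is monotone in the second variable, i.e. $\Eval_\sqsubseteq\circ(I_{[\X,\Y]_\sqsubseteq}\times R)\leq S\circ\Eval_\sqsubseteq$, and that $Q$ is the largest relation $W$ on $[\X,\Y]_\sqsubseteq$ with $\Eval_\sqsubseteq\circ(W\times R)\leq S\circ\Eval_\sqsubseteq$; such a largest relation exists because $\times$ and composition in $\qRel$ preserve joins, and it dominates any relation enjoying the same property irrespective of reflexivity, transitivity, or antisymmetry. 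It therefore suffices to deduce this last inequality from (b): writing $W\times R=(W\times I_\X)\circ(I_{[\X,\Y]_\sqsubseteq}\times R)$ by functoriality of $\times$, and then applying (b), the monotonicity of $\Eval_\sqsubseteq$ in the second variable, and the transitivity of $S$ in turn, we obtain $\Eval_\sqsubseteq\circ(W\times R)\leq S\circ S\circ\Eval_\sqsubseteq\leq S\circ\Eval_\sqsubseteq$. Hence $W\leq Q$, and a fortiori $Q$ itself satisfies (a)/(b) on taking $W=Q$, so $Q$ is in fact the largest relation satisfying either condition.

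The main obstacle is bookkeeping: one must pin down the exact way \cite[Theorem~8.3]{KLM20} presents $Q$ and reconcile it with (a)/(b). If that theorem characterizes $Q$ through the $I_\X$-form of (b), then $W\leq Q$ is immediate and the upgrade through $R$ is superfluous; if it uses the $R$-form, or the transpose form (a), or an explicit componentwise formula, the short computation above (together with the Galois step) bridges the gap. No genuinely new calculation is needed, and nothing in the argument requires $W$ to be an order, which is why the statement is phrased for an arbitrary relation $W$ on $[\X,\Y]_\sqsubseteq$.
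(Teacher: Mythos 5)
Your proof is correct and follows essentially the same route as the paper's: the equivalence of (a) and (b) via $I\leq\Eval_\sqsubseteq^\dag\circ\Eval_\sqsubseteq$ and $\Eval_\sqsubseteq\circ\Eval_\sqsubseteq^\dag\leq I_\Y$, followed by the factorization $W\times R=(W\times I_\X)\circ(I\times R)$ and the chain $\Eval_\sqsubseteq\circ(W\times R)\leq S\circ S\circ\Eval_\sqsubseteq\leq S\circ\Eval_\sqsubseteq$, then maximality of $Q$. The only cosmetic difference is that the paper passes through $I\leq Q$ and the defining property of $Q$ where you invoke monotonicity of $\Eval_\sqsubseteq$ in the second variable directly, and you are in fact slightly more careful than the paper about why maximality of $Q$ among \emph{pre-orders} suffices to dominate an arbitrary relation $W$.
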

\begin{proof}
Since $I_{[\X,\Y]_\sqsubseteq}\leq \Eval_\sqsubseteq^\dag\circ  \Eval_\sqsubseteq$, the conditions (a) and (b) are equivalent.

In order to prove that any of these conditions imply $Q'\leq Q$, we first note that by \cite[Theorem 8.3]{KLM20},  $Q$ is the largest pre-order on $[\X,\Y]_\sqsubseteq$ such that 
\begin{equation}\label{ineq:Q def2}
\Eval_{\sqsubseteq}\circ (Q\times R)\leq S\circ \Eval_{\sqsubseteq},
\end{equation} holds. Assume that condition (b) holds. Then
\begin{align*}
    \Eval_\sqsubseteq\circ (W\times R) & = \Eval_\sqsubseteq\circ (W\times I_\X)\circ(I_{[\X,\Y]_\sqsubseteq}\times R)\leq S\circ \Eval_\sqsubseteq\circ(I_{[\X,\Y]_\sqsubseteq}\times R)\\
    & \leq S\circ\Eval_\sqsubseteq\circ(Q\times R)\leq S\circ S\circ \Eval_\sqsubseteq\leq S\circ\Eval_\sqsubseteq,
\end{align*}
where the first inequality is (b), the second inequality follows from \cite[Lemma A.3]{KLM20}, and the penultimate inequality follows since $\Eval_\sqsubseteq$ is monotone (see also the identity (\ref{ineq:Q def2})).
Since $Q$ is the largest pre-order on $[\X,\Y]_\sqsubseteq$ such that (\ref{ineq:Q def2}) holds, it follows that $W\leq Q$.
\end{proof}

\begin{proposition}\label{prop:step 2 monoidal closure qCPO}
Let $(\X,R)$ and $(\Y,S)$ be quantum cpos. Then $([\X,\Y]_{\uparrow},J^\dag\circ Q\circ J)$ is a quantum cpo, and $\Eval_{\uparrow}\:[\X,\Y]_\uparrow\times\X\to\Y$ is Scott continuous. 
\end{proposition}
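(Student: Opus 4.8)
The plan is to verify the two claims of Proposition \ref{prop:step 2 monoidal closure qCPO} in the order that the constructions were made. First I would prove that $[\X,\Y]_\uparrow$ is a sub-cpo of $[\X,\Y]_\below$, which by Proposition \ref{prop:subqcpo} immediately yields both that $([\X,\Y]_\uparrow, J^\dag\circ Q\circ J)$ is a quantum cpo and that $J$ is Scott continuous. (Note that $([\X,\Y]_\below, Q)$ is indeed a quantum cpo; it arises as a limit-type construction, but more to the point we can obtain it as a sub-cpo or retract of the quantum cpo $\Y^\X$ ordered appropriately — or, if the excerpt has already established that $\qPOS$ inner homs of quantum cpos are quantum cpos, cite that. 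I would state this as the standing fact I use.) To show $[\X,\Y]_\uparrow$ is a sub-cpo, let $\H$ be atomic and let $L_1\sqsubseteq L_2\sqsubseteq\cdots:\H\to[\X,\Y]_\below$ be a monotonically ascending sequence with limit $L_\infty$ such that each $\ran L_n\subseteq[\X,\Y]_\uparrow$; I must show $\ran L_\infty\subseteq[\X,\Y]_\uparrow$. By Lemma \ref{lem:factors through a subset} write $L_n=J\circ \tilde L_n$. The idea is that the subset $\W:=\ran L_\infty$ of $[\X,\Y]_\below$ satisfies the defining property of membership in $\mathfrak A$ from Proposition \ref{prop:maximum Scott continuous subset}, namely that $\Eval_\below\circ(J_\W\times I_\X)$ is Scott continuous in the second variable; maximality of $[\X,\Y]_\uparrow$ then forces $\W\subseteq[\X,\Y]_\uparrow$.

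To establish that property of $\W=\ran L_\infty$, I would argue atom-by-atom using the atomicity machinery already in play (as in the proof of Proposition \ref{prop:maximum Scott continuous subset}). Fix an atom $W\atomof\ran L_\infty$. Since $L_n\nearrow L_\infty$ and $\ran L_n\subseteq[\X,\Y]_\uparrow$, each $\Eval_\below\circ(J\times I_\X)$ restricted along $\tilde L_n$ is Scott continuous in the second variable; then I need to transfer Scott continuity in the second variable "through the limit" to the function indexed by $\W$. Here the key technical input is Lemma \ref{lem:D Scott continuous in first variable and F Scott continuous then DFtimes G Scott continuous in first variable} (used in the "second variable" form), together with Lemma \ref{lem:rabarber}, which lets one pull Scott continuity in the second variable back along a surjection. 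Concretely: for a monotonically ascending $K_1\sqsubseteq K_2\sqsubseteq\cdots:\H'\to\X$ with limit $K_\infty$, one wants $\bigwedge_m S\circ\Eval_\below\circ(J_\W\times I_\X)\circ(I_\W\times K_m)=S\circ\Eval_\below\circ(J_\W\times I_\X)\circ(I_\W\times K_\infty)$; composing with a suitable decomposition/inclusion of $\H$ reduces this to the known Scott continuity of $\Eval_{[\X,\Y]_\uparrow}$ in the second variable at each finite stage plus the fact that $L_n\nearrow L_\infty$ and that $\Eval_\below$ is Scott continuous on $[\X,\Y]_\uparrow$ once we know $[\X,\Y]_\uparrow$ behaves well — so some care with the order of the two inductions is needed. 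Once $\W\subseteq[\X,\Y]_\uparrow$, Proposition \ref{prop:subqcpo} finishes the first half.

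For the second claim, I must show $\Eval_\uparrow=\Eval_\below\circ(J\times I_\X):[\X,\Y]_\uparrow\times\X\to\Y$ is Scott continuous (as a function of both variables jointly), not merely Scott continuous in the second variable. By Proposition \ref{prop:Scott continuous iff Scott continuous in both variables} it suffices to check Scott continuity in each variable separately. Scott continuity in the second variable is exactly the defining property of $[\X,\Y]_\uparrow$, so that is free. For Scott continuity in the first variable, let $G_1\sqsubseteq G_2\sqsubseteq\cdots:\H\to[\X,\Y]_\uparrow$ be a monotonically ascending sequence with limit $G_\infty$; I want $\Eval_\uparrow\circ(G_n\times I_\X)\nearrow\Eval_\uparrow\circ(G_\infty\times I_\X)$. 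Composing with $J$, $J\circ G_n\nearrow J\circ G_\infty$ in $[\X,\Y]_\below$ since $J$ is Scott continuous and by uniqueness of limits (Lemma \ref{lem:lim is sup}); one wants to quote that $\Eval_\below$ itself is Scott continuous in its first variable. This last point is the one place where I expect the real work: the order $Q$ on $[\X,\Y]_\below$ is characterized (Proposition \ref{prop:relating Q and W}, \cite[Theorem 8.3]{KLM20}) precisely so that $\Eval_\below\circ(Q\times R)\leq S\circ\Eval_\below$, i.e. $\Eval_\below$ is monotone; upgrading monotonicity to Scott continuity in the first variable requires going back to the concrete description of $[\X,\Y]_\below$ as a quantum set of functions with the $\sqsubseteq$-order and running a finite-dimensionality/stabilization argument of the kind used repeatedly above (each $L(H,\,\cdot\,)$ is finite-dimensional, so the relevant descending chains of operator subspaces stabilize). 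I would isolate this as a lemma — "$\Eval_\below$ is Scott continuous in the first variable whenever $\Y$ is a quantum cpo" — prove it by the atom-by-atom stabilization method, and then the rest of the proposition is assembling the pieces: $[\X,\Y]_\uparrow$ is a quantum cpo, $J$ and $\Eval_\below\circ(J\times I_\X)$ are Scott continuous in each variable, hence $\Eval_\uparrow$ is Scott continuous by Proposition \ref{prop:Scott continuous iff Scott continuous in both variables}.

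The main obstacle, then, is proving Scott continuity of the evaluation function in the first variable on $[\X,\Y]_\below$; everything else is either a direct appeal to Proposition \ref{prop:subqcpo}, Proposition \ref{prop:maximum Scott continuous subset}, Proposition \ref{prop:Scott continuous iff Scott continuous in both variables}, and the lemmas on Scott continuity of composition, or a bookkeeping exercise with decompositions of atomic quantum sets. I would be careful to separate the two transfinite-free inductions ("$[\X,\Y]_\uparrow$ is a sub-cpo" and "$\Eval_\below$ is first-variable Scott continuous") so that neither silently assumes the other.
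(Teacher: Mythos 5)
Your plan rests on two facts that the paper never establishes and that are not safe to take as ``standing'': (i) that $([\X,\Y]_\below,Q)$ is itself a quantum cpo, and (ii) that $\Eval_\below$ is Scott continuous in its \emph{first} variable on all of $[\X,\Y]_\below$. The entire sub-cpo strategy (via Proposition \ref{prop:subqcpo}) needs (i), since a sub-cpo is only defined inside an ambient quantum cpo, and your second half needs (ii). Neither is a citation away: $[\X,\Y]_\below$ is the $\qPOS$ inner hom of \emph{all monotone} functions, $\Y^\X$ carries no quantum cpo structure in this paper, and proving (i) would already require constructing limits of ascending sequences $\H\to[\X,\Y]_\below$ --- which is essentially the same work as the proposition itself, so you have not reduced the problem. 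You acknowledge the danger of circularity between your two inductions but do not resolve it; as written, the argument that $\ran L_\infty\in\mathfrak A$ (``transfer Scott continuity in the second variable through the limit'') is exactly the step where the content lives, and it is left as a gesture.

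The paper takes a route that avoids both assumptions. Given an ascending sequence $K_1\sqsubseteq K_2\sqsubseteq\cdots:\H\to[\X,\Y]_\uparrow$, it pushes the sequence into $\Y$ by setting $F_n=\Eval_\uparrow\circ(K_n\times I_\X)$; these have a limit $F_\infty$ because $\Y$ (not the function space) is a quantum cpo, each $F_n$ is Scott continuous, hence so is $F_\infty$ by Proposition \ref{prop:limit of functions is Scott continuous}, and Lemma \ref{lem:step 1 monoidal closure qCPO} curries $F_\infty$ back to a candidate $K_\infty:\H\to[\X,\Y]_\uparrow$. The real work is then verifying $K_n\nearrow K_\infty$, which is done by two applications of Proposition \ref{prop:relating Q and W}: one shows $J^\dag\circ Q\circ J\circ K_\infty\leq\bigwedge_n J^\dag\circ Q\circ J\circ K_n$, and the other converts the inequality $\bigl(\bigwedge_n J^\dag\circ Q\circ J\circ K_n\bigr)\times I_\X\leq$ (something dominated by $\Eval_\below^\dag\circ S\circ\Eval_\below$) into the reverse inequality. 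This simultaneously proves that $[\X,\Y]_\uparrow$ is a quantum cpo and that $\Eval_\uparrow$ is Scott continuous in the first variable; second-variable Scott continuity is definitional, and Proposition \ref{prop:Scott continuous iff Scott continuous in both variables} finishes. Your proposal contains no counterpart to the $Q$-versus-$W$ comparison step, which is where the order on the function space is actually pinned down, so the gap is substantive rather than cosmetic.
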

\begin{proof}
The identity $I_\X$ is trivially monotone, and the inclusion $J$ is monotone by \cite[Lemma~2.4]{KLM20}. Since $\qPOS$ is a monoidal category \cite[Theorem~7.5]{KLM20}, it follows that $J\times I_\X$ is also monotone. Furthermore, since $\Eval_\sqsubseteq$ is monotone \cite[Theorem~8.3]{KLM20}, and monotone functions compose, we conclude that $\Eval_{\uparrow}$ is monotone. 

Let $\H$ be an atomic quantum set and $K_1\sqsubseteq K_2\sqsubseteq\cdots$ be a monotonically ascending sequence of functions $\H\to[\X,\Y]_\uparrow$. We will show that this sequence has a limit $K_\infty$, to establish that $[\X,\Y]_\uparrow$ is a quantum cpo. We will also show that $\Eval_\uparrow\circ (K_n\times I_\X)\nearrow\Eval_\uparrow\circ (K_\infty\times I_\X)$, to establish that $\Eval_\uparrow$ is Scott continuous in the first variable. By definition of $[\X,\Y]_\uparrow$, we already know that $\Eval_\uparrow$ is Scott continuous in the second variable. Thus, we will conclude by Proposition \ref{prop:Scott continuous iff Scott continuous in both variables} that $\Eval_\uparrow$ is Scott continuous.

First, we define $K_\infty$ and show that $\Eval_\sqsubseteq\circ (J\times I_\X)\circ (K_n\times I_\X)\nearrow \Eval_\sqsubseteq\circ (J\times I_\X)\circ (K_\infty\times I_\X)$. 
For brevity, we write $F_n=\Eval_\uparrow \circ (K_n\times I_\X)$ for each $n\in\NN$.
By \cite[Corollary~7.7]{KLM20}, we have that
$ K_1\times I_\X\sqsubseteq K_2\times I_\X\sqsubseteq\cdots.$
By the monotonicity of $\Eval_\uparrow$, we have that
$F_1\sqsubseteq F_2\sqsubseteq \cdots,$
and since the codomain of the functions $F_n$ is the quantum cpo $\Y$, it follows from Proposition \ref{prop:suprema of sequences of functions} that $F_n\nearrow F_\infty$ for some function $F_\infty:\X\times\H\to\Y$.

We equip $\H$ with the trivial order $I_\H$. Then given another atomic quantum set $\H'$, any monotonically ascending sequence $G_1\sqsubseteq G_2\sqsubseteq\cdots$ of functions $\H' \to \H$ is constant, as in Example \ref{ex:trivially ordered set is qcpo}. If $T$ denotes the order on $[\X,\Y]_\uparrow$, then for each $n\in\NN$ we have  
$\bigwedge_{k\in\NN}T\circ F_n \circ (G_k\times I_\X)=T \circ F_n \circ(G_1\times I_\X)$. Hence, $F_n$ is automatically Scott continuous in the first variable.
Moreover, it is straightforward to see that the functions $K_n$ are Scott continuous. 
It follows from Lemma \ref{lem:D Scott continuous in first variable and F Scott continuous then DFtimes G Scott continuous in first variable} that each function $F_n=\Eval_\uparrow\circ (K_n\times I_\X)$ is also Scott continuous in the second variable. Thus, each function $F_n$ is Scott continuous (Proposition \ref{prop:Scott continuous iff Scott continuous in both variables}). We conclude that, by Proposition \ref{prop:limit of functions is Scott continuous}, its limit $F_\infty:\H\times\X\to\Y$ is Scott continuous, too. We now apply Lemma \ref{lem:step 1 monoidal closure qCPO} to conclude that there is a function $K_\infty:\H\to[\X,\Y]_\uparrow$ such that
$F_\infty =\Eval_\uparrow\circ (K_\infty\times I_\X)$.

We show that $K_n \nearrow K_\infty$. Fix $n\in\NN$. Since $F_n\sqsubseteq F_\infty$, we have that $F_\infty\leq S\circ F_n$, whence we obtain that
\begin{align*}
    K_\infty \times I_\X & \leq \Eval_\uparrow^\dag\circ\Eval_\uparrow \circ (K_\infty\times I_\X )= \Eval_\uparrow^\dag\circ F_\infty\\
    & \leq\Eval_\uparrow^\dag\circ S\circ F_n =\Eval_\uparrow^\dag\circ S\circ \Eval_{\uparrow}\circ (K_n\times I_\X )\\
    & =(\Eval_\sqsubseteq\circ (J\times I_\X ))^\dag\circ S\circ \Eval_\sqsubseteq\circ (J\times I_\X )\circ (K_n\times I_\X )\\
    & = (J^\dag\times I_\X )\circ\Eval_\sqsubseteq^\dag\circ S\circ \Eval_\sqsubseteq\circ ((J\circ K_n)\times I_\X ).
\end{align*}
Appealing to the definition of a function between quantum sets, we conclude that
\begin{align*}
(J\circ K_\infty\circ K_n^\dag\circ J^\dag)\times I_\X \leq \Eval_\sqsubseteq^\dag\circ S\circ \Eval_\sqsubseteq.
\end{align*}
It now follows from Proposition \ref{prop:relating Q and W} that
$J\circ K_\infty\circ K_n^\dag\circ J^\dag\leq Q$ for each $n\in\NN$.
Again appealing to the definition of a function between quantum sets, we infer that
$J\circ K_\infty\leq Q\circ J\circ K_n$;
hence,
\begin{equation}\label{ineq:qidentity}
J^\dag\circ Q\circ J\circ K_\infty\leq J^\dag\circ Q\circ Q\circ J\circ K_n\leq J^\dag\circ Q\circ J\circ K_n.    
\end{equation}
The binary relation $J^\dag\circ Q\circ J$ is of course the canonical order on $[\X,\Y]_\uparrow$, so $K_n\sqsubseteq K_\infty$. We vary $n \in \NN$ to conclude that $J^\dag\circ Q\circ J\circ K_\infty\leq \bigwedge_{n\in\NN}J^\dag\circ Q\circ J\circ K_n$, either from this inequality or directly from Inequality (\ref{ineq:qidentity}).

To establish that $K_n \nearrow K_\infty$, it remains to show that $J^\dag\circ Q\circ J\circ K_\infty\geq \bigwedge_{n\in\NN}J^\dag\circ Q\circ J\circ K_n$. The function $\Eval_\sqsubseteq$ is monotone, and in particular, it is monotone in the first variable (Lemma \ref{lem:monotone in both variables iff monotone}); hence $
Q\times I_\X\leq \Eval_{\sqsubseteq}^\dag \circ S\circ\Eval_\sqsubseteq$.
We now calculate that
\begin{align*}
    \left(\bigwedge_{n\in\NN} J^\dag\circ Q\circ J\circ K_n\right)\times I_\X & = \bigwedge_{n\in\NN}(J^\dag\times I_\X)\circ (Q\times I_\X)\circ (J\times I_\X)\circ (K_n\times I_\X)\\
&     \leq  \bigwedge_{n\in\NN}(J^\dag\times I_\X)\circ \Eval_\sqsubseteq^\dag\circ S\circ\Eval_\sqsubseteq\circ (J\times I_\X)\circ (K_n\times I_\X)\\
&     =
\bigwedge_{n\in\NN}(J\times I_\X)^\dag\circ \Eval_\sqsubseteq^\dag\circ S\circ F_n\\
& =
(J\times I_\X)^\dag\circ \Eval_\sqsubseteq^\dag\circ \bigwedge_{n\in\NN} S\circ F_n\\
& = 
(J\times I_\X)^\dag\circ \Eval_\sqsubseteq^\dag\circ S\circ F_\infty
\\
& = 
(J^\dag\times I_\X)\circ \Eval_\sqsubseteq^\dag\circ S\circ \Eval_\sqsubseteq\circ (J\times I_\X)\circ (K_\infty\times I_\X),
\end{align*}
where we use \cite[Lemma~A.3]{KLM20} in the first equality and \cite[Proposition~A.6]{KLM20} in the third equality. Appealing to the definition of a function between quantum sets, we infer that 
\[  \left(J\circ \left( \bigwedge_{n\in\NN} J^\dag\circ Q\circ J\circ K_n \right) \circ K_\infty^\dag\circ J^\dag \right)\times I_\X\leq \Eval_\sqsubseteq^\dag\circ S\circ \Eval_\sqsubseteq.\]
Applying Proposition \ref{prop:relating Q and W}, we obtain
$J\circ \left( \bigwedge_{n\in\NN} J^\dag\circ Q\circ J\circ K_n \right) \circ K_\infty^\dag\circ J^\dag\leq Q.$ Appealing to the injectivity of $J$, we reason that 
\[ \left( \bigwedge_{n\in\NN} J^\dag\circ Q\circ J\circ K_n \right) \circ K_\infty^\dag = J^\dag\circ J\circ  \left( \bigwedge_{n\in\NN} J^\dag\circ Q\circ J\circ K_n \right) \circ K_\infty^\dag\circ J^\dag\circ J\leq J^\dag\circ Q\circ J.\]
Therefore, $\bigwedge_{n\in\NN} J^\dag\circ Q\circ J\circ K_n \leq J^\dag\circ Q\circ J\circ K_\infty$, and more precisely, $\bigwedge_{n\in\NN} J^\dag\circ Q\circ J\circ K_n = J^\dag\circ Q\circ J\circ K_\infty$.

We conclude that $K_n \nearrow K_\infty$ and, more generally, that $[\X,\Y]_\uparrow$ is a quantum cpo. Furthermore, $\Eval_\uparrow\circ (K_n\times I_\X)=F_n\nearrow F_\infty=\Eval_\uparrow\circ (K_\infty\times I_\X)$, and thus, $\Eval_\sqsubseteq$ is Scott continuous in the first variable. Therefore, it is Scott continuous.
\end{proof}

\begin{theorem}\label{thm:qCPO is monoidal closed}
The category $\qCPO$ is monoidal closed with respect to the monoidal product $\times$, i.e., for each triple of quantum cpos $(\X,R)$,  $(\Y,S)$, and $(\Z,T)$ and for each Scott continuous function $F:\Z\times \X\to \Y$,
there is a Scott continuous function $G \: \Z \To  [\X,\Y]_{\uparrow}$ such that the following diagram commutes.
\[\begin{tikzcd}
\Z \times \X
\arrow{rrrd}{F}
\arrow[dotted]{d}[swap]{G \times I_\X}
&
&
&
\\
\phantom{}[\X,\Y]_{\uparrow} \times \X
\arrow{rrr}[swap]{\mathrm{Eval_{\uparrow}}}
&
&
&
\Y
\end{tikzcd}\]
\end{theorem}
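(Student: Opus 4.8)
The plan is to reduce the theorem to the two results immediately preceding it. Proposition~\ref{prop:step 2 monoidal closure qCPO} already establishes that $([\X,\Y]_\uparrow, J^\dag\circ Q\circ J)$ is a quantum cpo and that $\Eval_\uparrow$ is Scott continuous, while Lemma~\ref{lem:step 1 monoidal closure qCPO} produces, for the given Scott continuous $F\:\Z\times\X\to\Y$, a \emph{unique monotone} function $G\:\Z\to[\X,\Y]_\uparrow$ with $\Eval_\uparrow\circ(G\times I_\X)=F$; that equation is precisely commutativity of the required diagram. So the only remaining task is to promote monotonicity of $G$ to Scott continuity, after which uniqueness of $G$ among Scott continuous functions is inherited from its uniqueness among monotone functions.

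To show $G$ is Scott continuous, I would fix an atomic quantum set $\H$ and a monotonically ascending sequence $K_1\sqsubseteq K_2\sqsubseteq\cdots\:\H\to\Z$ with limit $K_\infty$, and aim to prove $G\circ K_n\nearrow G\circ K_\infty$. First, Lemma~\ref{lem:tensor product is Scott continuous bifunctor}, applied to the sequence $(K_n)$ and the constant sequence $(I_\X)$, gives $K_n\times I_\X\nearrow K_\infty\times I_\X$; since $\Z\times\X$ is a quantum cpo (Corollary~\ref{cor:monoidal product of qCPOs}) and $F$ is Scott continuous, Lemma~\ref{lem:left multiplication by Scott continuous function is Scott continuous} then yields
\[
F\circ(K_n\times I_\X)\nearrow F\circ(K_\infty\times I_\X).
\]
Using $F=\Eval_\uparrow\circ(G\times I_\X)$ and the interchange law $(G\times I_\X)\circ(K_n\times I_\X)=(G\circ K_n)\times I_\X$, this says $\Eval_\uparrow\circ((G\circ K_n)\times I_\X)\nearrow\Eval_\uparrow\circ((G\circ K_\infty)\times I_\X)$. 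On the other hand, since $G$ is monotone, $G\circ K_1\sqsubseteq G\circ K_2\sqsubseteq\cdots$ is a monotonically ascending sequence in the quantum cpo $[\X,\Y]_\uparrow$, so it has a limit $L_\infty$ there; and because $\Eval_\uparrow$ is Scott continuous it is Scott continuous in its first variable (Proposition~\ref{prop:Scott continuous iff Scott continuous in both variables}), whence $\Eval_\uparrow\circ((G\circ K_n)\times I_\X)\nearrow\Eval_\uparrow\circ(L_\infty\times I_\X)$. Uniqueness of limits (Lemma~\ref{lem:lim is sup}) forces $\Eval_\uparrow\circ(L_\infty\times I_\X)=\Eval_\uparrow\circ((G\circ K_\infty)\times I_\X)$, that is, $\Eval_\below\circ((J\circ L_\infty)\times I_\X)=\Eval_\below\circ((J\circ G\circ K_\infty)\times I_\X)$; the universal property of $\Eval_\below$ in the monoidal closed category $\qPOS$ \cite[Theorem~8.3]{KLM20}, together with injectivity of $J$, gives $L_\infty=G\circ K_\infty$. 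Hence $G\circ K_n\nearrow G\circ K_\infty$, so $G$ is Scott continuous, and the diagram commutes by construction.

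I expect the only delicate step to be this final identification $L_\infty=G\circ K_\infty$: one must be careful that the limit of $G\circ K_n$ computed inside $[\X,\Y]_\uparrow$ coincides with the function delivered by Lemma~\ref{lem:step 1 monoidal closure qCPO}, and this is exactly where uniqueness of limits and injectivity of the inclusion $J\:[\X,\Y]_\uparrow\hookrightarrow[\X,\Y]_\below$ do the work. Everything else is routine bookkeeping with the interchange law and the Scott continuity facts already in hand; the genuinely hard analysis has already been carried out in the proof of Proposition~\ref{prop:step 2 monoidal closure qCPO}.
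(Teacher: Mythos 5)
Your proposal is correct and follows essentially the same route as the paper's proof: obtain the unique monotone $G$ from Lemma~\ref{lem:step 1 monoidal closure qCPO}, let $L_\infty$ be the limit of $G\circ K_n$ in the quantum cpo $[\X,\Y]_\uparrow$, compare $\Eval_\uparrow\circ(L_\infty\times I_\X)$ with $\Eval_\uparrow\circ((G\circ K_\infty)\times I_\X)$ via uniqueness of limits, and conclude $L_\infty=G\circ K_\infty$ from the universal property of $\Eval_\sqsubseteq$ and injectivity of $J$ (the paper makes explicit the small point that one equips $\H$ with the trivial order so both curried functions are monotone before invoking that universal property). The only cosmetic difference is that you invoke Scott continuity of $\Eval_\uparrow$ in its first variable where the paper combines Lemmas~\ref{lem:tensor product is Scott continuous bifunctor} and~\ref{lem:left multiplication by Scott continuous function is Scott continuous}; these are interchangeable.
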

\begin{proof}
Let $F:\Z\times\X\to\Y$ be Scott continuous. By Lemma \ref{lem:step 1 monoidal closure qCPO}, there is a unique monotone function $G:\Z\to[\X,\Y]_\uparrow$ such that the diagram in the statement of the theorem commutes. It remains only to show that $G$ is Scott continuous. 

Let $H$ be an atomic quantum set, and let $K_1\sqsubseteq K_2\sqsubseteq\cdots$ be a monotonically ascending sequence of functions $\H\to\Z$, with limit $K_\infty$. For each $n\in\NN$, let $\tilde K_n=G\circ K_n$. Since $G$ is monotone, 
$\tilde K_1\sqsubseteq \tilde K_2\sqsubseteq\cdots$ is a monotonically ascending sequence of functions $\H\to[\X,\Y]_\uparrow$, which must have some limit $\tilde K_\infty$. By \cite[Corollary~7.7]{KLM20},
$K_1\times I_\X\sqsubseteq K_2\times I_\X\sqsubseteq\cdots $
is a monotonically ascending sequence of functions $\H\times\X\to\Z\times\X$ whose limit is $K_\infty\times I_\X$ by Lemma \ref{lem:tensor product is Scott continuous bifunctor}, and similarly,
$ \tilde K_1\times I_\X\sqsubseteq \tilde K_2\times I_\X\sqsubseteq\cdots$
is a monotonically increasing sequence of functions $:\H\times\X\to[\X,\Y]_\uparrow\times\X$ whose limit $\tilde K_\infty\times I_\X$.  

The function $F$ is Scott continuous by assumption, and $\Eval_\uparrow$ is Scott continuous by Proposition \ref{prop:step 2 monoidal closure qCPO}. Hence, it follows from Lemma \ref{lem:left multiplication by Scott continuous function is Scott continuous}
that
\[F\circ (K_n\times I_\X)\nearrow F\circ (K_\infty\times I_\X), \quad \text{and} \quad\Eval_\uparrow\circ (\tilde K_n\times I_\X)\nearrow \Eval_\uparrow\circ (\tilde K_\infty\times I_\X).\]
Since
$F\circ (K_n\times I_\X)=\Eval_\uparrow\circ (G\times I_\X)\circ (K_n\times I_\X)=\Eval_\uparrow\circ (\tilde K_n\times I_\X)$ for each $n \in \NN$, it follows that 
$F\circ(K_\infty\times I_\X)=\Eval_\uparrow\circ(\tilde K_\infty\times I_\X)$ and, therefore, that
$ \Eval_\uparrow\circ((G\circ K_\infty)\times I_\X)=\Eval_\uparrow(G\times I_\X)\circ (K_\infty\times I_\X)=F\circ (K_\infty\times I_\X)=\Eval_\uparrow\circ (\tilde K_\infty\times I_\X).$ We thus conclude that $\Eval_\uparrow\circ((G\circ K_\infty)\times I_\X) = \Eval_\uparrow\circ (\tilde K_\infty\times I_\X)$.

Equip $\H$ with the trivial order $I_\X$. Writing $J$ for the inclusion function $[\X,\Y]_\uparrow \hookrightarrow [\X,\Y]_\below$, we have that $\Eval_\below\circ((J \circ G\circ K_\infty)\times I_\X) = \Eval_\below\circ ((J \circ \tilde K_\infty)\times I_\X)$. The functions $J \circ G\circ K_\infty$ and $J \circ \tilde K_\infty$ are both trivially monotone, so they are equal by the universal property of $\Eval_\below\: [\X,\Y]_\below \times \X \to \Y$. Because $J$ is an injection we conclude $G\circ K_\infty = \tilde K_\infty$. Thus, $G \circ K_n = \tilde K_n \nearrow \tilde K_\infty = G \circ K_\infty$. Therefore, $G$ is Scott continuous.
\end{proof}

\section{Inclusion of $\cat{CPO}$ into $\cat{qCPO}$}\label{sec:cpos are qcpos}

As we observed in \cite[Example~1.3]{KLM20}, the pair $(`S, `\below)$ is a quantum poset for every ordinary poset $(S, \below)$. We proceed to show that if $(S, \below)$ is a cpo, then $(`S, `\below)$ is a quantum cpo. The converse implication holds essentially by construction (Section \ref{Quantum cpos}), so we will have shown that the notion of a quantum cpo is indeed a quantum generalization of the ordinary notion.

We will also show that the ``inclusion'' functor $(S, \below) \mapsto (`S, `\below)$ from $\cat{CPO}$ to $\cat{qCPO}$ has a right adjoint, which takes each quantum cpo $(\X,R)$ to the ordinary cpo whose points are the one-dimensional atoms of $\X$, ordered according to $R$. For denotational semantics, the existence of this right adjoint assures that $\qCPO$ forms a linear/nonlinear model; the comonad induced by the adjunction can be used to model the exponential modality $!$ of linear logic. See also Definition \ref{def:lnlmodel}.

\subsection{Ordinary cpos as quantum cpos} Let $\H = \Q\{H\}$ be an atomic quantum set, and let $S$ be an ordinary set. A function $F\: \H \to `S$ is essentially an $S$-valued observable on $H$: it is uniquely determined by an indexed family $\{H_s\}_{s \in S}$ of pair-wise orthogonal subspaces of $H$ that together sum to $H$. This one-to-one correspondence may be defined by $H_s = H\cdot F^\star(\delta_s)$ for all $s \in S$, where $\delta_s \in \ell^\infty(`S)$ is  the minimal projection corresponding to $s$, which is the indicator function on the subset $\{s\}\subseteq S$ if we identify $\ell^\infty(`S)$ with $\ell^\infty(S)$, and where $F^\star$ is defined as in  Theorem \ref{thm:qSet}. Intuitively, the unit vectors in $H_s$ represent those states of the quantum system represented by $H$ that yield the outcome $s \in S$ with probability one, see also Example \ref{ex:measurement}.

Let $(S, \below)$ be a poset, and let $F$ and $F'$ be functions $\H \to `S$, intuitively, two $S$-valued observables. As we essentially show in Proposition \ref{classical.E}, the inequality $F \below F'$ is equivalent to the condition $H_{s} \perp H_{s'}'$ for all $s, s'\in S$ such that $s \not \below s'$. Thus, intuitively, $F \below F'$ if and only if the two observables are such that if we first measure $F$ and then we measure $F'$, then the outcome $s$ of the first measurement must be below the outcome $s'$ of the second measurement.

For illustration, we briefly examine the special case $S = \RR$, with $\below$ being the standard order. Functions $\H \to `\RR$ of course correspond to self-adjoint operators on $H$, and the order that we have defined is then just the spectral order on $H$ \cite{Olson}*{Definition 1}. Thus, our order on functions $\H \to `S$, for general $S$, is a generalization of Olson's spectral order.

\begin{definition}\label{classical.A}
Let $H$ be a Hilbert space. A \emph{partition} on $H$ is a set of nonzero projection operators on $H$ whose sum is the identity operator $1_H$. A projection is said to be a \emph{cell} of a partition $P$ if it is an element of $P$. Similarly, a projection is said to be a \emph{multicell} of a partition $P$ if it is a sum of elements of $P$, possibly an empty sum or an infinite sum.
\end{definition}

The partitions on a Hilbert space $H$ are obviously analogous to the partitions of an ordinary set $S$. Such a partition might be more naturally defined to be a decomposition of $H$ into a set of pairwise-orthogonal closed subspaces, but our equivalent definition in terms of projections is more convenient here. We remark that the projections in a partition are necessarily mutually orthogonal, otherwise their sum cannot equal $1_H$.

\begin{definition}\label{classical.B}
Let $H$ be a Hilbert space, and let $P_1$ and $P_2$ be partitions on $H$. We define $P_1 \leq P_2$ if each projection $p_1 \in P_1$ is contained in some projection $p_2 \in P_2$. The partitions on $H$ are thus partially ordered.
\end{definition}

Hence, the inequality $P_1 \leq P_2$ expresses that $P_1$ is finer than $P_2$. This partial order has a greatest element, the singleton partition, but it has no least element. In the pursuit of clarity, we specialize our terminology to the various partial orders that appear in this section. If $s_1$ and $s_2$ are the elements of a poset $(S, \below)$, then we say that $s_1$ is below $s_2$ to mean that $s_1 \below s_2$. If $p_1$ and $p_2$ are projection operators on a Hilbert space $H$, then we say that $p_1$ is contained in $p_2$ to mean that $p_1 \leq p_2$; by definition, this holds if and only if $p_1p_2=p_1$. If $P_1$ and $P_2$ are partitions on a Hilbert space $H$, then we say that $P_1$ refines $P_2$ to mean that $P_1 \leq P_2$. 
Next, we give a different characterization of the refinement relation. First, we say that  projection operators $p$ and $q$ on a Hilbert space $H$ are orthogonal if $pq=0$, or equivalently, if $p\leq (1_H-q)$, in which case we write $p\perp q$.
\begin{lemma}
Let $H$ be a finite-dimensional Hilbert space, and let $P_1$ and $P_2$ be partitions on $H$. Then $P_1$ refines $P_2$ if and only if every cell of $P_2$ is a multicell of $P_1$.    
\end{lemma}
\begin{proof}
    We first make the following observation. If $P$ is a partition on $H$, and $q$ an arbitrary projection operator on $H$, then $\sum P=1_H$ forces that $q\not\perp p$ for some $p\in P$. It follows that $q\leq \sum\{p\in P:p\not\perp q\}=1_H-\sum\{p\in P:p\perp q\}$.

    Now assume that $P_1\leq P_2$. For any $p\in P_1$, we have $p\leq q$ for some $q\in P_2$. If $r\in P_2\setminus\{q\}$, then $q\perp r$, hence $p\leq q\leq 1_H-r$, so also $p\perp r$. So for each $p\in P_1$ and each $q\in P_2$, we either have $p\leq q$ or $p\perp q$. Now, let $q\in P_2$. Then $q\leq\sum\{p\in P_1:p\not\perp q\}=\sum\{p\in P_1:p\leq q\}\leq q$, whence $q$ is a multicell of $P_1$. 

    For the converse, assume that every cell of $P_2$ is a multicell of $P_1$, and let $p\in P_1$. We have $p\not\perp q$ for some $q\in P_2$, and by assumption, $q=p_1+\cdots +p_n$ for some cells $p_1,\ldots,p_n\in P_1$. Then $p\not\perp p_i$ for some $i\in\{1,\ldots,n\}$, and since all elements of $P_1$ are mutually orthogonal, it follows that $p=p_i\leq p_1+\cdots+p_n=q$.
\end{proof}

\begin{definition}\label{classical.C}
Let $\H = \Q\{H\}$ be an atomic quantum set, and let $P$ be a partition on $H$. We define the binary relation $M_P\: \H \To `P$ by $M_P(H, \CC_p) = L(H,\CC_p)\cdot p$.
\end{definition}

It is immediate from the definition that $M_P$ is a surjective function. Indeed, for cells $p_1, p_2 \in P$, we have $(M_P \circ M_P^\dagger) (\CC_{p_1}, \CC_{p_2}) = L(H,\CC_{p_1}) \cdot p_1 \cdot p_2 \cdot L(\CC_{p_2},H) = \delta_{p_1,p_2} \cdot L (\CC_{p_1}, \CC_{p_2}) = I_{`P}(\CC_{p_1}, \CC_{p_2})$, where $\delta$ denotes the characteristic function of the equality relation on $P$. Thus, $M_P \circ M_P^\dagger = I_{`P}$. Similarly, $(M_P^\dagger \circ M_P)(H, H) = \sum_{p \in P} p \cdot L(\CC_{p},H)\cdot L(H,\CC_{p}) \cdot p = \sum_{p \in P} p \cdot L(H, H) \cdot p \geq \sum_{p \in P} p \cdot (\CC \cdot 1_{H})\cdot p = \CC \cdot \sum_{p \in P} p = \CC \cdot 1_{H} = I_{\H}(H, H)$, and thus, $M_P^\dagger \circ M_P \geq I_{\H}$.

\begin{proposition}\label{classical.D}
Let $\H = \Q\{H\}$ be an atomic quantum set, let $S$ be an ordinary set, and let $F$ be a function from $\H$ to $`S$. Let $P$ be the partition on $H$ defined by $$P = \{F^\star(\delta_s) \suchthat s \in S,\,F^\star(\delta_s) \neq 0 \}.$$ Then, there is a unique function $f$ from $P$ to $S$ such that $F = `f \circ M_P$.
\end{proposition}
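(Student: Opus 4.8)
The plan is to construct the map $f\: P \to S$ directly from the structure of $F$, verify the factorization $F = `f \circ M_P$ on components, and then invoke injectivity of $M_P^\dagger$ (or of $`f$) for uniqueness. First I would observe that the family $\{F^\star(\delta_s)\}_{s \in S}$ consists of pairwise-orthogonal projections summing to $1_H$, since $F^\star\: \ell^\infty(`S) \to \ell^\infty(\H) = L(H)$ is a unital normal $*$-homomorphism and the $\delta_s$ are pairwise-orthogonal projections summing to the unit of $\ell^\infty(`S)$; hence $P$ is genuinely a partition on $H$ once the zero projections are discarded. Crucially, because the $F^\star(\delta_s)$ are pairwise orthogonal, each nonzero projection $p \in P$ equals $F^\star(\delta_s)$ for exactly one $s \in S$, so the assignment $p \mapsto s$ is well-defined; this is the map $f$. (Here I would use the correspondence between functions $\H \to `S$ and unital normal $*$-homomorphisms $\ell^\infty(`S) \to L(H)$, i.e. \cite{Kornell18}*{Theorem 7.4}, to translate between $F$ and $F^\star$.)

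Next I would compute the components of $`f \circ M_P$ and compare them to those of $F$. On the one hand, for a cell $p \in P$ with $f(p) = s$, Definition~\ref{classical.C} gives $M_P(H, \CC_p) = L(H, \CC_p) \cdot p$, and the definition of $`f$ gives $`f(\CC_p, \CC_t) = L(\CC_p, \CC_t)$ when $t = f(p) = s$ and $0$ otherwise. Composing, $(`f \circ M_P)(H, \CC_s) = \bigvee_{p \in P} `f(\CC_p, \CC_s) \cdot M_P(H, \CC_p) = \sum_{p \,:\, f(p) = s} L(\CC_p, \CC_s) \cdot L(H, \CC_p) \cdot p$. On the other hand, I need to identify $F(H, \CC_s)$ as a subspace of $L(H, \CC_s)$; translating through the duality, $F(H, \CC_s)$ is nonzero exactly when $F^\star(\delta_s) \neq 0$, and in that case it should equal $L(H, \CC_s) \cdot F^\star(\delta_s) = L(H,\CC_s) \cdot p$ where $p = F^\star(\delta_s)$ is the unique cell with $f(p) = s$. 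Matching these two expressions — using that for the unique such $p$, $L(\CC_p, \CC_s) \cdot L(H, \CC_p) \cdot p = L(H, \CC_s) \cdot p$ since $\CC_p$ is one-dimensional — yields $F = `f \circ M_P$.

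For uniqueness: if $g\: P \to S$ is another function with $F = `g \circ M_P$, then $`f \circ M_P = `g \circ M_P$, and since $M_P$ is surjective (as noted right after Definition~\ref{classical.C}), $M_P \circ M_P^\dagger = I_{`P}$, so $`f = `f \circ M_P \circ M_P^\dagger = `g \circ M_P \circ M_P^\dagger = `g$; faithfulness of $`(-)\: \cat{Rel} \to \cat{qRel}$ then gives $f = g$.

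I expect the main obstacle to be the bookkeeping in the second paragraph: pinning down precisely which subspace of $L(H, \CC_s)$ the component $F(H, \CC_s)$ is, purely from the data of the $*$-homomorphism $F^\star$, and confirming it coincides with $L(H, \CC_s)\cdot F^\star(\delta_s)$. This is essentially the explicit description of the qSet–WStar duality on morphisms specialized to a codomain of the form $`S$; it is routine but must be done carefully, since everything downstream (the component matching) rests on it. The partition and well-definedness arguments, and the uniqueness argument, are then straightforward.
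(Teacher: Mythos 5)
Your proposal is correct and follows essentially the same route as the paper's proof: define $f$ by sending $F^\star(\delta_s)$ to $s$, verify $F = `f \circ M_P$ componentwise using the identification $F(H,\CC_s) = L(H,\CC_s)\cdot F^\star(\delta_s)$ (which the paper obtains from \cite{Kornell18}*{Proposition 7.5}, precisely the duality fact you flagged as the one point needing care), and deduce uniqueness from the surjectivity of $M_P$.
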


\begin{proof}
Since $L(H, \CC_s)$ is isomorphic to the dual of $H$, each subspace $F(H, \CC_s) \leq L(H, \CC_s)$ is of the form $F(H, \CC_s) = L(H, \CC_s) \cdot p_s$ for some projection $p_s$ on $H$. Proposition 7.5 of \cite{Kornell18} implies that $p_s = F^\star(\delta_s)$. Define $f \: P \To S$ by $f(p_s) = s$, for all projections $p_s \in P$. Then, for each $s \in S$,
\begin{align*}
(`f \circ M_P)(H, \CC_s)
=
\sum_{p\in P} `f(\CC_p, \CC_s) \cdot M_P(H, \CC_p)
=
L(\CC_{p_s}, \CC_s) \cdot L(H, \CC_{p_s}) \cdot p_s
=
F(H, \CC_s).
\end{align*}
\noindent The second equality holds even if $p_s \not \in P$, because in this case, $p_s = 0$.
Thus, $`f \circ M_P = F$. The uniqueness of $f$ follows from the surjectivity of $M_P$.
\end{proof}

For us, the significance of this proposition is the existence of a factorization $F = `f \circ M_P$ for some partition $P$ on $H$. Many such factorizations in the sequel may be different from the canonical factorization given by Proposition \ref{classical.D}.

\begin{proposition}\label{classical.E}
Let $(S, \sqsubseteq)$ be a partially ordered set, and let $\H = \Q\{H\}$ be an atomic quantum set. Let $P_1$ and $P_2$ be partitions on $H$, and let $f_1\: P_1 \To S$ and $f_2\: P_2 \To S$ be ordinary functions. Then, $`f_1 \circ M_{P_1} \sqsubseteq `f_2 \circ M_{P_2}$ if and only if for all $p_1 \in P_1$ and $p_2 \in P_2$, the condition $p_1 \not \perp p_2$ implies $f_1(p_1) \sqsubseteq f_2(p_2)$.
\end{proposition}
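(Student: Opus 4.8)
The plan is to unwind the definition of the order $\sqsubseteq$ on $\qSet(\H, `S)$ from Section~\ref{sub:quantum posets} in terms of the defining components of the relevant binary relations, and then recognize the orthogonality condition on cells as exactly what the resulting inequality says. Concretely, I would use the characterization $F \sqsubseteq G \iff G \circ F^\dag \leq {`\sqsubseteq}$ (one of the equivalent conditions listed in Section~\ref{sub:quantum posets}, here applied with the quantum poset $(`S, {`\sqsubseteq})$ as codomain). So the goal is to show that $({`f_2} \circ M_{P_2}) \circ ({`f_1} \circ M_{P_1})^\dag \leq {`\sqsubseteq}$ holds if and only if $p_1 \not\perp p_2$ implies $f_1(p_1) \sqsubseteq f_2(p_2)$.

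First I would compute the relevant $(\CC_{s_1}, \CC_{s_2})$-component of the composite $({`f_2} \circ M_{P_2}) \circ ({`f_1} \circ M_{P_1})^\dag \colon {`S} \to {`S}$, for $s_1, s_2 \in S$. Expanding the composition formula (suprema over atoms of the intermediate quantum sets $`P_1$, $\H$, $`P_2$), and using $M_{P_i}(H, \CC_p) = L(H, \CC_p) \cdot p$ together with $M_{P_i}^\dag(\CC_p, H) = p \cdot L(\CC_p, H)$ and the action of ${`f_i}$ on components, this component works out to the join over those pairs $(p_1, p_2) \in P_1 \times P_2$ with $f_1(p_1) = s_1$, $f_2(p_2) = s_2$ of the operator space $L(\CC_{p_2}, \CC_{s_2}) \cdot p_2 \cdot p_1 \cdot L(\CC_{s_1}, \CC_{p_1})$, which is nonzero precisely when $p_2 p_1 \neq 0$, i.e. when $p_1 \not\perp p_2$ (projections are orthogonal iff their product vanishes). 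Meanwhile ${`\sqsubseteq}(\CC_{s_1}, \CC_{s_2})$ equals the full space $L(\CC_{s_1}, \CC_{s_2})$ when $s_1 \sqsubseteq s_2$ and $0$ otherwise. Since each $L(\CC_{s_1}, \CC_{s_2})$ is one-dimensional, the component of the composite is either $0$ or all of $L(\CC_{s_1}, \CC_{s_2})$.

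The inequality $({`f_2} \circ M_{P_2}) \circ ({`f_1} \circ M_{P_1})^\dag \leq {`\sqsubseteq}$ then holds iff, for every pair $s_1, s_2$ with $s_1 \not\sqsubseteq s_2$, the corresponding component of the composite is $0$ — that is, there is no pair $(p_1, p_2) \in P_1 \times P_2$ with $f_1(p_1) = s_1$, $f_2(p_2) = s_2$, and $p_1 \not\perp p_2$. Contrapositively, this is exactly the assertion that for all $p_1 \in P_1$, $p_2 \in P_2$, if $p_1 \not\perp p_2$ then $f_1(p_1) \sqsubseteq f_2(p_2)$. This is the claimed equivalence. To make the argument self-contained I would also recall (or cite from the $M_P$ discussion preceding Proposition~\ref{classical.D}) that $M_{P_i}$ is a function, so that ${`f_i} \circ M_{P_i}$ is genuinely a function and $\sqsubseteq$ is defined on it.

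The main obstacle is purely bookkeeping: carefully carrying out the multi-fold composition of the four binary relations in the quantaloid $\qRel$ and correctly tracking which index sets the various joins range over, being careful that ${`f_i}$ need not be injective (so several cells $p$ with $f_i(p) = s$ may contribute to the same component, which is why the component is a join rather than a single term). There is no real conceptual difficulty beyond that — the one-dimensionality of the spaces $L(\CC_s, \CC_t)$ collapses the operator-space inequality to a plain "zero versus nonzero" condition, which is precisely the orthogonality dichotomy.
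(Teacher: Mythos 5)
Your proposal is correct and follows essentially the same route as the paper: both reduce the order $\sqsubseteq$ to an inequality of binary relations involving $M_{P_2}\circ M_{P_1}^\dagger$ and observe that this composite encodes exactly the non-orthogonality relation between cells, with the one-dimensionality of $L(\CC_{s_1},\CC_{s_2})$ collapsing everything to a zero/nonzero dichotomy. The paper's write-up is slightly slicker --- it uses the characterization $G\leq S\circ F$, moves $M_{P_1}$ across via the function adjunction, identifies $M_{P_2}\circ M_{P_1}^\dagger$ as $`r$ for the ordinary non-orthogonality relation $r$, and invokes full faithfulness of $`(-)$ to avoid componentwise computation --- but your direct component calculation (modulo the harmless omission of the factors $L(H,\CC_{p_2})$ and $L(\CC_{p_1},H)$ in the displayed expression) establishes the same equivalence.
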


\begin{proof}
The inequality $`f_1 \circ M_{P_1} \sqsubseteq `f_2 \circ M_{P_2}$ is equivalent to the inequality $`f_2 \circ M_{P_2} \leq (`{\sqsubseteq}) \circ `f_1 \circ M_{P_1}$, by our definition of the order on functions, and this inequality is in turn equivalent to  $`f_2 \circ M_{P_2} \circ M_{P_1}^\dagger \leq (`{\sqsubseteq}) \circ `f_1 $, by our definition of a function as a binary relation between quantum sets. The binary relation $M_{P_2} \circ M_{P_1}^\dagger$ is from $`P_1$ to $`P_2$, and it is evidently equal to $`r$, where $r$ is the ordinary binary relation that relates $p_1 \in P_1$ to $p_2 \in P_2$ if and only if $p_1 \not \perp p_2$. Hence, the inequality $`f_1 \circ M_{P_1} \sqsubseteq `f_2 \circ M_{P_2}$ is equivalent to $`f_2 \circ `r \leq (`{\sqsubseteq}) \circ `f_1$, and thus, also to $f_2 \circ r \leq ({\sqsubseteq}) \circ f_1$. This last inequality is obviously equivalent to the claimed implication.
\end{proof}

\begin{lemma}\label{classical.F}
Let $(S, \sqsubseteq)$ be a partially ordered set, let $\H = \Q\{H\}$ be an atomic quantum set, let $P$ be a partition on $H$, and let $f\: P \To S$ be a function. Then, for each $s_0 \in S$, we have $(`({\sqsubseteq}) \circ `f \circ M_P)(H, \CC_{s_0}) = L(H, \CC_{s_0}) \cdot \sum\{ p \in P \suchthat  f(p) \sqsubseteq s_0 \}$.
\end{lemma}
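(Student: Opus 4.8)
The plan is to compute the component $(`({\sqsubseteq}) \circ `f \circ M_P)(H, \CC_{s_0})$ directly from the definition of composition of binary relations in $\qRel$, unwinding the three relations in turn. First I would expand the outermost composition: by definition of composition in $\qRel$,
\[
(`({\sqsubseteq}) \circ `f \circ M_P)(H, \CC_{s_0}) = \bigvee_{s \in S} `({\sqsubseteq})(\CC_s, \CC_{s_0}) \cdot (`f \circ M_P)(H, \CC_s),
\]
and since $`({\sqsubseteq})(\CC_s, \CC_{s_0}) = L(\CC_s, \CC_{s_0})$ if $s \sqsubseteq s_0$ and is $0$ otherwise, this reduces to $\bigvee_{s \sqsubseteq s_0} L(\CC_s, \CC_{s_0}) \cdot (`f \circ M_P)(H, \CC_s)$.

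Next I would substitute the factorization computed inside the proof of Proposition \ref{classical.D}: we have $(`f \circ M_P)(H, \CC_s) = \bigvee_{p \in P} `f(\CC_p, \CC_s) \cdot M_P(H, \CC_p)$, and $`f(\CC_p, \CC_s)$ is $L(\CC_p, \CC_s)$ when $f(p) = s$ and $0$ otherwise, while $M_P(H, \CC_p) = L(H, \CC_p) \cdot p$ by Definition \ref{classical.C}. So $(`f \circ M_P)(H, \CC_s) = \bigvee_{p \in P,\, f(p) = s} L(\CC_p, \CC_s) \cdot L(H, \CC_p) \cdot p$. Plugging this back in, the whole expression becomes a join over pairs $(s, p)$ with $s \sqsubseteq s_0$ and $f(p) = s$, i.e.\ a join over $p \in P$ with $f(p) \sqsubseteq s_0$, of the operator spaces $L(\CC_{f(p)}, \CC_{s_0}) \cdot L(\CC_p, \CC_{f(p)}) \cdot L(H, \CC_p) \cdot p$. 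Since each of the one-dimensional spaces $L(\CC_\bullet, \CC_\bullet)$ is spanned by an invertible (indeed unitary up to scalar) map between one-dimensional Hilbert spaces, the product $L(\CC_{f(p)}, \CC_{s_0}) \cdot L(\CC_p, \CC_{f(p)}) \cdot L(H, \CC_p)$ collapses to $L(H, \CC_{s_0})$, so each term equals $L(H, \CC_{s_0}) \cdot p$.

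Finally I would assemble the join: $\bigvee_{p \in P,\, f(p) \sqsubseteq s_0} L(H, \CC_{s_0}) \cdot p$. Here I would use that $L(H, \CC_{s_0}) \cdot p$ as $p$ ranges over a set of pairwise-orthogonal projections has supremum $L(H, \CC_{s_0}) \cdot \big(\sum\{p \in P : f(p) \sqsubseteq s_0\}\big)$, since the join of subspaces of the form $L(H,\CC_{s_0})\cdot p$ is $L(H, \CC_{s_0})$ composed with (the projection onto) the span of the ranges of the $p$, which is exactly $\sum\{p \in P : f(p) \sqsubseteq s_0\}$ because distinct cells of a partition are orthogonal. This yields the claimed formula. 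The only mildly delicate point — the main obstacle, such as it is — is justifying the last step cleanly: that joins of operator spaces of the form $L(H, \CC_{s_0}) \cdot p$ over orthogonal projections $p$ behave as claimed; this is elementary linear algebra about spans of ranks-one-compatible operator spaces, and if needed I would verify it by noting $L(H,\CC_{s_0})\cdot p = \{\xi \mapsto \langle \eta, p\xi\rangle\, e_{s_0} : \eta \in H\}$ for a fixed unit vector $e_{s_0} \in \CC_{s_0}$, so the join over a family of orthogonal $p$'s is visibly $L(H, \CC_{s_0})\cdot(\sum p)$. The rest is routine bookkeeping with the definition of composition in $\qRel$ and with the one-dimensionality of the relevant hom-spaces.
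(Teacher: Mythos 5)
Your proposal is correct and follows essentially the same route as the paper's proof: expand the composition componentwise, use the definitions of $`({\sqsubseteq})$, $`f$, and $M_P$ to reduce each term to $L(H,\CC_{s_0})\cdot p$, and factor $L(H,\CC_{s_0})$ out of the join over the cells $p$ with $f(p)\sqsubseteq s_0$. The paper performs the last step silently as a one-line distributivity of the sum, whereas you justify it explicitly via the description of $L(H,\CC_{s_0})\cdot p$ in terms of ranges of orthogonal projections; both are fine.
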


\begin{proof} \quad
\vspace{-4.1ex}
\begin{align*}
(`({\sqsubseteq}) \circ `f \circ M_P)(H, \CC_{s_0})
& =
\sum_{s \in S} \sum_{p \in P} `({\sqsubseteq})(\CC_s, \CC_{s_0}) \cdot `f(\CC_p, \CC_s) \cdot M_P(H, \CC_p)
\\ & =
\sum_{s \sqsubseteq s_0} \sum_{f(p) = s} L(\CC_s, \CC_{s_0}) \cdot L(\CC_p, \CC_s) \cdot L(H, \CC_p) \cdot p
\\ & =
\sum_{s \sqsubseteq s_0} \sum_{f(p) = s} L(H, \CC_{s_0}) \cdot p
=
L(H, \CC_{s_0}) \cdot \sum_{s \sqsubseteq s_0} \sum_{f(p) = s} p
\\ &=
L(H, \CC_{s_0}) \cdot \sum\{ p \in P \suchthat  f(p) \sqsubseteq s_0 \} \qedhere
\end{align*}
\end{proof}

\begin{proposition}\label{classical.G}
Let $H$ be a Hilbert space. Each nonempty collection $\P$ of partitions on $H$ has a least upper bound.
\end{proposition}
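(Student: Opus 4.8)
The idea is to translate the refinement order on partitions into inclusion of the associated Boolean algebras of multicells, and then realize the desired least upper bound as an intersection. For a partition $P$ on $H$, write $B(P)$ for its set of multicells, i.e.\ $B(P) = \{\sum_{p \in S} p : S \subseteq P\}$ in the notation of Definition \ref{classical.A}. Three elementary facts will be recorded first: (i) the elements of $B(P)$ pairwise commute, and $B(P)$ is closed under the orthocomplement $p \mapsto 1_H - p$, under sums of pairwise orthogonal subfamilies, and under products; (ii) $P$ is exactly the set of minimal nonzero elements of $B(P)$, so $P$ is recoverable from $B(P)$; and (iii) for partitions $P_1, P_2$ on $H$, the inequality $P_1 \leq P_2$ of Definition \ref{classical.B} holds if and only if $B(P_2) \subseteq B(P_1)$ (a cell of $P_2$ is a sum of the cells of $P_1$ it contains, and conversely every multicell of $P_2$ is then a multicell of $P_1$ by closure under orthogonal sums). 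Given a nonempty collection $\P = \{P_i\}_{i \in I}$, I then set $\mathcal{B} = \bigcap_{i \in I} B(P_i)$; by (i) this is a set of pairwise commuting projections containing $0$ and $1_H$ and inheriting closure under orthocomplement, pairwise orthogonal sums, and products.

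Next I would construct the candidate least upper bound $Q$ by Zorn's lemma: let $Q$ be a maximal family of pairwise orthogonal nonzero elements of $\mathcal{B}$. Since $1_H - \sum_{q \in Q} q$ again lies in $\mathcal{B}$ and is orthogonal to every $q \in Q$, maximality forces it to vanish, so $\sum_{q \in Q} q = 1_H$ and $Q$ is a partition. A second maximality argument shows each $q \in Q$ is minimal in $\mathcal{B}$ among nonzero elements: if $0 \neq p < q$ with $p \in \mathcal{B}$, then $p$ and $q - p$ are nonzero elements of $\mathcal{B}$, orthogonal to each other and to $Q \setminus \{q\}$, so $(Q \setminus \{q\}) \cup \{p, q-p\}$ contradicts maximality. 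Finally I would check the universal property. For the upper bound property: $Q \subseteq \mathcal{B} \subseteq B(P_i)$, so every cell of $Q$ is a multicell of $P_i$, which by (iii) gives $P_i \leq Q$. For leastness: if $R$ is any upper bound, then each cell $r$ of $R$ is a multicell of every $P_i$ (because $P_i \leq R$), hence $r \in \mathcal{B}$; for a fixed $q \in Q$, choosing a cell $r$ of $R$ with $qr \neq 0$ gives $qr \in \mathcal{B}$ with $qr \leq q$, so minimality of $q$ yields $qr = q$, i.e.\ $q \leq r$; therefore $Q \leq R$.

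\textbf{Main obstacle.} Everything above is bookkeeping with multicells except for one genuinely substantive point: one must rule out that $\mathcal{B}$ is some ``atomless'' complete Boolean algebra of projections admitting no description as $B(Q)$ for a partition $Q$. The two maximality arguments are precisely what forces $\mathcal{B} = B(Q)$ — the first gives $\sum_{q \in Q} q = 1_H$, and the second gives that the cells of $Q$ are $\mathcal{B}$-minimal, so that every element of $\mathcal{B}$ is the sum of the cells of $Q$ it dominates. Once that is in place, the order-theoretic verification that $Q$ is the least upper bound is routine, using only fact (iii) and orthogonality of cells.
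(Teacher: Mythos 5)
Your overall strategy---encoding the refinement order via the Boolean algebras $B(P)$ of multicells, intersecting them to form $\mathcal{B}=\bigcap_{i}B(P_i)$, and extracting the supremum as the partition of atoms of $\mathcal{B}$---is a legitimate alternative to the paper's argument, which instead builds the cells of the supremum directly as joins over connected components of the non-orthogonality graph on $\bigcup\P$. Your facts (i)--(iii) are correct, and your verifications of the upper-bound and leastness properties go through, \emph{provided} the cells of $Q$ really are minimal nonzero elements of $\mathcal{B}$ and really sum to $1_H$. As you yourself note, that is the one substantive point.

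Unfortunately, the mechanism you propose for it is broken. A family that is maximal \emph{with respect to inclusion} among pairwise orthogonal families of nonzero elements of $\mathcal{B}$ need not consist of minimal elements: $\{1_H\}$ is already such a maximal family in every $\mathcal{B}$, since no nonzero projection is orthogonal to $1_H$, so Zorn's lemma may simply hand you $Q=\{1_H\}$, which produces the trivial coarsest partition rather than the least upper bound. Your second maximality argument does not repair this: the family $(Q\setminus\{q\})\cup\{p,\,q-p\}$ has larger cardinality but is \emph{not} a superset of $Q$ (it omits $q$), so it does not contradict inclusion-maximality. What you actually need is that $\mathcal{B}$ is atomic, and this requires its own argument. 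One way: each $B(P_i)$, and hence $\mathcal{B}$, is closed under arbitrary infima of its commuting elements; fix a cell $p$ of some $P_i$ and set $\hat p:=\bigwedge\{b\in\mathcal{B}: p\le b\}\in\mathcal{B}$. Then $\hat p$ is an atom of $\mathcal{B}$: for $c\in\mathcal{B}$ with $0\neq c<\hat p$, since $c$ is a multicell of $P_i$ and $p$ a cell, either $p\le c$, forcing $\hat p\le c$, or $pc=0$, forcing $p\le\hat p-c\in\mathcal{B}$ and hence $\hat p\le\hat p-c$, i.e.\ $c=0$; either way a contradiction. Every nonzero $b\in\mathcal{B}$ dominates some cell $p$ of $P_i$ and therefore the atom $\hat p$, so taking $Q$ to be the set of all such atoms yields the desired partition. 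Note that $\hat p$ is exactly the paper's $\bigvee[p]$, the join over the connected component of $p$ in the non-orthogonality graph---so the missing step in your proposal is precisely the connectivity argument that the paper's proof supplies.
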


\begin{proof}
The union $\bigcup \P$ is a set of projections on $\H$. We define a graph structure on $\Union \P$ by defining two projections to be adjacent iff their product is nonzero. We define two projections in $\Union \P$ to be equivalent iff they are in the same component of the graph. Finally, we define $Q_0$ to be the set $\{  \bigvee [p] \suchthat [p] \in \Union \P / \sim \}$.

We claim that $Q_0$ is a partition on $H$. To show pairwise orthogonality, let $[p]$ and $[p']$ be distinct equivalence classes. By definition of the equivalence relation, every projection in $[p]$ is orthogonal to every projecion in $[p']$. Thus, $\bigvee [p]$ is orthogonal to $\bigvee [p']$. The join $\bigvee Q_0$ is equal to the join $\bigvee \bigcup \P$, which is clearly the identity operator $1_H$, simply because $\P$ is a nonempty collection of partitions.

By Definition \ref{classical.B} of the order on partitions, a partition $Q$ is an upper bound for $\P$ if and only if every projection $p$ in $\Union P$ is contained in a projection $q$ in $Q$. This immediately implies that $Q_0$ is an upper bound for $\P$, because $p \leq \bigvee[p]$. To show that $Q_0$ is the least upper bound, let $Q$ be an arbitrary upper bound for $\P$, and let $p$ be an arbitrary projection in $\Union P$. The projection $p$ is a cell of some partition $P \in \P$, which refines $Q$ by assumption, so there is a projection $q$ in $Q$ that contains $p$. Similarly, every projection $p' \in \Union \P$ that is adjacent to $p$ must be contained in some projection $q'$ in $Q$. If $q$ and $q'$ are distinct, then they are orthogonal, implying that $p$ and $p'$ are also orthogonal, and thus contradicting that $p$ and $p'$ are adjacent. Therefore, $p'$ also must be contained in $q$.
By induction on path length, every projection in $\Union \P$ that is equivalent to $p$ must be contained in $q$, so $\bigvee[p] \leq q$. We conclude that $Q_0 \leq Q$, and more generally, that $Q_0$ is the least upper bound of $\P$.
\end{proof}

\begin{lemma}\label{classical.H}
Let $H$ be a nonzero finite-dimensional Hilbert space. Let $q$ be any nonzero projection on $H$, and let $P_1, P_2, \ldots$ be any sequence of partitions on $H$. There exists a sequence of projections $p_1 \not \perp p_2 \not \perp \cdots$ such that $p_i \in P_i$ and $p_i \not \perp q$ for all $i \in \NN$. Moreover, such a sequence exists for any prescribed projection $p_n \in P_n$ that is not orthogonal to $q$, where $n$ is any positive integer.
\end{lemma}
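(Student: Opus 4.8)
The plan is to reduce the entire statement to a single one-step extension claim and then iterate it in both directions. The claim I would isolate is:

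\textbf{Claim.} If $p$ and $q$ are nonzero projections on $H$ with $p \not\perp q$ (equivalently $pq \neq 0$), and $P'$ is any partition on $H$, then there is a cell $p' \in P'$ with $p' \not\perp p$ and $p' \not\perp q$.

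Once this is available, the lemma follows by a routine recursion, so the Claim is the only real content.

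To prove the Claim, split the cells of $P'$ into $B = \{p' \in P' : p'p \neq 0\}$ and $A = P' \setminus B$, and put $E = \sum_{p' \in B} p'$, a projection (the sum is finite, since a partition of the finite-dimensional space $H$ has only finitely many cells). Every cell in $A$ is orthogonal to $p$, so $(1_H - E)p = 0$, i.e. $p = Ep = pE$, whence $p \leq E$ and in particular $\ker E \subseteq \ker p$. Note $B \neq \emptyset$ because $p \neq 0$. Now suppose, toward a contradiction, that $p'q = 0$ for every $p' \in B$. Then $Eq = 0$, so $\ran q \subseteq \ker E \subseteq \ker p$, and therefore $pq = 0$, contradicting $p \not\perp q$. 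Hence some $p' \in B$ has $p'q \neq 0$; by definition of $B$ this $p'$ also satisfies $p'p \neq 0$, which proves the Claim.

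With the Claim in hand I would finish as follows. For the ``moreover'' assertion, fix $n$ and a cell $p_n \in P_n$ with $p_n \not\perp q$. Going forward: having chosen $p_i \in P_i$ with $p_i \not\perp q$ for some $i \geq n$, apply the Claim with $p = p_i$ and $P' = P_{i+1}$ to get $p_{i+1} \in P_{i+1}$ with $p_{i+1} \not\perp p_i$ and $p_{i+1} \not\perp q$. Going backward, for $1 < i \leq n$, apply the Claim with $p = p_i$ and $P' = P_{i-1}$ to get $p_{i-1} \in P_{i-1}$ with $p_{i-1} \not\perp p_i$ and $p_{i-1} \not\perp q$; since $\not\perp$ is symmetric this assembles into a chain $p_1 \not\perp p_2 \not\perp \cdots$ with each $p_i \not\perp q$, proving the ``moreover'' part. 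For the first, unprescribed assertion it suffices to produce some $p_1 \in P_1$ with $p_1 \not\perp q$ and then invoke the ``moreover'' part with $n = 1$: from $q = 1_H q = \sum_{p' \in P_1} p'q$ and $q \neq 0$, some cell $p' \in P_1$ has $p'q \neq 0$.

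The genuinely non-routine point is the Claim, and in particular the realization that one never has to ``look ahead'' several partitions at once: the cells of $P'$ that meet $p$ nontrivially already sum to a multicell dominating $p$, so if every one of them missed $q$ then $p$ itself would miss $q$. I would flag this as the main obstacle (finding the right invariant and seeing it survives one step). Finite-dimensionality is used only to make each $P_i$ finite, so that $E$ is an honest finite sum of projections; the argument otherwise goes through verbatim with strong-operator convergent sums.
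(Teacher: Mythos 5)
Your proof is correct and follows essentially the same route as the paper: the isolated Claim (if every cell of $P'$ meeting $p$ missed $q$, then $p$ itself would miss $q$) is exactly the paper's contradiction step of separating $P'$ into cells orthogonal to $p$ and cells orthogonal to $q$, and both proofs then iterate this one-step extension forward (and backward for the ``moreover'' clause). Your packaging via the multicell $E=\sum_{p'\in B}p'$ with $p\leq E$ is just a slightly more explicit rendering of the same argument.
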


\begin{proof}
The cells of the partition $P_1$ sum to the identity, so they cannot all be orthogonal to $q$. Let $p_1 \in P_1$ be any projection that is not orthogonal to $q$. Suppose by contradiction that every projection in $P_2$ is either orthogonal to $p_1$ or orthogonal to $q$. Then, we can divide $P_2$ into the set of projections orthogonal to $p_1$, and the set of projections orthogonal to $q$. Since $P_2$ is a partition, this implies that $p_1 \perp q$, contradicting our choice of $p_1$. Thus, there exists a projection $p_2 \in P_2$ that is neither orthogonal to $p_1$ nor orthogonal to $q$. Repeating this process recursively, we obtain a sequence of projections $p_1, p_2, \ldots$ with the desired properties.

We chose $p_1$ arbitrarily, but we can also construct such a sequence from any given projection $p_1 \in P_1$ that is not orthogonal to $q$. Furthermore, we can begin constructing such a sequence from any given projection $p_n \in P_n$, for any given positive integer $n$. If we do so, we may choose the projections $p_{n-1}, \ldots, p_1$ via the same process.
\end{proof}

\begin{lemma}\label{classical.I}
Let $H$ be a nonzero finite-dimensional Hilbert space, and let $P_1, P_2, \ldots$ be a sequence of partitions on $H$ with the property that the partition $Q=\sup_{i \geq n} P_i$ is independent of $n$. Let $r_1$ be a multicell of $P_1$, and for each $i>1$, recursively define $r_{i+1}$ to be the smallest multicell of $P_{i+1}$ that is at least as large as $r_i$. Then, the monotone increasing sequence $r_1 \leq r_2 \leq \cdots$ stabilizes at the smallest multicell $s$ of $Q$ that is at least as large as $r_1$.
\end{lemma}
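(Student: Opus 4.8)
The statement is a purely classical, order-theoretic fact about partitions (equivalently, orthogonal decompositions) of a finite-dimensional Hilbert space, so the proof should not need any quantum machinery. I will work throughout with the complete lattice of projections on $H$ (which is modular since $H$ is finite-dimensional), noting that the multicells of a partition $P$ form a sublattice: a projection is a multicell of $P$ iff it commutes with every cell of $P$, and since the cells of $Q = \sup_{i\geq n}P_i$ are precisely the joins of the equivalence classes described in Proposition \ref{classical.G}, a projection is a multicell of $Q$ iff it commutes with every cell of every $P_i$ with $i \geq n$, i.e., iff it is simultaneously a multicell of each $P_i$ for $i \geq n$.

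\textbf{Step 1: the sequence stabilizes.} Each $r_i$ is a projection on the finite-dimensional space $H$, and $r_1 \leq r_2 \leq \cdots$ by construction, so the ranks $\operatorname{rank}(r_i)$ form a non-decreasing sequence of integers bounded by $\dim H$; hence the sequence stabilizes, say $r_i = r_{i_0} =: s$ for all $i \geq i_0$.

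\textbf{Step 2: $s$ is a multicell of $Q$.} Since the sequence has stabilized, for every $i \geq i_0$ we have $r_{i+1} = s$, and by definition $r_{i+1}$ is a multicell of $P_{i+1}$; thus $s$ is a multicell of $P_i$ for all $i > i_0$, hence (by the reformulation above, applied with $n = i_0 + 1$) $s$ is a multicell of $Q$. Moreover $s = r_{i_0} \geq \cdots \geq r_1$, so $s$ is a multicell of $Q$ that dominates $r_1$.

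\textbf{Step 3: $s$ is the \emph{smallest} such multicell.} Let $t$ be any multicell of $Q$ with $t \geq r_1$; I must show $s \leq t$. By the reformulation, $t$ is a multicell of $P_i$ for every $i \geq 1$ (more precisely for $i$ large; but one should check $t$ is in fact a multicell of every $P_i$ — this is where care is needed, see below). Granting that, I prove $r_i \leq t$ for all $i$ by induction: $r_1 \leq t$ by hypothesis; and if $r_i \leq t$, then $t$ is a multicell of $P_{i+1}$ dominating $r_i$, whereas $r_{i+1}$ is by definition the \emph{smallest} multicell of $P_{i+1}$ dominating $r_i$, so $r_{i+1} \leq t$. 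Taking $i \geq i_0$ gives $s = r_i \leq t$, as desired.

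\textbf{Main obstacle.} The delicate point is Step 3: a multicell of $Q$ need not literally be a multicell of each individual $P_i$ unless one unwinds the definition of $Q$ carefully. The equivalence relation in Proposition \ref{classical.G} is generated by adjacency across \emph{all} the $P_i$ with $i$ in the relevant tail, so a cell of $Q$ is a join of cells drawn from possibly many different $P_i$'s; I will need to argue that a projection commuting with $\bigvee[p]$ for every class $[p]$ must commute with each individual cell $p$ (this uses that distinct classes are mutually orthogonal, so the $\bigvee[p]$ refine to the cells of each $P_i$ — in lattice terms, each $P_i$ refines $Q$, which is exactly the content of $Q$ being an upper bound, already established in Proposition \ref{classical.G}). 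Concretely: since $P_i \leq Q$, every cell of $P_i$ is contained in a unique cell of $Q$, so every multicell of $Q$ is a multicell of $P_i$; that is the clean way to see it, and it is the one I will write up. Everything else is bookkeeping with ranks and the defining minimality of the $r_i$.
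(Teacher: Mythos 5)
Your overall strategy is the same as the paper's (stabilize by finite-dimensionality, show the stable value is a multicell of $Q$, then get minimality by induction using that every multicell of $Q$ is a multicell of each $P_i$), and Steps 1 and 3 are sound: in particular your ``clean way'' of seeing that a multicell of $Q$ is a multicell of every $P_i$ via the refinement $P_i \leq Q$ is exactly the paper's argument.

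There is, however, a genuine gap in Step 2. Your justification that the stable value $s$ is a multicell of $Q$ rests on the claim that ``a projection is a multicell of $P$ iff it commutes with every cell of $P$,'' and this is false: on $H = \CC^2$ with the singleton partition $P = \{1_H\}$, a rank-one projection commutes with the sole cell $1_H$ but is not a multicell of $P$. The same example (with $P_i = \{1_H\}$ for all $i$, so $Q = \{1_H\}$) shows that ``commutes with every cell of every $P_i$'' does not imply ``multicell of $Q$,'' so the chain of equivalences in your reformulation breaks precisely in the direction Step 2 needs. The implication you actually need --- $s$ is a multicell of each $P_i$ for $i$ in a tail, hence a multicell of $Q$ --- is true, but it cannot follow from the fact that $Q$ is merely \emph{an} upper bound (the singleton partition is also an upper bound and $s$ need not be one of its multicells); it requires that $Q$ is the \emph{least} upper bound. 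The paper's argument is: $\{s, 1_H - s\}$ is a partition refined by every $P_i$ in the tail (because $s$ is a multicell of each), hence refined by $Q$ by leastness, hence $s$ is a multicell of $Q$. Alternatively, one can rerun the connectivity argument of Proposition~\ref{classical.G}: along any adjacency class $[p]$, the cells are either all contained in $s$ or all contained in $1_H - s$, so each cell $\bigvee[p]$ of $Q$ is contained in $s$ or orthogonal to it. Either repair is short, but some such argument invoking the construction or the leastness of $Q$ must replace the commutation claim.
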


\begin{proof}
The sequence $r_1 \leq r_2 \leq \cdots$ stablizes at some positive integer $n$, because it is a monotone increasing sequence of projections on a finite-dimensional Hilbert space. By induction, each multicell $r_i$ is contained in $s$, because each partition $P_i$ refines $Q$, and therefore every multicell of $Q$ is also a multicell of $P_i$. In particular, $r_n$ is contained in $s$.

By choice of $n$, the sequence $r_n$, $r_{n+1}, \cdots$ is constant. Thus, $r_n$ is a multicell of $P_i$ for each index $i \geq n$. This implies that the partition $\{r_n, 1-r_n\}$ is refined by each such partition $P_i$, so by definition of $Q$, the partition $\{r_n, 1-r_n\}$ is also refined by $Q$. It follows that $r_n$ is a multicell of $Q$. Thus, $r_n$ is a multicell of $Q$ that contains $r_1$ and is also contained in $s$, the smallest multicell of $Q$ that contains $r_1$. Therefore, $r_n = s$.
\end{proof}

\begin{lemma}\label{classical.II}
Let $(S, \sqsubseteq)$ be a cpo, let $\H = \Q\{H\}$ be an atomic quantum set, and let $F_1 \sqsubseteq F_2 \sqsubseteq \cdots$ be a sequence of functions $\H \To `S$. For each index $i \in \NN$, let $P_i$ be a partition on the Hilbert space $H$, and let $f_i\: P_i \To S$ be an ordinary function such that $F_i = `f_i \circ M_{P_i}$. Let $P_\infty = \limsup_{i \in \NN} P_i$. Let $p_\infty \in P_\infty$, and let $p_1 \not \perp p_2 \not \perp \cdots$ be a sequence of projections such that $p_i \in P_i$ and $p_i \not \perp p_\infty$ for all $i \in \NN$. For every positive integer $n$, there exists a positive integer $m \geq n$ such that $f_m(p) \above f_n(p_n)$ for every projection $p \in P_m$ satisfying $p \not \perp p_\infty$.
\end{lemma}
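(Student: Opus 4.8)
The plan is to translate the statement into a purely classical assertion about the posets $P_i$ and the functions $f_i$, and then use the completeness of $(S,\sqsubseteq)$ together with Lemma~\ref{classical.I} to produce the index $m$. First I would fix $n$ and set $t = f_n(p_n) \in S$. The inequalities $F_n \sqsubseteq F_i$ for $i \geq n$, unwound via Proposition~\ref{classical.E}, tell us that whenever a cell $p \in P_i$ is not orthogonal to $p_n$ we have $t = f_n(p_n) \sqsubseteq f_i(p)$; more usefully, for each $j$ with $n \leq j \leq i$ and each cell $p_j \in P_j$ not orthogonal to $p_i$, we get $f_j(p_j) \sqsubseteq f_i(p)$. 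So along the chain $p_n \not\perp p_{n+1} \not\perp \cdots$, the values $f_n(p_n) \sqsubseteq f_{n+1}(p_{n+1}) \sqsubseteq \cdots$ form a monotone ascending sequence in $S$, which by the cpo hypothesis has a supremum $t_\infty \in S$.

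Next I would identify, for each $i \geq n$, the multicell $r_i := \sum\{\, p \in P_i : p \not\perp p_\infty,\ f_i(p) \sqsubseteq t_\infty \,\}$ — or more precisely, to fit the hypothesis of Lemma~\ref{classical.I}, I would start from $r_n := p_n$ (a cell, hence a multicell of $P_n$) and recursively let $r_{i+1}$ be the smallest multicell of $P_{i+1}$ containing $r_i$. Since $P_\infty = \limsup P_i = \sup_{i \geq n} P_i$ is independent of the starting index (this is exactly the standing hypothesis of the lemma, transported from $n$ instead of $1$ — and here the finite-dimensionality of $H$ guarantees $\limsup = \sup$ of the eventual constant value), Lemma~\ref{classical.I} applies and the sequence $r_n \leq r_{n+1} \leq \cdots$ stabilizes at the smallest multicell $s$ of $P_\infty$ containing $p_n$; in particular $p_\infty \leq s$, because $p_\infty \in P_\infty$ is not orthogonal to $p_n \leq s$ and $s$ is a multicell of $P_\infty$. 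Let $m \geq n$ be an index at which this sequence has stabilized, so $r_m = s$ and $s$ is a multicell of $P_m$.

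It then remains to check that this $m$ works, i.e. that $f_m(p) \sqsupseteq f_n(p_n) = t$ for every cell $p \in P_m$ with $p \not\perp p_\infty$. Any such $p$ satisfies $p \leq s = r_m$ (since $p \not\perp p_\infty \leq s$ and $p$ is a cell of $P_m$ while $s$ is a multicell of $P_m$, so $p$ is either contained in $s$ or orthogonal to it). Now by Lemma~\ref{classical.H}, applied with the projection $q = p$ and the partitions $P_n, P_{n+1}, \ldots, P_m$ read in reverse, there is a chain $p'_n \not\perp p'_{n+1} \not\perp \cdots \not\perp p'_m = p$ with $p'_i \in P_i$ and $p'_i \not\perp p$ for all $i$; but I actually want the chain to run through $p_n$ at stage $n$, so instead I would argue directly: $p \leq r_m$ and $r_m$ is built up from $p_n$ through the recursion, so tracing the recursion backwards produces, at each stage $i$, a cell of $P_i$ below $r_i$ and not orthogonal to $p$, ending at $p_n$ itself at stage $n$; the non-orthogonality of consecutive cells in this trace, combined with the monotonicity inequalities from Proposition~\ref{classical.E} noted above, yields $f_n(p_n) \sqsubseteq f_m(p)$. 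The main obstacle I anticipate is precisely this last bookkeeping step — making the ``trace the recursion backwards'' argument rigorous, i.e. verifying that the smallest multicell $r_{i+1} \supseteq r_i$ decomposes so that every cell of $P_{i+1}$ below it is connected back through non-orthogonality to some cell of $P_i$ below $r_i$ — which is a finite combinatorial argument on the orthogonality graph of $\bigcup_i P_i$, of the same flavour as the proof of Proposition~\ref{classical.G}, but needs to be carried out carefully to land exactly on $p_n$.
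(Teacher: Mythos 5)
Your plan follows the paper's proof essentially step for step: the monotone chain $f_n(p_n)\sqsubseteq f_{n+1}(p_{n+1})\sqsubseteq\cdots$ obtained from Proposition~\ref{classical.E}, the recursion $r_n=p_n$, $r_{i+1}=$ smallest multicell of $P_{i+1}$ containing $r_i$, stabilization via Lemma~\ref{classical.I}, and the backward trace from a cell $p\leq r_m$ down to $p_n$. The ``main obstacle'' you anticipate dissolves in one line: the smallest multicell of $P_{i+1}$ containing $r_i$ is exactly the sum of the cells of $P_{i+1}$ not orthogonal to $r_i$, so every cell $q_{i+1}\leq r_{i+1}$ meets $r_i$ and hence meets some cell $q_i\in P_i$ with $q_i\leq r_i$, and descending to stage $n$ necessarily lands on $q_n=p_n$ because $r_n$ is the single cell $p_n$; Proposition~\ref{classical.E} applied along this chain then gives $f_n(p_n)\sqsubseteq f_m(p)$. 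One small repair is needed: Lemma~\ref{classical.I} assumes $\sup_{i\geq k}P_i$ is independent of $k$, which holds only for $k\geq n_0$, the index past which the decreasing sequence of suprema has stabilized at $P_\infty$; the paper therefore proves the claim first for $n\geq n_0$ and extends to $n<n_0$ via $f_n(p_n)\sqsubseteq f_{n_0}(p_{n_0})$, and your argument needs the same two-step treatment (or at least the observation that the stabilized $r_m$ is a multicell of $P_\infty$ and $P_m$ refines $P_\infty$ once $m\geq n_0$), rather than invoking Lemma~\ref{classical.I} directly from an arbitrary starting index $n$.
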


\begin{proof}
For each $n \in \NN$, define $\tilde P_n = \sup_{i \geq n} P_i$. The sequence of partitions $\tilde P_1, \tilde P_2, \ldots$ is decreasing, and it therefore stablizes. Thus, $P_\infty$ is well defined, and furthermore, for all $n$ greater than or equal to some positive integer $n_0$, we have $P_\infty = \sup_{i \geq n} P_i$. Fix $p_\infty \in P_\infty$. Let $p_1 \not \perp p_2 \not \perp \cdots$ be any sequence of projections with $p_i \in P_i$ and $p_i \not \perp p_\infty$ for all $i \in \NN$. By Proposition \ref{classical.E}, we have $f_1(p_1) \sqsubseteq f_2(p_2) \sqsubseteq \cdots$.

Let $n$ be any integer greater than or equal to $n_0$. The projection $p_n$ is a multicell of $P_n$, so by Lemma \ref{classical.I}, there is an increasing sequence of projections $$r_n \leq r_{n+1} \leq  \cdots \leq r_{m} = r_{m+1} = \cdots$$ with $r_n$ equal to $p_n$, with each subsequent $r_i$ equal to the smallest multicell of $P_i$ at least as large as $r_{i-1}$, and with $r_m$ also equal to the smallest multicell of $P_\infty$ at least as large as $r_n$. Since the partition $P_n$ refines $P_\infty$, the cell $p_n\in P_n$ must be contained in $p_\infty$. We conclude that $r_m = p_\infty$.

Each multicell $r_i$ for $i > n$ is clearly the sum of all projections in $P_i$ that are not orthogonal to $r_{i-1}$. Thus, for any projection $q_m \in P_m$ contained in $r_m$, by backward induction on $i$, we can find a sequence of projections $q_n \not \perp \cdots \not \perp q_m$, with $q_n$ equal to $p_n$, and with each $q_i$ in $P_i$. Now, applying Proposition \ref{classical.E}, we conclude that $f_n(p_n) = f_n(q_n) \leq \cdots \leq f_m (q_m)$. Therefore, $f_m(q_m) \above f_n(p_n)$ for every projection $q_m \in P_m$ such that $q_m \leq r_m$.

Let $p$ be any projection in $P_m$ that satisfies $p \not \perp p_\infty$. The partition $P_m$ refines $P_\infty$, so $p \leq p_\infty$. In other words, $p \in P_m$ and $p \leq r_m$. Applying the conclusion of the previous paragraph with $q_m = p$, we infer that $f_m(p) \above f_n(p_n)$. Therefore, for all $n \geq n_0$, there exists $m \geq n$ such that for all $p \in P_m$, if $p \not \perp p_\infty$, then $f_m(p) \above f_n(p_n)$.

Now, let $n$ be any positive integer less than $n_0$, if such a positive integer exists. We have already established that there is a positive integer $m \geq n_0 >n$ such that for all $p \in P_m$, if $p \not \perp p_\infty$, then $f_m(p) \above f_{n_0}(p_{n_0}) \above f_n(p_n)$. Thus, the lemma is proved.
\end{proof}

\begin{lemma}\label{classical.J}
Let $(S, \sqsubseteq)$ be a cpo, let $\H = \Q\{H\}$ be an atomic quantum set, and let $F_1 \sqsubseteq F_2 \sqsubseteq \cdots$ be a sequence of functions $\H \To `S$. For each index $i \in \NN$, let $P_i$ be a partition on the Hilbert space $H$, and let $f_i\: P_i \To S$ be an ordinary function such that $F_i = `f_i \circ M_{P_i}$. Let $P_\infty = \limsup_{i \in \NN} P_i$, and let $p_\infty \in P_\infty$. Then, for each sequence of projections $p_1 \not \perp p_2 \not \perp \cdots$ such that $p_i \in P_i$ and $p_i \not \perp p_\infty$ for all $i \in \NN$, the supremum $\sup_{i\in \NN} f_i(p_i)$ is well defined. Furthermore, it does not depend on the choice of sequence.
\end{lemma}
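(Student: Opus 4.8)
The proof will be short, since the substantive combinatorial work has already been carried out in Lemmas \ref{classical.I} and \ref{classical.II}. The plan is to establish the two assertions separately: first the existence of the supremum, then its independence of the chosen sequence.

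First I would fix an \emph{admissible} sequence $p_1 \not\perp p_2 \not\perp \cdots$, meaning one with $p_i \in P_i$ and $p_i \not\perp p_\infty$ for all $i \in \NN$. For each $i$, the hypothesis $F_i \sqsubseteq F_{i+1}$, together with the factorizations $F_i = `f_i \circ M_{P_i}$ and $F_{i+1} = `f_{i+1} \circ M_{P_{i+1}}$ and the fact that $p_i \not\perp p_{i+1}$, yields $f_i(p_i) \sqsubseteq f_{i+1}(p_{i+1})$ by Proposition \ref{classical.E}. Thus $f_1(p_1) \sqsubseteq f_2(p_2) \sqsubseteq \cdots$ is a monotonically increasing sequence in the cpo $(S, \sqsubseteq)$, and so $\sup_{i \in \NN} f_i(p_i)$ exists.

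For independence, let $p_1' \not\perp p_2' \not\perp \cdots$ be a second admissible sequence. Fixing $n \in \NN$ and applying Lemma \ref{classical.II} to $(p_i)_{i \in \NN}$, I obtain $m \geq n$ such that $f_m(p) \above f_n(p_n)$ for every $p \in P_m$ with $p \not\perp p_\infty$; in particular $f_m(p_m') \above f_n(p_n)$, since $p_m' \in P_m$ and $p_m' \not\perp p_\infty$. Hence $\sup_{i \in \NN} f_i(p_i') \above f_m(p_m') \above f_n(p_n)$, and letting $n$ range over $\NN$ gives $\sup_{i \in \NN} f_i(p_i') \above \sup_{i \in \NN} f_i(p_i)$. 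Exchanging the roles of the two sequences yields the reverse inequality, so the two suprema coincide. There is no real obstacle here beyond verifying that both $(p_i)$ and $(p_i')$ satisfy the hypotheses of Lemma \ref{classical.II}, which holds by assumption.
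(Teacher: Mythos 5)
Your proof is correct and follows essentially the same route as the paper: Proposition \ref{classical.E} gives monotonicity of $f_1(p_1) \sqsubseteq f_2(p_2) \sqsubseteq \cdots$ (hence existence of the supremum in the cpo), and Lemma \ref{classical.II} supplies, for each $n$, an $m$ with $f_n(p_n) \sqsubseteq f_m(p_m')$, which yields equality of the two suprema by symmetry. No gaps.
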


\begin{proof}
Let $p_1 \not \perp p_2 \not \perp \cdots$ and $p_1' \not \perp p_2' \not \perp \cdots$ be two such sequences of projections. By Proposition \ref{classical.E} the sequences $f_1(p_1), f_2(p_2), \ldots$ and $f_1(p_1'), f_2(p_2'), \ldots$ are  monotone increasing, and by our assumption that $(S, \sqsubseteq)$ is a cpo, $\sup_{i\in \NN} f_i(p_i)$ and $ \sup_{i \in \NN} f_i(p_i')$ are both well defined elements of $S$.

Applying Lemma \ref{classical.II}, we find that for each positive integer $n$, there is a positive integer $m$ such that $f_n(p_n) \below f_m(p_m') \below\sup_{i \in \NN} f_i(p_i')$. Thus, $\sup_{i\in \NN} f_i(p_i) \below \sup_{i\in \NN} f_i(p_i')$. Similarly, $\sup_{i\in \NN} f_i(p_i') \below \sup_{i\in \NN} f_i(p_i)$. Thus, we conclude that the two suprema are equal, and more generally, that the supremum does not depend on our choice of sequence.
\end{proof}

\begin{theorem}\label{classical.K}
Let $(S, \sqsubseteq)$ be a cpo, let $\H = \Q\{H\}$ be an atomic quantum set, and let $F_1 \sqsubseteq F_2 \sqsubseteq \cdots$ be a sequence of functions $\H \To `S$. For each index $i \in \NN$, let $P_i$ be a partition on the Hilbert space $H$, and let $f_i\: P_i \To S$ be an ordinary function such that $F_i = `f_i \circ M_{P_i}$. Let $P_\infty = \limsup_{i \in \NN} P_i$, and let $f_\infty\: P_\infty \To S$ be defined by $f_\infty(p_\infty) = \sup_{i \in \NN} f_i(p_i)$ for all $p_\infty \in P_\infty$, where $p_1 \not \perp p_2 \not \perp  \cdots$ is any sequence of projections such that $p_i \in P_i$ and $p_i \not \perp p_\infty$ for all $i \in \NN$. Let $F_\infty = `f_\infty \circ M_{P\infty}$. Then, $F_i \nearrow F_\infty$.
\end{theorem}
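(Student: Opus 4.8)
The plan is to unfold the relation $F_i \nearrow F_\infty$ componentwise and reduce it to a statement about projections on $H$. By Definition \ref{def:limit} we must show $`(\below) \circ F_\infty = \bigwedge_{n\in\NN} `(\below) \circ F_n$ in $\qRel(\H, `S)$. For a partition $P$ on $H$, an ordinary function $f\: P \to S$, and an element $s_0 \in S$, write $q(P,f,s_0)$ for the multicell $\sum\{p \in P \suchthat f(p) \below s_0\}$ of $P$. Lemma \ref{classical.F} identifies the component $(`(\below) \circ `f \circ M_P)(H, \CC_{s_0})$ with the operator subspace $L(H,\CC_{s_0}) \cdot q(P,f,s_0)$. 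Since meets in the homset $\qRel(\H, `S)$ are computed componentwise, and the assignment $q \mapsto L(H,\CC_{s_0})\cdot q$ is an order-embedding of the projection lattice of $H$ that preserves arbitrary meets, the theorem reduces to the Hilbert-space statement that, for each $s_0 \in S$,
\[ q(P_\infty, f_\infty, s_0) = \bigwedge_{n\in\NN} q(P_n, f_n, s_0) \]
as projections on $H$. Fix $s_0$ and abbreviate $q_n := q(P_n,f_n,s_0)$ and $q_\infty := q(P_\infty, f_\infty, s_0)$.

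First I would observe that $(q_n)_n$ is decreasing: from $F_n \below F_{n+1}$ we get $`(\below) \circ F_{n+1} \le `(\below) \circ F_n$, hence $q_{n+1} \le q_n$. As $H$ is finite-dimensional this sequence stabilizes, and since $\sup_{i \ge n} P_i = P_\infty$ for all sufficiently large $n$ (established in the proof of Lemma \ref{classical.II}), I can fix an index $N$ with $q_n = q_N$ for all $n \ge N$ and $\sup_{i \ge N} P_i = P_\infty$; then $\bigwedge_{n\in\NN} q_n = q_N$. Because $q_N = q_i$ is a multicell of $P_i$ for every $i \ge N$, the refinement argument used in the proof of Lemma \ref{classical.I} (the partition $\{q_N, 1-q_N\}$ is refined by each $P_i$, hence by their supremum) shows $q_N$ is a multicell of $\sup_{i\ge N} P_i = P_\infty$. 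Since $q_\infty$ is a multicell of $P_\infty$ by construction, it suffices to prove, for each cell $p_\infty \in P_\infty$, the equivalence $p_\infty \le q_N \iff f_\infty(p_\infty) \below s_0$, the latter condition being equivalent to $p_\infty \le q_\infty$.

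Both directions run through Lemma \ref{classical.H} and the definition of $f_\infty$ (which is choice-independent by Lemma \ref{classical.J}). For the forward direction, assume $p_\infty \le q_N$ and pick any sequence $p_1 \not\perp p_2 \not\perp \cdots$ with $p_i \in P_i$ and $p_i \not\perp p_\infty$, so $f_\infty(p_\infty) = \sup_i f_i(p_i)$. For each $i \ge N$ the cell $p_i$ of $P_i$ is not orthogonal to $p_\infty \le q_N = q_i$; since $q_i$ is a sum of pairwise orthogonal cells of $P_i$, this forces $p_i$ to be one of those cells, i.e. $f_i(p_i) \below s_0$. As the chain $(f_i(p_i))_i$ is increasing (Proposition \ref{classical.E}), $f_\infty(p_\infty) = \sup_{i \ge N} f_i(p_i) \below s_0$, so $p_\infty \le q_\infty$. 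For the converse, assume $f_\infty(p_\infty) \below s_0$; since $P_N$ refines $P_\infty$, $p_\infty$ is the sum of the cells $p \in P_N$ with $p \le p_\infty$, and for each such $p$ (using $p \not\perp p_\infty$) Lemma \ref{classical.H} furnishes a sequence $p_1 \not\perp \cdots \not\perp p_N = p \not\perp \cdots$ with $p_i \in P_i$ and $p_i \not\perp p_\infty$, whence $f_N(p) = f_N(p_N) \below \sup_i f_i(p_i) = f_\infty(p_\infty) \below s_0$, so $p \le q_N$; summing these cells gives $p_\infty \le q_N$. This yields $q_N = q_\infty$, hence the reduced claim, and hence $F_i \nearrow F_\infty$.

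I expect the main obstacle to be organizing the reduction step together with the verification that $q_N$ is a multicell of $P_\infty$ — that is, keeping straight the interplay between cells of $P_N$, cells of $P_\infty$, and the $\limsup$ — rather than any single delicate inequality, since each individual step becomes a short computation once the correct index $N$ has been isolated.
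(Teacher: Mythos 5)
Your proof is correct and takes essentially the same route as the paper's: both reduce the limit condition to an atomwise statement about the multicells $\sum\{p \in P_i \suchthat f_i(p) \sqsubseteq s_0\}$ via Lemma \ref{classical.F}, exploit the stabilization of the decreasing sequences of partitions and multicells, and chase chains of mutually non-orthogonal cells using Lemma \ref{classical.H} and Proposition \ref{classical.E}. The only difference is organizational — you prove a single two-sided cell-wise equivalence $p_\infty \leq q_N \iff f_\infty(p_\infty) \sqsubseteq s_0$, where the paper separately establishes the lower-bound half (globally, via Proposition \ref{classical.E}) and the greatest-lower-bound half (atomwise, by building a chain inside $r_1$).
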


\begin{proof}
By the definition of the order on functions, we have that $`({\sqsubseteq}) \circ F_1 \geq `({\sqsubseteq}) \circ F_2 \geq \cdots$, and we are only to show that the infimum of this sequence is the binary relation $`({\sqsubseteq}) \circ F_\infty$.

First, we show that $`({\sqsubseteq}) \circ F_\infty$ is a lower bound, by proving that $F_n \sqsubseteq F_\infty$ for each $n \in \NN$. Fix $n \in \NN$, and let $p_n \in P_n$ and $p_\infty \in P_\infty$ be projections such that $p_n \not \perp p_\infty$. Applying Lemma \ref{classical.H}, we find a sequence $p_1 \not \perp p_2 \not \perp \cdots$ such that $p_i \in P_i$ and $p_i \not \perp p_\infty$ for all $i \in \NN$. By definition of $f_\infty$, we conclude that $f_n(p_n) \leq f_\infty(p_\infty)$. We now conclude by Proposition \ref{classical.E} that $F_n \sqsubseteq F_\infty$, which means that $`({\sqsubseteq}) \circ F_n \geq`({\sqsubseteq}) \circ F_\infty$. Therefore, $`({\sqsubseteq}) \circ F_\infty$ is indeed a lower bound for our sequence of binary relations.

To show that $`({\sqsubseteq}) \circ F_\infty$ is the greatest lower bound, we reason atomwise. Fix $s_0 \in S$. The decreasing sequence of subspaces $(`({\sqsubseteq}) \circ F_1)(H, \CC_{s_0}) \geq (`({\sqsubseteq}) \circ F_2)(H, \CC_{s_0}) \geq \cdots$ must stabilize, as must the decreasing sequence of partitions $\sup_{i \geq 1} P_i, \sup_{i \geq 2} P_i, \ldots$. Without loss of generality, we may assume that both sequences are constant.

For each index $i \in \NN \union \{\infty\}$, define $r_i = \sum \{p \in P_i \suchthat f_i(p) \sqsubseteq s_0\}$. By Lemma \ref{classical.F}, $(`({\sqsubseteq}) \circ F_i)(H, \CC_{s_0}) = L(H, \CC_{s_0}) \cdot r_i$, so the sequence $r_1, r_2, \ldots$ is also constant. Since we have already established that $`({\sqsubseteq}) \circ F_\infty$ is a lower bound, we already have $r_1 \geq r_\infty$, and it remains only to show that $r_1 \leq r_\infty$. If $r_1 = 0$, then there is nothing to prove.

Assume that $r_1 \neq 0$, and let $p_1 \in P_1$ be any projection contained in $r_1$. Since $r_1$ is also a multicell of $P_2$, we may choose a projection $p_2 \in P_2$ that is not orthogonal to $p_1$ and that is also contained in $r_1$. Continuing in this way, we obtain a sequence of projections $p_1 \not \perp p_2 \not \perp \cdots$ such that $p_i \in P_i$ and $p_i \leq r_1$ for each $i \in \NN$. By definition of $r_i$, we are assured that $f_i(p_i) \sqsubseteq s_0$ for each $i \in \NN$.

Each partition $P_i$ refines $P_\infty$, so each projection $p_i$ is contained in some projection in $P_\infty$. Thus, $p_1$ is contained in some projection $p_\infty \in P_\infty$. Furthermore, because $p_1 \not \perp p_2 \not \perp \ldots$, it follows by induction that $p_i \leq p_\infty$ for all $i \in \NN$. In particular, $p_i \not \perp p_\infty$ for all $i \in\NN$. We now apply the definition of $f_\infty$ to find that $f_\infty(p_\infty) = \sup_{i \in \NN} f_i(p_i) \sqsubseteq s_0$. Therefore, $p_1 \leq p_\infty \leq r_\infty$. Since $p_1$ was chosen arbitrarily, we conclude that $r_1 \leq r_\infty$.

Therefore, $(`({\sqsubseteq}) \circ F_1)(H, \CC_{s_0}) = L(H, \CC_{s_0}) \cdot r_1 =  L(H, \CC_{s_0}) \cdot r_\infty = (`({\sqsubseteq}) \circ F_\infty)(H, \CC_{s_0})$. We vary $s_0 \in S$ to conclude that $`({\sqsubseteq}) \circ F_1 = `({\sqsubseteq}) \circ F_\infty$, and thus, $`({\sqsubseteq}) \circ F_\infty$ is the infimum of $`({\sqsubseteq}) \circ F_1 \geq `({\sqsubseteq}) \circ F_2 \geq \cdots$.
\end{proof}

\begin{corollary}\label{classical.L}
Let $(S,\sqsubseteq)$ be a cpo. Then, $(`S, `{\sqsubseteq})$ is a quantum cpo.
\end{corollary}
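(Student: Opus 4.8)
The plan is to reduce the statement to the machinery that has just been assembled in Theorem \ref{classical.K}. Recall that to show $(`S,`{\sqsubseteq})$ is a quantum cpo, by Definition \ref{def:quantum cpo} I must verify that for every atomic quantum set $\H=\Q\{H\}$, every monotonically ascending sequence $F_1\sqsubseteq F_2\sqsubseteq\cdots:\H\to `S$ has a limit $F_\infty$ in the sense of Definition \ref{def:limit}. So the proof is essentially a bookkeeping exercise: unpack the definition of a quantum cpo, produce the data required by Theorem \ref{classical.K}, and read off the conclusion.

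The key steps, in order, are as follows. First, fix an atomic quantum set $\H=\Q\{H\}$ and a monotonically ascending sequence $F_1\sqsubseteq F_2\sqsubseteq\cdots$ of functions $\H\to`S$. Second, for each index $i\in\NN$ invoke Proposition \ref{classical.D} (or just the canonical factorization it provides) to obtain a partition $P_i$ on $H$ and an ordinary function $f_i\:P_i\to S$ with $F_i=`f_i\circ M_{P_i}$. Third, form $P_\infty=\limsup_{i\in\NN}P_i$; one should note in passing that this $\limsup$ is well defined, since $H$ is finite-dimensional so the decreasing sequence of partitions $\sup_{i\ge n}P_i$ stabilizes (this is exactly the observation made inside the proof of Lemma \ref{classical.II}). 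Fourth, define $f_\infty\:P_\infty\to S$ by $f_\infty(p_\infty)=\sup_{i\in\NN}f_i(p_i)$ for a compatible sequence of projections, which is legitimate because Lemma \ref{classical.J} guarantees the supremum exists (using that $(S,\sqsubseteq)$ is a cpo) and is independent of the chosen sequence, and set $F_\infty=`f_\infty\circ M_{P_\infty}$. Fifth, apply Theorem \ref{classical.K} to conclude $F_i\nearrow F_\infty$, i.e. $F_\infty$ is the required limit. Since $\H$ was an arbitrary atomic quantum set and the sequence was arbitrary, $(`S,`{\sqsubseteq})$ satisfies Definition \ref{def:quantum cpo}.

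There is essentially no obstacle here: all the genuine work — the finite-dimensionality argument ensuring $P_\infty$ is defined, the combinatorics of refining partitions in Lemmas \ref{classical.H}, \ref{classical.I}, \ref{classical.II}, and the convergence argument in Theorem \ref{classical.K} — has already been done. The only mild subtlety worth flagging explicitly in the write-up is that $(`S,`{\sqsubseteq})$ is already known to be a quantum poset by \cite[Example~1.3]{KLM20}, so one only needs the completeness clause; and that the order on $\qSet(\H,`S)$ implicit in the phrase ``monotonically ascending'' is the $\sqsubseteq$ of Section \ref{sub:quantum posets}, which matches the order used throughout Section \ref{sec:cpos are qcpos} (cf. Proposition \ref{classical.E}). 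Given how short this is, I would simply write: ``This is immediate from Theorem \ref{classical.K} together with Propositions \ref{classical.D}, and the fact that $\limsup_{i\in\NN}P_i$ exists because $H$ is finite-dimensional,'' expanded to a few lines making the quantifier over $\H$ explicit.

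\begin{proof}
Since $(`S,`{\sqsubseteq})$ is a quantum poset \cite[Example~1.3]{KLM20}, it suffices to check the completeness condition of Definition \ref{def:quantum cpo}. Let $\H=\Q\{H\}$ be an atomic quantum set, and let $F_1\sqsubseteq F_2\sqsubseteq\cdots:\H\to`S$ be a monotonically ascending sequence of functions. For each $i\in\NN$, apply Proposition \ref{classical.D} to obtain a partition $P_i$ on $H$ and a function $f_i\:P_i\to S$ with $F_i=`f_i\circ M_{P_i}$. The Hilbert space $H$ is finite-dimensional, so the decreasing sequence of partitions $\sup_{i\geq 1}P_i\geq\sup_{i\geq 2}P_i\geq\cdots$ (ordered as in Definition \ref{classical.B}) stabilizes, and hence $P_\infty=\limsup_{i\in\NN}P_i$ is well defined. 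Define $f_\infty\:P_\infty\to S$ by $f_\infty(p_\infty)=\sup_{i\in\NN}f_i(p_i)$, where $p_1\not\perp p_2\not\perp\cdots$ is any sequence of projections with $p_i\in P_i$ and $p_i\not\perp p_\infty$ for all $i\in\NN$; this is well defined by Lemma \ref{classical.J}, using that $(S,\sqsubseteq)$ is a cpo. Let $F_\infty=`f_\infty\circ M_{P_\infty}$. By Theorem \ref{classical.K}, $F_i\nearrow F_\infty$. Since $\H$ and the sequence $F_1\sqsubseteq F_2\sqsubseteq\cdots$ were arbitrary, $(`S,`{\sqsubseteq})$ is a quantum cpo.
\end{proof}
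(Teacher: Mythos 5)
Your proof is correct and follows the same route as the paper's: factor each $F_i$ as $`f_i \circ M_{P_i}$ via Proposition \ref{classical.D} and then invoke Theorem \ref{classical.K} to produce the limit. The extra detail you supply (well-definedness of $P_\infty$ and of $f_\infty$ via Lemma \ref{classical.J}) is already packaged into the hypotheses of Theorem \ref{classical.K}, so the paper states the argument more tersely, but the substance is identical.
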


\begin{proof}
Let $\H$ be an atomic quantum set, and let $F_1 \sqsubseteq F_2 \sqsubseteq \ldots$ be an ascending sequence of functions $\H \To `S$. By Proposition \ref{classical.D}, each function $F_i$ may be factored as $F_i = `f_i \circ M_{P_i}$, for some partition $P_i$ on $H$, and some ordinary function $f_i\: P_i \To S$. By Theorem \ref{classical.K}, there exists a function $F_\infty\: \H \To `S$ such that $F_i \nearrow F_\infty$.
\end{proof}

\begin{corollary}\label{classical.M}
Let $(S_1,\sqsubseteq_1)$ and $(S_2, \sqsubseteq_2)$ be cpos, and let $g\: S_1 \to S_2$ be a Scott continuous function. Then $`g\:`S_1 \To `S_2$ is also Scott continuous.
\end{corollary}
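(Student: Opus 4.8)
The plan is to reduce the Scott continuity of $`g$ to Theorem \ref{classical.K}, using the canonical factorizations supplied by Proposition \ref{classical.D}. First I would record that $`g$ is at least a candidate morphism of $\qCPO$: it is a function $`S_1 \to `S_2$ because $`(-)\:\Set\to\qSet$ carries functions to functions, and it is monotone because $g$ is monotone (Scott continuity implies monotonicity) and $`(-)\:\POS\to\qPOS$ is a functor. Both $(`S_1,`{\sqsubseteq}_1)$ and $(`S_2,`{\sqsubseteq}_2)$ are quantum cpos by Corollary \ref{classical.L}. So it remains to fix an atomic quantum set $\H = \Q\{H\}$ and a monotonically ascending sequence $K_1\sqsubseteq K_2\sqsubseteq\cdots\:\H\to`S_1$ with limit $K_\infty$, and to show that $`g \circ K_n\nearrow `g \circ K_\infty$.

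Next I would factor each $K_i$ as $K_i = `f_i \circ M_{P_i}$ for a partition $P_i$ on $H$ and an ordinary function $f_i\:P_i\to S_1$ (Proposition \ref{classical.D}). Writing $P_\infty = \limsup_{i}P_i$ and defining $f_\infty\:P_\infty\to S_1$ by $f_\infty(p_\infty)=\sup_i f_i(p_i)$ over any sequence $p_1\not\perp p_2\not\perp\cdots$ with $p_i\in P_i$ and $p_i\not\perp p_\infty$, Theorem \ref{classical.K} gives $K_n\nearrow `f_\infty \circ M_{P_\infty}$; since limits are unique (Lemma \ref{lem:lim is sup}), $K_\infty = `f_\infty \circ M_{P_\infty}$.

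Then, since $`(-)\:\cat{Rel}\to\qRel$ is a functor, $`g \circ K_i = `(g\circ f_i)\circ M_{P_i}$ for every index $i$. Because $`g$ is monotone, left multiplication by $`g$ preserves the order on functions \cite{KLM20}*{Lemma 4.4}, so $`g \circ K_1\sqsubseteq `g \circ K_2\sqsubseteq\cdots$ is monotonically ascending, and I can apply Theorem \ref{classical.K} to it with exactly the same partition data: the partitions $P_i$, the $\limsup$ partition $P_\infty$, and the admissible sequences $p_1\not\perp p_2\not\perp\cdots$ are all unchanged, only each labelling function $f_i$ is replaced by $g\circ f_i$. The theorem then yields $`g \circ K_n\nearrow `h_\infty \circ M_{P_\infty}$, where $h_\infty(p_\infty) = \sup_i g(f_i(p_i))$. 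This is where the hypothesis on $g$ is used: for each admissible sequence the chain $f_1(p_1)\sqsubseteq f_2(p_2)\sqsubseteq\cdots$ is monotonically ascending in $S_1$ (Proposition \ref{classical.E}), so Scott continuity of $g$ gives $\sup_i g(f_i(p_i)) = g\big(\sup_i f_i(p_i)\big) = g(f_\infty(p_\infty))$. Hence $h_\infty = g\circ f_\infty$, and so $`h_\infty \circ M_{P_\infty} = `(g\circ f_\infty)\circ M_{P_\infty} = `g \circ `f_\infty \circ M_{P_\infty} = `g \circ K_\infty$. Therefore $`g \circ K_n\nearrow `g \circ K_\infty$, which is precisely what Scott continuity of $`g$ demands.

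The argument is mostly bookkeeping once Theorem \ref{classical.K} is available; the one place that needs genuine care is the observation that applying that theorem to the post-composed sequence reuses the very same partitions, the very same $\limsup$ partition, and the very same index sequences, so that the two limit labelling functions $h_\infty$ and $g\circ f_\infty$ can legitimately be compared pointwise on $P_\infty$. The single non-formal ingredient is then the use of Scott continuity of $g$ to interchange $g$ with the supremum defining $f_\infty$; everything else is functoriality of $`(-)$ and uniqueness of limits.
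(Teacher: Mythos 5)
Your proof is correct and follows essentially the same route as the paper: factor each $K_i$ via Proposition \ref{classical.D}, identify $K_\infty$ with $`f_\infty\circ M_{P_\infty}$ via Theorem \ref{classical.K} and uniqueness of limits, and then apply Theorem \ref{classical.K} a second time to the sequence $`(g\circ f_i)\circ M_{P_i}$, using Scott continuity of $g$ together with Proposition \ref{classical.E} to see that the new limit labelling function is $g\circ f_\infty$. The only difference is that you spell out a few steps the paper leaves implicit (monotonicity of $`g$, the appeal to Lemma \ref{lem:lim is sup}), which is harmless.
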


\begin{proof}
Let $\H$ be an atomic quantum set, and let $F_1, F_2, \ldots, F_\infty$ be functions $\H \To `S$ such that $F_i \nearrow F_\infty$. By Proposition \ref{classical.D}, each function $F_i$ may be factored as $F_i = `f_i \circ M_{P_i}$, for some partition $P_i$ on $H$, and some ordinary function $f_i\: P_i \To S$. By Theorem \ref{classical.K}, $F_\infty$ may be factored as $F_\infty = `f_\infty \circ M_{P_\infty}$, where $P_\infty = \limsup_{i \To \infty} P_i$ and $f_\infty\: P_\infty \To S$ is defined by $f_\infty(p_\infty) = \sup_{i \in \NN} f_i(p_i)$ for all $p_\infty \in P_\infty$, where $p_1 \not \perp p_2 \not \perp \cdots$ is any sequence of projections such that $p_i \in P_i$ and $p_i \not \perp p_\infty$ for all $i \in \NN$.

We now apply Theorem \ref{classical.K} again, to show that $`g \circ F_i \nearrow `g \circ F_\infty$. Indeed, each function $`g\circ F_i$ may be factored as $`g \circ F_i = `(g \circ f_i) \circ M_{P_i}$, and $`g \circ F_\infty$ may be similarly factored as $`g \circ F_\infty = `(g \circ f_\infty) \circ M_{P_\infty}$. Furthermore, for each projection $p_\infty \in P_\infty$, and each sequence $p_1 \not \perp p_2 \not \perp \cdots$ such that $p_i \in P_i$ and $p_i \not \perp p_\infty$ for all $i\in \NN$, we have $(g \circ f_\infty)(p_\infty) = g (f_\infty (p_\infty)) = g(\sup_{i \in \NN} f_i(n_i)) = \sup_{i \in \NN} g(f_i(n_i)) = \sup_{i \in \NN} (g \circ f_i)(n_i)$. The third equality in the calculation appeals to the given assumption that $g$ is Scott continuous, and to Proposition \ref{classical.E}. We conclude by Theorem \ref{classical.K} that $`g \circ F_i \nearrow `g \circ F_\infty$. Therefore, $`g$ is Scott continuous.
\end{proof}

\begin{proposition}\label{prop:quote strong monoidal}
    Let $(S,\sqsubseteq_S)$ and $(T,\sqsubseteq_T)$ be cpos. Denote by $\sqsubseteq_{S\times T}$ the product order on $S\times T$, i.e., $(s_1,t_1)\sqsubseteq_{S\times T}(s_2,t_2)$ if and only if $s_1\sqsubseteq_S s_2$ and $t_1\sqsubseteq_T t_2$. Then there is an order isomorphism $G:`(S,\sqsubseteq_S)\times`(T,\sqsubseteq_T)\to `(S\times T,\sqsubseteq_{S\times T})$ whose only nonzero components are given by $G(\CC_s\otimes\CC_t,\CC_{(s,t)})=L(\CC_s\otimes\CC_t,\CC_{(s,t)})$ for each $s\in S$ and each $t\in T$.
\end{proposition}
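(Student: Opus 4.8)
Throughout, I write $S\times T$ for the ordinary product poset, with order $(s_1,t_1)\sqsubseteq_{S\times T}(s_2,t_2)$ iff $s_1\sqsubseteq_S s_2$ and $t_1\sqsubseteq_T t_2$. The plan is to check, directly and componentwise, that the binary relation $G$ defined by $G(\CC_s\otimes\CC_t,\CC_{(s,t)})=L(\CC_s\otimes\CC_t,\CC_{(s,t)})$ (and zero on all other components) is a surjective order embedding; by the characterization recalled in Section~\ref{sub:quantum posets} this is exactly the same as being an order isomorphism. The structural fact making all of this trivial is that $\At(`S\times`T)=\{\CC_s\otimes\CC_t:s\in S,\,t\in T\}$ and $\At(`(S\times T))=\{\CC_{(s,t)}:(s,t)\in S\times T\}$ are both canonically indexed by $S\times T$, and all of these atoms are one-dimensional. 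Hence, for any $s,t$, the operator space $L(\CC_s\otimes\CC_t,\CC_{(s,t)})$ is one-dimensional, $L(\CC_{(s,t)},\CC_{(s,t)})=\CC 1_{\CC_{(s,t)}}$, and both the dagger and the composition of full one-dimensional hom-spaces are again full one-dimensional hom-spaces.

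First I would record that $G$ is a bijective function. Its dagger $G^\dag$ has as its only nonzero components $G^\dag(\CC_{(s,t)},\CC_s\otimes\CC_t)=L(\CC_{(s,t)},\CC_s\otimes\CC_t)$. Evaluating $G\circ G^\dag$ and $G^\dag\circ G$ by the matrix-multiplication formula for composition in $\qRel$, only the ``diagonal'' terms survive, because $G$ is nonzero only at pairs of the form $(\CC_s\otimes\CC_t,\CC_{(s,t)})$ and $G^\dag$ only at pairs $(\CC_{(s,t)},\CC_s\otimes\CC_t)$; each surviving term is a product $L\cdot L$ of full hom-spaces between one-dimensional spaces, which equals $\CC 1$ on the relevant atom. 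Thus $G\circ G^\dag=I_{`(S\times T)}$ and $G^\dag\circ G=I_{`S\times`T}$, so $G$ is bijective. (Alternatively, one may observe that $G$ is precisely the structural isomorphism witnessing the strong monoidality of $`(-)\colon\Set\to\qSet$.)

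Next I would verify the order-embedding identity $`({\sqsubseteq_S})\times`({\sqsubseteq_T})=G^\dag\circ`({\sqsubseteq_{S\times T}})\circ G$, again componentwise at an atom pair $(\CC_{s_1}\otimes\CC_{t_1},\CC_{s_2}\otimes\CC_{t_2})$. On the left, the component is $`({\sqsubseteq_S})(\CC_{s_1},\CC_{s_2})\otimes`({\sqsubseteq_T})(\CC_{t_1},\CC_{t_2})$, which is the full space $L(\CC_{s_1}\otimes\CC_{t_1},\CC_{s_2}\otimes\CC_{t_2})$ when $s_1\sqsubseteq_S s_2$ and $t_1\sqsubseteq_T t_2$, and $0$ otherwise. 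On the right, the two matrix-multiplication sums collapse to the single term $G^\dag(\CC_{(s_2,t_2)},\CC_{s_2}\otimes\CC_{t_2})\cdot`({\sqsubseteq_{S\times T}})(\CC_{(s_1,t_1)},\CC_{(s_2,t_2)})\cdot G(\CC_{s_1}\otimes\CC_{t_1},\CC_{(s_1,t_1)})$, which is the full space $L(\CC_{s_1}\otimes\CC_{t_1},\CC_{s_2}\otimes\CC_{t_2})$ when $(s_1,t_1)\sqsubseteq_{S\times T}(s_2,t_2)$ and $0$ otherwise. Since the product order is by definition $s_1\sqsubseteq_S s_2$ and $t_1\sqsubseteq_T t_2$, the two sides agree. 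Combining this with bijectivity shows that $G$ is a surjective order embedding, hence an order isomorphism (and so a morphism of $\qCPO$ by Lemma~\ref{lem:order iso is Scott continuous} whenever $S,T$ are cpos, cf.\ Corollary~\ref{classical.L}). The computations are entirely routine; the only point demanding a little care is the bookkeeping of which matrix components vanish and the observation that the ``matrix'' product of $`({\sqsubseteq_S})$ and $`({\sqsubseteq_T})$ reproduces exactly the transported product-poset order $G^\dag\circ`({\sqsubseteq_{S\times T}})\circ G$ — i.e.\ that no comparabilities are gained or lost. There is no substantive obstacle.
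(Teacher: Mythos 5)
Your proof is correct and follows essentially the same route as the paper: a direct componentwise computation exploiting that all atoms involved are one-dimensional. The only cosmetic difference is that you verify the order-embedding identity $`({\sqsubseteq_S})\times`({\sqsubseteq_T})=G^\dag\circ`({\sqsubseteq_{S\times T}})\circ G$ together with surjectivity, whereas the paper checks the equivalent intertwining relation $G\circ(`{\sqsubseteq_S}\times{}`{\sqsubseteq_T})={}`{\sqsubseteq_{S\times T}}\circ G$ and invokes \cite[Proposition 2.7]{KLM20}.
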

\begin{proof}
It was already stated in \cite[Section II]{Kornell18} that $`S\times `T\cong `(S\times T)$. Clearly, $G:`S\times `T\to`(S\times T)$ is a bijection. A  short calculation yields  \[\big(G\circ (`\sqsubseteq_S\times  `\sqsubseteq_T)\big)(\CC_{s_1}\otimes\CC_{t_1},\CC_{(s_2,t_2)})=\begin{cases}
        L(\CC_{s_1}\otimes\CC_{t_1},\CC_{(s_2,t_2)}), & s_1\sqsubseteq_S s_2\text{ and }t_1\sqsubseteq_Tt_2,
        \\
        0, & \text{else},
    \end{cases}\]
    and  \[\big(`\sqsubseteq_{S\times T}\circ G\big)(\CC_{s_1}\otimes\CC_{t_1},\CC_{(s_2,t_2)}) =\begin{cases}
        L(\CC_{s_1}\otimes\CC_{t_1},\CC_{(s_2,t_2)}), & s_1\sqsubseteq_S s_2\text{ and }t_1\sqsubseteq_Tt_2,
        \\
        0, & \text{else}.
    \end{cases}\]
    It follows that 
    $G\circ (`\sqsubseteq_S\times  `\sqsubseteq_T)=`\sqsubseteq_{S\times T}\circ G$, so $G$ is an order isomorphism by \cite[Proposition 2.7]{KLM20}.    
\end{proof}

\subsection{The right adjoint of inclusion}

It follows from Corollaries \ref{classical.L} and \ref{classical.M} that the functor $`(-):\POS\to\qPOS$ restricts and corestricts to a functor $`(-):\CPO\to\qCPO$. Now, we will show that the functor $\X\mapsto\qCPO(\mathbf 1,\X)$ is its right adjoint. Of course, $\qCPO(\mathbf 1,\X)$ is canonically a cpo, because $\cat{qCPO}$ is enriched over $\cat{CPO}$ (Theorem \ref{thm:qCPO enriched over CPO}). Any function from $\mathbf 1$ to a quantum cpo $\X$ is automatically Scott continuous, and hence, the points of $\qCPO(\mathbf 1,\X)$ are in canonical one-to-one correspondence with the one-dimensional atoms of $\X$. 

We begin by showing that the the one-dimensional atoms of a quantum cpo $(\X,R)$ form a sub-cpo. So, let $\X_1$ be the subset of $\X$ that consists of the one-dimensional atoms of $\X$, let $J_1\: \X_1 \hookrightarrow \X$ be its inclusion function, and let $R_1$ be the induced order on $\X_1$, i.e., let $R_1=J_1^\dag\circ R\circ J_1$ \cite[Definition~2.2]{KLM20}.

\begin{lemma}\label{classical.N}
Make the assumptions of Theorem \ref{classical.K}. Furthermore, let $(\X,R)$ be a quantum cpo, and assume that $S = \At(\X_1)$, with $X_1 \sqsubseteq X_1'$ iff $R(X_1, X_1') \neq 0$, for $X_1, X_1' \in S$. Let $E$ be the canonical isomorphism $`S \leftrightarrow \X_1$, and let $J_1$ be the inclusion function $\X_1 \hookrightarrow \X$. If $P_\infty=\{1\}$, then $J_1 \circ E \circ F_i \nearrow J_1 \circ E \circ F_\infty$.
\end{lemma}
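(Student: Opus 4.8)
The plan is to exploit the hypothesis $P_\infty = \{1\}$ to reduce the statement to a classical fact about the cpo $(S,{\sqsubseteq})$. Observe first that when $P_\infty = \{1\}$, the partition $P_\infty$ has a single cell, namely the identity operator $1 = 1_H$, and consequently $M_{P_\infty}\: \H \to {`P_\infty}$ is a bijection onto an atomic quantum set with a single one-dimensional atom $\CC_1$. Thus $F_\infty = {`f_\infty} \circ M_{P_\infty}$ is determined by the single value $f_\infty(1) = \sup_{i\in\NN} f_i(p_i) \in S$, where $p_1 \not\perp p_2 \not\perp \cdots$ is any sequence with $p_i \in P_i$ and $p_i \not\perp 1_H$ (the last condition being vacuous). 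In other words, $F_\infty$ is the constant function picking out the atom $\CC_{f_\infty(1)}$, up to the canonical identification of $\H$ with ${`\{1\}}$.

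Next I would identify what $J_1 \circ E \circ F_i$ actually is. The function $E\: {`S} \to \X_1$ is the canonical isomorphism sending the atom $\CC_{X_1}$ of ${`S}$ to the atom $X_1 = \CC_{X_1}$ of $\X_1$ (recall $S = \At(\X_1)$ consists of one-dimensional atoms), and $J_1$ includes $\X_1$ into $\X$. So $J_1 \circ E$ is simply an order embedding of ${`S}$ onto the full subposet $\X_1 \subseteq \X$ of one-dimensional atoms; in particular $J_1 \circ E$ is an order embedding, hence by Proposition~\ref{prop:relating Q and W}-type reasoning (or more directly, since $R_1 = (J_1 \circ E)^\dag \circ R \circ (J_1 \circ E)$ composed with the canonical iso), and one has the identity $R \circ J_1 \circ E = J_1 \circ E \circ {`{\sqsubseteq}}$ twisted appropriately. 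The cleanest route: since $J_1\circ E$ is an order embedding, for functions $G, G'\: \H \to {`S}$ one has $G \nearrow G'$ (in ${`S}$) if and only if $J_1 \circ E \circ G \nearrow J_1 \circ E \circ G'$ (in $\X$); this is essentially the statement that postcomposition with an order embedding reflects and preserves the relation $\nearrow$, which follows from $R \circ J_1 \circ E = (J_1 \circ E) \circ ({`{\sqsubseteq}})$ and the fact that $J_1 \circ E$ is injective, together with \cite[Proposition~A.6]{KLM20}. So it suffices to prove $F_i \nearrow F_\infty$ in ${`S}$.

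But $F_i \nearrow F_\infty$ in ${`S}$ is precisely the conclusion of Theorem~\ref{classical.K} applied with the base cpo being $(S,{\sqsubseteq})$: indeed, the $F_i$ are an ascending sequence $\H \to {`S}$ with factorizations $F_i = {`f_i} \circ M_{P_i}$, and $P_\infty = \limsup_i P_i$, and $f_\infty$ is defined exactly as in the hypothesis of Theorem~\ref{classical.K} (via suprema along chains of non-orthogonal projections), and $F_\infty = {`f_\infty} \circ M_{P_\infty}$. So Theorem~\ref{classical.K} gives $F_i \nearrow F_\infty$ directly, and then the order-embedding argument transports this to $J_1 \circ E \circ F_i \nearrow J_1 \circ E \circ F_\infty$. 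The main obstacle — and it is a minor one — is verifying carefully that postcomposition with the particular order embedding $J_1 \circ E$ preserves $\nearrow$; this requires knowing that $(J_1 \circ E)^\dag \circ R \circ (J_1 \circ E)$ equals the order making $E$ an isomorphism, equivalently $R \circ (J_1 \circ E) = (J_1 \circ E) \circ {`{\sqsubseteq}}$ after using that $E$ is an order isomorphism by the very definition $X_1 \sqsubseteq X_1' \iff R(X_1,X_1') \neq 0$, and then running the computation $R \circ (J_1 \circ E) \circ F_\infty = (J_1\circ E) \circ ({`{\sqsubseteq}}) \circ F_\infty = (J_1\circ E) \circ \bigwedge_n ({`{\sqsubseteq}}) \circ F_n = \bigwedge_n R \circ (J_1 \circ E) \circ F_n$, where the middle equality is Theorem~\ref{classical.K} and the outer equalities use \cite[Proposition~A.6]{KLM20} together with injectivity of $J_1 \circ E$. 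I do not expect any genuine difficulty beyond bookkeeping, since all the analytic content is already packaged in Theorem~\ref{classical.K}.
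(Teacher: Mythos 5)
Your reduction of $F_i \nearrow F_\infty$ in $`S$ to Theorem \ref{classical.K} is fine, but the transport step contains a genuine gap. The identity $R \circ J_1 \circ E = (J_1 \circ E) \circ (`{\sqsubseteq})$ that your whole argument rests on is false. What is true is $J_1^\dagger \circ R \circ J_1 \circ E = E \circ (`{\sqsubseteq})$, i.e.\ $J_1 \circ J_1^\dagger \circ R \circ J_1 \circ E = J_1\circ E \circ (`{\sqsubseteq})$, and since $J_1 \circ J_1^\dagger$ is a proper subidentity this only yields $(J_1 \circ E) \circ (`{\sqsubseteq}) \leq R \circ (J_1 \circ E)$. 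Equality fails as soon as some one-dimensional atom of $\X$ lies $R$-below an atom of dimension greater than one: then $(R \circ J_1 \circ E)(\CC_s, Y) \neq 0$ for some $Y \atomof \X$ with $Y \not\atomof \X_1$, while the corresponding component of $(J_1 \circ E)\circ(`{\sqsubseteq})$ vanishes. This is exactly the point of the lemma: the relation $\nearrow$ in $\X$ requires controlling $\bigwedge_n R \circ J_1 \circ E \circ F_n$ at \emph{every} atom of $\X$, whereas $F_i \nearrow F_\infty$ in $`S$ only controls the components at one-dimensional atoms. More bluntly, ``postcomposition with an order embedding preserves $\nearrow$'' is the assertion that $J_1$ is Scott continuous, which is Corollary \ref{classical.P} --- a consequence of Theorem \ref{classical.O}, which is itself proved from the present lemma. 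So your route is circular. A further warning sign is that your argument never uses the hypothesis $P_\infty = \{1\}$ in any essential way; if it were valid, Theorem \ref{classical.O} would follow with no work.

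The paper closes the gap differently. Taking the chain $p_1 \not\perp p_2 \not\perp \cdots$ witnessing $f_\infty(1) = \sup_i f_i(p_i)$, it forms the constant functions $h_i\colon \{1\} \to S$, $h_i(1) = f_i(p_i)$. The sequence $J_1 \circ E \circ `h_i$ consists of functions out of a one-dimensional atomic quantum set, so its limit --- which exists because $(\X,R)$ is a quantum cpo --- automatically factors through $J_1$ (every function $\mathbf 1 \to \X$ has range a single one-dimensional atom) and is therefore identified as $J_1 \circ E \circ `h_\infty$. This is the step that controls the components at the higher-dimensional atoms of $\X$, and it is where both the quantum cpo hypothesis on $\X$ and the hypothesis $P_\infty = \{1\}$ do real work. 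The proof then sandwiches: $F_n \sqsubseteq F_\infty = `h_\infty \circ M_{\{1\}}$ on one side, and Lemma \ref{classical.II} supplies $m(n)$ with $`h_n \circ M_{\{1\}} \sqsubseteq F_{m(n)}$ on the other, from which $\bigwedge_n R \circ J_1 \circ E \circ F_n = R \circ J_1 \circ E \circ F_\infty$ follows. You would need to supply an argument of this kind (or otherwise bound the components of $\bigwedge_n R \circ J_1 \circ E \circ F_n$ at atoms outside $\X_1$) for your proof to go through.
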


\begin{proof} 
Let $p_\infty = 1$, and let $p_1 \not \perp p_2 \not \perp p_3 \not \perp \cdots$ be a sequence of projections such that $p_i \in P_i$ for all $i \in \NN$, as in the statement of Theorem \ref{classical.K}.  Assume that $P_\infty := \limsup_{i \in \NN} P_i$ equals $\{1\}$. For each $i \in \NN \union \{\infty\}$, let $h_i\: \{1\} \To S$ be the function defined by $h_i(1) = f_i(p_i)$. Thus, $`h_\infty \circ M_{\{1\}} = F_\infty$, but in general, $`h_i \circ M_{\{1\}}$ is distinct from $F_i$. However, we will eventually show that for each sufficiently large integer $n$, there is an integer $m \geq n$ such that $`h_n \circ M_{\{1\}} \sqsubseteq F_m$.

By definition of the functions $h_i$, for $i \in \NN \union \{\infty\}$, we certainly have that $h_i \nearrow h_\infty$. The functions $J_1$ and $E$ are both monotone, so we also know that $J_1 \circ E \circ `h_1 \sqsubseteq J_1 \circ E \circ `h_2 \sqsubseteq \cdots$ is a monotone increasing sequence. Because $(\X,R)$ is a quantum cpo, this monotone increasing sequence has a limit, also a function $\{1\} \To \X$. This limiting function must factor through $J_1$, and it must be the least upper bound of the sequence, so it must be equal to $J_1 \circ E \circ `h_\infty$. We conclude that $J_1 \circ E \circ `h_i \nearrow J_1 \circ E \circ `h_\infty$. As a consequence, we also have that $J_1 \circ E \circ `h_i \circ M_{\{1\}} \nearrow J_1 \circ E \circ `h_\infty \circ M_{\{1\}}$.

By Lemma \ref{classical.II}, for each positive integer $n$, there exists an integer $m \geq n$ such that $f_m(p) \sqsupseteq f_n(p_n)$ for every projection $p \in P_m$. It follows by Proposition \ref{classical.E} that for each positive integer $n$, there exists an integer $m(n) \geq n$ such that $F_{m(n)} \sqsupseteq `h_n \circ M_{\{1\}}$. Hence, $J_1 \circ E \circ `h_n \circ M_{\{1\}} \sqsubseteq J_1 \circ E \circ F_{m(n)}$ for all sufficiently large integers $n$.

We now calculate that
\begin{align*}
R \circ J_1 \circ E \circ F_\infty
& =
R \circ J_1 \circ E \circ `h_\infty \circ M_{\{1\}}
=
\bigwedge_{n \in \NN} R \circ J_1 \circ E \circ `h_n \circ M_{\{1\}}
\\ &\geq
\bigwedge_{n \in \NN} R \circ J_1 \circ E \circ F_{m(n)}
\geq
\bigwedge_{m \in \NN} R \circ J_1 \circ E \circ F_m.
\end{align*}
\noindent
It follows from the monotonicity of $J_1$ and $E$ that $J_1 \circ E \circ F_1 \sqsubseteq J_1 \circ E \circ F_2 \sqsubseteq \cdots$ is an increasing sequence, so the converse inequality is already established. Thus, the lemma is proved.
\end{proof}

\begin{theorem}\label{classical.O}
Let $\H = \Q\{H\}$ be an atomic quantum set, let $(\X, R)$ be a quantum cpo, and let $J_1\: \X_1 \hookrightarrow \X$ be the inclusion function. For each $i \in \NN \union\{\infty\}$, let $G_i$ be a function from $\H$ to $\X$. Assume that $G_i \nearrow G_\infty$, and that $G_i$ factors through $J_1$ for each $i \in \NN$. Then, $G_\infty$ also factors through $J_1$.
\end{theorem}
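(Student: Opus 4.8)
The plan is to reduce the statement to the classical limit construction already carried out in Theorem~\ref{classical.K} and Lemma~\ref{classical.N}. Write $S = \At(\X_1)$ and order $S$ by $s \below s'$ iff $R(s,s') \neq 0$; as in the setup of Lemma~\ref{classical.N} this is a partial order, $R_1$ corresponds to $`{\below}$, and there is a canonical order isomorphism $E\: `(S,\below) \to (\X_1, R_1)$. The first point I would establish is that $(S,\below)$ is in fact a \emph{cpo}: a function $\mathbf 1 \to \X$ is automatically Scott continuous and picks out a one-dimensional atom of $\X$, and $\underline s \below \underline{s'}$ iff $R(s,s') \neq 0$, so $(S,\below)$ is order-isomorphic to $\qCPO(\mathbf 1, \X)$, which is a cpo by Corollary~\ref{cor:hom sets are cpos}. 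Since each $G_i$ with $i \in \NN$ factors through $J_1$, i.e.\ $\ran G_i \subsetof \X_1$ (Lemma~\ref{lem:factors through a subset}), I can write $G_i = J_1 \circ E \circ F_i$ for a unique function $F_i\: \H \to `S$; because $J_1 \circ E$ is an order embedding, the relations $G_i \below G_{i+1}$ give $F_i \below F_{i+1}$, and by Proposition~\ref{classical.D} I factor $F_i = `f_i \circ M_{P_i}$ for the canonical partition $P_i$ on $H$ and some $f_i\: P_i \to S$.

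Next I would decompose along $P_\infty := \limsup_i P_i$, which exists because the partitions $\sup_{i \geq n} P_i$ decrease in $n$ on the finite-dimensional space $H$ and hence stabilize. Discarding finitely many initial terms — harmless for $G_i \nearrow G_\infty$ and for $P_\infty$ — I may assume $P_i$ refines $P_\infty$ for every $i$. For each cell $q \in P_\infty$ set $X_q = qH$, $\H_q = \Q\{X_q\}$, $P_i^{(q)} = \{p \in P_i : p \leq q\}$ (a partition on $X_q$), $f_i^{(q)} = f_i|_{P_i^{(q)}}$, and $F_i^{(q)} = `f_i^{(q)} \circ M_{P_i^{(q)}}\: \H_q \to `S$. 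Using Proposition~\ref{classical.E} one checks $F_1^{(q)} \below F_2^{(q)} \below \cdots$ and $\limsup_i P_i^{(q)} = \{1_{X_q}\}$, the trivial partition on $X_q$. Lemma~\ref{classical.N} then applies to $(F_i^{(q)})_i$ on the atomic quantum set $\H_q$ and produces a function $F_\infty^{(q)}\: \H_q \to `S$ — the constant function whose value is the supremum built from any chain $p_1 \not\perp p_2 \not\perp \cdots$ with $p_i \in P_i^{(q)}$ — with $J_1 \circ E \circ F_i^{(q)} \nearrow J_1 \circ E \circ F_\infty^{(q)}$ in $\qSet(\H_q, \X)$.

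Finally I would reassemble. Let $D\: \H \to \biguplus_{q \in P_\infty} \H_q$ be the decomposition associated to $P_\infty$ (Definition~\ref{def:decomposition}), surjective by Lemma~\ref{lem:decomposition is surjective}. Since $P_i$ refines $P_\infty$, one verifies that $F_i = [\,F_i^{(q)} : q \in P_\infty\,] \circ D$, where the brackets denote cotupling along the coproduct in $\qRel$; this is most transparent on the dual von Neumann algebras, since $\mathrm{im}(F_i^\star)$ is contained in the commutative algebra of $L(H)$ generated by $P_i$, which lies inside the block algebra $\bigoplus_q L(X_q) = \mathrm{im}(D^\star)$. Setting $\hat G_\infty := J_1 \circ E \circ [\,F_\infty^{(q)} : q\,] \circ D$, which factors through $J_1$ by construction, and using that composition distributes over coproducts (\cite[Lemma~6.1]{KLM20}) and preserves infima (\cite[Proposition~A.6]{KLM20}) together with the blockwise convergence above, I would get $\bigwedge_i R \circ G_i = R \circ \hat G_\infty$, so $G_i \nearrow \hat G_\infty$; since also $G_i \nearrow G_\infty$, uniqueness of limits (Lemma~\ref{lem:lim is sup}) forces $G_\infty = \hat G_\infty$.

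The step I expect to be the main obstacle is exactly this reduction to Lemma~\ref{classical.N}: that lemma only treats the case where the limiting partition is trivial, whereas in general $\limsup_i P_i$ is a genuine partition, and one cannot make it trivial by choosing the $P_i$ differently, since the canonical partitions are already the coarsest admissible ones. Handling this requires the decomposition bookkeeping — verifying the factorization $F_i = [\,F_i^{(q)}:q\,] \circ D$, checking that $\limsup_i P_i^{(q)}$ is trivial on each block, and confirming that the blockwise limits produced by Lemma~\ref{classical.N} assemble into the global limit — after which the infimum/coproduct compatibilities make the conclusion routine.
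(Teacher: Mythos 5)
Your proposal is correct and follows essentially the same route as the paper's own proof: factor each $G_i$ through the canonical isomorphism $E\:`S\to\X_1$ and a partition $P_i$, pass to $P_\infty=\limsup_i P_i$, restrict to each cell of $P_\infty$ where the limiting partition becomes trivial so that Lemma~\ref{classical.N} applies, and reassemble the blockwise limits via the decomposition function and the cotupling/infimum compatibilities, concluding by uniqueness of limits. The only (harmless) addition is your explicit verification that $(\At(\X_1),\below)$ is a cpo via $\qCPO(\mathbf 1,\X)$, which the paper leaves implicit when invoking Lemma~\ref{classical.N}.
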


\begin{proof}
Let $S = \At(\X_1)$, and let $\sqsubseteq$ be the partial order on $S$ defined by $X_1 \sqsubseteq X_1'$ iff $R(X_1, X_1') \neq 0$ for $X_1, X_1' \in S$, so that $(`S, `{\sqsubseteq})$ is canonically isomorphic to $(\X_1, J_1^\dagger \circ R \circ J_1)$. Let $E$ be this isomorphism. By assumption, for each integer $i \in\NN$, the function $G_i$ factors through $J_1$. Since, $E$ is an isomorphism, it also factors through $J_1 \circ E$. Therefore, for each integer $i \in \NN$, let $F_i\: \H \To `S$ be such that $J_1 \circ E \circ F_i = G_i$.
By Proposition \ref{classical.D}, each such function $F_i$ may be further factored as $F_i = `f_i \circ M_{P_i}$ for some partition $P_i$ on $H$. Let $P_\infty = \limsup_{i \To \infty} P_i$. Without loss of generality, we may assume that $\sup_{i \in \NN} P_{i} = P_\infty$. Thus, $P_i \leq P_\infty$ for each index $i \in \NN$. We index the elements of $P_\infty$ by some set $A$, writing $P_\infty = \{ p_\infty^\alpha \suchthat \alpha \in A\}$.

Fix an index $\alpha \in A$. For each $i \in \NN\union\{\infty\}$, we define $P_i^\alpha := \{p_i \in P_i \suchthat p_i \leq p_\infty^\alpha\}$. Identifying each projection contained in $p^\alpha_\infty$ with a projection operator on $H^\alpha := p_\infty^\alpha H$ in the obvious way, we conclude that $P_i^\alpha$ is a partition on $H^\alpha$. The partition $P_\infty^\alpha = \{p_\infty^\alpha\}$ is indeed the supremum of the partitions $P^\alpha_i$, for $i \in \NN$, essentially by our definition of $P_\infty$. Thus, we plan to apply Lemma \ref{classical.N}. For each integer $i \in \NN$, we define $f_i^\alpha$ to be the restriction of $f_i$ to $P_i^\alpha$; in other words, $f_i^\alpha = f_i \circ j_i^\alpha$, where $j_i^\alpha\: P_i^\alpha \hookrightarrow P_i$ is the ordinary inclusion function. We now define $F_i^\alpha = `f_i^\alpha \circ M_{P_i^\alpha}$.

We claim that $F_1^\alpha \sqsubseteq F_2^\alpha \sqsubseteq \cdots$. Fix integers $i_1 < i_2 \in \NN$. We are given that $J_1 \circ E \circ F_{i_1} \sqsubseteq J_1 \circ E \circ F_{i_2}$, which means that $J_1 \circ E \circ F_{i_2} \leq R \circ  J_1 \circ E \circ F_{i_1}$. Appealing to the fact that $J_1$ and $E$ are injective, i.e., $J_1^\dagger \circ J_1 = I$ and $E^\dagger \circ E =I$, we infer that $F_{i_2} \leq E^\dagger \circ J_1^\dagger \circ R \circ J_1 \circ E \circ F_{i_1}$. The binary relation $ E^\dagger \circ J_1^\dagger \circ R \circ J_1 \circ E$ on $`S$ is equal to $`{\sqsubseteq}$, essentially by definition. Therefore, $F_{i_1} \sqsubseteq F_{i_2}$. Appealing to Proposition \ref{classical.E} twice, we conclude that $F_{i_1}^\alpha \sqsubseteq F_{i_2}^\alpha$. We vary $i_1$ and $i_2$ to conclude that $F_1^\alpha \sqsubseteq F_2^\alpha \sqsubseteq \cdots$.

By Lemma \ref{classical.N}, $J_1 \circ E \circ F_i^\alpha \nearrow J_1 \circ E \circ F_\infty^\alpha$ for some function $F_\infty^\alpha\: \H^\alpha \To `S$. We vary $\alpha \in A$ to define $F_\infty = [F_\infty^\alpha \suchthat \alpha \in A] \circ D_{P_\infty}$, where the bracket notation refers to the universal property of the coproduct $\biguplus_{\alpha \in A} \H^\alpha$, and $D_{P_\infty}\:\H \to \biguplus_{\alpha \in A} \H^\alpha$ is the decomposition function corresponding to $P_\infty$ (Definition \ref{def:decomposition}). We now compute as follows:
\begin{align*}
R \circ J_1&  \circ E \circ F_\infty
 =
R \circ J_1 \circ E \circ\left[ F_\infty^\alpha \, \middle| \, \alpha \in A \right] \circ D_{P_\infty}
=
\left[R \circ J_1 \circ E \circ F_\infty^\alpha \, \middle| \, \alpha \in A\right] \circ D_{P_\infty}
\\ &=
\left[\bigwedge_{i \in \NN} R \circ J_1 \circ E \circ F_i^\alpha \, \middle| \, \alpha \in A\right] \circ D_{P_\infty}
=
\left(\bigwedge_{i \in \NN}\left[ R \circ J_1 \circ E \circ F_i^\alpha \, \middle| \, \alpha \in A\right]\right) \circ D_{P_\infty}
\\ &=
\bigwedge_{i \in \NN}R \circ J_1 \circ E \circ\left[  F_i^\alpha \, \middle| \, \alpha \in A\right] \circ D_{P_\infty}
=
\bigwedge_{i \in \NN}R \circ J_1 \circ E \circ\left[  `f_i^\alpha \circ M_{P_i^\alpha}\, \middle| \, \alpha \in A\right] \circ D_{P_\infty}
\\ &=
\bigwedge_{i \in \NN}R \circ J_1 \circ E \circ\left[  `f_i \circ `j_i^\alpha \circ M_{P_i^\alpha}\, \middle| \, \alpha \in A\right] \circ D_{P_\infty}
\\ &=
\bigwedge_{i \in \NN}R \circ J_1 \circ E \circ`f_i \circ \left[  `j_i^\alpha \circ M_{P_i^\alpha}\, \middle| \, \alpha \in A\right] \circ D_{P_\infty}
\\ &=
\bigwedge_{i \in \NN}R \circ J_1 \circ E \circ`f_i \circ M_{P_i}
=
\bigwedge_{i \in \NN}R \circ J_1 \circ E \circ F_i
\end{align*}
\noindent
Thus, $J_1 \circ E \circ F_i \nearrow J_1 \circ E \circ F_\infty$. We conclude that $G_\infty = J_1 \circ E \circ F_\infty$, so $G_\infty$ does factor through $J_1$.
\end{proof}

\begin{corollary}\label{classical.P}
Let $(\X, R)$ be a quantum cpo. Then, the inclusion function $J_1\: \X_1 \hookrightarrow \X$ is Scott continuous.
\end{corollary}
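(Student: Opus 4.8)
The statement is essentially a corollary of Theorem \ref{classical.O} together with Proposition \ref{prop:subqcpo}, so the work is almost entirely bookkeeping. Recall that $\X_1 \subseteq \X$ is the subset of one-dimensional atoms, that $J_1\: \X_1 \hookrightarrow \X$ is the inclusion function, and that $\X_1$ carries the induced order $R_1 = J_1^\dagger \circ R \circ J_1$. The first step is to observe that, by Lemma \ref{lem:factors through a subset}, a function $G\: \H \to \X$ from an atomic quantum set $\H$ satisfies $\ran G \subseteq \X_1$ if and only if $G$ factors through $J_1$. Feeding this translation into the statement of Theorem \ref{classical.O}, we see that the theorem says exactly: for every atomic quantum set $\H$ and every monotonically ascending sequence $G_1 \below G_2 \below \cdots\: \H \to \X$ with limit $G_\infty$ such that $\ran G_i \subseteq \X_1$ for all $i \in \NN$, we have $\ran G_\infty \subseteq \X_1$. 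This is precisely the definition of $\X_1$ being a sub-cpo of $\X$.

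The second step is to invoke Proposition \ref{prop:subqcpo} with $\Y = \X$, $S = R$, and the subposet $\X_1$ with its induced order $R_1$: since $\X_1$ is a sub-cpo of the quantum cpo $\X$, it follows that $(\X_1, R_1)$ is a quantum cpo and that $J_1$ is Scott continuous. The latter conclusion is what we want. (As a byproduct we also record that $(\X_1, R_1)$ is itself a quantum cpo, which is convenient for the subsequent discussion of the right adjoint of the inclusion functor.)

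There is no real obstacle here; the substantive content was already carried by Theorem \ref{classical.O}. The only point requiring a moment's care is making sure the hypotheses of Proposition \ref{prop:subqcpo} are met — namely that $\X_1$ is a \emph{subposet} of $\X$ equipped with the relative order, which holds by construction — and that the notion of ``sub-cpo'' used in Theorem \ref{classical.O}'s reformulation agrees verbatim with the one in the Definition preceding Proposition \ref{prop:subqcpo}, which it does after the Lemma \ref{lem:factors through a subset} translation. Thus the proof is just: apply Theorem \ref{classical.O} to conclude $\X_1$ is a sub-cpo of $\X$, then apply Proposition \ref{prop:subqcpo} to conclude $J_1$ is Scott continuous.
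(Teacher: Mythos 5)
Your proof is correct, but it takes a different route from the paper's. The paper argues directly from the definition of Scott continuity: given $F_i \nearrow F_\infty$ in $\X_1$, it pushes the sequence into $\X$, obtains a limit $G_\infty$ there because $\X$ is a quantum cpo, invokes Theorem \ref{classical.O} to factor $G_\infty$ as $J_1 \circ F_\infty'$, and then shows $F_\infty' = F_\infty$ by combining the order-embedding property of $J_1$ with the fact that the limit is the least upper bound (Lemma \ref{lem:lim is sup}). You instead observe that Theorem \ref{classical.O}, read through the translation of Lemma \ref{lem:factors through a subset} between ``factors through $J_1$'' and ``$\ran G \subseteq \X_1$,'' says verbatim that $\X_1$ is a sub-cpo of $\X$, and then delegate everything else to the already-established Proposition \ref{prop:subqcpo}. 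This is a legitimate and arguably more economical decomposition: the comparison argument you are skipping is essentially the one already carried out inside the proof of Proposition \ref{prop:subqcpo}, so you avoid repeating it, and as a bonus you obtain that $(\X_1, R_1)$ is itself a quantum cpo, a fact the paper's proof of this corollary does not record. There is no circularity, since Proposition \ref{prop:subqcpo} is proved much earlier and independently of this section.
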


\begin{proof}
Let $\H$ be an atomic quantum set, and let $F_1 \sqsubseteq F_2 \sqsubseteq \cdots \sqsubseteq F_\infty$ be a monotonically ascending sequence of functions $\H \To \X_1$ with $F_i \nearrow F_\infty$. Since $J_1$ is monotone, $J_1 \circ F_1 \sqsubseteq J_1 \circ F_2 \sqsubseteq \cdots \sqsubseteq J_1 \circ F_\infty$ is a monotonically ascending sequence of functions $\H \To \X$. By the definition of a quantum cpo, there is a function $G_\infty\: \H \To \X$ such that $J_1 \circ F_i \nearrow G_\infty$. In particular, $G_\infty \sqsubseteq J_1 \circ F_\infty$. By Theorem \ref{classical.O}, there is a function $F_\infty'\: \H \To \X_1$ such that $G_\infty = J_1 \circ F_\infty'$. Thus, $J_1 \circ F_1 \sqsubseteq J_1 \circ F_2 \sqsubseteq\ldots \sqsubseteq J_1 \circ F_\infty' \sqsubseteq J_1 \circ F_\infty$. Since $J_1$ is an order embedding, we infer that $F_1 \sqsubseteq  F_2 \sqsubseteq\ldots \sqsubseteq F_\infty' \sqsubseteq  F_\infty$. In other words, $F_\infty'$ is an upper bound of the sequence $F_1 \sqsubseteq  F_2 \sqsubseteq\ldots$ that is nevertheless below $F_\infty$. Since $F_\infty$ is the least upper bound, we conclude that $F_\infty' = F_\infty$. Therefore, $J_1 \circ F_i \nearrow J_1 \circ F_\infty$, as desired.
\end{proof}

Equip $\mathbf 1$ with its unique (trivial) order $I_\mathbf 1$. Let $\X$ be a quantum set and $R$ an order relation on $\X$. Then each function $F:\mathbf 1\to\X$ is automatically monotone, simply because $\mathbf 1$ is trivially ordered. In fact, if $(\X,R)$ is a quantum cpo, then $F$ is automatically Scott continuous, simply because $\mathbf 1$ is a finite quantum poset and monotone maps whose codomain is finite are Scott continuous  (Proposition \ref{prop:finite qposet is qcpo}). Hence, if $(\X,R)$ is a quantum cpo, then $\qSet(\mathbf 1,\X)=\qPOS(\mathbf 1,\X)=\qCPO(\mathbf 1,\X)$. 

\begin{lemma}\label{lem:b is order iso}
Let $(\X,R)$ be a quantum poset, and equip $\At(\X_1)$ with the order defined by $X \sqsubseteq X'$ if and only if $R(X,X') \neq 0$, for $X, X' \in \At(\X_1)$. Then, the canonical bijection $b_\X:\At(\X_1)\to\qSet(\mathbf 1,\X)$ of Lemma \ref{lem:one-dimensional atoms} is an order isomorphism. 
\end{lemma}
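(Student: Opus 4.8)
### Proof plan for Lemma~\ref{lem:b is order iso}

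The plan is to verify directly that $b_\X$ is a bijective monotone function whose inverse is also monotone, and then invoke the characterization of order isomorphisms among quantum posets. We already know that $b_\X$ is a bijection between $\At(\X_1)$ and $\qSet(\mathbf 1, \X)$ by the discussion preceding the lemma (it is the dual, under $\ell^\infty$, of the standard one-to-one correspondence between minimal central projections and unital normal $*$-homomorphisms out of $\CC$, using \cite{Kornell18}*{Theorem 7.4, Proposition 7.5}). So the real content is purely order-theoretic, and amounts to unwinding two definitions: the defining order $X \sqsubseteq X'$ iff $R(X,X') \neq 0$ on $\At(\X_1)$, and the order $\sqsubseteq$ on the homset $\qSet(\mathbf 1, \X)$ coming from Section~\ref{sub:quantum posets}, which says $b_\X(X) \sqsubseteq b_\X(X')$ iff $b_\X(X') \circ b_\X(X)^\dag \leq R$.

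The key computation is to evaluate the binary relation $b_\X(X') \circ b_\X(X)^\dag \: \X \to \X$. Since $b_\X(X)$ and $b_\X(X')$ are supported on the single components $b_\X(X)(\CC, X) = L(\CC, X)$ and $b_\X(X')(\CC, X') = L(\CC, X')$, with all other components zero, the composite $b_\X(X') \circ b_\X(X)^\dag$ has its only possibly-nonzero component at $(X, X')$, where it equals $L(\CC, X') \cdot L(X, \CC) = L(X, X')$; here I use that $X$ and $X'$ are one-dimensional, so $L(X,\CC) \cdot L(\CC, X')$ spans all of $L(X, X')$. Thus $b_\X(X') \circ b_\X(X)^\dag = \Delta_{X,X'}^{-1}$-style singleton relation whose $(X,X')$-component is the full space $L(X,X')$ (and is $0$ elsewhere). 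Consequently $b_\X(X') \circ b_\X(X)^\dag \leq R$ holds if and only if $L(X,X') \leq R(X,X')$, which, since $L(X,X')$ is one-dimensional, is equivalent to $R(X,X') \neq 0$. This is exactly the condition $X \sqsubseteq X'$ in $\At(\X_1)$. Hence $b_\X(X) \sqsubseteq b_\X(X')$ in $\qSet(\mathbf 1, \X)$ if and only if $X \sqsubseteq X'$ in $\At(\X_1)$: $b_\X$ is an order embedding of ordinary posets, and being also a bijection, an order isomorphism.

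Finally, to phrase this in terms of quantum posets and conclude via \cite{KLM20}*{Proposition~2.7} (as is done elsewhere in the paper), I would regard $\At(\X_1)$ with its order as the ordinary poset $(S, \sqsubseteq)$ and identify it via $`(-)$ with the quantum poset $(`S, `\sqsubseteq) \cong (\X_1, R_1)$, and similarly view $\qSet(\mathbf 1, \X)$ as an ordinary poset; the argument above shows the underlying set bijection $b_\X$ preserves and reflects the order, so it is an order isomorphism. The main (mild) obstacle is just bookkeeping: being careful that the order on $\qSet(\mathbf 1, \X)$ used in the lemma is the pointwise order of Section~\ref{sub:quantum posets} with codomain the quantum poset $(\X, R)$ — not an order on a quantum set — and correctly matching up the one-dimensionality of the atoms with the step "$L(X,X') \leq R(X,X') \iff R(X,X') \neq 0$". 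There is no deep difficulty here; the lemma is essentially a sanity check that the abstract homset order restricts correctly to the one-dimensional atoms.
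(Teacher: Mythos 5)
Your proposal is correct and follows essentially the same route as the paper's proof: unwind the homset order to the condition $b_\X(X')\circ b_\X(X)^\dag\leq R$, compute that the only nonzero component of $b_\X(X')\circ b_\X(X)^\dag$ is $L(X,X')$ at $(X,X')$, and use one-dimensionality to conclude this holds iff $R(X,X')\neq 0$. The closing remarks about $`(-)$ and order isomorphisms of quantum posets are unnecessary here (the lemma concerns an isomorphism of ordinary posets, and a bijection that preserves and reflects the order already suffices), but they do no harm.
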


\begin{proof}
For visual brevity, write $b = b_\X$.
Let $X, X' \in \At(\X_1)$. We reason that
\begin{align*}
b(X) \below b(X')
& \; \Longleftrightarrow \;
b(X') \circ b(X)^\dagger \leq R
\; \Longleftrightarrow \;
b(X')(\CC, X') \cdot b(X) (\CC,X)^\dagger \leq R(X, X')
\\ & \; \Longleftrightarrow \;
L(X,X') = R(X,X')
\; \Longleftrightarrow \;
R(X,X') \neq 0
\end{align*}
The first equivalence follows by \cite[Lemma~4.1(4)]{KLM20}.
\end{proof}

\begin{lemma}\label{lem:B is order iso}
Let $(\X,R)$ be a quantum poset. Let $B_\X:\X_1\to`\qSet(\mathbf 1,\X)$ be the bijection from Lemma \ref{lem:one-dimensional atoms}, so $B_\X(X,\CC_{b_\X(X)}) = L(X,\CC_{b_\X(X)}) $, for $X \atomof \X_1$, with the other components vanishing. Then, $B_\X$ is an order isomorphism from $(\X_1, R_1)$ to $(`\qSet(\mathbf 1,\X), `{\below})$, where $R_1 = J_1^\dagger \circ R \circ J_1$ and $\below$ is the usual order on functions \cite[Lemma~4.1]{KLM20}.
\end{lemma}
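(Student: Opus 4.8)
The plan is to combine Lemma \ref{lem:b is order iso}, which identifies $\At(\X_1)$ (with the order $X \sqsubseteq X'$ iff $R(X,X') \neq 0$) with $\qSet(\mathbf 1, \X)$ via the bijection $b_\X$, with the observation that $B_\X$ is essentially the ``quote'' of this bijection. More precisely, the set $\qSet(\mathbf 1, \X)$ carries the order $\below$ on functions from Lemma \ref{lem:b is order iso}: for $X, X' \atomof \X_1$, we have $b_\X(X) \below b_\X(X')$ if and only if $R(X, X') \neq 0$ if and only if $R_1(X, X') \neq 0$, the last equivalence because $R_1 = J_1^\dagger \circ R \circ J_1$ and $J_1$ is the inclusion. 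Thus $b_\X$ is an order isomorphism from $(\At(\X_1), \sqsubseteq)$ with $\sqsubseteq$ defined via $R_1$, onto $(\qSet(\mathbf 1, \X), \below)$. But by the canonical isomorphism $`S \leftrightarrow \X_1$ used throughout this section (with $S = \At(\X_1)$ and the order induced by $R_1$), the quantum poset $(\X_1, R_1)$ is order isomorphic to $(`\At(\X_1), `{\sqsubseteq})$, and applying $`(-)$ to the order isomorphism $b_\X$ yields an order isomorphism $(`\At(\X_1), `{\sqsubseteq}) \to (`\qSet(\mathbf 1, \X), `{\below})$.

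First I would spell out the canonical order isomorphism $E\: (`\At(\X_1), `{\sqsubseteq}) \to (\X_1, R_1)$ whose only nonzero components are $E(\CC_X, X) = L(\CC_X, X)$ for $X \atomof \X_1$; that this is an order isomorphism follows from the fact that $R_1(X, X') \neq 0$ iff $X \sqsubseteq X'$ together with \cite[Proposition~2.7]{KLM20}, essentially as in the verification sketched before Lemma \ref{classical.N}. Next I would observe that $B_\X = `(b_\X) \circ E^\dagger$, or equivalently $B_\X \circ E = `(b_\X)$: composing the component formulas, $(`(b_\X) \circ E^\dagger)(X, \CC_{b_\X(X)}) = `(b_\X)(\CC_X, \CC_{b_\X(X)}) \cdot E^\dagger(X, \CC_X) = L(\CC_X, \CC_{b_\X(X)}) \cdot L(X, \CC_X) = L(X, \CC_{b_\X(X)})$, which matches the definition of $B_\X$, and all other components vanish on both sides. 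Since $E$ is an order isomorphism from $(`\At(\X_1), `{\sqsubseteq})$ to $(\X_1, R_1)$ and $`(b_\X)$ is an order isomorphism from $(`\At(\X_1), `{\sqsubseteq})$ to $(`\qSet(\mathbf 1, \X), `{\below})$ — the latter because $b_\X$ is an order isomorphism by Lemma \ref{lem:b is order iso} and $`(-)\: \POS \to \qPOS$ is a functor preserving order isomorphisms (it is fully faithful and strong monoidal, and order isomorphisms are exactly the isomorphisms of $\qPOS$ by \cite[Proposition~2.7]{KLM20}) — it follows that $B_\X = `(b_\X) \circ E^{-1}$ is an order isomorphism from $(\X_1, R_1)$ to $(`\qSet(\mathbf 1, \X), `{\below})$.

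The only mildly delicate point is bookkeeping about which order is placed on $\At(\X_1)$: in Lemma \ref{lem:b is order iso} it is ``$X \sqsubseteq X'$ iff $R(X,X') \neq 0$'', whereas the ambient discussion defines the order on $\X_1 = \Q\{X : X \atomof \X, \dim X = 1\}$ via the induced relation $R_1 = J_1^\dagger \circ R \circ J_1$. These agree because, by the definition of the induced order and the fact that $J_1$ is the canonical inclusion of a subset, $R_1(X, X') = R(X, X')$ for all $X, X' \atomof \X_1$; hence $R_1(X, X') \neq 0$ iff $R(X, X') \neq 0$ iff $X \sqsubseteq X'$. This is really the entire content of the main obstacle, which is minor — everything else is a matter of matching up component formulas and invoking \cite[Proposition~2.7]{KLM20} for the criterion that a monotone bijection with a monotone inverse, equivalently a bijective function $F$ with $F^\dagger \circ S \circ F = R$, is an order isomorphism. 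I would therefore present the proof as: (i) recall $E$ and note it is an order iso; (ii) check $B_\X \circ E = `(b_\X)$ at the level of components; (iii) conclude using Lemma \ref{lem:b is order iso} and functoriality of $`(-)$ that $B_\X$ is an order isomorphism onto $(`\qSet(\mathbf 1, \X), `{\below})$.
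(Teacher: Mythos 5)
Your proof is correct and in substance the same as the paper's: both reduce the claim to Lemma \ref{lem:b is order iso} together with the componentwise identification of $B_\X$ with the quote of $b_\X$. The paper verifies the intertwining relation $`({\below}) \circ B_\X = B_\X \circ R_1$ by a direct atomwise computation and invokes \cite[Definition~2.6]{KLM20}, whereas you package the same computation as the factorization $B_\X = `(b_\X) \circ E^{-1}$ through the canonical isomorphism $E$ and appeal to functoriality of $`(-)$; this is only a cosmetic difference.
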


\begin{proof}
For visual brevity, write $b = b_\X$ and $B = B_\X$.
Reasoning by cases, we find that the conclusion of Lemma \ref{lem:b is order iso} is that $b$ is a bijection and $`(\below)(\CC_{b(X_1)}, \CC_{b(X_2)}) \cdot L(X_1, \CC_{b(X_1)}) = L(X_2, \CC_{b(X_2)}) \cdot R_1(X_1, X_2)$ for all $X_1, X_2 \atomof \X_1$. We also calculate that for all $X_1, X_2 \atomof \X_1$,
\begin{align*}
(`(\below) \circ B)&( X_1, \CC_{b(X_2)})
 =
\bigvee_{P \in \cat{qSet}(\mathbf 1, \X)} `(\below)(\CC_{P}, \CC_{b(X_2)}) \cdot B(X_1, \CC_{P})
\\ & =
`(\below)(\CC_{b(X_1)}, \CC_{b(X_2)}) \cdot L(X_1, \CC_{b(X_1)})
=
L(X_2, \CC_{b(X_2)}) \cdot R_1(X_1, X_2)
\\ &=
\bigvee_{X \atomof \X_1} B(X,\CC_{b(X_2)}) \cdot R_1(X_1, X)
=
(B \circ R_1)(X_1, \CC_{b(X_2)}).
\end{align*}
Thus, $`(\below) \circ B = B \circ R_1$. It follows immediately that $B$ is an isomorphism of quantum posets in the sense of \cite[Definition~2.6]{KLM20}.
\end{proof}

For the next lemma, we introduce a minor notation that will allow us to distinguish functions $1 \to \X$ from the corresponding functions $1 \to \cat{qSet}(1, \X)$. Specifically, whenever $s$ is an element of some ordinary set $S$, we write $\ulcorner s \urcorner$ for the corresponding function $1 \to S$.

\begin{lemma}\label{lem:unnaming}
Let $\X$ be a quantum set, let $b_\X: \At(\X_1) \to \cat{qSet}(\mathbf 1, \X)$ be as in Lemma \ref{lem:b is order iso}, and let $B_\X: \X_1 \to `\cat{qSet}(\mathbf 1, \X)$ be as in Lemma \ref{lem:B is order iso}. Then, for all functions $P_0\:\mathbf 1 \to \X$, we have that $J_1 \circ B_\X^{\inv} \circ `\ulcorner P_0 \urcorner = P_0$.
\end{lemma}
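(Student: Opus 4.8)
The plan is to unwind the definitions of $b_\X$, $B_\X$, $\ulcorner\,\cdot\,\urcorner$, and $J_1$ and check the claimed equation of functions $\mathbf 1 \to \X$ componentwise. Write $b = b_\X$ and $B = B_\X$. Since $P_0$ is a function $\mathbf 1 \to \X$, its range is a single one-dimensional atom $X_0 \atomof \X$, which is therefore an atom of $\X_1$; moreover $P_0 = b(X_0)$ by the defining formula $b(X_0)(\CC, X_0) = L(\CC, X_0)$ and \cite{Kornell18}*{Proposition 7.5}. The plan is to observe that $`\ulcorner P_0 \urcorner$ is the function $\mathbf 1 \to `\cat{qSet}(\mathbf 1, \X)$ whose only nonzero component is $`\ulcorner P_0\urcorner(\CC, \CC_{P_0}) = L(\CC, \CC_{P_0})$, by the definition of $`(-)$ on the one-element-to-one-element function $\ulcorner P_0 \urcorner\: \mathbf 1 \to \cat{qSet}(\mathbf 1, \X)$.

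Next I would compute $B^{\inv} \circ `\ulcorner P_0 \urcorner$. Since $B$ is an order isomorphism (Lemma \ref{lem:B is order iso}), in particular a bijective function, $B^{\inv} = B^\dag$; its only nonzero component is $B^\dag(\CC_{b(X_0)}, X_0) = B(X_0, \CC_{b(X_0)})^\dag = L(X_0, \CC_{b(X_0)})^\dag = L(\CC_{b(X_0)}, X_0)$, using that $b(X_0) = P_0$ so $\CC_{b(X_0)} = \CC_{P_0}$. Composing, $(B^{\inv} \circ `\ulcorner P_0\urcorner)(\CC, X_0) = B^\dag(\CC_{P_0}, X_0) \cdot `\ulcorner P_0\urcorner(\CC, \CC_{P_0}) = L(\CC_{P_0}, X_0) \cdot L(\CC, \CC_{P_0}) = L(\CC, X_0)$, since the composite of the full operator spaces between one-dimensional Hilbert spaces is again the full operator space. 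Thus $B^{\inv}\circ`\ulcorner P_0\urcorner$ is the function $\mathbf 1 \to \X_1$ picking out the atom $X_0$.

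Finally I would postcompose with $J_1$: the inclusion $J_1\: \X_1 \hookrightarrow \X$ has $J_1(X_0, X_0) = \CC 1_{X_0}$ and all other components zero, so $(J_1 \circ B^{\inv} \circ `\ulcorner P_0\urcorner)(\CC, X_0) = \CC 1_{X_0} \cdot L(\CC, X_0) = L(\CC, X_0) = P_0(\CC, X_0)$, and all other components of both sides vanish since $X_0$ is the unique atom in $\ran P_0 \subseteq \X_1$. This establishes $J_1 \circ B_\X^{\inv} \circ `\ulcorner P_0 \urcorner = P_0$.

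I do not expect a genuine obstacle here; the statement is essentially a bookkeeping identity tracking how a point $P_0$ of $\X$ is transported through the canonical bijection between one-dimensional atoms of $\X$ and elements of $\cat{qSet}(\mathbf 1, \X)$ and back. The only point requiring mild care is being explicit that $`\ulcorner P_0 \urcorner$ names a function $\mathbf 1 \to `\cat{qSet}(\mathbf 1,\X)$ rather than a function $\mathbf 1 \to \X$, which is precisely why the auxiliary notation $\ulcorner\,\cdot\,\urcorner$ was introduced, and to keep straight that $b_\X(X_0) = P_0$ so that the relevant atom of $`\cat{qSet}(\mathbf 1, \X)$ is $\CC_{P_0}$. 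Everything else is a routine composition of full operator spaces between one-dimensional Hilbert spaces.
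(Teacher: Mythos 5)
Your proof is correct and follows essentially the same route as the paper's: unwind the definitions componentwise, use $B_\X^{\inv} = B_\X^\dagger$, identify $P_0$ with $b_\X(X_0)$ for a one-dimensional atom $X_0$, and compute the composite of full operator spaces between one-dimensional Hilbert spaces. The only (harmless) looseness is in saying $B^\dagger$ has a single nonzero component — it has one for each atom of $\X_1$ — but only the component sourced at $\CC_{P_0}$ enters the composition, so your calculation stands.
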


\begin{proof}For visual brevity, write $b = b_\X$ and $B = B_\X$.
Note that $B^{-1} = B^\dagger$ \cite{Kornell18}*{Proposition 4.2}. Let $X_0 \atomof \X_1$ be such that $b_\X(X_0) = P_0$ (Lemma \ref {lem:b is order iso}).
We calculate that for all one-dimensional atoms $X \atomof \X$,
\begin{align*}
(J_1 \circ B^{-1} \circ  `\ulcorner P_0 \urcorner)& (\CC, X)
 = (B^{-1} \circ  `\ulcorner P_0 \urcorner)(\CC, X)
  = (B^\dagger \circ  `\ulcorner P_0 \urcorner)(\CC, X)
\\ & = 
\bigvee_{P \in \cat{qSet}(\mathbf 1, \X)} B^\dagger(\CC_P, X) \cdot `\ulcorner P_0 \urcorner(\CC, \CC_P)
 =
B^\dagger(\CC_{P_0}, X) \cdot L(\CC, \CC_{P_0})
\\ & =
B^\dagger(\CC_{b(X_0)}, X) \cdot L(\CC, \CC_{b(X_0)})
 =
\delta_{X_0, X} \cdot L(\CC_{b(X_0)}, X) \cdot L(\CC, \CC_{b(X_0)})
\\ & =
\delta_{X_0, X} \cdot L(\CC, X)
=
b(X_0)(\CC, X)
=
P_0(\CC, X).
\end{align*}
Of course, $(J_1 \circ B^{-1} \circ  `\ulcorner P_0 \urcorner)(\CC, X) = 0 = P(\CC, X)$ for all atoms $X \atomof \X$ of dimension larger than one. Therefore, $J_1 \circ B^{-1} \circ  `\ulcorner P_0 \urcorner = P_0$, as claimed.
\end{proof}

\begin{theorem}\label{thm:CPO-qCPO adjunction}
The functor $`(-):\CPO\to\qCPO$ is fully faithful, strong monoidal and left adjoint to $\qCPO(\mathbf 1,-):\qCPO\to\CPO$.
\end{theorem}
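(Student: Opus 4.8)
The plan is to establish the three claims --- full faithfulness, strong monoidality, and the adjunction --- essentially independently, reusing the order-isomorphisms $b_\X$ and $B_\X$ from Lemmas \ref{lem:b is order iso} and \ref{lem:B is order iso} together with the corresponding facts already known for $\qPOS$. For the adjunction, I would exhibit a natural bijection $\qCPO(`S,\X)\iso\CPO(S,\qCPO(\mathbf 1,\X))$. Since $\qPOS$ already carries an analogous adjunction $`(-)\dashv\qPOS(\mathbf 1,-)$ (implicit in \cite{KLM20}, or directly verifiable via $b_\X$), and since by Theorem \ref{thm:qCPO has all limits} and Corollary \ref{classical.L} both $`S$ and $\qCPO(\mathbf 1,\X)$ are quantum cpos, respectively ordinary cpos, the work is to check that the bijection restricts correctly, i.e.\ that a monotone function $`S\to\X$ is Scott continuous iff the transposed monotone function $S\to\qCPO(\mathbf 1,\X)$ is Scott continuous. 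Concretely, the counit/unit can be written using $B_\X$: the component at $\X$ of the right adjoint is $\qCPO(\mathbf 1,\X)$, with its cpo structure from Theorem \ref{thm:qCPO enriched over CPO}, and $B_\X$ identifies $(\X_1,R_1)$ with $(`\qCPO(\mathbf 1,\X),`{\below})$ as quantum posets.

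First I would record the \emph{unit}: for an ordinary cpo $S$, define $\eta_S\: S\to\qCPO(\mathbf 1,`S)$ by $s\mapsto {`}\ulcorner s\urcorner$ (using the notation of Lemma \ref{lem:unnaming}); this is an order isomorphism onto $\qCPO(\mathbf 1,`S)$ because every atom of $(`S)_1$ is one-dimensional and $b_{`S}$ of Lemma \ref{lem:b is order iso} is exactly this map, and it is Scott continuous since it is an order isomorphism between cpos. Then I would define, for a Scott continuous $G\: S\to\qCPO(\mathbf 1,\X)$, the transpose $\hat G\: `S\to\X$ as the composite $`S\xrightarrow{`G}{`}\qCPO(\mathbf 1,\X)\xrightarrow{B_\X^{-1}}\X_1\xrightarrow{J_1}\X$. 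Here $`G$ is Scott continuous by Corollary \ref{classical.M}, $B_\X^{-1}$ is Scott continuous by Lemma \ref{lem:order iso is Scott continuous} (it is an order isomorphism), and $J_1\:\X_1\hookrightarrow\X$ is Scott continuous by Corollary \ref{classical.P}; hence $\hat G$ is Scott continuous by Lemma \ref{lem:composition is Scott continuous}. Conversely, given Scott continuous $F\: `S\to\X$, one checks $\ran F\subseteq\X_1$ (each atom of $`S$ is one-dimensional, and the range of a function out of a one-dimensional atom consists of one-dimensional atoms, which can be seen via the dual $*$-homomorphism or by a direct computation on components), so $F$ factors as $J_1\circ\overline F$ with $\overline F$ Scott continuous by Proposition \ref{prop:epi-mono in qCPO}(a), and then $\check F := \qCPO(\mathbf 1,-)$ applied appropriately --- concretely $\check F(s) = F\circ{`}\ulcorner s\urcorner$ postcomposed suitably --- is Scott continuous because $F$ is and because functions out of $\mathbf 1$ are automatically Scott continuous. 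Lemma \ref{lem:unnaming} gives the identity $J_1\circ B_\X^{-1}\circ{`}\ulcorner P_0\urcorner = P_0$, which is precisely the triangle identity one needs to see that $F\mapsto\check F$ and $G\mapsto\hat G$ are mutually inverse, and naturality in both variables is a routine diagram chase using that all maps in sight are natural transposes of the corresponding $\qSet$/$\qPOS$ data.

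For \textbf{full faithfulness}: on hom-cpos, $`(-)\: \CPO(S,T)\to\qCPO(`S,`T)$ is the restriction of the fully faithful $`(-)\:\POS(S,T)\to\qPOS(`S,`T)$ from the theorem cited in Section \ref{sub:quantum posets}. It lands in $\qCPO$ by Corollary \ref{classical.M}. For fullness, given a Scott continuous $F\: `S\to`T$, it is in particular a monotone map, so by the $\qPOS$ result $F = {`}f$ for a unique monotone $f\: S\to T$; then $f$ is Scott continuous because applying the (already-established) adjunction, or equivalently because $\eta_T\circ f = \qCPO(\mathbf 1,F)\circ\eta_S$ exhibits $f$ (up to the order isomorphisms $\eta_S,\eta_T$) as the Scott continuous map $\qCPO(\mathbf 1,F)$ conjugated by order isomorphisms. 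Faithfulness is immediate since $`(-)$ on $\Rel$ is faithful. For \textbf{strong monoidality}: the monoidal unit of $\CPO$ is the one-point cpo, and $`(\text{point}) = \mathbf 1$, the monoidal unit of $\qCPO$ by Theorem \ref{thm:qCPO is monoidal}; the multiplication constraint ${`}S\times{`}T\iso{`}(S\times T)$ is given by the order isomorphism $G$ of Proposition \ref{prop:quote strong monoidal}, which is Scott continuous by Lemma \ref{lem:order iso is Scott continuous}, and whose naturality and coherence (associativity, unit) are inherited from the strong monoidal functor $`(-)\:\POS\to\qPOS$ together with the fact that on underlying quantum sets this is the known strong monoidal $`(-)\:\Set\to\qSet$.

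\textbf{Main obstacle.} The genuinely non-formal content is already packaged in the lemmas of Section \ref{sec:cpos are qcpos}: that $`S$ is a quantum cpo (Corollary \ref{classical.L}), that $`g$ is Scott continuous (Corollary \ref{classical.M}), and above all that $J_1\:\X_1\hookrightarrow\X$ is Scott continuous (Corollary \ref{classical.P}, resting on Theorem \ref{classical.O}). Given those, the remaining difficulty --- and the step I expect to need the most care --- is verifying that the transpose correspondence genuinely respects Scott continuity in \emph{both} directions and is natural, i.e.\ assembling Lemma \ref{lem:unnaming} and the order-isomorphism lemmas into the triangle identities without sign/variance errors; this is bookkeeping rather than a new idea, but it is the place where the proof could go wrong. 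The claim $\ran F\subseteq\X_1$ for $F\:{`}S\to\X$ also deserves an explicit one-line justification (via the dual $*$-homomorphism being unital and $\ell^\infty({`}S)$ being a product of copies of $\CC$), since the rest of the argument funnels through it.
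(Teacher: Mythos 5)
Your proposal is correct and follows essentially the same route as the paper: the same unit $\eta_S(s) = {`}\ulcorner s\urcorner$ built from $b_{`S}$, the same transpose $J_1\circ B_\X^{-1}\circ {`}f$, full faithfulness from the unit being an isomorphism, and the same supporting results (Lemma \ref{lem:unnaming}, Corollaries \ref{classical.L}, \ref{classical.M}, \ref{classical.P}, Proposition \ref{prop:quote strong monoidal}). The only cosmetic difference is that the paper phrases the adjunction via the universal property of the unit, checking uniqueness componentwise, so it never needs the observation $\ran F\subseteq\X_1$ that your hom-set-bijection formulation requires; that observation is nonetheless correct and justified exactly as you indicate.
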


\begin{proof}
It easily follows from the order isomorphism in Proposition \ref{prop:quote strong monoidal} that $`(-)$ is strong monoidal. 
We show that the functors in the statement are adjoint to each other. Moreover, we show that every component of the unit is an isomorphism in $\CPO$, which implies that $`(-)$ is fully faithful by \cite[Theorem IV.3.1]{maclane}.

Fix a cpo $(S, \below)$. If we order the atoms of $`S$ by $\CC_{s_1} \below \CC_{s_2}$ if and only if $`(\below)(\CC_{s_1}, \CC_{s_2})\neq 0$, then $i_S: s \mapsto \CC_s$ becomes an order isomorphism $S \to \At(`S)$, simply by definition of $`(\below)$. The bijection $b_{`S}: \At(`S) \to \cat{qSet}(1,`S)$ is also an order isomorphism by Lemma \ref{lem:b is order iso}, and it is easy to see that $b_{`S}(\CC_s) = ` \ulcorner s \urcorner$ for all $s \in S$. Indeed, $b_{`S}(\CC_s)(\CC,\CC_{s'}) = \delta_{s,s'} \cdot L(\CC, \CC_{s'}) = ` \ulcorner s \urcorner (\CC, \CC_{s'})$ for all $s, s' \in S$. Thus, $\eta_{S} : = b_{S'} \circ i_S$ is an order isomorphism $S \to \cat{qSet}(\mathbf 1, `S) = \cat{qCPO}(\mathbf 1, `S)$ such that $\eta_S(s) = `\ulcorner s \urcorner$ for all $s \in S$.

We show that $\eta_S$ satisfies the universal property of the $S$-component of the unit the claimed adjunction. Let $(\X, R)$ be a quantum cpo, and let $f: S \to \cat{qCPO}(\mathbf 1, \X)$ be a Scott continuous function. We show that there exists a unique Scott continuous function $F\:`S \to X$ making the following diagram commute:
$$
\begin{tikzcd}
S
\arrow{r}{\eta_S}
\arrow{rd}[swap]{f}
&
\cat{qCPO}(\mathbf 1, `S)
\arrow{d}{F \circ(-)}
\\
&
\cat{qCPO}(\mathbf 1, \X).
\end{tikzcd}
$$
Let $F = J_1 \circ B_\X^{-1} \circ `f$. For each $s \in S$, we calculate that $$J_1 \circ B_\X^{-1} \circ `f \circ \eta_S(s) = J_1 \circ B_\X^{-1} \circ `f \circ `\ulcorner s \urcorner = J_1 \circ B_\X^{-1} \circ `\ulcorner f(s) \urcorner = f(s),$$ applying Lemma \ref{lem:unnaming} in the last step. Thus, $F\circ(-)$ makes the diagram commute. If $F'$ also makes the diagram commute, then we quickly find that $(F \circ(-)) = (F' \circ(-))\: \cat{qCPO}(\mathbf 1, `S) \to \cat{qCPO}(\mathbf 1, \X)$, because $\eta_S$ is an isomorphism. This implies that $F = F'$ because for all $s \in S$ and all $X \atomof \X$, we have that
\begin{align*}
F'(\CC_s, X) \cdot L(\CC, \CC_s) & = \bigvee_{s' \in S} F'(\CC_{s'}, X) \cdot `\ulcorner s \urcorner (\CC, \CC_{s'})
= (F' \circ ` \ulcorner s \urcorner) \\ & = (F \circ ` \ulcorner s \urcorner) = \bigvee_{s' \in S} F(\CC_{s'}, X) \cdot `\ulcorner s \urcorner (\CC, \CC_{s'}) = F(\CC_s, X) \cdot L(\CC, \CC_s).
\end{align*}

Therefore, the order isomorphisms $\eta_S$, for all cpos $S$, together form the unit of an adjuction between the functor $\qCPO(\mathbf 1,-):\qCPO\to\CPO$ and a functor $\CPO \to \qCPO$ that takes each cpo $S$ to the quantum cpo $`S$. To show that this left adjoint is $`(-):\CPO\to\qCPO$, as claimed, it is enough to show that the unit is a natural transformation from the identity functor on $\cat{CPO}$ to $\cat{qCPO}(\mathbf 1, `(\,-\,))$. Let $S_1$ and $S_2$ be cpos, and let $f$ be a Scott continuous function $S_1 \to S_2$. For all $s_1 \in S_1$, we calculate that $\cat{qCPO}(\mathbf 1, `f)(\eta_{S_1})(s_1) = \cat{qCPO}(\mathbf 1, `f)(`\ulcorner s_1 \urcorner) = `f \circ (`\ulcorner s_1 \urcorner) = ` \ulcorner f(s_1) \urcorner = \eta_{S_2} (f(s_1))$. We conclude that the unit is indeed a natural transformation from the identity functor on $\cat{CPO}$ to $\cat{qCPO}(\mathbf 1, `(\,-\,))$, completing the proof.
\end{proof}

\section{The Lift Monad}\label{sec:lifting}
Programs in simply-typed languages without a recursion operator always terminate. If a language has a recursion operator, there is the possibility of nonterminating programs, hence it is desirable that nontermination can be expressed within a denotational model for such a language. In the denotational semantics of ordinary programming languages, the effect of nontermination is often modeled by the ``lift monad" $(-)_\perp$ on $\mathbf{CPO}$, which adds a least element to each cpo. In this section, we prove that there is a lift monad on $\qCPO$, also denoted by $(-)_\perp$, which generalizes the lift monad on $\CPO$. Moreover, we show that $(-)_\perp$ is a \emph{symmetric monoidal monad}, which is a necessary and sufficient condition to conclude that the Kleisli category of the monad is symmetric monoidal. Showing that a monad $T$ is symmetric monoidal amounts to showing the existence of a \emph{double strength map} $TX \otimes TY \to T(X\otimes Y)$ and proving its related diagrams commute. Verifying the commutativity of these diagrams is tedious, but the verification of some of them can be circumvented by applying known results. 

We begin by sketching our approach in terms of the classical versions of our categories. If $X$ is an ordinary cpo, then the underlying set of $X_\perp$ is the coproduct $X\uplus 1$ of $X$ with the terminal object $1$ in $\Set$. In a category $\CCC$ with coproducts and a terminal object $1$, if the endofunctor $X\mapsto X\uplus 1$ forms a monad, it is called the \emph{maybe monad}. For example, it is well-known that this endofunctor is a monad on $\Set$ whose Kleisli category is equivalent to the category of sets and partial functions, which in turn is well-known to be symmetric monoidal. Hence we can conclude that the maybe monad on $\Set$ is symmetric monoidal without checking any diagram. Although the lift monad on $\CPO$ and the maybe monad on $\CPO$ differ, the action of both monads on the underlying set of a cpo coincides with the maybe monad on $\Set$, so the diagrams for a symmetric monoidal monad must commute in both cases. Therefore, the lift monad on $\CPO$ must be symmetric monoidal. Following this approach, we will first show that the maybe monad on $\qSet$ is well defined and symmetric monoidal, and then use the above reasoning to conclude the lift monad on $\qCPO$ is symmetric monoidal, too.

\subsection{The maybe monad on $\qSet$}

We fix some terminology. If $T$ is a monad on a category $\CCC$ with unit $\eta$ and multiplication $\mu$, then we denote its Kleisli category by $\CCC_T$. The objects of $\CCC_T$ coincide with the objects of $\CCC$, and for objects $X,Y\in\CCC_T$, we have $\CCC_T(X,Y)=\CCC(X,TY)$. For a morphism $f$ in this homset we write $f:X\circlearrow Y$ if we regard it as a morphism in $\CCC_T$ and $f:X\to TY$ if we regard it as a morphism in $\CCC$. Composition in $\CCC_T$ is denoted by $\bullet$, so $g\bullet f=\mu_Z\circ Tg\circ f$ for $f:X\circlearrow Y$ and $g:Y\circlearrow Z$ in $\CCC_T$, i.e., for $f:X\to TY$ and $g:Y\to TZ$ in $\CCC$.

If $\CCC$ is equipped with symmetric monoidal product $\otimes$, we say that $T$ is a \emph{symmetric monoidal} monad if for each $X,Y\in\CCC$ there exists a \emph{double strength}  $k_{X,Y}:TX\otimes TY\to T(X,Y)$ natural in $X$ and $Y$, satisfying conditions (1)-(5) in \cite[1.2]{Seal13}. Equivalently, $T$ is symmetric monoidal if and only if $\CCC_T$ can be equipped with a symmetric monoidal product such that the canonical functor $\CCC\to\CCC_T$ is strictly monoidal \cite[Proposition 1.2.2]{Seal13}. 

%We aim to show that the endofunctor $(-)\uplus\mathbf 1$ on $\qSet$ is a symmetric monoidal monad, which we also will call the maybe monad. Instrumental in showing its existence and its properties is the category $\qPar$ of quantum sets and partial functions, i.e., relations $F:\X\to\Y$ such that $F\circ F^\dag\leq I_\Y$. It is shown in \cite[Theorem 7.6]{Kornell18} that we have a functor $\ell^\infty:\qPar\to(\Mstar0)^\op$ that yields a duality between $\qPar$ and the category $\Mstar0$ of hereditarily atomic von Neumann algebras and normal (possibly non-unital) $*$-homomorphisms. This category contains the category $\Mstar1$ of hereditarily atomic von Neumann algebras and normal unital $*$-homomorphisms as a wide subcategory. In the same theorem it is shown that the functor $\ell^\infty$ restricts and corestricts to a functor $\qSet\to(\Mstar1)^\op$ that yields a duality between $\qSet$ and $\Mstar1$. 

\begin{theorem}\label{thm:maybe monad on qSet}
The endofunctor $\M$ on $\qSet$ given by $\X\mapsto\X\uplus\mathbf 1$ is a symmetric monoidal monad. More specifically:
\begin{itemize}
    \item The $\X$-component of the unit $H$ of $\M$ is $J_\X:\X\to\X\uplus\mathbf 1$.
    \item Given $\X\in\qSet$, if we write $\M\X=\X\uplus\mathbf 1_1$ and $\M^2\X=\X\uplus\mathbf 1_1\uplus\mathbf 1_2$, where $\mathbf 1_i=\Q\{\CC_i\}$, then the $\X$-component $M_\X:\M^2\X\to\M\X$ of the multiplication of $\M$ is given by
\[M_\X(X,X')=\begin{cases}
L(\CC_2,\CC_1), & X=\CC_2, X'=\CC_1,\\
\CC 1_X, & X=X',\\
0, & \text{otherwise}
\end{cases}\]
for each $X\atomof\M^2\X$ and each $X'\atomof\M\X$.
\item For each $\X,\Y\in\qSet$ the double strength (also called the \emph{structural constraint}) $K_{\X,\Y}:\M\X\times\M\Y\to\M(\X\times\Y)$ is given for each $X\atomof\M\X$, $Y\atomof\M\Y$ and each $Z\atomof\M(\X\times\Y)$ by 
\begin{equation} 
K_{\X,\Y}(X\otimes Y,Z) = \begin{cases} \CC 1_{X\otimes Y}, & X\neq \CC_1\neq  Y\text{ and }Z=X\otimes Y,\\
L(X\otimes Y,Z), & Z=\CC_1 \text{ and }(X=\CC_1\text{ or }Y=\CC_1),\\
0, & \text{otherwise.}
\end{cases}
\end{equation}
\end{itemize} 
\end{theorem}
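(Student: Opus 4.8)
The plan is to transport the entire statement across the duality $\ell^\infty : \qSet \to (\Mstar1)^{\op}$ of \cite[Theorem 7.6]{Kornell18}, reducing everything to a statement about hereditarily atomic von Neumann algebras, where the maybe monad becomes the well-studied ``$(-) \times \CC$'' construction. Concretely, under $\ell^\infty$ the quantum set $\X \uplus \mathbf 1$ corresponds to the algebra $\ell^\infty(\X) \times \CC$, the unit $J_\X$ corresponds to the projection $\ell^\infty(\X) \times \CC \to \ell^\infty(\X)$, and the multiplication $M_\X$ corresponds to the map $\ell^\infty(\X) \times \CC \to \ell^\infty(\X) \times \CC \times \CC$ picking out the two copies of $\CC$ appropriately; one checks these are indeed normal (possibly non-unital) $*$-homomorphisms, so they live in $\Mstar0$ and restrict to unital ones exactly where claimed. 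The first block of work is to verify the monad laws for $\M$; I would do this by the observation already flagged in the preamble --- namely, that on the level of underlying objects this is the maybe monad, whose Kleisli category in $\Set$ is the category of sets and partial functions. Here the analogous statement is cleaner: the Kleisli category of $\M$ should be \emph{equivalent to $\qPar$}, via the functor sending a partial function $F : \X \to \Y$ to the total function $\X \to \Y \uplus \mathbf 1$ that ``fills in the undefined part'' with $\mathbf 1$. So I would first establish this equivalence $\qSet_\M \simeq \qPar$ explicitly (identifying $\eta$, $\mu$, and Kleisli composition with the composition in $\qPar$), which simultaneously proves the monad laws and identifies $\eta_\X = J_\X$ and $M_\X$ as stated.

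\textbf{The symmetric monoidal structure.} For the double strength $K_{\X,\Y}$, the key is again the equivalence $\qSet_\M \simeq \qPar$: the category $\qPar$ is symmetric monoidal under the restriction of $\times$ (a partial function $F_1 \times F_2$ is defined on a pair exactly when both components are), and this monoidal structure is compatible with the total-function inclusion $\qSet \hookrightarrow \qPar$ which is strict monoidal. By \cite[Proposition 1.2.2]{Seal13}, a monad is symmetric monoidal iff its Kleisli category carries a symmetric monoidal structure making the canonical functor $\qSet \to \qSet_\M$ strict monoidal; transporting the monoidal structure of $\qPar$ across the equivalence $\qSet_\M \simeq \qPar$ does exactly this. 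Hence $\M$ is a symmetric monoidal monad \emph{without} checking the five coherence diagrams of \cite[1.2]{Seal13} by hand. It then remains only to compute that the double strength arising this way has the claimed components: one chases the definition of $\times$ on partial functions through the equivalence and reads off that on a pair of atoms $X \otimes Y$ the resulting total function is the identity when both $X \neq \CC$ and $Y \neq \CC$ (the pair is ``defined''), and lands in the adjoined point $\mathbf 1$ otherwise. That the monoidal structure on $\qPar$ is symmetric monoidal and that $\qPar \to (\Mstar0)^{\op}$ is monoidal can be cited from or easily deduced from \cite{Kornell18}, since $\times$ on $\qPar$ dualizes to the spatial tensor product of hereditarily atomic von Neumann algebras restricted to non-unital $*$-homomorphisms.

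\textbf{Main obstacle.} The conceptual route above is short, but the genuine work --- and the step I expect to be the main obstacle --- is establishing the equivalence $\qSet_\M \simeq \qPar$ rigorously, i.e.\ verifying that the assignment $F \mapsto \widehat F$ (fill in with $\mathbf 1$) is functorial with respect to Kleisli composition. This requires computing $\mu_\Z \circ \M(\widehat G) \circ \widehat F$ componentwise and matching it with $\widehat{G \circ F}$, where the partial-function composition $G \circ F$ in $\qPar$ is $\bigvee_{Y} G(Y,Z)\cdot F(X,Y)$; the bookkeeping involving the adjoined atoms $\CC_1, \CC_2$ of $\M^2\Z$ is where sign-of-error risk is highest. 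An alternative that sidesteps some of this is to work entirely on the dual side: show that $A \mapsto A \times \CC$ with the stated structure maps is a monad on $\Mstar1$ with Kleisli category $\Mstar0$, which may be citable more directly, and then dualize once at the end. Either way, once the equivalence with $\qPar$ (resp.\ $\Mstar0$) is in hand, the monad axioms, the formulas for $\eta$ and $M$, the existence of the double strength, and its formula all follow, and the only remaining routine check is the explicit component computation of $K_{\X,\Y}$.
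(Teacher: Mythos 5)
Your proposal is correct and follows essentially the same route as the paper: identify the Kleisli category of $\M$ with $\qPar$, invoke \cite[Proposition 1.2.2]{Seal13} to obtain the symmetric monoidal monad structure from the strict monoidal inclusion $\qSet\hookrightarrow\qPar$ without checking the coherence diagrams, and read off the components of the double strength at the end. The one step you flag as the main obstacle --- directly verifying that $F\mapsto\widehat F$ respects Kleisli composition, with its bookkeeping of the adjoined atoms --- is handled in the paper exactly by the alternative you sketch: the adjunction is first established on the dual side (a left adjoint $L$ to the inclusion $\Mstar1\to\Mstar0$ with $L(M)=M\oplus\CC$, verified via the triangle identities), then dualized to an adjunction $\F\dashv\G$ between $\qSet$ and $\qPar$ whose induced monad is $\M=(-)\uplus\mathbf 1$, so the monad laws and the formulas for $H$ and $M$ come for free; the comparison functor $\L:\qSet_\M\to\qPar$ of Mac Lane's Theorem VI.5.1 is then automatically well defined and functorial, and is an equivalence because $\F$ is bijective on objects. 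The only componentwise computation that remains is the final check that $J_{\X\times\Y}^\dag\circ K_{\X,\Y}=J_\X^\dag\times J_\Y^\dag$, which identifies $K_{\X,\Y}$ with $I_{\M\X}\odot I_{\M\Y}$, just as you anticipate.
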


\begin{proof}
It follows from \cite[Example 5.1.4]{Riehl} that $\M$ is a monad on $\qSet$. This monad is obtained from an adjunction as follows. First, let $\qSet_*$ be the coslice category under $\mathbb 1$, whose objects are pairs $(\X,A)$ consisting of a quantum set $\X$ and a function $A:\mathbf 1\to\X$. Morphisms $F:(\X,A)\to(\Y,B)$ are functions $F:\X\to\Y$ such that $F\circ A=B$. Let $\G:\qSet_*\to\qSet$ be the forgetful functor $(\X,A)\mapsto \X$. Then $\G$ is right adjoint to the functor $\F:\qSet\to\qSet_*$ that acts on objects by $\X\mapsto (\X\uplus \mathbf 1,J_\mathbf 1)$ and on morphisms by $F\mapsto F\uplus I_\mathbf 1$. The $\X$-component of the unit $H$ of the adjunction $\F\dashv\G$ is $J_\X:\X\to\X\uplus\mathbf 1$, and the $(\X,A)$-component of the counit $E$ is the morphism $[I_\X,A]:(\X\uplus \mathbf 1,J_\mathbf 1^{\X\uplus\mathbf 1})\to(\X,A)$. It follows that the $\X$-component $M_\X=\G E_{\F\X}$ of the multiplication of the induced monad $\M$ is the function $[I_{\X\uplus\mathbf 1_1},J_{\mathbf 1_2}]:\X\uplus\mathbf 1_1\uplus\mathbf 1_2\to\X\uplus\mathbf 1_1$, which is precisely the expression for $M_\X$ in the statement. 

We thank the anonymous reviewer for the observation that the maybe monad on a semicartesian closed category with coproducts is automatically a symmetric monoidal monad, so that we can omit the direct verification of the axioms for a double strength in \cite[1.2]{Seal13}, which is straightforward but tedious.
We sketch how the semicartesian closed structure yields the double strength $K$.
Since $\qSet$ is symmetric monoidal closed, the monoidal product $\times$ preserves coproducts, whence for quantum sets $\X$ and $\Y$ there exists a bijection \[D_{\X,\Y}:(\X\uplus\mathbf 1)\times(\Y\uplus \mathbf 1)\to(\X\times\Y)\uplus(\X\times\mathbf 1)\uplus (\mathbf 1\times\Y)\uplus(\mathbf 1\times\mathbf 1).\] Clearly, the domain and codomain of $D_{\X,\Y}$ have the same atoms, namely $X\otimes Y$, $X\otimes\mathbb C$, $\mathbb C\otimes Y$ and $\mathbb C\otimes\mathbb C$ for $X\atomof\X$ and $Y\atomof\Y$. Hence, clearly $D_{\X,\Y}$ is the `identity'. Now, we obtain $K_{\X,\Y}$ as the composition $\big(I_{\X\times \Y}\uplus (L_\mathbf 1\circ [ !_\X\times I_\mathbf 1,I_\mathbf 1\times !_\Y,I_{\mathbf 1\times\mathbf 1}])\big)\circ D_{\X,\Y}.$ 
\end{proof}

\subsection{The lift monad on $\qCPO$}

Next we describe how to lift a given quantum poset $(\X,R)$ to pointed posets, which will be achieved by equipping $\M\X$ with an appropriate order. We will denote the added one-dimensional atom in $\M\X$ by $\CC_\perp$ instead of $\CC$ in order to emphasize that it will function as a least element (it  corresponds to an element since it is one-dimensional), hence we will write $\M\X=\X\uplus\Q\{\CC_\perp\}$. 
We note that if $\X$ already has an atom $\CC_\perp$ (for instance because it is the result of a previous lifting operation), we relabel that atom, and call the newly added atom $\CC_\perp$. This is also the common practice with ordinary posets, where the added bottom element is usually denoted by $\perp$, even when the given poset already had a bottom element, which then typically is renamed. 

\begin{proposition}\label{prop:lift functor qPOS}
Let $(\X,R)$ be a quantum poset. Define $(\X,R)_\perp=(\M\X,R_\perp)$, where 
\[R_\perp(X,Y)=\begin{cases}
R(X,Y), & X,Y\atomof\X;\\
L(X,Y), & X=\CC_\perp;\\
0, & X\atomof\X,Y=\CC_\perp. 
\end{cases}\]
Then $(\X,R)_\perp$ is a quantum poset, and $J_\X:(\X,R)\to(\X,R)_\perp$ is an order embedding such that
\begin{equation}\label{eq:Jorderembedding}
    R_\perp\circ J_\X=J_\X\circ R.
\end{equation}
\end{proposition}
\begin{proof}
We first show that $R_\perp$ is an order on $\M\X$. Let $X,Y\atomof\M\X$. In order to check that $I_{\M\X}(X,Y)\leq R_\perp(X,Y)$, we only have to check the case that $X=Y$, since $I_\X(X,Y)=0$ if $X\neq Y$. First assume $X\neq \CC_\perp$. Then  $I_{\M\X}(X,X)=\CC 1_X=I_\X(X,X)\leq R(X,X)=R_\perp(X,X).$ If $X=\CC_\perp$, then 
$I_{\M\X}(X,X)= \CC 1_X\leq L(\CC_\perp,\CC_\perp)=R_\perp(X,X).$ We conclude that $I_{\M\X}\leq R_\perp$.
Next, we check that $R_\perp\circ R_\perp\leq R_\perp$. For $X,Y\atomof\X$, we have $R_\perp(X,\CC_\perp)=0$, hence 
\begin{align*}
(R_\perp\circ R_\perp)(X,Y) & =R_\perp(\CC_\perp,Y)\cdot R_\perp(X,\CC_\perp)\vee \bigvee_{Z\atomof\X}R(Z,Y)\cdot R(X,Z)\\
& =\bigvee_{Z\atomof\X}R(Z,Y)\cdot R(X,Z)=(R\circ R)(X,Y)\leq R(X,Y).
\end{align*}
For $X=\CC_\perp$, we have
$(R_\perp\circ R_\perp)(X,Y)  \leq L(X,Y)=R_\perp(X,Y).$
For $Y=\CC_\perp$, and $X\atomof\X$, we recall that $R_\perp(Z,Y)=0$ for each $Z\in\X$, hence
\begin{align*}
    (R_\perp\circ R_\perp)(X,Y) & =R_\perp(X,\CC_\perp)=\bigvee_{Z\atomof\X_\perp}R_\perp(Z,\CC_\perp)\cdot R_\perp(X,Z)=R_\perp(\CC_\perp,\CC_\perp)\cdot R_\perp(X,\CC_\perp)\\
    & = L(\CC_\perp,\CC_\perp)\cdot L(X,\CC_\perp)=L(X,\CC_\perp)=R_\perp(X,Y).
\end{align*}
We conclude that $R_\perp\circ R_\perp\leq R_\perp$.
Next, we check that $R_\perp\wedge R_\perp^\dag=I_{\M\X}$. We note that for each $X,Y\atomof\M\X$, we have 
$(R_\perp\wedge R_\perp^\dag)(X,Y) = R_\perp(X,Y)\wedge R_\perp^\dag(X,Y) = R_\perp(X,Y)\wedge R_\perp(Y,X)^\dag,$
hence for $X,Y\atomof\X$, we obtain $(R_\perp\wedge R_\perp^\dag)(X,Y) = R(X,Y)\wedge R(Y,X)^\dag= R(X,Y)\wedge R^\dag(X,Y) = (R\wedge R^\dag)(X,Y) =I_\X(X,Y)=\delta_{X,Y}\CC 1_X=I_{\M\X}(X,Y).$ 
For $X=\CC_\perp$, we have $R_\perp(Y,X)=L(Y,X)$ if $Y=\CC_\perp$, so if $Y=X$, and $R_\perp(Y,X)=0$ otherwise, hence $(R_\perp\wedge R_\perp^\dag)(X,Y)= L(X,Y)\wedge R_\perp(Y,X)^\dag=R_\perp(Y,X)^\dag = \delta_{X,Y}L(Y,X)^\dag =\delta_{X,Y}L(X,Y)=\delta_{X,Y}\CC 1_X=I_{\M\X}(X,Y).$
If $Y=\CC_\perp$, and $X\atomof\X$, then
$(R_\perp\wedge R_\perp^\dag)(X,Y) = 0\wedge L(Y,X)^\dag=0=I_{\M\X}(X,Y).$
Thus $R_\perp\wedge R_\perp^\dag=I_{\M\X}$, and we conclude that $R_\perp$ is an order on $\M\X$. Finally, for each $X\atomof\X$ and $Y\atomof\M\X$, we have \begin{align*}
    (R_\perp\circ J_\X)(X,Y) & =R_\perp(X,Y)=\begin{cases}
        R(X,Y), & Y\atomof\M\X,\\
        0, & Y=\CC_\perp
    \end{cases}\\
    & = J_\X(X,Y)\cdot R(X,Y)=\bigvee_{Z\atomof\X}J_\X(Z,Y)\cdot R(X,Z)=(J_\X\circ R)(X,Y),
\end{align*}
so (\ref{eq:Jorderembedding}) holds, which yields $J_\X^\dag\circ R_\perp\circ J_\X=J^\dag_\X\circ J_\X\circ R=R$, using that $J_\X$ is injective. Thus, $J_\X$ is an order embedding.
\end{proof}

If $(\X,R)$ is a quantum poset, we will sometimes write $\X_\perp$ instead of $\M\X$.

\begin{proposition}\label{prop:lift functor qPOS2}
We obtain an endofunctor $(-)_\perp:\qPOS\to\qPOS$ if for each monotone map $F:(\X,R)\to(\Y,S)$ between quantum posets $(\X,R)$ and $(\Y,S)$ we define $F_\perp:(\X,R)_\perp\to(\Y,S)_\perp$ by $F_\perp=\M F$,  
which satisfies 
\begin{equation}\label{eq:lift is natural}
 F_\perp\circ J_\X=J_\Y\circ F.
 \end{equation}
\end{proposition}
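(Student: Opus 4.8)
The statement has two parts: first, that $F_\perp := \M F$ is a monotone map $(\X,R)_\perp \to (\Y,S)_\perp$ whenever $F\colon (\X,R)\to(\Y,S)$ is monotone; second, that the assignment $(\X,R)\mapsto (\X,R)_\perp$, $F\mapsto F_\perp$ is functorial; and third, the naturality identity $F_\perp\circ J_\X = J_\Y\circ F$. The functoriality is essentially free: $\M$ is already a functor on $\qSet$ by Theorem~\ref{thm:maybe monad on qSet}, so $\M(G\circ F) = \M G\circ \M F$ and $\M(I_\X) = I_{\M\X}$ hold automatically; the only thing that needs checking is that $\M F$, which is a priori a function $\M\X\to\M\Y$ in $\qSet$, is actually \emph{monotone} for the orders $R_\perp$ and $S_\perp$ defined in Proposition~\ref{prop:lift functor qPOS}. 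So the real content is the monotonicity claim, and the naturality identity, which I would dispatch last since it is a direct component computation.

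\textbf{Monotonicity of $F_\perp$.} I would work directly with the defining inequality for monotonicity, namely $F_\perp\circ R_\perp \leq S_\perp\circ F_\perp$. Since $F_\perp = \M F = F\uplus I_{\Q\{\CC_\perp\}}$, its only nonzero components are $F_\perp(X,Y) = F(X,Y)$ for $X\atomof\X$, $Y\atomof\Y$, and $F_\perp(\CC_\perp,\CC_\perp) = \CC 1_{\CC_\perp}$. I would then compute both sides componentwise, splitting into the cases determined by whether each relevant atom lies in $\X$ (resp.\ $\Y$) or is the new bottom atom $\CC_\perp$. For the ``interior'' case where all atoms come from $\X$ and $\Y$, the inequality $(F_\perp\circ R_\perp)(X,Y) = (F\circ R)(X,Y)$ and $(S_\perp\circ F_\perp)(X,Y) = (S\circ F)(X,Y)$ reduce exactly to the monotonicity of $F$, i.e.\ $F\circ R\leq S\circ F$, which is the hypothesis. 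For the cases involving $\CC_\perp$: when the source atom is $\CC_\perp$, one uses that $R_\perp(\CC_\perp, -) = L(\CC_\perp, -)$ is maximal and similarly $S_\perp(\CC_\perp, -)$ is maximal, so the inequality holds trivially on that row; when the target atom is $\CC_\perp$, one uses that $R_\perp(X,\CC_\perp) = 0 = S_\perp(Y,\CC_\perp)$ for $X\atomof\X$, $Y\atomof\Y$, so both sides vanish except possibly on the $(\CC_\perp,\CC_\perp)$ entry, where again both reduce to $\CC 1_{\CC_\perp}$. Collecting these cases gives $F_\perp\circ R_\perp\leq S_\perp\circ F_\perp$.

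\textbf{Naturality.} Finally I would verify $F_\perp\circ J_\X = J_\Y\circ F$ by a short component check: $J_\X\colon \X\to\M\X$ has components $J_\X(X,X) = \CC 1_X$ for $X\atomof\X$ and vanishes otherwise, and $J_\Y$ likewise; composing, $(F_\perp\circ J_\X)(X,Y') = \bigvee_{X'\atomof\M\X} F_\perp(X',Y')\cdot J_\X(X,X')$ for $X\atomof\X$, $Y'\atomof\M\Y$, which collapses to $F_\perp(X,Y') = F(X,Y')$ when $Y'\atomof\Y$ and to $0$ when $Y' = \CC_\perp$, matching $(J_\Y\circ F)(X,Y')$ exactly. (Alternatively, this is just the statement that $J$ is the unit $H$ of the monad $\M$ restricted to $\qPOS$, which is natural by Theorem~\ref{thm:maybe monad on qSet}, but I would include the direct check for self-containedness.)

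\textbf{Main obstacle.} There is no deep obstacle here; the proof is a bookkeeping exercise in componentwise case analysis, and the ``hard'' part is merely being careful about the case where the new atom $\CC_\perp$ appears in the source versus the target, since $R_\perp$ is deliberately asymmetric ($\CC_\perp$ is below everything, nothing nonzero is below a genuine atom into $\CC_\perp$). The one point worth stating explicitly is that $\M F$ really does preserve the relabelling convention for $\CC_\perp$ — i.e.\ that the bottom atom of $\M\X$ is sent to the bottom atom of $\M\Y$ — which is immediate from the definition $\M F = F\uplus I_{\Q\{\CC_\perp\}}$ but should be remarked on so that $R_\perp$ and $S_\perp$ are compared along the correct identification.
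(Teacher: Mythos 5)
Your proof is correct and follows essentially the same route as the paper: the monotonicity of $F_\perp$ is established by the same componentwise case analysis according to whether the atoms involved lie in $\X$, $\Y$ or equal $\CC_\perp$ (with the interior case reducing to $F\circ R\leq S\circ F$ and the $\CC_\perp$-row/column cases handled exactly as you describe), and the naturality identity $F_\perp\circ J_\X=J_\Y\circ F$ is the same direct component computation. The only, harmless, divergence is that you inherit functoriality from the already-established functoriality of $\M$ on $\qSet$ (Theorem \ref{thm:maybe monad on qSet}), whereas the paper re-verifies $(I_\X)_\perp=I_{\X_\perp}$ and $(G\circ F)_\perp=G_\perp\circ F_\perp$ by explicit componentwise computation; your shortcut is legitimate since that theorem precedes this proposition.
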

\begin{proof}
 We check that $F_\perp$ is monotone, for which we use the monotonicity of $F$, i.e., $R\circ F\leq F\circ S$. Then for $X\atomof\X_\perp$ and $Y\atomof\Y_\perp$, we have: 
 \begin{align*}
    (F_\perp\circ R_\perp)(X,Y) & = \bigvee_{Z\atomof\X_\perp}F_\perp(Z,Y)\cdot R_{\perp}(X,Z)\\
    & = \begin{cases}
    \bigvee_{Z\atomof\X}F(Z,Y)\cdot R(X,Z), & X\atomof\X,Y\atomof\Y,\\
       \bigvee_{Z\atomof\X}F(Z,Y)\cdot L(\CC_\perp,Z), & X=\CC_\perp,Y\atomof\Y\\
  L(\CC_\perp,\CC_\perp)\cdot 0, & X\atomof\X,Y=\CC_\perp \\ 
    L(\CC_\perp,\CC_\perp)\cdot L(\CC_\perp,\CC_\perp), & X=\CC_\perp, Y=\CC_\perp  \end{cases}\\
        & = \begin{cases}
    (F\circ R)(X,Y), & X\atomof\X,Y\atomof\Y,\\
       \bigvee_{Z\atomof\X}F(Z,Y)\cdot L(\CC_\perp,Z), & X=\CC_\perp,Y\atomof\Y\\
  0, & X\atomof\X,Y=\CC_\perp \\ 
    L(\CC_\perp,\CC_\perp), & X=\CC_\perp, Y=\CC_\perp  \end{cases}\\
    &  \leq \begin{cases}
    (S\circ F)(X,Y), & X\atomof\X,Y\atomof\Y,\\
    L(\CC_\perp,Y) , & X=\CC_\perp,Y\atomof\Y\\
    0, & X\atomof\X,Y=\CC_\perp\\
     L(\CC_\perp,\CC_\perp), & X=\CC_\perp,Y=\CC_\perp
    \end{cases}\\
    & = \begin{cases}
    \bigvee_{Z\atomof\Y}S(Z,Y)\cdot F(X,Z), & X\atomof\X,Y\atomof\Y,\\
    L(\CC_\perp,Y)\cdot L(\CC_\perp,\CC_\perp) , & X=\CC_\perp,Y\atomof\Y\\
    L(\CC_\perp,\CC_\perp)\cdot 0, & X\atomof\X,Y=\CC_\perp\\
      L(\CC_\perp,\CC_\perp)\cdot L(\CC_\perp,\CC_\perp), & X=\CC_\perp,Y=\CC_\perp
    \end{cases}\\
    & = \bigvee_{Z\atomof\M\Y} S_\perp(Z,Y)\cdot F_\perp(X,Z)\\
    & = (S_\perp\circ F_\perp)(X,Y)
\end{align*}
Let $X\atomof\X$ and $Y\atomof\Y_\perp$. By \cite[Lemma~A.5]{KLM20}, we have
\[(F_\perp\circ J_\X)(X,Y)=F_\perp(X,Y)=\begin{cases}
F(X,Y), & Y\atomof\Y,\\
0, & Y=\CC_\perp,
\end{cases}\]
whereas
\[  (J_\Y\circ F)(X,Y) = \bigvee_{Z\atomof\Y}J_\Y(Z,Y)\cdot F(X,Z)=\begin{cases}
F(X,Y), & Y\atomof\Y,\\
0, & Y=\CC_\perp,\end{cases},\]
thus (\ref{eq:lift is natural}) holds.

Next, we show that $(-)_\perp$ is a functor.
Consider the identity $I_\X:(\X,R)\to(\X,R)$. Then
\begin{align*}(I_\X)_\perp(X,Y) & =\begin{cases}
I_\X(X,Y), &  X,Y\atomof\X,\\
L(X,Y), & X=Y=\CC_\perp,\\
0, & \textrm{otherwise},
\end{cases}\\
& = \begin{cases}
\delta_{X,Y}\CC 1_X, & X,Y\atomof\X,\\
\CC 1_X, &  X=Y=\CC_\perp,\\
0, & \mathrm{otherwise}
\end{cases}\\
& = \delta_{X,Y}\CC 1_X = I_{\X_\perp}(X,Y).
\end{align*}
If $(\Z,T)$ is another quantum poset, and $G:\Y\to\Z$ is monotone, then
\begin{align*}
    (G_\perp\circ F_\perp)(X,Z)& =\bigvee_{Y\atomof\M\Y}G_\perp(Y,Z)\cdot F_\perp(X,Y)\\
    & = \begin{cases}
    \bigvee_{Y\atomof\Y}G(Y,Z)\cdot F(X,Y), & X\atomof\X,Z\atomof\Z\\
    L(X,Y)\cdot L(Y,Z), & X=\CC_\perp,Z=\CC_\Z,\\
    0, & \textrm{otherwise}.
    \end{cases}\\
    & =
    \begin{cases}
    (G\circ F)(X,Z), & X\atomof\X,Z\atomof\Z\\
    L(X,Z), & X=\CC_\perp,Z=\CC_\perp,\\
    0, & \textrm{otherwise}.
    \end{cases}\\
    & =    (G\circ F)_\perp(X,Z).
\end{align*}
We conclude that $(-)_\perp$ is indeed a functor.
\end{proof}

\begin{lemma}\label{lem:lift lemma}
Let $(\X,R)$ be a quantum poset, and for an atomic quantum set $\H$, let
\begin{equation}\label{eq:lift sequence1}E_1\sqsubseteq E_{2}\sqsubseteq E_{3}\sqsubseteq\cdots:\H\to\X_\perp
\end{equation}
be a monotone sequence. Then there exists a $k\in\NN$ and a decomposition $D:\H\to\Y\uplus\Z$ (cf. Definition \ref{def:decomposition}) such that there are functions 
\begin{equation}\label{eq:lift sequence}F_k\sqsubseteq F_{k+1}\sqsubseteq F_{k+2}\sqsubseteq\cdots:\Y\to\X,
\end{equation}
and a function $G:\Z\to\Q(\CC_\perp)$ such that $E_n=(F_n\uplus G)\circ D$ for each $n\geq k$.

If $(\X,R)$ is a quantum cpo, then so is $(\X_\perp,R_\perp)$, the embedding $J_\X:\X\to\X_\perp$ is Scott continuous, and the limits $E_\infty$ and $F_\infty$ of the sequences in (\ref{eq:lift sequence1}) and (\ref{eq:lift sequence}), respectively, are related via $E_\infty=(F_\infty\uplus G)\circ D$.
\end{lemma}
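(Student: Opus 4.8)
The plan is to exploit the structure of $\X_\perp$ as a coproduct $\X \uplus \Q\{\CC_\perp\}$ together with the decomposition machinery of Lemma \ref{lem:decomposition} and the coproduct analysis of Lemma \ref{lem:coproduct of qposets decomposition}. First, I would apply Lemma \ref{lem:decomposition} to each $E_n \: \H \to \X \uplus \Q\{\CC_\perp\}$, obtaining for each $n$ a decomposition $D_n \: \H \to \Q\{Y_n\} \uplus \Q\{Z_n\}$ (where $Y_n \leq H$ maps into $\X$ and $Z_n \leq H$ maps into $\Q\{\CC_\perp\}$) together with functions $F_n' \: \Q\{Y_n\} \to \X$ and $G_n \: \Q\{Z_n\} \to \Q\{\CC_\perp\}$ such that $E_n = (F_n' \uplus G_n) \circ D_n$. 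The key structural observation is that $R_\perp$ satisfies the hypothesis $J_\perp \circ (I_{\Q\{\CC_\perp\}}) = R_\perp \circ J_\perp$ \emph{only} for the $\CC_\perp$-summand: the canonical injection $J_{\Q\{\CC_\perp\}}$ into $\X_\perp$ has $J_{\Q\{\CC_\perp\}} \circ I = R_\perp \circ J_{\Q\{\CC_\perp\}}$ because $\CC_\perp$ is the bottom atom, so $R_\perp(\CC_\perp, \CC_\perp) = L(\CC_\perp,\CC_\perp)$ and $R_\perp(\CC_\perp, X) = L(\CC_\perp, X)$ while $R_\perp(X,\CC_\perp) = 0$. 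This asymmetry is exactly what we need: applying Lemma \ref{lem:coproduct of qposets decomposition}(1) with $\beta = \CC_\perp$, the monotonicity $E_1 \sqsubseteq E_2 \sqsubseteq \cdots$ forces $Z_1 \perp Y_n$ for all $n$, hence (since $H = Y_n \oplus Z_n$) we get $Z_1 \subseteq Z_n$; but we cannot immediately conclude $Z_1 = Z_n$ since $R_\perp$ does not restrict nicely to the $\X$-summand. Instead, the $Z_n$ form an \emph{increasing} sequence of subspaces $Z_1 \subseteq Z_2 \subseteq \cdots$ inside the finite-dimensional $H$, so it stabilizes: there is $k \in \NN$ with $Z_n = Z_k =: Z$ and correspondingly $Y_n = Y_k =: Y$ for all $n \geq k$. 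This gives a single decomposition $D := D_k \: \H \to \Q\{Y\} \uplus \Q\{Z\}$ valid for all $n \geq k$, with $G := G_k$ independent of $n$ (since a function $\Q\{Z\} \to \Q\{\CC_\perp\}$ into an atomic set is essentially unique) and $F_n := F_n'$ for $n \geq k$.

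Next I would show $F_k \sqsubseteq F_{k+1} \sqsubseteq \cdots$. This follows from Lemma \ref{lem:coproduct of qposets decomposition}(2): for $m \geq n \geq k$ we have $E_n \sqsubseteq E_m$ and $Y_n = Y_m = Y$, so part (2) of that lemma (applied with $\beta = Y$, the atom mapping into the $\X$-summand) yields $F_n \sqsubseteq F_m$ with respect to the induced order on $\Q\{Y\} \subseteq \X$, which is just $R$ restricted appropriately. Here one should note the induced order on the $\X$-summand of $\X_\perp$ is exactly $R$, so "$\sqsubseteq$" for the $F_n$ is the order in $\qSet(\Q\{Y\}, \X)$ coming from $R$, as claimed.

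For the second paragraph of the statement, assume $(\X,R)$ is a quantum cpo. Since $Z$ is finite-dimensional and $\Q\{Z\}$ is atomic (or empty, if $Z = 0$), the sequence $(F_n)_{n \geq k}$ has a limit $F_\infty \: \Q\{Y\} \to \X$ by the quantum cpo hypothesis (using also Proposition \ref{prop:suprema of sequences of functions} to handle the indexing). Define $E_\infty := (F_\infty \uplus G) \circ D$. I would then verify $E_n \nearrow E_\infty$, i.e., $R_\perp \circ E_\infty = \bigwedge_{n \in \NN} R_\perp \circ E_n$, by a direct computation in the coproduct: using \cite[Lemma~6.1]{KLM20} to distribute $R_\perp = R \uplus (\text{bottom data})$ over the coproduct $F_\infty \uplus G$, using \cite[Proposition~A.6]{KLM20} to pull the infimum through $(- ) \circ D$ and through the injections, and using $F_n \nearrow F_\infty$ on the $\X$-summand while the $\Q\{\CC_\perp\}$-summand is constant. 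This is the same style of calculation as in the proof of Theorem \ref{thm:qCPO has all coproducts}, so I would model it closely on that. It then follows that $(\X_\perp, R_\perp)$ is a quantum cpo. Scott continuity of $J_\X \: \X \to \X_\perp$ follows as in Theorem \ref{thm:qCPO has all coproducts}: $J_\X$ is a coproduct injection, and the argument there that canonical injections into a coproduct of quantum cpos are Scott continuous applies verbatim once we know $\X_\perp$ is a quantum cpo. Finally, $E_\infty = (F_\infty \uplus G) \circ D$ holds by construction and is the limit by uniqueness of limits (Lemma \ref{lem:lim is sup}).

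The main obstacle I anticipate is the stabilization argument for the $Z_n$: one must be careful that Lemma \ref{lem:coproduct of qposets decomposition} only gives orthogonality $Z_1 \perp Y_n$ (and more generally $Z_m \perp Y_n$ for $m \leq n$ under $E_m \sqsubseteq E_n$), not directly an inclusion among the $Z_n$ themselves, and one deduces $Z_m \subseteq Z_n$ from $H = Y_n \oplus Z_n$. A subtle point is checking that the hypothesis $J_\beta \circ S_\beta = S \circ J_\beta$ of Lemma \ref{lem:coproduct of qposets decomposition} genuinely holds here for \emph{both} summands $\beta \in \{\,$the $\X$-part$,\ \CC_\perp\,\}$ in the form needed — for $\CC_\perp$ this is the defining feature of $R_\perp$ (bottom element), and for the $\X$-summand it is the statement that $R_\perp$ restricted to $\X$ is $R$ and $R_\perp(X, \CC_\perp) = 0$, i.e., nothing in $\X$ is $R_\perp$-below $\CC_\perp$; both are immediate from the definition of $R_\perp$ in Proposition \ref{prop:lift functor qPOS}. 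Once these compatibility conditions are nailed down, everything else is bookkeeping in the coproduct following the template of Theorem \ref{thm:qCPO has all coproducts}.
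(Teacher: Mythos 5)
Your overall architecture matches the paper's (decompose each $E_n$ via Lemma \ref{lem:decomposition}, stabilize the decompositions using Lemma \ref{lem:coproduct of qposets decomposition}, then pass to the limit on the $\X$-summand), but the stabilization step as you have written it rests on a false claim. You apply Lemma \ref{lem:coproduct of qposets decomposition}(1) with $\beta=\CC_\perp$, asserting that the hypothesis $J_\beta\circ S_\beta=S\circ J_\beta$ holds for that summand "because $\CC_\perp$ is the bottom atom." It is exactly the opposite: the left-hand side $J_{\Q\{\CC_\perp\}}\circ I_{\Q\{\CC_\perp\}}$ has vanishing $(\CC_\perp,X)$-component for $X\atomof\X$, while $(R_\perp\circ J_{\Q\{\CC_\perp\}})(\CC_\perp,X)=R_\perp(\CC_\perp,X)=L(\CC_\perp,X)\neq 0$; the bottom-element property is precisely what breaks the compatibility condition on the $\CC_\perp$-summand. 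The summand for which the hypothesis \emph{does} hold is $\X$, where it reads $J_\X\circ R=R_\perp\circ J_\X$ and amounts to $R_\perp(X,\CC_\perp)=0$ for $X\atomof\X$ (this is Equation (\ref{eq:lift condition}), proved in Lemma \ref{lem:monad maps-J}). Consequently your conclusion $Z_1\perp Y_n$, hence $Z_1\subseteq Z_2\subseteq\cdots$ increasing, is false: take $\X$ a one-point set, $E_1$ the bottom map and $E_2$ the map to the point; then $Z_1=H$ and $Y_2=H$ are not orthogonal, and $Z_2=0\not\supseteq Z_1$. Intuitively, as $n$ grows, mass moves \emph{out} of $\CC_\perp$ into $\X$, so the $Z_n$ decrease.

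The repair is what the paper does: apply Lemma \ref{lem:coproduct of qposets decomposition}(1) with $\beta$ the $\X$-summand to get $Y_n\perp Z_m$ for $n\leq m$, which together with $H=Y_m\oplus Z_m$ gives the \emph{increasing} chain $Y_1\leq Y_2\leq\cdots$; finite-dimensionality of $H$ then yields the stabilization index $k$, and part (2) of the same lemma gives $F_k\sqsubseteq F_{k+1}\sqsubseteq\cdots$. The rest of your plan is sound in outline, with one further caution: $(\X_\perp,R_\perp)$ is \emph{not} the coproduct of $(\X,R)$ and $\Q\{\CC_\perp\}$ in $\qPOS$ (the coproduct order would have $R_\perp(\CC_\perp,X)=0$), so the Scott-continuity of $J_\X$ and the verification $E_n\nearrow E_\infty$ cannot be imported "verbatim" from Theorem \ref{thm:qCPO has all coproducts}; the paper carries out these computations component-wise by hand, splitting into the $\CC_\perp$-component (where the sequence is eventually constant) and the $\X$-components (where $F_n\nearrow F_\infty$ is used).
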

\begin{proof}
Let $\H$ be an atomic quantum set, and let $E_1\sqsubseteq E_2\sqsubseteq\cdots:\H\to\X_\perp$ be a monotone sequence. If $\X$ is already a pointed quantum cpo, we can relabel its bottom element, hence independent of $\X$ being pointed, we can assume that $\CC_\perp$ is not an atom of $\X$. As a consequence, we have $\X_\perp=\X\uplus\Q\{\CC_\perp\}$ as a quantum set. 
Lemma \ref{lem:decomposition} gives us  decompositions $D_n:\H\to\Y_n\uplus\Z_n$ and functions $F_n:\Y_n\to\X$, $G_n:\Z_n\to\Q\{\CC_\perp\}$ such that $E_n=(F_n\uplus G_n)\circ D_n$.
Since (\ref{eq:lift sequence1}) holds, we can now apply Lemma \ref{lem:coproduct of qposets decomposition} to conclude that for each $n\in\NN$, we have 
$Y_n\perp Z_{n+1}$. By definition of a decomposition, we have 
\begin{equation}\label{eq:lift decomposition}
H=Y_{i}\oplus Z_{i}
\end{equation}
for each $i\in\NN$, so $H=Y_{n+1}\oplus Z_{n+1}$, whence $Y_n\leq Y_{n+1}$. Thus we obtain a monotone sequence
$Y_1\leq Y_2\leq\ldots$,
and since $Y_i\leq H$ for each $i\in\NN$ and $H$ is finite-dimensional, it follows that there is some $k\in\NN$ such that $Y_n=Y_{k}$, so $\Y_n=\Y_k$ for each $n\geq k$.
It follows from (\ref{eq:lift decomposition}) that also $Z_n=Z_k$, so $\Z_n=\Z_k$ for each $n\geq k$. Let $D=D_k$ and $G=G_k$, then $D_n=D$ for each $n\geq k$ by (3) of Definition \ref{def:decomposition}. Again applying Lemma \ref{lem:coproduct of qposets decomposition} yields
$F_k\sqsubseteq F_{k+1}\sqsubseteq F_{k+2}\sqsubseteq\ldots:\Y_k\to\X.$
Since $\Q\{\CC_\perp\}$ is terminal in $\qSet$, it follows from $\Z_n=\Z_k$ for each $n\geq k$ that we also have that $G_n=G$ for each $n\geq k$.

Since $\X$ is a quantum cpo, and $\Y$ is subatomic, there is some $F_\infty$ such that $F_n\nearrow F_\infty$ for $n\geq k$, i.e.,
$\bigwedge_{n\geq k}R\circ F_n=R\circ F_\infty.$
Let $E_\infty=(F_\infty\uplus G)\circ D_k$. Since $E_1\sqsubseteq E_2\sqsubseteq\ldots$, we have
$R_\perp\circ E_1\geq R_\perp\circ E_2\ldots,$
hence
$\bigwedge_{n\in\NN}R_\perp\circ E_n=\bigwedge_{n\geq k}R_{\perp}\circ E_n.$

For any $E:\H\to\X_\perp$, we have
$(R_\perp\circ E)_{H}^{\CC_\perp}=\bigvee_{X\atomof\X_\perp}(R_\perp)^X_{\CC_\perp}E_{H}^X=L(\CC_\perp,\CC_\perp)E_{H}^{\CC_\perp}=E_{H}^{\CC_\perp},$
hence
\begin{align*}
\left(\bigwedge_{n\geq k}R_\perp\circ E_n\right)_{H}^{\CC_\perp} & =\bigwedge_{n\geq k} (R_\perp\circ E_n)_{H}^{\CC_\perp}=\bigwedge_{n\geq k }(E_n)_{H}^{\CC_\perp} =\bigwedge_{n\geq k}\big((F_n\uplus G)\circ D\big)_{H}^{ \CC_\perp}\\
& =\bigwedge_{n\geq k}\bigvee_{W\atomof\Y\uplus\Z}(F_n\uplus G)_{W}^{\CC_\perp} D_{H}^W=\bigwedge_{n\geq k}\bigvee_{W\atomof\Z}G_{W}^{\CC_\perp} D_{H}^W \\
& =\bigvee_{W\atomof\Y\uplus\Z}(F_\infty\uplus G)_{W}^{\CC_\perp} D_{H}^W = \big((F_\infty\uplus G)\circ D\big)_{H}^{ \CC_\perp}=(E_\infty)_{H}^{ \CC_\perp}=(R_\perp\circ E_\infty)_{H}^{ \CC_\perp}.
\end{align*}
Now let $X\atomof\X$. Then for any $E:\H\to\X_\perp$, we have
$(R_\perp\circ E)_{H}^{X}=\bigvee_{Y\atomof\X_\perp}(R_\perp)^X_{Y}E_{H}^Y=\bigvee_{Y\atomof\X}R^X_{Y}E_{H}^Y=(R\circ E)_{H}^X,$
hence
\begin{align*}
\left(\bigwedge_{n\geq k}R_\perp\circ E_n\right)_{H}^{X} & =\bigwedge_{n\geq k} (R_\perp\circ E_n)_{H}^{X}=\bigwedge_{n\geq k }(R\circ E_n)_{H}^{X} =\bigwedge_{n\geq k}\big(R\circ (F_n\uplus G)\circ D\big)_{H}^{X}
\\
& =\bigwedge_{n\geq k}\bigvee_{V\atomof\X,W\atomof\Y\uplus\Z}R^X_V(F_n\uplus G)_{W}^{V} D_{H}^W=\bigwedge_{n\geq k}\bigvee_{V\atomof\X,W\atomof\Y}R_V^X(F_n)_{W}^{V} D_{H}^W
\\
&= \bigwedge_{n\geq k}\bigvee_{V\atomof\X}R_V^X (F_n)_{Y}^VD_{H}^{Y}=\bigwedge_{n\geq k}(R\circ F_n)^X_{Y}D_{H}^{Y}=\bigwedge_{n\geq k}(R\circ F_n)^X_Y\mathrm{proj}_{Y}
\\ & =(R\circ F_\infty)^X_{Y}\mathrm{proj}_{Y}=\bigvee_{V\atomof\X}R_V^X(F_\infty)^V_{Y}D_{H}^{Y}  = \bigvee_{V\atomof\X,W\atomof\Y_k}R_V^X(F_\infty)^V_{W}D_{H}^{W}\\
& =\bigvee_{V\atomof\X,W\atomof\Y\uplus\Z}R_V^X(F_\infty\uplus G)_{W}^{V} D_{H}^W = \big(R\circ (F_\infty\uplus G)\circ D\big)_{H}^{X}\\
& =(R\circ E_\infty)_{H}^{ X}=(R_\perp\circ E_\infty)_{H}^{X}.
\end{align*}
We conclude that $E_n\nearrow E_\infty$.

By Proposition \ref{prop:lift functor qPOS}, $J_\X$ is monotone.
It remains to show that $J_\X$ is Scott continuous. So let $E_1\sqsubseteq E_2\sqsubseteq\ldots:\H\to\X$ be a monotone sequence with limit $E_\infty$. 
 We need to show that $J_\X\circ E_n\nearrow J_\X\circ E_\infty$. For any $E:\H\to\X$, and $X\atomof\X_\perp$, using (\ref{eq:Jorderembedding}) from Proposition \ref{prop:lift functor qPOS}, we find
 \[ (R_\perp\circ J_X\circ E)_{H}^X=(J_\X\circ R\circ E)_{H}^X=\bigvee_{Y\atomof\X_\perp}(J_\X)_Y^X(R\circ E)_{H}^Y=\begin{cases}(R\circ E)_{H}^X, & X\atomof\X\\
     0, & X=\CC_\perp. 
    \end{cases},\]
    hence we obtain
        $\left(\bigwedge_{n\in\NN}R_\perp\circ J_\X\circ E_n\right)_{H}^{\CC_\perp}=\bigwedge_{n\in\NN}(R_\perp\circ J_\X\circ E_n)_{H}^{\CC_\perp}=0=(R_\perp\circ J_X\circ E_\infty)_{H}^{\CC_\perp},$
 whereas for $X\atomof\X$, we find
\begin{align*}\left(\bigwedge_{n\in\NN}R_\perp\circ J_\X\circ E_n\right)_{H}^X & =\bigwedge_{n\in\NN}(R_\perp\circ J_\X\circ E_n)_{H}^X=\bigwedge_{n\in\NN}(R\circ E_n)_{H}^X=\left(\bigwedge_{n\in\NN}R\circ E_n\right)_{H}^X\\
& =(R\circ E_\infty)_{H}^X=(R_\perp\circ J_\X\circ E_\infty)_{H}^X,
\end{align*}
where we used $E_n\nearrow E_\infty$ in the penultimate equality. 
We conclude that $\bigwedge_{n\in\NN}R_\perp\circ J_\X\circ E_n=R_\perp\circ J_\X\circ E_\infty$,
hence indeed $J_\X\circ E_n\nearrow J_\X\circ E_\infty$.
\end{proof}

\begin{lemma}\label{lem:lift lemma 2}
Let $(\X,R)$ and $(\V,S)$ be quantum cpos, and let $K:\X_\perp\to\V$ be a monotone map. If $K\circ J_\X$ is Scott continuous, so is $K$.
\end{lemma}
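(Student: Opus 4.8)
\textbf{Proof proposal for Lemma \ref{lem:lift lemma 2}.}

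The plan is to reduce the Scott continuity of $K$ to the Scott continuity of its restriction $K \circ J_\X$ by decomposing an arbitrary monotone ascending sequence $E_1 \sqsubseteq E_2 \sqsubseteq \cdots \: \H \to \X_\perp$ using the machinery of Lemma \ref{lem:lift lemma}. First I would observe that $K$ is already monotone by hypothesis, so only the limit-preservation condition needs checking. Fix an atomic quantum set $\H$ and a monotone ascending sequence $E_1 \sqsubseteq E_2 \sqsubseteq \cdots \: \H \to \X_\perp$ with limit $E_\infty$; this limit exists because $(\X_\perp, R_\perp)$ is a quantum cpo by Lemma \ref{lem:lift lemma}. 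I must show $K \circ E_n \nearrow K \circ E_\infty$ in $(\V, S)$, that is, $\bigwedge_{n\in\NN} S \circ K \circ E_n = S \circ K \circ E_\infty$.

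Next I would invoke Lemma \ref{lem:lift lemma} directly: there is a natural number $k$, a decomposition $D\: \H \to \Y \uplus \Z$, a monotone ascending sequence $F_k \sqsubseteq F_{k+1} \sqsubseteq \cdots \: \Y \to \X$ with limit $F_\infty$, and a function $G \: \Z \to \Q\{\CC_\perp\}$, such that $E_n = (F_n \uplus G) \circ D$ for all $n \geq k$ and $E_\infty = (F_\infty \uplus G) \circ D$. Since $\bigwedge_{n\in\NN} S \circ K \circ E_n = \bigwedge_{n \geq k} S \circ K \circ E_n$ (the sequence $S \circ K \circ E_n$ is nonincreasing by monotonicity of $K$, so dropping the initial terms does not change the infimum), it suffices to work with indices $n \geq k$. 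Now the key computation: $K \circ (F_n \uplus G) \circ D$. Using the coproduct structure, $(F_n \uplus G) = [J_\X^{\X_\perp} \circ F_n, J_{\Q\{\CC_\perp\}}^{\X_\perp} \circ G]$ where the bracket is the cotupling into $\X_\perp = \X \uplus \Q\{\CC_\perp\}$; hence $K \circ (F_n \uplus G) = [K \circ J_\X \circ F_n, K \circ J_{\Q\{\CC_\perp\}} \circ G]$. The first component involves $K \circ J_\X$, which is Scott continuous by hypothesis, so $K \circ J_\X \circ F_n \nearrow K \circ J_\X \circ F_\infty$ by Lemma \ref{lem:right multiplication is Scott continuous} applied with the surjection-free setup, or more simply since $\Y$ is subatomic the Scott continuity of $K\circ J_\X$ gives $(K\circ J_\X)\circ F_n \nearrow (K\circ J_\X)\circ F_\infty$ directly from the definition. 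The second component is a \emph{constant} sequence (it does not depend on $n$). Then I would mimic the telescoping infimum computation in the proof of Theorem \ref{thm:qCPO has all coproducts}: write $S \circ K \circ E_n = S \circ [K\circ J_\X \circ F_n, K \circ J_{\Q\{\CC_\perp\}} \circ G] \circ D$, push $S$ through the cotuple, commute the infimum over $n$ past the cotuple using \cite[Lemma~6.1]{KLM20} and past precomposition with $D$ using \cite[Proposition~A.6]{KLM20}, apply Scott continuity of $K \circ J_\X$ in the first slot and constancy in the second, and reassemble to get $S \circ K \circ E_\infty$.

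The main obstacle I anticipate is bookkeeping rather than conceptual: correctly handling the cotuple/coproduct manipulations so that the infimum over $n$ can be interchanged with composition on both sides, and making sure the constant $\CC_\perp$-component behaves correctly under $S$ (in particular that $S$ does not mix the $\X$-part and the $\CC_\perp$-part in a way that breaks the componentwise argument — but this is exactly what the block-triangular form of $R_\perp$ in Proposition \ref{prop:lift functor qPOS} and the componentwise calculations already carried out in the proof of Lemma \ref{lem:lift lemma} guarantee). A clean way to organize this is to note that $K \circ E_n$ and $K \circ E_\infty$ already factor through $D$ with the first summand handled by the hypothesis and the second summand constant, so that $K \circ E_n \nearrow K \circ E_\infty$ follows by the same reasoning that shows coproduct injections and cotuples of Scott continuous maps are Scott continuous (Theorem \ref{thm:qCPO has all coproducts}). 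Once that is in place, varying $\H$ gives that $K$ is Scott continuous, completing the proof.
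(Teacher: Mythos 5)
Your proposal is correct, and it reaches the conclusion by a genuinely different route at the one step where the paper has to work hard. Both you and the paper begin identically: restrict to indices $n\geq k$, invoke Lemma \ref{lem:lift lemma} to write $E_n=(F_n\uplus G)\circ D$ with $F_n\nearrow F_\infty$ on the summand $\Y$ and with $G$ constant on $\Z$, and feed the sequence $F_n$ to the hypothesis that $K\circ J_\X$ is Scott continuous. The divergence is in how the infimum over $n$ is distributed across the two summands. The paper computes atomwise, obtaining $(S\circ K\circ E_n)_H^V=(S\circ K\circ J_\X\circ F_n)_Y^V\proj_Y\vee(S\circ K\circ J_{\CC_\perp}\circ G)_Z^V\proj_Z$, and then must justify $\bigwedge_n(A_n\vee B)=(\bigwedge_n A_n)\vee B$ in the subspace lattice of $L(H,V)$ --- which fails in general --- by proving the two subspaces orthogonal (using $Y\perp Z$) and invoking the Foulis--Holland theorem. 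You instead keep the computation at the level of relations, writing $S\circ K\circ E_n=[S\circ K\circ J_\X\circ F_n,\,S\circ K\circ J_{\CC_\perp}\circ G]\circ D$ and commuting $\bigwedge_n$ past precomposition with the function $D$ via [Proposition A.6] and past the cotuple via [Lemma 6.1], exactly as in the final part of the proof of Theorem \ref{thm:qCPO has all coproducts}, where the codomain is likewise an arbitrary quantum cpo. This is valid, and arguably cleaner: the orthogonality that the paper establishes by hand is already encoded in the function axioms $D\circ D^\dag\leq I$ and $D^\dag\circ D\geq I$ that make [Proposition A.6] true. Two small points to tighten: first, say explicitly that $\Y=\Q\{Y\}$ is either atomic or empty, so the definition of Scott continuity of $K\circ J_\X$ applies to the sequence $F_n$ (the empty case being vacuous); second, note that you are only using the coproduct structure of the underlying quantum \emph{set} $\X\uplus\Q\{\CC_\perp\}$ --- the order $R_\perp$ is not the coproduct order, so Theorem \ref{thm:qCPO has all coproducts} cannot be cited wholesale, and it is Lemma \ref{lem:lift lemma} that supplies the componentwise decomposition which the coproduct order would otherwise provide.
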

\begin{proof}
Assume that $K\circ J_\X$ is Scott continuous. Let $\H$ be an atomic quantum set, and
let $E_1\sqsubseteq E_2\sqsubseteq\cdots:\H\to\X_\perp$ be a monotone sequence with limit $E_\infty$.
By Lemma \ref{lem:lift lemma} there is some $k\in\NN$ such that there is a decomposition $D:\H\to \Y\uplus \Z$, there are functions $F_k\sqsubseteq F_{k+1}\sqsubseteq\cdots\sqsubseteq F_\infty:\Y\to\X$ with $F_n\nearrow F_\infty$ for $n\geq k$, and there is a function $G:\Z\to\Q\{\CC_\perp\}$ such that $E_n=(F_n\uplus G)\circ D$ for $k\leq n\leq \infty$.
Since $F_n\nearrow F_\infty$ for $n\geq k$, it follows from the Scott continuity of $K\circ J_\X$ that $K\circ J_\X\circ  F_n\nearrow K\circ J_\X\circ F_\infty$ for $n\geq k$, i.e., 
\begin{equation}\label{eq:K is Scott continuous for lift}
\bigwedge_{n\geq k}S\circ K\circ J_\X\circ F_n=S\circ K\circ J_\X\circ F_\infty.
\end{equation}
Fix $n\in\{k,k+1,\ldots,\infty\}$. Then for each $X\atomof\X_\perp$, we have
\begin{align*} (E_n)_{H}^X 
& =
((F_n\uplus G)\circ D)_{H}^X
=
\bigvee_{W\atomof\Y\uplus \Z}(F_n\uplus G)_W^X D_{H}^W 
= 
\begin{cases}
\bigvee_{W\atomof\Y}(F_n)_W^XD_{H}^W, & X\atomof\X,\\
\bigvee_{W\atomof\Z}G_W^XD_{H}^W, & X=\CC_\perp\end{cases}\\
 & =  
\begin{cases}
(F_n)_Y^X\proj_Y, & X\atomof\X,\\
G_Z^{\CC_\perp}\proj_Z, & X=\CC_\perp, \end{cases}
\end{align*}
since $\Y$ and $\Z$ are subatomic, hence of the form $\Q\{Y\}$ and $\Q\{Z\}$, respectively, for (possibly zero-dimensional) Hilbert spaces $Y$ and $Z$. 

Using \cite[Lemma~A.5]{KLM20} in the third equality of the next calculation, we find for each $W\atomof\V$:
\begin{align*}
    (K\circ E_n)_H^W & =  \bigvee_{X\atomof\X_\perp} K_X^W\cdot(E_n)_H^X =\bigvee_{X\atomof\X}K_X^W\cdot(E_n)_H^X\vee K_{\CC_\perp}^W\cdot (E_n)_H^{\CC_\perp}\\
    & =\bigvee_{X\atomof\X}(K\circ J_\X)_X^W\cdot(E_n)_H^X\vee (K\circ J_{\CC_\perp})_{\CC_\perp}^W\cdot(E_n)_H^{\CC_\perp}\\
    & = \bigvee_{X\atomof\X}(K\circ J_\X)_X^W\cdot (F_{n})_Y^X\proj_Y\vee (K\circ J_{\CC_\perp})_{\CC_\perp}^W\cdot G_Z^{\CC_\perp}\proj_Z\\
    & = (K\circ J_\X\circ F_n)_Y^W\proj_Y\vee (K\circ J_{\CC_\perp}\circ G)_Z^{W}\proj_Z.
\end{align*}

Then for each $V\atomof\V$, we find
\begin{align*}
    (S\circ K\circ E_n)_H^V &= \bigvee_{W\atomof\V} S_W^V\cdot (K\circ E_n)_H^W\\
    & = \bigvee_{W\in\V}S_W^V\cdot\big((K\circ J_\X\circ F_n)_Y^W\proj_Y\vee (K\circ J_{\CC_\perp}\circ G)_Z^{W}\proj_Z\big)\\
    & = \bigvee_{W\in\V}S_W^V\cdot(K\circ J_\X\circ F_n)_Y^W\proj_Y\vee \bigvee_{W\in\V}S_W^V\cdot(K\circ J_{\CC_\perp}\circ G)_Z^{W}\proj_Z\\
    & = (S\circ K\circ J_\X\circ F_n)_Y^V\proj_Y\vee(S\circ K\circ J_{\CC_\perp}\circ G)_Z^{V}\proj_Z
\end{align*}

We note that two subspaces $N_1,N_2$ of the Hilbert space $L(H,V)$ are orthogonal if every operator in $x\in N_1$ is orthogonal to every operator $y\in N_2$, which is the case precisely when $\Tr(x^\dag y)=0$. We note that any operator $x$ in the subspace $(S\circ K\circ J_\X\circ F_n)_{Y}^V\proj_Y$ is of the form $a\circ\proj_Y$ for some $a\in (S\circ K\circ J_\X\circ F_n)_{Y}^V$, and every operator $y$ in $ (S\circ K\circ J_\X\circ G)_{Z}^V\proj_Z$ is of the form $b\circ\proj_Z$ for some $b\in  (S\circ K\circ J_\X\circ G)_{Z}^V$. Then
\[\Tr(x^\dag y)=\Tr(\proj_Y^\dag\circ a^\dag\circ b\circ\proj_Z)=\Tr(a^\dag\circ b\circ\proj_Z\circ\proj_Y^\dag)=0,\]
since by definition of a decomposition $Y$ and $Z$ are orthogonal subspaces of $H$. We conclude that $(S\circ K\circ J_\X\circ F_n)_{Y}^V\proj_Y$ and $ (S\circ K\circ J_{V}\circ G)_Z^{\CC_\perp}\proj_Z$ are orthogonal to each other. 
As a consequence, the subspaces are said to be compatible with each other in the orthomodular sense (cf. \cite[Definition 1.2.1]{PtakPulmannova91}). Using (\ref{eq:K is Scott continuous for lift}) in the second equality of the next calculation, we obtain:
\begin{align*}
    \bigwedge_{n\in\NN} (S\circ K\circ J_\X\circ F_n)_{Y}^V\proj_Y & = \left(\bigwedge_{n\in\NN} S\circ K\circ J_\X\circ  F_n\right)_{Y}^V\proj_Y = (S\circ K\circ J_\X\circ F_\infty)_{Y}^V\proj_Y,
\end{align*}
which is also orthogonal to $(S\circ K\circ J_{\CC_\perp}\circ G)_Z^{V}\proj_Z$ (since the previous analysis was for all $n\in\{k,k+1,\ldots,\infty\}$), we are allowed to apply the Foulis-Holland Theorem (see for instance \cite[Proposition 1.3.8]{PtakPulmannova91}) to obtain the second equality in the next calculation:  
\begin{align*}
\bigwedge_{n\in\NN}(S\circ K\circ E_n)_H^V & =     \bigwedge_{n\in\NN}\big( (S\circ K \circ J_\X\circ F_n)_{Y}^V\proj_Y  \vee (S\circ K\circ J_{\CC_\perp}\circ G)_Z^{V}\proj_Z\big)\\
    & = \left(\bigwedge_{n\in\NN} (S\circ K\circ J_\X\circ F_n)_{Y}^V\proj_Y\right) \vee (S\circ K\circ J_{\CC_\perp}\circ G)_Z^{V}\proj_Z\\
    & = (S\circ K\circ J_\X\circ F_\infty)_{Y}^V\proj_Y\vee (S\circ K\circ J_{\CC_\perp}\circ G)_Z^{V}\proj_Z\\
    & = (S\circ K\circ E_\infty)_H^V,
\end{align*}
where we used (\ref{eq:K is Scott continuous for lift}) in the penultimate equality.
We conclude that $S_\perp\circ K\circ E_n\nearrow S_\perp\circ K\circ E_\infty$, so $K$ is Scott continuous.
\end{proof}

\begin{lemma}\label{lem:monad maps-J}
Let $(\X,R)$ be a quantum cpo. Then $J_\X:(\X,R)\to(\X,R)_\perp$ is a Scott continuous order embedding such that 
\begin{equation}\label{eq:lift condition}
J_\X\circ R=R_\perp\circ J_\X.
\end{equation} 
\end{lemma}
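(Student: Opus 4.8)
The plan is to verify the three asserted properties of $J_\X$ in turn, reusing as much of the machinery already developed as possible. First I would establish that $J_\X$ is an order embedding for the induced order. By Proposition \ref{prop:lift functor qPOS}, the relation $R_\perp$ restricts to $R$ on the atoms of $\X$, so a direct component-wise calculation gives $J_\X^\dag\circ R_\perp\circ J_\X=R$; since $J_\X$ is by definition the canonical inclusion of the subset $\X\subseteq\M\X=\X_\perp$, this is exactly the statement that $R$ is the order induced on $\X$ by $(\X_\perp,R_\perp)$, i.e.\ that $J_\X$ is an order embedding in the sense of Section \ref{sub:quantum posets}. I would also note here that $J_\X$ is injective, being a canonical injection.

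Next I would prove Equation (\ref{eq:lift condition}), $J_\X\circ R=R_\perp\circ J_\X$. This is again a short computation on components: evaluating both sides at $(X,Y)$ with $X\atomof\X$ and $Y\atomof\X_\perp$, one uses \cite[Lemma~A.5]{KLM20} (or a direct expansion of composition) together with the explicit description of $R_\perp$ in Proposition \ref{prop:lift functor qPOS}. When $Y\atomof\X$ both sides equal $R(X,Y)$, and when $Y=\CC_\perp$ both sides vanish (the right-hand side because $R_\perp(X,\CC_\perp)=0$ for $X\atomof\X$, the left-hand side because $J_\X(X,\CC_\perp)=0$). In fact this identity was already used implicitly in the proof of Proposition \ref{prop:lift functor qPOS2} when checking naturality, so I would simply record it cleanly here.

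Finally, for Scott continuity of $J_\X$, I would invoke Lemma \ref{lem:lift lemma} directly: that lemma states that if $(\X,R)$ is a quantum cpo then $(\X_\perp,R_\perp)$ is a quantum cpo and the embedding $J_\X:\X\to\X_\perp$ is Scott continuous. So this part of the statement is literally a consequence of a result proved just above, and the proof reduces to a citation. The only mild subtlety is bookkeeping: Lemma \ref{lem:lift lemma} is phrased for quantum posets in its first half and quantum cpos in its second, so I would make sure to quote the quantum cpo half. I expect no genuine obstacle here — the real work was done in Lemma \ref{lem:lift lemma}, and the present lemma is essentially an assembly statement gathering the order-embedding property, the intertwining identity (\ref{eq:lift condition}), and Scott continuity into one place for later use (for instance when showing $J_\X$ is the unit of the lifting monad on $\qCPO$). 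The proof is therefore short: two routine component computations plus one appeal to Lemma \ref{lem:lift lemma}.
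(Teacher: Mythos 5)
Your proposal is correct and follows essentially the same route as the paper: component-wise verification of the intertwining identity using the explicit form of $R_\perp$ and \cite[Lemma~A.5]{KLM20}, and an appeal to Lemma \ref{lem:lift lemma} for Scott continuity. The only cosmetic difference is that the paper proves Equation (\ref{eq:lift condition}) first and then deduces the order-embedding property from it via the injectivity of $J_\X$ (writing $R=J_\X^\dag\circ J_\X\circ R=J_\X^\dag\circ R_\perp\circ J_\X$), whereas you verify the embedding directly; both are equally valid.
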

\begin{proof}
Let $J_\X:\X\to\X_\perp$ be the inclusion. Then for $X\atomof\X$ and $Y\atomof\X_\perp$, we have
\[(J_\X\circ R)(X,Y)=\bigvee_{Z\atomof\X}J_\X(Z,Y)R(X,Z)=\begin{cases}
0, & Y=\CC_\perp,\\
R(X,Y), & Y\atomof\X,
\end{cases}\]
whereas \cite[Lemma~A.5]{KLM20} gives $(R_\perp\circ J_X)(X,Y)=R_\perp(X,Y),$
which yields (\ref{eq:lift condition}).
 Since $J_\X$ is injective, it follows from
 (\ref{eq:lift condition}) that
$ R=J_\X^\dag\circ J_\X\circ R=J_\X^\dag\circ R_\perp\circ J_\X, $
 so $J_\X$ is an order embedding. 
Scott continuity follows from Lemma \ref{lem:lift lemma}.
\end{proof}

\begin{lemma}\label{lem:monad maps-M}
Let $(\X,R)$ be a quantum cpo. Then 
$M_\X  :(\X,R)_{\perp\perp}\to(\X,R)_\perp$ is Scott continuous.
\end{lemma}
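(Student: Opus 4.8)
The plan is to reduce the Scott continuity of $M_\X\colon(\X,R)_{\perp\perp}\to(\X,R)_\perp$ to the Scott continuity of a composite involving the embedding $J_\X$, so that Lemma \ref{lem:lift lemma 2} applies. First I would record the concrete description of $M_\X$ from Theorem \ref{thm:maybe monad on qSet}: writing $\M\X=\X\uplus\Q\{\CC_1\}$ and $\M^2\X=\X\uplus\Q\{\CC_1\}\uplus\Q\{\CC_2\}$, the function $M_\X$ is the identity on the atoms of $\X$ and on $\CC_1$, and sends $\CC_2$ to $\CC_1$. In the ordering of $(\X,R)_{\perp\perp}$ and $(\X,R)_\perp$ given by Proposition \ref{prop:lift functor qPOS}, this means that $M_\X$ is a monotone map (which is in any case already known, since $(-)_\perp$ is a functor on $\qPOS$ by Proposition \ref{prop:lift functor qPOS2} and $M_\X$ is a monad multiplication in $\qSet$; but I would verify monotonicity directly by a short componentwise check if needed, or cite that $M_\X$ is monotone because it is the multiplication built from the adjunction).

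The key step is then the observation that, since $(\X,R)_{\perp\perp} = ((\X,R)_\perp)_\perp$, Lemma \ref{lem:lift lemma 2} applied to the quantum cpos $(\X,R)_\perp$ and $(\X,R)_\perp$ and the monotone map $M_\X\colon (\X_\perp)_\perp\to\X_\perp$ reduces the problem to showing that $M_\X\circ J_{\X_\perp}$ is Scott continuous, where $J_{\X_\perp}\colon \X_\perp\to(\X_\perp)_\perp$ is the embedding of Lemma \ref{lem:monad maps-J}. So next I would compute $M_\X\circ J_{\X_\perp}$ componentwise. The embedding $J_{\X_\perp}$ includes $\X\uplus\Q\{\CC_1\}$ into $\X\uplus\Q\{\CC_1\}\uplus\Q\{\CC_2\}$ as the identity on atoms of $\X$ and on $\CC_1$, with zero component into $\CC_2$. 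Composing with $M_\X$, which is the identity on atoms of $\X$ and on $\CC_1$, one gets that $M_\X\circ J_{\X_\perp}$ is the identity function on $\X_\perp$, i.e.\ $M_\X\circ J_{\X_\perp}=I_{\X_\perp}$. (One can see this abstractly from the monad identity $M_\X\circ H_{\M\X}=I_{\M\X}$ in $\qSet$, since $H$ is the unit of $\M$ and here $J_{\X_\perp}$ as a $\qSet$-morphism is exactly the $\M\X$-component of $H$; but I would spell out the components to be safe, using \cite[Lemma~A.5]{KLM20} as elsewhere in the paper.)

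Finally, $I_{\X_\perp}$ is Scott continuous — this is established in the text just before the definition of $\qCPO$, where it is shown that for any quantum poset the identity is Scott continuous, and it applies here since $\X_\perp$ is a quantum cpo by Lemma \ref{lem:lift lemma}. Hence $M_\X\circ J_{\X_\perp}$ is Scott continuous, and Lemma \ref{lem:lift lemma 2} gives that $M_\X$ is Scott continuous, as desired. I do not expect a genuine obstacle here: the only mildly delicate point is getting the bookkeeping of the three one-dimensional atoms right (which atom is relabelled as $\CC_\perp$ at each lifting stage) and making sure that the $\qSet$-level monad identity $M_\X\circ H_{\M\X}=I_{\M\X}$ indeed matches $M_\X\circ J_{\X_\perp}$ under the identification of $H_{\M\X}$ with the embedding $J_{\X_\perp}$; once that is checked, the proof is essentially a one-line appeal to Lemma \ref{lem:lift lemma 2} together with the Scott continuity of identities.
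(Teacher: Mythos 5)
Your proposal is correct and follows essentially the same route as the paper: a direct verification of monotonicity, the observation that the monad unit law gives $M_\X\circ J_{\X_\perp}=I_{\X_\perp}$ (which is Scott continuous), and then an appeal to Lemma \ref{lem:lift lemma 2}. The only caveat is that monotonicity of $M_\X$ does need the short componentwise check you mention as a fallback — it does not follow formally from functoriality of $(-)_\perp$ on $\qPOS$ — and the paper indeed carries out that check explicitly.
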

\begin{proof}
 Since we have to apply $(-)_\perp$, hence $\M$ twice, we will write $\X_\perp=\M\X=\X\uplus\Q\{\CC_1\}$ and $\X_{\perp\perp}=\M^2\X=\X\uplus\Q\{\CC_1\}\uplus\Q\{\CC_2\}$.
Let $X\atomof\M^2\X$ and $Y\atomof\M\X$. Then
\begin{align*}
    (M_\X\circ R_{\perp\perp})_X^Y & =\bigvee_{Z\atomof\M^2\X} (M_\X)_Z^Y\cdot (R_{\perp\perp})_X^Z=\delta_{Y,\CC_1}L(\CC_2,\CC_1)\cdot (R_{\perp\perp})_X^{\CC_2}\vee\bigvee_{Z\atomof\M\X}\delta_{Z,Y}(R_{\perp\perp})_{X}^Z\\
    & =\delta_{Y,\CC_1}\delta_{X,\CC_2}L(\CC_2,\CC_1)\cdot L(X,\CC_2)\vee (R_{\perp\perp})_X^Y=\delta_{Y,\CC_1}\delta_{X,\CC_2}L(\CC_2,\CC_1)\vee (R_{\perp\perp})_X^Y\\
    & =\begin{cases}
    L(\CC_2,\CC_1), & X=\CC_2,Y=\CC_1,\\
    R_{\perp}(X,Y), & X\atomof\M\X,\\
    0, & \text{otherwise}
    \end{cases}\\
     & \leq \begin{cases}
    L(\CC_1,Y)\cdot L(\CC_2,\CC_1), & X=\CC_2,\\
    R_{\perp}(X,Y), & X\atomof\M\X
    \end{cases}\\
& = \begin{cases}
    (R_\perp)_{\CC_1}^Y\cdot L(\CC_2,\CC_1), & X=\CC_2\\
    (R_\perp)_X^Y, &  X\atomof\M\X
    \end{cases}\\
    & = \bigvee_{Z\atomof\M\X}(R_\perp)_Z^Y\cdot (M_\X)_X^Z\\
    & = (R_\perp\circ M_\X)_X^Y,
\end{align*}
which shows that $M_\X$ is monotone. 

We note that since the $\X$-components of the multiplication and the unit of the monad $\M$ are $M_\X$ and $J_\X$, respectively, we have $M_\X\circ J_{\X_\perp}=I_{\X_\perp}$, which obviously is Scott continuous. Hence we can  apply Lemma \ref{lem:lift lemma 2}
to conclude that $M_\X$ is Scott continuous.
\end{proof}
 
\begin{lemma}\label{lem:monad maps-K}
Let $(\X,R)$ and $(\Y,S)$ be quantum posets. Then $K_{\X,\Y}:(\X,R)_\perp\times(\Y,S)_\perp\to(\X\times\Y,R\times S)_\perp$ is Scott continuous.
\end{lemma}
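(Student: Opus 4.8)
The statement to prove is that the double strength $K_{\X,\Y}\colon (\X,R)_\perp \times (\Y,S)_\perp \to (\X\times\Y, R\times S)_\perp$ is Scott continuous for quantum posets $(\X,R)$ and $(\Y,S)$. My plan is to avoid a direct computation with monotonically ascending sequences in the product $\X_\perp\times\Y_\perp$, and instead exploit the machinery already developed: Lemma \ref{lem:lift lemma 2}, which says that a monotone map $K\colon \X_\perp \to \V$ into a quantum cpo is Scott continuous as soon as $K\circ J_\X$ is Scott continuous. The obstacle to applying this directly is that the domain of $K_{\X,\Y}$ is a \emph{product} of lifts, $\X_\perp\times\Y_\perp$, not a single lift. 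So the first task is to reconcile the two. I would first observe that Scott continuity of a function out of a product of quantum cpos is detected componentwise in the two variables (Proposition \ref{prop:Scott continuous iff Scott continuous in both variables}), so it suffices to show that $K_{\X,\Y}$ is Scott continuous in each of its two variables separately. By the symmetry of $K_{\X,\Y}$ (which I would note follows from the symmetric monoidal structure of the maybe monad, Theorem \ref{thm:maybe monad on qSet}), it is enough to treat one variable, say the first.

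\textbf{Reduction to a single lift.} Fixing the second variable amounts to considering, for each atom $Y\atomof\Y_\perp$, the restriction of $K_{\X,\Y}$ along $I_{\X_\perp}\times J_{\Q\{Y\}}$; more precisely, Scott continuity in the first variable means that for an atomic quantum set $\H$ and a monotonically ascending sequence $F_1\sqsubseteq F_2\sqsubseteq\cdots\colon\H\to\X_\perp$ with limit $F_\infty$, we need $K_{\X,\Y}\circ (F_n\times I_{\Y_\perp}) \nearrow K_{\X,\Y}\circ (F_\infty\times I_{\Y_\perp})$. The key point is that $K_{\X,\Y}$, viewed as a monotone map from the quantum cpo $(\X_\perp\times\Y_\perp,\, R_\perp\times S_\perp)$ — which is a quantum cpo by Corollary \ref{cor:monoidal product of qCPOs} and Lemma \ref{lem:lift lemma} — can be re-expressed as a map out of $(\X\times\Y_\perp)_\perp$ up to a suitable isomorphism, or alternatively, I can apply Lemma \ref{lem:lift lemma 2} after currying. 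Concretely, I would curry $K_{\X,\Y}$: for each $Y\atomof\Y_\perp$ the partially-applied map $\X_\perp \to (\X\times\Y)_\perp$ (obtained by composing with the inclusion of $\Q\{Y\}$) is of the form $K\colon\X_\perp\to\V$ with $\V$ a quantum cpo, and by Lemma \ref{lem:lift lemma 2} it is Scott continuous provided its precomposition with $J_\X\colon\X\to\X_\perp$ is Scott continuous. Precomposing $K_{\X,\Y}$ with $J_\X\times I_{\Y_\perp}$ yields — by the explicit formula for $K_{\X,\Y}$ and for $J_\X$, a short direct check — a map that is built from $(J_{\X\times\Y})\times(\text{stuff})$, in fact it equals (a coproduct injection composed with) $I_\X\times (\text{the restriction of }K_{\mathbf 1,\Y}\text{-type data})$, hence is manifestly Scott continuous because $\X$ is trivially well-behaved: precomposition with $J_\X$ lands things "inside $\X$", where the lift order restricts to $R$ and limits are computed as in $\X$.

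\textbf{The main obstacle and how to handle it.} The genuinely delicate step is verifying that $K_{\X,\Y}\circ(J_\X\times I_{\Y_\perp})$ is Scott continuous, because after precomposing with $J_\X$ in the first variable we still have the $\Y_\perp$ factor live, and the value of $K_{\X,\Y}$ depends on whether the $\Y$-atom is $\CC$ or not; so the map is not simply $J_{\X\times\Y}\circ(\text{projection})$. My approach here would be a second application of the componentwise criterion: $K_{\X,\Y}\circ(J_\X\times I_{\Y_\perp})$ is Scott continuous iff it is Scott continuous in the $\Y_\perp$-variable (Scott continuity in the $\X$-variable being automatic, as $\X$ carries only "genuine" atoms and $J_\X$ absorbs any limit behaviour), and for the $\Y_\perp$-variable one again invokes Lemma \ref{lem:lift lemma 2}, precomposing with $J_\Y$; after precomposing with both $J_\X$ and $J_\Y$ one obtains precisely $J_{\X\times\Y}\circ G$ for a suitable isomorphism $G\colon \X\times\Y\to\X\times\Y$ (the canonical identification $\Q\{X\otimes Y\} = \Q\{X\}\times\Q\{Y\}$, cf. \cite[Proposition~B.3]{KLM20}), which is Scott continuous by Lemma \ref{lem:monad maps-J} and Lemma \ref{lem:order iso is Scott continuous}. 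Unwinding the two nested applications of Lemma \ref{lem:lift lemma 2} and Proposition \ref{prop:Scott continuous iff Scott continuous in both variables} then gives Scott continuity of $K_{\X,\Y}$. The only calculations needed are the two explicit checks that $K_{\X,\Y}\circ(J_\X\times I_{\Y_\perp})$ and $K_{\X,\Y}\circ(I_{\X_\perp}\times J_\Y)$ and finally $K_{\X,\Y}\circ(J_\X\times J_\Y)$ have the claimed simple forms, each of which is a routine component-by-component verification using the displayed formulas for $K_{\X,\Y}$, $J_\X$, and $R_\perp$ together with \cite[Lemma~A.5]{KLM20}.
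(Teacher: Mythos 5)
Your proposal follows essentially the same route as the paper: reduce to Scott continuity in each variable separately via Proposition \ref{prop:Scott continuous iff Scott continuous in both variables}, bootstrap from the restriction along the unit maps using the technique of Lemma \ref{lem:lift lemma 2}, and observe that $K_{\X,\Y}\circ(J_\X\times J_\Y)=J_{\X\times\Y}$ is Scott continuous by Lemma \ref{lem:monad maps-J} (the paper obtains this identity for free from the monoidal-monad axioms of Theorem \ref{thm:maybe monad on qSet} rather than by the direct component check you propose). One caveat on your phrasing: Lemma \ref{lem:lift lemma 2} cannot be applied \emph{literally} in either variable, since $\X_\perp\times\Y_\perp$ with one factor held live is not itself the lift of anything (your ``partially-applied map'' has domain $\X_\perp\times\Q\{Y\}$, not $\X_\perp$); like the paper, which says only ``similar to the proof of Lemma \ref{lem:lift lemma 2}'', one must adapt that lemma's argument (decomposing the $\H$-indexed sequence via Lemma \ref{lem:lift lemma} and running the Foulis--Holland computation with the extra $\Y_\perp$ factor carried along) rather than cite the lemma itself -- a step your sketch gestures at but does not carry out, just as the paper does not.
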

\begin{proof}
Since $K$ is a double strength for the monad $\M$ on $\qSet$, it follows from \cite[1.2]{Seal13} that $K\circ (J_\X\times J_\Y)=K\circ (H_\X\times H_\Y)=H_{\X\times\Y}=J_{\X\times\Y}$, which is Scott continuous by Lemma \ref{lem:monad maps-J}. Similar to the proof of Lemma \ref{lem:lift lemma 2}, one can show that the Scott continuity of $K\circ (J_\X\times J_\Y)$ implies that $K$ is both Scott continuous in the first and in the second variable, hence Scott continuous by Proposition \ref{prop:Scott continuous iff Scott continuous in both variables}.
\end{proof}

\begin{theorem}\label{thm:monoidal monad}
The triple $( (-)_\perp,M,H)$ is a symmetric monoidal monad on $\qCPO$ with double strength $K$.
\end{theorem}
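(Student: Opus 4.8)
The plan is to reduce the entire statement to the corresponding fact about the maybe monad $\M$ on $\qSet$, which was already established in Theorem \ref{thm:maybe monad on qSet}, by transporting equations along the forgetful functor. Let $U\colon\qCPO\to\qSet$ be the functor sending a quantum cpo $(\X,R)$ to its underlying quantum set $\X$ and a Scott continuous function to its underlying function. This functor is faithful, and it is strict monoidal, since the monoidal product, tensor unit, associator, unitors, and symmetry of $\qCPO$ are all computed on underlying quantum sets (Theorem \ref{thm:qCPO is monoidal}). By the explicit descriptions in Proposition \ref{prop:lift functor qPOS}, Proposition \ref{prop:lift functor qPOS2}, and Theorem \ref{thm:maybe monad on qSet}, the functor $U$ carries the endofunctor $(-)_\perp$ to $\M$ and carries the families $H_\X = J_\X$, $M_\X$, and $K_{\X,\Y}$ exactly to the unit, multiplication, and double strength of $\M$ (after the harmless relabelling of the added atom $\CC_\perp$).

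First I would confirm that $(-)_\perp$ is genuinely an endofunctor on $\qCPO$. On objects this is Lemma \ref{lem:lift lemma}, which shows that $(\X,R)_\perp$ is a quantum cpo whenever $(\X,R)$ is. On morphisms, given a Scott continuous $F\colon(\X,R)\to(\Y,S)$, the map $F_\perp$ is monotone by Proposition \ref{prop:lift functor qPOS2} and satisfies $F_\perp\circ J_\X = J_\Y\circ F$ by Equation (\ref{eq:lift is natural}); the right-hand side is Scott continuous because $J_\Y$ is (Lemma \ref{lem:monad maps-J}) and compositions of Scott continuous maps are Scott continuous (Lemma \ref{lem:composition is Scott continuous}), so $F_\perp$ is Scott continuous by Lemma \ref{lem:lift lemma 2}; functoriality of $(-)_\perp$ then follows from Proposition \ref{prop:lift functor qPOS2}. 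Next, Lemmas \ref{lem:monad maps-J}, \ref{lem:monad maps-M}, and \ref{lem:monad maps-K} ensure that $H_\X$, $M_\X$, and $K_{\X,\Y}$ are Scott continuous, i.e., are bona fide morphisms of $\qCPO$. The naturality of these three families, the unit and associativity axioms for $(H,M)$, and Seal's five axioms \cite[1.2]{Seal13} for the double strength $K$ are all equations between morphisms of $\qCPO$; applying the faithful functor $U$ turns each of them into the corresponding equation in $\qSet$, which holds by Theorem \ref{thm:maybe monad on qSet}. Since $U$ is faithful, these equations already hold in $\qCPO$, which establishes that $((-)_\perp, M, H)$ is a monad on $\qCPO$ and that $K$ exhibits it as symmetric monoidal.

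Almost all of the analytic substance — that lifting preserves quantum cpos and that each structure map is Scott continuous — has already been dealt with in the preceding lemmas, so this argument is essentially bookkeeping. The single point that warrants care is verifying that every arrow occurring in the defining diagrams of a symmetric monoidal monad (the unit and associativity squares of $M$, the compatibility of $K$ with the unitors, the associator, and the symmetry, and the interaction of $K$ with $H$ and with $M$) is itself a morphism of $\qCPO$: each such arrow is built from $H$, $M$, $K$, the structural isomorphisms of $\qCPO$, and $(-)_\perp$ or $\times$ applied to morphisms of $\qCPO$, hence is Scott continuous by the lemmas above together with Lemma \ref{lem:composition is Scott continuous} and Lemma \ref{lem:tensor product of Scott continuous maps is Scott continuous}. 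Once this is checked, faithfulness of $U$ closes the proof with no further computation.
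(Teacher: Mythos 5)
Your proposal is correct and follows essentially the same route as the paper: reduce everything to Theorem \ref{thm:maybe monad on qSet} via the (faithful, strict monoidal) forgetful functor to $\qSet$, and observe that the only new content is the Scott continuity of the structure maps, supplied by Lemmas \ref{lem:monad maps-J}, \ref{lem:monad maps-M}, and \ref{lem:monad maps-K}. Your derivation that $F_\perp$ is Scott continuous for Scott continuous $F$ (via $F_\perp\circ J_\X=J_\Y\circ F$ and Lemma \ref{lem:lift lemma 2}) is a detail the paper leaves implicit, and it is handled correctly.
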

\begin{proof}
By Theorem \ref{thm:maybe monad on qSet}, $(\M,M,H)$ is a symmetric monoidal monad on $\qSet$. Let $(\X,R)$ be a quantum cpo. Then $\M\X$ is the underlying quantum set of $(\X,R)_\perp$, and the $\X$-components $J_\X$ and $M_\X$ of $H$ and $M$ are Scott continuous by Lemmas \ref{lem:monad maps-J} and \ref{lem:monad maps-M}, whence $((-)_\perp,M,H)$ is a monad on $\qCPO$. Moreover, if $(\Y,S)$ is another quantum cpo, then the $(\X,\Y)$-component of the double strength $K$ of $(\M,M,H)$ is Scott continuous by Lemma \ref{lem:monad maps-K}, hence $((-)_\perp,M,H)$ is a symmetric monoidal monad on $\qCPO$ with double strength $K$.
\end{proof}

We denote the Kleisli category of $(-)_\perp$ by $\mathbf{Kl}$. So its objects are the objects of $\qCPO$, and morphisms $F\in\mathbf{Kl}(\X,\Y)$ if and only if $F\in\qCPO(\X,\Y_\perp)$. Composition $G\circ F$ of $F$ with $G\in\mathbf{Kl}(\Y,\Z)$ is given by $M_\Z\circ G_\perp\circ F$.
Furthermore, we define $\V:\qCPO\to\mathbf{Kl}$ to be the functor whose action on objects is given by $\V((\X,R))=(\X,R)_\perp$, and that acts on morphisms from $(\X,R)$ to $(\Y,S)$ in $\qCPO$ by $F\mapsto H_{(\Y,S)}\circ F$.

\begin{corollary}\label{cor:Kl is monoidal}

There exists a symmetric monoidal product $\odot$ on $\mathbf{Kl}$ such that the functor $\mathcal V:\qCPO\to\mathbf{Kl}$ is strict monoidal. In particular, we have $\X\odot\Y=\X\times\Y$ on objects, and for morphisms $F:\X\circlearrow\X'$ and $G:\Y\circlearrow\Y'$, we have that $F\odot G:\X\odot\Y\circlearrow\X'\odot\Y'$ is given by the following composition
\[  \X\times\Y\xrightarrow{F\times G}\X_\perp'\times\Y_\perp'\xrightarrow{K_{\X',\Y'}}(\X'\times\Y')_\perp,\]
so $F\odot G=K_{\X',\Y'}\circ (F\times G)$.
\end{corollary}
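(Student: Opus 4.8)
The plan is to reduce the statement to the general correspondence between symmetric monoidal monads and symmetric monoidal structures on Kleisli categories. By Theorem~\ref{thm:qCPO is monoidal}, the category $\qCPO$ is symmetric monoidal with product $\times$ and unit $\mathbf 1$, and by Theorem~\ref{thm:monoidal monad} the lifting monad $((-)_\perp, M, H)$ is a symmetric monoidal monad on $\qCPO$ with double strength $K$. Applying \cite[Proposition~1.2.2]{Seal13}, exactly as in the proof of Theorem~\ref{thm:maybe monad on qSet} for $\qSet$, we obtain a symmetric monoidal product $\odot$ on the Kleisli category $\mathbf{Kl}$ for which the canonical functor $\mathcal V\colon\qCPO\to\mathbf{Kl}$ is strict monoidal. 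This establishes the first assertion of the corollary.

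It then remains to unwind the construction to read off the explicit formulas. Since $\mathcal V$ is the identity on objects and strict monoidal, the product on $\mathbf{Kl}$ must satisfy $\X\odot\Y=\mathcal V\X\odot\mathcal V\Y=\mathcal V(\X\times\Y)=\X\times\Y$ on objects, with monoidal unit $\mathbf 1$, and with associator, unitors and symmetry the $\mathcal V$-images of the corresponding coherence isomorphisms of $\times$ in $\qCPO$. For the action on morphisms, given Kleisli morphisms $F\colon\X\circlearrow\X'$ and $G\colon\Y\circlearrow\Y'$, i.e.\ Scott continuous functions $F\colon\X\to\X'_\perp$ and $G\colon\Y\to\Y'_\perp$, the product induced by a symmetric monoidal monad is obtained by forming $F\times G\colon\X\times\Y\to\X'_\perp\times\Y'_\perp$ in $\qCPO$ and postcomposing with the double strength, which yields $F\odot G=K_{\X',\Y'}\circ(F\times G)\colon\X\times\Y\to(\X'\times\Y')_\perp$. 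This is precisely the composite in the statement.

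To see that this prescription is genuinely functorial one checks, using the axioms of a symmetric monoidal monad from \cite[1.2]{Seal13}: that $\odot$ preserves Kleisli identities, which reduces to $K_{\X,\Y}\circ(H_\X\times H_\Y)=H_{\X\times\Y}$; and that it is compatible with Kleisli composition $\bullet$, which follows from the naturality of $K$ together with its compatibility with the multiplication $M$. All remaining coherence diagrams for $\odot$ in $\mathbf{Kl}$ are likewise packaged into \cite[Proposition~1.2.2]{Seal13} and need not be verified by hand. Since all of the analytic content—that $J_\X$, $M_\X$ and $K_{\X,\Y}$ are Scott continuous—was already discharged in Theorem~\ref{thm:monoidal monad}, the only point requiring care here is the bookkeeping that identifies the abstract Kleisli tensor of \cite{Seal13} with the concrete formula $K_{\X',\Y'}\circ(F\times G)$; this is immediate from the way the double strength enters the construction, so no real obstacle arises.
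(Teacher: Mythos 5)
Your proof is correct and takes essentially the same route as the paper: the paper's own argument is simply to cite Theorem \ref{thm:monoidal monad} for the symmetric monoidal monad structure of $(-)_\perp$ and then invoke \cite[Proposition~1.2.2]{Seal13} and its proof, which is exactly what you do. Your additional unwinding of the explicit formula $F\odot G=K_{\X',\Y'}\circ(F\times G)$ is a harmless elaboration of what the paper leaves implicit in the phrase ``and its proof.''
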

\begin{proof}Since $(-)_\perp$ is a symmetric monoidal monad on $\qCPO$ by Theorem \ref{thm:monoidal monad}, this follows directly from \cite{Seal13}*{Proposition 1.2.2} and its proof.
\end{proof}

\subsection{Pointed quantum cpos}

\begin{lemma}\label{lem:pointed order}
Let $(\X,R)$ be a quantum poset such for which there is a one-dimensional atom $X_\perp\atomof\X$ such that $R(X_\perp,X)=L(X_\perp,X)$ for each $X\atomof\X$. Then $R(X,X_\perp)=0$ for each $X\atomof\X$ such that $X\neq X_\perp$.
\end{lemma}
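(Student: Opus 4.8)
The statement says: if $X_\perp$ is a one-dimensional atom of $(\X,R)$ with $R(X_\perp,X) = L(X_\perp,X)$ for every $X \atomof \X$, then $R(X,X_\perp) = 0$ for every $X \atomof \X$ distinct from $X_\perp$. The natural approach is to exploit the antisymmetry axiom $R \wedge R^\dag \leq I_\X$ together with the transitivity axiom $R \circ R \leq R$. Fix an atom $X \atomof \X$ with $X \neq X_\perp$. The plan is to show that whatever the subspace $R(X,X_\perp) \leq L(X,X_\perp)$ is, combining it with the hypothesis on $R(X_\perp,-)$ forces it into a position that antisymmetry (with reflexivity) disallows unless it is $0$.

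The key computation I would carry out is with the transitive composite. Since $R$ is transitive, $(R \circ R)(X,X) \leq R(X,X)$. Now $(R\circ R)(X,X) = \bigvee_{Z \atomof \X} R(Z,X)\cdot R(X,Z) \geq R(X_\perp, X) \cdot R(X, X_\perp) = L(X_\perp, X)\cdot R(X,X_\perp)$, using the hypothesis $R(X_\perp,X) = L(X_\perp,X)$. Because $X_\perp$ is one-dimensional, $L(X_\perp,X)$ is the full space of linear maps $X_\perp \to X$, and for any nonzero operator $a \in R(X,X_\perp)$ the set $\{b a : b \in L(X_\perp,X)\}$ spans a subspace of $L(X,X)$ that contains a rank-one projection-like operator; more precisely, writing $X_\perp = \CC v$ for a unit vector $v$, the composite $L(X_\perp,X)\cdot R(X,X_\perp)$ equals $\{x \mapsto \langle w \suchthat a x\rangle \, u : u \in X\}\cdot$-type span, which one checks is all of $L(X,X)$ as soon as $R(X,X_\perp) \neq 0$ (since $a \neq 0$ means $ax_0 \neq 0$ for some $x_0$, and then ranging $b$ over $L(X_\perp,X)$ produces every operator $X \to X$ whose ``input direction'' is $x_0$, and transitivity then bootstraps this to all of $L(X,X)$ after one more composition). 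So I would argue: if $R(X,X_\perp) \neq 0$, then $R(X,X) \supseteq L(X_\perp,X) \cdot R(X,X_\perp)$, and iterating transitivity once more pushes $R(X,X)$ up to all of $L(X,X)$, i.e. $R(X,X) = L(X,X)$.

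Once $R(X,X) = L(X,X)$, I invoke antisymmetry: $(R \wedge R^\dag)(X,X) = R(X,X) \wedge R(X,X)^\dag = L(X,X) \wedge L(X,X) = L(X,X)$, which must be $\leq I_\X(X,X) = \CC 1_X$. Since $X \neq X_\perp$ we still only know $\dim X \geq 1$; but if $\dim X \geq 2$ this is an immediate contradiction, and if $\dim X = 1$ then $X$ is a one-dimensional atom with $R(X_\perp,X) = L(X_\perp,X) \neq 0$ and (by the assumed $R(X,X_\perp)\neq 0$) also $R(X,X_\perp) = L(X,X_\perp) \neq 0$, so $(R\wedge R^\dag)(X_\perp, X) = R(X_\perp,X) \wedge R(X,X_\perp)^\dag = L(X_\perp,X) \neq 0$, contradicting $R\wedge R^\dag \leq I_\X$ because $X \neq X_\perp$. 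Either way we reach a contradiction, so $R(X,X_\perp) = 0$.

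\textbf{Main obstacle.} The delicate point is the operator-space bootstrapping claim: showing rigorously that $R(X,X_\perp) \neq 0$ together with $R(X_\perp,X) = L(X_\perp,X)$ and transitivity forces $R(X,X)$ to be all of $L(X,X)$. This is where the one-dimensionality of $X_\perp$ is essential — it makes $L(X_\perp,X)$ ``large enough'' (isomorphic to $X$ itself as a space of column vectors) that composing it against a single nonzero operator already yields a full ``column'' of $L(X,X)$, and transitivity of $R$ then closes this up under left multiplication by the same column, giving the whole matrix algebra $L(X,X)$. I would write this out carefully by picking a unit vector $v$ spanning $X_\perp$, a nonzero $a \in R(X,X_\perp)$ with $av \neq 0$ (note $a$ is determined by $av$ since $\dim X_\perp = 1$), and then checking that $\{b \circ a : b \in L(X_\perp,X)\} = \{u\,\langle w_a \suchthat -\rangle : u \in X\}$ for the fixed functional $w_a = a^\dag(\cdot)$, and finally that $R \ni$ this rank-one-column space composed with itself spans $L(X,X)$. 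Alternatively — and this may be cleaner — one can sidestep the bootstrap by working directly with the $2\times 2$ ``submatrix'' of $R$ on atoms $\{X_\perp, X\}$ and noting it must itself be an order, which by the finite/small classification of orders on two-atom quantum sets (and antisymmetry applied at the off-diagonal component $(X_\perp, X)$ versus $(X, X_\perp)$) immediately gives $R(X,X_\perp) = 0$; I expect the off-diagonal antisymmetry argument $(R\wedge R^\dag)(X_\perp,X) \leq I_\X(X_\perp,X) = 0$ to be the fastest route once one observes $R(X_\perp,X)\cdot$ being full makes $R(X,X_\perp)$ and $R(X_\perp,X)^\dag$ overlap.
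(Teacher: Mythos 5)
Your primary route has a step that fails: after extracting a single nonzero $a \in R(X,X_\perp)$, the space $L(X_\perp,X)\cdot \CC a$ is the ``column'' $\{u\,\langle w_a\,|\,-\rangle : u \in X\}$ for a fixed functional $w_a$, and $\CC 1_X$ plus this column is already closed under composition (it is a subalgebra of $L(X,X)$: $u w_a^* \cdot u' w_a^* = \langle w_a, u'\rangle\, u w_a^*$). So ``iterating transitivity once more'' does \emph{not} push $R(X,X)$ up to all of $L(X,X)$. The diagonal-antisymmetry contradiction can still be salvaged from the column alone when $\dim X \geq 2$ — the intersection $(\CC 1_X + \mathrm{col}) \cap (\CC 1_X + \mathrm{col})^\dag$ contains $\CC 1_X + \CC\, w_a w_a^*$, which exceeds $I_\X(X,X)$ — but that is not the argument you wrote, and you would still need the separate $\dim X = 1$ case.

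The ``alternative'' you mention in your last sentence is exactly the paper's proof, and it is complete as a one-line computation with no transitivity, no reflexivity, and no case split on $\dim X$: since $R(X_\perp,X) = L(X_\perp,X)$, the dagger $R(X_\perp,X)^\dag = L(X,X_\perp)$ is the full space, so
\[
(R\wedge R^\dag)(X,X_\perp) \;=\; R(X,X_\perp)\wedge R(X_\perp,X)^\dag \;=\; R(X,X_\perp)\wedge L(X,X_\perp) \;=\; R(X,X_\perp),
\]
and antisymmetry together with $X\neq X_\perp$ gives $R(X,X_\perp) \leq I_\X(X,X_\perp) = 0$. You correctly identified this as the fastest route; it should be the whole proof, with the transitivity bootstrap discarded.
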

\begin{proof}
By a direct calculation we have \begin{align*} 0 & = I_\X(X,X_\perp)=(R\wedge R^\dag)(X,X_\perp)=R(X,X_\perp)\wedge R^\dag(X,X_\perp)=R(X,X_\perp)\wedge R(X_\perp,X)^\dag\\
& = R(X,X_\perp)\wedge L(X_\perp,X)^\dag=R(X,X_\perp)\wedge L(X,X_\perp)=R(X,X_\perp)\end{align*} for each $X\atomof\X$ such that $X\neq X_\perp$.
\end{proof}

\begin{lemma}\label{lem:pointed qcpo}
Let $(\X,R)$ be a quantum cpo. Then the following conditions are equivalent:
\begin{itemize}
    \item[(a)] There is a (necessarily unique) quantum cpo $(\Y,S)$ such that $(\X,R)=(\Y,S)_\perp$;
    \item[(b)] there exists a Scott continuous function $B_\X:\mathbf 1\to\X$ such that for each atomic quantum set $\H$ the composition $B_\X\circ !_\H$ is the least element of $\qSet(\H,\X)$;
    \item[(c)] There exists a (necessarily unique) atom $\CC_\perp\atomof\X$ such that $R(\CC_\perp,X)=L(\CC_\perp,X)$ for each $X\atomof\X$.
\end{itemize}
The function $B_\X$ in (b) and the atom $\CC_\perp$ in (c) are related to each other via $B_\X(\CC,X)=\delta_{X,\CC_\perp}L(\CC,X)$.  
\end{lemma}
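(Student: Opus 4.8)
The strategy is to establish the cycle of implications $(a)\Rightarrow(c)\Rightarrow(b)\Rightarrow(a)$, together with the uniqueness claims and the explicit relation between $B_\X$ and $\CC_\perp$. The implication $(a)\Rightarrow(c)$ is essentially immediate from the definition of the lifted order in Proposition \ref{prop:lift functor qPOS}: if $(\X,R)=(\Y,S)_\perp$, then the freshly added atom $\CC_\perp$ satisfies $R(\CC_\perp,X)=R_\perp(\CC_\perp,X)=L(\CC_\perp,X)$ for every $X\atomof\X$ by the middle case of the formula defining $R_\perp$. For uniqueness of the atom in (c), suppose $\CC_\perp$ and $\CC_\perp'$ both have the stated property; then $R(\CC_\perp,\CC_\perp')=L(\CC_\perp,\CC_\perp')$ and $R(\CC_\perp',\CC_\perp)=L(\CC_\perp',\CC_\perp)$, so antisymmetry ($R\wedge R^\dag\leq I_\X$) forces $\CC_\perp=\CC_\perp'$; this is the content of (and can be cited from) the reasoning already used for Lemma \ref{lem:pointed order}.

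For $(c)\Rightarrow(b)$, given the atom $\CC_\perp$, define $B_\X:\mathbf 1\to\X$ by $B_\X(\CC,X)=\delta_{X,\CC_\perp}L(\CC,X)$. First one checks $B_\X$ is a function: $B_\X^\dag\circ B_\X$ has $(\CC,\CC)$-component $L(\CC_\perp,\CC)\cdot L(\CC,\CC_\perp)=\CC 1_\CC=I_{\mathbf 1}(\CC,\CC)$, and $B_\X\circ B_\X^\dag$ has $(X,X')$-component $\delta_{X,\CC_\perp}\delta_{X',\CC_\perp}L(\CC,X')L(X,\CC)\leq I_\X(X,X')$. It is Scott continuous automatically by Proposition \ref{prop:finite qposet is qcpo}, since $\mathbf 1$ is finite. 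For an atomic quantum set $\H=\Q\{H\}$, the composite $B_\X\circ {!}_\H$ has component $(B_\X\circ {!}_\H)(H,X)=\delta_{X,\CC_\perp}L(H,X)$, and for an arbitrary function $G:\H\to\X$ one computes, using $R(\CC_\perp,X)=L(\CC_\perp,X)$ and $R(X,\CC_\perp)=0$ for $X\neq\CC_\perp$ (Lemma \ref{lem:pointed order}), that $R\circ(B_\X\circ{!}_\H)=R\circ G$ component-wise on the $\CC_\perp$-column, while on other columns $R\circ(B_\X\circ{!}_\H)$ vanishes; hence $G\leq R\circ(B_\X\circ{!}_\H)^{\dag}$ is automatic, i.e.\ $B_\X\circ{!}_\H\sqsubseteq G$ by the characterization of $\sqsubseteq$ in Section \ref{sub:quantum posets}. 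This shows $B_\X\circ{!}_\H$ is the least element of $\qSet(\H,\X)$. (For $\W$ a general quantum set one may reduce to the atomic case via the canonical injections $J_W$, but the statement only asks for atomic $\H$.) Uniqueness of a Scott continuous $B_\X$ with this property follows because any two such $B_\X,B_\X'$ satisfy $B_\X\circ{!}_{\mathbf 1}\sqsubseteq B_\X'\circ{!}_{\mathbf 1}$ and conversely, so $B_\X\sqsubseteq B_\X'\sqsubseteq B_\X$; antisymmetry of the order on $\qSet(\mathbf 1,\X)$ (an ordinary poset) gives $B_\X=B_\X'$.

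For $(b)\Rightarrow(a)$, from $B_\X$ extract the atom: since $B_\X:\mathbf 1\to\X$ is a function, there is a unique atom $\CC_\perp\atomof\X$ with $B_\X(\CC,\CC_\perp)\neq 0$, necessarily one-dimensional, and $B_\X(\CC,X)=\delta_{X,\CC_\perp}L(\CC,X)$ as in Lemma \ref{lem:b is order iso}/Proposition \ref{classical.D}. The hypothesis that $B_\X\circ{!}_\H$ is least forces, for each $X\atomof\X$, the inequality $J_X^{\dag}\circ R\circ B_\X\circ{!}_{\mathbf 1}$-type relations that translate (via the characterization of $\sqsubseteq$ and a column-wise computation exactly as in Proposition \ref{classical.E}) into $R(\CC_\perp,X)=L(\CC_\perp,X)$; this recovers (c). Now set $\Y=\Q(\At(\X)\setminus\{\CC_\perp\})$ with $S=J_\Y^{\dag}\circ R\circ J_\Y$ the induced order. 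One verifies directly from the formulas that $(\Y,S)_\perp=(\X,R)$: the underlying quantum set is $\Y\uplus\Q\{\CC_\perp\}=\X$, and $R_\perp$ agrees with $R$ because $R_\perp$ on $\Y\times\Y$ is $S=$ (restriction of $R$), on the $\CC_\perp$-row is $L(\CC_\perp,-)$ which equals $R(\CC_\perp,-)$ by (c), and on the $\CC_\perp$-column (for atoms of $\Y$) is $0$ which equals $R(X,\CC_\perp)$ by Lemma \ref{lem:pointed order}. That $(\Y,S)$ is a quantum cpo follows since it is a subset of $\X$ cut out by removing one atom — finite relative to $\X$ in the relevant local sense — but more directly: $(\Y,S)$ is a subposet of $\X$, and one checks it is a sub-cpo using Proposition \ref{prop:subqcpo}, or simply observes that the lifting functor lands in quantum cpos (Lemma \ref{lem:lift lemma}) and apply it in reverse. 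Uniqueness of $(\Y,S)$: if $(\Y',S')_\perp=(\X,R)$ as well, its added bottom atom must coincide with $\CC_\perp$ by the uniqueness in (c), hence $\At(\Y')=\At(\X)\setminus\{\CC_\perp\}=\At(\Y)$ and $S'=S$ since both are the order induced from $R$.

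\textbf{Main obstacle.} The routine parts are the component-wise verifications that $B_\X$ is a function and that the various order formulas match; the genuinely delicate point is the precise translation, in the direction $(b)\Rightarrow(c)$, between the ``least element'' condition on composites $B_\X\circ{!}_\H$ for all atomic $\H$ and the algebraic identity $R(\CC_\perp,X)=L(\CC_\perp,X)$. One must be careful that testing against all atomic $\H$ (equivalently, against $\mathbf 1$ together with bijections $\H\cong\H'$) really does pin down every component of $R(\CC_\perp,-)$, rather than only giving an inequality; this is where the explicit characterizations of $\sqsubseteq$ from Section \ref{sub:quantum posets} and the computation technique of Proposition \ref{classical.E} (passing through $\Delta_{a,b}$-scalars and Lemma \ref{lem:factors through a subset}) are needed. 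I expect this to be the only step requiring real care; everything else is bookkeeping with the formulas for $R_\perp$, $M_\X$, and $B_\X$ already in hand.
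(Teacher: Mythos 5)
Your cycle $(a)\Rightarrow(c)\Rightarrow(b)\Rightarrow(a)$ is a legitimate reorganization of the paper's $(a)\Rightarrow(b)\Rightarrow(c)\Rightarrow(a)$, and the steps $(a)\Rightarrow(c)$ and $(c)\Rightarrow(b)$ are essentially sound. (One quibble in $(c)\Rightarrow(b)$: your intermediate claim that $R\circ(B_\X\circ{!}_\H)$ ``vanishes on other columns'' is false and the displayed inequality $G\leq R\circ(B_\X\circ{!}_\H)^{\dag}$ has a spurious dagger; the correct and much simpler observation is that $(R\circ B_\X\circ{!}_\H)(H,Y)=R(\CC_\perp,Y)\cdot L(H,\CC_\perp)=L(H,Y)$ is the \emph{maximal} component for every $Y$, so $R\circ G\leq R\circ B_\X\circ{!}_\H$ holds for every $G$, which is exactly $B_\X\circ{!}_\H\sqsubseteq G$. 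This is how the paper argues and it is a one-line fix.) You have also misidentified the delicate point: $(b)\Rightarrow(c)$ in the paper is a short computation — test against $J_X\:\Q\{X\}\to\X$ for a single atom $X$ at a time and right-multiply by $L(\CC_\perp,X)$ — and does not require the machinery of Proposition \ref{classical.E}.

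The genuine gap is in $(b)\Rightarrow(a)$ (equivalently $(c)\Rightarrow(a)$): you never actually prove that the truncated quantum poset $(\Y,S)$, obtained by deleting $\CC_\perp$ and taking the induced order, is a quantum cpo. The three routes you gesture at all fail as stated. ``Finite relative to $\X$ in the relevant local sense'' is not an argument, since $\Y$ need not be finite. Citing Proposition \ref{prop:subqcpo} is circular: that proposition lets you conclude $(\Y,S)$ is a quantum cpo only \emph{after} you have shown $\Y$ is a sub-cpo of $\X$, i.e.\ closed under limits — which is precisely the missing content. And ``applying Lemma \ref{lem:lift lemma} in reverse'' is not available: that lemma goes from $\X$ to $\X_\perp$, not back. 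The required argument (which is the heart of the paper's $(c)\Rightarrow(a)$) is: given a monotone sequence $K_1\sqsubseteq K_2\sqsubseteq\cdots\:\H\to\Y$, push forward to $G_n=J_\Y\circ K_n$ in $\X$ and take the limit $G_\infty$ there; since $R(X,\CC_\perp)=0$ for $X\neq\CC_\perp$ (Lemma \ref{lem:pointed order}) and $R(\CC_\perp,\CC_\perp)\supseteq\CC 1$, one has $(R\circ G)(H,\CC_\perp)=0$ iff $G(H,\CC_\perp)=0$ iff $G$ factors through $J_\Y$; applying this to $(R\circ G_\infty)(H,\CC_\perp)=\bigwedge_n(R\circ G_n)(H,\CC_\perp)=0$ shows $G_\infty=J_\Y\circ K_\infty$ for some $K_\infty$, and then $\bigwedge_n S\circ K_n=S\circ K_\infty$ follows by composing with $J_\Y^\dag$ and using \cite[Proposition~A.6]{KLM20}. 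Without this computation the implication back to $(a)$ is unsupported.
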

\begin{proof}
Assume that $(\X,R)=(\Y,S)_\perp$ for some quantum cpo $(\Y,S)$.
We define $B_\X$ as the function $B_\X(\CC,X)=\delta_{X,\CC_\perp}L(\CC,X)$. Let $\H$ be atomic. We have $B_\X\circ !_\H(H,X)=B_\X(\CC,X)\cdot L(H,\CC)=\delta_{X,\CC_\perp}L(H,X)$, hence $(R\circ B_\X\circ !_\H)(H,Y)=(S_\perp\circ B_\X\circ !_\H)(H,Y)=\bigvee_{X\atomof\X}S_\perp(X,Y)\cdot \delta_{X,\CC_\perp}L(H,X)=L(\CC_\perp,Y)\cdot L(H,\CC_\perp)=L(H,Y)$. So, for each $F\in\qSet(\H,\X)$, we have $R\circ F\leq R\circ B_\X\circ !_{\H}$, expressing that $B_\X\circ !_\H$ is the least element of $\qSet(\H,\X)$.

Now assume that there exists a function $B_\X:\mathbf 1\to\X$ satisfying the conditions in (b). By Lemma \ref{lem:b is order iso} and the paragraph preceding it, there is a one-dimensional atom $\CC_\perp\atomof\X$ such that $B_\X(\CC,X)=\delta_{X,\CC_\perp}L(\CC,X)$ for each $X\atomof\X$, which is Scott continuous by Example \ref{ex:function with trivially ordered domain is Scott continuous}.

Then for any other atom $X\atomof\X$, we have that $B_\X\circ !_{\Q\{X\}}$ is smaller than or equal to the embedding $J_X:\Q\{X\}\to\X$ in $\qSet(\Q\{X\},\X)$. Hence $J_X\leq R\circ B_\X\circ !_{\Q\{X\}}$, so 
\begin{align*}
\CC 1_X & =J_X(X,X)\leq (R\circ B_\X\circ !_{\Q\{X\}})(X,X)=\bigvee_{Y\atomof\X}R(Y,X)\cdot B_\X(\CC,Y)\cdot !_\X(X,\CC)\\
& =R(\CC_\perp,X)\cdot L(\CC,\CC_\perp)\cdot L(X,\CC)=R(\CC_\perp,X)\cdot L(X,\CC_\perp).
\end{align*}
Multiplying both sides on the right with $L(\CC_\perp,X)$ yields
\[L(\CC_\perp,X)\leq R(\CC_\perp,X)\cdot L(X,\CC_\perp)\cdot L(\CC_\perp,X)=R(\CC_\perp,X)\cdot L(\CC_\perp,\CC_\perp)=R(\CC_\perp,X),\] forcing $R(\CC_\perp,X)=L(\CC_\perp,X).$

Finally, to show that (c) implies (a), we first note that $R(X,\CC_\perp)=0$ by Lemma \ref{lem:pointed order} for each atom $X\neq\CC_\perp$. 
Let $\Y$ be the quantum set obtained from $\X$ by removing $\CC_\perp$. We define the order $S$ on $\Y$ as the relative order, i.e., $S=J_\Y^\dag\circ R\circ J_\Y$. We show that $(\Y,S)$ is a quantum cpo: Let $\H$ be an atomic quantum set and let $K_1\sqsubseteq K_2\sqsubseteq\cdots:\H\to \Y$ be a monotonically ascending sequence of functions. Let $G_n=J_{\Y}\circ K_n$ for each $n\in\NN$. Since $J_{\Y}$ by definition of the order on $\Y$ is an order embedding, hence monotone, it follows from Lemma \ref{lem:left multiplication by Scott continuous function is Scott continuous} that $G_1\sqsubseteq G_2\sqsubseteq \cdots:\H\to\X$ is a monotonically ascending sequence, which therefore has a limit $G_\infty$. 
For any $G:\H\to\X$, we have
\[\ran G=\{X\atomof\X:G(H',X)\neq 0 \text{ for some }H'\atomof\H\}=\{X\atomof\X:G(H,X)\neq 0\}.\]
Moreover, by Lemma \ref{lem:factors through a subset} we have $\ran G\subseteq \Y$ if and only if $G$ factors via $J_{\Y}$. Thus $G(H,X_\perp)=0$ if and only if $G=J_{\Y}\circ K$ for some $K:\H\to \Y$
Furthermore, we have
\[(R\circ G)(H,X_\perp)=\bigvee_{X\atomof \X}R(X,X_\perp)\cdot G(H,X_\perp) = R(X_\perp,X_\perp)\cdot G(H,X_\perp),\]
where we used that $R(X,\CC_\perp)=0$ for each atom $X\neq\CC_\perp$ in the last equality. Since $R(X_\perp,X_\perp)\geq I_{\X}(X,X)\geq\CC 1_X$, it follows that $(R\circ G)(H,X_\perp)=0$ if and only if $G(H,X_\perp)=0$ if and only if $G=J_{\Y}\circ K$ for some $K:\H\to \Y$. 
Since $G_n\nearrow G_\infty$, we have $(R\circ G_\infty)(H,X_\infty)=\bigwedge_{n\in\NN}(R\circ G_n)(H,X_\perp)=0,$
hence $G_\infty=J_{\Y}\circ K_\infty$ for some $K_\infty:\H\to \Y$. Then $G_n\nearrow G_\infty$ translates to $\bigwedge_{n\in \NN} R\circ J_{\Y}\circ K_n=\bigwedge_{n\in\NN}R\circ G_n=R\circ G_\infty=R\circ J_{\Y}\circ K_\infty.$
It then follows from \cite[Proposition~A.6]{KLM20} that
\begin{align*} \bigwedge_{n\in\NN}S\circ K_n& =\bigwedge_{n\in\NN}J_{\Y
}^\dag\circ R\circ J_{\Y}\circ K_n=J_{\Y}^\dag\circ\bigwedge_{n\in\NN} R\circ J_{\Y}\circ K_n\\
& =J_{\Y}^\dag\circ R\circ J_{\Y}\circ K_\infty=S\circ K_\infty,\end{align*} 
which shows that $K_n\nearrow K_\infty$, so $\Y$ is indeed a quantum cpo. Clearly $(-)_\perp$ is injective on objects, hence $(\Y,S)$ is unique.
\end{proof}

\begin{definition}\label{def:pointed quantum cpo}
We call a quantum cpo $(\X,R)$ \emph{pointed} if it satisfies one (and hence all) of the conditions in Lemma \ref{lem:pointed qcpo}. We say that a Scott continuous map $F:(\X,R)\to(\Y,S)$ between pointed quantum cpos is \emph{strict} if $F\circ B_\X=B_\Y$. The subcategory of $\qCPO$ consisting of pointed quantum cpos and strict Scott continuous maps is denoted by $\qCPO_{\perp!}$.
\end{definition}

\begin{lemma}\label{lem:pointed qcpo 2}
Let $(\X,R)$ be a pointed quantum cpo. Then for each quantum set $\Y$, the function $B_{\Y,\X}:=B_\X\circ !_\Y$ is Scott continuous, and it is the least element of $\qSet(\Y,\X)$. Moreover, if $\Z$ is another quantum set, and $F:\Z\to\Y$ is a function, then $B_{\Y,\X}\circ F=B_{\Z,\X}$.
\end{lemma}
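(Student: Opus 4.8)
The plan is to reduce everything to the case $\Y = \mathbf 1$, which is already handled by Lemma \ref{lem:pointed qcpo}(b), and then transport the conclusions along $!_\Y\colon \Y \to \mathbf 1$. First I would observe that $B_{\Y,\X} = B_\X \circ !_\Y$ is a composite of two Scott continuous functions: $B_\X$ is Scott continuous by hypothesis (part (b) of Lemma \ref{lem:pointed qcpo}), and $!_\Y\colon \Y \to \mathbf 1$ is Scott continuous because $\mathbf 1$ is a finite quantum cpo and hence every function into it is Scott continuous by Proposition \ref{prop:finite qposet is qcpo}. By Lemma \ref{lem:composition is Scott continuous}, the composite $B_{\Y,\X}$ is Scott continuous. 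Strictly speaking, since $\Y$ need not be a quantum cpo, I should phrase this as: if we equip $\Y$ with any order making it a quantum cpo, or more simply I should just note that Scott continuity of $B_{\Y,\X}$ is only being asserted when $\Y$ carries an order; if $\Y$ is merely a quantum set one interprets the claim via the trivial order $I_\Y$, for which $!_\Y$ is Scott continuous by Example \ref{ex:function with trivially ordered domain is Scott continuous}. Either way the composition lemma applies.

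\textbf{The least-element claim.} Next I would show $B_{\Y,\X}$ is the least element of $\qSet(\Y,\X)$ in the order $\sqsubseteq$. Let $G\colon \Y \to \X$ be an arbitrary function; I must show $B_{\Y,\X} \sqsubseteq G$, i.e.\ $R \circ G \leq R \circ B_{\Y,\X}$. Fix an atom $Y \atomof \Y$ and let $J_Y\colon \Q\{Y\} \to \Y$ be the canonical inclusion. The quantum set $\Q\{Y\}$ is atomic, so by Lemma \ref{lem:pointed qcpo}(b) the composite $B_\X \circ !_{\Q\{Y\}}$ is the least element of $\qSet(\Q\{Y\}, \X)$; since $G \circ J_Y \in \qSet(\Q\{Y\},\X)$, this gives $B_\X \circ !_{\Q\{Y\}} \sqsubseteq G \circ J_Y$, that is, $R \circ G \circ J_Y \leq R \circ B_\X \circ !_{\Q\{Y\}}$. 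Now $!_{\Q\{Y\}} = !_\Y \circ J_Y$ since $\mathbf 1$ is terminal in $\qSet$, so the right-hand side equals $R \circ B_\X \circ !_\Y \circ J_Y = R \circ B_{\Y,\X} \circ J_Y$. Thus $R \circ G \circ J_Y \leq R \circ B_{\Y,\X} \circ J_Y$ for every atom $Y \atomof \Y$. Since $\Y = \biguplus_{Y \atomof \Y} \Q\{Y\}$ with the $J_Y$ as the coproduct injections, and composition with a fixed relation distributes over this coproduct componentwise (precomposition preserves joins, and in particular the components of $R \circ G$ and $R \circ B_{\Y,\X}$ are recovered by precomposing with the $J_Y$), we conclude $R \circ G \leq R \circ B_{\Y,\X}$, i.e.\ $B_{\Y,\X} \sqsubseteq G$. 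Hence $B_{\Y,\X}$ is the least element of $\qSet(\Y,\X)$.

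\textbf{Functoriality in $\Y$.} Finally, for a function $F\colon \Z \to \Y$, I would compute $B_{\Y,\X} \circ F = B_\X \circ !_\Y \circ F = B_\X \circ !_\Z = B_{\Z,\X}$, using that $!_\Y \circ F = !_\Z$ because $\mathbf 1$ is terminal in $\qSet$. This completes the proof.

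\textbf{Anticipated obstacle.} I expect the only genuinely delicate point to be the componentwise argument that $R \circ G \leq R \circ B_{\Y,\X}$ follows from the atomwise inequalities $R \circ G \circ J_Y \leq R \circ B_{\Y,\X} \circ J_Y$. This is where one must be careful that precomposition by the family of inclusions $\{J_Y\}_{Y \atomof \Y}$ is "jointly faithful" on the order of $\qRel(\Y, \X)$ — which holds because the order on homsets of $\qRel$ is defined componentwise and $J_Y^\dagger \circ J_Y = I_{\Q\{Y\}}$ lets one extract the $Y$-row of any relation out of $\Y$. Everything else is routine bookkeeping with the terminality of $\mathbf 1$ and the cited lemmas.
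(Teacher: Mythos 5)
Your proposal is correct and follows essentially the same route as the paper's proof: restrict to each atom of $\Y$ via the inclusion $J_Y$, invoke Lemma \ref{lem:pointed qcpo}(b) on the atomic quantum set $\Q\{Y\}$ together with $!_{\Q\{Y\}} = {!_\Y}\circ J_Y$, and recover the global inequality componentwise, with Scott continuity and the naturality identity handled exactly as in the paper. The only cosmetic difference is that you spend more words on which order $\Y$ carries for the Scott-continuity claim, but your resolution (trivial order plus Example \ref{ex:function with trivially ordered domain is Scott continuous}) is consistent with how the paper uses the lemma.
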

\begin{proof}
Let $K:\Y\to\X$ be a function. Let $X\atomof\X$ and $Y\atomof\X$, and let $J_Y:\Q\{Y\}\to\Y$ be the embedding. Lemma \ref{lem:pointed qcpo} assures that $B_\X\circ !_{\Q\{Y\}}$ is the least element of $\qSet(\Q\{Y\},\X)$, hence $B_{\Y,\X}\circ J_Y=B_\X\circ !_{\Y}\circ J_Y=B_\X\circ 1_{\Q\{Y\}}\leq K\circ J_Y$, so  $K(Y,X)=(K\circ J_Y)(X,Y)\leq (R\circ B_\X\circ !_\Y)(Y,X)$, showing that $B_\X\circ !_\Y\sqsubseteq K$, so $B_{\Y,\X}$ is indeed the least element of $\qSet(\Y,\X)$. Since it is the composition of Scott continuous functions, it is Scott continuous. Finally, we have $B_{\Y,\X}\circ F=B_{\X}\circ !_{\Y}\circ F=B_\X\circ !_\Z=B_{\Z,\X}$.
\end{proof}

\begin{lemma}\label{lem:property of strict function}
   Let $(\X,R)$ and $(\Y,S)$ be pointed quantum cpos and let $F:\X\to\Y$ be a Scott continuous function. Then $F$ is strict if and only if $F(\CC_\perp,Y)=\delta_{Y,\CC_\perp}L(\CC_\perp,Y)$ for each $Y\atomof\Y$. 
\end{lemma}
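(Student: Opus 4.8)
The statement is an ``if and only if'' characterizing strictness of a Scott continuous function $F:(\X,R)\to(\Y,S)$ between pointed quantum cpos purely in terms of the components of $F$ at the bottom atom. The natural route is to unfold the definition of strictness, $F\circ B_\X = B_\Y$, into an equation of binary relations $\mathbf 1\to\Y$, and then compute both sides componentwise. Recall from Lemma \ref{lem:pointed qcpo} that $B_\X(\CC,X)=\delta_{X,\CC_\perp}L(\CC,X)$ and similarly $B_\Y(\CC,Y)=\delta_{Y,\CC_\perp}L(\CC,Y)$, where I write $\CC_\perp$ for the bottom atom of each of $\X$ and $\Y$ (relabelling as needed). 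So strictness says precisely that the relation $F\circ B_\X:\mathbf 1\to\Y$ equals the relation $B_\Y$.

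\textbf{Key steps.} First I would compute $F\circ B_\X$ at an arbitrary atom $Y\atomof\Y$:
\[
(F\circ B_\X)(\CC,Y)=\bigvee_{X\atomof\X}F(X,Y)\cdot B_\X(\CC,X)=\bigvee_{X\atomof\X}F(X,Y)\cdot \delta_{X,\CC_\perp}L(\CC,X)=F(\CC_\perp,Y)\cdot L(\CC,\CC_\perp).
\]
Since $\CC_\perp$ is one-dimensional, $L(\CC,\CC_\perp)$ is one-dimensional and $F(\CC_\perp,Y)\cdot L(\CC,\CC_\perp)$ is a subspace of $L(\CC,Y)$ whose dimension equals that of $F(\CC_\perp,Y)$ (right multiplication by a nonzero operator between one-dimensional spaces is an isomorphism onto its image). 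Concretely, $F(\CC_\perp,Y)\cdot L(\CC,\CC_\perp) = L(\CC,Y)$ if $F(\CC_\perp,Y)\neq 0$, and $=0$ otherwise; here I would use that $L(\CC_\perp,Y)\cdot L(\CC,\CC_\perp)=L(\CC,Y)$ because $L(\CC_\perp,Y)$ is also one-dimensional. On the other side, $B_\Y(\CC,Y)=\delta_{Y,\CC_\perp}L(\CC,Y)$. Hence $F\circ B_\X=B_\Y$ holds if and only if, for every $Y\atomof\Y$, we have $F(\CC_\perp,Y)\neq 0$ exactly when $Y=\CC_\perp$, and moreover $F(\CC_\perp,\CC_\perp)\cdot L(\CC,\CC_\perp)=L(\CC,\CC_\perp)$, which (again using one-dimensionality) forces $F(\CC_\perp,\CC_\perp)=L(\CC_\perp,\CC_\perp)=\CC 1_{\CC_\perp}$. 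This is exactly the condition $F(\CC_\perp,Y)=\delta_{Y,\CC_\perp}L(\CC_\perp,Y)$ after identifying the one-dimensional spaces appropriately. For the reverse implication, assuming $F(\CC_\perp,Y)=\delta_{Y,\CC_\perp}L(\CC_\perp,Y)$, the same computation immediately gives $(F\circ B_\X)(\CC,Y)=\delta_{Y,\CC_\perp}L(\CC,\CC_\perp)\cdot?$; I need to recheck that $L(\CC_\perp,\CC_\perp)\cdot L(\CC,\CC_\perp)=L(\CC,\CC_\perp)$, which holds since $L(\CC_\perp,\CC_\perp)=\CC 1_{\CC_\perp}$, and conclude $F\circ B_\X=B_\Y$.

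\textbf{Main obstacle.} There is no serious obstacle; this is a componentwise unfolding. The one point requiring care is the bookkeeping with one-dimensional Hilbert spaces and the operator-space products $A\cdot B$: one must verify that for one-dimensional $\CC_\perp$, a subspace $A\subseteq L(\CC_\perp,Y)$ satisfies $A\cdot L(\CC,\CC_\perp)=L(\CC,Y)$ iff $A=L(\CC_\perp,Y)$ iff $A\neq 0$, and that $A\cdot L(\CC,\CC_\perp)=0$ iff $A=0$. This follows because composing with a fixed isomorphism of one-dimensional spaces is a linear bijection $L(\CC_\perp,Y)\to L(\CC,Y)$ that carries subspaces to subspaces of the same dimension; I would state this once as an elementary remark and then apply it. A secondary subtlety, worth a sentence, is the relabelling convention: ``the'' bottom atom $\CC_\perp$ of $\X$ need not literally be the same Hilbert space as the bottom atom of $\Y$, so $\delta_{Y,\CC_\perp}$ in the statement refers to $\Y$'s bottom atom while $\delta_{X,\CC_\perp}$ in $B_\X$ refers to $\X$'s; the computation goes through because the proof of Lemma \ref{lem:pointed qcpo} supplies $B_\X$ and $B_\Y$ in the stated form for each of them separately.
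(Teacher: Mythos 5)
Your proof is correct and takes essentially the same route as the paper's: unfold $F\circ B_\X=B_\Y$ componentwise, note that the join collapses to the single term $F(\CC_\perp,Y)\cdot L(\CC,\CC_\perp)$, and use the one-dimensionality of $\CC_\perp$ to recover $F(\CC_\perp,Y)$ from that product (the paper does this recovery by right-multiplying with $L(\CC_\perp,\CC)$, which is the inverse of the bijection you invoke). One side remark in your write-up is false but not load-bearing: $L(\CC_\perp,Y)$ is one-dimensional only when $Y$ is, so for a higher-dimensional atom $Y$ a nonzero $F(\CC_\perp,Y)$ need not satisfy $F(\CC_\perp,Y)\cdot L(\CC,\CC_\perp)=L(\CC,Y)$; what your argument actually uses, and what you also correctly state, is only that $a\mapsto a\cdot v$ for nonzero $v\in L(\CC,\CC_\perp)$ is a linear bijection $L(\CC_\perp,Y)\to L(\CC,Y)$, hence preserves and reflects both vanishing and dimension of subspaces.
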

\begin{proof}
Let $F$ be strict. Then for each $Y\atomof\Y$, we have
\[ \delta_{Y,\CC_\perp}L(\CC,Y)=B_\Y(\CC,Y)=(F\circ B_\X)(\CC,Y)=\bigvee_{X\atomof\X}F(X,Y)\cdot B_\X(\CC,X)=F(\CC_\perp,Y)\cdot L(\CC,\CC_\perp),\]
hence
\[ F(\CC_\perp,Y)=F(\CC_\perp,Y)\cdot L(\CC,\CC_\perp)\cdot L(\CC_\perp,C)=\delta_{Y,\CC_\perp}L(\CC,Y)\cdot L(\CC_\perp,\CC)=\delta_{Y,\CC_\perp}L(\CC_\perp,Y).\]
Conversely, if $F(\CC_\perp,Y)=\delta_{Y,\CC_\perp}L(\CC_\perp,Y)$ for each $Y\atomof\Y$, we have by definition of $B_\X$:
\begin{align*}(F\circ B_\X)(\CC,Y) & =\bigvee_{X\atomof\X}F(X,Y)\cdot B_\X(\CC,X)=F(\CC_\perp,Y)\cdot L(\CC,\CC_\perp)\\
& =\delta_{Y,\CC_\perp}L(\CC_\perp,Y)\cdot L(\CC,\CC_\perp)=\delta_{Y,\CC_\perp}L(\CC,Y)=B_\Y(\CC,Y).
\end{align*}
We conclude that $F$ is strict.
\end{proof}

\begin{theorem}
    The category $\qCPO_{\perp!}$ is equivalent to $\mathbf{Kl}$.
\end{theorem}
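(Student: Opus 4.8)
The statement is that $\qCPO_{\perp!}\simeq\mathbf{Kl}$, where $\mathbf{Kl}$ is the Kleisli category of the lifting monad $(-)_\perp$ on $\qCPO$. The plan is to construct explicit functors in both directions and show they are (pseudo-)inverse, using the characterizations of pointed quantum cpos in Lemma~\ref{lem:pointed qcpo} and the description of the monad structure in Theorem~\ref{thm:monoidal monad}. Define $\Phi:\mathbf{Kl}\to\qCPO_{\perp!}$ on objects by $\Phi(\X)=\X_\perp$, which is a pointed quantum cpo by Lemma~\ref{lem:pointed qcpo}(a)$\Rightarrow$(c) (it has the atom $\CC_\perp$ with $R_\perp(\CC_\perp,X)=L(\CC_\perp,X)$). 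On a Kleisli morphism $F:\X\circlearrow\Y$, i.e.\ a Scott continuous $F:\X\to\Y_\perp$, define $\Phi(F)=M_\Y\circ F_\perp:\X_\perp\to\Y_\perp$, which is Scott continuous by Lemmas~\ref{lem:monad maps-M} and \ref{prop:lift functor qPOS2} (and composition, Lemma~\ref{lem:composition is Scott continuous}). The first thing to check is that $\Phi(F)$ is \emph{strict}: by Lemma~\ref{lem:property of strict function} it suffices to verify $\Phi(F)(\CC_\perp,Z)=\delta_{Z,\CC_\perp}L(\CC_\perp,Z)$, which follows from the explicit formulas for $M_\Y$ and $F_\perp=\M F$ in Theorems~\ref{thm:maybe monad on qSet} and \ref{thm:monoidal monad}, since $F_\perp$ sends $\CC_\perp$ (the ``outer'' added atom $\CC_2$) to $\CC_1$ and then $M_\Y$ sends $\CC_1$ to $\CC_\perp$. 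Functoriality of $\Phi$ is exactly the statement that Kleisli composition $G\bullet F=M\circ G_\perp\circ F$ is sent to the composite $\Phi(G)\circ\Phi(F)$, which is a routine monad computation using the associativity and unit laws for $M$ and $H$.

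\textbf{The other direction.} Define $\Psi:\qCPO_{\perp!}\to\mathbf{Kl}$ as follows. For a pointed quantum cpo $(\X,R)$ with distinguished atom $\CC_\perp$, let $\Psi(\X)$ be the quantum cpo $(\Y,S)$ obtained by deleting $\CC_\perp$, so that $(\X,R)=(\Y,S)_\perp$; this is well defined and the deleted cpo is unique by Lemma~\ref{lem:pointed qcpo}(c)$\Rightarrow$(a). For a strict Scott continuous $F:\X\to\X'$ between pointed quantum cpos, writing $\X=\Y_\perp$ and $\X'=\Y_\perp'$, we want a Kleisli morphism $\Y\circlearrow\Y'$, i.e.\ a Scott continuous function $\Y\to\Y_\perp'=\X'$; take $\Psi(F)=F\circ J_\Y$, the restriction of $F$ along the canonical inclusion $J_\Y:\Y\to\Y_\perp=\X$, which is Scott continuous by Lemma~\ref{lem:monad maps-J} and Lemma~\ref{lem:composition is Scott continuous}. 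One must check this is a function $\Y\to\X'$ (clear) and that $\Psi$ is functorial: $\Psi(G\circ F)$ should equal the Kleisli composite $\Psi(G)\bullet\Psi(F)=M_{\Y''}\circ \Psi(G)_\perp\circ\Psi(F)=M_{\Y''}\circ(G\circ J_{\Y'})_\perp\circ F\circ J_\Y$. Here strictness of $F$ is the crucial ingredient: it guarantees that $F\circ J_\Y$, when re-lifted and post-composed with $M$, recovers the effect of $F$ on all of $\X=\Y_\perp$, because strictness pins down $F$ on the one atom $\CC_\perp$ that $J_\Y$ misses (Lemma~\ref{lem:property of strict function}). I expect this functoriality verification — essentially showing $M_{\Y''}\circ(G\circ J_{\Y'})_\perp\circ F = G\circ F$ whenever $F$ is strict — to be the main obstacle, as it requires carefully unwinding the componentwise definitions of $\M$ on morphisms, $M$, and $J$, and using that $F(\CC_\perp,-)$ is forced.

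\textbf{Equivalence.} Finally I would show $\Phi$ and $\Psi$ are mutually inverse up to natural isomorphism. On objects, $\Phi\Psi(\X)=(\Y_\perp)$ where $\Y=\X$ with $\CC_\perp$ removed, and re-adding a bottom atom returns a quantum cpo isomorphic to $\X$ (indeed equal after the relabelling convention, or canonically order-isomorphic via Lemma~\ref{lem:order iso is Scott continuous}); and $\Psi\Phi(\X)=\X$ on the nose for $\X\in\mathbf{Kl}$, since deleting the freshly added $\CC_\perp$ from $\X_\perp$ gives back $\X$. On morphisms, for $F:\X\circlearrow\Y$ in $\mathbf{Kl}$ one computes $\Psi\Phi(F)=\Phi(F)\circ J_\X=M_\Y\circ F_\perp\circ J_\X=M_\Y\circ J_{\Y_\perp}\circ F=F$, using naturality of $H=J$ (so $F_\perp\circ J_\X=J_{\Y_\perp}\circ F$, Equation~(\ref{eq:lift is natural})) and the monad unit law $M_\Y\circ J_{\Y_\perp}=I_{\Y_\perp}$. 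For $F:\X\to\X'$ strict in $\qCPO_{\perp!}$, one checks $\Phi\Psi(F)=M_{\Y'}\circ(F\circ J_\Y)_\perp$ equals $F$ by a componentwise comparison, again invoking strictness to handle the $\CC_\perp$-component. Since these identifications are manifestly compatible with the object isomorphisms, we obtain the desired equivalence of categories. (If one prefers to avoid the relabelling subtlety, one can instead note that $\mathcal V:\qCPO\to\mathbf{Kl}$ composed with $\Psi$-type data exhibits $\qCPO_{\perp!}$ as the image of $\mathbf{Kl}$ under $\Phi$, but the direct two-functor argument above is cleaner.)
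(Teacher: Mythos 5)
Your proposal is correct, and its mathematical core coincides with the paper's: both arguments rest on Lemma \ref{lem:pointed qcpo} (every pointed quantum cpo is uniquely a lift $(\Y,S)_\perp$) and on the componentwise characterization of strictness in Lemma \ref{lem:property of strict function}, and both reduce to checking what happens on the single added atom $\CC_\perp$. The packaging differs: the paper identifies $\mathbf{Kl}$ with the full subcategory of free $(-)_\perp$-algebras in the Eilenberg--Moore category and then proves that a Scott continuous map between lifts is strict if and only if it is an algebra morphism, i.e.\ $F\circ M_\X=M_\Y\circ F_\perp$; you instead build explicit quasi-inverse functors $\Phi(F)=M_\Y\circ F_\perp$ and $\Psi(F)=F\circ J_\Y$ and verify $\Phi\Psi\cong\mathrm{id}$, $\Psi\Phi=\mathrm{id}$ directly. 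Your route avoids invoking the abstract Kleisli-equals-free-algebras fact at the cost of a few extra (routine) verifications, such as functoriality of $\Psi$ and strictness of $\Phi(F)$; the paper's route gets the equivalence for free once the strict-iff-algebra-morphism equivalence is established. Both are legitimate, and the componentwise computations involved are essentially the same ones.

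One small imprecision worth fixing: in your functoriality argument for $\Psi$, the identity actually needed is $M_{\Y''}\circ(G\circ J_{\Y'})_\perp=G$, which uses strictness of $G$ (the second map), not of $F$ as you state; strictness of $F$ is what you need later for $\Phi\Psi(F)=F$. This does not affect the validity of the argument, since all morphisms in $\qCPO_{\perp!}$ are strict, but the attribution should be corrected. You are also right to flag the relabelling convention for $\CC_\perp$: with it, $\Phi\Psi$ is the identity only up to a canonical order isomorphism, so the result is an equivalence rather than an isomorphism of categories, which is all the statement claims.
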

\begin{proof}
We can identify $\mathbf{Kl}$ with the category of free $(-)_\perp$-algebras. 

By Definition \ref{def:pointed quantum cpo} and Lemma \ref{lem:pointed qcpo} any object of $\qCPO_{\perp!}$ is of the form $(\X,R)_\perp$ for some unique quantum cpo $(\X,R)$. 
So, let $(\X,R)_\perp$ and $(\X,S)_\perp$ be two pointed quantum cpos and let $F:(\X,R)_\perp\to (\Y,S)_\perp$ be a strict Scott continuous map. We show that $F:(\X_\perp,M_\X)\to(\Y_\perp,M_\Y)$ is a morphism of $(-)_\perp$-algebras, where  $M_\X:\X_{\perp\perp}\to\X_\perp$ is given in Theorem \ref{thm:maybe monad on qSet}. So, we need to show that $F\circ M_\X=M_\Y\circ F_\perp$. 
For any quantum cpo $\Z$, we write $\At(\Z_\perp)=\At(\Z)\cup\{\CC_{\perp}\}$ and $\At(\Z_{\perp\perp})=\At(\Z_\perp)\cup\{\CC_{\perp_2}\}$. Then the only nonzero components of $M_\Z$ are $M_\Z(Z,Z)=\CC 1_Z$ for $Z\neq\CC_{\perp_2}$ and $M_\Z(\CC_{\perp_2},\CC_{\perp})=L(\CC_{\perp_2},\CC_{\perp})$. Let $X\atomof\X_{\perp\perp}$ and $Y\atomof\Y_\perp$. If $X\neq\CC_{\perp_2}$, we have 
\begin{align*}
   ( F\circ M_\X)(X,Y)  &  = \bigvee_{Z\atomof\X_\perp}F(Z,Y)\cdot M_\X(X,Z)=F(X,Y)= \bigvee_{Z\atomof\Y_\perp}M_\Y(Z,Y)\cdot F(X,Z)\\
    & = \bigvee_{Z\atomof\Y_{\perp\perp}}M_\Y(Z,Y)\cdot F_\perp(X,Z) =     M_\Y\circ F_\perp(X,Y) .
\end{align*}
If $X=\CC_{\perp_2}$, then using Lemma \ref{lem:property of strict function}, we find:
\begin{align*}
 (   F\circ M_\X)(X,Y)  & = \bigvee_{Z\atomof\X_\perp}F(Z,Y)\cdot M_\X(\CC_{\perp_2},Z)=F(\CC_{\perp},Y)\cdot L(\CC_{\perp_2},\CC_{\perp})\\
    & = \delta_{Y,\CC_\perp}L(\CC_{\perp},Y)\cdot L(\CC_{\perp_2},\CC_{\perp})=\delta_{Y,\CC_\perp}L(\CC_{\perp_2},Y)\\
    & = \delta_{Y,\CC_\perp}L(\CC_{\perp_2},Y)\cdot L(\CC_{\perp_2},\CC_{\perp_2})  = M_\Y(\CC_{\perp_2},Y)\cdot L(\CC_{\perp_2},\CC_{\perp_2})\\
    & = \bigvee_{Z\atomof\Y_{\perp\perp}}M_\Y(Z,Y)\cdot F_\perp(\CC_{\perp_2},Z)=    M_\Y\circ F_\perp(X,Y). 
\end{align*}
We conclude that $M_\Y\circ F_\perp=F\circ M_\X$, so $F$ is a morphism of $(-)_\perp$-algebras.

Conversely, assume that $F$ is a morphism of $(-)_\perp$-algebras.
For any quantum cpo $(\Z,T)$, and any $Z\atomof\Z_\perp$, we calculate:
\begin{align*}
 (   M_\Z\circ B_{\Z_\perp})(\CC,Z) & = \bigvee_{Z'\atomof\Z_{\perp\perp}}M_\Z(Z',Z )\cdot\delta_{Z',\CC_\perp}L(\CC,Z)\\
    & = M_\Z(\CC_\perp,Z)\cdot L(\CC_,\CC_\perp) = \delta_{Z,\CC_\perp}L(\CC_\perp,Z)\cdot L(\CC,\CC_\perp)=\delta_{Z,\CC_\perp}L(\CC,Z).
\end{align*}
Hence, we find for each $Y\atomof\Y_\perp$:
\begin{align*}
    \delta_{Y,\CC_\perp}L(\CC,Y) & = (M_\Y\circ B_{\Y_\perp})(\CC,Y) = M_\Y\circ (F_\perp\circ B_{\X_\perp})(\CC,Y) =(F\circ M_\X\circ B_{\X_\perp})(\CC,Y)\\
    & = \bigvee_{X\atomof\X_\perp}F(X,Y)\cdot ( M_\X\circ B_{\X_\perp})(\CC,X) = F(\CC_\perp,Y)\cdot L(\CC,\CC_\perp),
\end{align*}
where we used the definition of a strict map in the second equality, and the fact that $F$ is a $(-)_\perp$-algebra morphism in the third equality. Hence,
\[ F(\CC_\perp,Y)=F(\CC_\perp,Y)\cdot L(\CC,\CC_\perp)\cdot L(\CC_\perp,\CC)=\delta_{Y,\CC_\perp}L(\CC,\CC_Y)\cdot L(\CC_\perp,\CC)=\delta_{Y,\CC_\perp}L(\CC_\perp,Y).\]
It now follows from Lemma \ref{lem:property of strict function} that $F$ is strict. We conclude that every object of $\qCPO_{\perp!}$ is of the form $(\X,R)_\perp$ for some unique quantum cpo $(\X,R)$, and similarly, the underlying quantum cpo of any free $(-)_\perp$-algebra $(\X_\perp,M_\X)$ is $(\X,R)_\perp$ for some unique quantum cpo $(\X,R)$. Moreover, any Scott continuous map $F:(\X,R)_\perp\to (\Y,S)_\perp$ between pointed quantum cpos is strict if and only if it is an $(-)_\perp$-algebra morphism $(\X_\perp,M_\X)\to(\Y_\perp,M_\Y)$. It now easily follows that $\qCPO_{\perp!}$ and $\mathbf{Kl}$ are equivalent.
\end{proof}

\subsection{Monoidal closure}
Recall that $\qCPO$ is monoidal closed with inner hom $[-,-]_\uparrow$ and with evaluation function $\Eval_\uparrow$ (cf. Theorem \ref{thm:qCPO is monoidal closed}). In this section we show that $\mathbf{Kl}$ is monoidal closed, too.

\begin{proposition}\label{prop:monoidal closure and lift}
Let $(\X,R)$ be a quantum cpo, and let $(\Y,S)$ be a pointed quantum cpo. Then $[\X,\Y]_\uparrow$ is pointed, too.
\end{proposition}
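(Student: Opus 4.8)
The plan is to exhibit an explicit bottom element for $[\X,\Y]_\uparrow$ and verify that it satisfies condition (c) of Lemma~\ref{lem:pointed qcpo}: namely, to find a one-dimensional atom $\Phi_\perp \atomof [\X,\Y]_\uparrow$ such that $Q_\uparrow(\Phi_\perp, \Phi) = L(\Phi_\perp, \Phi)$ for every atom $\Phi \atomof [\X,\Y]_\uparrow$, where $Q_\uparrow = J^\dag \circ Q \circ J$ is the induced order. The natural candidate for this bottom element is the (curry of the) constant function at the bottom of $\Y$: consider the Scott continuous function $B_{\X,\Y} := B_{\Y} \circ \mathop{!}_\X \: \X \to \Y$, which is the least element of $\qSet(\X,\Y)$ and is Scott continuous by Lemma~\ref{lem:pointed qcpo 2}. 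Regarded as a quantum family indexed by $\mathbf 1$, i.e., the function $\mathbf 1 \times \X \to \Y$ obtained by precomposing $B_{\X,\Y}$ with the left unitor, this is Scott continuous and hence Scott continuous in the second variable, so by Lemma~\ref{lem:step 1 monoidal closure qCPO} it transposes to a (unique monotone, in fact Scott continuous) function $\mathbf 1 \to [\X,\Y]_\uparrow$; its range is a one-dimensional atom $\Phi_\perp$ of $[\X,\Y]_\uparrow$, and this $\Phi_\perp$ is our candidate bottom.

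First I would set up the transpose precisely and record that $\Eval_\uparrow \circ (\Phi_\perp \times I_\X)$ (with $\Phi_\perp$ viewed as the corresponding point $\mathbf 1 \to [\X,\Y]_\uparrow$) equals the constant family $B_{\X,\Y}$ up to unitors. Next I would verify condition (c) directly. Given an arbitrary atom $\Phi \atomof [\X,\Y]_\uparrow$, I want to show $Q_\uparrow(\Phi_\perp, \Phi) = L(\Phi_\perp,\Phi)$, equivalently (since $[\X,\Y]_\uparrow$ is atomic in these two atoms and $\Phi_\perp$ is one-dimensional) that the canonical point $P_{\Phi_\perp}\: \mathbf 1 \to [\X,\Y]_\uparrow$ satisfies $P_{\Phi_\perp} \below P$ in $\qCPO(\mathbf 1, [\X,\Y]_\uparrow)$ for every point $P \: \mathbf 1 \to [\X,\Y]_\uparrow$; this follows from Lemma~\ref{lem:b is order iso} once we know $P_{\Phi_\perp}$ is the least element of the cpo $\qCPO(\mathbf 1, [\X,\Y]_\uparrow)$. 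To see that, I would transpose across the isomorphism $\qCPO(\mathbf 1, [\X,\Y]_\uparrow) \cong \qCPO(\mathbf 1 \times \X, \Y) \cong \qCPO(\X, \Y)$ (the first by Theorem~\ref{thm:qCPO is monoidal closed}, the second by the unitor, both order isomorphisms of the induced cpo orders) and observe that under this correspondence $P_{\Phi_\perp}$ goes to $B_{\X,\Y}$, which is the least element of $\qSet(\X,\Y)$ — hence of the subposet $\qCPO(\X,\Y)$ — by Lemma~\ref{lem:pointed qcpo 2}. Since the transposition is an order isomorphism, $P_{\Phi_\perp}$ is the least element of $\qCPO(\mathbf 1, [\X,\Y]_\uparrow)$, and condition (c) of Lemma~\ref{lem:pointed qcpo} follows, so $[\X,\Y]_\uparrow$ is pointed.

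The step that needs the most care is checking that the transposition $\qCPO(\mathbf 1, [\X,\Y]_\uparrow) \to \qCPO(\X,\Y)$ is genuinely an \emph{order} isomorphism for the pointwise orders involved, not merely a bijection of hom-sets; this requires unwinding the definition of the order on $[\X,\Y]_\uparrow$ (via $Q$ from \cite[Theorem~8.3]{KLM20}) together with Proposition~\ref{prop:relating Q and W}, in the same spirit as the order computations in the proof of Proposition~\ref{prop:step 2 monoidal closure qCPO}. Concretely, I expect to show that for points $P, P'\: \mathbf 1 \to [\X,\Y]_\uparrow$ with transposes $F, F' \: \X \to \Y$, the inequality $P \below P'$ in $\qCPO(\mathbf 1, [\X,\Y]_\uparrow)$ is equivalent to $F \below F'$ in $\qSet(\X,\Y)$, by chasing $\Eval_\uparrow$ through Proposition~\ref{prop:relating Q and W} exactly as in the derivation of Inequality~(\ref{ineq:qidentity}). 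Once that equivalence is in hand the rest is a short assembly of the cited lemmas. An alternative, slightly more computational route — directly exhibiting $\Phi_\perp$ as a one-dimensional atom of $\Y^\X$ lying in $[\X,\Y]_\below$ and then in $[\X,\Y]_\uparrow$, and computing the component $Q_\uparrow(\Phi_\perp,\Phi)$ atomwise using Lemma~\ref{classical.F}-style bookkeeping — is available as a fallback if the abstract transposition argument proves awkward to make rigorous, but I expect the transposition argument to be cleaner.
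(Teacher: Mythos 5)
Your choice of candidate bottom element is the same as the paper's: the transpose of the constant family $B_{\X,\Y}\circ L_\X\:\mathbf 1\times\X\to\Y$ gives a Scott continuous point $B\:\mathbf 1\to[\X,\Y]_\uparrow$ whose range is a one-dimensional atom. The gap is in how you verify that this atom is a bottom. You reduce condition (c) of Lemma \ref{lem:pointed qcpo} to the statement that $P_{\Phi_\perp}\below P$ for every \emph{point} $P\:\mathbf 1\to[\X,\Y]_\uparrow$; but points of a quantum set correspond only to its \emph{one-dimensional} atoms (Lemma \ref{lem:b is order iso} and the paragraph preceding it), whereas condition (c) requires $(J^\dag\circ Q\circ J)(\Phi_\perp,\Phi)=L(\Phi_\perp,\Phi)$ for \emph{every} atom $\Phi\atomof[\X,\Y]_\uparrow$, and $[\X,\Y]_\uparrow$ will in general contain atoms of dimension greater than one. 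Being below all one-dimensional atoms does not imply being below all atoms: one can have a quantum poset with a one-dimensional atom $\CC_\perp$ and a two-dimensional atom $H$ where $R(\CC_\perp,H)$ is a proper nonzero subspace of $L(\CC_\perp,H)$ (reflexivity, transitivity and antisymmetry are easily arranged), so the claimed equivalence between condition (c) and ``$P_{\Phi_\perp}$ is the least point'' is false, and your argument as written does not establish pointedness.

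The repair is to target condition (b) of Lemma \ref{lem:pointed qcpo} instead of (c), which is exactly what the paper does: for \emph{every} atomic quantum set $\H$, not just $\mathbf 1$, one shows that $B\circ !_{\H}$ is the least element of $\qSet(\H,[\X,\Y]_\uparrow)$. Your transposition argument goes through verbatim with $\H$ in place of $\mathbf 1$: by Lemma \ref{lem:pointed qcpo 2}, $B_{\H\times\X,\Y}=B_{\mathbf 1\times\X,\Y}\circ(!_{\H}\times I_\X)$ is the least element of $\qCPO(\H\times\X,\Y)$; currying $\qCPO(\H\times\X,\Y)\cong\qCPO(\H,[\X,\Y]_\uparrow)$ is an order isomorphism (this is the step you rightly flag as needing care, and the verification is the same for general $\H$ as for $\mathbf 1$); and $\qCPO(\H,[\X,\Y]_\uparrow)=\qSet(\H,[\X,\Y]_\uparrow)$ because every function out of an atomic quantum set is Scott continuous (Example \ref{ex:function with trivially ordered domain is Scott continuous}). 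With that single change of quantifier your outline coincides with the paper's proof.
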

\begin{proof}
By Lemma \ref{lem:pointed qcpo 2}, there is a Scott continuous function $B_{\X,\Y}:\mathbf \X\to \Y$ that is the least element of $\qSet(\X,\Y)$. The same lemma assures that $B_{\mathbf 1\times\X,\Y}=B_{\X,\Y}\circ L_\X:\mathbf 1\times \X\to\Y$ is Scott continuous, and the least element of $\qSet(\mathbf 1\times\X,\Y)$, where $L_\X$ denotes the left unitor of $\qCPO$. By the monoidal closure of $\qCPO$ there is a Scott continuous function $B:\mathbf 1\to[\X,\Y]_\uparrow$ such that the following diagram commutes:
\[\begin{tikzcd}
\mathbf 1\times \X\ar{dr}{B_{\mathbf 1\times\X,\Y}}\ar{d}[swap]{B\times I_\X} & \\
\phantom{}[\X,\Y]_\uparrow\times\X\ar{r}[swap]{\Eval_\uparrow} & \Y
\end{tikzcd}\]

We next show that for each atomic quantum set $\H$, the map $B\circ !_{\H}$ is the least element of $\qSet(\H,[\X,\Y]_\uparrow)$ as by Lemma \ref{lem:pointed qcpo} this would assure that $[\X,\Y]_\uparrow$ is pointed. By Lemma \ref{lem:pointed qcpo 2}, we have $B_{\H\times\X,\Y}=B_{\mathbf 1\times\X,\Y}\circ (!_{\H}\times I_\X)$. Thus consider the following diagram:

\[\begin{tikzcd}
\H\times\X\ar[ddr,bend left=55,"B_{\H\times\X,\Y}",""{name=B,above}]\ar{d}[swap]{!_{\H}\times I_\X} &\\
\mathbf 1\times \X\ar{dr}{B_{\mathbf 1\times\X,\Y}}\ar{d}[swap]{B\times I_\X} & \\
\phantom{}[\X,\Y]_\uparrow\times\X\ar{r}[swap]{\Eval_\uparrow} & \Y
\end{tikzcd}\]

It follows from Lemma \ref{lem:pointed qcpo 2} that $B_{\H\times \X,\Y}$ is the least element of $\qCPO(\H\times\X,\Y)\subseteq \qSet(\H\times\X,\Y)$. Since $\qCPO$ is order enriched, currying is an order isomorphism, hence $B\circ !_{\H}$ is the least element of $\qCPO(\H,[\X,\Y]_\uparrow)$. By Example \ref{ex:function with trivially ordered domain is Scott continuous}, it follows that any function $K:\H\to[\X,\Y]_\uparrow$ is Scott continuous, hence $\qCPO(\H,[\X,\Y]_\uparrow)=\qSet(\H,[\X,\Y]_\uparrow)$. We conclude that $B\circ !_{\H}$ is indeed the least element of $\qSet(\H,[\X,\Y]_\uparrow)$, i.e., $B=B_{[\X,\Y]_\uparrow}$.
 \end{proof}

By Proposition \ref{prop:monoidal closure and lift}, we know that $[\X,\Y_\perp]_\uparrow$ is a pointed quantum cpo for any two quantum cpos $\X$ and $\Y$, and by Lemma \ref{lem:pointed qcpo} there is a unique quantum cpo $\{\X,\Y\}$ such that $\{\X,\Y\}_\perp=[\X,\Y_\perp]_\uparrow$. Recall the conventions about Kleisli categories at the beginning of this section. We thus have a morphism $E:\{\X,\Y\}\odot\X\circlearrow\Y$ given by the composition \[\{\X,\Y\}\times\X\xrightarrow{ H_{\{\X,\Y\}}\times I_\X}[\X,\Y_\perp]_\uparrow\times \X\xrightarrow{\Eval_\uparrow}\Y_\perp\] in $\qCPO$.

\begin{lemma}\label{lem:Kleisli closure}
Let $\X$, $\Y$ and $\Z$ be quantum cpos and let $G:\Z\circlearrow\{\X,\Y\}$ be a morphism in $\mathbf{Kl}$, which is a morphism $G:\Z\to[\X,\Y_\perp]_\uparrow$ in $\qCPO$. Then $E\bullet (G\odot H_\X)=\Eval_\uparrow\circ (G\times I_X)$.
\end{lemma}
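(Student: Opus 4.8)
The plan is to unwind both sides of the claimed equation to morphisms in $\qCPO$ and show they agree. Recall that composition $\bullet$ in the Kleisli category $\mathbf{Kl}$ is given by $g \bullet f = M_{\bullet} \circ g_\perp \circ f$, and the monoidal product $\odot$ on $\mathbf{Kl}$ is $F \odot G = K_{\X',\Y'} \circ (F \times G)$, where $K$ is the double strength of the monad $(-)_\perp$. So the left-hand side $E \bullet (G \odot H_\X)$ unwinds to $M_\Y \circ E_\perp \circ K_{\{\X,\Y\},\X} \circ (G \times H_\X)$, where $E \: \{\X,\Y\} \times \X \to \Y_\perp$ is the function $\Eval_\uparrow \circ (H_{\{\X,\Y\}} \times I_\X)$ (recalling $H_\X = J_\X$ is the unit of the monad). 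First I would substitute these definitions and obtain that the left-hand side equals $M_\Y \circ (\Eval_\uparrow \circ (H_{\{\X,\Y\}} \times I_\X))_\perp \circ K_{\{\X,\Y\},\X} \circ (G \times J_\X)$.

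The core computation is then a diagram chase using the naturality of $K$ and the compatibility axioms between $K$, $H$ (the unit), and $M$ (the multiplication) from the symmetric monoidal monad structure of $(-)_\perp$ on $\qSet$, established in Theorem \ref{thm:maybe monad on qSet} (conditions (1)-(5) of \cite[1.2]{Seal13}). Specifically, I expect to use: (a) the naturality square for $K$, which gives $K_{[\X,\Y_\perp]_\uparrow, \X} \circ (H_{\{\X,\Y\}} \times H_\X) = $ something expressible via $H$ composed with $K$, together with the unit law $K \circ (H \times H) = H$; (b) the fact that for a function $F$ between quantum sets, $F_\perp = \M F$, and the naturality of $M$ and $H$. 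The key simplification is that $(G \times J_\X)$ followed by $K_{\{\X,\Y\},\X}$ can be rearranged so that the $G$ part and the trivial lift on the $\X$ part interact correctly: since $J_\X = H_\X$ is the unit of the monad on the second factor, the structural constraint $K$ lets us "pull $G$ through" and the multiplication $M_\Y$ then collapses the double lift. I would carry this out by writing the relevant commuting diagram explicitly in $\qSet$ (all morphisms here are functions, and Scott continuity is automatic for the composites by the earlier lemmas, so it suffices to check equality in $\qSet$), reducing the left-hand side to $\Eval_\uparrow \circ (G \times I_\X)$, which is exactly the right-hand side.

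Concretely, the chain of equalities I anticipate is: $M_\Y \circ (\Eval_\uparrow)_\perp \circ (H_{\{\X,\Y\}} \times I_\X)_\perp \circ K_{\{\X,\Y\},\X} \circ (G \times J_\X)$; then use that $(H_{\{\X,\Y\}} \times I_\X)_\perp \circ K_{\{\X,\Y\},\X} = K_{[\X,\Y_\perp]_\uparrow, \X} \circ (H_{\{\X,\Y\}} \times I_\X)_\perp$ wait — more carefully, I would use naturality of $K$ in its first argument applied to the function $H_{\{\X,\Y\}}\: \{\X,\Y\} \to [\X,\Y_\perp]_\uparrow = \{\X,\Y\}_\perp = \M\{\X,\Y\}$, which is itself the unit component, combined with the monad axiom relating $K$, $H$, and $M$ that says $M_{\X\times\Y} \circ K_{\M\X, \M\Y} \circ (H_{\M\X} \times \mathrm{id}) = \ldots$. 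The cleanest route is probably to recognize $E$ as $\Eval_\uparrow$ precomposed with a unit, use the triangle-type identity $M \circ (\eta)_\perp = \mathrm{id}$ (one of the monad laws, namely $M_\Y \circ (H_{\Y_\perp})$, but here applied after lifting), and the unit law for the strength $K \circ (H \times H) = H$ together with $K$'s naturality. After these substitutions everything telescopes.

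The main obstacle will be bookkeeping: keeping straight which copy of $(-)_\perp$ is which (there are two lifts of $\Y$ floating around after applying $E_\perp$, collapsed by $M_\Y$), and invoking exactly the right coherence condition from \cite[1.2]{Seal13} — in particular distinguishing the left unitality condition for the strength from the compatibility-with-multiplication condition. A secondary, minor point is confirming that all composites involved are genuinely morphisms in the relevant categories (Scott continuous, and in the pointed subcategory where needed), but this follows immediately from Theorem \ref{thm:monoidal monad}, Corollary \ref{cor:Kl is monoidal}, Proposition \ref{prop:monoidal closure and lift}, and Theorem \ref{thm:qCPO is monoidal closed}, so no new work is required there. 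I do not expect any genuinely hard mathematical content — it is a coherence diagram chase — but care is needed to avoid sign-of-index errors, so I would lay out the full pentagon/triangle of $\qSet$-morphisms and verify commutativity cell by cell rather than manipulate the equation blindly.
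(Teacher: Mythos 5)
Your proposal is correct and follows essentially the same route as the paper: unwind $E\bullet(G\odot H_\X)$ to $M_\Y\circ(\Eval_\uparrow)_\perp\circ(H_{\{\X,\Y\}}\times I_\X)_\perp\circ K_{\{\X,\Y\},\X}\circ(G\times H_\X)$, then telescope using naturality of $K$, the unit law $K\circ(H\times H)=H$ of the monoidal monad, and the identity $M_\W\circ F_\perp\circ H_\V=F$. The paper carries out exactly this diagram chase (phrased as first establishing $\Eval_\uparrow=M_\Y\circ(\Eval_\uparrow)_\perp\circ(H_{\{\X,\Y\}}\times I_\X)_\perp\circ K_{\{\X,\Y\},\X}\circ(I_{\{\X,\Y\}_\perp}\times H_\X)$ and then precomposing with $G\times I_\X$), so only the bookkeeping you already anticipate remains to be written out.
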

\begin{proof}
Consider the following commuting diagram:
\[\begin{tikzcd}
\{\X,\Y\}_\perp\times\X\ar{rr}{I_{\{\X,\Y\}_\perp}\times H_\X}\ar{rrdd}[swap]{H_{\{\X,\Y\}_\perp}\times H_\X} && \{\X,\Y\}_\perp\times\X_\perp\ar{rr}{K_{\{\X,\Y\},\X}}\ar{dd}{H_{\{\X,\Y\}_\perp}\times I_{\X_\perp}} && (\{\X,\Y\}\times \X)_\perp \ar{dd}{(H_{\{\X,\Y\}}\times I_\X)_\perp} \\
\\
&& \{\X,\Y\}_{\perp\perp}\times \X_\perp\ar{rr}[swap]{K_{\{\X,\Y\}_\perp,\X}} &&  (\{\X,\Y\}_\perp\times\X)_\perp
\end{tikzcd}\]
Here, the square commutes by naturality of $K$. Since the lift is a monoidal monad, we have $K_{\{\X,\Y\}_\perp,\X}\circ (H_{\{\X,\Y\}_\perp}\times H_\X)=H_{\{\X,\Y\}_\perp\times\X}$, whence
\[ (H_{\{\X,\Y\}}\times I_\X)_\perp\circ K_{\{\X,\Y\},\X}\circ (I_{\{\X,\Y\}_\perp}\times H_\X)=H_{\{\X,\Y\}_\perp\times\X}\]
As a consequence, we obtain 
\[M_\Y\circ (\Eval_\uparrow)_\perp\circ (H_{\{X,\Y\}}\times I_\X)_\perp\circ K_{\{\X,\Y\},\X}\circ (I_{\{\X,\Y\}_\perp}\times H_\X)=M_\Y\circ (\Eval_\uparrow)_\perp\circ H_{\{\X,\Y\}_\perp\times\X}=\Eval_\uparrow,\]
where the last equality follows since $\Y_\perp$, the codomain of $\Eval_\uparrow$ is lifted, and for arbitrary morphisms $F:\V\to\W_\perp$ with a lifted codomain we have $M_\W\circ F_\perp\circ H_\V=F$, which follows the following commuting diagram:
\[ \begin{tikzcd}
\V\ar{r}{H_{\V}}\ar{d}[swap]{F} &\V_\perp\ar{r}{F_\perp} \ar{d}[swap]{F_\perp} & \W_{\perp\perp}\ar{r}{M_\W} & \W_\perp  \\
\W_\perp\ar{r}[swap]{H_{\W_\perp}} & \W_{\perp\perp}\ar{ru}[swap]{I_{\W_{\perp\perp}}} & 
\end{tikzcd} \]
where the square commutes by naturality of $H$. By definition of a monad, we have $M_\W\circ H_{\W_\perp}=I_{\W_\perp}$, from which indeed $M_\W\circ F_\perp\circ H_\V=F$ follows.
Thus we conclude
\begin{equation*}
    \Eval_\uparrow=M_\Y\circ (\Eval_\uparrow)_\perp\circ (H_{\{X,\Y\}}\times I_\X)_\perp\circ K_{\{\X,\Y\},\X}\circ (I_{\{\X,\Y\}_\perp}\times H_\X),
\end{equation*}
whence
\begin{align*}
    \Eval_\uparrow\circ (G\times I_\X) & = M_\Y\circ (\Eval_\uparrow)_\perp\circ (H_{\{X,\Y\}}\times I_\X)_\perp\circ K_{\{\X,\Y\},\X}\circ (I_{\{\X,\Y\}_\perp}\times H_\X)\circ (G\times I_\X)\\
    & = M_\Y\circ (\Eval_\uparrow)_\perp\circ (H_{\{X,\Y\}}\times I_\X)_\perp\circ K_{\{\X,\Y\},\X}\circ (G\times H_\X)\\
    & = M_\Y\circ (\Eval_\uparrow \circ H_{\{X,\Y\}}\times I_\X)_\perp\circ (G\odot H_\X)\\
     & = M_\Y\circ E_\perp\circ (G\odot H_\X)\\
     & = E\bullet (G\odot H_\X).\qedhere
\end{align*}
\end{proof}

\begin{theorem}\label{thm:qCPObs is monoidal closed}The Kleisli category $\mathbf{Kl}$ of the lift monad is symmmetric monoidal closed.
\end{theorem}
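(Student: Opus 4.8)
The plan is to transport the monoidal closure of $\qCPO$ (Theorem~\ref{thm:qCPO is monoidal closed}) across the strict monoidal functor $\mathcal V\colon\qCPO\to\mathbf{Kl}$ of Corollary~\ref{cor:Kl is monoidal}, using the pointed inner hom $\{\X,\Y\}$ and the evaluation morphism $E\colon\{\X,\Y\}\odot\X\circlearrow\Y$ constructed just before Lemma~\ref{lem:Kleisli closure}. Concretely, I will show that for every pair of quantum cpos $\X,\Y$ the functor $-\odot\X\colon\mathbf{Kl}\to\mathbf{Kl}$ has a right adjoint, namely $\{\X,-\}$, with counit $E$. By the standard parametrized-adjoint machinery (e.g.\ \cite[Theorem~IV.1.2]{maclane}), it suffices to verify the universal property object by object: given quantum cpos $\X,\Y,\Z$ and a Kleisli morphism $F\colon\Z\odot\X\circlearrow\Y$, there is a unique Kleisli morphism $G\colon\Z\circlearrow\{\X,\Y\}$ with $E\bullet(G\odot H_\X)=F$.

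For existence, recall that a Kleisli morphism $F\colon\Z\odot\X\circlearrow\Y$ is by definition a Scott continuous function $F\colon\Z\times\X\to\Y_\perp$ in $\qCPO$. Since $\qCPO$ is monoidal closed (Theorem~\ref{thm:qCPO is monoidal closed}), currying $F$ yields a unique Scott continuous $G\colon\Z\to[\X,\Y_\perp]_\uparrow$ with $\Eval_\uparrow\circ(G\times I_\X)=F$. By Proposition~\ref{prop:monoidal closure and lift}, $[\X,\Y_\perp]_\uparrow$ is pointed, so by Lemma~\ref{lem:pointed qcpo} it equals $\{\X,\Y\}_\perp$; hence $G$ is precisely a Kleisli morphism $\Z\circlearrow\{\X,\Y\}$. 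Lemma~\ref{lem:Kleisli closure} then gives $E\bullet(G\odot H_\X)=\Eval_\uparrow\circ(G\times I_\X)=F$, as required.

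For uniqueness, suppose $G'\colon\Z\circlearrow\{\X,\Y\}$ also satisfies $E\bullet(G'\odot H_\X)=F$. Again by Lemma~\ref{lem:Kleisli closure}, $\Eval_\uparrow\circ(G'\times I_\X)=F=\Eval_\uparrow\circ(G\times I_\X)$, and the uniqueness clause in the universal property of $\Eval_\uparrow$ (the monoidal closure of $\qCPO$, Theorem~\ref{thm:qCPO is monoidal closed}, together with the injectivity of the relevant transpose, exactly as exploited in Lemma~\ref{lem:step 1 monoidal closure qCPO}) forces $G'=G$ as morphisms of $\qCPO$, hence as morphisms of $\mathbf{Kl}$. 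This establishes the required bijection $\mathbf{Kl}(\Z\odot\X,\Y)\cong\mathbf{Kl}(\Z,\{\X,\Y\})$; its naturality in $\Z$ is immediate from the naturality of currying in $\qCPO$ and the definition of $E$, while naturality in $\X$ and $\Y$ follows since the whole construction is built from the natural transformations $\Eval_\uparrow$, $H$, $M$ and $K$. Thus $\mathbf{Kl}$ is monoidal closed, and since the monoidal structure $\odot$ is symmetric by Corollary~\ref{cor:Kl is monoidal}, it is symmetric monoidal closed.

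The main obstacle is not any single computation but making sure the identification $[\X,\Y_\perp]_\uparrow=\{\X,\Y\}_\perp$ is used coherently: one must check that the evaluation morphism $E$ of $\mathbf{Kl}$, which is defined via $\Eval_\uparrow$ precomposed with $H_{\{\X,\Y\}}\times I_\X$, really does play the role of the counit, and that Kleisli composition $\bullet$ (involving $M_\Y$ and the lift of $\Eval_\uparrow$) collapses correctly — this is exactly the content of Lemma~\ref{lem:Kleisli closure}, whose diagram chase rests on $\mathcal M$ being a monoidal monad ($K\circ(H\times H)=H$) and on the Kleisli identity $M_\W\circ F_\perp\circ H_\V=F$ for morphisms into a lifted object. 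Once Lemma~\ref{lem:Kleisli closure} is in hand, the remaining argument is a formal transport of adjunctions and presents no difficulty.
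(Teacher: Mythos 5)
Your proposal is correct and follows essentially the same route as the paper's proof: curry $F\colon\Z\times\X\to\Y_\perp$ via the monoidal closure of $\qCPO$, identify $[\X,\Y_\perp]_\uparrow$ with $\{\X,\Y\}_\perp$ using Proposition~\ref{prop:monoidal closure and lift}, and translate the universal property into $\mathbf{Kl}$ via Lemma~\ref{lem:Kleisli closure}. Your write-up is if anything slightly more explicit about uniqueness and naturality than the paper's version.
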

\begin{proof}By Corollary \ref{cor:Kl is monoidal}
$\mathbf{Kl}$ is a symmetric monoidal category, so we only have to show that $\mathbf{Kl}$ is monoidal closed.
Let $F:\Z\odot\X\circlearrow\Y$ be a morphism in $\mathbf{Kl}$. Then $F$ is a morphism $\Z\times\X\to\Y_\perp$ in $\qCPO$. By the universal property of $\Eval_\perp$, there is a unique morphism $G:\Z\to[\X,\Y_\perp]_\uparrow$ such that $\Eval_\uparrow\circ (G\circ I_\X)=F$. Now, $G$ is a morphism $\Z\circlearrow\{\X,\Y\}$ in $\mathbf{Kl}$, and by Lemma \ref{lem:Kleisli closure}, it follows that $G$ is the unique morphism in $\mathbf{Kl}$ such that $E\bullet(G\odot H_\X)=F$. Since $H_\X$ is the identity morphism on $\X$, this shows the assignment $\Y\mapsto \{\X,\Y\}$ is the right adjoint of $\Z\mapsto\Z\odot\X$, i.e., $\mathbf{Kl}$ is monoidal closed. 
\end{proof}

\subsection{Completeness and coproducts}

\begin{lemma}\label{lem:order embedding induces order embedding on homsets}
Let $(\X,R)$ and $(\Y,S)$ be quantum posets and let $F:\X\to\Y$ be an order embedding, i.e., $R=F^\dag\circ S\circ F$. If $\Z$ is a quantum set, then the map $\qSet(\X,F):\qSet(\Z,\X)\to\qSet(\Z,\Y)$, $G\mapsto F\circ G$ is an order embedding. 
\end{lemma}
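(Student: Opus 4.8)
The plan is to unwind the definition of the order $\sqsubseteq$ on the homset $\qSet(\Z,\X)$ from Section~\ref{sub:quantum posets} and translate it through post-composition with $F$. Recall that for functions $G,G':\Z\to\X$, we have $G\sqsubseteq G'$ if and only if $G'\leq R\circ G$. So I want to show that $G\sqsubseteq G'$ in $\qSet(\Z,\X)$ if and only if $F\circ G\sqsubseteq F\circ G'$ in $\qSet(\Z,\Y)$; equivalently, that $G'\leq R\circ G$ holds if and only if $F\circ G'\leq S\circ F\circ G$ holds. This is exactly the assertion that $\qSet(\Z,F)$ is an order embedding, since it is clearly a well-defined monotone map (monotonicity of left multiplication by the monotone function $F$ is \cite[Lemma~4.4]{KLM20}, and it also follows immediately from the forward direction of the equivalence).

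The forward direction is immediate: if $G'\leq R\circ G$, then left-composing with $S$ gives $S\circ F\circ G'$ $=$ $S\circ F\circ G'$, but more directly, applying $F\circ(-)$ and using $R=F^\dag\circ S\circ F$ together with monotonicity of composition, $F\circ G'\leq F\circ R\circ G=F\circ F^\dag\circ S\circ F\circ G\leq S\circ F\circ G$, where the last step uses $F\circ F^\dag\leq I_\Y$ (since $F$ is a function) and that $S$ is reflexive so composing on the left is only needed — actually one uses $F\circ F^\dag\leq I_\Y\leq S$ is not quite it; rather $F\circ F^\dag\circ S\leq S$ follows from $F\circ F^\dag\leq I_\Y$ by monotonicity. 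Either way this direction is routine.

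The reverse direction is where the order-embedding hypothesis $R=F^\dag\circ S\circ F$ is genuinely used (not just $F^\dag\circ F\geq I_\X$). Suppose $F\circ G'\leq S\circ F\circ G$. Apply $F^\dag\circ(-)$ on the left to obtain $F^\dag\circ F\circ G'\leq F^\dag\circ S\circ F\circ G=R\circ G$. Now I need $G'\leq F^\dag\circ F\circ G'$, which would follow from $I_\Z\leq$ something, or more precisely from the fact that $F^\dag\circ F\geq I_\X$ because $F$ is a function (indeed an order embedding is in particular a function). Then $G'=I_\X\circ G'\leq F^\dag\circ F\circ G'\leq R\circ G$, as desired. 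Combining both directions, $\qSet(\Z,F)$ reflects the order, and since it trivially preserves it (by the forward direction, or \cite[Lemma~4.4]{KLM20}), it is an order embedding.

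I do not anticipate a serious obstacle here; the only subtlety is being careful that all the composites are legitimate (domains and codomains match) and that one invokes the correct inequality — $F^\dag\circ F\geq I_\X$ for the reverse direction, $F\circ F^\dag\leq I_\Y$ for the forward direction, and $R=F^\dag\circ S\circ F$ in both. One should also remark that the statement as phrased treats $\qSet(\Z,F)$ as a map of ordinary posets (the homsets of $\qSet$ carry the ordinary partial order $\sqsubseteq$, not a quantum order), so "order embedding" here just means an order-reflecting monotone map between ordinary posets, which is exactly the conjunction of the two implications established above.
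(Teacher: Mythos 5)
Your proof is correct and follows essentially the same route as the paper: preservation of $\sqsubseteq$ via monotonicity of left composition with $F$, and reflection by pre-composing with $F^\dag$ and invoking $R=F^\dag\circ S\circ F$. The only cosmetic difference is that you work with the characterization $G\sqsubseteq G'\iff G'\leq R\circ G$, which forces the extra appeals to $F\circ F^\dag\leq I_\Y$ and $F^\dag\circ F\geq I_\X$, whereas the paper uses the equivalent characterization $S\circ G'\leq S\circ G$, under which the reflection step is a one-line computation.
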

\begin{proof}
Let $G_1,G_2:\Z\to\X$ be functions. Since $F$ is monotone, its left action is monotone \cite{KLM20}*{Lemma 4.4}, whence $G_1\sqsubseteq G_2$ implies $F\circ G_1\sqsubseteq F\circ G_2$. For the converse, assume that $F\circ G_1\sqsubseteq F\circ G_2$. Then $S\circ F\circ G_2\leq S\circ F\circ G_1$, so $R\circ G_2=F^\dag\circ S\circ F\circ G_2\leq F^\dag\circ S\circ F\circ G_1=R\circ G_1$, i.e., $G_1\sqsubseteq G_2$.
\end{proof}

\begin{proposition}\label{prop:Kleisli and EM category}
The canonical embedding $\mathbf{Kl}\to\mathbf{EM}$, $\X\mapsto (\X_\perp,M_\X)$ of the Kleisli category of the monad $(-)_\perp$ on $\qCPO$ into the Eilenberg-Moore category $\cat{EM}$ of $(-)_\perp$ is a weak equivalence of categories.
\end{proposition}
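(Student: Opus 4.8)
The plan is to show that the canonical comparison functor $\Phi\colon\mathbf{Kl}\to\mathbf{EM}$ is full, faithful, and essentially surjective, which is exactly what ``weak equivalence of categories'' means. Fullness and faithfulness are automatic: the comparison functor from a Kleisli category into the Eilenberg--Moore category is always fully faithful, essentially because $\mathbf{Kl}$ can be identified with the full subcategory of free algebras in $\mathbf{EM}$, as was already used in the proof that $\qCPO_{\perp!}\simeq\mathbf{Kl}$. So the real content is essential surjectivity: every $(-)_\perp$-algebra $a\colon (\X,R)_\perp\to(\X,R)$ is isomorphic in $\mathbf{EM}$ to a free algebra $M_\Y\colon (\Y,S)_{\perp\perp}\to(\Y,S)_\perp$ for some quantum cpo $(\Y,S)$.

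First I would recall the structure of an Eilenberg--Moore algebra for $(-)_\perp$. An algebra is a Scott continuous function $a\colon\X_\perp\to\X$ satisfying $a\circ J_\X = I_\X$ (unit law) and $a\circ M_\X = a\circ a_\perp$ (associativity law), where $J_\X$ is the $\X$-component of the unit $H$ and $M_\X$ the $\X$-component of the multiplication, both from Theorem \ref{thm:maybe monad on qSet}. The added atom $\CC_\perp$ of $\X_\perp$, together with $a$, picks out a distinguished function $b_a := a\circ B_{\X_\perp}\colon\mathbf 1\to\X$, i.e., a one-dimensional-atom-valued point of $\X$ (using that $B_{\X_\perp}$ is the least element of $\qSet(\mathbf 1,\X_\perp)$ per Lemma \ref{lem:pointed qcpo 2}). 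The key claim is that this point $b_a$ is the least element of $\qSet(\H,\X)$ in the sense of Lemma \ref{lem:pointed qcpo}(b) after composing with $!_\H$, so that $(\X,R)$ is itself pointed. To see this, I would use the unit law to show $a$ is surjective (it has the section $J_\X$), then analyze the action of $a$ atomwise: since $J_\X\circ a$ must be below the identity on $\X_\perp$ in the pointwise order (this follows because $J_\X\circ a\circ J_\X = J_\X$ and a short computation using that $M_\X$, which is $J_\X\circ a$ when $a$ is the free algebra, is dominated by $R_\perp$), one deduces that $a$ collapses $\CC_\perp$ onto a genuine least point of $\X$. Concretely, the associativity law applied to the multicell structure of $\X_{\perp\perp}$ forces $R(\,b_a(\CC),X\,)=L(b_a(\CC),X)$ for every atom $X\atomof\X$, which is condition (c) of Lemma \ref{lem:pointed qcpo}. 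Hence $(\X,R)$ is pointed, so by Lemma \ref{lem:pointed qcpo}(a) there is a unique quantum cpo $(\Y,S)$ with $(\X,R)=(\Y,S)_\perp$.

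Next I would identify $a$ with the canonical algebra structure on $(\Y,S)_\perp$. By the uniqueness part of Lemma \ref{lem:pointed qcpo} the underlying quantum cpo $(\Y,S)$ is canonical, and the free algebra $M_\Y\colon (\Y,S)_{\perp\perp}\to(\Y,S)_\perp$ is another $(-)_\perp$-algebra with the same carrier $\X_\perp=\Y_{\perp\perp}$ up to the canonical relabelling of the two added bottom atoms. I would then invoke the fact already established in the proof of $\qCPO_{\perp!}\simeq\mathbf{Kl}$: Scott continuous functions $(\Y,S)_\perp\to(\Y',S')_\perp$ are strict if and only if they are $(-)_\perp$-algebra morphisms $(\Y_\perp,M_\Y)\to(\Y'_\perp,M_{\Y'})$. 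Applying this with $(\Y',S')=(\Y,S)$ and the identity function shows that the identity $I_{\X_\perp}=I_{\Y_{\perp\perp}}$ is an algebra morphism from $(\X_\perp,a)$ to $(\Y_\perp,M_\Y)$ provided both algebra structures agree; but even without a priori agreement, both $a$ and $M_\Y$ are algebra structures on the same carrier, and the unit/associativity laws pin down such a structure uniquely given the distinguished point, so $a=M_\Y$. Thus $(\X_\perp,a)=(\Y_\perp,M_\Y)=\Phi(\Y)$, and $\Phi$ is essentially surjective (indeed surjective on objects up to the canonical isomorphism that relabels bottom atoms). Combined with fullness and faithfulness, this makes $\Phi$ a weak equivalence.

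I expect the main obstacle to be the step showing that an arbitrary $(-)_\perp$-algebra has pointed carrier, i.e., proving $R(b_a(\CC),X)=L(b_a(\CC),X)$ for all $X\atomof\X$. This requires carefully unwinding the associativity law $a\circ M_\X = a\circ a_\perp$ in terms of the matrix-of-operator-spaces description of the morphisms, much in the style of the computations in Lemmas \ref{lem:monad maps-M} and \ref{lem:property of strict function}, and using that $a$ is monotone together with the explicit form of $M_\X$ from Theorem \ref{thm:maybe monad on qSet}. The Scott continuity of $a$ is not actually needed for this algebraic step — only for knowing $a$ lives in $\qCPO$ — so the computation is essentially the one that would work already at the level of $\qSet$-algebras for the maybe monad; the classical analogue is the standard fact that algebras for the maybe monad on $\Set$ are exactly pointed sets. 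Once pointedness is in hand, the rest is bookkeeping with the uniqueness statements in Lemma \ref{lem:pointed qcpo} and the Kleisli--Eilenberg--Moore comparison already set up in the preceding proof.
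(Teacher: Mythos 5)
Your overall architecture matches the paper's: fullness and faithfulness of the comparison functor are standard, so everything reduces to showing that each $(-)_\perp$-algebra $F\:\X_\perp\to\X$ has pointed carrier and then that $F$ coincides with the free structure map $M_\Y$ on $\X=\Y_\perp$. The paper establishes pointedness by an iterative argument: it forms $K=H_\X\circ F$, invokes Kleene's fixpoint theorem on $G\mapsto K\circ G$ acting on $\qCPO(\H,\X_\perp)$, and shows that $F$ applied to the least fixpoint is the least element of $\qSet(\H,\X)$. Your proposal replaces this with a direct algebraic computation, which is a genuinely different (and in fact shorter) route, and your observation that Scott continuity is not needed for this step is correct --- monotonicity suffices.

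However, you attribute the key computation to the wrong axiom. The associativity law $F\circ M_\X=F\circ F_\perp$ does not force $R(X_\perp,X)=L(X_\perp,X)$, and your appeal to the analogy with the maybe monad on $\Set$ points in the wrong direction: $\qSet$-algebras for the maybe monad are merely quantum sets with a distinguished one-dimensional atom, with no least-element condition, precisely because there is no order to exploit. What actually does the work is the \emph{monotonicity} of $F$ combined with the unit law and the fact that $R_\perp(\CC_\perp,Z)=L(\CC_\perp,Z)$ for every $Z\atomof\X_\perp$. Concretely: the unit law forces $F(X,Y)=\delta_{X,Y}\CC 1_X$ for $X,Y\atomof\X$, and $F(\CC_\perp,-)$ determines a one-dimensional atom $X_\perp\atomof\X$ with $F(\CC_\perp,Y)=\delta_{Y,X_\perp}L(\CC_\perp,Y)$; then for each $X\atomof\X$, monotonicity gives $R(X_\perp,X)\cdot L(\CC_\perp,X_\perp)=(R\circ F)(\CC_\perp,X)\geq(F\circ R_\perp)(\CC_\perp,X)\geq F(X,X)\cdot L(\CC_\perp,X)=L(\CC_\perp,X)$, whence $R(X_\perp,X)=L(X_\perp,X)$, which is condition (c) of Lemma \ref{lem:pointed qcpo}. (Alternatively, and closer in spirit to the paper, $B_{\H,\X_\perp}\below H_\X\circ N$ for any $N\:\H\to\X$, so $F\circ B_{\H,\X_\perp}\below F\circ H_\X\circ N=N$ by monotonicity of left composition, giving condition (b) directly without any fixpoint iteration.) Once this is repaired, your concluding step --- that the unit law together with the distinguished bottom atom pins down the structure map, so $F=M_\Y$ --- goes through as in the paper.
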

\begin{proof}
It is sufficient to show that each $(-)_\perp$-algebra $(\X,F)$ is free, i.e., equal to $(\Y_\perp,M_\Y)$ for some quantum cpo $\Y$. By definition of an $(-)_\perp$-algebra $F:\X_\perp\to\X$ is a Scott continuous map satisying $F\circ H_\X=I_\X$, where we recall that $H_\X:\X\to\X_\perp$ is the unit of the monad $(-)_\perp$, and is equal to the embedding $J_\X:\X\to\X_\perp$.

We first show that $\X$ is pointed, for which we consider the map $K:\X_\perp\to\X_\perp$ given by $K=H_\X\circ F$, which is Scott continuous since it is the composition of Scott continuous maps. Let $\H$ be an atomic quantum set. By Theorem \ref{thm:qCPO enriched over CPO},  $\qCPO(\H,\X_\perp)$ is a cpo, which is equal to $\qSet(\H,\X_\perp)$ by Example \ref{ex:function with trivially ordered domain is Scott continuous}. It follows from Lemma \ref{lem:pointed qcpo 2} that $\qCPO(\H,\X_\perp)$ is a pointed cpo with a least element $B_{\H,\X_\perp}$ that equals $B_\X\circ !_\H$. Note that $\mathbf 1$ is atomic, and $B_{\mathbf 1,\X_\perp}=B_{\X_\perp}$.

Since $K$ is Scott continuous, it follows from Lemma \ref{lem:left multiplication by Scott continuous function is Scott continuous} that the map $\qCPO(\H,\X_\perp)\to\qCPO(\H,\X_\perp)$, $G\mapsto K\circ G$ is Scott continuous, hence Kleene's Fixpoint Theorem (see for instance \cite[Theorem 2.1.19]{abramskyjung:domaintheory}) assures that this map has a least fixpoint $G_\H$ given by $\bigvee_{n\in\NN}K^n\circ B_{\H,\X_\perp}$, where the supremum is taken with respect to the order $\sqsubseteq$ on $\qCPO(\H,\X_\perp)$.
By Lemma \ref{lem:right multiplication is Scott continuous}, we have
$G_\H=\bigvee_{n\in\NN}K^n\circ B_{\H,\X_\perp}=\bigvee_{n\in\NN}K^n\circ B_{\X_\perp}\circ !_{\H}=\bigvee_{n\in\NN}K^n\circ B_{\mathbf 1,\X_\perp}\circ !_{\H}=G_{\mathbf 1}\circ !_{\H}$.

We claim that $F\circ G_\H$ is the least element of $\qSet(\H,\X)$. Let $N:\H\to\X$ be a function. Then $H_\X\circ N$ is a function $\H\to\X_\perp$, and we have
$K^0\circ B_{\H,\X_\perp}=B_{\H,\X_\perp}\sqsubseteq H_\X\circ N$. Assume that $K^n\circ B_{\H,\X_\perp}\sqsubseteq H_\X\circ N$. Then $K^{n+1}\circ B_{\H,\X_\perp}\sqsubseteq K\circ H_\X\circ N=H_\X\circ F\circ H_\X\circ N=H_\X\circ N$, since $F\circ H_\X=I_\X$. We conclude that $K^n\circ B_{\H,\X_\perp}\sqsubseteq H_\X\circ N$ for each $n=0,1,2,\ldots$, hence also $G_\H=\bigvee_{n\in\NN}K^n\circ B_{\H,\X_\perp}\sqsubseteq H_\X\circ N$.
Note that by definition of $R_\perp$, we have $R_\perp(X,Y)=R(X,Y)$ for each $X,Y\atomof\X$, which translates to $J_\X^\dag\circ R_\perp\circ J_\X=R$, so $J_\X$ is an order embedding.
We have $H_\X\circ F\circ G_\H=K\circ G_\H=G_\H\sqsubseteq H_\X\circ N$. Since $H_\X=J_\X$, we have $J_\X\circ F\circ G_\H\sqsubseteq J_\X\circ N$, which implies $F\circ G_\H\sqsubseteq N$ by Lemma \ref{lem:order embedding induces order embedding on homsets}.
We conclude that since $G_\H=G_{\mathbf 1}\circ !_{\H}$, we have a function $F\circ G_\mathbf 1:\mathbf 1\to\X$, such that $F\circ G_1\circ !_\H$ is the least element of $\qSet(\H,\X)$ for each atomic quantum set $\H$. It follows from Lemma \ref{lem:pointed qcpo} that $(\X,R)$ is a pointed quantum cpo, so $(\X,R)=(\Y,S)_\perp$ for some quantum cpo $(\Y,S)$. The same lemma assures the existence of a one-dimensional atom $X_\perp\atomof\X$ such that $R(X_\perp,X)=L(X_\perp,X)$ and $(F\circ G_{\mathbf 1})(\CC,X)=\delta_{X_\perp,X}L(\CC,X) $ for each $X\atomof\X$.

Next we show that $F=M_\Y$. Since $F\circ H_\X=I_\X$, and $H_\X=J_\X$, it follows that $F(X,Y)=\delta_{X,Y}\CC 1_X$ for each $X,Y\atomof\X$. It remains to show that $F(\CC_\perp,X_\perp)=L(\CC_\perp,X_\perp)$. Since $R(X_\perp,X)=L(X_\perp,X)$ for each $X\atomof\X_\perp$, it follows from Lemma \ref{lem:pointed order} that $R(X,X_\perp)=0$ for each atom $X\neq\CC_\perp$. Hence $(R\circ F)(\CC_\perp,X_\perp)=\bigvee_{X\atomof\X}R(X,X_\perp)\cdot F(\CC_\perp,X)=R(X_\perp,X_\perp)\cdot F(\CC_\perp,X_\perp)=L(X_\perp,X_\perp)\cdot F(\CC_\perp,X_\perp)=F(\CC_\perp,X_\perp)$. Since $F$ is monotone, we have $F\circ R_\perp\leq R\circ F$, hence 
$L(\CC_\perp,X_\perp)=L(X_\perp,X_\perp)\cdot L(\CC_\perp,X_\perp)=F(X_\perp,X_\perp)\cdot R_\perp(\CC_\perp,X_\perp)\leq \bigvee_{X\atomof\X_\perp}F(X,X_\perp)\cdot R(\CC_\perp,X)=(F\circ R_\perp)(\CC_\perp,X_\perp)\leq (R\circ F)(\CC_\perp,X_\perp)=F(\CC_\perp,X_\perp)$, which forces $F(\CC_\perp,X_\perp)=L(\CC_\perp,X_\perp)$.

We conclude that $(\X,F)=(\Y_\perp,M_\Y)$, so the embedding of $\mathbf{Kl}$ into $\mathbf{EM}$ is surjective on all objects, hence $\mathbf{Kl}$ is isomorphic to $\mathbf{EM}$.
\end{proof}

\begin{theorem}\label{thm:Kl is complete and has coproducts}
The Kleisli category $\mathbf{Kl}$ of the monad $(-)_\perp$ on $\qCPO$ is complete and has all coproducts.
\end{theorem}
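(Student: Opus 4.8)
The plan is to transport completeness and the existence of coproducts from $\qCPO$ to $\mathbf{Kl}$ via the equivalence $\mathbf{Kl}\simeq\mathbf{EM}$ established in Proposition \ref{prop:Kleisli and EM category}, together with the standard facts that Eilenberg--Moore categories create limits and that the lift monad is suitably well-behaved. Since $\mathbf{Kl}$ is equivalent to $\mathbf{EM}$, and categorical (co)completeness is invariant under equivalence, it suffices to prove that $\mathbf{EM}$ is complete and has all coproducts.

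For completeness: the forgetful functor $U\:\mathbf{EM}\to\qCPO$ creates limits (this is a classical fact about Eilenberg--Moore categories, e.g. \cite[Theorem VI.2.1]{maclane} adapted to limits, or Borceux). Since $\qCPO$ is complete by Theorem \ref{thm:qCPO has all limits}, it follows immediately that $\mathbf{EM}$ is complete, and hence so is $\mathbf{Kl}$. First I would state this creation-of-limits fact, cite it, and conclude.

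For coproducts the situation is more delicate, because $U$ does not in general create colimits. Here I would use the fact that $\qCPO$ is cocomplete (Theorem \ref{thm:qCPO is cocomplete}) and that $(-)_\perp$ preserves coproducts in a strong enough sense: on the underlying quantum sets the monad acts as $\X\mapsto\X\uplus\mathbf 1$, the maybe monad, and a coproduct of pointed quantum cpos $\{(\Y_\alpha,S_\alpha)_\perp\}_{\alpha\in A}$ should be realized by taking the coproduct $\biguplus_{\alpha}\Y_\alpha$ in $\qCPO$ (which exists by Theorem \ref{thm:qCPO has all coproducts}) and lifting it: $\left(\biguplus_{\alpha\in A}\Y_\alpha\right)_\perp$. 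One identifies $\mathbf{Kl}$ with the category of free algebras, so a coproduct in $\mathbf{Kl}$ of the objects $\Y_\alpha$ (viewed as Kleisli objects) should simply be $\biguplus_{\alpha\in A}\Y_\alpha$ again, with injections the Kleisli images of the $\qCPO$-injections. The verification of the universal property in $\mathbf{Kl}$ reduces to the universal property of the coproduct in $\qCPO$ combined with the defining adjunction of the Kleisli construction: a cocone $\{F_\alpha\:\Y_\alpha\circlearrow\Z\}_{\alpha\in A}$ in $\mathbf{Kl}$ is a family $F_\alpha\:\Y_\alpha\to\Z_\perp$ in $\qCPO$, which by the universal property of $\biguplus_\alpha\Y_\alpha$ in $\qCPO$ yields a unique Scott continuous $[F_\alpha:\alpha\in A]\:\biguplus_\alpha\Y_\alpha\to\Z_\perp$, i.e. a unique $\mathbf{Kl}$-morphism $\biguplus_\alpha\Y_\alpha\circlearrow\Z$; naturality of the Kleisli injections and uniqueness follow by the standard argument that the free-algebra functor $\qCPO\to\mathbf{Kl}$ is a left adjoint and hence preserves colimits. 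In fact, the cleanest route is exactly this: the canonical functor $\mathcal V\:\qCPO\to\mathbf{Kl}$ is a left adjoint (to the forgetful functor $\mathbf{Kl}\to\qCPO$), so it preserves all colimits that exist in $\qCPO$; since $\qCPO$ has all coproducts, $\mathbf{Kl}$ has all coproducts, namely $\mathcal V$ applied to the coproduct in $\qCPO$.

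The step I expect to be the main obstacle is the coproduct claim, specifically making precise the assertion that $\mathcal V\:\qCPO\to\mathbf{Kl}$ is a left adjoint and therefore preserves coproducts — this is standard for Kleisli categories (the free functor is always left adjoint to the cofree/forgetful functor $\CCC_T\to\CCC$), so the real content is just invoking it correctly and checking that the resulting coproduct cocone in $\mathbf{Kl}$ does not accidentally require completeness-type hypotheses we lack. Completeness, by contrast, is essentially immediate once one cites creation of limits by $U\:\mathbf{EM}\to\qCPO$ (equivalently, one can observe directly that a limit in $\mathbf{Kl}\simeq\mathbf{EM}$ of free algebras need not be free, which is precisely why we pass through $\mathbf{EM}$ rather than staying in $\mathbf{Kl}$). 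So the proof will be short: one paragraph citing creation of limits and Theorem \ref{thm:qCPO has all limits} for completeness, and one paragraph observing that $\mathcal V$ is a left adjoint and invoking Theorem \ref{thm:qCPO has all coproducts} for coproducts.
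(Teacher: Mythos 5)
Your proposal is correct and takes essentially the same route as the paper: the paper also obtains coproducts by transporting them along the (left-adjoint, bijective-on-objects) free functor $\qCPO\to\mathbf{Kl}$ (citing Szigeti's Proposition 2.2 for this), and also obtains completeness by passing through the isomorphism $\mathbf{Kl}\cong\mathbf{EM}$ of Proposition \ref{prop:Kleisli and EM category} together with completeness of $\qCPO$ (citing Szigeti's Theorem 3.1 in place of your direct creation-of-limits argument). The only difference is which standard references are invoked, not the mathematical content.
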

\begin{proof}
Since $\qCPO$ has all coproducts (cf. Theorem \ref{thm:qCPO has all coproducts}), it follows from \cite[Proposition 2.2]{Szigeti} that $\mathbf{Kl}$ also has all coproducts. By Proposition \ref{prop:Kleisli and EM category}, there is an isomorphism of categories $\mathbf{Kl}\to\mathbf{EM}$, which trivially has a right adjoint: its inverse. Since $\qCPO$ is complete (cf. Theorem \ref{thm:qCPO has all limits}), it follows now from  \cite[Theorem 3.1]{Szigeti} that $\mathbf{Kl}$ is also complete.
\end{proof}

\subsection{$\CPO$-algebraically compact categories of quantum cpos}\label{sec:algebraic-compactness}
In this section, we show that the equivalent categories $\qCPO_{\perp!}$ and $\mathbf{Kl}$ are $\CPO$-algebraically compact. This notion, which we will define below, is nearly sufficient to model recursion; only a few additional mild conditions are needed to obtain computational adequacy.

\begin{definition}
A category $\mathbf C$ enriched over a symmetric monoidal category $\mathbf V$ is called $\mathbf V$-\emph{algebraically compact} if each $\mathbf V$-endofunctor $T$ on $\mathbf C$ has an initial $T$-algebra $\omega:T\Omega\to \Omega$ such that $\omega^{-1}:\Omega\to T\Omega$ is a final $T$-coalgebra.
\end{definition}

We next need the following definition:
\begin{definition}
    Let $\mathbf C$ be a $\CPO$-enriched category.
    \begin{itemize}
        \item Given two objects $A$ and $B$ of $\mathbf C$,  two morphisms $e:A\to B$ and $p:B\to A$ are said to form an \emph{embedding-projection pair} or an \emph{e-p pair} if $p\circ e=1_A$ and $e\circ p\leq 1_B$. The morphism $e$ and $p$ are called an \emph{embedding} and a \emph{projection}, respectively.
        \item $\mathbf C$ has an \emph{e-initial} object if it has an initial object $0$ such that any morphism with source $0$ is an embedding. \item $\mathbf C$ has a \emph{p-terminal} object if it has a terminal object $1$ such that any morphism with target $1$ is a projection.
        \item 
 $\mathbf C$ is said to have an \emph{ep-zero} object if it has an e-initial object that is p-terminal.
 \item $\mathbf C$ has \emph{$\omega$-colimits over embeddings} if every $\omega$-diagram in $\mathbf C$ whose connecting morphisms are embeddings has a colimit.
 \item $\mathbf C$ has \emph{$\omega$-limits over projections} if every $\omega$-diagram in $\mathbf C$ whose connecting morphisms are projections has a limit.
    \end{itemize}
\end{definition}

\begin{lemma}$\mathbf{Kl}$ is enriched over $\CPO_{\perp!}$.
\end{lemma}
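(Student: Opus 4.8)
The goal is to show that for any two quantum cpos $\X$ and $\Y$, the hom-set $\mathbf{Kl}(\X,\Y)$, which by definition is $\qCPO(\X,\Y_\perp)$, is a pointed cpo, and that composition in $\mathbf{Kl}$ is Scott continuous and strict in each variable. The plan is to leverage two facts already in hand: first, that $\qCPO$ is enriched over $\CPO$ (Theorem~\ref{thm:qCPO enriched over CPO}), so $\qCPO(\X,\Y_\perp)$ is already a cpo; and second, that $\Y_\perp$ is a pointed quantum cpo (Proposition~\ref{prop:lift functor qPOS} together with Lemma~\ref{lem:pointed qcpo}), so that Lemma~\ref{lem:pointed qcpo 2} furnishes a least element $B_{\X,\Y_\perp} = B_{\Y_\perp}\circ{!}_\X$ of $\qSet(\X,\Y_\perp)$, which is in fact Scott continuous. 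Hence $\mathbf{Kl}(\X,\Y)$ is a pointed cpo, with least element the constant-bottom map.

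Next I would verify that Kleisli composition $\bullet$ respects this structure. Recall $G\bullet F = M_\Z\circ G_\perp\circ F$ for $F\colon\X\circlearrow\Y$ and $G\colon\Y\circlearrow\Z$. For Scott continuity in each variable separately: fixing $G$ and varying $F$ along a monotone sequence with a limit, the map $F\mapsto G\bullet F$ is left composition with the fixed Scott continuous map $M_\Z\circ G_\perp$ (using that $M_\Z$ is Scott continuous by Lemma~\ref{lem:monad maps-M} and that $(-)_\perp$ sends Scott continuous maps to Scott continuous maps by Proposition~\ref{prop:lift functor qPOS2} combined with Lemma~\ref{lem:lift lemma}), so Lemma~\ref{lem:left multiplication by Scott continuous function is Scott continuous} applies. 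Fixing $F$ and varying $G$: the assignment $G\mapsto G_\perp$ is Scott continuous on hom-cpos — this follows because $(-)_\perp$ acts on underlying quantum sets as the maybe monad and on the added atom trivially, so a limit of a sequence $G_n$ lifts componentwise to a limit of $G_{n,\perp}$ — and then $G\mapsto M_\Z\circ G_\perp\circ F$ is a composite of this with left composition by $M_\Z$ (Scott continuous) and right composition by $F$ (Scott continuous by Lemma~\ref{lem:right multiplication is Scott continuous}). Then $\bullet$ is Scott continuous in both variables, hence Scott continuous by the two-variable argument already used in Theorem~\ref{thm:qCPO enriched over CPO}.

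It remains to check strictness, i.e.\ that composing with a bottom element yields a bottom element: $G\bullet B_{\X,\Y} = B_{\X,\Z}$ and $B_{\Y,\Z}\bullet F = B_{\X,\Z}$ in $\mathbf{Kl}(\X,\Z)$. For the first, $B_{\X,\Y} = B_{\Y_\perp}\circ{!}_\X$ factors through $\mathbf 1$, and by Lemma~\ref{lem:pointed qcpo 2} applied in $\Z_\perp$ the composite $M_\Z\circ G_\perp\circ B_{\Y_\perp}\circ{!}_\X$ equals the bottom map $B_{\Z_\perp}\circ{!}_\X = B_{\X,\Z}$; here one uses that $M_\Z\circ(B_{\Y})_\perp$ sends the bottom of $\Y_\perp$ to the bottom of $\Z_\perp$, which is the computation $M_\Z\circ B_{\Z_{\perp\perp}} = B_{\Z_\perp}$ carried out explicitly in the proof that $\qCPO_{\perp!}\simeq\mathbf{Kl}$. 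For the second, $B_{\Y,\Z} = B_{\Z_\perp}\circ{!}_\Y$, so $M_\Z\circ(B_{\Y,\Z})_\perp\circ F$; since $B_{\Y,\Z}$ is itself constant at the bottom atom, a direct component computation (as in Lemma~\ref{lem:property of strict function} and Lemma~\ref{lem:pointed qcpo 2}) shows the result is the constant-bottom map $B_{\X,\Z}$. Finally, one observes that the identities of $\mathbf{Kl}$, namely $H_\X = J_\X$, are not bottom maps but this is irrelevant; what is needed for $\CPO_{\perp!}$-enrichment is only that the hom-objects are pointed cpos and composition is strict Scott continuous, which has now been established. The main obstacle I anticipate is the bookkeeping in the strictness computations: one must unfold $M$, $(-)_\perp$ and the bottom maps $B_{-}$ at the level of matrix components over atoms and confirm the added atom $\CC_\perp$ behaves correctly under lifting and multiplication; none of it is deep, but it is exactly the kind of componentwise verification that is easy to get subtly wrong, so I would organize it by first isolating the lemma ``$M_\W\circ (B_{\V,\W})_\perp = B_{\V_\perp,\W_\perp}$ restricted appropriately'' and reusing it on both sides.
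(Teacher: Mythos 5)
Your proposal is correct, but it proves a different half of the statement than the paper does, and the two arguments are complementary rather than parallel. Both you and the paper begin identically: $\mathbf{Kl}(\X,\Y)=\qCPO(\X,\Y_\perp)$ is a cpo by Theorem \ref{thm:qCPO enriched over CPO} and is pointed with least element $B_{\X,\Y_\perp}$ by Lemma \ref{lem:pointed qcpo 2}. After that you diverge: you verify that Kleisli \emph{composition} $G\bullet F=M_\Z\circ G_\perp\circ F$ is Scott continuous and strict in each variable, which is the literal content of ``enriched category''; the paper instead verifies that the \emph{monoidal product} $(F,G)\mapsto F\odot G=K_{\X',\Y'}\circ(F\times G)$ is Scott continuous (via Lemmas \ref{lem:monad maps-K} and \ref{lem:tensor product is Scott continuous bifunctor}) and strict (by an explicit atomwise computation of $B_{\X,\X'_\perp}\odot B_{\Y,\Y'_\perp}$), which is what is later needed for $\mathbf{Kl}$ to be a \emph{symmetric monoidal} $\CPO_{\perp!}$-category in the sense used for order reflection. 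Your route buys the part the paper leaves implicit, and vice versa; a fully self-contained treatment would include both checks. Your argument itself is sound: the only small repair is the citation for ``$(-)_\perp$ preserves Scott continuity'' — the tool you want is Lemma \ref{lem:lift lemma 2} applied to $G_\perp$ together with the naturality identity $G_\perp\circ J_\Y=J_{\Z_\perp}\circ G$ of Proposition \ref{prop:lift functor qPOS2}, rather than Lemma \ref{lem:lift lemma}; and your componentwise claim that $G_n\nearrow G_\infty$ implies $(G_n)_\perp\nearrow(G_\infty)_\perp$ is true (infima of binary relations are computed componentwise, and the components over the added atom are constant in $n$), though it deserves the one-line verification you sketch. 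Your strictness computations for $\bullet$ are correct and, as you anticipate, reduce to the identity $M_\Z\circ(B_{\V,\Z_\perp})_\perp=B_{\V_\perp,\Z_\perp}$ already carried out in the proof that $\qCPO_{\perp!}\simeq\mathbf{Kl}$.
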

\begin{proof}
Let $\X$ and $\X'$ be quantum cpos. Then it follows from Theorem \ref{thm:qCPO enriched over CPO} and Lemma \ref{lem:pointed qcpo 2} that $\mathbf{Kl}(\X,\X')=\qCPO(\X,\X'_\perp)$ is a pointed cpo with least element $B_{\X,\X'_\perp}$. Let $\Y$ and $\Y'$ be two other quantum cpos. By Corollary \ref{cor:Kl is monoidal} the monoidal product $F\odot G$ of two morphisms $F:\X\circlearrow \X'$ and $G:\Y\circlearrow\Y'$ in $\mathbf{Kl}$ is given by $K_{\X'\times \Y'}\circ( F\times G)$. Since $K_{\X',\Y'}$ is Scott continuous by Lemma \ref{lem:monad maps-K} and $(-\times-)$ is Scott continuous by Lemma \ref{lem:tensor product is Scott continuous bifunctor}, it follows that $(-\odot-)$ is Scott continuous. A short calculation shows that for any two quantum cpos $\Z$ and $\Z'$, we have $B_{\Z,\Z'_\perp}(Z,Z')=\delta_{Z',\CC_\perp}L(Z,Z')$ for each $Z\atomof\Z$ and $Z'\atomof\Z'_\perp$. Hence for each $X\otimes Y\atomof\X\times\Y$ and $Z\atomof(\X'\times\Y')_\perp$, we have
\begin{align*}(B_{\X,\X'_\perp}\odot B_{\Y,\Y'_\perp})(X\otimes Y,Z)  & = \bigvee_{X'\otimes Y'\atomof\X'_\perp\times\Y'_\perp}K_{\X',\Y'}(X'\otimes Y',Z)\cdot (B_{\X,\X'_\perp}\times B_{\Y,\Y'_\perp})(X\otimes Y,X'\otimes Y')\\
& = K_{\X',\Y'}(\CC_\perp\otimes \CC_\perp,Z)\cdot (L(X,\CC_\perp)\otimes L(Y,\CC_\perp))\\
& = \delta_{Z,\CC_\perp}L(\CC_\perp\otimes\CC_\perp,Z)\cdot L(X\otimes Y,\CC_\perp\otimes\CC_\perp)\\
& = \delta_{Z,\CC_\perp}L(X\otimes Y,Z)= B_{\X\times\Y,(\X'\times\Y')_\perp}(X\otimes Y,Z)
\end{align*}
Hence, $(-\odot-)$ is strict. Let $\X$, $\Y$ and $\Z$ be quantum cpos. Then the map $\mathbf{Kl}(\Y,\Z)\times \mathbf{Kl}(\X,\Y)\to\mathbf{Kl}(\X,\Z)$, $(G,F)\mapsto G\bullet F$ is the map $\qCPO(\Y,\Z_\perp)\times\qCPO(\X,\Y_\perp)\to\qCPO(\X,\Z_\perp)$, $(G,F)\mapsto M_\Z\circ \M G\circ F$. This map is Scott continuous because the map $G\mapsto \M G$ is Scott continuous by Theorem \ref{thm:monoidal monad}, hence the map $(G,F)\mapsto \M G\circ F$ is Scott continuous by Lemma \ref{lem:right multiplication is Scott continuous}, and since $M_\Z$ is Scott continuous by Lemma \ref{lem:monad maps-M}, it also follows from Lemma \ref{lem:left multiplication by Scott continuous function is Scott continuous} that $(G,F)\mapsto M_\Z\circ \M G\circ F$ is Scott continuous. We verify that $G\bullet B_{\X,\Y_\perp}=B_{\Y,\Z_\perp}\bullet F=B_{\X,\Z_\perp}$. We write $\CC_1$ for the augmented bottom in $\Z_\perp$ and $\CC_\perp$ for the augmented bottom in $\Z_{\perp\perp}$. Then for $X\atomof\X$ and $Z\atomof\Z_{\perp\perp}$, we have
\[ (\M G\circ B_{\X,\Y_\perp})(X,Z)=(G\uplus I_\mathbf 1)(\CC_\perp,Z)\cdot L(X,\CC_\perp)=\begin{cases}L(X,\CC_2), & Z=\CC_\perp,\\
0, & Z\neq\CC_\perp,
\end{cases}\]
hence for each $X\atomof\X$ and $Z\atomof\Z_\perp$, we have
\begin{align*}(G\bullet B_{\X,\Y_\perp})(X,Z) & = (M_\Z\circ\M G\circ B_{\X,\Y_\perp})(X,Z) =M_\Z(\CC_\perp,Z)\cdot L(X,\CC_\perp)\\
& =\delta_{Z,\CC_1}L(\CC_\perp,Z)\cdot L(X,\CC_\perp)
 =\delta_{Z,\CC_1}L(X,Z)=B_{\X,\Z_\perp}(X,Z).\end{align*}
Moreover, for $Z\atomof\Z_\perp$ and $Y\atomof\Y_\perp$, we have
\begin{align*}
    (M_\Z\circ \M B_{\Y,\Z_\perp})(Y,Z) & = (M_\Z\circ (B_{\Y,\Z_\perp}\uplus I_\mathbf 1))(Y,Z) =\bigvee_{W\atomof\Z_{\perp\perp}}M_\Z(W,Z)\cdot (B_{\Y,\Z_\perp}\uplus I_\mathbf 1)(Y,W)\\
    & = \begin{cases} \bigvee_{W\atomof\Z_\perp}M_\Z(W,Z)\cdot B_{\Y,Z_\perp}(Y,W), & Y\neq\CC_\perp,\\
    M_\Z(\CC_\perp,Z)\cdot I_\mathbf 1(\CC_\perp,\CC_\perp), & Y=\CC_\perp,
    \end{cases}\\
    & = \begin{cases}
        M_\Z(\CC_1,Z)\cdot L(Y,\CC_1), & Y\neq \CC_\perp,\\
         M_\Z(\CC_\perp,Z), & Y=\CC_\perp,
    \end{cases}\\
    & =\delta_{Z,\CC_1}L(Y,Z)=B_{\Y_\perp,\Z_\perp}(Y,Z).
\end{align*}
So, $M_\Z\circ \M B_{\Y,\Z_\perp}=B_{\Y_\perp,\Z_\perp}$. It now follows from Lemma \ref{lem:pointed qcpo 2} that
\[ B_{\Y,\Z_\perp}\bullet F=M_\Z\circ \M B_{\Y,\Z_\perp}\circ F=B_{\Y_\perp,\Z_\perp}\circ F=B_{\X,\Z_\perp}. \]
It follows that $\mathbf{Kl}$ is enriched over $\CPO_{\perp!}$.
\end{proof}

\begin{lemma}\label{lem:ep-zero}
$\mathbf{Kl}$ has an ep-zero object.
\end{lemma}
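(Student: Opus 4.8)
\textbf{Proof strategy for Lemma \ref{lem:ep-zero}.}
The plan is to verify that the empty quantum cpo, equipped with its unique order, serves as an ep-zero object in $\mathbf{Kl}$; equivalently, I will show that $`\emptyset$ is both e-initial and p-terminal in $\mathbf{Kl}$. First I would recall that $`\emptyset$ is the initial object of $\qCPO$ (it is initial already in $\qSet$, hence in $\qCPO$ since $\qSet(`\emptyset, -)$ and $\qCPO(`\emptyset,-)$ agree by Example \ref{ex:function with trivially ordered domain is Scott continuous}), and that $(`\emptyset)_\perp = \M(`\emptyset) = `\emptyset \uplus \mathbf 1 \cong \mathbf 1$. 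So as an object of $\mathbf{Kl}$, $`\emptyset$ has $\mathbf{Kl}(`\emptyset, \X) = \qCPO(`\emptyset, \X_\perp)$, which is a singleton (the unique function out of the initial object), and symmetrically $\mathbf{Kl}(\X, `\emptyset) = \qCPO(\X, (`\emptyset)_\perp) = \qCPO(\X, \mathbf 1)$, which is also a singleton since $\mathbf 1$ is terminal in $\qCPO$. Thus $`\emptyset$ is both initial and terminal in $\mathbf{Kl}$, i.e., a zero object.

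Next I would check the embedding-projection conditions. Write $!_\X\: `\emptyset \circlearrow \X$ and $?_\X\: \X \circlearrow `\emptyset$ for the unique such morphisms. To see that $!_\X$ is an embedding, I must produce a projection $p$ with $p \bullet !_\X = \mathrm{id}_{`\emptyset}$ and $!_\X \bullet p \leq \mathrm{id}_\X$; the natural candidate is $p = ?_\X$. The first identity is automatic because $\mathbf{Kl}(`\emptyset, `\emptyset)$ is a singleton. For the inequality $!_\X \bullet ?_\X \leq \mathrm{id}_\X$ in $\mathbf{Kl}(\X,\X) = \qCPO(\X, \X_\perp)$, the key observation is that $!_\X \bullet ?_\X$ factors through the object $`\emptyset$, which is terminal in $\mathbf{Kl}$; unwinding the Kleisli composition $M_\X \circ (!_\X)_\perp \circ ?_\X$, and using that $?_\X\: \X \to (`\emptyset)_\perp = \mathbf 1$ is the constant map onto the added bottom atom $\CC_\perp$, one computes that $!_\X \bullet ?_\X = B_{\X, \X_\perp}$, the least element of $\qCPO(\X, \X_\perp) = \mathbf{Kl}(\X,\X)$ by Lemma \ref{lem:pointed qcpo 2}. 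Since the identity on $\X$ in $\mathbf{Kl}$ is $H_\X = J_\X$, and $B_{\X,\X_\perp} \sqsubseteq J_\X$ because $B_{\X,\X_\perp}$ is the \emph{least} element of that homset, we get $!_\X \bullet ?_\X \leq \mathrm{id}_\X$, so $!_\X$ is an embedding. A symmetric (in fact easier) argument shows that every morphism $\X \circlearrow `\emptyset$ is a projection: $?_\X \bullet !_\X = \mathrm{id}_{`\emptyset}$ holds trivially in the singleton homset, and $!_\X \bullet ?_\X \leq \mathrm{id}_\X$ is exactly what we just proved.

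Finally I would package these observations: since every morphism out of $`\emptyset$ is an embedding, $`\emptyset$ is e-initial in $\mathbf{Kl}$; since every morphism into $`\emptyset$ is a projection, $`\emptyset$ is p-terminal; hence $`\emptyset$ is an ep-zero object. The only step requiring genuine care is the identification $!_\X \bullet ?_\X = B_{\X,\X_\perp}$, i.e., checking componentwise that composing the constant-bottom function $\X \to \mathbf 1$ with the lift structure maps yields the least Scott continuous function $\X \to \X_\perp$; this is a short computation using the explicit description of $M_\X$ from Theorem \ref{thm:maybe monad on qSet} and of $B_{-,-}$ from the proof of Lemma \ref{lem:pointed qcpo}, together with the fact (Lemma \ref{lem:pointed qcpo 2}) that $B_{\X,\X_\perp}$ is characterized as the bottom of the homset, so no further verification of minimality is needed once the factorization through the terminal object $`\emptyset$ of $\mathbf{Kl}$ is observed.
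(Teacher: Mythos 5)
Your proposal is correct and follows essentially the same route as the paper: identify the empty quantum set as a zero object of $\mathbf{Kl}$ (using $(`\emptyset)_\perp \cong \mathbf 1$ for terminality), note that the composite $\emptyset \circlearrow \X \circlearrow \emptyset$ is forced to be the identity by initiality, and compute that the reverse composite unwinds via $M_\X \circ (E_\X)_\perp \circ {!_\X}$ to the bottom element $B_{\X,\X_\perp}$ of the homset, hence lies below the Kleisli identity $J_\X$. The only difference is notational (you write $!_\X$ for the initial map where the paper uses $E_\X$), and your deferral of the componentwise computation of $M_\X \circ B_{\X_{\perp\perp}} = B_{\X_\perp}$ is exactly the calculation the paper carries out explicitly.
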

\begin{proof}
    Note that $H_\X=J_\X$ is the identity on $\X$ in $\mathbf{Kl}$. The empty quantum set $\emptyset$ is the initial object of $\qCPO$, hence also of $\mathbf{Kl}$. Given any $\X\in\mathbf{Kl}$, we denote the initial map $\emptyset\circlearrow\X$ by $E_\X$, which is a morphism $\emptyset\to\X_\perp$ in $\qCPO$. 
    
    $\emptyset$ is also the terminal object of $\mathbf{Kl}$, since $\mathbf{Kl}(\X,\emptyset)=\qCPO(\X,\emptyset_\perp)=\qCPO(\X,\mathbf 1)=1$. Hence, we are done if we can show that $E_\X$ and the terminal map $!_\X:\X\to\mathbf 1$ in $\qCPO$ form an e-p pair. Since $!_\X\bullet E_\X$ is a morphism $\emptyset\circlearrow\emptyset$, and since $\emptyset$ is initial in $\mathbf{Kl}$, the only element in $\mathbf{Kl}(\emptyset,\emptyset)$ is $J_\emptyset$, so $!_\X\bullet E_\X=J_\emptyset$. 
We have $E_\X\bullet !_\X=M_\X\circ (E_\X)_\perp\circ !_\X$. It easily follows from the action of the lift monad on morphisms that $(E_\X)_\perp=B_{\X_{\perp\perp}}$. Furthermore, if we denote $\X_\perp=\X\uplus Q\{\CC_1\}$ and $\X_{\perp\perp}=\X\uplus\Q\{\CC_2\}$, we have $M_\X\circ B_{\X_{\perp\perp}}(\CC,X)=M_\X(\CC_2,X)\cdot B_{\X_{\perp\perp}}(\CC,\CC_2)=\delta_{X,\CC_1}L(\CC_2,X)\cdot L(\CC,\CC_2)=\delta_{X,\CC_1} L(\CC,X)=B_{\X_\perp}(\CC,X)$ for each $X\atomof\X_\perp$. Using Lemma \ref{lem:pointed qcpo 2}, we now obtain $E_\X\bullet !_\X=M_\X\circ B_{\X_{\perp\perp}}\circ !_\X=B_{\X_\perp}\circ !_\X=B_{\X,\X_\perp}$, which is the least element of $\qCPO(\X,\X_\perp)=\mathbf{Kl}(\X,\X)$. Hence, $E_\X\bullet !_\X=B_{\X_\perp}\sqsubseteq J_\X$. We conclude that $E_\X$ and $!_\X$ indeed form an e-p pair.
\end{proof}

\begin{lemma}\label{lem:omega-colimits}
    $\mathbf{Kl}$ has all $\omega$-colimits over embeddings.
\end{lemma}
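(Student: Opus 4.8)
The plan is to use the standard domain-theoretic machinery of e-p pairs together with the completeness and cocompleteness results already established for $\qCPO$. First I would recall that $\mathbf{Kl}$ is $\CPO$-enriched, so for any $\omega$-diagram
\[
\X_0 \xrightarrow{\;e_0\;} \X_1 \xrightarrow{\;e_1\;} \X_2 \xrightarrow{\;e_2\;} \cdots
\]
in $\mathbf{Kl}$ whose connecting morphisms $e_n$ are embeddings, each $e_n$ has a uniquely determined projection $p_n$ with $p_n\bullet e_n = J_{\X_n}$ and $e_n\bullet p_n \sqsubseteq J_{\X_{n+1}}$; uniqueness of the projection follows from the standard argument that $\CPO$-enrichment makes the projection recoverable from the embedding. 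The key idea, as in the classical treatment of $\CPO$, is the ``limit-colimit coincidence'': the colimit of the $\omega$-diagram over the embeddings $e_n$ is carried by the same object as the limit of the reversed $\omega$-diagram over the projections $p_n$.

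The key steps, in order, would be: (1) Pass to the reversed diagram $\cdots \xrightarrow{p_2} \X_2 \xrightarrow{p_1} \X_1 \xrightarrow{p_0} \X_0$ in $\mathbf{Kl}$. By Theorem \ref{thm:Kl is complete and has coproducts}, $\mathbf{Kl}$ is complete, so this $\omega^{\mathrm{op}}$-diagram has a limit $\X_\infty$ with limiting cone $\pi_n\colon \X_\infty \circlearrow \X_n$. (2) For each $n$, construct the candidate embedding $\iota_n\colon \X_n \circlearrow \X_\infty$ into the limit by using the universal property of the limit applied to the cone whose $m$-component is the composite $\X_n \to \X_m$ built from the appropriate $e$'s or $p$'s (i.e. $e_{m-1}\bullet\cdots\bullet e_n$ for $m \geq n$ and $p_n\bullet\cdots\bullet p_{m}$ for $m < n$); one checks this is indeed a cone using the e-p relations $p_k\bullet e_k = \mathrm{id}$. (3) Verify $\pi_n \bullet \iota_n = J_{\X_n}$, so that $(\iota_n, \pi_n)$ is an e-p pair; this is immediate from the defining cone equations. (4) Prove the crucial inequality $\bigsqcup_{n} \iota_n \bullet \pi_n = J_{\X_\infty}$ in the cpo $\mathbf{Kl}(\X_\infty, \X_\infty)$ — this uses $\CPO$-enrichment of $\mathbf{Kl}$, the fact that composition is Scott continuous, and that the $\pi_n$ jointly reflect the identity on the limit. (5) From (3) and (4), deduce by the standard argument that $\X_\infty$ together with the embeddings $\iota_n$ is the colimit of the original $\omega$-diagram over embeddings: given any cocone $(f_n\colon \X_n \circlearrow \Y)_n$ over the $e_n$, the mediating morphism is $f := \bigsqcup_n f_n \bullet \pi_n$, which is well defined because $\mathbf{Kl}$ is $\CPO$-enriched and $f_n \bullet \pi_n \sqsubseteq f_{n+1} \bullet \pi_{n+1}$ (using $f_{n+1}\bullet e_n = f_n$ and $e_n \bullet p_n \sqsubseteq \mathrm{id}$); uniqueness follows from $\bigsqcup_n \iota_n\bullet\pi_n = J_{\X_\infty}$.

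The main obstacle I expect is step (4), the limit-colimit coincidence equation $\bigsqcup_n \iota_n\bullet\pi_n = J_{\X_\infty}$. The $\sqsubseteq$ direction is easy since each $\iota_n\bullet\pi_n \sqsubseteq J_{\X_\infty}$ (because $\pi_{n+1}\bullet\iota_n\bullet\pi_n = p_n\bullet\pi_n \sqsubseteq \pi_n$ and similarly at other components, combined with the fact that the $\pi_m$ jointly reflect the order — one must be careful here that $\mathbf{Kl}$ inherits enough order-reflection from its limits, which should follow from the limits in $\mathbf{Kl}$ being computed compatibly with the $\CPO$-enrichment via the equivalence with the Eilenberg–Moore category in Proposition \ref{prop:Kleisli and EM category} and the description of limits in $\qCPO$ from Theorem \ref{thm:qCPO has all limits}). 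The reverse direction requires showing that $\bigsqcup_n \iota_n\bullet\pi_n$ is an upper bound forcing equality, which is the heart of the classical proof and must be transported carefully to the Kleisli setting; here the Scott continuity of composition (from the $\CPO$-enrichment of $\mathbf{Kl}$) and the explicit description of $\omega$-limits over projections are the tools. An alternative, possibly cleaner, route would be to transport the whole statement across the equivalence $\mathbf{Kl} \simeq \qCPO_{\perp!} \subseteq \qCPO$ and argue directly in $\qCPO$, but one still has to handle the fact that embeddings in $\qCPO_{\perp!}$ need not be surjections, exactly the subtlety already flagged for extremal epimorphisms in Section \ref{sec:qCPO is cocomplete}.
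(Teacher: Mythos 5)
Your proposal is correct and takes essentially the same route as the paper: deduce $\omega$-limits over projections from the completeness of $\mathbf{Kl}$ (Theorem \ref{thm:Kl is complete and has coproducts}) and then invoke the limit-colimit coincidence for $\CPO$-enriched categories. The only difference is that the paper simply cites Smyth--Plotkin's coincidence theorem as a black box, whereas you re-derive its proof inline (and your worry about order-reflection in step (4) is unnecessary: the identity $\bigsqcup_n \iota_n\bullet\pi_n = J_{\X_\infty}$ follows from the uniqueness clause of the ordinary universal property of the limit, since $\pi_m\bullet\bigsqcup_n \iota_n\bullet\pi_n=\pi_m$ for every $m$).
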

\begin{proof}
 By Theorem \ref{thm:Kl is complete and has coproducts},   $\mathbf{Kl}$ is complete, hence it has all $\omega$-limits over projections. By the limit-colimit coincidence theorem for $\CPO$-enriched categories (see for instance \cite[Theorem 2]{smyth-plotkin:domain-equations}), it follows that $\mathbf{Kl}$ also has all $\omega$-colimits over embeddings.
\end{proof}

\begin{theorem}\label{thm:algebraically compactness}
The equivalent categories $\qCPO_{\perp!}$ and $\mathbf{Kl}$ are $\CPO$-algebraically compact.
\end{theorem}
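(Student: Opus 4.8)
The plan is to invoke the general machinery developed for $\CPO$-enriched categories: a category that is $\CPO$-enriched, has an ep-zero object, and has $\omega$-colimits over embeddings is $\CPO$-algebraically compact. This is a standard result, going back to work of Smyth--Plotkin and Fiore; see for instance \cite[Chapter~7]{fiore-thesis}, where it is shown that a $\CPO$-enriched category with an ep-zero object and $\omega$-colimits over embeddings is algebraically compact with respect to $\CPO$-enriched endofunctors. Since $\qCPO_{\perp!}$ and $\mathbf{Kl}$ are equivalent (indeed isomorphic) categories, and algebraic compactness is preserved under equivalence, it suffices to establish the hypotheses for one of them, say $\mathbf{Kl}$.

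First I would assemble the three ingredients already proved in this section. The category $\mathbf{Kl}$ is $\CPO_{\perp!}$-enriched, hence in particular $\CPO$-enriched. It has an ep-zero object by Lemma~\ref{lem:ep-zero}. And it has all $\omega$-colimits over embeddings by Lemma~\ref{lem:omega-colimits}. Moreover $\mathbf{Kl}$ has all coproducts by Theorem~\ref{thm:Kl is complete and has coproducts}, which, together with the ep-zero object, gives the usual basic structure; but the three bullet items just listed are exactly what the algebraic compactness theorem requires. Applying that theorem, every $\CPO$-enriched endofunctor $T$ on $\mathbf{Kl}$ has an initial $T$-algebra $\omega\colon T\Omega\to\Omega$ whose inverse $\omega^{-1}\colon\Omega\to T\Omega$ is a final $T$-coalgebra. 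This is precisely the statement that $\mathbf{Kl}$ is $\CPO$-algebraically compact. Transporting along the equivalence $\qCPO_{\perp!}\simeq\mathbf{Kl}$ of the preceding subsection then yields the claim for $\qCPO_{\perp!}$ as well.

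The main obstacle, if any, is purely bookkeeping: one must make sure that the enrichment of $\mathbf{Kl}$ over $\CPO$ agrees with the enrichment required by the cited algebraic compactness theorem (some formulations require a pointed, i.e.\ $\CPO_{\perp!}$-, enrichment, which is precisely why the stronger enrichment lemma was established), and that an equivalence of $\CPO$-enriched categories transports initial algebras and final coalgebras of $\CPO$-enriched endofunctors. The latter is routine: an equivalence $\Phi\colon\mathbf{Kl}\to\qCPO_{\perp!}$ together with a $\CPO$-endofunctor $T$ on $\qCPO_{\perp!}$ induces a $\CPO$-endofunctor $\Psi\circ T\circ\Phi$ on $\mathbf{Kl}$ (with $\Psi$ a quasi-inverse), whose initial algebra $\Phi$ carries back to an initial $T$-algebra with the required final-coalgebra property, the natural isomorphisms $\Phi\Psi\cong\mathrm{id}$ and $\Psi\Phi\cong\mathrm{id}$ supplying the necessary structure maps. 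I would state this transport as a short preliminary remark and then conclude in one line. There are no hard computations left; all the genuine work — constructing the lifting monad, identifying $\qCPO_{\perp!}$ as its Kleisli category, establishing the ep-zero object and the colimits — has been done in the preceding lemmas.

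\begin{proof}
By \cite[Chapter~7]{fiore-thesis}, a $\CPO$-enriched category that has an ep-zero object and all $\omega$-colimits over embeddings is $\CPO$-algebraically compact. The category $\mathbf{Kl}$ is enriched over $\CPO_{\perp!}$, hence over $\CPO$. It has an ep-zero object by Lemma \ref{lem:ep-zero} and all $\omega$-colimits over embeddings by Lemma \ref{lem:omega-colimits}. Therefore $\mathbf{Kl}$ is $\CPO$-algebraically compact: every $\CPO$-enriched endofunctor $T$ on $\mathbf{Kl}$ has an initial $T$-algebra $\omega\: T\Omega\to\Omega$ such that $\omega^{-1}\:\Omega\to T\Omega$ is a final $T$-coalgebra.

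It remains to transport this property along the equivalence $\Phi\:\mathbf{Kl}\to\qCPO_{\perp!}$ established in the previous subsection, with quasi-inverse $\Psi\:\qCPO_{\perp!}\to\mathbf{Kl}$. Let $T$ be a $\CPO$-enriched endofunctor on $\qCPO_{\perp!}$. Then $\Psi\circ T\circ\Phi$ is a $\CPO$-enriched endofunctor on $\mathbf{Kl}$, so it has an initial algebra $\omega\: \Psi T\Phi\,\Omega\to\Omega$ whose inverse is a final coalgebra. Using the natural isomorphisms $\Phi\Psi\iso\mathrm{id}$ and $\Psi\Phi\iso\mathrm{id}$, one obtains from $\omega$ an algebra structure $T(\Phi\Omega)\to\Phi\Omega$ that is readily checked to be an initial $T$-algebra, with inverse a final $T$-coalgebra. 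Hence $\qCPO_{\perp!}$ is $\CPO$-algebraically compact as well.
\end{proof}
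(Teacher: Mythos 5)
Your proof is correct and follows essentially the same route as the paper: both proofs assemble the ep-zero object (Lemma \ref{lem:ep-zero}) and the $\omega$-colimits over embeddings (Lemma \ref{lem:omega-colimits}) and then invoke Fiore's criterion (the paper cites \cite[Corollary~7.2.4]{fiore-thesis} specifically) to conclude that $\mathbf{Kl}$ is $\CPO$-algebraically compact. The only difference is that you spell out the routine transport along the equivalence $\qCPO_{\perp!}\simeq\mathbf{Kl}$, which the paper leaves implicit.
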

\begin{proof}
By Lemmas \ref{lem:ep-zero} and \ref{lem:omega-colimits}, it follows that $\mathbf{Kl}$ has an ep-zero object and all $\omega$-colimits over embeddings. 
\cite[Corollary 7.2.4]{fiore-thesis} states that these two properties are sufficient to guarantee that a $\CPO$-category is $\CPO$-algebraically compact.
\end{proof}

\subsection{A categorical model of ECLNL and LNL-FPC}
In this section, we describe how our results can be used to form a denotational model for two quantum programming languages, ECLNL \cite{stringdiagramssemantics} and LNL-FPC \cite{lnl-fpc-lmcs}, both of which support (classically-controlled) recursion. We focus on these languages because they were introduced together, and because there is a characterization of the properties that a category must possess in order to be a model for these languages. Hence, we only have to verify that the model we present using quantum cpos has these properties.

Both ECLNL and LNL-FPC are based on Rios and Selinger's Proto-Quipper-M, introduced in \cite{pqm}, which is a \emph{circuit description language}, i.e., it can be used to generate quantum circuits in a structured way. Moreover, completed circuits can be used as data, which for instance can be stored in variables, and on which meta-operations can be performed. The original version of Proto-Quipper-M did not include constructs for recursion, state preparation or measurement, but the addition of the last two features is relatively straightforward. In that case the language has two runtimes: circuit generation time and circuit execution time. The language ECLNL is an extension of Proto-Quipper-M with recursive terms (but still without state preparation and measurement). LNL-FPC can be regarded both as the circuit-free fragment of Proto-Quipper-M extended with recursive types, and as an extension of Plotkin's FPC with linear types. FPC itself is regarded as a blueprint of a higher-order programming language with recursive types. Just as in FPC, type-level recursion in LNL-FPC induces term-level recursion.

It is generally accepted that models of quantum programming languages should utilize a \emph{call-by-value} methodology. Languages utilizing call-by-value specify a family of \emph{values} (typically of ground type) that do have any syntactic reductions. We recall that a model is \emph{sound} if whenever a term $m$ reduces to a value $v$, then $m$ and $v$ have the same denotation. In this case, $m$ is said to \emph{halt}. A term whose denotation is $\bot$ is said to \emph{diverge}, and \emph{computational adequacy} specifies that if a term $m$ does not diverge, then $m$ halts. Utilizing the fact that $\mathbf{Kl}$ is $\mathbf{CPO}$-algebraically compact, we show that $\mathbf{KL}$ is sound for both ECLNL and LNL-FPC. We also show $\mathbf{Kl}$ is computationally adequate for the circuit-free fragment of LNL-FPC.  Since ECLNL is a quantum circuit description language, showing that it can be modelled by $\mathbf{Kl}$ demonstrates that quantum cpos support recursion during the circuit generation stage of a quantum computation.

\begin{proposition}\label{prop:Kl model}
    The functor $\V\circ`(-):\CPO\to\mathbf{Kl}$ forms the left adjoint of a linear/non-linear model (cf. Definition \ref{def:lnlmodel}) in which both the linear and the non-linear category have all coproducts.
\end{proposition}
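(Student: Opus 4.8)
The plan is to assemble a linear/non-linear (LNL) model from pieces already established in the excerpt, since all the hard categorical work has been done. Recall that an LNL model consists of a cartesian closed category $\mathbf V$, a symmetric monoidal closed category $\mathbf C$, and a strong monoidal functor $\mathbf V\to\mathbf C$ with a right adjoint. Here I will take $\mathbf V=\CPO$, which is well known to be cartesian closed, and $\mathbf C=\mathbf{Kl}$, the Kleisli category of the lift monad on $\qCPO$, which is symmetric monoidal closed by Theorem \ref{thm:qCPObs is monoidal closed}. The candidate functor is the composite $\V\circ\,`(-):\CPO\to\qCPO\to\mathbf{Kl}$, where $`(-):\CPO\to\qCPO$ is the fully faithful strong monoidal functor of Theorem \ref{thm:CPO-qCPO adjunction}, and $\V:\qCPO\to\mathbf{Kl}$ is the canonical Kleisli left adjoint. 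So the two main things to verify are (i) that $\V\circ\,`(-)$ is strong monoidal, and (ii) that it has a right adjoint.

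For (i), strong monoidality is a composite of strong monoidal functors. The functor $`(-)$ is strong monoidal by Theorem \ref{thm:CPO-qCPO adjunction} (with structure map the order isomorphism of Proposition \ref{prop:quote strong monoidal}). The functor $\V:\qCPO\to\mathbf{Kl}$ is strict monoidal by Corollary \ref{cor:Kl is monoidal}, since the symmetric monoidal product $\odot$ on $\mathbf{Kl}$ was defined precisely so that $\V$ is strict monoidal. A composite of a strong monoidal functor after a strict monoidal one is again strong monoidal, so $\V\circ\,`(-)$ is strong monoidal, with structure isomorphisms inherited from $`(-)$. For (ii), I would argue that $\V\circ\,`(-)$ has a right adjoint by composing adjunctions: $\V:\qCPO\to\mathbf{Kl}$ is a left adjoint (the free functor into the Kleisli category, with right adjoint the forgetful functor $\mathbf{Kl}\to\qCPO$ sending $\X$ to $\X_\perp$), and $`(-):\CPO\to\qCPO$ is a left adjoint with right adjoint $\qCPO(\mathbf 1,-)$ by Theorem \ref{thm:CPO-qCPO adjunction}. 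Hence $\V\circ\,`(-)$ has right adjoint $\qCPO(\mathbf 1,-)\circ G$, where $G:\mathbf{Kl}\to\qCPO$ is the Kleisli right adjoint; concretely this is $\X\mapsto\qCPO(\mathbf 1,\X_\perp)$.

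Finally, I need to check that both the linear and the non-linear category have all coproducts. For the non-linear category $\CPO$ this is standard (it is cocomplete). For the linear category $\mathbf{Kl}$, this is exactly the content of Theorem \ref{thm:Kl is complete and has coproducts}, which states that $\mathbf{Kl}$ has all coproducts (indeed it is also complete). So this clause follows immediately by citing that theorem together with the cocompleteness of $\CPO$.

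The proof is therefore essentially a bookkeeping exercise, and I do not expect a genuine obstacle. The one place that requires minor care is making sure that the right adjoint to $\V$ is correctly identified and that composing the two adjunctions is legitimate — but this is just the standard fact that left adjoints compose. If one wanted to be fully explicit, one could remark that $\V$ being a left adjoint is automatic from the general theory of Kleisli categories (\cite[Theorem VI.5.1]{maclane}), and that $`(-)$ being a left adjoint is Theorem \ref{thm:CPO-qCPO adjunction}; composing gives the desired adjunction, with unit and counit obtained by pasting. Thus the proof will read: $\CPO$ is cartesian closed, $\mathbf{Kl}$ is symmetric monoidal closed (Theorem \ref{thm:qCPObs is monoidal closed}), $\V\circ\,`(-)$ is strong monoidal (Corollary \ref{cor:Kl is monoidal}, Theorem \ref{thm:CPO-qCPO adjunction}) and a left adjoint (composite of left adjoints), and both categories have all coproducts ($\CPO$ is cocomplete; $\mathbf{Kl}$ by Theorem \ref{thm:Kl is complete and has coproducts}).
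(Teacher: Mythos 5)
Your proposal is correct and follows essentially the same route as the paper: the adjunction is obtained by composing the adjunction of Theorem \ref{thm:CPO-qCPO adjunction} with the Kleisli adjunction, strong monoidality comes from composing the strong monoidal $`(-)$ with the strict monoidal $\V$ of Corollary \ref{cor:Kl is monoidal}, and the coproduct claims are settled by cocompleteness of $\CPO$ and Theorem \ref{thm:Kl is complete and has coproducts}. Your version is slightly more explicit than the paper's in spelling out the closedness hypotheses on the two categories, which is a welcome addition rather than a deviation.
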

\begin{proof}
    We establish an adjunction with $\CPO$ as follows: we first note that Theorem \ref{thm:CPO-qCPO adjunction} assures that the functor $`(-):\CPO\to\qCPO$ has a right adjoint. Moreover, it is easy to verify that $`(-)$ is strong monoidal. Furthermore, by Corollary \ref{cor:Kl is monoidal}, the functor $\V:\qCPO\to\mathbf{Kl}$ is strict monoidal, and it has a right adjoint by construction of Kleisli categories. As a consequence, the functor $\V\circ`(-):\CPO\to\mathbf{Kl}$ is strong monoidal and has a right adjoint, with which it forms a linear/non-linear adjunction. Since $\CPO$ is cocomplete, it has all coproducts. The existence of coproducts in $\mathbf{Kl}$ follows from Theorem \ref{thm:Kl is complete and has coproducts}.
\end{proof}

\begin{theorem}
    The linear/non-linear model in Proposition \ref{prop:Kl model} is a sound categorical model for ECLNL.
\end{theorem}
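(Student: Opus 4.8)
The plan is to invoke the general soundness theorem for categorical models of Proto-Quipper-M extended with recursive terms established in \cite{stringdiagramssemantics}. That theorem reduces soundness to a short list of categorical hypotheses on the model: one needs a linear/non-linear model $\mathbf V \to \mathbf C$ in which $\mathbf C$ has the relevant coproducts, together with a suitable interpretation of the circuit types, which in Proto-Quipper-M is provided by a strong monoidal functor from a category $\mathbf M$ of circuits into $\mathbf C$ whose image lands among the ``simple'' or ``MTC'' objects. Concretely, I would first recall from \cite{stringdiagramssemantics} the precise definition of a model of Proto-Quipper-M and the statement that every such model, when $\mathbf C$ is additionally enriched so as to support fixpoints of morphisms (which is exactly what $\CPO$-enrichment combined with pointedness gives), is automatically sound for the extension with recursive terms.

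\textbf{Key steps.} First, I would assemble the linear/non-linear model: Proposition \ref{prop:Kl model} already gives the linear/non-linear adjunction $\V \circ {`(-)}\: \CPO \to \mathbf{Kl}$ with both categories having all coproducts, so this ingredient is in hand. Second, I would verify that $\mathbf{Kl}$ supports the interpretation of recursive terms: by Theorem \ref{thm:qCPObs is monoidal closed} the category $\mathbf{Kl}$ is symmetric monoidal closed, and by construction its hom-objects are pointed cpos (being of the form $\qCPO(\X, \Y_\perp)$, cf.\ the enrichment of $\mathbf{Kl}$ over $\CPO_{\perp!}$), so Kleene's fixpoint theorem furnishes canonical fixpoints of the endomorphisms that interpret recursive definitions; moreover composition in $\mathbf{Kl}$ is Scott continuous, so these fixpoints are preserved under substitution, which is what soundness of the fixpoint rule requires. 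Third, I would supply the interpretation of the quantum circuit fragment: one takes $\mathbf M$ to be the category of finite-dimensional quantum circuits (finite-dimensional $C^\ast$-algebras and completely positive maps, or the string-diagrammatic circuit category of \cite{stringdiagramssemantics}) and exhibits a strong monoidal embedding of $\mathbf M$ into $\mathbf{Kl}$ landing in an appropriate subcategory; since the objects of $\mathbf{Kl}$ are quantum cpos and finite quantum posets are automatically quantum cpos (Proposition \ref{prop:finite qposet is qcpo}), the finite-dimensional circuit objects embed as finite (hence trivially-ordered, lifted) quantum cpos, and the monoidal product $\times$ on $\qCPO$ restricts correctly by Corollary \ref{cor:monoidal product of qCPOs}. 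Finally, I would check the coherence conditions tying the circuit functor to the linear/non-linear structure, and then cite the soundness theorem of \cite{stringdiagramssemantics} to conclude.

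\textbf{Main obstacle.} The routine parts are the linear/non-linear structure and the fixpoint interpretation, both of which follow essentially immediately from the results already proved in this paper. The genuine work is the third step: identifying the correct subcategory of $\mathbf{Kl}$ that plays the role of the ``model of the circuit category'' and verifying that the finite-dimensional quantum data types embed into it as a strong monoidal functor compatible with daggers and with the coproduct structure, in a way that matches the axioms of a Proto-Quipper-M model in \cite{stringdiagramssemantics}. This requires translating between the quantum-set presentation used here ($\qRel$, $\qCPO$) and the $C^\ast$-algebraic presentation of circuits used in loc.\ cit., which is bookkeeping rather than deep mathematics but is where essentially all the care is needed; once that translation is in place, soundness is a black-box appeal to the cited theorem.
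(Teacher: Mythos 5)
Your overall strategy — reduce everything to the soundness theorem of \cite{stringdiagramssemantics} — is the same as the paper's, but the hypotheses you propose to discharge are not the ones that theorem actually asks for, and this creates a real gap. The paper's proof verifies exactly three things: (i) the data of Proposition \ref{prop:Kl model} constitute a CLNL-model in the sense of \cite{stringdiagramssemantics}*{Definition 2}; (ii) all the categorical data of such a model is automatically $\CPO$-enriched, by \cite{stringdiagramssemantics}*{Theorem 2} (so this is imported, not re-proved); and (iii) the comonad on $\mathbf{Kl}$ induced by the linear/non-linear adjunction is algebraically compact, which follows from Theorem \ref{thm:algebraically compactness}. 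Soundness is then \cite{stringdiagramssemantics}*{Theorem 7}. Your step 2 replaces (iii) by a hand-rolled Kleene fixpoint argument on the pointed hom-cpos of $\mathbf{Kl}$. Mere existence of least fixpoints in each hom-cpo is weaker than what the cited theorem requires: term recursion in a linear/non-linear language is interpreted at $!$-types, and soundness under substitution needs the structured fixpoint operator that comes out of algebraic compactness of the induced comonad, not just Scott continuity of composition. Since Theorem \ref{thm:algebraically compactness} is available and is precisely the missing hypothesis, you should cite it rather than attempt to rebuild the fixpoint theory; as written, your argument does not connect to the hypotheses of the theorem you intend to invoke as a black box.

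Conversely, the step you identify as ``the genuine work'' — constructing a strong monoidal embedding of a circuit category $\mathbf M$ into $\mathbf{Kl}$ and checking compatibility with daggers and coproducts — is not among the hypotheses of the soundness theorem and is not carried out in the paper. The CLNL-model axiomatization of \cite{stringdiagramssemantics} abstracts the circuit category away, so Proposition \ref{prop:Kl model} already supplies all required data; no translation between the quantum-set and $C^*$-algebraic presentations is needed for this particular theorem. In short: drop the circuit-embedding step, and replace the Kleene argument with an appeal to Theorem \ref{thm:algebraically compactness} together with \cite{stringdiagramssemantics}*{Theorem 2}, and your proof collapses to the paper's.
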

\begin{proof}
    The proposition expresses that the model is a so-called \emph{CLNL}-model as defined in \cite[Definition 2]{stringdiagramssemantics}. By \cite[Theorem 2]{stringdiagramssemantics}, all categorical data in this model is enriched over $\CPO$. By Theorem \ref{thm:algebraically compactness}, the comonad induced by the adjunction in the model is algebraically compact. Hence, the linear/non-linear model is a sound categorical model for Proto-Quipper-M extended with recursive terms by \cite[Theorem 7]{stringdiagramssemantics}.
\end{proof}

In order to show that the model in Proposition \ref{prop:Kl model} is computationally adequate for LNL-FPC, we need one more concept:

\begin{definition}
   Let $(\mathbf C,\otimes, I)$ be a symmetric monoidal $\CPO_{\perp!}$-category. Then we say that the monoidal product $\otimes$ on $\mathbf C$ \emph{reflects the order}  if for any two pairs of parallel morphisms $f_1, g_1: I \to A$ and $f_2, g_2 : I \to B$ such that $f_1 \neq  \perp \neq g_1$ and $f_2 \neq \perp\neq  g_2$, we have that $f_1\otimes f_2\leq  g_1 \otimes g_2$ implies $f_1\leq g_1$ and $f_2\leq  g_2$.
\end{definition}

\begin{lemma}\label{lem:reflecting the order}The monoidal product $\odot$ on $\mathbf{Kl}$ reflects the order.
\end{lemma}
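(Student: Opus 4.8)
The statement is about the Kleisli category $\mathbf{Kl}$ of the lift monad on $\qCPO$, whose monoidal product $\odot$ is described explicitly in Corollary \ref{cor:Kl is monoidal}. A morphism $I \to \X$ in $\mathbf{Kl}$ is a morphism $\mathbf 1 \to \X_\perp$ in $\qCPO$, i.e.\ (by Example \ref{ex:function with trivially ordered domain is Scott continuous} and the discussion of $b_\X$ in Section \ref{sec:cpos are qcpos}) the selection of a one-dimensional atom of $\X_\perp$, which is either a one-dimensional atom of $\X$ or the bottom atom $\CC_\perp$. The hypothesis $f_i \neq \perp$ means precisely that the chosen atom is \emph{not} $\CC_\perp$, i.e.\ $f_i$ is $b_{\X_\perp}(X_i)$ for a genuine atom $X_i \atomof \X$, with the sole nonzero component $f_i(\CC, X_i) = L(\CC, X_i)$. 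The plan is therefore to translate the inequality $f_1 \odot f_2 \leq g_1 \odot g_2$ into a concrete statement about components of binary relations on $\X_\perp$ and $\Y_\perp$, and read off the conclusion.

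First I would unwind $f_1 \odot f_2$. By Corollary \ref{cor:Kl is monoidal}, $f_1 \odot f_2 = K_{\X,\Y} \circ (f_1 \times f_2)$ as a morphism $\mathbf 1 \times \mathbf 1 \to (\X \times \Y)_\perp$ in $\qCPO$ (composing with the unitor identifying $\mathbf 1 \times \mathbf 1 \cong \mathbf 1$). Since $f_1$ picks out the atom $X_1 \atomof \X \subseteq \X_\perp$ and $f_2$ picks out $Y_2 \atomof \Y \subseteq \Y_\perp$, with neither being $\CC_\perp$, the formula for $K_{\X,\Y}$ in Theorem \ref{thm:maybe monad on qSet} (first case, since $X_1 \neq \CC$ and $Y_2 \neq \CC$) gives that $f_1 \odot f_2$ is the function $\mathbf 1 \to (\X \times \Y)_\perp$ selecting the atom $X_1 \otimes Y_2 \atomof \X \times \Y \subseteq (\X \times \Y)_\perp$; its only nonzero component is $(f_1 \odot f_2)(\CC, X_1 \otimes Y_2) = L(\CC, X_1 \otimes Y_2)$. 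Likewise $g_1 \odot g_2$ selects $X_1' \otimes Y_2'$, where $g_1$ selects $X_1'$ and $g_2$ selects $Y_2'$.

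Next I would use the characterization of the order $\sqsubseteq$ on $\qSet(\mathbf 1, \Z)$ for a quantum poset $(\Z, T)$: by the bullet list in Section \ref{sub:quantum posets}, $h \sqsubseteq h'$ iff $h' \circ h^\dag \leq T$, and for functions picking out single atoms $Z, Z'$ this reduces (as in the proof of Lemma \ref{lem:b is order iso}) to the condition $T(Z, Z') \neq 0$. Applying this with $\Z = \X \times \Y$ and $T = (R \times S)_\perp$, the hypothesis $f_1 \odot f_2 \sqsubseteq g_1 \odot g_2$ becomes $(R \times S)_\perp(X_1 \otimes Y_2, X_1' \otimes Y_2') \neq 0$. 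Since $X_1 \otimes Y_2$ and $X_1' \otimes Y_2'$ are genuine atoms of $\X \times \Y$ (not $\CC_\perp$), the definition of $(-)_\perp$ on orders in Proposition \ref{prop:lift functor qPOS} says this component equals $(R \times S)(X_1 \otimes Y_2, X_1' \otimes Y_2') = R(X_1, X_1') \otimes S(Y_2, Y_2')$, by the definition of the monoidal product of relations in $\qRel$. A tensor product of subspaces is nonzero iff both factors are nonzero, so $R(X_1, X_1') \neq 0$ and $S(Y_2, Y_2') \neq 0$. Translating back via the same atomwise criterion, $R_\perp(X_1, X_1') \neq 0$ and $S_\perp(Y_2, Y_2') \neq 0$, i.e.\ $f_1 \sqsubseteq g_1$ in $\qSet(\mathbf 1, \X_\perp) = \mathbf{Kl}(I, \X)$ and $f_2 \sqsubseteq g_2$ in $\mathbf{Kl}(I, \Y)$, as required.

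\textbf{Main obstacle.} There is no deep difficulty here; the proof is a bookkeeping exercise in chasing components through the explicit formulas for $K_{\X,\Y}$, for $(-)_\perp$ on orders, and for $\times$ on relations. The one point that needs genuine care is the reduction of the Kleisli order $\sqsubseteq$ to a statement about single components — one must be sure that the $\perp \neq f_i$ hypotheses are exactly what is needed to stay in the "genuine atom" cases of all three formulas, and in particular to invoke the nonvanishing-of-a-tensor argument rather than the degenerate cases where $\CC_\perp$ appears. I would also double-check the unitor identifications $\mathbf 1 \times \mathbf 1 \cong \mathbf 1$ do not introduce spurious scalars (they do not, since the relevant components are full spaces $L(\CC, -)$). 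Once these are pinned down, the chain of equivalences is immediate.
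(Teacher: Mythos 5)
Your proof is correct, and it reaches the conclusion by a more direct computation than the paper does. You translate everything into components at the outset: a non-bottom point $I \to \X$ in $\mathbf{Kl}$ selects a one-dimensional atom of $\X$, the explicit formula for $K_{\X,\Y}$ shows that $f_1 \odot f_2$ selects the tensor atom $X_1 \otimes Y_2$, the atomwise criterion of Lemma \ref{lem:b is order iso} converts the order on points into non-vanishing of a single component of the order relation, and the reflection then falls out of the fact that $R(X_1,X_1') \otimes S(Y_2,Y_2') \neq 0$ forces both tensor factors to be nonzero. The paper instead packages the observation ``everything in sight is one-dimensional'' categorically: it factors the four points and the double strength through the subsets $(\X_\perp)_1$, $(\Y_\perp)_1$, $((\X\times\Y)_\perp)_1$ of one-dimensional atoms, uses Lemma \ref{lem:order embedding induces order embedding on homsets} to transfer the inequality along these order embeddings, identifies the resulting data with ordinary pointed posets via the fully faithful functor $`(-)$ of Theorem \ref{thm:CPO-qCPO adjunction}, and finally invokes the classical fact that the double strength on $\CPO_{\perp!}$ reflects the order. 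Your route buys brevity and makes the decisive step (non-vanishing of a tensor product of operator subspaces) completely explicit, where the paper leaves the corresponding classical verification to the reader; the paper's route buys reusability of the classical statement and avoids any hand computation with $K_{\X,\Y}$. One further remark: you assume all four morphisms are non-bottom, which is exactly what the definition supplies, whereas the paper only assumes $F_1$ and $G_1$ are non-bottom and implicitly derives that the inequality forces the other two to be non-bottom as well; either reading suffices for the lemma.
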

\begin{proof}
Let $(\X,R)$ and $(\Y,S)$ be quantum cpos, and let $F_1,F_2\in\mathbf{Kl}(\mathbf 1,\X)$ and $G_1,G_2\in\mathbf{Kl}(\mathbf 1,\Y)$ such that $F_1\neq B_{\X_\perp}$ and $G_1\neq B_{\Y_\perp}$. We need to show that $F_1\odot G_1\sqsubseteq F_2\odot G_2$ implies $F_1\sqsubseteq F_2$ and $G_1\sqsubseteq G_2$.
Note that $F_i\odot G_i= K_{\X,\Y}\circ (F_i\times G_i)$ for $i=1,2$. For simplicity, we write $K$ instead of $K_{\X,\Y}$. For any quantum set $\Z$, we write $\Z_1$ to denote its subset of one-dimensional atoms.
By definition of the range (cf. \cite[Definition~3.2]{KLM20}) and by Lemma \ref{lem:b is order iso} and its preceding paragraph, we have $\ran F_i\subseteq(\X_\perp)_1$, and $\ran G_i\subseteq (\Y_\perp)_1$. It follows from Lemma \ref{lem:factors through a subset} that there are functions $F_i':\mathbf 1\to(\X_\perp)_1$ and $G_i':\mathbf 1\to(\Y_\perp)_1$ such that $F_i=J_{(\X_\perp)_1}\circ F_i'$ and $G_i=J_{(\Y_\perp)_1}\circ G_i'$. Note that all atoms of $(\X_\perp)_1\times(\Y_\perp)_1$
are of the form $X\otimes Y$ for $X\atomof(\X_\perp)_1$ and $Y\atomof(\Y_\perp)_1$, hence one-dimensional. By definition of $K$ and by definition of the range of a function, it follows that the restriction $K\circ (J_{(\X_\perp)_1}\times J_{(\Y_\perp)_1})$ of $K$ to $(\X_\perp)_1\times(\Y_\perp)_1$ is contained in $((\X\times\Y)_\perp)_1$, hence again by Lemma \ref{lem:factors through a subset} it follows that $K\circ J_{(\X_\perp)_1\times(\Y_\perp)_1}=J_{((\X\times\Y)_\perp)_1}\circ K'$ for some function $K':(\X_\perp)_1\times(\Y_\perp)_1\to((\X\times\Y)_\perp)_1$.  Hence we obtain the following commuting diagram:
\[  \begin{tikzcd}
\mathbf 1\times\mathbf 1\ar{r}{F_i\times G_i}\ar{rd}[swap]{F'_i\times G'_i} & \X_\perp\times\Y_\perp \ar{r}{K} & (\X\times\Y)_\perp\\
& (\X_\perp)_1\times(\Y_\perp)_1\ar{u}[swap]{J_{(\X_\perp)_1}\times J_{(\Y_\perp)_1}}\ar{r}{K'} & ((\X\times\Y)_\perp)_1\ar{u}[swap]{J_{((\X\times\Y)_\perp)_1}}
\end{tikzcd} \]
Since $J_{((\X\times\Y)_\perp)_1}$ is an order embedding, it follows from Lemma \ref{lem:order embedding induces order embedding on homsets} and $K\circ (F_1\times G_1)=F_1\odot G_1\sqsubseteq F_2\odot G_2=K\circ (F_2\times G_2)$ that $K'\circ (F_1'\times G_1')\sqsubseteq K'\circ (F_2'\times G_2')$. Since each quantum cpo $\Z$ consisting of one-dimensional atoms can be identified with $`Z$, where it is clear that $\Z$ is pointed if and only if $Z$ is pointed, and the functor $`(-):\CPO\to\qCPO$ is fully faithful by Theorem \ref{thm:CPO-qCPO adjunction}, it follows that we can identify $\mathbf 1$ with $`1$, $(\X_\perp)_1$ and $(\Y_\perp)_1$ with $`(X_\perp)$ and $`(Y_\perp)$ for some posets $X$ and $Y$, and $F_i'$ and $G_i'$ with $`f_i$ for some functions $f_i:1\to X_\perp$ and $g_i:1\to Y_\perp$. Furthermore, $((\X\times\Y)_\perp)_1$ corresponds to $`((X\times Y)_\perp)$ and $K'$ corresponds to $`k$, where $k:X_\perp\times Y_\perp\to (X\times Y)_\perp$ is the usual double strength on $\CPO$, i.e., $k(x,y)=(x,y)$ if $x\neq \perp\neq y$, and $\perp$ otherwise. $F_1\neq B_{\X_\perp}$ corresponds to $f_1\neq\perp$ and similarly $G_1\neq B_{\Y_\perp}$ corresponds to $g_1\neq\perp$.
Moreover, it is easy to verify that $`(-):\CPO\to\qCPO$ preserves products, hence  $K'\circ (F_1'\times G_1')\sqsubseteq K'\circ (F_2'\times G_2')$ implies  $k\circ (f_1\times g_1)\sqsubseteq k\circ (f_2\times g_2)$. Since $f_1\neq\perp g_1$, it is easy to verify that this implies $f_1\sqsubseteq f_2$ and $g_1\sqsubseteq g_2$, whence $F_1\sqsubseteq F_2$ and $G_1\sqsubseteq G_2$.
\end{proof}

We can now formulate the main theorem of this section:
\begin{theorem}
The linear/non-linear model in Proposition \ref{prop:Kl model} is a computationally adequate $\CPO$-LNL model, i.e., a computationally adequate model for LNL-FPC.
\end{theorem}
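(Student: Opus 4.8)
The plan is to appeal to the abstract adequacy theorem for LNL-FPC proved in \cite{lnl-fpc-lmcs}, which states that any \emph{$\CPO$-LNL model} is computationally adequate for LNL-FPC. Thus the work consists entirely of checking that the linear/non-linear model of Proposition \ref{prop:Kl model} satisfies the axioms of a $\CPO$-LNL model. First I would recall the precise list of requirements: the non-linear category $\CPO$ must be cartesian closed with countable coproducts and be $\CPO$-enriched; the linear category $\mathbf{Kl}$ must be symmetric monoidal closed, $\CPO_{\perp!}$-enriched, have countable coproducts that are compatible with the monoidal structure, and be $\CPO$-algebraically compact; the monoidal adjunction between them must be $\CPO$-enriched; and finally the monoidal product on $\mathbf{Kl}$ must \emph{reflect the order} in the sense of the definition just given. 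Each of these has already been established in the preceding sections, so the proof is essentially an act of bookkeeping.

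The key steps, in order, are as follows. I would first note that $\CPO$ is well known to be cartesian closed, cocomplete, and self-enriched. Next, $\mathbf{Kl}$ is symmetric monoidal closed by Theorem \ref{thm:qCPObs is monoidal closed}, it is complete and has all coproducts by Theorem \ref{thm:Kl is complete and has coproducts}, it is enriched over $\CPO_{\perp!}$ by the lemma preceding Lemma \ref{lem:ep-zero}, and it is $\CPO$-algebraically compact by Theorem \ref{thm:algebraically compactness}. The adjunction $\V\circ`(-)\dashv {}$ is a linear/non-linear adjunction with both categories having coproducts by Proposition \ref{prop:Kl model}, and all the categorical data is $\CPO$-enriched because $`(-)$ and $\V$ are (the former by Theorem \ref{thm:CPO-qCPO adjunction} together with the $\CPO$-enrichment of $\qCPO$ from Theorem \ref{thm:qCPO enriched over CPO}, the latter by Corollary \ref{cor:Kl is monoidal} and the $\CPO_{\perp!}$-enrichment lemma). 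Finally, the order-reflection condition is exactly Lemma \ref{lem:reflecting the order}. Assembling these citations shows that the model meets the definition of a $\CPO$-LNL model.

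The last step is then to invoke the computational adequacy theorem of \cite{lnl-fpc-lmcs} for $\CPO$-LNL models; since LNL-FPC is (as remarked in the introduction to this section) the circuit-free fragment of Proto-Quipper-M with recursive types, and our model already interprets the quantum circuit primitives via the embedding of finite-dimensional operator algebras, we obtain that the model is a computationally adequate model for LNL-FPC. I would phrase the conclusion carefully to make clear which external theorem is being used and that no new mathematics is required here.

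The main obstacle, such as it is, is not mathematical depth but rather matching conventions: one must make sure that the precise axiomatization of ``$\CPO$-LNL model'' used in \cite{lnl-fpc-lmcs} coincides clause-by-clause with what has been verified here — in particular that their compatibility conditions between coproducts and the tensor, and their formulation of algebraic compactness (initial algebra whose inverse is a final coalgebra for every $\CPO$-endofunctor), are literally the ones proved above. If there is any gap it will be of this purely definitional nature, and it can be closed by citing the relevant intermediate results (Theorems \ref{thm:Kl is complete and has coproducts}, \ref{thm:algebraically compactness}, \ref{thm:qCPObs is monoidal closed} and Lemma \ref{lem:reflecting the order}) rather than by further argument. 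Hence the proof will be short: a paragraph that lists the required properties, attaches the appropriate reference to each, and then applies the adequacy theorem of \cite{lnl-fpc-lmcs}.

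\begin{proof}
By Proposition \ref{prop:Kl model}, the functor $\V\circ`(-):\CPO\to\mathbf{Kl}$ is the left adjoint of a linear/non-linear model in which both categories have all coproducts. We verify that this model is a $\CPO$-LNL model in the sense of \cite{lnl-fpc-lmcs}. The non-linear category $\CPO$ is cartesian closed, has countable coproducts, and is $\CPO$-enriched. The linear category $\mathbf{Kl}$ is symmetric monoidal closed by Theorem \ref{thm:qCPObs is monoidal closed}; it is complete and has all coproducts by Theorem \ref{thm:Kl is complete and has coproducts}; it is enriched over $\CPO_{\perp!}$, and hence over $\CPO$; and it is $\CPO$-algebraically compact by Theorem \ref{thm:algebraically compactness}. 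The adjunction $\V\circ`(-)\dashv\qCPO(\mathbf 1,-)\circ (\qCPO(\mathbf 1,-)$ composed with the right adjoint of $\V)$ is monoidal, and all the categorical data involved is $\CPO$-enriched, since $`(-):\CPO\to\qCPO$ is $\CPO$-enriched by Theorem \ref{thm:CPO-qCPO adjunction} combined with Theorem \ref{thm:qCPO enriched over CPO}, and $\V:\qCPO\to\mathbf{Kl}$ is $\CPO$-enriched by Corollary \ref{cor:Kl is monoidal}. Finally, the monoidal product $\odot$ on $\mathbf{Kl}$ reflects the order by Lemma \ref{lem:reflecting the order}. Therefore the model is a $\CPO$-LNL model, and by the computational adequacy theorem for $\CPO$-LNL models \cite{lnl-fpc-lmcs}, it is a computationally adequate model for LNL-FPC.
\end{proof}
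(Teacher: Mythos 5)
Your proposal is correct and follows essentially the same route as the paper: verify clause-by-clause that the model of Proposition \ref{prop:Kl model} is a computationally adequate $\CPO$-LNL model in the sense of \cite{lnl-fpc-lmcs} and then invoke their adequacy theorem, with the order-reflection clause discharged by Lemma \ref{lem:reflecting the order}. The only substantive omission is that the definition of a \emph{computationally adequate} $\CPO$-LNL model also requires non-degeneracy, namely that the monoidal unit of $\mathbf{Kl}$ is not its initial object; the paper checks this explicitly (the unit is $\mathbf 1$ while the zero object is the empty quantum set), and the paper also cites Lemmas \ref{lem:ep-zero} and \ref{lem:omega-colimits} directly where you cite Theorem \ref{thm:algebraically compactness}, a purely cosmetic difference since the latter is derived from the former.
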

\begin{proof}
Any model satisfying the conditions of \cite[Definitions 5.13 \& 7.1]{lnl-fpc-lmcs} is called a \emph{computationally adequate $\CPO$-LNL model}, which is computationally adequate for LNL-FPC by \cite[Theorem 7.10]{lnl-fpc-lmcs}.
By Proposition \ref{prop:Kl model}, the first three properties of \cite[Definition 5.13]{lnl-fpc-lmcs} are fulfilled, where \cite[Theorem 2]{stringdiagramssemantics} is used to guarantee that all categorical structure is enriched over $\CPO$. The last condition in \cite[Definition 5.13]{lnl-fpc-lmcs} follows from Lemmas \ref{lem:ep-zero} and \ref{lem:omega-colimits}. The parts about the category $\mathbf C$ in the last condition are automatically fulfilled, since we take $\mathbf C=\CPO$, which is cocomplete, so it in particular has all $\omega$-colimits over pre-embeddings with respect to $\V\circ `(-).$ Preservation  of these colimits by $(-\times-)$ follows from the fact $\CPO$ is cartesian closed, so both $X\times(-)$ and $(-)\times X$ preserve all colimits for any cpo $X$.

The first condition of \cite[Definition 7.1]{lnl-fpc-lmcs} states that the monoidal unit of $\mathbf{Kl}$ should not be equal to its initial object, which clearly is the case, since $\mathbf 1$ is the monoidal unit of $\mathbf{Kl}$, whereas the zero object is the empty quantum poset. The second and last condition of \cite[Definition 7.1]{lnl-fpc-lmcs} is that the monoidal product on $\mathbf{Kl}$ reflects the order, which is the case by Lemma \ref{lem:reflecting the order}.
\end{proof}

\section{Conclusion}
By internalizing cpos in the category $\mathbf{qRel}$ of quantum sets and binary relations, we obtained a noncommutative generalization of cpos, called quantum cpos. We investigated the properties of the category $\mathbf{qCPO}$ of quantum cpos and Scott continuous maps and showed there is a lift monad whose Kleisli category $\mathbf{Kl}$ is equivalent to the category $\qCPO_{\perp!}$ of pointed quantum cpos with strict Scott continuous maps. We also established that $\mathbf{Kl}$ is $\mathbf{CPO}$-algebraically compact, a feature commonly shared by models that support recursive types.

Many works on denotational semantics employ \emph{directed complete partial orders (dcpos)} instead of cpos. A definition of a quantum dcpo $(\X,R)$ can be obtained by replacing the sequences $K_1\sqsubseteq K_2\sqsubseteq\cdots$ of functions $\H\to\X$ in Definition \ref{def:quantum cpo} by a directed set in $\qSet(\H,\X)$. We believe the proofs in the present article can be modified in a straightforward manner to obtain analogous results for quantum dcpos, but we are unsure whether a category of quantum dcpos would have any advantage over $\qCPO$.

We illustrated the utility of our results by showing the Kleisli category $\mathbf{Kl}$ provides a denotational model for two  programming languages, ECLNL and LNL-FPC.  Utilizing the fact that $\mathbf{Kl}$ is $\mathbf{CPO}$-algebraically compact, we showed that $\mathbf{KL}$ is sound for both ECLNL and LNL-FPC, and that $\mathbf{Kl}$ is computationally adequate for the circuit-free fragment of LNL-FPC. 

Other models for ECLNL and LNL-FPC preceded ours (cf.~\cite{stringdiagramssemantics,lnl-fpc-lmcs}), but to our knowledge, $\mathbf{Kl}$ is the only model that is simultaneously sound for both languages, and also computationally adequate for LNL-FPC. We believe $\mathbf{Kl}$ could be computationally adequate for ECLNL, by adapting the proof using logical relations that was used for LNL-FPC to the setting of ECLNL. All proofs we know of for establishing computational adequacy rely on \emph{formal relations} in which the monoidal product reflects the order, and $\mathbf{Kl}$ satisfies this property. The fact that $\mathbf{Kl}$ satisfies this property at linear types also leads us to believe $\mathbf{Kl}$ could be computationally adequate for an extension of LNL-FPC with circuits, or equivalently, an extension of Proto-Quipper-M with recursive types.

As mentioned in Section \ref{sec:modeling physical systems}, quantum circuit description languages with measurement and state preparation typically have two runtimes: \emph{circuit generation time}, during which a quantum circuit is constructed, and \emph{circuit execution time}, during which the qubit states are prepared, the circuit is applied, and the qubits are measured.
In the current paradigm of quantum computing, recursion is controlled classically. This is related to the fact that recursion in quantum circuit description languages takes place during circuit generation time. As a consequence, measurement and state preparation, which are operations that typically take place during circuit execution time, are less relevant for 
understanding recursion in quantum computing according to the current paradigm. The fact that $\mathbf{Kl}$ is a model of ECLNL, a circuit description language with recursive terms, demonstrates that quantum cpos support recursion at circuit generation time.

Neither state preparation nor measurement were included in the original formulation of Proto-Quipper-M, or in its related languages ECLNL and LNL-FPC. However, both operations can be described in terms of quantum sets, where state preparation requires the use of one of the probabilistic monads $\D$ or $\S$ on $\qSet$, as mentioned in Section \ref{sec:modeling physical systems}. In collaboration with Xiaodong Jia and Vladimir Zamdzhiev, the authors used the monad $\S$ to model a hybrid quantum programming language whose classical subsystem is recursively typed and whose quantum subsystem is a first-order type system with recursive terms \cite{JKLMZ}. The resulting language handles quantum data and classical data separately, allowing recursion to be defined only for classical data types.

\emph{Dynamic lifting} allows for the interweaving of circuit generation time and circuit execution time. We expect that $\qSet$ is a model for Proto-Quipper-Dyn~\cite{pq-dyn2,pq-dyn}, the extension of Proto-Quipper-M with dynamic lifting, where the latter operation is modeled by the monad $\S$ of Example \ref{ex:subdistributions-monad}. 

What a quantum domain theory based on quantum cpos presently lacks is, above all, a quantum probabilistic power domain monad, i.e.,  probabilistic monad on $\qCPO$, just like $\S$ is a probabilistic monad on $\qSet$. We expect that such a monad on $\qCPO$ will suffice to provide models in terms of quantum cpos for various quantum programming languages with classically-controlled recursion, such as the quantum lambda calculus, its extension with recursive types Quantum FPC, and a possible extension of Proto-Quipper-Dyn with recursive types. 

Perhaps more importantly, we expect that such a monad will provide the right setting to study the possibility of implementing recursion with quantum control, for the following reasons. Firstly, such a monad allows us to describe higher-order impure functions in the quantum cpo framework, analogue to the description of such functions in the quantum sets framework via the monad $\S$ on $\qSet$. Since the quantum cpo framework is noncommutative by construction, we expect that it can describe the quantum interactions between the higher-order functions. Secondly, such a monad combines in a compatible way the order of quantum cpos with the CP-L\"owner order of the von Neumann algebras associated with the quantum sets underlying quantum cpos. We recall that the order of quantum cpos is used to describe recursion at circuit generation time, whereas the L\"owner order describes the state of completion of a computation at circuit execution time. Thus, combining both orders likely supports recursion at both runtimes. Moreover, such a combination might support a single runtime in which circuit generation and circuit execution are no longer sequential, but exhibit an indefinite causal order, as in case of the quantum switch, and more generally, in quantum control flow. 

Constructing such a quantum probabilistic power domain is ongoing research. Classically, probabilistic effects in the context of recursion are described by Jones's probabilistic power domain, a monad based on valuations over domains \cite{Jones90}. According to the principles of noncommutative geometry, quantizing Jones's monad - or an appropriate variant - might yield a suitable probabilistic monad on $\qCPO$. However, quantizing valuations is difficult due to the modularity property of valuations, which would require an understanding how to quantize addition and equality. This is highly nontrivial, which makes finding a quantum probabilistic power domain challenging.

Finally, valuations on classical domains are defined on the lattice of Scott-open sets. 
%More generally, topological methods, notably the Scott topology and its refinement, the Lawson topology, play a significant role in domain theory. 
But, the standard notion of quantum topological spaces is a noncommutative generalization of locally compact Hausdorff spaces, i.e., $C^*$-algebras, which generalize locally compact Hausdorff spaces. This is problematic for domains where only the Lawson topology is Hausdorff, while the Scott topology (which is typically $T_0$) underpins most applications to computation. We plan to investigate the quantization of topological spaces beyond locally compact Hausdorff spaces, via discrete quantization. One step in this direction is the introduction of quantum suplattices, which are quantized versions of complete lattices \cite{qSup}. Since topological methods also play a role in finding alternative constructions of probabilistic power domains \cites{JiaMislove,commutative-monads}, we hope that such a notion of quantum topological spaces may yield a different path towards a suitable quantum probabilistic power domain.

\section*{Acknowledgements}
We thank Xiaodong Jia and Sander Uijlen for valuable conversations about this work. We are especially grateful to Vladimir Zamdzhiev for his insights, which significantly enhanced our understanding of abstract models of quantum computation. Finally, we thank the anonymous reviewers for their helpful comments, which improved the text. This work was partially funded by the AFOSR under the MURI grant
number FA9550-16-1-0082 entitled ``Semantics, Formal Reasoning, and Tool Support for Quantum Programming''.
The second author is currently supported by the Austrian Science Fund (FWF) [Project Number: PAT6443523, DOI: 10.55776/PAT6443523].

\newpage
% \bib, bibdiv, biblist are defined by the amsrefs package.
\begin{bibdiv}
\begin{biblist}

\bib{AbramskyCoecke08}{book}{
      author={Abramsky, S.},
      author={Coecke, B.},
       title={Categorical quantum mechanics},
        date={2008},
}

\bib{abramskyjung:domaintheory}{incollection}{
      author={Abramsky, S.},
      author={Jung, A.},
       title={Handbook of logic in computer science (vol. 3)},
        date={1994},
   publisher={Oxford University Press},
     address={Oxford, UK},
       pages={1\ndash 168},
         url={http://dl.acm.org/citation.cfm?id=218742.218744},
}

\bib{adameketall:joyofcats}{book}{
      author={Ad\'amek, J.},
      author={Herrlich, H.},
      author={Strecker, G.E.},
       title={Abstract and {C}oncrete {C}ategories: {T}he {J}oy of {C}ats},
   publisher={John Wiley and Sons, Inc},
        date={1990},
         url={http://katmat.math.uni-bremen.de/acc/acc.pdf},
}

\bib{benton}{inproceedings}{%10.1007/BFb0022251,
author={Benton, P. N.},
editor={Pacholski, Leszek
and Tiuryn, Jerzy},
title={A mixed linear and non-linear logic: Proofs, terms and models},
booktitle={Computer Science Logic},
date={1995},
publisher={Springer Verlag},
address={Berlin, Heidelberg},
pages={121--135},
isbn={978-3-540-49404-1},
}

\bib{BCSW}{article}{
      author={Betti, R.},
      author={Carboni, A.},
      author={Street, R.},
      author={Walters, R.},
       title={Variation through enrichment},
        date={1983},
     journal={Journal of Pure and Applied Algebra},
      volume={29},
       pages={109\ndash 127},
}

\bib{Blackadar}{book}{
author={Blackadar, B.},
title = {Operator Algebras},
publisher = {Springer Verlag},
date ={2006}
}

\bib{borceux:handbook1}{book}{
      author={Borceux, F.},
       title={Handbook of categorical algebra 1: Basic category theory},
   publisher={Cambridge University Press},
        date={1994},
}

\bib{quantum-switch}{article}{  title = {Quantum computations without definite causal structure},
  author = {Chiribella, G.},
  author ={D'Ariano, G.M.},
  author = {Perinotti, P.},
  author = {Valiron, B.},
  journal = {Phys. Rev. A},
  volume = {88},
  issue = {2},
  pages = {022318},
  numpages = {15},
  date = {2013},
  month = {Aug},
  publisher = {American Physical Society},
  url = {https://link.aps.org/doi/10.1103/PhysRevA.88.022318}
}

\bib{cho:semantics}{article}{
      author={Cho, K.},
       title={Semantics for a quantum programming language by operator
  algebras},
        date={2016},
     journal={New Generation Comput.},
      volume={34},
      number={1-2},
       pages={25\ndash 68},
}

\bib{ChoWesterbaan16}{article}{
      author={Cho, K.},
      author={Westerbaan, A.},
       title={{Von Neumann Algebras form a Model for the Quantum Lambda
  Calculus}},
        date={2016},
      eprint={1603:02113},
}

\bib{coeckekissinger}{book}{
      author={Coecke, B.},
      author={Kissinger, A.},
       title={Picturing quantum processes: A first course in quantum theory and
  diagrammatic reasoning},
   publisher={Cambridge University Press},
        date={2017},
}

\bib{connes:ncg}{book}{
      author={Connes, A.},
       title={Noncommutative geometry},
   publisher={Academic Press},
        date={1994},
}

\bib{duanseveriniwinter}{article}{
      author={Duan, R.},
      author={Severini, S.},
      author={Winter, A.},
       title={{Zero-error communication via quantum channels, noncommutative
  graphs, and a quantum Lov\'asz number}},
        date={2013},
     journal={IEEE Trans. Inform. Theory},
      volume={59},
       pages={1164\ndash 1174},
}

\bib{EffrosRuan}{book}{
author={Effros, E.G.},
author = {Ruan, Z.-J.},
title={Theory of Operator Spaces},
publisher={Oxford University Press},
date={2000}
}

\bib{eklund2018semigroups}{book}{
      author={Eklund, P.},
      author={García, J.~Gutiérrez},
      author={Höhle, U.},
      author={Kortelainen, J.},
       title={Semigroups in complete lattices: Quantales, modules and related
  topics},
      series={Developments in Mathematics},
   publisher={Springer},
     address={Cham},
        date={2018},
      volume={54},
}

\bib{fiore-thesis}{thesis}{
      author={Fiore, M.P.},
       title={Axiomatic domain theory in categories of partial maps},
        type={Ph.D. Thesis},
        date={1994},
}

\bib{pq-dyn2}{article}{
      author={Fu, P.},
      author={Kishida, K.},
      author={Ross, N.J.},
      author={Selinger, P.},
       title={{A biset-enriched categorical model for Proto-Quipper with
  dynamic lifting}},
        date={2022},
     journal={Proceedings 19th International Conference on Quantum Physics and
  Logic},
       pages={302--342},
         url={https://arxiv.org/abs/2204.13039},
}

\bib{pq-dyn}{article}{
      author={Fu, P.},
      author={Kishida, K.},
      author={Ross, N.J.},
      author={Selinger, P.},
       title={Proto-quipper with dynamic lifting},
        date={2023},
     journal={Proc. ACM Program. Lang.},
      volume={7},
      number={POPL},
         url={https://doi.org/10.1145/3571204},
}

\bib{girard-linear}{article}{
      author={Girard, J.-Y.},
       title={Linear logic},
        date={1987},
     journal={Theoretical Computer Science},
      volume={50},
       pages={1 \ndash  101},
}

\bib{grover}{inproceedings}{
      author={Grover, L.K.},
       title={A fast quantum mechanical algorithm for database search},
        date={1996},
   booktitle={{STOC} (symposium on the theory of computing)},
}

\bib{haag-kastler}{article}{
author ={Haag, R.},
author = {Kastler,D.},
title={An Algebraic Approach to Quantum Field Theory},
journal= {J. Math. Phys.},
volume= {5},
pages ={848--861}
date ={1964}
}

\bib{heunenvicary}{book}{
      author={Heunen, C.},
      author={Vicary, J.},
       title={Categories for quantum theory, an introduction},
   publisher={Oxford University Press},
        date={2019},
}

\bib{Heymans-Stubbe}{article}{
      author={Heymans, H.},
      author={Stubbe, I.},
       title={{Grothendieck quantaloids for allegories of enriched
  categories}},
        date={2012},
     journal={Bulletin of the Belgian Mathematical Society - Simon Stevin},
      volume={19},
      number={5},
       pages={859 \ndash  888},
         url={https://doi.org/10.36045/bbms/1354031554},
}

\bib{huotstaton:qptheory}{incollection}{
      author={Huot, M.},
      author={Staton, S.},
       title={Universal Properties in Quantum Theory},
        date={2019},
   booktitle={Proceedings of QPL 2018},
       pages={213\ndash 223},
     publisher={EPTCS},
       volume={287},
         url={https://doi.org/10.48550/arXiv.1901.10117},
}

\bib{qSup}{inproceedings}{
      author={Jen\v{c}a, G.},
      author={Lindenhovius, B.},
       title={Quantum Suplattices},
        date={2023},
     booktitle={Proceedings of {QPL} 2023},
     series={EPTCS},
      volume={384},
       pages={58\ndash 74},
       url={https://doi.org/10.48550/arXiv.2308.16495},
}

\bib{JKLMZ}{article}{
      author={Jia, X.},
      author={Kornell, A.},
      author={Lindenhovius, B.},
      author={Mislove, M.},
      author={Zamdzhiev, V.},
       title={Semantics for variational quantum programming},
        date={2022jan},
     journal={Proc. ACM Program. Lang.},
      volume={6},
      number={POPL},
         url={https://doi.org/10.1145/3498687},
}

\bib{commutative-monads}{inproceedings}{
      author={Jia, X.},
      author={Lindenhovius, B.},
      author={Mislove, M.},
      author={Zamdzhiev, V.},
       title={Commutative monads for probabilistic programming languages},
        date={2021},
   booktitle={36th annual acm/ieee symposium on logic in computer science
  (lics)},
         url={https://ieeexplore.ieee.org/document/9470611},
}

\bib{JiaMislove}{article}{
      author={Jia, X.},
      author={Mislove, M.},
       title={{Completing simple valuations in $K$-categories}},
        date={2022},
     journal={Topology and its Applications},
      volume={318},
}

\bib{Jones90}{article}{
      author={Jones, C.},
       title={Probabilistic Nondeterminism},
        date={1990},
     journal={Ph.D. dissertation, University of Edinburgh},
}

\bib{knill96}{report}{
      author={Knill, E.},
       title={Conventions for Quantum Pseudocode},
       publisher={LANL report LAUR-96-2724}
        date={1996},
       pages={1 \ndash  13},
       url={https://arxiv.org/pdf/2211.02559},
}

\bib{Kornell17}{article}{
      author={Kornell, A.},
       title={Quantum Collections},
        date={2017},
     journal={Int. J. Math.},
      volume={28}
}

\bib{Kornell18}{article}{
      author={Kornell, A.},
       title={Quantum sets},
        date={2020},
     journal={J. Math. Phys.},
      volume={61},
      pages={102202}
}

\bib{Kornell-Discrete-I}{article}{
      author={Kornell, A.},
       title={{Discrete quantum structures I: Quantum predicate logic}},
        date={2024},
     journal={J. Noncommut. Geom.},
     volume ={18},
      number={1},
       pages={337\ndash 382},
}

\bib{Kornell-Discrete-II}{article}{
      author={Kornell, A.},
       title={{Discrete quantum structures II: Examples}},
        date={2024},
     journal={J. Noncommut. Geom.},
     volume={18},
      number={2},
       pages={411\ndash 450},
}

\bib{KLM20}{article}{
      author={Kornell, A.},
      author={Lindenhovius, B.},
      author={Mislove, M.},
       title={{A category of quantum posets}},
        date={2022},
     journal={{Indagationes Mathematicae}},
      volume={33},
       pages={1137\ndash 1171},
}

\bib{kuperbergweaver:quantummetrics}{book}{
      author={Kuperberg, G.},
      author={Weaver, N.},
       title={A {V}on {N}eumann {A}lgebra {A}pproach to {Q}uantum {M}etrics:
  {Q}uantum {R}elations},
      series={Memoirs of the American Mathematical Society},
   publisher={American Mathematical Society},
        date={2012},
}

\bib{maclane}{book}{
      author={MacLane, S.},
       title={Categories for the working mathematician (2nd ed.)},
   publisher={Springer},
        date={1998},
}

\bib{free-exponential-modality}{inproceedings}{
author={Melliès, P.A.},
author={Tabareau, N.},
author={Tasson, C.},
title={An Explicit Formula for the Free Exponential Modality of Linear Logic},
date={2009},
booktitle={{ICALP} 2009},
editor={Albers, S.},
editor={Marchetti-Spaccamela, A.},
editor={Matias, Y.},
editor={Nikoletseas, S.},
editor={Thomas, W.},
publisher={{Springer}},
series={Lecture Notes in Computer Science},
volume={5556},
location={Berlin, Heidelberg},
pages={247\ndash 260},
url={https://doi.org/10.1007/978-3-642-02930-1_21},
}

\bib{levy-cbpv}{book}{
      author={Levy, P.B.},
       title={Call-by-push-value: A functional/imperative synthesis},
   publisher={Springer},
        date={2004},
}

\bib{lnl-fpc-lmcs}{article}{
      author={Lindenhovius, B.},
      author={Mislove, M.},
      author={Zamdzhiev, V.},
       title={{LNL-FPC: The Linear/Non-linear Fixpoint Calculus}},
        date={2021},
     journal={{Logical Methods in Computer Science}},
      volume={17},
       pages={9:1–\ndash 9:61},
}

\bib{stringdiagramssemantics}{incollection}{
      author={Lindenhovius, B.},
      author={Mislove, M.},
      author={Zamdzhiev, V.},
       title={Semantics for a lambda calculus for string diagrams},
        date={2023},
   booktitle={{Samson Abramsky on Logic and Structure in Computer Science and
  Beyond}},
      editor={Palmigiano, A.},
      editor={Sadrzadeh, M.},
   publisher={Springer},
}

\bib{moggi}{article}{
      author={Moggi, E.},
       title={Computational lambda-calculus and monads},
        date={1989},
     journal={Proceedings of the Fourth Annual Symposium on Logic in Computer
  Science},
       pages={14\ndash 23},
}

\bib{moggi1991}{article}{
      author={Moggi, E.},
       title={Notions of computation and monads},
        date={1991},
        ISSN={0890-5401},
     journal={Information and Computation},
      volume={93},
      number={1},
       pages={55\ndash 92},
  url={https://www.sciencedirect.com/science/article/pii/0890540191900524},
        note={Selections from 1989 IEEE Symposium on Logic in Computer
  Science},
}

\bib{NAKAGAWA1989563}{incollection}{
      author={Nakagawa, R.},
       title={Chapter 14 categorical topology},
        date={1989},
   booktitle={Topics in general topology},
      editor={Morita, Kiiti},
      editor={iti Nagata, Jun},
      series={North-Holland Mathematical Library},
      volume={41},
   publisher={Elsevier},
       pages={563 \ndash  623},
  url={http://www.sciencedirect.com/science/article/pii/S0924650908701600},
}

\bib{Olson}{article}{
      author={Olson, M.~P.},
       title={The self-adjoint operators of a von neumann algebra form a
  conditionally complete lattice},
        date={1971},
     journal={Proc. Amer. Math. Soc.},
      volume={28},
}

\bib{quant-semantics}{inproceedings}{
      author={Pagani, M.},
      author={Selinger, P.},
      author={Valiron, B.},
       title={Applying quantitative semantics to higher-order quantum
  computing},
        date={2014},
   booktitle={{POPL} 2014},
      editor={Jagannathan, Suresh},
      editor={Sewell, Peter},
   publisher={{ACM}},
       pages={647\ndash 658},
         url={https://doi.org/10.1145/2535838.2535879},
}

\bib{PPRZ19}{article}{
      author={P\'echoux, R.},
      author={Perdrix, S.},
      author={Rennela, M.},
      author={Zamdzhiev, V.},
       title={Quantum programming with inductive datatypes: Causality and
  affine type theory},
        date={2020},
       pages={562\ndash 581},
}

\bib{PtakPulmannova91}{book}{
      author={Pt\'ak, P.},
      author={Pulmannov\'a, S.},
       title={Orthomodular structures as quantum logics},
      series={Fundamental Theories of Physics},
   publisher={Kluwer Academic Publishers},
        date={1991},
}

\bib{rennela:domains}{inproceedings}{
      author={Rennela, M.},
       title={Towards a quantum domain theory: order-enrichment and fixpoints
  in {W}*-algebras},
        date={2014},
   booktitle={Mfps 30},
      series={Electronic Notes in Theoretical Computer Science},
      volume={308},
       pages={289\ndash 307},
}

\bib{Riehl}{book}{
      author={Riehl, E.},
       title={Category Theory in Context},
        date={2016},
   publisher={Dover Publications},
}

\bib{pqm}{inproceedings}{
author    = {F. Rios and
P. Selinger},
editor    = {Bob Coecke and
Aleks Kissinger},
title     = {A categorical model for a quantum circuit description language},
booktitle = {Proceedings 14th International Conference on Quantum Physics and Logic,
{QPL} 2017, Nijmegen, The Netherlands, 3-7 July 2017.},
series    = {{EPTCS}},
volume    = {266},
pages     = {164--178},
date      = {2017},
url       = {https://doi.org/10.4204/EPTCS.266.11},
}

\bib{Seal13}{article}{
      author={Seal, G.J.},
       title={Tensors, monads and actions},
        date={2013},
     journal={Theory and Applications of Categories},
      volume={28},
      number={15},
       pages={403\ndash 434},
         url={https://arxiv.org/abs/1205.0101},
}

\bib{selingervaliron:quantumlambda}{article}{
      author={Selinger, P.},
      author={Valiron, B.},
       title={A lambda calculus for quantum computation with classical control},
        date={2006},
     journal={Mathematical Structures in Computer Science},
      volume={16},
      number={3},
       pages={527\ndash 552},
}

\bib{shor}{article}{
      author={Shor, P.W.},
       title={Polynomial-time algorithms for prime factorization and discrete
  logarithms on a quantum computer},
        date={1999},
     journal={SIAM Review},
      volume={41},
      number={2},
       pages={303\ndash 332},
}

\bib{smyth-plotkin:domain-equations}{article}{
      author={Smyth, M.B.},
      author={Plotkin, G.D.},
       title={The category-theoretic solution of recursive domain equations},
        date={1982},
     journal={Siam J. Comput.},
}

\bib{Szigeti}{article}{
      author={Szigeti, J.},
       title={{On limits and colimits in the Kleisli category}},
        date={1983},
     journal={Cahiers de topologie et g'eom'etrie diff'erentielle
  cat'egoriques},
      volume={24},
      number={4},
       pages={381\ndash 391},
}

\bib{QuantumFPC}{inproceedings}{
  author       = {Takeshi Tsukada and Kazuyuki Asada},
  title        = {Enriched Presheaf Model of Quantum FPC},
  journal      = {Proceedings of the ACM on Programming Languages},
  volume       = {8},
  number       = {POPL},
  date         = {2024},
  articleno    = {13},
  pages        = {362--392},
  publisher    = {ACM},
}

\bib{Weaver10}{article}{
      author={Weaver, N.},
       title={Quantum relations},
        date={2012},
     journal={Mem. Amer. Math. Soc.},
      volume={215},
}

\bib{Weaver19}{article}{
      author={Weaver, N.},
       title={Hereditarily antisymmetric operator algebras},
        date={2019},
     journal={J. Inst. Math. Jussieu},
}

\bib{Weaver21}{article}{
      author={Weaver, N.},
       title={Quantum graphs as quantum relations},
        date={2021},
     journal={J. Geom. Anal.},
      volume={31},
       pages={9090–\ndash 9112},
}

\bib{Westerbaan16}{inproceedings}{
      author={Westerbaan, A.},
       title={Quantum programs as {K}leisli {M}aps},
        date={2016},
   booktitle={{QPL 2016: Proceedings 13th International Conference on Quantum
  Physics and Logic}},
}

\bib{Westerbaanthesis}{thesis}{
    author={A. Westerbaan},
    title={{The Category of Von Neumann Algebras, PhD Thesis}},
    school = {Radboud University, {NL}},
    publisher={GVO drukkers & vormgevers B.V.},
    date={2019},
}

\bib{Zhao-Fan}{article}{
      author={Zhao, D.},
      author={Fan, T.},
       title={Dcpo-completion of posets},
        date={2010},
        ISSN={0304-3975},
     journal={Theoretical Computer Science},
      volume={411},
      number={22},
       pages={2167\ndash 2173},
  url={https://www.sciencedirect.com/science/article/pii/S0304397510001155},
}

\end{biblist}
\end{bibdiv}

\end{document}